\newcounter{sarrow}
\newcounter{sarrow1}
\newcommand\xnrsquigarrow[1]{%
\stepcounter{sarrow1}%
\mathrel{\begin{tikzpicture}[baseline= {( $ (current bounding box.south) + (0,-0.5ex) $ )}]
\node[inner sep=.5ex] (\thesarrow) {$\scriptstyle #1$};
\path[draw,<-,decorate,
  decoration={zigzag,amplitude=0.7pt,segment length=1.2mm,pre=lineto,pre length=4pt}]
    (\thesarrow1.south east) -- (\thesarrow1.south west);
    $\slashedarrowfill@\relbar\relbar/$
    \end{tikzpicture}}%
}
\def\slashedarrowfill@#1#2#3#4#5{%
  $\m@th\thickmuskip0mu\medmuskip\thickmuskip\thinmuskip\thickmuskip
   \relax#5#1\mkern-7mu%
   \cleaders\hbox{$#5\mkern-2mu#2\mkern-2mu$}\hfill
   \mathclap{#3}\mathclap{#2}%
   \cleaders\hbox{$#5\mkern-2mu#2\mkern-2mu$}\hfill
   \mkern-7mu#4$%
}
\def\rightslashedarrowfillb@{%
  \slashedarrowfill@\relbar\relbar/\rightarrow}
\newcommand\xnrightarrow[2][]{%
  \ext@arrow 0055{\rightslashedarrowfillb@}{#1}{#2}}
\def\rightslashedarrowfille@{%
  \slashedarrowfill@\relbar\relbar/\twoheadrightarrow}
\newcommand\xntworightarrow[2][]{%
  \ext@arrow 0055{\rightslashedarrowfille@}{#1}{#2}}
\def\rightslashedarrowfillg@{%
  \slashedarrowfill@\relbar\relbar{\raisebox{.12em}{}}\twoheadrightarrow}
\newcommand\xtworightarrow[2][]{%
  \ext@arrow 0055{\rightslashedarrowfillg@}{#1}{#2}}
\def\rightslashedarrowfillx@{%
  \slashedarrowfill@\Relbar\Relbar/\rightrightarrows}
\newcommand\xnTworightarrow[2][]{%
  \ext@arrow 0055{\rightslashedarrowfillx@}{#1}{#2}}
\def\rightslashedarrowfilly@{%
  \slashedarrowfill@\Relbar\Relbar{\raisebox{.12em}{}}\rightrightarrows}
\newcommand\xTworightarrow[2][]{%
  \ext@arrow 0055{\rightslashedarrowfilly@}{#1}{#2}}
\tikzset{nomorepostaction/.code=\let\tikz@postactions\pgfutil@empty}
\newtheorem{theorem}{Theorem}[section]
\newtheorem{definition}[theorem]{Definition}
\begin{document}

\begin{titlepage}
\thispagestyle{empty}

\hrule
\begin{center}
{\bf\LARGE Verification of Patterns}

\vspace{0.7cm}
--- Yong Wang ---

\vspace{2cm}
\begin{figure}[!htbp]
 \centering
 \includegraphics[width=1.0\textwidth]{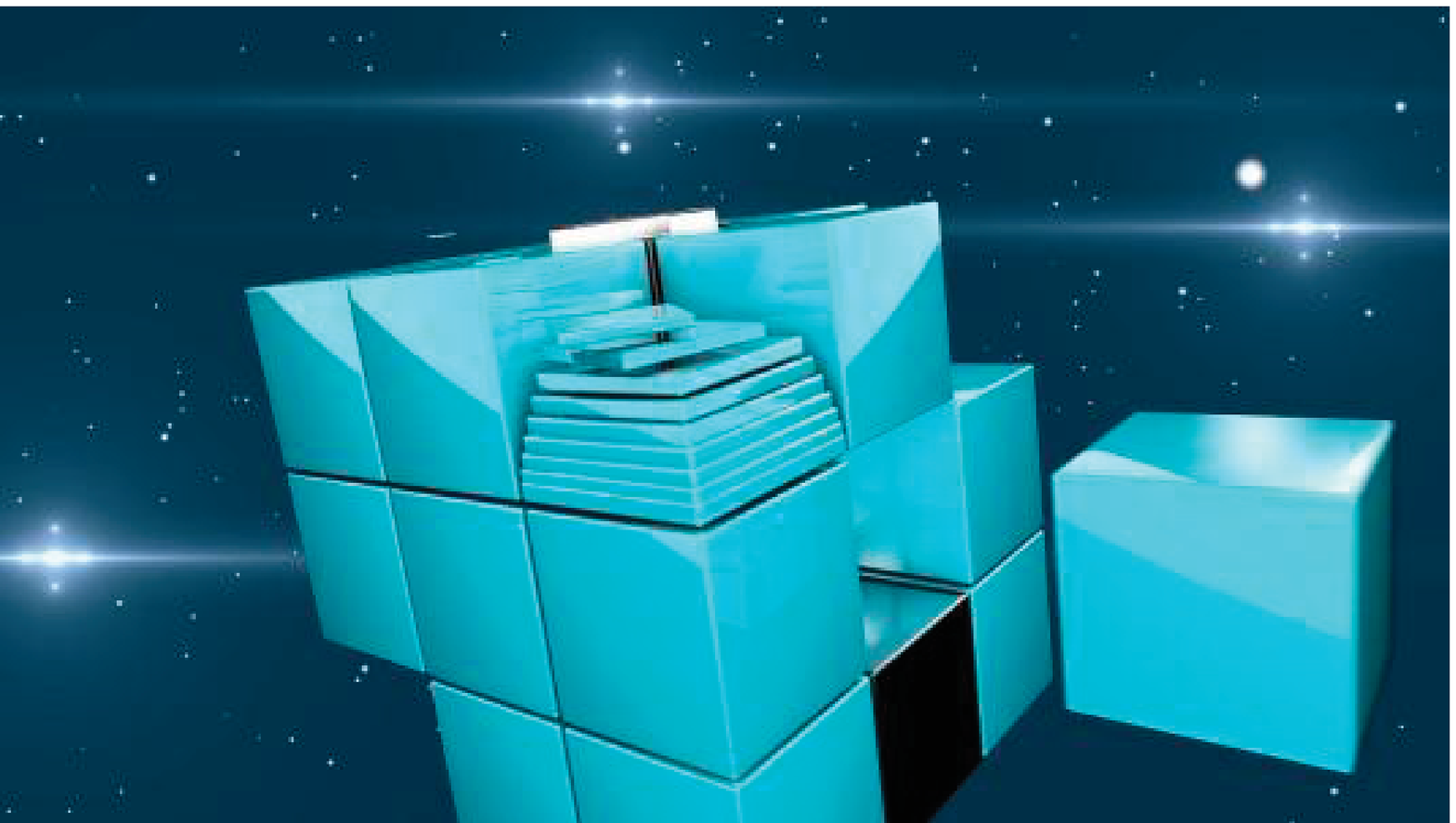}
\end{figure}

\end{center}
\end{titlepage}

\newpage 

\setcounter{page}{1}\pagenumbering{roman}

\tableofcontents

\newpage

\setcounter{page}{1}\pagenumbering{arabic}

        \section{Introduction}

The software patterns provide building blocks to the design and implementation of a software system, and try to make the software engineering to progress from experience to science.
The software patterns were made famous because of the introduction as the design patterns \cite{DP}. After that, patterns have been researched and developed widely and rapidly.

The series of books of pattern-oriented software architecture \cite{P1} \cite{P2} \cite{P3} \cite{P4} \cite{P5} should be marked in the development of software patterns. In these books,
patterns are detailed in the following aspects.

\begin{enumerate}
  \item Patterns are categorized from great granularity to tiny granularity. The greatest granularity is called architecture patterns, the medium granularity is called design patterns, and
  the tiniest granularity is called idioms. In each granularity, patterns are detailed and classified according to their functionalities.
  \item Every pattern is detailed according to a regular format to be understood and utilized easily, which includes introduction to a pattern on example, context, problem, solution, structure, dynamics, implementation,
  example resolved, variants.
  \item Except the general patterns, patterns of the vertical domains are also involved, including the domains of networked objects and resource management.
  \item To make the development and utilization of patterns scientifically, the pattern languages are discussed.
\end{enumerate}

As mentioned in these books, formalization of patterns and an intermediate pattern language are needed and should be developed in the future of patterns. So, in this book,
we formalize software patterns according to the categories of the series of books of pattern-oriented software architecture, and verify the correctness of patterns based on truly concurrent
process algebra \cite{ATC} \cite{CTC} \cite{PITC}. In one aspect, patterns are formalized and verified; in the other aspect, truly concurrent process algebra can play a role of an
intermediate pattern language for its rigorous theory.

This book is organized as follows.

In chapter 2, to make this book be self-satisfied, we introduce the preliminaries of truly concurrent process algebra, including its whole theory, modelling of race condition and
asynchronous communication, and applications.

In chapter 3, we formalize and verify the architectural patterns.

In chapter 4, we formalize and verify the design patterns.

In chapter 5, we formalize and verify the idioms.

In chapter 6, we formalize and verify the patterns for concurrent and networked objects.

In chapter 7, we formalize and verify the patterns for resource management.

In chapter 8, we show the formalization and verification of composition of patterns.

\newpage\section{Truly Concurrent Process Algebra}\label{tcpa}

In this chapter, we introduce the preliminaries on truly concurrent process algebra \cite{ATC} \cite{CTC} \cite{PITC}, which is based on truly concurrent operational semantics.

APTC eliminates the differences of structures of transition system, event structure, etc, and discusses their behavioral equivalences. It considers that there are two kinds of causality
relations: the chronological order modeled by the sequential composition and the causal order between different parallel branches modeled by the communication merge. It also considers
that there exist two kinds of confliction relations: the structural confliction modeled by the alternative composition and the conflictions in different parallel branches which should
be eliminated. Based on conservative extension, there are four modules in APTC: BATC (Basic Algebra for True Concurrency), APTC (Algebra for Parallelism in True Concurrency), recursion
and abstraction.

\subsection{Basic Algebra for True Concurrency}

BATC has sequential composition $\cdot$ and alternative composition $+$ to capture the chronological ordered causality and the structural confliction. The constants are ranged over $A$,
the set of atomic actions. The algebraic laws on $\cdot$ and $+$ are sound and complete modulo truly concurrent bisimulation equivalences (including pomset bisimulation, step
bisimulation, hp-bisimulation and hhp-bisimulation).

\begin{definition}[Prime event structure with silent event]\label{PES}
Let $\Lambda$ be a fixed set of labels, ranged over $a,b,c,\cdots$ and $\tau$. A ($\Lambda$-labelled) prime event structure with silent event $\tau$ is a tuple
$\mathcal{E}=\langle \mathbb{E}, \leq, \sharp, \lambda\rangle$, where $\mathbb{E}$ is a denumerable set of events, including the silent event $\tau$. Let
$\hat{\mathbb{E}}=\mathbb{E}\backslash\{\tau\}$, exactly excluding $\tau$, it is obvious that $\hat{\tau^*}=\epsilon$, where $\epsilon$ is the empty event.
Let $\lambda:\mathbb{E}\rightarrow\Lambda$ be a labelling function and let $\lambda(\tau)=\tau$. And $\leq$, $\sharp$ are binary relations on $\mathbb{E}$,
called causality and conflict respectively, such that:

\begin{enumerate}
  \item $\leq$ is a partial order and $\lceil e \rceil = \{e'\in \mathbb{E}|e'\leq e\}$ is finite for all $e\in \mathbb{E}$. It is easy to see that
  $e\leq\tau^*\leq e'=e\leq\tau\leq\cdots\leq\tau\leq e'$, then $e\leq e'$.
  \item $\sharp$ is irreflexive, symmetric and hereditary with respect to $\leq$, that is, for all $e,e',e''\in \mathbb{E}$, if $e\sharp e'\leq e''$, then $e\sharp e''$.
\end{enumerate}

Then, the concepts of consistency and concurrency can be drawn from the above definition:

\begin{enumerate}
  \item $e,e'\in \mathbb{E}$ are consistent, denoted as $e\frown e'$, if $\neg(e\sharp e')$. A subset $X\subseteq \mathbb{E}$ is called consistent, if $e\frown e'$ for all
  $e,e'\in X$.
  \item $e,e'\in \mathbb{E}$ are concurrent, denoted as $e\parallel e'$, if $\neg(e\leq e')$, $\neg(e'\leq e)$, and $\neg(e\sharp e')$.
\end{enumerate}
\end{definition}

\begin{definition}[Configuration]
Let $\mathcal{E}$ be a PES. A (finite) configuration in $\mathcal{E}$ is a (finite) consistent subset of events $C\subseteq \mathcal{E}$, closed with respect to causality
(i.e. $\lceil C\rceil=C$). The set of finite configurations of $\mathcal{E}$ is denoted by $\mathcal{C}(\mathcal{E})$. We let $\hat{C}=C\backslash\{\tau\}$.
\end{definition}

A consistent subset of $X\subseteq \mathbb{E}$ of events can be seen as a pomset. Given $X, Y\subseteq \mathbb{E}$, $\hat{X}\sim \hat{Y}$ if $\hat{X}$ and $\hat{Y}$ are
isomorphic as pomsets. In the following of the paper, we say $C_1\sim C_2$, we mean $\hat{C_1}\sim\hat{C_2}$.

\begin{definition}[Pomset transitions and step]
Let $\mathcal{E}$ be a PES and let $C\in\mathcal{C}(\mathcal{E})$, and $\emptyset\neq X\subseteq \mathbb{E}$, if $C\cap X=\emptyset$ and $C'=C\cup X\in\mathcal{C}(\mathcal{E})$,
then $C\xrightarrow{X} C'$ is called a pomset transition from $C$ to $C'$. When the events in $X$ are pairwise concurrent, we say that $C\xrightarrow{X}C'$ is a step.
\end{definition}

\begin{definition}[Pomset, step bisimulation]\label{PSB}
Let $\mathcal{E}_1$, $\mathcal{E}_2$ be PESs. A pomset bisimulation is a relation $R\subseteq\mathcal{C}(\mathcal{E}_1)\times\mathcal{C}(\mathcal{E}_2)$, such that if
$(C_1,C_2)\in R$, and $C_1\xrightarrow{X_1}C_1'$ then $C_2\xrightarrow{X_2}C_2'$, with $X_1\subseteq \mathbb{E}_1$, $X_2\subseteq \mathbb{E}_2$, $X_1\sim X_2$ and $(C_1',C_2')\in R$,
and vice-versa. We say that $\mathcal{E}_1$, $\mathcal{E}_2$ are pomset bisimilar, written $\mathcal{E}_1\sim_p\mathcal{E}_2$, if there exists a pomset bisimulation $R$, such that
$(\emptyset,\emptyset)\in R$. By replacing pomset transitions with steps, we can get the definition of step bisimulation. When PESs $\mathcal{E}_1$ and $\mathcal{E}_2$ are step
bisimilar, we write $\mathcal{E}_1\sim_s\mathcal{E}_2$.
\end{definition}

\begin{definition}[Posetal product]
Given two PESs $\mathcal{E}_1$, $\mathcal{E}_2$, the posetal product of their configurations, denoted $\mathcal{C}(\mathcal{E}_1)\overline{\times}\mathcal{C}(\mathcal{E}_2)$,
is defined as

$$\{(C_1,f,C_2)|C_1\in\mathcal{C}(\mathcal{E}_1),C_2\in\mathcal{C}(\mathcal{E}_2),f:C_1\rightarrow C_2 \textrm{ isomorphism}\}.$$

A subset $R\subseteq\mathcal{C}(\mathcal{E}_1)\overline{\times}\mathcal{C}(\mathcal{E}_2)$ is called a posetal relation. We say that $R$ is downward closed when for any
$(C_1,f,C_2),(C_1',f',C_2')\in \mathcal{C}(\mathcal{E}_1)\overline{\times}\mathcal{C}(\mathcal{E}_2)$, if $(C_1,f,C_2)\subseteq (C_1',f',C_2')$ pointwise and $(C_1',f',C_2')\in R$,
then $(C_1,f,C_2)\in R$.

For $f:X_1\rightarrow X_2$, we define $f[x_1\mapsto x_2]:X_1\cup\{x_1\}\rightarrow X_2\cup\{x_2\}$, $z\in X_1\cup\{x_1\}$,(1)$f[x_1\mapsto x_2](z)=
x_2$,if $z=x_1$;(2)$f[x_1\mapsto x_2](z)=f(z)$, otherwise. Where $X_1\subseteq \mathbb{E}_1$, $X_2\subseteq \mathbb{E}_2$, $x_1\in \mathbb{E}_1$, $x_2\in \mathbb{E}_2$.
\end{definition}

\begin{definition}[(Hereditary) history-preserving bisimulation]\label{HHPB}
A history-preserving (hp-) bisimulation is a posetal relation $R\subseteq\mathcal{C}(\mathcal{E}_1)\overline{\times}\mathcal{C}(\mathcal{E}_2)$ such that if $(C_1,f,C_2)\in R$,
and $C_1\xrightarrow{e_1} C_1'$, then $C_2\xrightarrow{e_2} C_2'$, with $(C_1',f[e_1\mapsto e_2],C_2')\in R$, and vice-versa. $\mathcal{E}_1,\mathcal{E}_2$ are history-preserving
(hp-)bisimilar and are written $\mathcal{E}_1\sim_{hp}\mathcal{E}_2$ if there exists a hp-bisimulation $R$ such that $(\emptyset,\emptyset,\emptyset)\in R$.

A hereditary history-preserving (hhp-)bisimulation is a downward closed hp-bisimulation. $\mathcal{E}_1,\mathcal{E}_2$ are hereditary history-preserving (hhp-)bisimilar and are
written $\mathcal{E}_1\sim_{hhp}\mathcal{E}_2$.
\end{definition}

In the following, let $e_1, e_2, e_1', e_2'\in \mathbb{E}$, and let variables $x,y,z$ range over the set of terms for true concurrency, $p,q,s$ range over the set of closed terms.
The set of axioms of BATC consists of the laws given in Table \ref{AxiomsForBATC}.

\begin{center}
    \begin{table}
        \begin{tabular}{@{}ll@{}}
            \hline No. &Axiom\\
            $A1$ & $x+ y = y+ x$\\
            $A2$ & $(x+ y)+ z = x+ (y+ z)$\\
            $A3$ & $x+ x = x$\\
            $A4$ & $(x+ y)\cdot z = x\cdot z + y\cdot z$\\
            $A5$ & $(x\cdot y)\cdot z = x\cdot(y\cdot z)$\\
        \end{tabular}
        \caption{Axioms of BATC}
        \label{AxiomsForBATC}
    \end{table}
\end{center}

We give the operational transition rules of operators $\cdot$ and $+$ as Table \ref{TRForBATC} shows. And the predicate $\xrightarrow{e}\surd$ represents successful termination after
execution of the event $e$.

\begin{center}
    \begin{table}
        $$\frac{}{e\xrightarrow{e}\surd}$$
        $$\frac{x\xrightarrow{e}\surd}{x+ y\xrightarrow{e}\surd} \quad\frac{x\xrightarrow{e}x'}{x+ y\xrightarrow{e}x'} \quad\frac{y\xrightarrow{e}\surd}{x+ y\xrightarrow{e}\surd}
        \quad\frac{y\xrightarrow{e}y'}{x+ y\xrightarrow{e}y'}$$
        $$\frac{x\xrightarrow{e}\surd}{x\cdot y\xrightarrow{e} y} \quad\frac{x\xrightarrow{e}x'}{x\cdot y\xrightarrow{e}x'\cdot y}$$
        \caption{Transition rules of BATC}
        \label{TRForBATC}
    \end{table}
\end{center}

\begin{theorem}[Soundness of BATC modulo truly concurrent bisimulation equivalences]\label{SBATC}
The axiomatization of BATC is sound modulo truly concurrent bisimulation equivalences $\sim_{p}$, $\sim_{s}$, $\sim_{hp}$ and $\sim_{hhp}$. That is,

\begin{enumerate}
  \item let $x$ and $y$ be BATC terms. If BATC $\vdash x=y$, then $x\sim_{p} y$;
  \item let $x$ and $y$ be BATC terms. If BATC $\vdash x=y$, then $x\sim_{s} y$;
  \item let $x$ and $y$ be BATC terms. If BATC $\vdash x=y$, then $x\sim_{hp} y$;
  \item let $x$ and $y$ be BATC terms. If BATC $\vdash x=y$, then $x\sim_{hhp} y$.
\end{enumerate}

\end{theorem}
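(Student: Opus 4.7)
The plan is to prove each of the four soundness claims simultaneously by induction on the length of the equational derivation of BATC $\vdash x = y$. The inductive step then reduces to two core facts, which I would verify once per equivalence: (i) each of the relations $\sim_p$, $\sim_s$, $\sim_{hp}$, $\sim_{hhp}$ is a congruence with respect to the operators $+$ and $\cdot$, so that the closure conditions for reflexivity, symmetry, transitivity, and contextual replacement all carry through; and (ii) each of the axioms $A1$--$A5$ is sound with respect to each equivalence. Base cases (the reflexivity axioms) and symmetric/transitive closure are immediate from the definition of bisimulation.

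For the congruence part I would exploit the shape of the transition rules in Table \ref{TRForBATC}. They fit the standard \emph{path} (or panth) format: each premise has a single process variable on the left, targets are variables or variable-built terms, and there is no lookahead on the terms reached. Relations of the kind considered here are therefore automatically preserved by any operator whose rules respect that format. Concretely, given $x \sim y$ (in either of the four senses), I construct the relation $R$ containing all pairs $(C[x], C[y])$ for contexts $C[\cdot]$ built from $+$ and $\cdot$, together with the pairs inherited from $x \sim y$, and verify that $R$ is itself a bisimulation by case analysis on the last rule used in each transition.

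For axiom soundness I would construct an explicit witness bisimulation per axiom. Axioms $A1$ and $A3$ are handled by pairing up the symmetric transitions directly off the left rule group of Table \ref{TRForBATC}. Axioms $A2$ and $A5$ reduce to the observation that both sides yield the same set of outgoing transitions and the induced targets coincide up to the same associativity rearrangement. Axiom $A4$ requires matching a transition $(x+y)\cdot z \xrightarrow{X} P$ via the $x$- or $y$-branch against the corresponding transition $x\cdot z + y\cdot z \xrightarrow{X} P'$, followed by verifying that the targets are again related; for the pomset and step cases this is a direct computation.

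The main obstacle will be the $hp$- and $hhp$- cases of axiom $A4$. Here the witness relation must be \emph{posetal}, so each pair carries an explicit isomorphism $f \colon C_1 \to C_2$ between configurations, and for $hhp$ the relation must additionally be downward closed. The delicate point is that when $(x+y)\cdot z$ makes its first transition, the choice of branch ($x$ vs.\ $y$) must be synchronized on both sides so that $f$ extends consistently as events accumulate. I would build the candidate relation by pairing configurations reached by identical sequences of transitions on each side, with $f$ defined inductively via the $[e_1 \mapsto e_2]$ update from the definition of posetal product. For downward closure, I would argue that since no parallel composition operator appears in BATC, every configuration of a closed BATC term is totally ordered by causality, so the only posetal pairs pointwise below one in $R$ are obtained by truncating a common prefix, and these are already in $R$ by construction. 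Once this closure argument is in place, the $hhp$ case collapses to the $hp$ case and the four soundness statements follow uniformly.
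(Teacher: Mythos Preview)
The paper does not actually contain a proof of this theorem. Chapter~2 is explicitly a preliminaries chapter importing results from the cited works \cite{ATC}, \cite{CTC}, \cite{PITC}; Theorem~\ref{SBATC} is stated without any accompanying \texttt{proof} environment, and the same is true of the surrounding Theorems~\ref{CBATC}, \ref{SAPTC}, \ref{CAPTC}, etc. So there is no in-paper argument to compare your proposal against.

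That said, your plan is the standard one and is entirely consistent with the methodology the paper \emph{does} spell out when it later proves analogous soundness results in full (e.g., Theorem~\ref{SState} for the state operator): first establish that the equivalence is a congruence for the relevant operators, then verify each axiom individually by exhibiting matching transitions on both sides. Your extra care about the posetal and downward-closed conditions for $\sim_{hp}$ and $\sim_{hhp}$ goes beyond anything the paper writes out, but is exactly what a full proof would require. There is no gap in your outline relative to what the paper provides, because the paper provides nothing here beyond the bare statement.
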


\begin{theorem}[Completeness of BATC modulo truly concurrent bisimulation equivalences]\label{CBATC}
The axiomatization of BATC is complete modulo truly concurrent bisimulation equivalences $\sim_{p}$, $\sim_{s}$, $\sim_{hp}$ and $\sim_{hhp}$. That is,

\begin{enumerate}
  \item let $p$ and $q$ be closed BATC terms, if $p\sim_{p} q$ then $p=q$;
  \item let $p$ and $q$ be closed BATC terms, if $p\sim_{s} q$ then $p=q$;
  \item let $p$ and $q$ be closed BATC terms, if $p\sim_{hp} q$ then $p=q$;
  \item let $p$ and $q$ be closed BATC terms, if $p\sim_{hhp} q$ then $p=q$.
\end{enumerate}

\end{theorem}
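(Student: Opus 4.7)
The plan is to prove completeness via the standard normal-form strategy used throughout axiomatic process algebra. First I would introduce a notion of \emph{basic term} (head normal form): a term is basic if it is a finite sum of summands of the shape $e$ or $e\cdot t$, where $t$ is again basic and $e$ ranges over atomic actions in $A$. The first step is an elimination lemma: every closed BATC term $p$ is provably equal, using A1--A5, to some basic term $n(p)$. This is done by structural induction on $p$. The case $p=e$ is immediate; for $p=p_1+p_2$ we form $n(p_1)+n(p_2)$; for $p=p_1\cdot p_2$ we inductively have basic $n(p_1)$, and we push $\cdot\,n(p_2)$ inside it using right-distributivity A4 and associativity A5, then recurse on the tails to land in basic form.

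The second step is the observation that on the fragment BATC, the four truly concurrent bisimulations $\sim_p, \sim_s, \sim_{hp}, \sim_{hhp}$ all collapse onto ordinary strong bisimulation. Inspecting the transition rules in Table~\ref{TRForBATC}, every BATC transition carries a single atomic event; there is no parallel operator to ever produce a pomset of size greater than one, and no silent event arises. Hence pomset transitions coincide with single-event transitions, step bisimulation reduces to pomset bisimulation, and the labelling isomorphism required in an hp- or hhp-bisimulation is trivially determined by the sequence of single events already tracked by strong bisimulation. Consequently all four statements of the theorem reduce to one: if closed basic terms $p, q$ are strongly bisimilar, then $p=q$.

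Third, I would prove this remaining statement by induction on the sum of the depths of two basic terms $p$ and $q$. Writing
\[
p \;=\; \sum_{i} e_i\cdot p_i \;+\; \sum_{j} f_j, \qquad q \;=\; \sum_{k} e_k'\cdot q_k \;+\; \sum_{\ell} f_\ell',
\]
each summand $e_i\cdot p_i$ of $p$ induces a transition $p\xrightarrow{e_i}p_i$, which must be matched by some $q\xrightarrow{e_i}q_k$ with $p_i$ bisimilar to $q_k$; the induction hypothesis gives $p_i=q_k$, hence $e_i\cdot p_i$ appears as a summand of $q$ up to A1--A3. Each terminating summand $f_j$ is matched by a terminating $f_\ell'=f_j$ in the same way. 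By symmetry every summand of $q$ also appears in $p$, and A1, A2, A3 (commutativity, associativity, and idempotency of $+$) then equate the two sums. Combining the three steps, any closed $p, q$ with $p\sim q$ (in any of the four senses) satisfy $p = n(p) = n(q) = q$.

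The hard part is the summand matching in step three together with the bookkeeping of A1--A3: one must show that duplicates, reordering, and the grouping of terminating versus non-terminating summands can always be absorbed by idempotency, commutativity, and associativity. Everything else --- the elimination lemma and the collapse of the four equivalences on BATC --- is routine structural induction once basic terms and their transitions have been properly set up.
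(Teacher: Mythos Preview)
The paper does not actually supply a proof of this theorem: it is stated in the preliminaries as a known result imported from the references \cite{ATC,CTC,PITC}, with no accompanying \texttt{proof} environment. So there is no ``paper's own proof'' to compare against here.

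That said, your proposal is the standard and correct route. The elimination-to-basic-terms lemma via A1--A5 is routine structural induction, and your observation that the four truly concurrent equivalences collapse to ordinary strong bisimulation on BATC is sound: the transition rules in Table~\ref{TRForBATC} only ever fire single events, so pomset transitions are singletons, step transitions coincide with them, and the posetal isomorphism in (h)hp-bisimulation is forced by the linear order of events along any BATC execution. The summand-matching argument in step three is exactly the classical completeness proof for BPA modulo strong bisimulation, and A1--A3 handle the bookkeeping as you describe. This is almost certainly the argument the cited references carry out as well.
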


\subsection{Algebra for Parallelism in True Concurrency}

APTC uses the whole parallel operator $\between$, the auxiliary binary parallel $\parallel$ to model parallelism, and the communication merge $\mid$ to model communications among
different parallel branches, and also the unary conflict elimination operator $\Theta$ and the binary unless operator $\triangleleft$ to eliminate conflictions among different parallel
branches. Since a communication may be blocked, a new constant called deadlock $\delta$ is extended to $A$, and also a new unary encapsulation operator $\partial_H$ is introduced to
eliminate $\delta$, which may exist in the processes. The algebraic laws on these operators are also sound and complete modulo truly concurrent bisimulation equivalences (including
pomset bisimulation, step bisimulation, hp-bisimulation, but not hhp-bisimulation). Note that, the parallel operator $\parallel$ in a process cannot be eliminated by deductions on
the process using axioms of APTC, but other operators can eventually be steadied by $\cdot$, $+$ and $\parallel$, this is also why truly concurrent bisimulations are called an
\emph{truly concurrent} semantics.

We design the axioms of APTC in Table \ref{AxiomsForAPTC}, including algebraic laws of parallel operator $\parallel$, communication operator $\mid$, conflict elimination operator
$\Theta$ and unless operator $\triangleleft$, encapsulation operator $\partial_H$, the deadlock constant $\delta$, and also the whole parallel operator $\between$.

\begin{center}
    \begin{table}
        \begin{tabular}{@{}ll@{}}
            \hline No. &Axiom\\
            $A6$ & $x+ \delta = x$\\
            $A7$ & $\delta\cdot x =\delta$\\
            $P1$ & $x\between y = x\parallel y + x\mid y$\\
            $P2$ & $x\parallel y = y \parallel x$\\
            $P3$ & $(x\parallel y)\parallel z = x\parallel (y\parallel z)$\\
            $P4$ & $e_1\parallel (e_2\cdot y) = (e_1\parallel e_2)\cdot y$\\
            $P5$ & $(e_1\cdot x)\parallel e_2 = (e_1\parallel e_2)\cdot x$\\
            $P6$ & $(e_1\cdot x)\parallel (e_2\cdot y) = (e_1\parallel e_2)\cdot (x\between y)$\\
            $P7$ & $(x+ y)\parallel z = (x\parallel z)+ (y\parallel z)$\\
            $P8$ & $x\parallel (y+ z) = (x\parallel y)+ (x\parallel z)$\\
            $P9$ & $\delta\parallel x = \delta$\\
            $P10$ & $x\parallel \delta = \delta$\\
            $C11$ & $e_1\mid e_2 = \gamma(e_1,e_2)$\\
            $C12$ & $e_1\mid (e_2\cdot y) = \gamma(e_1,e_2)\cdot y$\\
            $C13$ & $(e_1\cdot x)\mid e_2 = \gamma(e_1,e_2)\cdot x$\\
            $C14$ & $(e_1\cdot x)\mid (e_2\cdot y) = \gamma(e_1,e_2)\cdot (x\between y)$\\
            $C15$ & $(x+ y)\mid z = (x\mid z) + (y\mid z)$\\
            $C16$ & $x\mid (y+ z) = (x\mid y)+ (x\mid z)$\\
            $C17$ & $\delta\mid x = \delta$\\
            $C18$ & $x\mid\delta = \delta$\\
            $CE19$ & $\Theta(e) = e$\\
            $CE20$ & $\Theta(\delta) = \delta$\\
            $CE21$ & $\Theta(x+ y) = \Theta(x)\triangleleft y + \Theta(y)\triangleleft x$\\
            $CE22$ & $\Theta(x\cdot y)=\Theta(x)\cdot\Theta(y)$\\
            $CE23$ & $\Theta(x\parallel y) = ((\Theta(x)\triangleleft y)\parallel y)+ ((\Theta(y)\triangleleft x)\parallel x)$\\
            $CE24$ & $\Theta(x\mid y) = ((\Theta(x)\triangleleft y)\mid y)+ ((\Theta(y)\triangleleft x)\mid x)$\\
            $U25$ & $(\sharp(e_1,e_2))\quad e_1\triangleleft e_2 = \tau$\\
            $U26$ & $(\sharp(e_1,e_2),e_2\leq e_3)\quad e_1\triangleleft e_3 = e_1$\\
            $U27$ & $(\sharp(e_1,e_2),e_2\leq e_3)\quad e3\triangleleft e_1 = \tau$\\
            $U28$ & $e\triangleleft \delta = e$\\
            $U29$ & $\delta \triangleleft e = \delta$\\
            $U30$ & $(x+ y)\triangleleft z = (x\triangleleft z)+ (y\triangleleft z)$\\
            $U31$ & $(x\cdot y)\triangleleft z = (x\triangleleft z)\cdot (y\triangleleft z)$\\
            $U32$ & $(x\parallel y)\triangleleft z = (x\triangleleft z)\parallel (y\triangleleft z)$\\
            $U33$ & $(x\mid y)\triangleleft z = (x\triangleleft z)\mid (y\triangleleft z)$\\
            $U34$ & $x\triangleleft (y+ z) = (x\triangleleft y)\triangleleft z$\\
            $U35$ & $x\triangleleft (y\cdot z)=(x\triangleleft y)\triangleleft z$\\
            $U36$ & $x\triangleleft (y\parallel z) = (x\triangleleft y)\triangleleft z$\\
            $U37$ & $x\triangleleft (y\mid z) = (x\triangleleft y)\triangleleft z$\\
            $D1$ & $e\notin H\quad\partial_H(e) = e$\\
            $D2$ & $e\in H\quad \partial_H(e) = \delta$\\
            $D3$ & $\partial_H(\delta) = \delta$\\
            $D4$ & $\partial_H(x+ y) = \partial_H(x)+\partial_H(y)$\\
            $D5$ & $\partial_H(x\cdot y) = \partial_H(x)\cdot\partial_H(y)$\\
            $D6$ & $\partial_H(x\parallel y) = \partial_H(x)\parallel\partial_H(y)$\\
        \end{tabular}
        \caption{Axioms of APTC}
        \label{AxiomsForAPTC}
    \end{table}
\end{center}

we give the transition rules of APTC in Table \ref{TRForAPTC}, it is suitable for all truly concurrent behavioral equivalence, including pomset bisimulation, step bisimulation,
hp-bisimulation and hhp-bisimulation.

\begin{center}
    \begin{table}
        $$\frac{x\xrightarrow{e_1}\surd\quad y\xrightarrow{e_2}\surd}{x\parallel y\xrightarrow{\{e_1,e_2\}}\surd} \quad\frac{x\xrightarrow{e_1}x'\quad y\xrightarrow{e_2}\surd}{x\parallel y\xrightarrow{\{e_1,e_2\}}x'}$$
        $$\frac{x\xrightarrow{e_1}\surd\quad y\xrightarrow{e_2}y'}{x\parallel y\xrightarrow{\{e_1,e_2\}}y'} \quad\frac{x\xrightarrow{e_1}x'\quad y\xrightarrow{e_2}y'}{x\parallel y\xrightarrow{\{e_1,e_2\}}x'\between y'}$$
        $$\frac{x\xrightarrow{e_1}\surd\quad y\xrightarrow{e_2}\surd}{x\mid y\xrightarrow{\gamma(e_1,e_2)}\surd} \quad\frac{x\xrightarrow{e_1}x'\quad y\xrightarrow{e_2}\surd}{x\mid y\xrightarrow{\gamma(e_1,e_2)}x'}$$
        $$\frac{x\xrightarrow{e_1}\surd\quad y\xrightarrow{e_2}y'}{x\mid y\xrightarrow{\gamma(e_1,e_2)}y'} \quad\frac{x\xrightarrow{e_1}x'\quad y\xrightarrow{e_2}y'}{x\mid y\xrightarrow{\gamma(e_1,e_2)}x'\between y'}$$
        $$\frac{x\xrightarrow{e_1}\surd\quad (\sharp(e_1,e_2))}{\Theta(x)\xrightarrow{e_1}\surd} \quad\frac{x\xrightarrow{e_2}\surd\quad (\sharp(e_1,e_2))}{\Theta(x)\xrightarrow{e_2}\surd}$$
        $$\frac{x\xrightarrow{e_1}x'\quad (\sharp(e_1,e_2))}{\Theta(x)\xrightarrow{e_1}\Theta(x')} \quad\frac{x\xrightarrow{e_2}x'\quad (\sharp(e_1,e_2))}{\Theta(x)\xrightarrow{e_2}\Theta(x')}$$
        $$\frac{x\xrightarrow{e_1}\surd \quad y\nrightarrow^{e_2}\quad (\sharp(e_1,e_2))}{x\triangleleft y\xrightarrow{\tau}\surd}
        \quad\frac{x\xrightarrow{e_1}x' \quad y\nrightarrow^{e_2}\quad (\sharp(e_1,e_2))}{x\triangleleft y\xrightarrow{\tau}x'}$$
        $$\frac{x\xrightarrow{e_1}\surd \quad y\nrightarrow^{e_3}\quad (\sharp(e_1,e_2),e_2\leq e_3)}{x\triangleleft y\xrightarrow{e_1}\surd}
        \quad\frac{x\xrightarrow{e_1}x' \quad y\nrightarrow^{e_3}\quad (\sharp(e_1,e_2),e_2\leq e_3)}{x\triangleleft y\xrightarrow{e_1}x'}$$
        $$\frac{x\xrightarrow{e_3}\surd \quad y\nrightarrow^{e_2}\quad (\sharp(e_1,e_2),e_1\leq e_3)}{x\triangleleft y\xrightarrow{\tau}\surd}
        \quad\frac{x\xrightarrow{e_3}x' \quad y\nrightarrow^{e_2}\quad (\sharp(e_1,e_2),e_1\leq e_3)}{x\triangleleft y\xrightarrow{\tau}x'}$$
        $$\frac{x\xrightarrow{e}\surd}{\partial_H(x)\xrightarrow{e}\surd}\quad (e\notin H)\quad\quad\frac{x\xrightarrow{e}x'}{\partial_H(x)\xrightarrow{e}\partial_H(x')}\quad(e\notin H)$$
        \caption{Transition rules of APTC}
        \label{TRForAPTC}
    \end{table}
\end{center}

\begin{theorem}[Soundness of APTC modulo truly concurrent bisimulation equivalences]\label{SAPTC}
The axiomatization of APTC is sound modulo truly concurrent bisimulation equivalences $\sim_{p}$, $\sim_{s}$, and $\sim_{hp}$. That is,

\begin{enumerate}
  \item let $x$ and $y$ be APTC terms. If APTC $\vdash x=y$, then $x\sim_{p} y$;
  \item let $x$ and $y$ be APTC terms. If APTC $\vdash x=y$, then $x\sim_{s} y$;
  \item let $x$ and $y$ be APTC terms. If APTC $\vdash x=y$, then $x\sim_{hp} y$.
\end{enumerate}

\end{theorem}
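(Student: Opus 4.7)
The plan is to proceed by induction on the length of derivations in the equational system of APTC, so that it suffices to verify soundness axiom-by-axiom together with a congruence property for each operator of APTC with respect to each of the three equivalences $\sim_p,\sim_s,\sim_{hp}$. Since BATC is a subsystem of APTC and Theorem \ref{SBATC} already handles the BATC axioms $A1$--$A5$, and since the additional deadlock axioms $A6,A7$ follow by the standard observation that $\delta$ has no outgoing transitions, the real content is the parallel block $P1$--$P10$, the communication block $C11$--$C18$, the conflict-elimination block $CE19$--$CE24$, the unless block $U25$--$U37$, and the encapsulation block $D1$--$D6$.

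Next I would verify that each APTC operator is a congruence for $\sim_p,\sim_s,\sim_{hp}$. For the unary operators $\Theta$ and $\partial_H$ this is a routine inspection of the transition rules in Table \ref{TRForAPTC}: the rules have a simple format in which the source has a single premise involving the argument, so congruence follows by the standard ``format'' argument (the rules fit a panth-like format for which bisimulation-is-a-congruence theorems apply). For $\parallel$, $\mid$, $\triangleleft$, and the derived $\between$, the same kind of format check works, with the additional ingredient that the pomset and step transitions carry multisets of events, so one must track both the labels and the underlying isomorphisms; for $\sim_{hp}$ one must in addition check that the bijection $f$ between configurations extends correctly when combining transitions from two components.

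With congruence in hand, I would go through the axioms in blocks. The easy ones are $P2,P3$ (commutativity/associativity of $\parallel$, requiring a symmetric/associative bisimulation built from matching pairs of configurations), $P9,P10,C17,C18$ (any transition from a subterm $\delta\parallel x$ or $\delta\mid x$ is impossible, so both sides are bisimilar to $\delta$), $CE19,CE20$ and $D1,D2,D3$ (direct from the operational rules), and the distributivity laws $P7,P8,C15,C16,D4,D5,D6,U30$--$U33$ and $U34$--$U37$ (standard case split on which summand fires). The core axioms $P1,P4,P5,P6$ and $C11$--$C14$ require constructing the bisimulation that relates the left-hand parallel/communication expression with its expansion into a sum of prefixed subterms; here the key observation is that every outgoing transition of $x\between y$ is either a pomset step $\{e_1,e_2\}$ coming from $x\parallel y$ or a synchronization $\gamma(e_1,e_2)$ coming from $x\mid y$, matching the rules for $+$ on the right.

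The main obstacle will be the conflict-elimination/unless block, specifically $CE21$--$CE24$ and $U25$--$U29$. These axioms couple the operators $\Theta$ and $\triangleleft$ with $+,\cdot,\parallel,\mid$, and the transition rules for $\triangleleft$ carry the side-condition $\sharp(e_1,e_2)$ together with negative premises $y\nrightarrow^{e_2}$; so the bisimulations have to keep track not only of the configurations reachable but also of which events are in conflict and which events the residual processes refuse to perform. For these I would, for each axiom separately, exhibit the bisimulation by cases on whether the enabled events of the two sides fall into the conflict relation $\sharp$ versus the causality relation $\leq$ from Definition \ref{PES}, and verify that the three clauses (pomset, step, hp) are preserved: pomset and step are straightforward once the label sets and concurrency are matched, while the hp-case requires that the order-isomorphism $f$ on the accumulated configurations be extended consistently when an event is replaced by $\tau$ under $U25,U27$, which is legitimate because $\hat{\tau^*}=\epsilon$ (so that posetal isomorphism on $\hat{C}$ is preserved). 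Once every axiom is checked in this way, soundness for $\sim_p,\sim_s,\sim_{hp}$ follows by induction on the derivation of $x=y$, using the congruence properties at each step.
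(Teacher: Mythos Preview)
The paper does not actually include a proof of this theorem: it is stated without proof in the preliminaries chapter, with the implicit understanding that the result is established in the cited references \cite{ATC,CTC,PITC}. So there is no ``paper's own proof'' to compare against here.

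That said, your proposal is the standard and correct scheme, and it is exactly the template the paper does use when it proves analogous soundness results explicitly (see the proof of Theorem~\ref{SState} for the state operator): first establish that each equivalence is a congruence with respect to the new operators, then verify each axiom individually by exhibiting matching transitions from the two sides, handling $\sim_s$ first (treating single events as singleton steps), reducing $\sim_p$ to the $\sim_s$ case via the observation that pomset transitions decompose into sequences of event transitions along causality chains, and finally augmenting the argument for $\sim_{hp}$ by checking that the configuration isomorphism $f$ extends coherently. Your identification of the $\Theta/\triangleleft$ block as the delicate part is also accurate: the negative premises and the $\sharp$/$\leq$ side-conditions are precisely where a naive congruence-format argument can fail, and your plan to case-split on the conflict relation and use $\hat{\tau^*}=\epsilon$ for the hp-isomorphism is the right way through. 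There is nothing to correct.
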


\begin{theorem}[Completeness of APTC modulo truly concurrent bisimulation equivalences]\label{CAPTC}
The axiomatization of APTC is complete modulo truly concurrent bisimulation equivalences $\sim_{p}$, $\sim_{s}$, and $\sim_{hp}$. That is,

\begin{enumerate}
  \item let $p$ and $q$ be closed APTC terms, if $p\sim_{p} q$ then $p=q$;
  \item let $p$ and $q$ be closed APTC terms, if $p\sim_{s} q$ then $p=q$;
  \item let $p$ and $q$ be closed APTC terms, if $p\sim_{hp} q$ then $p=q$.
\end{enumerate}

\end{theorem}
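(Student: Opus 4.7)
The plan is to follow the standard elimination-plus-basic-forms strategy, leveraging the already established completeness of BATC (Theorem \ref{CBATC}). First I would define the class of APTC \emph{basic terms}: terms built from atomic actions in $A\cup\{\delta\}$, with $\parallel$ allowed only at the ``top'' of an action-prefix, so that every basic term is a sum of products whose atoms are either ordinary actions or ``parallel multi-actions'' $e_1\parallel\cdots\parallel e_n$. The key fact to be proven at this stage is an \emph{elimination theorem}: for every closed APTC term $t$ there is a basic term $t'$ with APTC $\vdash t=t'$. Only $\cdot$, $+$ and $\parallel$ survive in a basic term, consistent with the remark in the text that $\parallel$ cannot be eliminated.

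The elimination proceeds outside-in, operator by operator, in a fixed priority. Axioms $D1$--$D6$ push $\partial_H$ through $+$, $\cdot$, $\parallel$ and evaluate it on atomic actions; $CE19$--$CE24$ do the same for $\Theta$, producing residual $\triangleleft$ occurrences that are then pushed through $+$, $\cdot$, $\parallel$, $\mid$ by $U30$--$U37$ and evaluated on atoms by $U25$--$U29$; $C11$--$C18$ reduce each $\mid$ to a communication $\gamma(e_1,e_2)$ prefixing a smaller term; $P1$ rewrites every $\between$ as $\parallel+\mid$; and finally $P2$--$P10$, combined with $A6,A7$, normalize the remaining $\parallel$-expressions into the stratified form above, using $P4$--$P6$ to pull parallelism past sequential composition and $P7,P8$ to distribute over $+$. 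Once $t'$ is in basic form, completeness is finished as follows. Treat each parallel multi-action $e_1\parallel\cdots\parallel e_n$ as a fresh atomic label; then the transition rules of Table \ref{TRForAPTC} show that the behavior of a basic term, with respect to any of $\sim_p,\sim_s,\sim_{hp}$, coincides with its behavior as a BATC term over this extended alphabet. Hence if $p\sim_\xi q$ for closed APTC terms and $p',q'$ are their basic forms, we have $p'\sim_\xi q'$ as BATC terms, so by Theorem \ref{CBATC} we get BATC $\vdash p'=q'$, and thus APTC $\vdash p=p'=q'=q$.

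The main obstacle is the elimination theorem itself, especially the interaction of the auxiliary operators with $\parallel$. The axioms $P4$--$P6$ handle $\parallel$ only against single actions or action-prefixed terms, so one must design a termination measure (for example, the lexicographic pair consisting of the number of non-$\parallel$ operators followed by the sum over $\parallel$-occurrences of the size of the non-action argument) and verify that each of $P2$--$P8$, $CE23$--$CE24$, $U32$--$U33$, $D6$ strictly decreases it while never re-introducing a higher-priority operator such as $\partial_H$, $\Theta$, $\triangleleft$, $\mid$ or $\between$ once it has been eliminated. A secondary subtlety is the correct accounting of $\tau$-producing rewrites generated by $U25$ and $U27$: these occur only at atomic positions and are absorbed into the labels of the basic form, so they do not interfere with the BATC-reduction of the completeness argument but must be tracked explicitly when one compares transitions across $\sim_\xi$.
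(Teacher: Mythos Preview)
The paper does not actually prove this theorem: Section~\ref{tcpa} is a preliminaries chapter, and Theorem~\ref{CAPTC} is stated without proof, the result being imported from the cited work~\cite{ATC}. So there is no in-paper argument to compare against.

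That said, your proposal is the standard ACP-style completeness argument and is what one would expect to find in~\cite{ATC}: an elimination theorem reducing every closed term to a basic form built from $+$, $\cdot$ and parallel multi-actions, followed by a reduction to BATC completeness over an enlarged alphabet. The outline and the termination concerns you flag are appropriate. One point deserves more care than you give it: the claim that ``the behavior of a basic term, with respect to any of $\sim_p,\sim_s,\sim_{hp}$, coincides with its behavior as a BATC term over this extended alphabet'' is the real crux, and for $\sim_{hp}$ in particular it is not immediate, since hp-bisimulation tracks the posetal structure of configurations rather than just labels. You need to argue that on basic terms the only causal dependencies are those imposed by~$\cdot$, so that the isomorphism $f$ in the posetal product is determined once the multi-action labels match; this is true, but it is where the ``true concurrency'' content lives and it should be spelled out rather than asserted. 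Your remark about $\tau$ appearing via $U25$ and $U27$ is well taken and is indeed a wrinkle in the axiom set as presented.
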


\subsection{Recursion}

To model infinite computation, recursion is introduced into APTC. In order to obtain a sound and complete theory, guarded recursion and linear recursion are needed. The corresponding
axioms are RSP (Recursive Specification Principle) and RDP (Recursive Definition Principle), RDP says the solutions of a recursive specification can represent the behaviors of the
specification, while RSP says that a guarded recursive specification has only one solution, they are sound with respect to APTC with guarded recursion modulo several truly concurrent
bisimulation equivalences (including pomset bisimulation, step bisimulation and hp-bisimulation), and they are complete with respect to APTC with linear recursion modulo several truly
concurrent bisimulation equivalences (including pomset bisimulation, step bisimulation and hp-bisimulation). In the following, $E,F,G$ are recursion specifications, $X,Y,Z$ are
recursive variables.

For a guarded recursive specifications $E$ with the form

$$X_1=t_1(X_1,\cdots,X_n)$$
$$\cdots$$
$$X_n=t_n(X_1,\cdots,X_n)$$

the behavior of the solution $\langle X_i|E\rangle$ for the recursion variable $X_i$ in $E$, where $i\in\{1,\cdots,n\}$, is exactly the behavior of their right-hand sides
$t_i(X_1,\cdots,X_n)$, which is captured by the two transition rules in Table \ref{TRForGR}.

\begin{center}
    \begin{table}
        $$\frac{t_i(\langle X_1|E\rangle,\cdots,\langle X_n|E\rangle)\xrightarrow{\{e_1,\cdots,e_k\}}\surd}{\langle X_i|E\rangle\xrightarrow{\{e_1,\cdots,e_k\}}\surd}$$
        $$\frac{t_i(\langle X_1|E\rangle,\cdots,\langle X_n|E\rangle)\xrightarrow{\{e_1,\cdots,e_k\}} y}{\langle X_i|E\rangle\xrightarrow{\{e_1,\cdots,e_k\}} y}$$
        \caption{Transition rules of guarded recursion}
        \label{TRForGR}
    \end{table}
\end{center}

The $RDP$ (Recursive Definition Principle) and the $RSP$ (Recursive Specification Principle) are shown in Table \ref{RDPRSP}.

\begin{center}
\begin{table}
  \begin{tabular}{@{}ll@{}}
\hline No. &Axiom\\
  $RDP$ & $\langle X_i|E\rangle = t_i(\langle X_1|E,\cdots,X_n|E\rangle)\quad (i\in\{1,\cdots,n\})$\\
  $RSP$ & if $y_i=t_i(y_1,\cdots,y_n)$ for $i\in\{1,\cdots,n\}$, then $y_i=\langle X_i|E\rangle \quad(i\in\{1,\cdots,n\})$\\
\end{tabular}
\caption{Recursive definition and specification principle}
\label{RDPRSP}
\end{table}
\end{center}

\begin{theorem}[Soundness of $APTC$ with guarded recursion]\label{SAPTCR}
Let $x$ and $y$ be $APTC$ with guarded recursion terms. If $APTC\textrm{ with guarded recursion}\vdash x=y$, then
\begin{enumerate}
  \item $x\sim_{s} y$;
  \item $x\sim_{p} y$;
  \item $x\sim_{hp} y$.
\end{enumerate}
\end{theorem}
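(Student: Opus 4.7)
The plan is to reduce soundness of $APTC$ with guarded recursion to Theorem \ref{SAPTC} plus soundness of the two extra axioms $RDP$ and $RSP$. First I would verify that $\sim_p$, $\sim_s$, and $\sim_{hp}$ remain congruences with respect to all operators of $APTC$ when recursion constants $\langle X_i\mid E\rangle$ are added to the signature; this is standard and follows by inspecting the formats of the transition rules (they are in an appropriate panth/tyft-like format, so the induced equivalences are automatically congruences). Combined with Theorem \ref{SAPTC}, this reduces the problem to showing that for each of the three equivalences $\sim$, both $RDP$ and $RSP$ are valid, i.e.\ their left- and right-hand sides denote processes related by $\sim$.

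For $RDP$, I would argue directly from the transition rules in Table \ref{TRForGR}: these rules are designed so that every transition of $\langle X_i\mid E\rangle$ is mirrored step-for-step by a transition of $t_i(\langle X_1\mid E\rangle,\ldots,\langle X_n\mid E\rangle)$ and vice versa, with identical labels and residuals. Hence the identity relation on configurations lifts to a pomset, step, and hp-bisimulation between the two sides, giving $\langle X_i\mid E\rangle \sim t_i(\langle X_1\mid E\rangle,\ldots,\langle X_n\mid E\rangle)$ for each of the three equivalences simultaneously.

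For $RSP$, suppose $y_1,\ldots,y_n$ satisfy $y_i = t_i(y_1,\ldots,y_n)$ for $i=1,\ldots,n$. I would define, for each equivalence $\sim$, the candidate relation
\[
R \;=\; \bigl\{\bigl(C(u[\vec y/\vec X]),\, f,\, C(u[\langle \vec X\mid E\rangle/\vec X])\bigr) : u \text{ an open term, } f \text{ the canonical isomorphism}\bigr\},
\]
together with the analogous unadorned versions for $\sim_p$ and $\sim_s$, and then verify the bisimulation transfer conditions by induction on the derivation of transitions. Guardedness ensures that every transition of $u[\vec y/\vec X]$ or $u[\langle\vec X\mid E\rangle/\vec X]$ executes at least one atomic action from $u$ or from the unfolding of a $t_i$ before control is handed back to another recursion variable, so the induction is well-founded and labels, causal orderings, and conflict structure are preserved on both sides in lockstep.

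The main obstacle will be the $RSP$ case, specifically the hp-bisimulation variant: one must carry around the order-isomorphism $f$ and show that after each matched transition, the extension $f[e_1\mapsto e_2]$ still gives an isomorphism of posets of events. The congruence format work and the transition-rule bookkeeping handle this uniformly, but the guardedness hypothesis is what actually forces uniqueness of solutions and must be used explicitly whenever a recursion variable is unfolded. Once this is in place, closure under the axioms of $APTC$ is inherited from Theorem \ref{SAPTC} and the three soundness claims follow simultaneously.
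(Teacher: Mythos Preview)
The paper does not actually give a proof of this theorem: it is stated in the preliminaries chapter as a result imported from the cited works \cite{ATC,CTC,PITC}, and the text moves directly on to the next theorem with no intervening proof environment. So there is no paper-proof to compare against.

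That said, your plan is the standard one and is correct in outline. Reducing to Theorem~\ref{SAPTC} via congruence, then checking $RDP$ and $RSP$ separately, is exactly how soundness of guarded recursion is established in the ACP/APTC tradition. Your treatment of $RDP$ (the two sides have literally the same transitions by the rules in Table~\ref{TRForGR}) is right. For $RSP$, the relation you propose---pairing $u[\vec y/\vec X]$ with $u[\langle\vec X\mid E\rangle/\vec X]$ over all contexts $u$---is the usual ``bisimulation up to substitution'' argument, and guardedness is indeed the crucial hypothesis that makes the transition-matching induction terminate. One small caveat: in the truly concurrent setting the transitions may be labelled by \emph{sets} of events $\{e_1,\ldots,e_k\}$ rather than single events (cf.\ Table~\ref{TRForGR}), so your hp-bisimulation bookkeeping for the isomorphism $f$ has to extend by a bijection on such a set at each step, not just a single $e_1\mapsto e_2$. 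This is routine but worth stating explicitly.
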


\begin{theorem}[Completeness of $APTC$ with linear recursion]\label{CAPTCR}
Let $p$ and $q$ be closed $APTC$ with linear recursion terms, then,
\begin{enumerate}
  \item if $p\sim_{s} q$ then $p=q$;
  \item if $p\sim_{p} q$ then $p=q$;
  \item if $p\sim_{hp} q$ then $p=q$.
\end{enumerate}
\end{theorem}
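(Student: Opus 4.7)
The plan is to follow the standard process-algebraic completeness strategy: reduce every closed term to a canonical form given by a linear recursive specification, and then use $RSP$ to collapse bisimilar canonical forms into a single provable equality. All three claims (for $\sim_s$, $\sim_p$, $\sim_{hp}$) are handled in parallel since the proof only uses properties common to the three equivalences, namely that each of them is a congruence for the APTC operators, includes step/pomset/hp-information compatible with the transition rules of Tables \ref{TRForAPTC} and \ref{TRForGR}, and satisfies the soundness theorems \ref{SAPTC} and \ref{SAPTCR}.

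First I would establish a \emph{recursive specification principle normal form}: every closed $APTC$-with-linear-recursion term $p$ is provably equal to $\langle X_1 \mid E_p\rangle$ for some linear recursive specification $E_p$. To get this, I would argue by induction on the structure of $p$: atomic actions and $\delta$ are trivially placed in a one-variable linear specification; for each operator ($\cdot$, $+$, $\parallel$, $\mid$, $\Theta$, $\triangleleft$, $\partial_H$, $\between$) I would combine the specifications of the subterms using the axioms of Table \ref{AxiomsForAPTC} together with $P1$--$P10$, $C11$--$C18$, $CE19$--$CE24$, $U25$--$U37$, $D1$--$D6$ to drive every non-recursion operator out of the right-hand sides until only summations of prefixed variables remain. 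This is precisely the elimination/linearization step and uses completeness of $APTC$ (Theorem \ref{CAPTC}) on the finite unfoldings produced at each inductive stage.

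Next, given two closed terms $p,q$ with $p\sim q$ (in any of the three equivalences), write $p=\langle X_1\mid E\rangle$ and $q=\langle Y_1\mid F\rangle$ in linear normal form. I would build a new linear recursive specification $G$ whose variables are pairs $Z_{(X_i,Y_j)}$ indexed by the bisimilar reachable pairs from $(X_1,Y_1)$. The right-hand side of $Z_{(X_i,Y_j)}$ is obtained by matching, for every summand $\{e_1,\ldots,e_k\}\cdot X_{i'}$ of $t_i$, a corresponding summand $\{e_1,\ldots,e_k\}\cdot Y_{j'}$ of $t_j$ with $\langle X_{i'}\mid E\rangle\sim\langle Y_{j'}\mid F\rangle$, and writing the summand $\{e_1,\ldots,e_k\}\cdot Z_{(X_{i'},Y_{j'})}$ (and analogously for terminating summands). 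Because the original specifications are linear, and because reachability-via-bisimulation is symmetric, $G$ is well-defined and linear (hence guarded). Then I would check that substituting $\langle X_i\mid E\rangle$ for $Z_{(X_i,Y_j)}$ solves $G$, using $RDP$ and the axioms of $+$, $\cdot$ to rewrite the right-hand sides, and symmetrically that substituting $\langle Y_j\mid F\rangle$ for $Z_{(X_i,Y_j)}$ also solves $G$. Applying $RSP$ to both solutions of the guarded specification $G$ yields $\langle X_1\mid E\rangle = \langle Z_{(X_1,Y_1)}\mid G\rangle = \langle Y_1\mid F\rangle$, i.e.\ $p=q$.

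The main obstacle is the elimination step: driving $\parallel$, $\mid$, $\Theta$ and $\triangleleft$ past recursion variables and reorganizing the resulting terms into \emph{linear} form rather than merely guarded form, because the axioms $P2$--$P10$ and $CE23,CE24$ generate nested occurrences of $\between$ on subterms that are themselves recursive, and one must verify that each such nesting can be re-expressed as a single new recursion variable in $G$ without destroying linearity. A secondary delicate point is showing that the bisimulation relation on the reachable pairs is closed under transitions in the sense required to extract \emph{matching summands} (as opposed to merely matching transitions); for $\sim_{hp}$ this requires carrying along the order-isomorphism $f$ from Definition \ref{HHPB} and checking that the construction of $G$ is compatible with the posetal-product formulation, while for $\sim_p$ and $\sim_s$ it reduces to matching pomset- or step-labels directly. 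Once linearity of $G$ is secured, the final application of $RSP$ is immediate.
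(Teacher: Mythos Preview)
The paper does not actually supply a proof of this theorem: it appears in the preliminaries chapter as a result quoted from the references \cite{ATC}, \cite{CTC}, \cite{PITC}, with no accompanying argument. So there is no in-paper proof to compare against.

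That said, your proposal is the standard completeness argument for ACP-style process algebras extended with linear recursion and $RDP$/$RSP$, and it is the argument one would expect the cited sources to carry out. The two-step plan---first reduce every closed term to a linear recursive specification, then merge two bisimilar specifications into a common one whose solutions, by $RSP$, coincide---is exactly the classical route (cf.\ Baeten--Weijland or Fokkink). Your identification of the delicate points is also accurate: the elimination of $\parallel$, $\mid$, $\Theta$, $\triangleleft$ over recursion variables is where the real work sits, and for $\sim_{hp}$ one must indeed track the posetal isomorphism when matching summands. One small sharpening: in the construction of $G$, rather than matching summands directly you should match \emph{transitions} of $\langle X_i\mid E\rangle$ and $\langle Y_j\mid F\rangle$ (via the rules in Table~\ref{TRForGR}) and then use linearity of $E$ and $F$ to read those transitions back as summands; this avoids any ambiguity about which summand of $t_i$ is responsible for a given step and makes the verification that the substitutions solve $G$ a one-line application of $RDP$.
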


\subsection{Abstraction}

To abstract away internal implementations from the external behaviors, a new constant $\tau$ called silent step is added to $A$, and also a new unary abstraction operator
$\tau_I$ is used to rename actions in $I$ into $\tau$ (the resulted APTC with silent step and abstraction operator is called $\textrm{APTC}_{\tau}$). The recursive specification
is adapted to guarded linear recursion to prevent infinite $\tau$-loops specifically. The axioms of $\tau$ and $\tau_I$ are sound modulo rooted branching truly concurrent bisimulation
 equivalences (several kinds of weakly truly concurrent bisimulation equivalences, including rooted branching pomset bisimulation, rooted branching step bisimulation and rooted branching hp-bisimulation). To eliminate infinite $\tau$-loops caused by $\tau_I$ and obtain the completeness, CFAR (Cluster Fair Abstraction Rule) is used to prevent infinite $\tau$-loops in a constructible way.

\begin{definition}[Weak pomset transitions and weak step]
Let $\mathcal{E}$ be a PES and let $C\in\mathcal{C}(\mathcal{E})$, and $\emptyset\neq X\subseteq \hat{\mathbb{E}}$, if $C\cap X=\emptyset$ and
$\hat{C'}=\hat{C}\cup X\in\mathcal{C}(\mathcal{E})$, then $C\xRightarrow{X} C'$ is called a weak pomset transition from $C$ to $C'$, where we define
$\xRightarrow{e}\triangleq\xrightarrow{\tau^*}\xrightarrow{e}\xrightarrow{\tau^*}$. And $\xRightarrow{X}\triangleq\xrightarrow{\tau^*}\xrightarrow{e}\xrightarrow{\tau^*}$,
for every $e\in X$. When the events in $X$ are pairwise concurrent, we say that $C\xRightarrow{X}C'$ is a weak step.
\end{definition}

\begin{definition}[Branching pomset, step bisimulation]\label{BPSB}
Assume a special termination predicate $\downarrow$, and let $\surd$ represent a state with $\surd\downarrow$. Let $\mathcal{E}_1$, $\mathcal{E}_2$ be PESs. A branching pomset
bisimulation is a relation $R\subseteq\mathcal{C}(\mathcal{E}_1)\times\mathcal{C}(\mathcal{E}_2)$, such that:
 \begin{enumerate}
   \item if $(C_1,C_2)\in R$, and $C_1\xrightarrow{X}C_1'$ then
   \begin{itemize}
     \item either $X\equiv \tau^*$, and $(C_1',C_2)\in R$;
     \item or there is a sequence of (zero or more) $\tau$-transitions $C_2\xrightarrow{\tau^*} C_2^0$, such that $(C_1,C_2^0)\in R$ and $C_2^0\xRightarrow{X}C_2'$ with
     $(C_1',C_2')\in R$;
   \end{itemize}
   \item if $(C_1,C_2)\in R$, and $C_2\xrightarrow{X}C_2'$ then
   \begin{itemize}
     \item either $X\equiv \tau^*$, and $(C_1,C_2')\in R$;
     \item or there is a sequence of (zero or more) $\tau$-transitions $C_1\xrightarrow{\tau^*} C_1^0$, such that $(C_1^0,C_2)\in R$ and $C_1^0\xRightarrow{X}C_1'$ with
     $(C_1',C_2')\in R$;
   \end{itemize}
   \item if $(C_1,C_2)\in R$ and $C_1\downarrow$, then there is a sequence of (zero or more) $\tau$-transitions $C_2\xrightarrow{\tau^*}C_2^0$ such that $(C_1,C_2^0)\in R$
   and $C_2^0\downarrow$;
   \item if $(C_1,C_2)\in R$ and $C_2\downarrow$, then there is a sequence of (zero or more) $\tau$-transitions $C_1\xrightarrow{\tau^*}C_1^0$ such that $(C_1^0,C_2)\in R$
   and $C_1^0\downarrow$.
 \end{enumerate}

We say that $\mathcal{E}_1$, $\mathcal{E}_2$ are branching pomset bisimilar, written $\mathcal{E}_1\approx_{bp}\mathcal{E}_2$, if there exists a branching pomset bisimulation $R$,
such that $(\emptyset,\emptyset)\in R$.

By replacing pomset transitions with steps, we can get the definition of branching step bisimulation. When PESs $\mathcal{E}_1$ and $\mathcal{E}_2$ are branching step bisimilar,
we write $\mathcal{E}_1\approx_{bs}\mathcal{E}_2$.
\end{definition}

\begin{definition}[Rooted branching pomset, step bisimulation]\label{RBPSB}
Assume a special termination predicate $\downarrow$, and let $\surd$ represent a state with $\surd\downarrow$. Let $\mathcal{E}_1$, $\mathcal{E}_2$ be PESs. A branching pomset
bisimulation is a relation $R\subseteq\mathcal{C}(\mathcal{E}_1)\times\mathcal{C}(\mathcal{E}_2)$, such that:
 \begin{enumerate}
   \item if $(C_1,C_2)\in R$, and $C_1\xrightarrow{X}C_1'$ then $C_2\xrightarrow{X}C_2'$ with $C_1'\approx_{bp}C_2'$;
   \item if $(C_1,C_2)\in R$, and $C_2\xrightarrow{X}C_2'$ then $C_1\xrightarrow{X}C_1'$ with $C_1'\approx_{bp}C_2'$;
   \item if $(C_1,C_2)\in R$ and $C_1\downarrow$, then $C_2\downarrow$;
   \item if $(C_1,C_2)\in R$ and $C_2\downarrow$, then $C_1\downarrow$.
 \end{enumerate}

We say that $\mathcal{E}_1$, $\mathcal{E}_2$ are rooted branching pomset bisimilar, written $\mathcal{E}_1\approx_{rbp}\mathcal{E}_2$, if there exists a rooted branching pomset
bisimulation $R$, such that $(\emptyset,\emptyset)\in R$.

By replacing pomset transitions with steps, we can get the definition of rooted branching step bisimulation. When PESs $\mathcal{E}_1$ and $\mathcal{E}_2$ are rooted branching step
bisimilar, we write $\mathcal{E}_1\approx_{rbs}\mathcal{E}_2$.
\end{definition}

\begin{definition}[Branching (hereditary) history-preserving bisimulation]\label{BHHPB}
Assume a special termination predicate $\downarrow$, and let $\surd$ represent a state with $\surd\downarrow$. A branching history-preserving (hp-) bisimulation is a weakly posetal
relation $R\subseteq\mathcal{C}(\mathcal{E}_1)\overline{\times}\mathcal{C}(\mathcal{E}_2)$ such that:

 \begin{enumerate}
   \item if $(C_1,f,C_2)\in R$, and $C_1\xrightarrow{e_1}C_1'$ then
   \begin{itemize}
     \item either $e_1\equiv \tau$, and $(C_1',f[e_1\mapsto \tau],C_2)\in R$;
     \item or there is a sequence of (zero or more) $\tau$-transitions $C_2\xrightarrow{\tau^*} C_2^0$, such that $(C_1,f,C_2^0)\in R$ and $C_2^0\xrightarrow{e_2}C_2'$ with
     $(C_1',f[e_1\mapsto e_2],C_2')\in R$;
   \end{itemize}
   \item if $(C_1,f,C_2)\in R$, and $C_2\xrightarrow{e_2}C_2'$ then
   \begin{itemize}
     \item either $X\equiv \tau$, and $(C_1,f[e_2\mapsto \tau],C_2')\in R$;
     \item or there is a sequence of (zero or more) $\tau$-transitions $C_1\xrightarrow{\tau^*} C_1^0$, such that $(C_1^0,f,C_2)\in R$ and $C_1^0\xrightarrow{e_1}C_1'$ with
     $(C_1',f[e_2\mapsto e_1],C_2')\in R$;
   \end{itemize}
   \item if $(C_1,f,C_2)\in R$ and $C_1\downarrow$, then there is a sequence of (zero or more) $\tau$-transitions $C_2\xrightarrow{\tau^*}C_2^0$ such that $(C_1,f,C_2^0)\in R$
   and $C_2^0\downarrow$;
   \item if $(C_1,f,C_2)\in R$ and $C_2\downarrow$, then there is a sequence of (zero or more) $\tau$-transitions $C_1\xrightarrow{\tau^*}C_1^0$ such that $(C_1^0,f,C_2)\in R$
   and $C_1^0\downarrow$.
 \end{enumerate}

$\mathcal{E}_1,\mathcal{E}_2$ are branching history-preserving (hp-)bisimilar and are written $\mathcal{E}_1\approx_{bhp}\mathcal{E}_2$ if there exists a branching hp-bisimulation
$R$ such that $(\emptyset,\emptyset,\emptyset)\in R$.

A branching hereditary history-preserving (hhp-)bisimulation is a downward closed branching hhp-bisimulation. $\mathcal{E}_1,\mathcal{E}_2$ are branching hereditary history-preserving
(hhp-)bisimilar and are written $\mathcal{E}_1\approx_{bhhp}\mathcal{E}_2$.
\end{definition}

\begin{definition}[Rooted branching (hereditary) history-preserving bisimulation]\label{RBHHPB}
Assume a special termination predicate $\downarrow$, and let $\surd$ represent a state with $\surd\downarrow$. A rooted branching history-preserving (hp-) bisimulation is a weakly
posetal relation $R\subseteq\mathcal{C}(\mathcal{E}_1)\overline{\times}\mathcal{C}(\mathcal{E}_2)$ such that:

 \begin{enumerate}
   \item if $(C_1,f,C_2)\in R$, and $C_1\xrightarrow{e_1}C_1'$, then $C_2\xrightarrow{e_2}C_2'$ with $C_1'\approx_{bhp}C_2'$;
   \item if $(C_1,f,C_2)\in R$, and $C_2\xrightarrow{e_2}C_1'$, then $C_1\xrightarrow{e_1}C_2'$ with $C_1'\approx_{bhp}C_2'$;
   \item if $(C_1,f,C_2)\in R$ and $C_1\downarrow$, then $C_2\downarrow$;
   \item if $(C_1,f,C_2)\in R$ and $C_2\downarrow$, then $C_1\downarrow$.
 \end{enumerate}

$\mathcal{E}_1,\mathcal{E}_2$ are rooted branching history-preserving (hp-)bisimilar and are written $\mathcal{E}_1\approx_{rbhp}\mathcal{E}_2$ if there exists rooted a branching
hp-bisimulation $R$ such that $(\emptyset,\emptyset,\emptyset)\in R$.

A rooted branching hereditary history-preserving (hhp-)bisimulation is a downward closed rooted branching hhp-bisimulation. $\mathcal{E}_1,\mathcal{E}_2$ are rooted branching
hereditary history-preserving (hhp-)bisimilar and are written $\mathcal{E}_1\approx_{rbhhp}\mathcal{E}_2$.
\end{definition}

The axioms and transition rules of $\textrm{APTC}_{\tau}$ are shown in Table \ref{AxiomsForTau} and Table \ref{TRForTau}.

\begin{center}
\begin{table}
  \begin{tabular}{@{}ll@{}}
\hline No. &Axiom\\
  $B1$ & $e\cdot\tau=e$\\
  $B2$ & $e\cdot(\tau\cdot(x+y)+x)=e\cdot(x+y)$\\
  $B3$ & $x\parallel\tau=x$\\
  $TI1$ & $e\notin I\quad \tau_I(e)=e$\\
  $TI2$ & $e\in I\quad \tau_I(e)=\tau$\\
  $TI3$ & $\tau_I(\delta)=\delta$\\
  $TI4$ & $\tau_I(x+y)=\tau_I(x)+\tau_I(y)$\\
  $TI5$ & $\tau_I(x\cdot y)=\tau_I(x)\cdot\tau_I(y)$\\
  $TI6$ & $\tau_I(x\parallel y)=\tau_I(x)\parallel\tau_I(y)$\\
  $CFAR$ & If $X$ is in a cluster for $I$ with exits \\
           & $\{(a_{11}\parallel\cdots\parallel a_{1i})Y_1,\cdots,(a_{m1}\parallel\cdots\parallel a_{mi})Y_m, b_{11}\parallel\cdots\parallel b_{1j},\cdots,b_{n1}\parallel\cdots\parallel b_{nj}\}$, \\
           & then $\tau\cdot\tau_I(\langle X|E\rangle)=$\\
           & $\tau\cdot\tau_I((a_{11}\parallel\cdots\parallel a_{1i})\langle Y_1|E\rangle+\cdots+(a_{m1}\parallel\cdots\parallel a_{mi})\langle Y_m|E\rangle+b_{11}\parallel\cdots\parallel b_{1j}+\cdots+b_{n1}\parallel\cdots\parallel b_{nj})$\\
\end{tabular}
\caption{Axioms of $\textrm{APTC}_{\tau}$}
\label{AxiomsForTau}
\end{table}
\end{center}

\begin{center}
    \begin{table}
        $$\frac{}{\tau\xrightarrow{\tau}\surd}$$
        $$\frac{x\xrightarrow{e}\surd}{\tau_I(x)\xrightarrow{e}\surd}\quad e\notin I
        \quad\quad\frac{x\xrightarrow{e}x'}{\tau_I(x)\xrightarrow{e}\tau_I(x')}\quad e\notin I$$

        $$\frac{x\xrightarrow{e}\surd}{\tau_I(x)\xrightarrow{\tau}\surd}\quad e\in I
        \quad\quad\frac{x\xrightarrow{e}x'}{\tau_I(x)\xrightarrow{\tau}\tau_I(x')}\quad e\in I$$
        \caption{Transition rule of $\textrm{APTC}_{\tau}$}
        \label{TRForTau}
    \end{table}
\end{center}

\begin{theorem}[Soundness of $APTC_{\tau}$ with guarded linear recursion]\label{SAPTCABS}
Let $x$ and $y$ be $APTC_{\tau}$ with guarded linear recursion terms. If $APTC_{\tau}$ with guarded linear recursion $\vdash x=y$, then
\begin{enumerate}
  \item $x\approx_{rbs} y$;
  \item $x\approx_{rbp} y$;
  \item $x\approx_{rbhp} y$.
\end{enumerate}
\end{theorem}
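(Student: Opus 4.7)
The plan is to argue by induction on the length of the equational derivation of $x=y$ in $APTC_{\tau}$ with guarded linear recursion. For the inductive step to go through, I first need to establish that each of $\approx_{rbs}$, $\approx_{rbp}$, $\approx_{rbhp}$ is a congruence with respect to all operators of $APTC_{\tau}$: the BATC operators $\cdot, +$, the parallelism operators $\parallel, \mid, \between$, the conflict operators $\Theta, \triangleleft$, the encapsulation $\partial_H$, the abstraction $\tau_I$, and the recursion construct $\langle X_i \mid E \rangle$. Congruence for the BATC and APTC operators can be obtained by a standard case analysis on the transition rules in Tables \ref{TRForBATC} and \ref{TRForAPTC} using the format of the rules; for $\tau_I$ one uses Table \ref{TRForTau}. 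The ``rooted'' qualifier makes congruence for $+$ immediate once branching congruence is established inside subterms.

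For the base case I must check soundness of every axiom. The axioms carried over from BATC, APTC and the recursion module (A1--A7, P1--P10, C11--C18, CE19--CE24, U25--U37, D1--D6, RDP, RSP) were already shown sound for the strong equivalences $\sim_{s},\sim_{p},\sim_{hp}$ in Theorems \ref{SBATC}, \ref{SAPTC} and \ref{SAPTCR}. Since strong truly concurrent bisimilarity is strictly finer than rooted branching truly concurrent bisimilarity, soundness w.r.t.\ $\approx_{rbs},\approx_{rbp},\approx_{rbhp}$ follows for free for these axioms. The genuinely new work is to verify the $\tau$-axioms B1--B3 and the abstraction axioms TI1--TI6 together with CFAR. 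For B1--B3 and TI1--TI6 I would, for each axiom $s=t$, write down an explicit candidate relation $R$ consisting of the pair $(s,t)$ together with all pairs of reducts reachable under the transition rules, and check the four clauses of Definition \ref{RBPSB} (and Definition \ref{RBHHPB} for the hp-case). The rootedness constraint does most of the work for B1--B3: the first visible step on either side must be matched strictly, and only afterwards may $\tau$-transitions be absorbed; this is exactly the pattern of $e\cdot\tau = e$ and $e\cdot(\tau\cdot(x+y)+x)=e\cdot(x+y)$. For TI1--TI6, since $\tau_I$ commutes with the operators on the right-hand sides by Table \ref{TRForTau}, the witnessing bisimulations are built by structural induction.

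The main obstacle will be the CFAR axiom. Here $X$ sits in a cluster for $I$ with specified exits, and one must show that after an initial $\tau$, the process $\tau_I(\langle X\mid E\rangle)$ is rooted branching bisimilar to $\tau_I$ applied to the sum of the exits. The difficulty is twofold: (i) inside the cluster one has an arbitrary, possibly unbounded, pattern of $\tau$-transitions generated by actions in $I$, which must all be collapsed by the branching bisimulation's ``stuttering'' clause (the first bullet of clauses 1 and 2 in Definition \ref{BPSB}); and (ii) truly concurrent exits of the form $(a_{k1}\parallel\cdots\parallel a_{ki})$ must be matched as pomset/step transitions on the right-hand side, which requires keeping track of causal and concurrency structure across the abstraction. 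The plan is to define the candidate relation as the set of pairs $(\tau_I(\langle Y\mid E\rangle), \tau_I(\text{exits of the cluster}))$ for every $Y$ in the cluster, together with $(\tau_I(p),\tau_I(p))$ for all $p$ outside, and verify the branching bisimulation clauses by using the cluster hypothesis to guarantee that every internal $\tau$-path from $Y$ eventually reaches an exit, so that any visible step on the left can be matched on the right after absorbing a finite prefix of $\tau$'s, and vice versa. The leading $\tau$ in front of both sides of CFAR supplies the root that promotes branching bisimilarity to rooted branching bisimilarity; this is where the hp-version requires extra care, since the posetal isomorphism $f$ must be extended consistently along the matched exit transition, which I would handle by noting that the exits are exactly the visible frontier and hence provide canonical witnesses for extending $f$.
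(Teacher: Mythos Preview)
The paper does not actually supply a proof of this theorem: it appears in the preliminaries chapter as a result quoted from the references \cite{ATC,CTC,PITC}, with no accompanying \texttt{proof} environment. So there is nothing to compare against at the level of detail; your outline is in fact considerably more explicit than anything the paper offers here, and the overall strategy (congruence of the rooted branching equivalences plus axiom-by-axiom verification, reusing the strong-bisimilarity soundness results for the inherited axioms) is the standard one and is what the cited sources do.

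One scoping remark: in the paper, soundness of $CFAR$ is \emph{not} part of Theorem~\ref{SAPTCABS} but is separated out as the next theorem, Theorem~\ref{SCFAR}. The axiom system ``$APTC_{\tau}$ with guarded linear recursion'' in the statement you are asked to prove consists of the APTC axioms, $RDP$/$RSP$, the $\tau$-laws $B1$--$B3$, and the abstraction laws $TI1$--$TI6$; $CFAR$ is added only afterwards. So the lengthy discussion of the cluster argument for $CFAR$ belongs to the next theorem, not this one, and the ``main obstacle'' you identify is in fact outside the scope of the present statement. With that excised, your plan for $B1$--$B3$ and $TI1$--$TI6$ is adequate.
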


\begin{theorem}[Soundness of $CFAR$]\label{SCFAR}
$CFAR$ is sound modulo rooted branching truly concurrent bisimulation equivalences $\approx_{rbs}$, $\approx_{rbp}$ and $\approx_{rbhp}$.
\end{theorem}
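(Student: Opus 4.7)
The plan is to exhibit, for each of the three equivalences $\approx_{rbs}$, $\approx_{rbp}$, $\approx_{rbhp}$, a witnessing rooted branching bisimulation relating the two sides of the $CFAR$ equation. Write $S$ for the exit sum on the right-hand side, i.e.\ $(a_{11}\parallel\cdots\parallel a_{1i})\langle Y_1|E\rangle+\cdots+(a_{m1}\parallel\cdots\parallel a_{mi})\langle Y_m|E\rangle+b_{11}\parallel\cdots\parallel b_{1j}+\cdots+b_{n1}\parallel\cdots\parallel b_{nj}$. By the cluster hypothesis every transition of $\langle Y|E\rangle$ for $Y$ in the cluster either has its label set inside $I$ and leads back into the cluster, or is one of the summands listed in $S$; consequently after applying $\tau_I$ all intra-cluster transitions carry $\tau$, and the only visible transitions available from any cluster variable are (up to bisimulation) the summands of $S$.

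Next I would construct the candidate relation $R$ as the set containing the root pair $(\tau\cdot\tau_I(\langle X|E\rangle),\tau\cdot\tau_I(S))$, the pairs $(\tau_I(\langle Y|E\rangle),\tau_I(S))$ for every $Y$ reachable from $X$ in the cluster via intra-cluster paths, the identity pairs $(\tau_I(\langle Y_k|E\rangle),\tau_I(\langle Y_k|E\rangle))$ on the post-exit continuations, and $(\surd,\surd)$. For the hp-level I would decorate each pair with the obvious order-isomorphism between matched visible histories, treating intra-cluster $\tau$-events as invisible extensions that do not enter the history isomorphism.

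Then I would verify the branching clauses of Definition \ref{BPSB} on the sub-relation $R_0$ obtained by dropping the root pair. Any left-hand intra-cluster $\tau$-transition $\tau_I(\langle Y|E\rangle)\xrightarrow{\tau}\tau_I(\langle Y'|E\rangle)$ is matched on the right by standing still (the $X\equiv\tau^*$ clause), and the resulting pair remains in $R_0$ because the cluster is closed under such transitions; any left-hand exit transition is matched on the right by the corresponding summand of $S$, landing in an identity pair. Conversely, a summand of $S$ fired on the right is matched on the left by first performing a sequence of intra-cluster $\tau$-transitions that navigates $\langle Y|E\rangle$ to the cluster variable $\langle Y''|E\rangle$ that enables that exit, and then firing it; the termination clauses are handled analogously through the pure-action exits $b_{\ell 1}\parallel\cdots\parallel b_{\ell j}$. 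Prefixing with the outer $\tau$ finally upgrades $R_0$ to a rooted branching bisimulation via the strict initial-step matching required by Definitions \ref{RBPSB} and \ref{RBHHPB}.

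The hard part will be the right-to-left direction: $\tau_I(S)$ can fire any exit in one step, while $\tau_I(\langle Y|E\rangle)$ may need arbitrarily many intra-cluster $\tau$-steps before the chosen exit becomes enabled, so one must check that every intermediate configuration reached along that $\tau$-path is still paired with $\tau_I(S)$ in $R_0$. This is precisely what the cluster hypothesis is designed to guarantee — all intra-cluster behaviour is $I$-labelled and every visible choice at every cluster variable already appears as a summand of $S$, so no new observable branching can arise along the detour, which is why the rule is called \emph{fair}. For the pomset and hp versions the same relation works verbatim, since each compound exit label $(a_{k1}\parallel\cdots\parallel a_{ki})$ is emitted as a single step on both sides and intra-cluster $\tau$-actions contribute nothing to the observable pomset or to the history isomorphism; the hhp case is deliberately absent from the statement, in line with the earlier observation that hhp-bisimulation is not preserved by all APTC constructs.
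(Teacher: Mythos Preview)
The paper does not actually contain a proof of this theorem. Theorem~\ref{SCFAR} appears in the preliminaries chapter as a stated result imported from the author's prior work on truly concurrent process algebra (the references \cite{ATC}, \cite{CTC}, \cite{PITC}); like the surrounding Theorems~\ref{SAPTCABS} and~\ref{CCFAR}, it is asserted without argument and serves only as background for the verification case studies that follow.

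That said, your proposal is the standard and essentially correct strategy for establishing such a result: one exhibits a branching bisimulation (respectively pomset, hp) that relates every cluster member, after abstraction, to the exit sum, and then uses the leading $\tau$ prefix to secure the root condition. Your identification of the delicate point---that from an arbitrary cluster variable one may need a chain of intra-cluster $\tau$-moves before the chosen exit becomes enabled, and that every intermediate state along that chain must remain related to $\tau_I(S)$---is exactly the crux of fair-abstraction arguments in the ACP tradition, and the cluster hypothesis is precisely what closes this gap. One small point you might tighten in an actual write-up: the definition of ``cluster'' should be invoked explicitly to justify that \emph{every} exit is reachable from \emph{every} cluster variable via $I$-labelled steps (this is the connectivity/fairness assumption), otherwise the right-to-left matching could fail for a cluster variable from which some exit is unreachable.
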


\begin{theorem}[Completeness of $APTC_{\tau}$ with guarded linear recursion and $CFAR$]\label{CCFAR}
Let $p$ and $q$ be closed $APTC_{\tau}$ with guarded linear recursion and $CFAR$ terms, then,
\begin{enumerate}
  \item if $p\approx_{rbs} q$ then $p=q$;
  \item if $p\approx_{rbp} q$ then $p=q$;
  \item if $p\approx_{rbhp} q$ then $p=q$.
\end{enumerate}
\end{theorem}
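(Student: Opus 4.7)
The plan is to follow the standard ``elimination/linearisation plus RSP'' strategy used for completeness of ACP-style theories with abstraction, lifted to the truly concurrent setting already handled in Theorem \ref{CAPTCR}. First I would reduce the problem to guarded linear recursive specifications: using the axioms of Tables \ref{AxiomsForBATC}, \ref{AxiomsForAPTC}, \ref{AxiomsForTau} together with RDP and the handling of $\tau$ via $B1$--$B3$ and $TI1$--$TI6$, I would show that every closed $APTC_{\tau}$ term with guarded linear recursion is provably equal to some $\langle X_1 | E\rangle$ where $E$ is a guarded linear recursive specification over $A_{\tau,\delta}$ (every equation having the shape $X_i = \sum_k (a_{k1} \parallel \cdots \parallel a_{k\ell_k}) X_{j_k} + \sum_m b_{m1} \parallel \cdots \parallel b_{m\ell_m}$, with at most one leading $\tau$-summand per variable after a first application of the abstraction axioms). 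This linearisation step is essentially the one already done in the proof of Theorem \ref{CAPTCR}, adapted by applying $TI4$--$TI6$ inside-out to push $\tau_I$ to the atomic actions and then using $B1$--$B2$ to normalize the resulting $\tau$-prefixes.

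Second, I would eliminate $\tau$-loops using $CFAR$. A $\tau$-cluster in $E$ is a maximal set of variables $\{X_{i_1},\dots,X_{i_r}\}$ connected by $\tau$-transitions, with exits of the form allowed in the $CFAR$ axiom of Table \ref{AxiomsForTau}. Repeatedly applying $CFAR$ under a leading $\tau$ collapses each such cluster into a finite choice over its exits, producing an equivalent linear specification $E'$ that is free of $\tau$-cycles (hence \emph{guarded} in the strong sense required by Theorem \ref{CAPTCR}). By the soundness of $CFAR$ (Theorem \ref{SCFAR}) we still have $\langle X_1|E\rangle = \langle X_1|E'\rangle$ provably, and $\langle X_1|E'\rangle \approx_{rb?} \langle X_1|E\rangle$ for each of the three equivalences.

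Third, given closed $p,q$ with $p \approx_{rb?} q$ (for $? \in \{s,p,hp\}$), I put both in the above cluster-free linear form $\langle X_1|E\rangle$ and $\langle Y_1|F\rangle$. I then form a combined guarded linear recursive specification $G$ whose variables are pairs $Z_{(X_i,Y_j)}$ indexed by pairs of variables of $E,F$ that are rooted-branching bisimilar in the appropriate sense, with equations read off from the matching transitions guaranteed by the bisimulation. Using the soundness of the rules (Theorem \ref{SAPTCABS}) one checks that both $p$ and $q$ (or rather the tuples of $\langle X_i|E\rangle$'s and $\langle Y_j|F\rangle$'s indexed by the pair-variables) are solutions of $G$; since $G$ is guarded and linear, $RSP$ forces them to coincide, giving $p = q$.

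The main obstacle I expect is the construction of $G$ in step three for $\approx_{rbhp}$: the hp-version has to carry the order-iso $f$ along the bisimulation, so the pair-variables must actually be triples $Z_{(C_1,f,C_2)}$ and one needs to argue that, after cluster elimination, only finitely many reachable such triples occur up to a suitable equivalence so that $G$ remains a legitimate (finite) linear recursive specification to which $RSP$ applies. Discharging this finiteness --- which amounts to showing that the ``$\tau$-free'' residual of a guarded linear specification has finitely many posetal bisimilarity classes of reachable configurations --- together with the bookkeeping of $\tau$-closures in the branching clauses of Definitions \ref{BPSB}--\ref{RBHHPB} is where the work concentrates; the other two cases ($\approx_{rbs},\approx_{rbp}$) are then obtained by the same argument after forgetting the order-iso component.
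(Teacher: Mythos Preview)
The paper does not actually prove this theorem: Theorem~\ref{CCFAR} sits in the preliminaries chapter on truly concurrent process algebra and is stated without proof, as a result imported from the references \cite{ATC,CTC,PITC}. So there is no in-paper argument to compare against.

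That said, your proposal follows the standard ACP-style completeness strategy (linearisation to guarded linear recursive specifications, collapse of $\tau$-clusters via $CFAR$, then a merge specification solved by both sides so that $RSP$ identifies them), which is exactly the approach used in the cited source \cite{ATC} and its classical predecessors. The outline is sound; your identification of the main technical burden---ensuring the merged specification $G$ in the $\approx_{rbhp}$ case is finite and guarded---is accurate, and the reduction of the $\approx_{rbs}$ and $\approx_{rbp}$ cases to the same machinery is the right organisation. In short: your plan matches the intended (external) proof, and there is nothing in this paper to contrast it with.
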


\subsection{Placeholder}

We introduce a constant called shadow constant $\circledS$ to act for the placeholder that we ever used to deal entanglement in quantum process algebra. The transition rule of the shadow constant $\circledS$ is shown in Table \ref{TRForShadow}. The rule say that $\circledS$ can terminate successfully without executing any action.

\begin{center}
    \begin{table}
        $$\frac{}{\circledS\rightarrow\surd}$$
        \caption{Transition rule of the shadow constant}
        \label{TRForShadow}
    \end{table}
\end{center}

We need to adjust the definition of guarded linear recursive specification
to the following one.

\begin{definition}[Guarded linear recursive specification]\label{GLRSS}
A linear recursive specification $E$ is guarded if there does not exist an infinite sequence of $\tau$-transitions $\langle X|E\rangle\xrightarrow{\tau}\langle X'|E\rangle\xrightarrow{\tau}\langle X''|E\rangle\xrightarrow{\tau}\cdots$, and there does not exist an infinite sequence of $\circledS$-transitions $\langle X|E\rangle\rightarrow\langle X'|E\rangle\rightarrow\langle X''|E\rangle\rightarrow\cdots$.
\end{definition}

\begin{theorem}[Conservativity of $APTC$ with respect to the shadow constant]
$APTC_{\tau}$ with guarded linear recursion and shadow constant is a conservative extension of $APTC_{\tau}$ with guarded linear recursion.
\end{theorem}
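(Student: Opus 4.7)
The plan is to reduce the claim to the standard meta-theorem on operationally conservative extensions of transition system specifications, then lift the resulting semantic conservativity to axiomatic conservativity via the soundness and completeness results already available. The key observation is that the shadow constant $\circledS$ is a \emph{fresh} constant in the signature, and its sole defining rule in Table \ref{TRForShadow} has source $\circledS$ and no premises. Hence no rule of the extended TSS with the new $\circledS$-labelled conclusion can fire on any closed term built purely from the old signature.

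First, I would fix a closed term $p$ of $APTC_{\tau}$ with guarded linear recursion (not mentioning $\circledS$) and argue by induction on the structure of $p$, together with unfoldings of its recursion specifications, that every transition $p\xrightarrow{\alpha}p'$ (respectively every termination marker $p\downarrow$) derivable in the extended TSS is already derivable in the original TSS, and vice versa. The nontrivial direction uses that $\circledS$ cannot occur as a subterm of $p$ nor be introduced by unfolding a recursion specification written over the old signature, so the rule $\circledS\rightarrow\surd$ is never applicable at any stage of the derivation. All remaining rules are inherited verbatim from the old TSS, so the induced transition graphs reachable from $p$ in the two systems coincide.

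Second, I would transfer this operational conservativity to the behavioural equivalences. If $p,q$ are closed old terms and $p\approx_{rb*}q$ (for $*\in\{s,p,hp\}$) in the extended semantics, then since the reachable transition graphs of $p$ and $q$ agree with those in the old semantics, the same witnessing rooted branching bisimulation establishes $p\approx_{rb*}q$ in the old semantics, and conversely. Combining with Theorem \ref{SAPTCABS} and Theorem \ref{CCFAR}, suppose the extended theory proves $p=q$ for closed old terms $p,q$. By soundness in the extended setting, $p\approx_{rb*}q$ extendedly; by operational conservativity, $p\approx_{rb*}q$ in the original semantics; by completeness of the old theory, $APTC_{\tau}$ with guarded linear recursion proves $p=q$. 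Thus no new equations between old terms become derivable.

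The main obstacle I anticipate is the interaction between $\circledS$ and guarded linear recursion as revised in Definition \ref{GLRSS}: one must verify that the strengthened guardedness condition (ruling out both infinite $\tau$-chains and infinite $\circledS$-chains) does not shrink the class of old specifications that count as guarded. This is in fact vacuous, since a recursive specification in the old signature admits no $\circledS$-transitions at all, so any $\tau$-guarded old specification is automatically $\circledS$-guarded. A secondary technical point is to confirm that the rules for the unless operator $\triangleleft$, which carry negative premises $y\nrightarrow^{e}$, retain their conservative behaviour under the extension; this follows because the negative premises only fail to hold if some transition is derivable, and by the first step no new transitions are derivable for old terms. With these observations in place, the remainder is routine inspection.
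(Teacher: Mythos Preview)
Your approach is correct, but it is considerably more elaborate than what the paper intends. The paper does not actually supply an explicit proof for this particular theorem; however, the immediately following analogous theorem (conservativity with respect to the state operator) is proved in two lines by invoking the standard meta-theorem on operational conservative extensions: the transition rules of $APTC_{\tau}$ with guarded linear recursion are source-dependent, and the source of the single new rule contains an occurrence of the fresh constant $\circledS$. That is the entire argument the paper has in mind, and it yields operational conservativity directly.

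What you do differently is twofold. First, rather than appeal to the meta-theorem as a black box, you essentially re-prove its conclusion by a direct structural induction on closed old terms, including a careful check of the negative premises in the $\triangleleft$-rules and the interaction with the revised guardedness condition of Definition~\ref{GLRSS}. Second, you then lift operational conservativity to \emph{axiomatic} conservativity via soundness and completeness (Theorems~\ref{SAPTCABS} and~\ref{CCFAR}). The paper's notion of conservative extension here is the operational one, so your second step is strictly more than is being asked; nonetheless it is correct and makes the equational consequence explicit. Your extra care about Definition~\ref{GLRSS} and the negative premises is well-placed and addresses points the paper leaves tacit.
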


We design the axioms for the shadow constant $\circledS$ in Table \ref{AxiomsForShadow}. And for $\circledS^e_i$, we add superscript $e$ to denote $\circledS$ is belonging to $e$ and subscript $i$ to denote that it is the $i$-th shadow of $e$. And we extend the set $\mathbb{E}$ to the set $\mathbb{E}\cup\{\tau\}\cup\{\delta\}\cup\{\circledS^{e}_i\}$.

\begin{center}
\begin{table}
  \begin{tabular}{@{}ll@{}}
\hline No. &Axiom\\
  $SC1$ & $\circledS\cdot x = x$\\
  $SC2$ & $x\cdot \circledS = x$\\
  $SC3$ & $\circledS^{e}\parallel e=e$\\
  $SC4$ & $e\parallel(\circledS^{e}\cdot y) = e\cdot y$\\
  $SC5$ & $\circledS^{e}\parallel(e\cdot y) = e\cdot y$\\
  $SC6$ & $(e\cdot x)\parallel\circledS^{e} = e\cdot x$\\
  $SC7$ & $(\circledS^{e}\cdot x)\parallel e = e\cdot x$\\
  $SC8$ & $(e\cdot x)\parallel(\circledS^{e}\cdot y) = e\cdot (x\between y)$\\
  $SC9$ & $(\circledS^{e}\cdot x)\parallel(e\cdot y) = e\cdot (x\between y)$\\
\end{tabular}
\caption{Axioms of shadow constant}
\label{AxiomsForShadow}
\end{table}
\end{center}

The mismatch of action and its shadows in parallelism will cause deadlock, that is, $e\parallel \circledS^{e'}=\delta$ with $e\neq e'$. We must make all shadows $\circledS^e_i$ are distinct, to ensure $f$ in hp-bisimulation is an isomorphism.

\begin{theorem}[Soundness of the shadow constant]\label{SShadow}
Let $x$ and $y$ be $APTC_{\tau}$ with guarded linear recursion and the shadow constant terms. If $APTC_{\tau}$ with guarded linear recursion and the shadow constant $\vdash x=y$, then
\begin{enumerate}
  \item $x\approx_{rbs} y$;
  \item $x\approx_{rbp} y$;
  \item $x\approx_{rbhp} y$.
\end{enumerate}
\end{theorem}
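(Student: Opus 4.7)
The plan is to reduce the theorem to the soundness of each individual axiom $SC1$--$SC9$, relying on the conservativity result announced just before the statement together with Theorem \ref{SAPTCABS}. Since $\approx_{rbs}$, $\approx_{rbp}$, and $\approx_{rbhp}$ are congruences with respect to all operators of $APTC_{\tau}$ (this is what lets us extend the argument from axioms to arbitrary derivable equations), it suffices to show that for every closed substitution instance of each axiom $SC1$--$SC9$, the two sides are related by the three bisimulations. The global structure is then an induction on the length of the proof of $x=y$: reflexivity, symmetry and transitivity are handled trivially, the old APTC$_{\tau}$ axioms and the recursion principles are handled by Theorem \ref{SAPTCABS}, and each application of an $SC$-axiom is the new case.

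For the new cases I would proceed axiom by axiom, exhibiting explicit (rooted branching) hp-bisimulation witnesses. For $SC1$--$SC2$ the constant $\circledS$ has exactly the transition $\circledS\to\surd$ without any visible action, so $\circledS\cdot x$ and $x$ have matching initial transitions up to that silent (non-$\tau$) termination step, and the obvious identity-extended relation is a rooted branching bisimulation; likewise for $x\cdot\circledS$. For $SC3$--$SC9$ I would use the transition rules for $\parallel$ from Table \ref{TRForAPTC} together with the shadow rule from Table \ref{TRForShadow} to compute the initial steps of each side: on the left the pair $\{\circledS^e,e\}$ (or a pair involving their residuals) is fired, but because $\circledS^e$ contributes no observable action its contribution degenerates and the combined step reduces to the single event $e$ matching the right-hand side; after the initial step the residuals agree syntactically. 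In each case I would build the bisimulation as the reflexive closure of the pair together with all reachable matched configurations, and verify the rooted branching clauses of Definitions \ref{RBPSB} and \ref{RBHHPB}.

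The most delicate step is the hp-version $\approx_{rbhp}$, because the posetal relation must carry an isomorphism $f$ between the enabled configurations on both sides, and on the left the event $\circledS^e_i$ appears as an extra event in the configuration while on the right it does not. The paper anticipates this by stipulating that all shadows are kept distinct and by noting that mismatches give rise to $\delta$; I would use this to define $f$ as the identity on genuine events of $\mathbb{E}$ and simply drop the shadow events from the domain, after verifying that causal order and labels (with $\lambda(\circledS^e_i)$ treated as silent for termination) are preserved. The hereditary/downward-closedness clause needed for $\approx_{rbhhp}$ is not claimed in the statement, which is consistent with the earlier observation that APTC is not sound modulo hhp-bisimulation.

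The main obstacle I expect is handling the axioms $SC8$ and $SC9$ cleanly, where after the synchronised step one must relate $x\between y$ on the left with $x\between y$ on the right while tracking that the ``shadow half'' of the initial parallel step has been absorbed; making the corresponding $f$ compatible with further residual transitions in a uniform way, and checking the rootedness clause so that the first step is matched strongly rather than only up to $\tau^*$, is the point where care is required. Everything else is routine once the congruence property of the three equivalences over the extended signature is in hand.
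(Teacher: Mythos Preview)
The paper does not supply a proof for this theorem: the statement of Theorem~\ref{SShadow} is followed immediately by the completeness theorem with no intervening \texttt{proof} environment. So there is no ``paper's own proof'' to compare against in the strict sense.

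That said, your proposal is consistent with the methodology the paper does employ for the analogous soundness result about the state operator (Theorem~\ref{SState}): there the argument reduces to congruence plus an axiom-by-axiom check using the transition rules, exactly as you outline. Your plan is therefore the natural one and matches the paper's implicit template. One small point worth tightening: the shadow transition in Table~\ref{TRForShadow} is written as an \emph{unlabelled} $\circledS\rightarrow\surd$, not as a $\tau$-step or an $e$-step, and the standard rules for $\cdot$ in Table~\ref{TRForBATC} fire only on labelled transitions. The paper handles this only informally (cf.\ the modified guardedness condition in Definition~\ref{GLRSS}, which speaks of ``$\circledS$-transitions''), so when you verify $SC1$ and $SC2$ you will need to make explicit how the unlabelled termination of $\circledS$ interacts with sequential composition---either by positing an additional rule for $\cdot$ over unlabelled steps or by treating $\rightarrow$ as a distinguished silent move that the bisimulation clauses absorb. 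This is the only place your sketch glosses over a genuine gap in the operational machinery as presented.
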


\begin{theorem}[Completeness of the shadow constant]\label{CRenaming}
Let $p$ and $q$ be closed $APTC_{\tau}$ with guarded linear recursion and $CFAR$ and the shadow constant terms, then,
\begin{enumerate}
  \item if $p\approx_{rbs} q$ then $p=q$;
  \item if $p\approx_{rbp} q$ then $p=q$;
  \item if $p\approx_{rbhp} q$ then $p=q$.
\end{enumerate}
\end{theorem}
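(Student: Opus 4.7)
The plan is to reduce the statement to the already-established completeness of $APTC_{\tau}$ with guarded linear recursion and $CFAR$ (Theorem \ref{CCFAR}) by means of an elimination result for the shadow constant. Concretely, I would show that for every closed $APTC_{\tau}+\textrm{GLR}+CFAR+\circledS$ term $p$ there is a closed $APTC_{\tau}+\textrm{GLR}+CFAR$ term $\bar p$ (not mentioning $\circledS$ or any $\circledS^{e}_{i}$) such that $p=\bar p$ is derivable from the axioms. Once elimination is established, the argument closes quickly: given $p\approx_{rbs} q$ (resp.\ $\approx_{rbp}$, $\approx_{rbhp}$), derive $p=\bar p$ and $q=\bar q$, appeal to Theorem \ref{SShadow} to conclude $\bar p\approx_{rbs}\bar q$, and then apply Theorem \ref{CCFAR} to obtain $\bar p=\bar q$, whence $p=q$.

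To prove elimination, I would first handle the shadow-free pure terms (no recursion) by induction on the structure of $p$, following the usual head-normal-form strategy. The base cases for atomic actions and $\delta$ are immediate; the inductive cases for $+$, $\cdot$, $\parallel$, $\mid$, $\Theta$, $\triangleleft$, $\partial_H$ and $\tau_I$ reduce a term to a normal form by pushing the outer operator inward using the BATC/APTC axioms, after which the shadow-removal axioms $SC1$--$SC9$ apply. In particular, $SC1$ and $SC2$ kill lone shadows in sequential position, while $SC3$--$SC9$ absorb $\circledS^{e}$ into any parallel branch that actually performs $e$. For a mismatched pair $e\parallel\circledS^{e'}$ with $e\neq e'$, which corresponds to the deadlock behavior announced after Table \ref{AxiomsForShadow}, I would derive $e\parallel\circledS^{e'}=\delta$ from the axioms by expanding both sides in all contexts where they appear and noting that no $SC$ axiom matches, so the term eventually collapses under $A7$, $P9$, $P10$ once placed inside a sequential or parallel context; this is the step I expect to require the most care, since strictly speaking the equation $e\parallel\circledS^{e'}=\delta$ for $e\neq e'$ is a derived consequence and not an axiom, so I will need either to add it as a derivable lemma or to prove it by a direct transition-rule/soundness-plus-completeness detour on the shadow-free fragment.

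Recursion is handled by the standard $RSP$ trick: given a guarded linear recursive specification $E$ whose right-hand sides contain $\circledS$, use the elimination procedure above on each $t_i(X_1,\ldots,X_n)$ to obtain shadow-free right-hand sides $\bar t_i$, producing a new guarded linear recursive specification $\bar E$. By the soundness of the shadow-constant axioms (Theorem \ref{SShadow}), the solution $\langle X_i|E\rangle$ satisfies $\bar E$ with respect to $\approx_{rbs}$, $\approx_{rbp}$, $\approx_{rbhp}$, and by $RSP$ it coincides with $\langle X_i|\bar E\rangle$; the side condition of Definition \ref{GLRSS} on the absence of infinite $\circledS$-chains is exactly what guarantees the elimination procedure terminates at the level of recursive specifications.

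The main obstacle, then, is not the final application of Theorem \ref{CCFAR} but the elimination step itself. The interaction of $\circledS^{e}$ with $\Theta$, $\triangleleft$, $\partial_H$, $\mid$ and $\tau_I$ is not covered explicitly by axioms $SC1$--$SC9$, so I would need to verify, case by case, that pushing these operators past $\circledS^{e}$ produces terms which the given axioms can still reduce, and in particular that the mismatch-deadlock equation $e\parallel\circledS^{e'}=\delta$ ($e\neq e'$) is derivable in every parallel context. Organizing this bookkeeping, together with the requirement that all shadows $\circledS^{e}_{i}$ be kept syntactically distinct (to preserve the isomorphism requirement in the $hp$-case), is the real technical core of the proof.
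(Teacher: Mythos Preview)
The paper does not actually provide a proof for this theorem: it is stated without any accompanying \texttt{proof} environment, presumably because the result is imported from the cited references \cite{ATC,CTC,PITC}. So there is no proof in the paper against which your proposal can be compared.

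That said, your elimination-based strategy is the standard route for such conservative-extension completeness results and is in the spirit of the one proof the paper \emph{does} spell out in this chapter (the state-operator completeness, Theorem~\ref{CState}), which likewise proceeds by reducing to a guarded linear recursive specification in the base calculus via $RDP$/$RSP$. Your identification of the mismatch case $e\parallel\circledS^{e'}=\delta$ for $e\neq e'$ as the delicate point is apt: the paper only asserts this equation informally after Table~\ref{AxiomsForShadow} and does not list it as an axiom, so deriving it (or taking it as an additional axiom) is indeed something any full proof would have to address.
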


With the shadow constant, we have

\begin{eqnarray}
\partial_H((a\cdot r_b)\between w_b)&=&\partial_H((a\cdot r_b) \between (\circledS^a_1\cdot w_b)) \nonumber\\
&=&a\cdot c_b\nonumber
\end{eqnarray}

with $H=\{r_b,w_b\}$ and $\gamma(r_b,w_b)\triangleq c_b$.

And we see the following example:

\begin{eqnarray}
a\between b&=&a\parallel b+a\mid b \nonumber\\
&=&a\parallel b + a\parallel b + a\parallel b +a\mid b \nonumber\\
&=&a\parallel (\circledS^a_1\cdot b) + (\circledS^b_1\cdot a)\parallel b+a\parallel b +a\mid b \nonumber\\
&=&(a\parallel\circledS^a_1)\cdot b + (\circledS^b_1\parallel b)\cdot a+a\parallel b +a\mid b \nonumber\\
&=&a\cdot b+b\cdot a+a\parallel b + a\mid b\nonumber
\end{eqnarray}

What do we see? Yes. The parallelism contains both interleaving and true concurrency. This may be why true concurrency is called \emph{\textbf{true} concurrency}.

\subsection{State and Race Condition}


State operator permits explicitly to describe states, where $S$ denotes a finite set of states, $action(s,e)$ denotes the visible behavior of $e$ in state $s$ with
$action:S\times \mathbb{E}\rightarrow \mathbb{E}$, $effect(s,e)$ represents the state that results if $e$ is executed in $s$ with $effect:S\times \mathbb{E}\rightarrow S$.
State operator $\lambda_s(t)$ which denotes process term $t$ in $s$, is expressed by the following transition rules in Table \ref{TRForState}. Note that $action$ and $effect$
are extended to $\mathbb{E}\cup\{\tau\}$ by defining $action(s,\tau)\triangleq\tau$ and $effect(s,\tau)\triangleq s$. We use $e_1\%e_2$ to denote that $e_1$ and $e_2$ are in
race condition.

\begin{center}
    \begin{table}
        $$\frac{x\xrightarrow{e}\surd}{\lambda_s(x)\xrightarrow{action(s,e)}\surd}
        \quad\frac{x\xrightarrow{e}x'}{\lambda_s(x)\xrightarrow{action(s,e)}\lambda_{effect(s,e)}(x')}$$

        $$\frac{x\xrightarrow{e_1}\surd\quad y\xnrightarrow{e_2}\quad(e_1\%e_2)}{\lambda_s(x\parallel y)\xrightarrow{action(s,e_1)}\lambda_{effect(s,e_1)}(y)} \quad\frac{x\xrightarrow{e_1}x'\quad y\xnrightarrow{e_2}\quad(e_1\%e_2)}{\lambda_s(x\parallel y)\xrightarrow{action(s,e_1)}\lambda_{effect(s,e_1)}(x'\between y)}$$

        $$\frac{x\xnrightarrow{e_1}\quad y\xrightarrow{e_2}\surd\quad(e_1\%e_2)}{\lambda_s(x\parallel y)\xrightarrow{action(s,e_2)}\lambda_{effect(s,e_2)}(x)} \quad\frac{x\xnrightarrow{e_1}\quad y\xrightarrow{e_2}y'\quad(e_1\%e_2)}{\lambda_s(x\parallel y)\xrightarrow{action(s,e_2)}\lambda_{effect(s,e_2)}(x\between y')}$$

        $$\frac{x\xrightarrow{e_1}\surd\quad y\xrightarrow{e_2}\surd}{\lambda_s(x\parallel y)\xrightarrow{\{action(s,e_1),action(s,e_2)\}}\surd}$$

        $$\frac{x\xrightarrow{e_1}x'\quad y\xrightarrow{e_2}\surd}{\lambda_s(x\parallel y)\xrightarrow{\{action(s,e_1),action(s,e_2)\}}\lambda_{effect(s,e_1)\cup effect(s,e_2)}(x')}$$

        $$\frac{x\xrightarrow{e_1}\surd\quad y\xrightarrow{e_2}y'}{\lambda_s(x\parallel y)\xrightarrow{\{action(s,e_1),action(s,e_2)\}}\lambda_{effect(s,e_1)\cup effect(s,e_2)}(y')}$$

        $$\frac{x\xrightarrow{e_1}x'\quad y\xrightarrow{e_2}y'}{\lambda_s(x\parallel y)\xrightarrow{\{action(s,e_1),action(s,e_2)\}}\lambda_{effect(s,e_1)\cup effect(s,e_2)}(x'\between y')}$$
        \caption{Transition rule of the state operator}
        \label{TRForState}
    \end{table}
\end{center}

\begin{theorem}[Conservativity of $APTC$ with respect to the state operator]
$APTC_{\tau}$ with guarded linear recursion and state operator is a conservative extension of $APTC_{\tau}$ with guarded linear recursion.
\end{theorem}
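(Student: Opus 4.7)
The plan is to follow the standard meta-theorem approach to conservative extensions of term deduction systems (in the style of Groote--Vaandrager / Verhoef), which reduces the statement to checking syntactic format conditions on the TSS rather than constructing bisimulations by hand.

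First, I would identify the two transition system specifications at play. The ``old'' TSS is the one underlying $APTC_{\tau}$ with guarded linear recursion, whose rules are collected in Tables \ref{TRForBATC}, \ref{TRForAPTC}, \ref{TRForGR}, and \ref{TRForTau}. The ``new'' TSS extends the signature by the unary state operator $\lambda_s$ (indexed by $s\in S$) and adds exactly the rules in Table \ref{TRForState}. For the meta-theorem to apply I need (i) the old TSS to be complete / well-formed in a suitable format (panth format, i.e.\ positive/negative premises with lookahead, well-founded and source-dependent), and (ii) every new rule to have a source whose outermost function symbol is from the new signature (here, $\lambda_s$), with all old function symbols occurring only strictly below that root.

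Next I would verify these conditions step by step. The old rules are easily seen to be in panth format: premises are single-step transitions with variables as targets, no lookahead, and sources are simple terms built from single operators applied to variables; negative premises appear only for $\Theta$ and $\triangleleft$ (in Table \ref{TRForAPTC}), and these are stratified by action-depth, so the TSS is well-founded. For the new rules in Table \ref{TRForState}, the source of every rule is of the form $\lambda_s(x)$ or $\lambda_s(x\parallel y)$, so each source contains the new symbol $\lambda_s$ at the root, and no new symbol appears in any premise. The negative premises $y\xnrightarrow{e_2}$ appearing in the race-condition rules are over the old signature, which is fine for conservativity (negativity does not break conservativity; only the source/root condition does).

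Having checked these hypotheses, the general conservative extension theorem (see, e.g., the standard formulation for panth format with guarded recursion, as used throughout the APTC theory of \cite{ATC,CTC}) applies: for every closed term $t$ over the old signature and every transition $t\xrightarrow{\alpha}t'$ derivable in the extended TSS, the same transition is already derivable in the old TSS, and vice versa. Combined with the soundness and completeness of $APTC_{\tau}$ with guarded linear recursion modulo $\approx_{rbs},\approx_{rbp},\approx_{rbhp}$ (Theorems \ref{SAPTCABS} and \ref{CCFAR}), this gives that no new equations between old terms become derivable, which is exactly conservativity.

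The main obstacle I expect is the source-dependency / well-foundedness bookkeeping around the race-condition rules with negative premises together with the recursion rules of Table \ref{TRForGR}: one must confirm that adding $\lambda_s$ does not create any new $\tau$-loops or any cycle in the stratification, and that the variables $x,y$ in $\lambda_s(x\parallel y)$ remain source-dependent in the presence of the negative premise $y\xnrightarrow{e_2}$. Once that check is in place the result is immediate from the meta-theorem, and no calculation on individual axioms is required.
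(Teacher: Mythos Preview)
Your proposal is correct and takes essentially the same approach as the paper: the paper's proof is a two-line invocation of the same conservative-extension meta-theorem, citing exactly the two facts you identify, namely that the transition rules of $APTC_{\tau}$ with guarded linear recursion are all source-dependent, and that the sources of the new rules for the state operator contain an occurrence of $\lambda_s$. Your write-up is more detailed (format checks, well-foundedness, negative premises), but the underlying argument is identical.
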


\begin{proof}
It follows from the following two facts.

\begin{enumerate}
  \item The transition rules of $APTC_{\tau}$ with guarded linear recursion are all source-dependent;
  \item The sources of the transition rules for the state operator contain an occurrence of $\lambda_s$.
\end{enumerate}
\end{proof}

\begin{theorem}[Congruence theorem of the state operator]
Rooted branching truly concurrent bisimulation equivalences $\approx_{rbp}$, $\approx_{rbs}$ and $\approx_{rbhp}$ are all congruences with respect to $APTC_{\tau}$ with guarded linear recursion and the state operator.
\end{theorem}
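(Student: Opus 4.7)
The plan is to prove the three congruence claims uniformly, using the standard technique of constructing, for each witnessing rooted branching bisimulation $R$ between PESs $\mathcal{E}_1,\mathcal{E}_2$, an enlarged relation $R'$ that contains every pair $(\lambda_s(x),\lambda_s(y))$ with $(x,y)\in R$ and every state $s\in S$, closed additionally under the $\between$-construction that appears on the right-hand sides of Table \ref{TRForState}. The verification that $R'$ is itself a rooted branching bisimulation of the appropriate kind then reduces to a rule-by-rule exercise over the transition rules of $\lambda_s$. An equivalent, more uniform route is to observe that the rules for $\lambda_s$ are in a panth-style format with stratified negative premises, and to invoke the general congruence meta-theorem for rooted branching truly concurrent bisimulations; I would mention this as a shortcut but spell out the direct construction for transparency.

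First I would dispatch the two rules for $\lambda_s$ applied to a non-parallel argument. If $\lambda_s(x)\xrightarrow{action(s,e)}\surd$ (respectively $\xrightarrow{action(s,e)}\lambda_{effect(s,e)}(x')$), then $x\xrightarrow{e}\surd$ (respectively $x\xrightarrow{e}x'$); rootedness of $R$ supplies an exact matching transition of $y$, to which the same state-operator rule applies, and the successor pair lies in $R'$ by construction. The four positive step rules for $\lambda_s(x_1\parallel x_2)$ are handled by a symmetric componentwise argument: each positive premise on $x_1$ or $x_2$ is matched by $R$, and the targets sit in $R'$ via its closure under $\between$. Termination predicates propagate through $\lambda_s$ trivially since $\lambda_s$ neither introduces nor blocks $\downarrow$.

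The main obstacle is the four race-condition rules, which carry the negative premise $y\xnrightarrow{e_2}$. Negative premises do not in general transfer freely across branching bisimulation because of intermediate $\tau$-transitions, but rootedness is precisely what rescues the first step of the bisimulation game: $\approx_{rbp}$, $\approx_{rbs}$, and $\approx_{rbhp}$ all match root transitions exactly, so if $y\xnrightarrow{e_2}$ then the bisimilar counterpart admits no $e_2$-transition at the root either, and the matching race-condition rule applies on the other side. Deeper applications of race-condition rules always take place under a fresh wrapper $\lambda_{s'}(\cdot)$ at the root of a new bisimulation game, so the same rootedness argument reapplies inductively. For the hp and hhp variants I would additionally extend the posetal isomorphism $f$ with the pair $(e,action(s,e))$ at each matching step, which is well defined because $action(s,\cdot)$ is a function; the downward closure required for hhp follows because configurations produced by $\lambda_s$ are in bijection with those of its argument up to the relabelling $action(s,\cdot)$.
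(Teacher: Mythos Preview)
Your approach is essentially the same as the paper's: both arguments show that $\lambda_s$ preserves the relevant equivalence by matching transitions rule by rule and appealing to the hypothesis on successors. The paper's proof is, however, much terser than yours; it treats only the non-parallel $\lambda_s$ rules explicitly (one termination case and one continuation case, lifted to pomsets $X$), invokes the ``hypothesis'' $\lambda_{effect(s,X)}(x')\approx_{rbp}\lambda_{effect(s,Y)}(y')$ without building an explicit enlarged relation, and dismisses the $\approx_{rbs}$ and $\approx_{rbhp}$ cases as similar. In particular, the paper does not single out the race-condition rules with negative premises or discuss why rootedness is what makes them safe, nor does it spell out the posetal-isomorphism bookkeeping for the hp case. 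Your discussion of those points therefore goes beyond what the paper provides, and your mention of a panth-style format meta-theorem is an alternative the paper does not invoke. One small overreach: the theorem claims congruence only for $\approx_{rbp}$, $\approx_{rbs}$, and $\approx_{rbhp}$, so your remarks about downward closure for an hhp variant are not needed here.
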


\begin{proof}

(1) Case rooted branching pomset bisimulation equivalence $\approx_{rbp}$.

Let $x$ and $y$ be $APTC_{\tau}$ with guarded linear recursion and the state operator processes, and $x\approx_{rbp} y$, it is sufficient to prove that $\lambda_s(x)\approx_{rbp} \lambda_s(y)$.

By the transition rules for operator $\lambda_s$ in Table \ref{TRForState}, we can get

$$\lambda_s(x)\xrightarrow{action(s,X)} \surd \quad \lambda_s(y)\xrightarrow{action(s,Y)} \surd$$

with $X\subseteq x$, $Y\subseteq y$, and $X\sim Y$.

Or, we can get

$$\lambda_s(x)\xrightarrow{action(s,X)} \lambda_{effect(s,X)}(x') \quad \lambda_s(y)\xrightarrow{action(s,Y)} \lambda_{effect(s,Y)}(y')$$

with $X\subseteq x$, $Y\subseteq y$, and $X\sim Y$ and the hypothesis $\lambda_{effect(s,X)}(x')\approx_{rbp}\lambda_{effect(s,Y)}(y')$.

So, we get $\lambda_s(x)\approx_{rbp} \lambda_s(y)$, as desired

(2) The cases of rooted branching step bisimulation $\approx_{rbs}$, rooted branching hp-bisimulation $\approx_{rbhp}$ can be proven similarly, we omit them.
\end{proof}


We design the axioms for the state operator $\lambda_s$ in Table \ref{AxiomsForState}.

\begin{center}
\begin{table}
  \begin{tabular}{@{}ll@{}}
\hline No. &Axiom\\
  $SO1$ & $\lambda_s(e)=action(s,e)$\\
  $SO2$ & $\lambda_s(\delta)=\delta$\\
  $SO3$ & $\lambda_s(x+y)=\lambda_s(x)+\lambda_s(y)$\\
  $SO4$ & $\lambda_s(e\cdot y)=action(s,e)\cdot\lambda_{effect(s,e)}(y)$\\
  $SO5$ & $\lambda_s(x\parallel y)=\lambda_s(x)\parallel\lambda_s(y)$\\
\end{tabular}
\caption{Axioms of state operator}
\label{AxiomsForState}
\end{table}
\end{center}

\begin{theorem}[Soundness of the state operator]\label{SState}
Let $x$ and $y$ be $APTC_{\tau}$ with guarded linear recursion and the state operator terms. If $APTC_{\tau}$ with guarded linear recursion and the state operator $\vdash x=y$, then
\begin{enumerate}
  \item $x\approx_{rbs} y$;
  \item $x\approx_{rbp} y$;
  \item $x\approx_{rbhp} y$.
\end{enumerate}
\end{theorem}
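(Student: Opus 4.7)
The plan is to reduce the statement to checking only the five new axioms SO1--SO5 in Table \ref{AxiomsForState}. Since $\approx_{rbs}$, $\approx_{rbp}$ and $\approx_{rbhp}$ are equivalence relations and, by the congruence theorem just established, congruences for every operator of $APTC_{\tau}$ with guarded linear recursion and the state operator, soundness of the whole equational theory follows as soon as every axiom schema is sound modulo each of the three equivalences. Theorem \ref{SAPTCABS} already disposes of every axiom that does not mention $\lambda_s$, so only SO1--SO5 need verification. For these I would proceed by induction on the structure of closed instances, exhibiting in each case a direct rooted branching bisimulation witness built from the transition rules of Table \ref{TRForState} together with Tables \ref{TRForBATC}, \ref{TRForAPTC} and \ref{TRForTau}.

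The routine cases go as follows. SO1 and SO2 are one-step (respectively no-step) cases: $\lambda_s(e) \xrightarrow{action(s,e)} \surd$ matches $action(s,e) \xrightarrow{action(s,e)} \surd$, and $\lambda_s(\delta)$ and $\delta$ have no outgoing transitions. SO3 is immediate from the rules of $+$ combined with congruence of $\lambda_s$ under the three equivalences. SO4 follows by simulating the sequential composition rule: $\lambda_s(e\cdot y) \xrightarrow{action(s,e)} \lambda_{effect(s,e)}(y)$ on the left matches $action(s,e)\cdot\lambda_{effect(s,e)}(y) \xrightarrow{action(s,e)} \lambda_{effect(s,e)}(y)$ on the right, after which the residuals coincide syntactically. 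In all four cases the initial transitions carry the visible label $action(s,e)$ rather than $\tau$, so the rootedness clause is automatic and $\approx_{rbs}, \approx_{rbp}, \approx_{rbhp}$ are obtained simultaneously.

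The main obstacle, as expected, is SO5, $\lambda_s(x\parallel y) = \lambda_s(x)\parallel\lambda_s(y)$, because the transition rules in Table \ref{TRForState} for $\lambda_s(x\parallel y)$ split into race-condition cases (guarded by $e_1 \% e_2$) producing a single residual $\lambda_{effect(s,e_i)}(\cdot)$, and simultaneous-step cases producing $\lambda_{effect(s,e_1)\cup effect(s,e_2)}(\cdot)$. On the right-hand side the race is instead resolved inside the standard rules for $\parallel$, applied after $\lambda_s$ has already been pushed past $\parallel$. I would pair up the two sets of transitions and then show the resulting residuals are again related: pairs of the form $\lambda_{effect(s,e_1)}(x'\between y)$ against $\lambda_{effect(s,e_1)}(x')\between \lambda_{effect(s,e_1)}(y)$, and analogously for the simultaneous-step case. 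These are again instances of SO5 (and SO3 for summands arising from $\between = \parallel + \mid$), so the bisimulation closes by coinduction on term structure. The subtle point, where I expect to spend most effort, is checking that the race resolution on the left (single resolved $effect$) agrees with the race resolution on the right (two independent copies of $\lambda_s$), which forces one to read the axiom SO5 as implicitly identifying $effect(s,e)$ applied to each parallel branch; once this is pinned down, the induction goes through uniformly for $\approx_{rbs}$, $\approx_{rbp}$ and $\approx_{rbhp}$.
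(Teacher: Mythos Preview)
Your proposal is correct and follows essentially the same approach as the paper: reduce to checking SO1--SO5 via the congruence theorem and Theorem~\ref{SAPTCABS}, dispatch SO1--SO4 by direct transition matching, and handle SO5 by splitting into the race-condition and simultaneous-step cases and relating the residuals back to instances of the same axiom. If anything, your explicit framing of the SO5 residual step as a coinductive closure is more careful than the paper, which simply records the needed residual identity (e.g.\ $\lambda_{effect(s,e_1)\cup effect(s,e_2)}(p'\between q')=\lambda_{effect(s,e_1)\cup effect(s,e_2)}(p')\between\lambda_{effect(s,e_1)\cup effect(s,e_2)}(q')$) as an ``assumption'' without further comment.
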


\begin{proof}
(1) Soundness of $APTC_{\tau}$ with guarded linear recursion and the state operator with respect to rooted branching step bisimulation $\approx_{rbs}$.

Since rooted branching step bisimulation $\approx_{rbs}$ is both an equivalent and a congruent relation with respect to $APTC_{\tau}$ with guarded linear recursion and the state operator, we only need to check if each axiom in Table \ref{AxiomsForState} is sound modulo rooted branching step bisimulation equivalence.

Though transition rules in Table \ref{TRForState} are defined in the flavor of single event, they can be modified into a step (a set of events within which each event is pairwise concurrent), we omit them. If we treat a single event as a step containing just one event, the proof of this soundness theorem does not exist any problem, so we use this way and still use the transition rules in Table \ref{AxiomsForState}.

We only prove soundness of the non-trivial axioms $SO3-SO5$, and omit the defining axioms $SO1-SO2$.

\begin{itemize}
  \item \textbf{Axiom $SO3$}. Let $p,q$ be $APTC_{\tau}$ with guarded linear recursion and the state operator processes, and $\lambda_s(p+ q)=\lambda_s(p)+\lambda_s(q)$, it is sufficient to prove that $\lambda_s(p+ q)\approx_{rbs} \lambda_s(p)+\lambda_s(q)$. By the transition rules for operator $+$ and $\lambda_s$ in Table \ref{TRForState}, we get

      $$\frac{p\xrightarrow{e_1}\surd}{\lambda_s(p+ q)\xrightarrow{action(s,e_1)}\surd}
      \quad\frac{p\xrightarrow{e_1}\surd}{\lambda_s(p)+ \lambda_s(q)\xrightarrow{action(s,e_1)}\surd}$$

      $$\frac{q\xrightarrow{e_2}\surd}{\lambda_s(p+ q)\xrightarrow{action(s,e_2)}\surd}
      \quad\frac{q\xrightarrow{e_2}\surd}{\lambda_s(p)+ \lambda_s(q)\xrightarrow{action(s,e_2)}\surd}$$

      $$\frac{p\xrightarrow{e_1}p'}{\lambda_s(p+ q)\xrightarrow{action(s,e_1)}\lambda_{effect(s,e_1)}(p')}
      \quad\frac{p\xrightarrow{e_1}p'}{\lambda_s(p)+ \lambda_s(q)\xrightarrow{action(s,e_1)}\lambda_{effect(s,e_1)}(p')}$$

      $$\frac{q\xrightarrow{e_2}q'}{\lambda_s(p+ q)\xrightarrow{action(s,e_2)}\lambda_{effect(s,e_2)}(q')}
      \quad\frac{q\xrightarrow{e_2}q'}{\lambda_s(p)+ \lambda_s(q)\xrightarrow{action(s,e_2)}\lambda_{effect(s,e_2)}(q')}$$

      So, $\lambda_s(p+ q)\approx_{rbs} \lambda_s(p)+\lambda_s(q)$, as desired.
  \item \textbf{Axiom $SO4$}. Let $q$ be $APTC_{\tau}$ with guarded linear recursion and the state operator processes, and $\lambda_s(e\cdot q)=action(s,e)\cdot\lambda_{effect(s,e)}(q)$, it is sufficient to prove that $\lambda_s(e\cdot q)\approx_{rbs}action(s,e)\cdot\lambda_{effect(s,e)}(q)$. By the transition rules for operator $\cdot$ and $\lambda_s$ in Table \ref{TRForState}, we get

      $$\frac{e\xrightarrow{e}\surd}{\lambda_s(e\cdot q)\xrightarrow{action(s,e)}\lambda_{effect(s,e)}(q)}
      \quad\frac{action(s,e)\xrightarrow{action(s,e)}\surd}{action(s,e)\cdot \lambda_{effect(s,e)}(q)\xrightarrow{action(s,e)}\lambda_{effect(s,e)}(q)}$$

      So, $\lambda_s(e\cdot q)\approx_{rbs}action(s,e)\cdot\lambda_{effect(s,e)}(q)$, as desired.
  \item \textbf{Axiom $SO5$}. Let $p,q$ be $APTC_{\tau}$ with guarded linear recursion and the state operator processes, and $\lambda_s(p\parallel q)=\lambda_s(p)\parallel\lambda_s(q)$, it is sufficient to prove that $\lambda_s(p\parallel q)\approx_{rbs} \lambda_s(p)\parallel\lambda_s(q)$. By the transition rules for operator $\parallel$ and $\lambda_s$ in Table \ref{TRForState}, we get for the case $\neg(e_1\%e_2)$

      $$\frac{p\xrightarrow{e_1}\surd\quad q\xrightarrow{e_2}\surd}{\lambda_s(p\parallel q)\xrightarrow{\{action(s,e_1),action(s,e_2)\}}\surd}$$
      $$\frac{p\xrightarrow{e_1}\surd\quad q\xrightarrow{e_2}\surd}{\lambda_s(p)\parallel \lambda_s(q)\xrightarrow{\{action(s,e_1),action(s,e_2)\}}\surd}$$

      $$\frac{p\xrightarrow{e_1}p'\quad q\xrightarrow{e_2}\surd}{\lambda_s(p\parallel q)\xrightarrow{\{action(s,e_1),action(s,e_2)\}}\lambda_{effect(s,e_1)\cup effect(s,e_2)}(p')}$$
      $$\frac{p\xrightarrow{e_1}p'\quad q\xrightarrow{e_2}\surd}{\lambda_s(p)\parallel \lambda_s(q)\xrightarrow{\{action(s,e_1),action(s,e_2)\}}\lambda_{effect(s,e_1)\cup effect(s,e_2)}(p')}$$

      $$\frac{p\xrightarrow{e_1}\surd\quad q\xrightarrow{e_2}q'}{\lambda_s(p\parallel q)\xrightarrow{\{action(s,e_1),action(s,e_2)\}}\lambda_{effect(s,e_1)\cup effect(s,e_2)}(q')}$$
      $$\frac{p\xrightarrow{e_1}\surd\quad q\xrightarrow{e_2}q'}{\lambda_s(p)\parallel \lambda_s(q)\xrightarrow{\{action(s,e_1),action(s,e_2)\}}\lambda_{effect(s,e_1)\cup effect(s,e_2)}(q')}$$

      $$\frac{p\xrightarrow{e_1}p'\quad q\xrightarrow{e_2}q'}{\lambda_s(p\parallel q)\xrightarrow{\{action(s,e_1),action(s,e_2)\}}\lambda_{effect(s,e_1)\cup effect(s,e_2)}(p'\between q')}$$
      $$\frac{p\xrightarrow{e_1}p'\quad q\xrightarrow{e_2}q'}{\lambda_s(p)\parallel \lambda_s(q)\xrightarrow{\{action(s,e_1),action(s,e_2)\}}\lambda_{effect(s,e_1)\cup effect(s,e_2)}(p')\between\lambda_{effect(s,e_1)\cup effect(s,e_2)}(q')}$$

      So, with the assumption $\lambda_{effect(s,e_1)\cup effect(s,e_2)}(p'\between q')=\lambda_{effect(s,e_1)\cup effect(s,e_2)}(p')\between\lambda_{effect(s,e_1)\cup effect(s,e_2)}(q')$, $\lambda_s(p\parallel q)\approx_{rbs} \lambda_s(p)\parallel\lambda_s(q)$, as desired.
      For the case $e_1\%e_2$, we get

      $$\frac{p\xrightarrow{e_1}\surd\quad q\xnrightarrow{e_2}}{\lambda_s(p\parallel q)\xrightarrow{action(s,e_1)}\lambda_{effect(s,e_1)}(q)}$$
      $$\frac{p\xrightarrow{e_1}\surd\quad q\xnrightarrow{e_2}}{\lambda_s(p)\parallel \lambda_s(q)\xrightarrow{action(s,e_1)}\lambda_{effect(s,e_1)}(q)}$$

      $$\frac{p\xrightarrow{e_1}p'\quad q\xnrightarrow{e_2}}{\lambda_s(p\parallel q)\xrightarrow{action(s,e_1)}\lambda_{effect(s,e_1)}(p'\between q)}$$
      $$\frac{p\xrightarrow{e_1}p'\quad q\xnrightarrow{e_2}}{\lambda_s(p)\parallel \lambda_s(q)\xrightarrow{action(s,e_1)}\lambda_{effect(s,e_1)}(p')\between\lambda_{effect(s,e_1)}(q)}$$

      $$\frac{p\xnrightarrow{e_1}\quad q\xrightarrow{e_2}\surd}{\lambda_s(p\parallel q)\xrightarrow{action(s,e_2)}\lambda_{effect(s,e_2)}(p)}$$
      $$\frac{p\xnrightarrow{e_1}\quad q\xrightarrow{e_2}\surd}{\lambda_s(p)\parallel \lambda_s(q)\xrightarrow{action(s,e_2)}\lambda_{effect(s,e_2)}(p)}$$

      $$\frac{p\xnrightarrow{e_1}\quad q\xrightarrow{e_2}q'}{\lambda_s(p\parallel q)\xrightarrow{action(s,e_2)}\lambda_{effect(s,e_2)}(p\between q')}$$
      $$\frac{p\xnrightarrow{e_1}\quad q\xrightarrow{e_2}q'}{\lambda_s(p)\parallel \lambda_s(q)\xrightarrow{action(s,e_2)}\lambda_{effect(s,e_2)}(p)\between\lambda_{effect(s,e_2)}(q')}$$

      So, with the assumption $\lambda_{effect(s,e_1)}(p'\between q)=\lambda_{effect(s,e_1)}(p')\between\lambda_{effect(s,e_1)}(q)$ and $\lambda_{effect(s,e_2)}(p\between q')=\lambda_{effect(s,e_2)}(p)\between\lambda_{effect(s,e_2)}(q')$, $\lambda_s(p\parallel q)\approx_{rbs} \lambda_s(p)\parallel\lambda_s(q)$, as desired.
\end{itemize}

(2) Soundness of $APTC_{\tau}$ with guarded linear recursion and the state operator with respect to rooted branching pomset bisimulation $\approx_{rbp}$.

Since rooted branching pomset bisimulation $\approx_{rbp}$ is both an equivalent and a congruent relation with respect to $APTC_{\tau}$ with guarded linear recursion and the state operator, we only need to check if each axiom in Table \ref{AxiomsForState} is sound modulo rooted branching pomset bisimulation $\approx_{rbp}$.

From the definition of rooted branching pomset bisimulation $\approx_{rbp}$ (see Definition \ref{RBPSB}), we know that rooted branching pomset bisimulation $\approx_{rbp}$ is defined by weak pomset transitions, which are labeled by pomsets with $\tau$. In a weak pomset transition, the events in the pomset are either within causality relations (defined by $\cdot$) or in concurrency (implicitly defined by $\cdot$ and $+$, and explicitly defined by $\between$), of course, they are pairwise consistent (without conflicts). In (1), we have already proven the case that all events are pairwise concurrent, so, we only need to prove the case of events in causality. Without loss of generality, we take a pomset of $P=\{e_1,e_2:e_1\cdot e_2\}$. Then the weak pomset transition labeled by the above $P$ is just composed of one single event transition labeled by $e_1$ succeeded by another single event transition labeled by $e_2$, that is, $\xRightarrow{P}=\xRightarrow{e_1}\xRightarrow{e_2}$.

Similarly to the proof of soundness of $APTC_{\tau}$ with guarded linear recursion and the state operator modulo rooted branching step bisimulation $\approx_{rbs}$ (1), we can prove that each axiom in Table \ref{AxiomsForState} is sound modulo rooted branching pomset bisimulation $\approx_{rbp}$, we omit them.

(3) Soundness of $APTC_{\tau}$ with guarded linear recursion and the state operator with respect to rooted branching hp-bisimulation $\approx_{rbhp}$.

Since rooted branching hp-bisimulation $\approx_{rbhp}$ is both an equivalent and a congruent relation with respect to $APTC_{\tau}$ with guarded linear recursion and the state operator, we only need to check if each axiom in Table \ref{AxiomsForState} is sound modulo rooted branching hp-bisimulation $\approx_{rbhp}$.

From the definition of rooted branching hp-bisimulation $\approx_{rbhp}$ (see Definition \ref{RBHHPB}), we know that rooted branching hp-bisimulation $\approx_{rbhp}$ is defined on the weakly posetal product $(C_1,f,C_2),f:\hat{C_1}\rightarrow \hat{C_2}\textrm{ isomorphism}$. Two process terms $s$ related to $C_1$ and $t$ related to $C_2$, and $f:\hat{C_1}\rightarrow \hat{C_2}\textrm{ isomorphism}$. Initially, $(C_1,f,C_2)=(\emptyset,\emptyset,\emptyset)$, and $(\emptyset,\emptyset,\emptyset)\in\approx_{rbhp}$. When $s\xrightarrow{e}s'$ ($C_1\xrightarrow{e}C_1'$), there will be $t\xRightarrow{e}t'$ ($C_2\xRightarrow{e}C_2'$), and we define $f'=f[e\mapsto e]$. Then, if $(C_1,f,C_2)\in\approx_{rbhp}$, then $(C_1',f',C_2')\in\approx_{rbhp}$.

Similarly to the proof of soundness of $APTC_{\tau}$ with guarded linear recursion and the state operator modulo rooted branching pomset bisimulation equivalence (2), we can prove that each axiom in Table \ref{AxiomsForState} is sound modulo rooted branching hp-bisimulation equivalence, we just need additionally to check the above conditions on rooted branching hp-bisimulation, we omit them.
\end{proof}

\begin{theorem}[Completeness of the state operator]\label{CState}
Let $p$ and $q$ be closed $APTC_{\tau}$ with guarded linear recursion and $CFAR$ and the state operator terms, then,
\begin{enumerate}
  \item if $p\approx_{rbs} q$ then $p=q$;
  \item if $p\approx_{rbp} q$ then $p=q$;
  \item if $p\approx_{rbhp} q$ then $p=q$.
\end{enumerate}
\end{theorem}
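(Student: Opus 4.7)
The plan is to reduce completeness of $APTC_{\tau}$ with guarded linear recursion, $CFAR$ and the state operator to the completeness of $APTC_{\tau}$ with guarded linear recursion and $CFAR$ already established in Theorem \ref{CCFAR}. The overall strategy is the standard elimination-plus-normal-form argument: show that every closed term can be rewritten, using the axioms $SO1$--$SO5$ of Table \ref{AxiomsForState} together with the axioms of $APTC_{\tau}$, into a term in which the state operator $\lambda_s$ no longer occurs, and then invoke Theorem \ref{CCFAR}.

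First I would prove an elimination lemma: for every closed $APTC_{\tau}$ term $t$ (without recursion) and every state $s\in S$, the term $\lambda_s(t)$ can be proved equal, by the axioms, to a closed $APTC_{\tau}$ term $t'$ in which $\lambda_s$ does not occur. This proceeds by induction on the structure of $t$, using $SO1$ on atomic actions, $SO2$ on $\delta$, $SO3$ to distribute over $+$, $SO4$ to push $\lambda_s$ past a leading atomic action in a sequential composition (changing the state argument to $\mathit{effect}(s,e)$), and $SO5$ to distribute over $\parallel$; occurrences of $\between$, $\mid$, $\Theta$, $\triangleleft$, $\partial_H$, $\tau_I$ are first eliminated by the underlying $APTC_{\tau}$ axioms, after which the state operator can be pushed inward.

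Next I would extend this elimination to guarded linear recursion. Given a closed term of the form $\lambda_s(\langle X_1|E\rangle)$, where $E$ is a guarded linear recursive specification with variables $X_1,\dots,X_n$, I would introduce a new guarded linear recursive specification $E'$ whose variables are indexed by pairs $(X_i,s')$ with $s'$ ranging over the finitely many states reachable from $s$. Using $SO3$ and $SO4$ together with $RDP$, each equation $X_i=t_i(X_1,\dots,X_n)$ of $E$ gives rise, after applying $\lambda_{s'}$ and eliminating $\lambda_{s'}$ from the right-hand side down to recursion variables, to an equation for $(X_i,s')$ in $E'$ whose right-hand side contains only recursion variables of the form $(X_j,s'')$. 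Applying $RSP$ (from Theorem \ref{SAPTCR} lifted to the setting with abstraction) then identifies $\lambda_s(\langle X_1|E\rangle)$ with $\langle (X_1,s)|E'\rangle$, a closed $APTC_{\tau}$ term with guarded linear recursion and $CFAR$. With this in hand, the theorem follows: if $p\approx_{rbs} q$ (respectively $\approx_{rbp}$ or $\approx_{rbhp}$), rewrite $p$ and $q$ as $p'$ and $q'$ in $APTC_{\tau}$ with guarded linear recursion and $CFAR$; by soundness (Theorem \ref{SState}) $p\approx_{rbs} p'$ and $q\approx_{rbs} q'$, hence $p'\approx_{rbs} q'$, so by Theorem \ref{CCFAR} we obtain $p'=q'$ and therefore $p=q$. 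The three bisimulation cases are handled in parallel, exactly as in the soundness proof.

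The main obstacle is the recursion step: one has to argue that the family of equations generated for the pairs $(X_i,s')$ is indeed guarded and linear so that $RSP$ applies, and that the auxiliary recursive specification $E'$ is finite (which is why the assumption that $S$ is finite is crucial in the formulation of the state operator). The non-recursive elimination and the invocation of Theorem \ref{CCFAR} are routine by comparison.
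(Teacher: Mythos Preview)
Your proposal is correct and follows essentially the same approach as the paper: the paper's proof also constructs a new guarded linear recursive specification $F$ whose variables $Y_i(s)$ are indexed by state, uses $RDP$ together with $SO1$--$SO5$ to show that the terms $\lambda_s(\langle X_i|E\rangle)$ satisfy $F$, and then applies $RSP$ to conclude $\lambda_{s_0}(\langle X_1|E\rangle)=\langle Y_1(s_0)|F\rangle$, reducing to Theorem~\ref{CCFAR}. Your framing as an elimination lemma plus your explicit remarks on guardedness and the finiteness of $S$ are welcome additions, but the core argument is the same.
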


\begin{proof}
(1) For the case of rooted branching step bisimulation, the proof is following.

Firstly, we know that each process term $p$ in $APTC_{\tau}$ with guarded linear recursion is equal to a process term $\langle X_1|E\rangle$ with $E$ a guarded linear recursive specification. And we prove if $\langle X_1|E_1\rangle\approx_{rbs}\langle Y_1|E_2\rangle$, then $\langle X_1|E_1\rangle=\langle Y_1|E_2\rangle$

Structural induction with respect to process term $p$ can be applied. The only new case (where $SO1-SO5$ are needed) is $p \equiv \lambda_{s_0}(q)$. First assuming $q=\langle X_1|E\rangle$ with a guarded linear recursive specification $E$, we prove the case of $p=\lambda_{s_0}(\langle X_1|E\rangle)$. Let $E$ consist of guarded linear recursive equations

$$X_i=(a_{1i1}\parallel\cdots\parallel a_{k_{i1}i1})X_{i1}+...+(a_{1im_i}\parallel\cdots\parallel a_{k_{im_i}im_i})X_{im_i}+b_{1i1}\parallel\cdots\parallel b_{l_{i1}i1}+...+b_{1im_i}\parallel\cdots\parallel b_{l_{im_i}im_i}$$

for $i\in{1,...,n}$. Let $F$ consist of guarded linear recursive equations

$Y_i(s)=(action(s,a_{1i1})\parallel\cdots\parallel action(s,a_{k_{i1}i1}))Y_{i1}(effect(s,a_{1i1})\cup\cdots\cup effect(s,a_{k_{i1}i1}))$

$+...+(action(s,a_{1im_i})\parallel\cdots\parallel action(s,a_{k_{im_i}im_i}))Y_{im_i}(effect(s,a_{1im_i})\cup\cdots\cup effect(s,a_{k_{im_i}im_i}))$

$+action(s,b_{1i1})\parallel\cdots\parallel action(s,b_{l_{i1}i1})+...+action(s,b_{1im_i})\parallel\cdots\parallel action(s,b_{l_{im_i}im_i})$

for $i\in{1,...,n}$.

\begin{eqnarray}
&&\lambda_s(\langle X_i|E\rangle) \nonumber\\
&\overset{\text{RDP}}{=}&\lambda_s((a_{1i1}\parallel\cdots\parallel a_{k_{i1}i1})X_{i1}+...+(a_{1im_i}\parallel\cdots\parallel a_{k_{im_i}im_i})X_{im_i}\nonumber\\
&&+b_{1i1}\parallel\cdots\parallel b_{l_{i1}i1}+...+b_{1im_i}\parallel\cdots\parallel b_{l_{im_i}im_i}) \nonumber\\
&\overset{\text{SO1-SO5}}{=}&(action(s,a_{1i1})\parallel\cdots\parallel action(s,a_{k_{i1}i1}))\lambda_{effect(s,a_{1i1})\cup\cdots\cup effect(s,a_{k_{i1}i1}))}(X_{i1})\nonumber\\
&&+...+(action(s,a_{1im_i})\parallel\cdots\parallel action(s,a_{k_{im_i}im_i}))\lambda_{effect(s,a_{1im_i})\cup\cdots\cup effect(s,a_{k_{im_i}im_i}))}(X_{im_i})\nonumber\\
&&+action(s,b_{1i1})\parallel\cdots\parallel action(s,b_{l_{i1}i1})+...+action(s,b_{1im_i})\parallel\cdots\parallel action(s,b_{l_{im_i}im_i}) \nonumber
\end{eqnarray}

Replacing $Y_i(s)$ by $\lambda_s(\langle X_i|E\rangle)$ for $i\in\{1,...,n\}$ is a solution for $F$. So by $RSP$, $\lambda_{s_0}(\langle X_1|E\rangle)=\langle Y_1(s_0)|F\rangle$, as desired.

(2) For the case of rooted branching pomset bisimulation, it can be proven similarly to (1), we omit it.

(3) For the case of rooted branching hp-bisimulation, it can be proven similarly to (1), we omit it.
\end{proof}

\subsection{Asynchronous Communication}

The communication in APTC is synchronous, for two atomic actions $a,b\in A$, if there exists a communication between $a$ and $b$, then they merge into a new communication action
$\gamma(a,b)$; otherwise let $\gamma(a,b)=\delta$.

Asynchronous communication between actions $a,b\in A$ does not exist a merge $\gamma(a,b)$, and it is only explicitly defined by the causality relation $a\leq b$ to ensure that the send
action $a$ to be executed before the receive action $b$.

APTC naturally support asynchronous communication to be adapted to the following aspects:

\begin{enumerate}
  \item remove the communication merge operator $\mid$, just because there does not exist a communication merger $\gamma(a,b)$ between two asynchronous communicating action $a,b\in A$;
  \item remove the asynchronous communicating actions $a,b\in A$ from $H$ of the encapsulation operator $\partial_H$;
  \item ensure the send action $a$ to be executed before the receive action $b$, by inserting appropriate numbers of placeholders during modeling time; or by adding a causality constraint
  between the communicating actions $a\leq b$, all process terms violate this constraint will cause deadlocks.
\end{enumerate}

\subsection{Applications}\label{app}

$APTC$ provides a formal framework based on truly concurrent behavioral semantics, which can be used to verify the correctness of system behaviors. In this subsection,
we tend to choose alternating bit protocol (ABP) \cite{ABP}.

The ABP protocol is used to ensure successful transmission of data through a corrupted channel. This success is based on the assumption that data can be resent an unlimited number of times, which is illustrated in Figure \ref{ABP}, we alter it into the true concurrency situation.

\begin{enumerate}
  \item Data elements $d_1,d_2,d_3,\cdots$ from a finite set $\Delta$ are communicated between a Sender and a Receiver.
  \item If the Sender reads a datum from channel $A_1$, then this datum is sent to the Receiver in parallel through channel $A_2$.
  \item The Sender processes the data in $\Delta$, formes new data, and sends them to the Receiver through channel $B$.
  \item And the Receiver sends the datum into channel $C_2$.
  \item If channel $B$ is corrupted, the message communicated through $B$ can be turn into an error message $\bot$.
  \item Every time the Receiver receives a message via channel $B$, it sends an acknowledgement to the Sender via channel $D$, which is also corrupted.
  \item Finally, then Sender and the Receiver send out their outputs in parallel through channels $C_1$ and $C_2$.
\end{enumerate}

\begin{figure}
    \centering
    \includegraphics{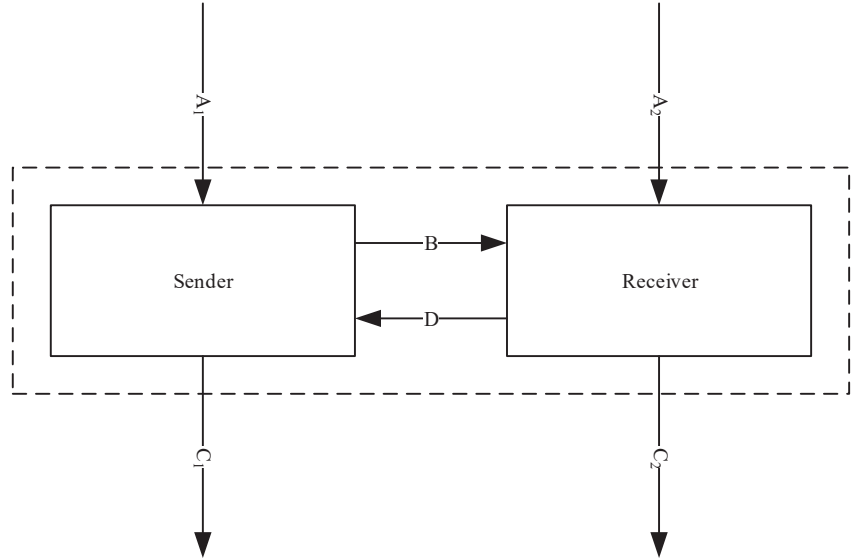}
    \caption{Alternating bit protocol}
    \label{ABP}
\end{figure}

In the truly concurrent ABP, the Sender sends its data to the Receiver; and the Receiver can also send its data to the Sender, for simplicity and without loss of generality, we assume that only the Sender sends its data and the Receiver only receives the data from the Sender. The Sender attaches a bit 0 to data elements $d_{2k-1}$ and a bit 1 to data elements $d_{2k}$, when they are sent into channel $B$. When the Receiver reads a datum, it sends back the attached bit via channel $D$. If the Receiver receives a corrupted message, then it sends back the previous acknowledgement to the Sender.

Then the state transition of the Sender can be described by $APTC$ as follows.

\begin{eqnarray}
&&S_b=\sum_{d\in\Delta}r_{A_1}(d)\cdot T_{db}\nonumber\\
&&T_{db}=(\sum_{d'\in\Delta}(s_B(d',b)\cdot s_{C_1}(d'))+s_B(\bot))\cdot U_{db}\nonumber\\
&&U_{db}=r_D(b)\cdot S_{1-b}+(r_D(1-b)+r_D(\bot))\cdot T_{db}\nonumber
\end{eqnarray}

where $s_B$ denotes sending data through channel $B$, $r_D$ denotes receiving data through channel $D$, similarly, $r_{A_1}$ means receiving data via channel $A_1$, $s_{C_1}$ denotes sending data via channel $C_1$, and $b\in\{0,1\}$.

And the state transition of the Receiver can be described by $APTC$ as follows.

\begin{eqnarray}
&&R_b=\sum_{d\in\Delta}r_{A_2}(d)\cdot R_b'\nonumber\\
&&R_b'=\sum_{d'\in\Delta}\{r_B(d',b)\cdot s_{C_2}(d')\cdot Q_b+r_B(d',1-b)\cdot Q_{1-b}\}+r_B(\bot)\cdot Q_{1-b}\nonumber\\
&&Q_b=(s_D(b)+s_D(\bot))\cdot R_{1-b}\nonumber
\end{eqnarray}

where $r_{A_2}$ denotes receiving data via channel $A_2$, $r_B$ denotes receiving data via channel $B$, $s_{C_2}$ denotes sending data via channel $C_2$, $s_D$ denotes sending data via channel $D$, and $b\in\{0,1\}$.

The send action and receive action of the same data through the same channel can communicate each other, otherwise, a deadlock $\delta$ will be caused. We define the following communication functions.

\begin{eqnarray}
&&\gamma(s_B(d',b),r_B(d',b))\triangleq c_B(d',b)\nonumber\\
&&\gamma(s_B(\bot),r_B(\bot))\triangleq c_B(\bot)\nonumber\\
&&\gamma(s_D(b),r_D(b))\triangleq c_D(b)\nonumber\\
&&\gamma(s_D(\bot),r_D(\bot))\triangleq c_D(\bot)\nonumber
\end{eqnarray}

Let $R_0$ and $S_0$ be in parallel, then the system $R_0S_0$ can be represented by the following process term.

$$\tau_I(\partial_H(\Theta(R_0\between S_0)))=\tau_I(\partial_H(R_0\between S_0))$$

where $H=\{s_B(d',b),r_B(d',b),s_D(b),r_D(b)|d'\in\Delta,b\in\{0,1\}\}\\
\{s_B(\bot),r_B(\bot),s_D(\bot),r_D(\bot)\}$

$I=\{c_B(d',b),c_D(b)|d'\in\Delta,b\in\{0,1\}\}\cup\{c_B(\bot),c_D(\bot)\}$.

Then we get the following conclusion.

\begin{theorem}[Correctness of the ABP protocol]
The ABP protocol $\tau_I(\partial_H(R_0\between S_0))$ exhibits desired external behaviors.
\end{theorem}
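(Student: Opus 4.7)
The plan is to show that $\tau_I(\partial_H(R_0\between S_0))$ satisfies a very simple guarded linear recursive specification, namely one equivalent to the buffer process
$B=\sum_{d\in\Delta}r_{A_1}(d)\parallel r_{A_2}(d)\cdot(s_{C_1}(d)\parallel s_{C_2}(d))\cdot B$, and then appeal to Theorem \ref{CCFAR} (completeness of $APTC_\tau$ with guarded linear recursion and $CFAR$) to conclude equality. This is the standard ``protocol = buffer'' verification strategy, lifted to the truly concurrent setting provided by $APTC_\tau$.

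First I would expand $\partial_H(R_0\between S_0)$ by the axioms $P1$--$P10$, $C11$--$C18$, $D1$--$D6$ together with $RDP$ on the two recursive specifications for $S_b$ and $R_b$. Since every raw send $s_B(\cdot), s_D(\cdot)$ and matching receive $r_B(\cdot), r_D(\cdot)$ is in $H$, the encapsulation operator collapses every term of the form $s\parallel s$ or $s\cdot x\parallel r\cdot y$ to $\delta$ (via $P9,P10$) unless it is actually a communication, in which case it is replaced by $c_B(d',b)$, $c_B(\bot)$, $c_D(b)$, or $c_D(\bot)$ through $C11$--$C14$. This gives a guarded linear recursive specification $E'$ whose atomic steps are either the ``external'' read/write actions $r_{A_1}, r_{A_2}, s_{C_1}, s_{C_2}$ or the ``internal'' communications in $I$.

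Next I would apply $\tau_I$ using $TI1$--$TI6$ to rename every $c_B(d',b)$, $c_B(\bot)$, $c_D(b)$, $c_D(\bot)$ into $\tau$. What remains is a specification whose only visible actions at each cycle are a read on $A_1\parallel A_2$ followed, after some bookkeeping, by a write on $C_1\parallel C_2$. The bookkeeping involves a cluster of states $T_{db}, U_{db}, R_b', Q_b$ whose transitions are all $\tau$'s corresponding to successful and unsuccessful exchanges of data and acknowledgements across the corrupted channels; the exits of this cluster are precisely the steps that perform the external writes on $C_1$ and $C_2$ and move to the next data element. At this point I would invoke $CFAR$ to collapse each of these internal $\tau$-clusters into a single $\tau$-prefix followed by the cluster's exits.

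The main obstacle is exactly this last step: the corrupted channels allow arbitrarily long retransmission sequences, so the specification $E'$ contains $\tau$-loops (e.g.\ $T_{db}\xrightarrow{\tau}U_{db}\xrightarrow{\tau}T_{db}$ when both $B$ and $D$ fail) which must be shown to form clusters for $I$ with the correct exits before $CFAR$ may be applied. Once the cluster structure is identified and $CFAR$ discharged, the remaining axioms $B1,B2$ absorb the leading $\tau$'s in front of each visible action, the resulting specification is syntactically the buffer $B$ above, and $RSP$ closes the argument. The truly concurrent content is that the $r_{A_1}\parallel r_{A_2}$ and $s_{C_1}\parallel s_{C_2}$ steps remain genuine parallel steps at each cycle rather than being interleaved, which is preserved throughout the derivation because $\parallel$ is never eliminated by the axioms of $APTC$.
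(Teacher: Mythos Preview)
Your proposal is correct and follows essentially the same route as the paper: expand $\partial_H(R_0\between S_0)$ via $RDP$ and the $P/C/D$-axioms into a guarded linear recursive specification over the communication actions, apply $\tau_I$, collapse the retransmission $\tau$-clusters with $CFAR$, and read off the resulting recursive equation. One small correction: in the paper's model the Sender transforms each incoming $d$ into a new datum $d'$ before forwarding, so the target specification the paper actually derives is $\sum_{d,d'\in\Delta}(r_{A_1}(d)\parallel r_{A_2}(d))\cdot(s_{C_1}(d')\parallel s_{C_2}(d'))\cdot\tau_I(\partial_H(R_0\between S_0))$, not the buffer with matching $d$ on input and output that you wrote down.
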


\begin{proof}

By use of the algebraic laws of $APTC$, we have the following expansions.

\begin{eqnarray}
R_0\between S_0&\overset{\text{P1}}{=}&R_0\parallel S_0+R_0\mid S_0\nonumber\\
&\overset{\text{RDP}}{=}&(\sum_{d\in\Delta}r_{A_2}(d)\cdot R_0')\parallel(\sum_{d\in\Delta}r_{A_1}(d)T_{d0})\nonumber\\
&&+(\sum_{d\in\Delta}r_{A_2}(d)\cdot R_0')\mid(\sum_{d\in\Delta}r_{A_1}(d)T_{d0})\nonumber\\
&\overset{\text{P6,C14}}{=}&\sum_{d\in\Delta}(r_{A_2}(d)\parallel r_{A_1}(d))R_0'\between T_{d0} + \delta\cdot R_0'\between T_{d0}\nonumber\\
&\overset{\text{A6,A7}}{=}&\sum_{d\in\Delta}(r_{A_2}(d)\parallel r_{A_1}(d))R_0'\between T_{d0}\nonumber
\end{eqnarray}

\begin{eqnarray}
\partial_H(R_0\between S_0)&=&\partial_H(\sum_{d\in\Delta}(r_{A_2}(d)\parallel r_{A_1}(d))R_0'\between T_{d0})\nonumber\\
&&=\sum_{d\in\Delta}(r_{A_2}(d)\parallel r_{A_1}(d))\partial_H(R_0'\between T_{d0})\nonumber
\end{eqnarray}

Similarly, we can get the following equations.

\begin{eqnarray}
\partial_H(R_0\between S_0)&=&\sum_{d\in\Delta}(r_{A_2}(d)\parallel r_{A_1}(d))\cdot\partial_H(T_{d0}\between R_0')\nonumber\\
\partial_H(T_{d0}\between R_0')&=&c_B(d',0)\cdot(s_{C_1}(d')\parallel s_{C_2}(d'))\cdot\partial_H(U_{d0}\between Q_0)+c_B(\bot)\cdot\partial_H(U_{d0}\between Q_1)\nonumber\\
\partial_H(U_{d0}\between Q_1)&=&(c_D(1)+c_D(\bot))\cdot\partial_H(T_{d0}\between R_0')\nonumber\\
\partial_H(Q_0\between U_{d0})&=&c_D(0)\cdot\partial_H(R_1\between S_1)+c_D(\bot)\cdot\partial_H(R_1'\between T_{d0})\nonumber\\
\partial_H(R_1'\between T_{d0})&=&(c_B(d',0)+c_B(\bot))\cdot\partial_H(Q_0\between U_{d0})\nonumber\\
\partial_H(R_1\between S_1)&=&\sum_{d\in\Delta}(r_{A_2}(d)\parallel r_{A_1}(d))\cdot\partial_H(T_{d1}\between R_1')\nonumber\\
\partial_H(T_{d1}\between R_1')&=&c_B(d',1)\cdot(s_{C_1}(d')\parallel s_{C_2}(d'))\cdot\partial_H(U_{d1}\between Q_1)+c_B(\bot)\cdot\partial_H(U_{d1}\between Q_0')\nonumber\\
\partial_H(U_{d1}\between Q_0')&=&(c_D(0)+c_D(\bot))\cdot\partial_H(T_{d1}\between R_1')\nonumber\\
\partial_H(Q_1\between U_{d1})&=&c_D(1)\cdot\partial_H(R_0\between S_0)+c_D(\bot)\cdot\partial_H(R_0'\between T_{d1})\nonumber\\
\partial_H(R_0'\between T_{d1})&=&(c_B(d',1)+c_B(\bot))\cdot\partial_H(Q_1\between U_{d1})\nonumber
\end{eqnarray}

Let $\partial_H(R_0\between S_0)=\langle X_1|E\rangle$, where E is the following guarded linear recursion specification:

\begin{eqnarray}
&&\{X_1=\sum_{d\in \Delta}(r_{A_2}(d)\parallel r_{A_1}(d))\cdot X_{2d},Y_1=\sum_{d\in\Delta}(r_{A_2}(d)\parallel r_{A_1}(d))\cdot Y_{2d},\nonumber\\
&&X_{2d}=c_B(d',0)\cdot X_{4d}+c_B(\bot)\cdot X_{3d}, Y_{2d}=c_B(d',1)\cdot Y_{4d}+c_B(\bot)\cdot Y_{3d},\nonumber\\
&&X_{3d}=(c_D(1)+c_D(\bot))\cdot X_{2d}, Y_{3d}=(c_D(0)+c_D(\bot))\cdot Y_{2d},\nonumber\\
&&X_{4d}=(s_{C_1}(d')\parallel s_{C_2}(d'))\cdot X_{5d}, Y_{4d}=(s_{C_1}(d')\parallel s_{C_2}(d'))\cdot Y_{5d},\nonumber\\
&&X_{5d}=c_D(0)\cdot Y_1+c_D(\bot)\cdot X_{6d}, Y_{5d}=c_D(1)\cdot X_1+c_D(\bot)\cdot Y_{6d},\nonumber\\
&&X_{6d}=(c_B(d,0)+c_B(\bot))\cdot X_{5d}, Y_{6d}=(c_B(d,1)+c_B(\bot))\cdot Y_{5d}\nonumber\\
&&|d,d'\in\Delta\}\nonumber
\end{eqnarray}

Then we apply abstraction operator $\tau_I$ into $\langle X_1|E\rangle$.

\begin{eqnarray}
\tau_I(\langle X_1|E\rangle)
&=&\sum_{d\in\Delta}(r_{A_1}(d)\parallel r_{A_2}(d))\cdot\tau_I(\langle X_{2d}|E\rangle)\nonumber\\
&=&\sum_{d\in\Delta}(r_{A_1}(d)\parallel r_{A_2}(d))\cdot\tau_I(\langle X_{4d}|E\rangle)\nonumber\\
&=&\sum_{d,d'\in\Delta}(r_{A_1}(d)\parallel r_{A_2}(d))\cdot (s_{C_1}(d')\parallel s_{C_2}(d'))\cdot\tau_I(\langle X_{5d}|E\rangle)\nonumber\\
&=&\sum_{d,d'\in\Delta}(r_{A_1}(d)\parallel r_{A_2}(d))\cdot (s_{C_1}(d')\parallel s_{C_2}(d'))\cdot\tau_I(\langle Y_1|E\rangle)\nonumber
\end{eqnarray}

Similarly, we can get $\tau_I(\langle Y_1|E\rangle)=\sum_{d,d'\in\Delta}(r_{A_1}(d)\parallel r_{A_2}(d))\cdot (s_{C_1}(d')\parallel s_{C_2}(d'))\cdot\tau_I(\langle X_1|E\rangle)
$.

We get $\tau_I(\partial_H(R_0\between S_0))=\sum_{d,d'\in \Delta}(r_{A_1}(d)\parallel r_{A_2}(d))\cdot (s_{C_1}(d')\parallel s_{C_2}(d'))\cdot \tau_I(\partial_H(R_0\between S_0))$. So, the ABP protocol $\tau_I(\partial_H(R_0\between S_0))$ exhibits desired external behaviors.
\end{proof}

With the help of shadow constant, now we can verify the traditional alternating bit protocol (ABP) \cite{ABP}.

The ABP protocol is used to ensure successful transmission of data through a corrupted channel. This success is based on the assumption that data can be resent an unlimited number of times, which is illustrated in Figure \ref{ABP2}, we alter it into the true concurrency situation.

\begin{enumerate}
  \item Data elements $d_1,d_2,d_3,\cdots$ from a finite set $\Delta$ are communicated between a Sender and a Receiver.
  \item If the Sender reads a datum from channel $A$.
  \item The Sender processes the data in $\Delta$, formes new data, and sends them to the Receiver through channel $B$.
  \item And the Receiver sends the datum into channel $C$.
  \item If channel $B$ is corrupted, the message communicated through $B$ can be turn into an error message $\bot$.
  \item Every time the Receiver receives a message via channel $B$, it sends an acknowledgement to the Sender via channel $D$, which is also corrupted.
\end{enumerate}

\begin{figure}
    \centering
    \includegraphics{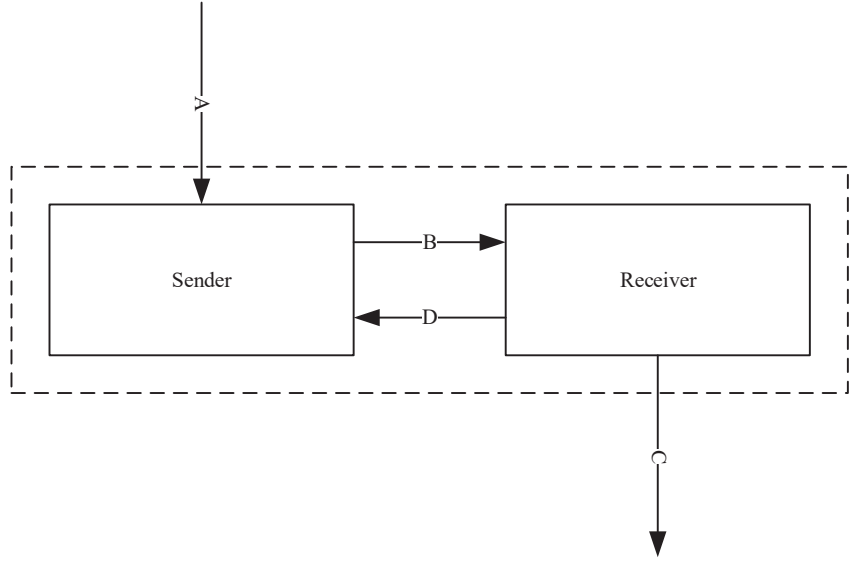}
    \caption{Alternating bit protocol}
    \label{ABP2}
\end{figure}

The Sender attaches a bit 0 to data elements $d_{2k-1}$ and a bit 1 to data elements $d_{2k}$, when they are sent into channel $B$. When the Receiver reads a datum, it sends back the attached bit via channel $D$. If the Receiver receives a corrupted message, then it sends back the previous acknowledgement to the Sender.

Then the state transition of the Sender can be described by $APTC$ as follows.

\begin{eqnarray}
&&S_b=\sum_{d\in\Delta}r_{A}(d)\cdot T_{db}\nonumber\\
&&T_{db}=(\sum_{d'\in\Delta}(s_B(d',b)\cdot \circledS^{s_{C}(d')})+s_B(\bot))\cdot U_{db}\nonumber\\
&&U_{db}=r_D(b)\cdot S_{1-b}+(r_D(1-b)+r_D(\bot))\cdot T_{db}\nonumber
\end{eqnarray}

where $s_B$ denotes sending data through channel $B$, $r_D$ denotes receiving data through channel $D$, similarly, $r_{A}$ means receiving data via channel $A$, $\circledS^{s_{C}(d')}$ denotes the shadow of $s_{C}(d')$.

And the state transition of the Receiver can be described by $APTC$ as follows.

\begin{eqnarray}
&&R_b=\sum_{d\in\Delta}\circledS^{r_{A}(d)}\cdot R_b'\nonumber\\
&&R_b'=\sum_{d'\in\Delta}\{r_B(d',b)\cdot s_{C}(d')\cdot Q_b+r_B(d',1-b)\cdot Q_{1-b}\}+r_B(\bot)\cdot Q_{1-b}\nonumber\\
&&Q_b=(s_D(b)+s_D(\bot))\cdot R_{1-b}\nonumber
\end{eqnarray}

where $\circledS^{r_{A}(d)}$ denotes the shadow of $r_{A}(d)$, $r_B$ denotes receiving data via channel $B$, $s_{C}$ denotes sending data via channel $C$, $s_D$ denotes sending data via channel $D$, and $b\in\{0,1\}$.

The send action and receive action of the same data through the same channel can communicate each other, otherwise, a deadlock $\delta$ will be caused. We define the following communication functions.

\begin{eqnarray}
&&\gamma(s_B(d',b),r_B(d',b))\triangleq c_B(d',b)\nonumber\\
&&\gamma(s_B(\bot),r_B(\bot))\triangleq c_B(\bot)\nonumber\\
&&\gamma(s_D(b),r_D(b))\triangleq c_D(b)\nonumber\\
&&\gamma(s_D(\bot),r_D(\bot))\triangleq c_D(\bot)\nonumber
\end{eqnarray}

Let $R_0$ and $S_0$ be in parallel, then the system $R_0S_0$ can be represented by the following process term.

$$\tau_I(\partial_H(\Theta(R_0\between S_0)))=\tau_I(\partial_H(R_0\between S_0))$$

where $H=\{s_B(d',b),r_B(d',b),s_D(b),r_D(b)|d'\in\Delta,b\in\{0,1\}\}\\
\{s_B(\bot),r_B(\bot),s_D(\bot),r_D(\bot)\}$

$I=\{c_B(d',b),c_D(b)|d'\in\Delta,b\in\{0,1\}\}\cup\{c_B(\bot),c_D(\bot)\}$.

Then we get the following conclusion.

\begin{theorem}[Correctness of the ABP protocol]
The ABP protocol $\tau_I(\partial_H(R_0\between S_0))$ can exhibit desired external behaviors.
\end{theorem}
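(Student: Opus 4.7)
The plan is to follow the same schema as the proof of the earlier (non-shadow) ABP correctness theorem, using algebraic expansion, the encapsulation operator $\partial_H$, a guarded linear recursive specification, and $RSP$ to close the loop, while additionally invoking the shadow axioms $SC3$--$SC9$ to collapse the ``synchronized'' pairs $r_A(d) \parallel \circledS^{r_A(d)}$ and $s_C(d') \parallel \circledS^{s_C(d')}$ into a single atomic action.

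First I would expand $R_0 \between S_0$ using $P1$ (turning $\between$ into $\parallel$ plus $\mid$) and $RDP$ to unfold the recursive definitions of $R_0$ and $S_0$. The communication merge contributes only $\delta$-summands (since $r_A$ has no communicating partner in $H$), which vanish by $A6$, $A7$. The parallel summand produces a term of shape $\circledS^{r_A(d)} \parallel r_A(d)$, which collapses to $r_A(d)$ by $SC3$ (or $SC7/SC9$ depending on whether there is a continuation), so I obtain $R_0 \between S_0 = \sum_{d \in \Delta} r_A(d) \cdot (R_0' \between T_{d0})$ up to provable equality. Applying $\partial_H$ and using $D4$--$D6$ yields $\partial_H(R_0 \between S_0) = \sum_{d \in \Delta} r_A(d) \cdot \partial_H(R_0' \between T_{d0})$.

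Next I would iterate this expand-and-encapsulate procedure on all the intermediate parallel states $\partial_H(T_{db} \between R_b')$, $\partial_H(U_{db} \between Q_{1-b})$, $\partial_H(Q_b \between U_{db})$, $\partial_H(R_{1-b}' \between T_{db})$ for $b \in \{0,1\}$. In the step through $T_{d0} \between R_0'$, the parallel combination $(s_B(d',0) \cdot \circledS^{s_C(d')}) \parallel (r_B(d',0) \cdot s_C(d') \cdot Q_0)$ communicates via $C14$ to produce $c_B(d',0) \cdot (\circledS^{s_C(d')} \between (s_C(d') \cdot Q_0))$, and then $SC5$/$SC9$ collapses the residual parallel term into $s_C(d') \cdot Q_0$; combined with the $s_B(\bot)$ alternative this gives the analog of the equations displayed in the previous proof, but now with single actions $s_C(d')$ rather than $s_{C_1}(d') \parallel s_{C_2}(d')$. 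Collecting all resulting equations, I define a guarded linear recursive specification $E$ whose variables $X_1, X_{2d}, \ldots, X_{6d}, Y_1, Y_{2d}, \ldots, Y_{6d}$ realize exactly these right-hand sides and identify $\partial_H(R_0 \between S_0) = \langle X_1 \mid E\rangle$.

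Finally, I would apply $\tau_I$ with $I$ the set of all $c_B$ and $c_D$ actions, use $TI1$--$TI6$ to push $\tau_I$ through the sum, sequential and parallel operators, invoke $B1$ (and $B2$ where two branches like $c_D(b)$ and $c_D(\bot)$ merge after a $\tau$) to eliminate the silent reacknowledgement loops, and then apply $RSP$ to the guarded linear recursion so as to identify
\[
\tau_I(\langle X_1 \mid E\rangle) = \sum_{d, d' \in \Delta} r_A(d) \cdot s_C(d') \cdot \tau_I(\langle Y_1 \mid E\rangle),
\]
and symmetrically for $Y_1$, giving the fixed-point equation characterising the desired external behaviour. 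The principal difficulty, and the place where the argument departs most from the earlier proof, will be verifying that the shadow collapses are applied consistently in every branch of every unfolding so that the resulting system of equations is genuinely a guarded linear recursive specification in the sense of Definition~\ref{GLRSS} (in particular, with no infinite $\circledS$-chains remaining); once that structural check is done, $RSP$ and soundness of the shadow constant (Theorem~\ref{SShadow}) close the argument.
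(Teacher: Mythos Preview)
Your proposal is correct and follows essentially the same approach as the paper: the paper's own proof simply says ``Similarly'' (referring back to the first ABP correctness proof) and states the two fixed-point equations $\tau_I(\langle X_1|E\rangle)=\sum_{d,d'\in\Delta}r_{A}(d)\cdot s_{C}(d')\cdot\tau_I(\langle Y_1|E\rangle)$ and its symmetric counterpart. You have in fact supplied considerably more detail than the paper does, correctly identifying the shadow-constant axioms $SC3$--$SC9$ as the one new ingredient needed to collapse the $r_A(d)\parallel\circledS^{r_A(d)}$ and $\circledS^{s_C(d')}\parallel s_C(d')$ pairs into single actions, after which the argument is structurally identical to the earlier proof.
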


\begin{proof}

Similarly, we can get $\tau_I(\langle X_1|E\rangle)=\sum_{d,d'\in\Delta}r_{A}(d)\cdot s_{C}(d')\cdot\tau_I(\langle Y_1|E\rangle)$ and $\tau_I(\langle Y_1|E\rangle)=\sum_{d,d'\in\Delta}r_{A}(d)\cdot s_{C}(d')\cdot\tau_I(\langle X_1|E\rangle)$.

So, the ABP protocol $\tau_I(\partial_H(R_0\between S_0))$ can exhibit desired external behaviors.
\end{proof}

\newpage\section{Verification of Architectural Patterns}

Architecture patterns are highest-level patterns which present structural organizations for software systems and contain a set of subsystems and the relationships among them.

In this chapter, we verify four categories of architectural patterns, in subsection \ref{MUD3}, we verify structural patterns including the Layers pattern, the Pipes and 
Filters pattern and the Blackboard pattern. In section \ref{DS3}, we verify patterns considering distribution aspects. We verify patterns that feature human-computer interaction 
in section \ref{IS3}. In section \ref{AS3}, we verify patterns supporting extension of applications.

\subsection{From Mud to Structure}\label{MUD3}

In this subsection, we verify structural patterns including the Layers pattern, the Pipes and Filters pattern and the Blackboard pattern.

\subsubsection{Verification of the Layers Pattern}\label{Layers3}

The Layers pattern contains several layers with each layer being a particular level of abstraction of subtasks. In the Layers pattern, there are only communications between the
adjacent layers. That is, for layer $i$, it receives data (the data denoted $d_{U_i}$) from layer $i+1$, processes the data (the processing function denoted $UF_i$) and sends the
processed data (the processed data denoted $UF_i(d_{U_i})$) to layer $i-1$; in the other direction, it receives data (the data denoted $d_{L_i}$) from layer $i-1$, processes the data
(the processing function denoted $LF_i$) and sends the processed data (the processed data denoted $LF_i(d_{L_i})$)to layer $i+1$, as Figure \ref{L3} illustrated. The four channels are
denoted $UI_i$ (the Upper Input of layer $i$), $LO_i$ (the Lower Output of layer $i$), $LI_i$ (the Lower Input of layer $i$) and $UO_i$ (the Upper Output of layer $i$) respectively.

\begin{figure}
    \centering
    \includegraphics{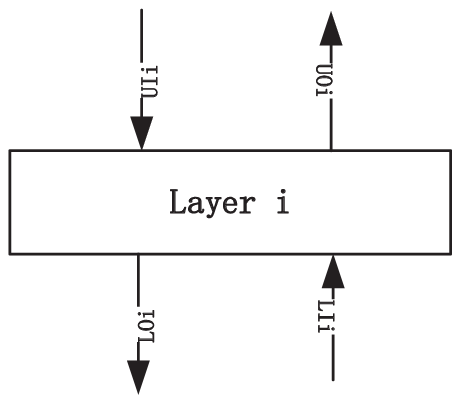}
    \caption{Layer i}
    \label{L3}
\end{figure}

The whole Layers pattern containing $n$ layers are illustrated in Figure \ref{LS3}. Note that, the numbering of layers are in a reverse order, that is, the highest layer is called
layer $n$ and the lowest layer is called layer $1$.

\begin{figure}
    \centering
    \includegraphics{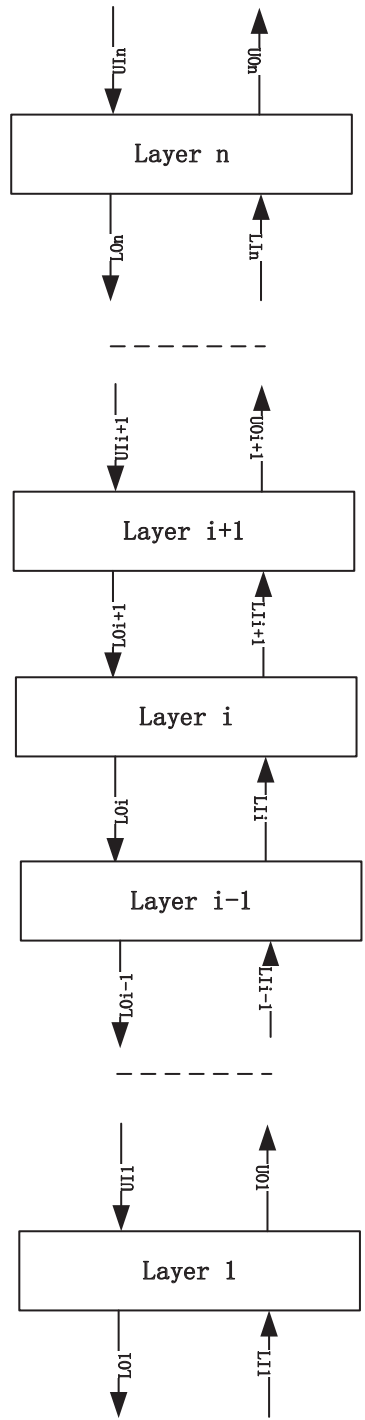}
    \caption{Layers pattern}
    \label{LS3}
\end{figure}

There exist two typical processes in the Layers pattern corresponding to two directions of data processing as Figure \ref{LS3P} illustrated. One process is as follows.

\begin{enumerate}
  \item The highest layer $n$ receives data from the application which is denoted $d_{U_n}$ through channel $UI_n$ (the corresponding reading action is denoted $r_{UI_n}(d_{U_n})$),
  then processes the data, and sends the processed data to layer $n-1$ which is denoted $UF_n(d_{U_n})$ through channel $LO_n$ (the corresponding sending action is denoted
  $s_{LO_n}(UF_n(d_{U_n}))$);
  \item The layer $i$ receives data from the layer $i+1$ which is denoted $d_{U_i}$ through channel $UI_i$ (the corresponding reading action is denoted $r_{UI_i}(d_{U_i})$),
  then processes the data, and sends the processed data to layer $i-1$ which is denoted $UF_i(d_{U_i})$ through channel $LO_i$ (the corresponding sending action is denoted
  $s_{LO_i}(UF_i(d_{U_i}))$);
  \item The lowest layer $1$ receives data from the layer $2$ which is denoted $d_{U_1}$ through channel $UI_1$ (the corresponding reading action is denoted $r_{UI_1}(d_{U_1})$),
  then processes the data, and sends the processed data to another layers peer which is denoted $UF_1(d_{U_1})$ through channel $LO_1$ (the corresponding sending action is denoted
  $s_{LO_1}(UF_1(d_{U_1}))$).
\end{enumerate}

The other process is following.

\begin{enumerate}
  \item The lowest layer $1$ receives data from the another layers peer which is denoted $d_{L_1}$ through channel $LI_1$ (the corresponding reading action is denoted $r_{LI_1}(d_{L_1})$),
  then processes the data, and sends the processed data to layer $2$ which is denoted $LF_1(d_{L_1})$ through channel $UO_1$ (the corresponding sending action is denoted
  $s_{UO_1}(LF_1(d_{L_1}))$);
  \item The layer $i$ receives data from the layer $i-1$ which is denoted $d_{L_i}$ through channel $LI_i$ (the corresponding reading action is denoted $r_{LI_i}(d_{L_i})$),
  then processes the data, and sends the processed data to layer $i+1$ which is denoted $LF_i(d_{L_i})$ through channel $UO_i$ (the corresponding sending action is denoted
  $s_{UO_i}(LF_i(d_{L_i}))$);
  \item The highest layer $n$ receives data from layer $n-1$ which is denoted $d_{L_n}$ through channel $LI_n$ (the corresponding reading action is denoted $r_{LI_n}(d_{L_n})$),
  then processes the data, and sends the processed data to the application which is denoted $LF_n(d_{L_n})$ through channel $UO_n$ (the corresponding sending action is denoted
  $s_{UO_n}(LF_n(d_{L_n}))$).
\end{enumerate}

\begin{figure}
    \centering
    \includegraphics{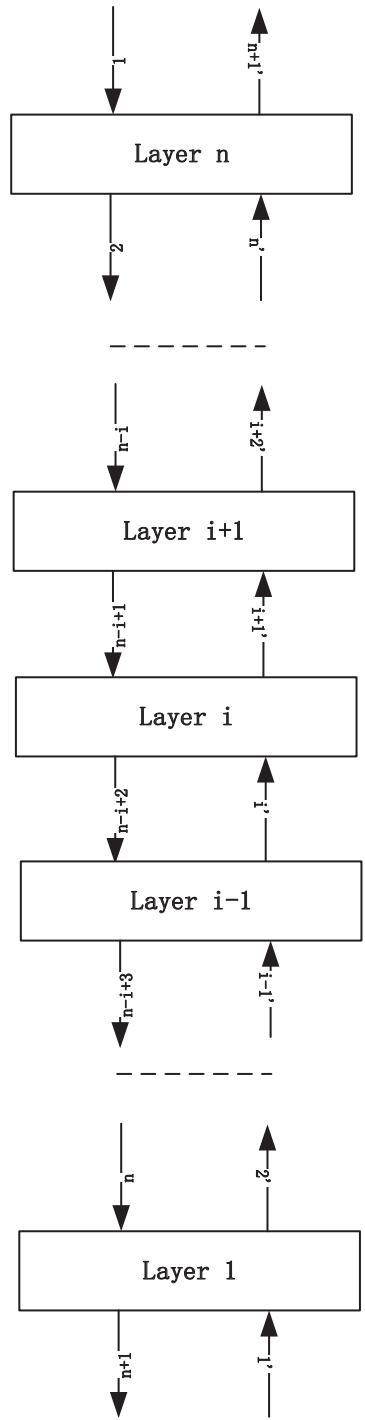}
    \caption{Typical process of Layers pattern}
    \label{LS3P}
\end{figure}

We begin to verify the Layers pattern. We assume all data elements $d_{U_i}$ and $d_{L_i}$ (for $1\leq i\leq n$) are from a finite set $\Delta$. The state transitions of layer $i$ (for $1\leq i\leq n$)
described by APTC are as follows.

$L_i=\sum_{d_{U_i},d_{L_i}\in\Delta}(r_{UI_i}(d_{U_i})\cdot L_{i_2}\between r_{LI_i}(d_{L_i})\cdot L_{i_3})$

$L_{i_2}=UF_i\cdot L_{i_4}$

$L_{i_3}=LF_i\cdot L_{i_5}$

$L_{i_4}=\sum_{d_{U_i}\in\Delta}(s_{LO_i}(UF_i(d_{U_i}))\cdot L_i)$

$L_{i_5}=\sum_{d_{L_i}\in\Delta}(s_{UO_i}(LF_i(d_{L_i}))\cdot L_i)$

The sending action and the reading action of the same data through the same channel can communicate with each other, otherwise, will cause a deadlock $\delta$. We define the following
communication functions for $1\leq i\leq n$. Note that, the channel of $LO_{i+1}$ of layer $i+1$ and the channel $UI_i$ of layer $i$ are the same one channel, and the channel $LI_{i+1}$ of layer $i+1$ and the
channel $UO_i$ of layer $i$ are the same one channel. And also the data $d_{L_{i+1}}$ of layer $i+1$ and the data $LF_i(d_{L_i})$ of layer $i$ are the same data, and the data
$UF_{i+1}(d_{U_{i+1}})$ of layer $i+1$ and the data $d_{U_i}$ of layer $i$ are the same data.

$$\gamma(r_{UI_i}(d_{U_i}),s_{LO_{i+1}}(UF_{i+1}(d_{U_{i+1}})))\triangleq c_{UI_i}(d_{U_i})$$
$$\gamma(r_{LI_i}(d_{L_i}),s_{UO_{i-1}}(LF_{i-1}(d_{L_{i-1}})))\triangleq c_{LI_i}(d_{L_i})$$
$$\gamma(r_{UI_{i-1}}(d_{U_{i-1}}),s_{LO_{i}}(UF_{i}(d_{U_{i}})))\triangleq c_{UI_{i-1}}(d_{U_{i-1}})$$
$$\gamma(r_{LI_{i+1}}(d_{L_{i+1}}),s_{UO_{i}}(LF_{i}(d_{L_{i}})))\triangleq c_{LI_{i+1}}(d_{L_{i+1}})$$

Note that, for the layer $n$, there are only two communication functions as follows.

$$\gamma(r_{LI_n}(d_{L_n}),s_{UO_{n-1}}(LF_{n-1}(d_{L_{n-1}})))\triangleq c_{LI_n}(d_{L_n})$$
$$\gamma(r_{UI_{n-1}}(d_{U_{n-1}}),s_{LO_{n}}(UF_{n}(d_{U_{n}})))\triangleq c_{UI_{n-1}}(d_{U_{n-1}})$$

And for the layer $1$, there are also only two communication functions as follows.

$$\gamma(r_{UI_1}(d_{U_1}),s_{LO_{2}}(UF_{2}(d_{U_{2}})))\triangleq c_{UI_1}(d_{U_1})$$
$$\gamma(r_{LI_{2}}(d_{L_{2}}),s_{UO_{1}}(LF_{1}(d_{L_{1}})))\triangleq c_{LI_{2}}(d_{L_{2}})$$

Let all layers from layer $n$ to layer $1$ be in parallel, then the Layers pattern $L_n\cdots L_i\cdots L_1$ can be presented by the following process term.

$$\tau_I(\partial_H(\Theta(L_n\between\cdots\between L_i\between\cdots\between L_1)))=\tau_I(\partial_H(L_n\between\cdots\between L_i\between\cdots\between L_1))$$

where $H=\{r_{UI_1}(d_{U_1}),s_{UO_{1}}(LF_{1}(d_{L_{1}})),\cdots, r_{UI_i}(d_{U_i}), r_{LI_i}(d_{L_i}), s_{LO_{i}}(UF_{i}(d_{U_{i}})),s_{UO_{i}}(LF_{i}(d_{L_{i}})), \\
\cdots,r_{LI_n}(d_{L_n}),s_{LO_{n}}(UF_{n}(d_{U_{n}}))|d_{U_1},d_{L_1},\cdots,d_{U_i},d_{L_i}\cdots,d_{U_n},d_{L_n}\in\Delta\}$,

$I=\{c_{UI_1}(d_{U_1}),c_{LI_{2}}(d_{L_{2}}),\cdots,c_{UI_i}(d_{U_i}),c_{LI_i}(d_{L_i}),c_{UI_{i-1}}(d_{U_{i-1}}),c_{LI_{i+1}}(d_{L_{i+1}}),
\cdots,c_{LI_n}(d_{L_n}),c_{UI_{n-1}}(d_{U_{n-1}}),\\LF_1,UF_1,\cdots,LF_i,UF_i,\cdots,LF_n,UF_n|d_{U_1},d_{L_1},\cdots,d_{U_i},d_{L_i}\cdots,d_{U_n},d_{L_n}\in\Delta\}$.

Then we get the following conclusion on the Layers pattern.

\begin{theorem}[Correctness of the Layers pattern]
The Layers pattern $\tau_I(\partial_H(L_n\between\cdots\between L_i\between\cdots\between L_1))$ can exhibit desired external behaviors.
\end{theorem}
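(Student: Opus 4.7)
The plan is to mirror the algebraic verification carried out earlier for the ABP protocol, adapting it to the layered topology. First I would expand the big parallel composition $L_n \between \cdots \between L_1$ using axiom $P1$ together with the parallel/communication axioms $P4$--$P10$ and $C11$--$C18$. Because each layer $L_i$ starts from two independent reading actions on the channels $UI_i$ and $LI_i$ (one receiving data downward from layer $i+1$, one receiving data upward from layer $i-1$), and because $UI_i$ and $LO_{i+1}$ denote the same channel (likewise $LI_i$ and $UO_{i-1}$), only the matching sends and receives between adjacent layers can communicate; every attempt to pair a send/receive on channels of non-adjacent layers, or mismatched data, will be absorbed into $\delta$ after applying $\partial_H$ and then removed by $A6$ ($x+\delta=x$) and $A7$ ($\delta\cdot x=\delta$). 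This cuts the expansion down to the intended communications $c_{UI_i}(d_{U_i})$ and $c_{LI_i}(d_{L_i})$ along the layer chain, together with the internal processing actions $UF_i$ and $LF_i$.

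Next I would set $\tau_I(\partial_H(L_n\between\cdots\between L_1))=\langle X_1|E\rangle$ for a guarded linear recursive specification $E$ obtained by bookkeeping the joint state of all layers (which of them are currently in their initial, processing, or output phase, in each of the two directions). The recursive variables correspond to the reachable global configurations, and the right-hand sides are built from the results of the previous expansion step. Applying $\tau_I$ turns every $c_{UI_i}$, $c_{LI_i}$, $UF_i$, $LF_i$ into $\tau$; then axioms $B1$ ($e\cdot\tau=e$) and $B2$ absorb the silent internal steps so that only the external reads $r_{UI_n}(d_{U_n})$, $r_{LI_1}(d_{L_1})$ and the external writes $s_{LO_1}(UF_1\cdots UF_n(d_{U_n}))$, $s_{UO_n}(LF_n\cdots LF_1(d_{L_1}))$ survive. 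Using $RSP$, I would then identify $\langle X_1|E\rangle$ with the solution of a much smaller specification that reads an input from the top (or bottom) and writes the fully-processed output at the bottom (or top) in parallel, which is exactly the ``desired external behavior'' of the pattern.

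The main obstacle is combinatorial rather than conceptual: writing down $E$ explicitly. Since each layer independently runs two data streams in opposite directions, the product state space of $n$ layers has many reachable configurations, and one has to argue carefully, by induction on $n$ or by symmetry between the two directions, that all ``cross'' communication attempts collapse to $\delta$ under $\partial_H$ and that the only surviving traces are the two chains of matched $c_{UI_i}$'s and $c_{LI_i}$'s interleaved with the local $UF_i$, $LF_i$ steps. For this bookkeeping the cleanest route is to prove first, by a short induction on $k$, that
\[
\tau_I(\partial_H(L_k\between L_{k-1}\between\cdots\between L_1))
\]
behaves as a single ``stacked'' layer exposing only the external channels $UI_k$, $UO_k$ and $LO_1$, $LI_1$; then the statement for $k=n$ follows at once. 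The rest (application of $CFAR$ to discard any $\tau$-loops created by $\tau_I$, and appeal to Theorem \ref{CCFAR} to conclude equality) is routine and parallels the ABP argument verbatim.
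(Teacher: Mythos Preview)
Your proposal is correct and follows essentially the same route as the paper: expand the parallel composition with the APTC axioms, let $\partial_H$ kill the mismatched communications, set up a guarded linear recursive specification, apply $\tau_I$ and the $B$-axioms, and conclude via $RSP$ that only the four external actions $r_{UI_n}$, $r_{LI_1}$, $s_{LO_1}$, $s_{UO_n}$ survive in the displayed recursive equation. The paper's own proof is in fact only a one-line sketch that states the target identity and defers the mechanics to the ABP verification in Section~\ref{app}; your write-up is a faithful and more explicit unfolding of that template, and your proposed induction on $k$ (showing that $L_k\between\cdots\between L_1$ behaves as a single collapsed layer) is a clean way to organise the bookkeeping that the paper simply omits.
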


\begin{proof}
Based on the above state transitions of layer $i$ (for $1\leq i\leq n$), by use of the algebraic laws of APTC, we can prove that

$\tau_I(\partial_H(L_n\between\cdots\between L_i\between\cdots\between L_1))=\sum_{d_{U_1},d_{L_1},d_{U_n},d_{L_n}\in\Delta}((r_{UI_n}(d_{U_n})\parallel r_{LI_1}(d_{L_1}))
\cdot(s_{UO_n}(LF_n(d_{L_n}))\parallel s_{LO_{1}}(UF_{1}(d_{U_{1}}))))\cdot \tau_I(\partial_H(L_n\between\cdots\between L_i\between\cdots\between L_1))$,

that is, the Layers pattern $\tau_I(\partial_H(L_n\between\cdots\between L_i\between\cdots\between L_1))$ can exhibit desired external behaviors.

For the details of proof, please refer to section \ref{app}, and we omit it.
\end{proof}

Two Layers pattern peers can be composed together, just by linking the lower output of layer $1$ of one peer together with the lower input of layer $1$ of the other peer, and vice versa.
As Figure \ref{LS23} illustrated.

\begin{figure}
    \centering
    \includegraphics{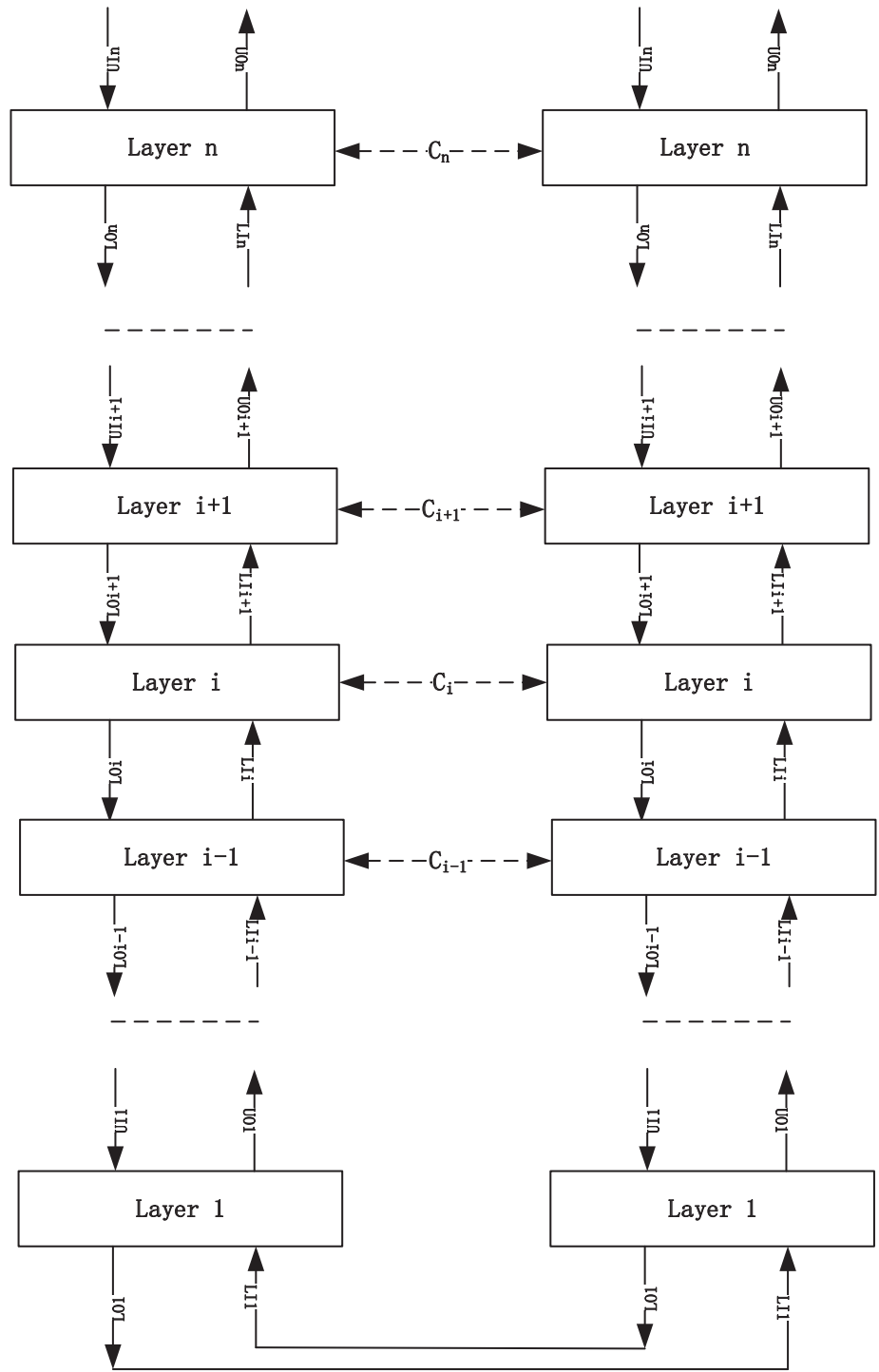}
    \caption{Two layers peers}
    \label{LS23}
\end{figure}

There are also two typical data processing process in the composition of two layers peers, as Figure \ref{LS23P} shows. One process is data transferred from peer $P$ to another peer $P'$ as follows.

\begin{enumerate}
  \item The highest layer $n$ of peer $P$ receives data from the application of peer $P$ which is denoted $d_{U_n}$ through channel $UI_n$ (the corresponding reading action is denoted $r_{UI_n}(d_{U_n})$),
  then processes the data, and sends the processed data to layer $n-1$ of peer $P$ which is denoted $UF_n(d_{U_n})$ through channel $LO_n$ (the corresponding sending action is denoted
  $s_{LO_n}(UF_n(d_{U_n}))$);
  \item The layer $i$ of peer $P$ receives data from the layer $i+1$ of peer $P$ which is denoted $d_{U_i}$ through channel $UI_i$ (the corresponding reading action is denoted $r_{UI_i}(d_{U_i})$),
  then processes the data, and sends the processed data to layer $i-1$ which is denoted $UF_i(d_{U_i})$ through channel $LO_i$ (the corresponding sending action is denoted
  $s_{LO_i}(UF_i(d_{U_i}))$);
  \item The lowest layer $1$ of peer $P$ receives data from the layer $2$ of peer $P$ which is denoted $d_{U_1}$ through channel $UI_1$ (the corresponding reading action is denoted $r_{UI_1}(d_{U_1})$),
  then processes the data, and sends the processed data to another layers peer $P'$ which is denoted $UF_1(d_{U_1})$ through channel $LO_1$ (the corresponding sending action is denoted
  $s_{LO_1}(UF_1(d_{U_1}))$);
  \item The lowest layer $1'$ of $P'$ receives data from the another layers peer $P$ which is denoted $d_{L_{1'}}$ through channel $LI_{1'}$
  (the corresponding reading action is denoted $r_{LI_{1'}}(d_{L_{1'}})$),
  then processes the data, and sends the processed data to layer $2$ of $P'$ which is denoted $LF_{1'}(d_{L_{1'}})$ through channel $UO_{1'}$ (the corresponding sending action is denoted
  $s_{UO_{1'}}(LF_{1'}(d_{L_{1'}}))$);
  \item The layer $i'$ of peer $P'$ receives data from the layer $i'-1$ of peer $P'$ which is denoted $d_{L_{i'}}$ through channel $LI_{i'}$
  (the corresponding reading action is denoted $r_{LI_{i'}}(d_{L_{i'}})$),
  then processes the data, and sends the processed data to layer $i'+1$ of peer $P'$ which is denoted $LF_{i'}(d_{L_{i'}})$ through channel $UO_{i'}$
  (the corresponding sending action is denoted
  $s_{UO_{i'}}(LF_{i'}(d_{L_{i'}}))$);
  \item The highest layer $n'$ of peer $P'$ receives data from layer $n'-1$ of peer $P'$ which is denoted $d_{L_{n'}}$ through channel $LI_{n'}$
  (the corresponding reading action is denoted $r_{LI_{n'}}(d_{L_{n'}})$),
  then processes the data, and sends the processed data to the application of peer $P'$ which is denoted $LF_{n'}(d_{L_{n'}})$ through channel $UO_{n'}$
  (the corresponding sending action is denoted
  $s_{UO_{n'}}(LF_{n'}(d_{L_{n'}}))$).
\end{enumerate}

The other similar process is data transferred from peer $P'$ to peer $P$, we do not repeat again and omit it.

\begin{figure}
    \centering
    \includegraphics{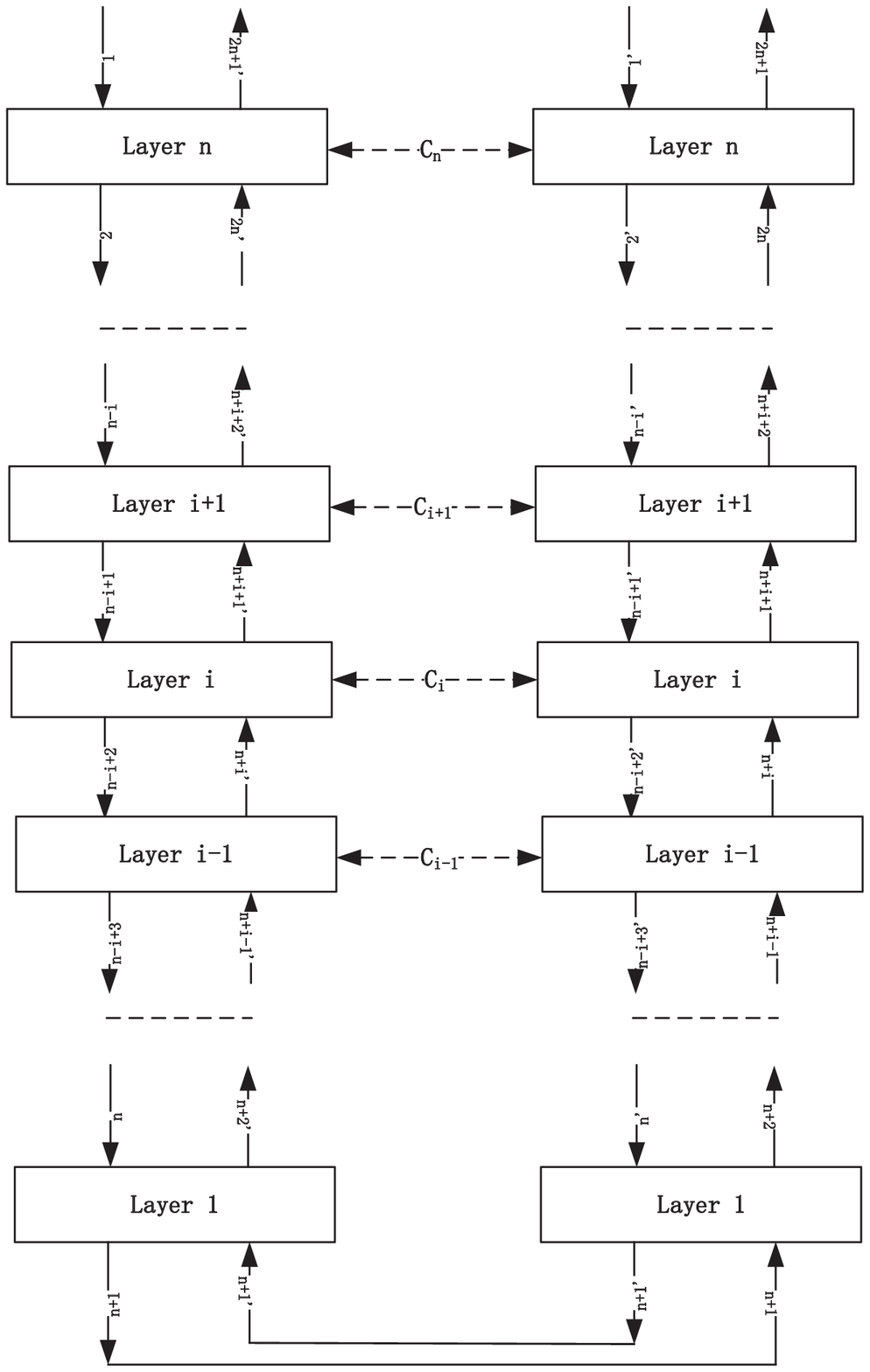}
    \caption{Typical process 1 of two layers peers}
    \label{LS23P}
\end{figure}

The verification of two layers peers is as follows.

We also assume all data elements $d_{U_i}$, $d_{L_i}$, $d_{U_{i'}}$ and $d_{L_{i'}}$ (for $1\leq i,i'\leq n$) are from a finite set $\Delta$. The state transitions of layer $i$ (for $1\leq i\leq n$)
described by APTC are as follows.

$L_i=\sum_{d_{U_i},d_{L_i}\in\Delta}(r_{UI_i}(d_{U_i})\cdot L_{i_2}\between r_{LI_i}(d_{L_i})\cdot L_{i_3})$

$L_{i_2}=UF_i\cdot L_{i_4}$

$L_{i_3}=LF_i\cdot L_{i_5}$

$L_{i_4}=\sum_{d_{U_i}\in\Delta}(s_{LO_i}(UF_i(d_{U_i}))\cdot L_i)$

$L_{i_5}=\sum_{d_{L_i}\in\Delta}(s_{UO_i}(LF_i(d_{L_i}))\cdot L_i)$

The state transitions of layer $i'$ (for $1\leq i'\leq n$)
described by APTC are as follows.

$L_{i'}=\sum_{d_{U_{i'}},d_{L_{i'}}\in\Delta}(r_{UI_{i'}}(d_{U_{i'}})\cdot L_{i'_2}\between r_{LI_{i'}}(d_{L_{i'}})\cdot L_{i'_3})$

$L_{i'_2}=UF_{i'}\cdot L_{i'_4}$

$L_{i'_3}=LF_{i'}\cdot L_{i'_5}$

$L_{i'_4}=\sum_{d_{U_{i'}}\in\Delta}(s_{LO_{i'}}(UF_{i'}(d_{U_{i'}}))\cdot L_{i'})$

$L_{i'_5}=\sum_{d_{L_{i'}}\in\Delta}(s_{UO_{i'}}(LF_{i'}(d_{L_{i'}}))\cdot L_{i'})$

The sending action and the reading action of the same data through the same channel can communicate with each other, otherwise, will cause a deadlock $\delta$. We define the following
communication functions for $1\leq i\leq n$ and $1\leq i'\leq n$. Note that, the channel of $LO_{i+1}$ of layer $i+1$ and the channel $UI_i$ of layer $i$ are the same one channel, and the channel $LI_{i+1}$ of layer $i+1$ and the
channel $UO_i$ of layer $i$ are the same one channel, the channel of $LO_{i'+1}$ of layer $i'+1$ and the channel $UI_{i'}$ of layer $i'$ are the same one channel, and the channel $LI_{i'+1}$ of layer $i'+1$ and the
channel $UO_{i'}$ of layer $i'$ are the same one channel. And also the data $d_{L_{i+1}}$ of layer $i+1$ and the data $LF_i(d_{L_i})$ of layer $i$ are the same data, and the data
$UF_{i+1}(d_{U_{i+1}})$ of layer $i+1$ and the data $d_{U_i}$ of layer $i$ are the same data; the data $d_{L_{i'+1}}$ of layer $i'+1$ and the data $LF_{i'}(d_{L_{i'}})$ of layer $i'$ are the same data, and the data
$UF_{i'+1}(d_{U_{i'+1}})$ of layer $i'+1$ and the data $d_{U_{i'}}$ of layer $i'$ are the same data.

$$\gamma(r_{UI_i}(d_{U_i}),s_{LO_{i+1}}(UF_{i+1}(d_{U_{i+1}})))\triangleq c_{UI_i}(d_{U_i})$$
$$\gamma(r_{LI_i}(d_{L_i}),s_{UO_{i-1}}(LF_{i-1}(d_{L_{i-1}})))\triangleq c_{LI_i}(d_{L_i})$$
$$\gamma(r_{UI_{i-1}}(d_{U_{i-1}}),s_{LO_{i}}(UF_{i}(d_{U_{i}})))\triangleq c_{UI_{i-1}}(d_{U_{i-1}})$$
$$\gamma(r_{LI_{i+1}}(d_{L_{i+1}}),s_{UO_{i}}(LF_{i}(d_{L_{i}})))\triangleq c_{LI_{i+1}}(d_{L_{i+1}})$$

$$\gamma(r_{UI_{i'}}(d_{U_{i'}}),s_{LO_{i'+1}}(UF_{i'+1}(d_{U_{i'+1}})))\triangleq c_{UI_{i'}}(d_{U_{i'}})$$
$$\gamma(r_{LI_{i'}}(d_{L_{i'}}),s_{UO_{i'-1}}(LF_{i'-1}(d_{L_{i'-1}})))\triangleq c_{LI_{i'}}(d_{L_{i'}})$$
$$\gamma(r_{UI_{i'-1}}(d_{U_{i'-1}}),s_{LO_{i'}}(UF_{i'}(d_{U_{i'}})))\triangleq c_{UI_{i'-1}}(d_{U_{i'-1}})$$
$$\gamma(r_{LI_{i'+1}}(d_{L_{i'+1}}),s_{UO_{i'}}(LF_{i}(d_{L_{i'}})))\triangleq c_{LI_{i'+1}}(d_{L_{i'+1}})$$

Note that, for the layer $n$, there are only two communication functions as follows.

$$\gamma(r_{LI_n}(d_{L_n}),s_{UO_{n-1}}(LF_{n-1}(d_{L_{n-1}})))\triangleq c_{LI_n}(d_{L_n})$$
$$\gamma(r_{UI_{n-1}}(d_{U_{n-1}}),s_{LO_{n}}(UF_{n}(d_{U_{n}})))\triangleq c_{UI_{n-1}}(d_{U_{n-1}})$$

For the layer $n'$, there are only two communication functions as follows.

$$\gamma(r_{LI_{n'}}(d_{L_{n'}}),s_{UO_{n'-1}}(LF_{n'-1}(d_{L_{n'-1}})))\triangleq c_{LI_{n'}}(d_{L_{n'}})$$
$$\gamma(r_{UI_{n'-1}}(d_{U_{n'-1}}),s_{LO_{n'}}(UF_{n'}(d_{U_{n'}})))\triangleq c_{UI_{n'-1}}(d_{U_{n'-1}})$$

For the layer $1$, there are four communication functions as follows.

$$\gamma(r_{UI_1}(d_{U_1}),s_{LO_{2}}(UF_{2}(d_{U_{2}})))\triangleq c_{UI_1}(d_{U_1})$$
$$\gamma(r_{LI_{2}}(d_{L_{2}}),s_{UO_{1}}(LF_{1}(d_{L_{1}})))\triangleq c_{LI_{2}}(d_{L_{2}})$$
$$\gamma(r_{LI_1}(d_{L_1}),s_{LO_{1'}}(UF_{1'}(d_{U_{1'}})))\triangleq c_{LI_1}(d_{L_1})$$
$$\gamma(r_{LI_{1'}}(d_{L_{1'}}),s_{LO_{1}}(UF_{1}(d_{U_{1}})))\triangleq c_{LI_{1'}}(d_{L_{1'}})$$

And for the layer $1'$, there are four communication functions as follows.

$$\gamma(r_{UI_{1'}}(d_{U_{1'}}),s_{LO_{2'}}(UF_{2'}(d_{U_{2'}})))\triangleq c_{UI_{1'}}(d_{U_{1'}})$$
$$\gamma(r_{LI_{2'}}(d_{L_{2'}}),s_{UO_{1'}}(LF_{1'}(d_{L_{1'}})))\triangleq c_{LI_{2'}}(d_{L_{2'}})$$
$$\gamma(r_{LI_1}(d_{L_1}),s_{LO_{1'}}(UF_{1'}(d_{U_{1'}})))\triangleq c_{LI_1}(d_{L_1})$$
$$\gamma(r_{LI_{1'}}(d_{L_{1'}}),s_{LO_{1}}(UF_{1}(d_{U_{1}})))\triangleq c_{LI_{1'}}(d_{L_{1'}})$$

Let all layers from layer $n$ to layer $1$ and from layer $1'$ to $n'$ be in parallel, then the Layers pattern $L_n\cdots L_i\cdots L_1L_{1'}\cdots L_{i'}\cdots L_{n'}$ can be presented by the following process term.

$\tau_I(\partial_H(\Theta(L_n\between\cdots\between L_i\between\cdots\between L_1\between L_{1'}\between \cdots\between L_{i'}\between \cdots\between L_{n'})))
=\tau_I(\partial_H(L_n\between\cdots\between L_i\between\cdots\between L_1\between L_{1'}\between \cdots\between L_{i'}\between \cdots\between L_{n'}))$

where $H=\{r_{LI_1}(d_{L_1}),s_{LO_{1}}(UF_{1}(d_{U_{1}})),r_{UI_1}(d_{U_1}),s_{UO_{1}}(LF_{1}(d_{L_{1}})),\cdots, r_{UI_i}(d_{U_i}), r_{LI_i}(d_{L_i}), \\
s_{LO_{i}}(UF_{i}(d_{U_{i}})),s_{UO_{i}}(LF_{i}(d_{L_{i}})),\cdots,r_{LI_n}(d_{L_n}),s_{LO_{n}}(UF_{n}(d_{U_{n}})),\\
r_{LI_{1'}}(d_{L_{1'}}),s_{LO_{1'}}(UF_{1}(d_{U_{1'}})),r_{UI_{1'}}(d_{U_{1'}}),s_{UO_{1'}}(LF_{1'}(d_{L_{1'}})),\cdots, r_{UI_{i'}}(d_{U_{i'}}), r_{LI_{i'}}(d_{L_{i'}}), \\
s_{LO_{i'}}(UF_{i'}(d_{U_{i'}})),s_{UO_{i'}}(LF_{i'}(d_{L_{i'}})),\cdots,r_{LI_{n'}}(d_{L_{n'}}),s_{LO_{n'}}(UF_{n'}(d_{U_{n'}}))\\
|d_{U_1},d_{L_1},\cdots,d_{U_i},d_{L_i}\cdots,d_{U_n},d_{L_n},d_{U_{1'}},d_{L_{1'}},\cdots,d_{U_{i'}},d_{L_{i'}}\cdots,d_{U_{n'}},d_{L_{n'}}\in\Delta\}$,

$I=\{c_{UI_1}(d_{U_1}),c_{LI_1}(d_{L_1}),c_{LI_{2}}(d_{L_{2}}),\cdots,c_{UI_i}(d_{U_i}),c_{LI_i}(d_{L_i}),c_{UI_{i-1}}(d_{U_{i-1}}),c_{LI_{i+1}}(d_{L_{i+1}}),
\cdots,\\c_{LI_n}(d_{L_n}),c_{UI_{n-1}}(d_{U_{n-1}}),LF_1,UF_1,\cdots,LF_i,UF_i,\cdots,LF_n,UF_n,\\
c_{UI_{1'}}(d_{U_{1'}}),c_{LI_{1'}}(d_{L_{1'}}),c_{LI_{2'}}(d_{L_{2'}}),\cdots,c_{UI_{i'}}(d_{U_{i'}}),c_{LI_{i'}}(d_{L_{i'}}),c_{UI_{i'-1}}(d_{U_{i'-1}}),c_{LI_{i'+1}}(d_{L_{i'+1}}),
\cdots,\\c_{LI_{n'}}(d_{L_{n'}}),c_{UI_{n'-1}}(d_{U_{n'-1}}),LF_{1'},UF_{1'},\cdots,LF_{i'},UF_{i'},\cdots,LF_{n'},UF_{n'}\\
|d_{U_1},d_{L_1},\cdots,d_{U_i},d_{L_i}\cdots,d_{U_n},d_{L_n},d_{U_{1'}},d_{L_{1'}},\cdots,d_{U_{i'}},d_{L_{i'}}\cdots,d_{U_{n'}},d_{L_{n'}}\in\Delta\}$.

Then we get the following conclusion on the Layers pattern.

\begin{theorem}[Correctness of two layers peers]
The two layers peers $\tau_I(\partial_H(L_n\between\cdots\between L_i\between\cdots\between L_1\between L_{1'}\between \cdots\between L_{i'}\between \cdots\between L_{n'}))$ can exhibit desired external behaviors.
\end{theorem}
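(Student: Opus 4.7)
The plan is to adapt the proof of the single-peer Layers pattern correctness theorem, treating the two-peer composition as two instances of the single-peer analysis connected by a pair of extra communications at the bottom. First I would expand the parallel composition $L_n\between\cdots\between L_1\between L_{1'}\between\cdots\between L_{n'}$ using axioms $P1$--$P10$ and $C11$--$C18$, which distributes the whole-parallel operator $\between$ over the sums arising from the recursive definitions of each $L_i$ and $L_{i'}$. The result is a large sum of products of parallel atomic actions, each step corresponding to a possible synchronization pattern between adjacent layers.

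Next, I would apply the encapsulation operator $\partial_H$ using $D1$--$D6$ and $A6$--$A7$. By the definition of $H$, every raw send/receive pair that is not part of a valid communication is mapped to $\delta$ and thereby absorbed, leaving only the terms that correspond to a complete synchronization along some $c_{UI_i}$, $c_{LI_i}$, $c_{UI_{i'}}$, $c_{LI_{i'}}$, $c_{LI_1}$ or $c_{LI_{1'}}$ channel. Writing $\partial_H(L_n\between\cdots\between L_1\between L_{1'}\between\cdots\between L_{n'})$ as $\langle X_1|E\rangle$ for a suitable guarded linear recursive specification $E$ (obtained just as in the single-peer case, but with two chains of layers and two crossover communications at the bottom), the solution is uniquely determined by $RDP$ and $RSP$. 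Applying $\tau_I$ via $TI1$--$TI6$ then renames every internal communication and every processing action $UF_i, LF_i, UF_{i'}, LF_{i'}$ into $\tau$, and $CFAR$ collapses the resulting chains of silent steps.

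What remains after abstraction is a guarded linear recursion whose body consists only of the externally visible reads $r_{UI_n}(d_{U_n}),r_{UI_{n'}}(d_{U_{n'}})$ (at the two application tops) and the externally visible sends $s_{UO_n}(LF_n(d_{L_n}))$, $s_{UO_{n'}}(LF_{n'}(d_{L_{n'}}))$, cycling back to the initial state. Concretely, the expected closed form is
\[
\tau_I(\partial_H(L_n\between\cdots\between L_1\between L_{1'}\between\cdots\between L_{n'}))
= \sum (r_{UI_n}(d_{U_n})\parallel r_{UI_{n'}}(d_{U_{n'}}))\cdot(s_{UO_n}(LF_n(d_{L_n}))\parallel s_{UO_{n'}}(LF_{n'}(d_{L_{n'}})))\cdot\tau_I(\partial_H(\cdots))
\]
where the sum ranges over $d_{U_n},d_{U_{n'}},d_{L_n},d_{L_{n'}}\in\Delta$. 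By soundness of $APTC_\tau$ with guarded linear recursion and $CFAR$ (Theorems \ref{SAPTCABS} and \ref{SCFAR}), this equality holds modulo $\approx_{rbs}$, $\approx_{rbp}$ and $\approx_{rbhp}$, which is the desired external behavior.

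The main obstacle will be the bookkeeping of the two symmetric data-flow directions across the peer boundary: data descending through peer $P$ must be shown to emerge correctly at the top of peer $P'$, and simultaneously the reverse direction must hold, with all the intermediate synchronizations properly paired. To control this complexity I would invoke the single-peer theorem proved immediately above to treat each half-stack as a black box, and then verify only the two new communication functions $\gamma(r_{LI_1}(\cdot),s_{LO_{1'}}(\cdot))$ and $\gamma(r_{LI_{1'}}(\cdot),s_{LO_1}(\cdot))$ explicitly. As in the preceding theorem, the remaining algebraic manipulation is routine and may be referred to Section \ref{app}.
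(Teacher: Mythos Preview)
Your proposal is correct and follows essentially the same approach as the paper: expand the composition with the APTC axioms, apply $\partial_H$ to kill unmatched sends/receives, cast the result as a guarded linear recursive specification, and then use $\tau_I$ with $CFAR$ to obtain the closed form $\sum_{d_{U_n},d_{L_n},d_{U_{n'}},d_{L_{n'}}\in\Delta}((r_{UI_n}(d_{U_n})\parallel r_{UI_{n'}}(d_{U_{n'}}))\cdot(s_{UO_n}(LF_n(d_{L_n}))\parallel s_{UO_{n'}}(LF_{n'}(d_{L_{n'}}))))\cdot\tau_I(\partial_H(\cdots))$, which is exactly the paper's stated result. The paper's own proof simply asserts this closed form ``by use of the algebraic laws of APTC'' and defers the mechanics to Section~\ref{app}, so your plan is in fact more explicit about the axiom groups used at each stage; your closing idea of reusing the single-peer theorem as a black box for each stack is a mild organizational variant the paper does not spell out here (it carries out the direct expansion), but it is sound and is precisely the modular strategy the paper later adopts in Section~\ref{CL8}.
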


\begin{proof}
Based on the above state transitions of layer $i$ and $i'$ (for $1\leq i,i'\leq n$), by use of the algebraic laws of APTC, we can prove that

$\tau_I(\partial_H(L_n\between\cdots\between L_i\between\cdots\between L_1\between L_{1'}\between \cdots\between L_{i'}\between \cdots\between L_{n'}))=\sum_{d_{U_n},d_{L_n},d_{U_{n'}},d_{L_{n'}}\in\Delta}((r_{UI_n}(d_{U_n})\parallel r_{UI_{n'}}(d_{U_{n'}}))
\cdot(s_{UO_n}(LF_n(d_{L_n}))\parallel s_{UO_{n'}}(LF_{n'}(d_{L_{n'}}))))\cdot \tau_I(\partial_H(L_n\between\cdots\between L_i\between\cdots\between L_1\between L_{1'}\between \cdots\between L_{i'}\between \cdots\between L_{n'}))$,

that is, the two layers peers $\tau_I(\partial_H(L_n\between\cdots\between L_i\between\cdots\between L_1\between L_{1'}\between \cdots\between L_{i'}\between \cdots\between L_{n'}))$ can exhibit desired external behaviors.

For the details of proof, please refer to section \ref{app}, and we omit it.
\end{proof}

There exists another composition of two layers peers. There are communications between two peers's peer layers which are called virtual communication. Virtual communications are specified
by communication protocols, and we assume data transferred between layer $i$ through virtual communications. And the two typical processes are illustrated in Figure \ref{LS23P2}. The
process from peer $P$ to peer $P'$ is as follows.

\begin{enumerate}
  \item The highest layer $n$ of peer $P$ receives data from the application of peer $P$ which is denoted $d_{U_n}$ through channel $UI_n$ (the corresponding reading action is denoted $r_{UI_n}(d_{U_n})$),
  then processes the data, and sends the processed data to layer $n-1$ of peer $P$ which is denoted $UF_n(d_{U_n})$ through channel $LO_n$ (the corresponding sending action is denoted
  $s_{LO_n}(UF_n(d_{U_n}))$);
  \item The layer $i$ of peer $P$ receives data from the layer $i+1$ of peer $P$ which is denoted $d_{U_i}$ through channel $UI_i$ (the corresponding reading action is denoted $r_{UI_i}(d_{U_i})$),
  then processes the data, and sends the processed data to layer $i$ of peer $P'$ which is denoted $UF_i(d_{U_i})$ through channel $LO_i$ (the corresponding sending action is denoted
  $s_{LO_i}(UF_i(d_{U_i}))$);
  \item The layer $i'$ of peer $P'$ receives data from the layer $i$ of peer $P$ which is denoted $d_{L_{i'}}$ through channel $LI_{i'}$
  (the corresponding reading action is denoted $r_{LI_{i'}}(d_{L_{i'}})$),
  then processes the data, and sends the processed data to layer $i'+1$ of peer $P'$ which is denoted $LF_{i'}(d_{L_{i'}})$ through channel $UO_{i'}$
  (the corresponding sending action is denoted $s_{UO_{i'}}(LF_{i'}(d_{L_{i'}}))$);
  \item The highest layer $n'$ of peer $P'$ receives data from layer $n'-1$ of peer $P'$ which is denoted $d_{L_{n'}}$ through channel $LI_{n'}$
  (the corresponding reading action is denoted $r_{LI_{n'}}(d_{L_{n'}})$),
  then processes the data, and sends the processed data to the application of peer $P'$ which is denoted $LF_{n'}(d_{L_{n'}})$ through channel $UO_{n'}$
  (the corresponding sending action is denoted
  $s_{UO_{n'}}(LF_{n'}(d_{L_{n'}}))$).
\end{enumerate}

The other similar process is data transferred from $P'$ to $P$, we do not repeat again and omit it.

\begin{figure}
    \centering
    \includegraphics{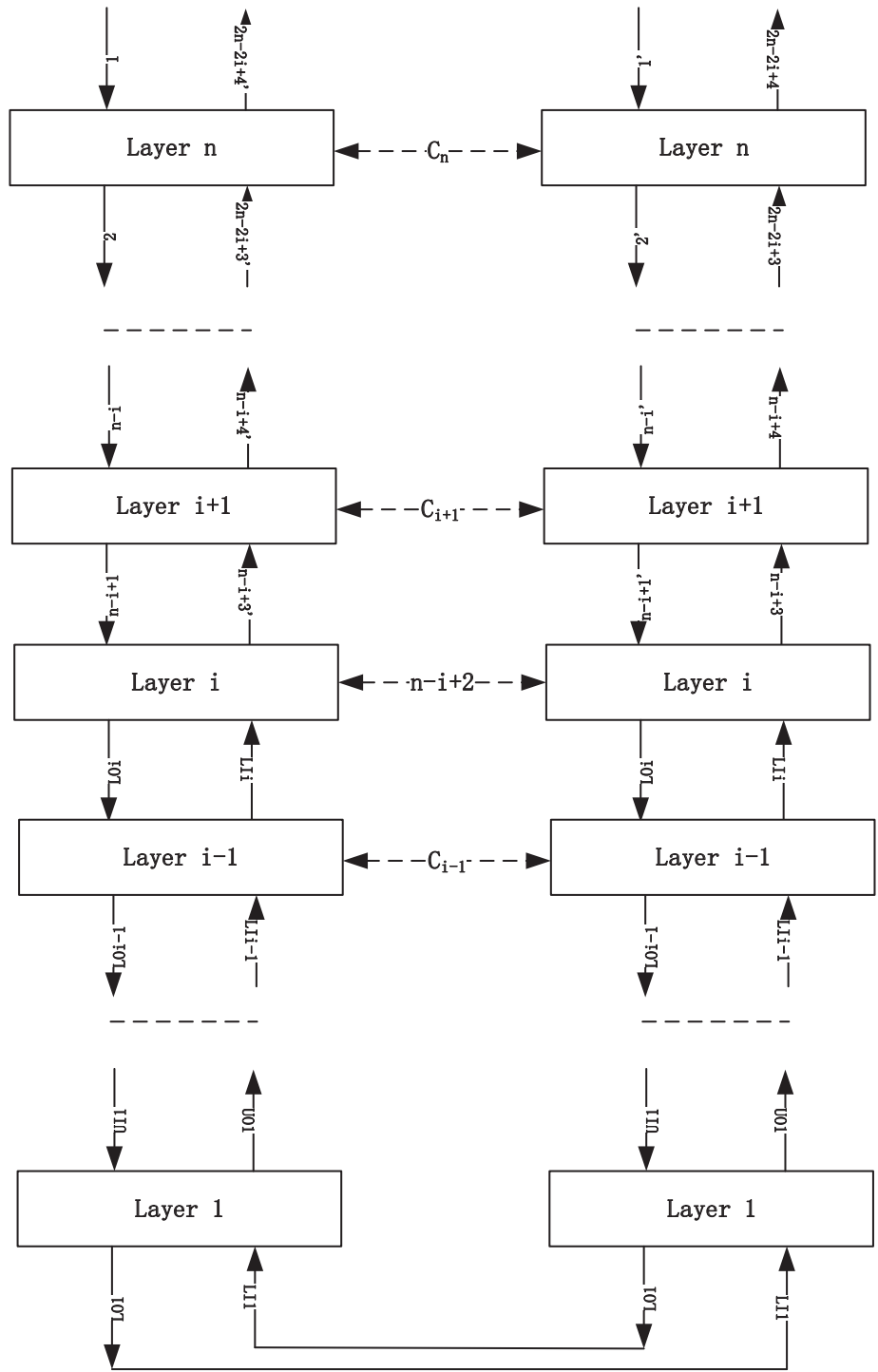}
    \caption{Typical process 2 of two layers peers}
    \label{LS23P2}
\end{figure}

The verification of two layers peers's communication through virtual communication is as follows.

We also assume all data elements $d_{U_i}$, $d_{L_i}$, $d_{U_{i'}}$ and $d_{L_{i'}}$ (for $1\leq i,i'\leq n$) are from a finite set $\Delta$. The state transitions of layer $i$ (for $1\leq i\leq n$)
described by APTC are as follows.

$L_i=\sum_{d_{U_i},d_{L_i}\in\Delta}(r_{UI_i}(d_{U_i})\cdot L_{i_2}\between r_{LI_i}(d_{L_i})\cdot L_{i_3})$

$L_{i_2}=UF_i\cdot L_{i_4}$

$L_{i_3}=LF_i\cdot L_{i_5}$

$L_{i_4}=\sum_{d_{U_i}\in\Delta}(s_{LO_i}(UF_i(d_{U_i}))\cdot L_i)$

$L_{i_5}=\sum_{d_{L_i}\in\Delta}(s_{UO_i}(LF_i(d_{L_i}))\cdot L_i)$

The state transitions of layer $i'$ (for $1\leq i'\leq n$)
described by APTC are as follows.

$L_{i'}=\sum_{d_{U_{i'}},d_{L_{i'}}\in\Delta}(r_{UI_{i'}}(d_{U_{i'}})\cdot L_{i'_2}\between r_{LI_{i'}}(d_{L_{i'}})\cdot L_{i'_3})$

$L_{i'_2}=UF_{i'}\cdot L_{i'_4}$

$L_{i'_3}=LF_{i'}\cdot L_{i'_5}$

$L_{i'_4}=\sum_{d_{U_{i'}}\in\Delta}(s_{LO_{i'}}(UF_{i'}(d_{U_{i'}}))\cdot L_{i'})$

$L_{i'_5}=\sum_{d_{L_{i'}}\in\Delta}(s_{UO_{i'}}(LF_{i'}(d_{L_{i'}}))\cdot L_{i'})$

The sending action and the reading action of the same data through the same channel can communicate with each other, otherwise, will cause a deadlock $\delta$. We define the following
communication functions for $1\leq i\leq n$ and $1\leq i'\leq n$. Note that, the channel of $LO_{i+1}$ of layer $i+1$ and the channel $UI_i$ of layer $i$ are the same one channel, and the channel $LI_{i+1}$ of layer $i+1$ and the
channel $UO_i$ of layer $i$ are the same one channel, the channel of $LO_{i'+1}$ of layer $i'+1$ and the channel $UI_{i'}$ of layer $i'$ are the same one channel, and the channel $LI_{i'+1}$ of layer $i'+1$ and the
channel $UO_{i'}$ of layer $i'$ are the same one channel. And also the data $d_{L_{i+1}}$ of layer $i+1$ and the data $LF_i(d_{L_i})$ of layer $i$ are the same data, and the data
$UF_{i+1}(d_{U_{i+1}})$ of layer $i+1$ and the data $d_{U_i}$ of layer $i$ are the same data; the data $d_{L_{i'+1}}$ of layer $i'+1$ and the data $LF_{i'}(d_{L_{i'}})$ of layer $i'$ are the same data, and the data
$UF_{i'+1}(d_{U_{i'+1}})$ of layer $i'+1$ and the data $d_{U_{i'}}$ of layer $i'$ are the same data.

For the layer $i$, there are four communication functions as follows.

$$\gamma(r_{UI_i}(d_{U_i}),s_{LO_{i+1}}(UF_{i+1}(d_{U_{i+1}})))\triangleq c_{UI_i}(d_{U_i})$$
$$\gamma(r_{LI_{i+1}}(d_{L_{i+1}}),s_{UO_{i}}(LF_{i}(d_{L_{i}})))\triangleq c_{LI_{i+1}}(d_{L_{i+1}})$$
$$\gamma(r_{LI_i}(d_{L_i}),s_{LO_{i'}}(UF_{i'}(d_{U_{i'}})))\triangleq c_{LI_i}(d_{L_i})$$
$$\gamma(r_{LI_{i'}}(d_{L_{i'}}),s_{LO_{i}}(UF_{i}(d_{U_{i}})))\triangleq c_{LI_{i'}}(d_{L_{i'}})$$

For the layer $i'$, there are four communication functions as follows.

$$\gamma(r_{UI_{i'}}(d_{U_{i'}}),s_{LO_{i'+1}}(UF_{i'+1}(d_{U_{i'+1}})))\triangleq c_{UI_{i'}}(d_{U_{i'}})$$
$$\gamma(r_{LI_{i'+1}}(d_{L_{i'+1}}),s_{UO_{i'}}(LF_{i'}(d_{L_{i'}})))\triangleq c_{LI_{i'+1}}(d_{L_{i'+1}})$$
$$\gamma(r_{LI_i}(d_{L_i}),s_{LO_{i'}}(UF_{i'}(d_{U_{i'}})))\triangleq c_{LI_i}(d_{L_i})$$
$$\gamma(r_{LI_{i'}}(d_{L_{i'}}),s_{LO_{i}}(UF_{i}(d_{U_{i}})))\triangleq c_{LI_{i'}}(d_{L_{i'}})$$

Note that, for the layer $n$, there are only two communication functions as follows.

$$\gamma(r_{LI_n}(d_{L_n}),s_{UO_{n-1}}(LF_{n-1}(d_{L_{n-1}})))\triangleq c_{LI_n}(d_{L_n})$$
$$\gamma(r_{UI_{n-1}}(d_{U_{n-1}}),s_{LO_{n}}(UF_{n}(d_{U_{n}})))\triangleq c_{UI_{n-1}}(d_{U_{n-1}})$$

And for the layer $n'$, there are only two communication functions as follows.

$$\gamma(r_{LI_{n'}}(d_{L_{n'}}),s_{UO_{n'-1}}(LF_{n'-1}(d_{L_{n'-1}})))\triangleq c_{LI_{n'}}(d_{L_{n'}})$$
$$\gamma(r_{UI_{n'-1}}(d_{U_{n'-1}}),s_{LO_{n'}}(UF_{n'}(d_{U_{n'}})))\triangleq c_{UI_{n'-1}}(d_{U_{n'-1}})$$

Let all layers from layer $n$ to layer $i$ be in parallel, then the Layers pattern $L_n\cdots L_iL_{i'}\cdots L_{n'}$ can be presented by the following process term.

$\tau_I(\partial_H(\Theta(L_n\between\cdots\between L_i\between L_{1'}\between \cdots\between L_{i'}\between \cdots\between L_{n'})))
=\tau_I(\partial_H(L_n\between\cdots\between L_i\between L_{i'}\between \cdots\between L_{n'}))$

where $H=\{r_{UI_i}(d_{U_i}), r_{LI_i}(d_{L_i}), s_{LO_{i}}(UF_{i}(d_{U_{i}})),s_{UO_{i}}(LF_{i}(d_{L_{i}})),\cdots,r_{LI_n}(d_{L_n}),s_{LO_{n}}(UF_{n}(d_{U_{n}})),\\
r_{UI_{i'}}(d_{U_{i'}}), r_{LI_{i'}}(d_{L_{i'}}), s_{LO_{i'}}(UF_{i'}(d_{U_{i'}})),s_{UO_{i'}}(LF_{i'}(d_{L_{i'}})),\cdots,r_{LI_{n'}}(d_{L_{n'}}),s_{LO_{n'}}(UF_{n'}(d_{U_{n'}}))\\
|d_{U_i},d_{L_i}\cdots,d_{U_n},d_{L_n},d_{U_{i'}},d_{L_{i'}}\cdots,d_{U_{n'}},d_{L_{n'}}\in\Delta\}$,

$I=\{c_{UI_i}(d_{U_i}),c_{LI_i}(d_{L_i}),c_{LI_{i+1}}(d_{L_{i+1}}),\cdots,c_{LI_n}(d_{L_n}),c_{UI_{n-1}}(d_{U_{n-1}}),LF_i,UF_i,\cdots,LF_n,UF_n,\\
c_{UI_{i'}}(d_{U_{i'}}),c_{LI_{i'}}(d_{L_{i'}}),c_{LI_{i'+1}}(d_{L_{i'+1}}),\cdots,c_{LI_{n'}}(d_{L_{n'}}),c_{UI_{n'-1}}(d_{U_{n'-1}}),LF_{i'},UF_{i'},\cdots,LF_{n'},UF_{n'}\\
|d_{U_i},d_{L_i}\cdots,d_{U_n},d_{L_n},d_{U_{i'}},d_{L_{i'}}\cdots,d_{U_{n'}},d_{L_{n'}}\in\Delta\}$.

Then we get the following conclusion on the Layers pattern.

\begin{theorem}[Correctness of two layers peers via virtual communication]
The two layers peers via virtual communication $\tau_I(\partial_H(L_n\between\cdots\between L_i\between L_{i'}\between \cdots\between L_{n'}))$ can exhibit desired external behaviors.
\end{theorem}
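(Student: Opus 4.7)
The plan is to mirror the verification strategy used for the preceding two theorems (the single-stack Layers pattern and two peers linked at layer 1), expanding the parallel composition of the $2(n-i+1)$ layer processes via the APTC axioms and then collapsing the internal actions with $\tau_I$ after using $\partial_H$ to block non-matching send/receive pairs. Concretely, I would first introduce abbreviations $L_n\between\cdots\between L_i\between L_{i'}\between\cdots\between L_{n'}$ as a single recursion variable $\langle X_1|E\rangle$ of a guarded linear recursive specification $E$, with additional variables $X_{2},X_{3},\ldots$ capturing the intermediate states (post-$UF_k$, post-$LF_k$, post-send on $LO_k$, post-send on $UO_k$, and analogously primed for the peer $P'$).

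Next, I would carry out the expansion step-by-step along the downward chain from $L_n$ to $L_i$ in $P$. Using axioms $P1$--$P10$ and $C11$--$C18$ to rewrite $\between$ in terms of $\parallel$ and $\mid$, together with $A6,A7$ to kill the non-matching $\mid$ summands (which yield $\delta$), the only surviving interactions are the synchronisations $c_{UI_{k-1}}(d_{U_{k-1}})$ between $s_{LO_k}$ and $r_{UI_{k-1}}$ for $k=n,\ldots,i+1$, the symmetric $c_{LI_k}$ pairs in the upward direction, and critically the virtual-communication synchronisations $c_{LI_i}(d_{L_i})$ and $c_{LI_{i'}}(d_{L_{i'}})$ between the lowest exposed layers $L_i$ and $L_{i'}$ of the two peers. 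Applying $\partial_H$ blocks every un-paired send or receive mentioned in $H$, and $\tau_I$ then renames all $c_{\cdot}$ actions and all internal $UF_k,LF_k,UF_{k'},LF_{k'}$ transformations to $\tau$. After applying $B1$--$B3$ to absorb these $\tau$'s (using the fact that they are non-initial once a visible read has occurred), $CFAR$ to prevent any $\tau$-loops arising from the mutual recursion, and finally $RSP$ against a simplified specification $F$ whose only actions are the four externally visible ones, I expect to obtain
\begin{align*}
\tau_I(\partial_H(L_n\between\cdots\between L_i\between L_{i'}\between\cdots\between L_{n'}))
=\ & \textstyle\sum_{d_{U_n},d_{L_n},d_{U_{n'}},d_{L_{n'}}\in\Delta}\bigl((r_{UI_n}(d_{U_n})\parallel r_{UI_{n'}}(d_{U_{n'}}))\\
&\cdot(s_{UO_n}(LF_n(d_{L_n}))\parallel s_{UO_{n'}}(LF_{n'}(d_{L_{n'}})))\bigr)\\
&\cdot\tau_I(\partial_H(L_n\between\cdots\between L_i\between L_{i'}\between\cdots\between L_{n'})),
\end{align*}
which is precisely the desired external behaviour by the soundness of $APTC_\tau$ with guarded linear recursion (Theorem \ref{SAPTCABS}) and $CFAR$ (Theorem \ref{SCFAR}).

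The main obstacle I anticipate is bookkeeping for the virtual-communication cut at the pair $(L_i,L_{i'})$: unlike the earlier two-peer case (where the peers synchronise at layer $1$ with no further layers below), here the actions $r_{LI_i},s_{LO_i}$ of layer $i$ must synchronise across the peer boundary with $s_{LO_{i'}},r_{LI_{i'}}$ of $L_{i'}$ rather than with an adjacent lower layer on the same side, which means the communication functions and the set $H$ must be arranged so that the downward stream on one side dovetails correctly into the upward stream on the other. Once this interface is set up consistently in $H$ and $I$, however, the algebraic manipulation is formally identical to the preceding theorem; so I would merely invoke the argument there for everything above layer $i$ and $i'$ and treat the $(L_i\between L_{i'})$ sub-term as the analogue of the $(L_1\between L_{1'})$ handshake, thereby reducing the problem to the already-verified pattern and concluding by the same reference to section \ref{app}.
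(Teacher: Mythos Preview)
Your proposal is correct and follows essentially the same approach as the paper: expand the parallel composition via the APTC axioms, apply $\partial_H$ and $\tau_I$, and arrive at the recursive equation
\[
\tau_I(\partial_H(L_n\between\cdots\between L_i\between L_{i'}\between\cdots\between L_{n'}))=\sum_{d_{U_n},d_{L_n},d_{U_{n'}},d_{L_{n'}}\in\Delta}\bigl((r_{UI_n}(d_{U_n})\parallel r_{UI_{n'}}(d_{U_{n'}}))\cdot(s_{UO_n}(LF_n(d_{L_n}))\parallel s_{UO_{n'}}(LF_{n'}(d_{L_{n'}})))\bigr)\cdot\tau_I(\partial_H(\cdots)).
\]
In fact the paper's own proof is considerably terser than yours: it simply asserts this equation ``by use of the algebraic laws of APTC'' and defers the mechanics to the ABP example in section~\ref{app}, so your explicit identification of the relevant axiom groups ($P1$--$P10$, $C11$--$C18$, $A6$--$A7$, $B1$--$B3$, $RSP$, $CFAR$) and your discussion of the $(L_i,L_{i'})$ interface already goes beyond what the paper writes down.
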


\begin{proof}
Based on the above state transitions of layer $i$ and $i'$ (for $1\leq i,i'\leq n$), by use of the algebraic laws of APTC, we can prove that

$\tau_I(\partial_H(L_n\between\cdots\between L_i\between L_{i'}\between \cdots\between L_{n'}))=\sum_{d_{U_n},d_{L_n},d_{U_{n'}},d_{L_{n'}}\in\Delta}((r_{UI_n}(d_{U_n})\parallel r_{UI_{n'}}(d_{U_{n'}}))
\cdot(s_{UO_n}(LF_n(d_{L_n}))\parallel s_{UO_{n'}}(LF_{n'}(d_{L_{n'}}))))\cdot \tau_I(\partial_H(L_n\between\cdots\between L_i\between L_{i'}\between \cdots\between L_{n'}))$,

that is, the two layers peers via virtual communication $\tau_I(\partial_H(L_n\between\cdots\between L_i\between L_{i'}\between \cdots\between L_{n'}))$ can exhibit desired external behaviors.

For the details of proof, please refer to section \ref{app}, and we omit it.
\end{proof}

\subsubsection{Verification of the Pipes and Filters Pattern}

The Pipes and Filters pattern is used to process a stream of data with each processing step being encapsulated in a filter component. The data stream flows out of the data source, and
into the first filter; the first filter processes the data, and sends the processed data to the next filter; eventually, the data stream flows out of the pipes of filters and into the
data sink, as Figure \ref{PS3} illustrated, there are $n$ filters in the pipes. Especially, for filter $i$ ($1\leq i\leq n$), as illustrated in Figure \ref{P3}, it has an input channel
$I_i$ to read the data $d_i$, then processes the data via a processing function $FF_i$, finally send the processed data to the next filter through an output channel $O_i$.

\begin{figure}
    \centering
    \includegraphics{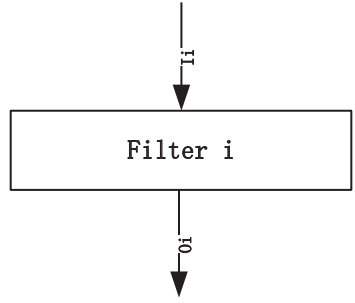}
    \caption{Filter i}
    \label{P3}
\end{figure}

\begin{figure}
    \centering
    \includegraphics{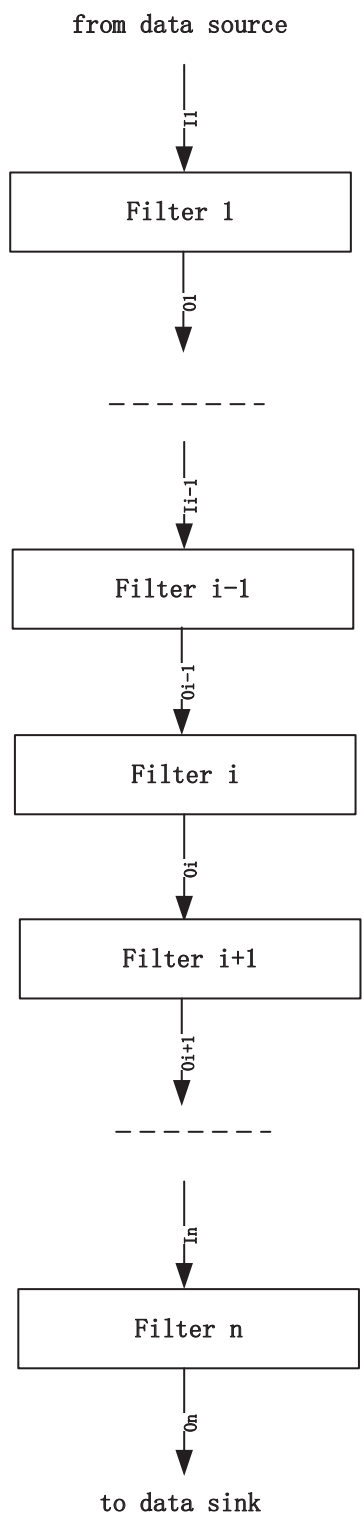}
    \caption{Pipes and Filters pattern}
    \label{PS3}
\end{figure}

There is one typical process in the Pipes and Filters pattern as illustrated in Figure \ref{PS3P} and follows.

\begin{enumerate}
  \item The filter $1$ receives the data from the data source which is denoted $d_1$ through the channel $I_1$ (the corresponding reading action is denoted $r_{I_1}(d_1)$), then processes
  the data through a processing function $FF_1$, and sends the processed data to the filter $2$ which is denoted $FF_1(d_1)$ through the channel $O_1$ (the corresponding sending action is
  denoted $s_{O_1}(FF_1(d_1))$);
  \item The filter $i$ receives the data from filter $i-1$ which is denoted $d_i$ through the channel $I_i$ (the corresponding reading action is denoted $r_{I_i}(d_i)$), then processes
  the data through a processing function $FF_i$, and sends the processed data to the filter $i+1$ which is denoted $FF_i(d_i)$ through the channel $O_i$ (the corresponding sending action is
  denoted $s_{O_i}(FF_i(d_i))$);
  \item The filter $n$ receives the data from filter $n-1$ which is denoted $d_n$ through the channel $I_n$ (the corresponding reading action is denoted $r_{I_n}(d_n)$), then processes
  the data through a processing function $FF_n$, and sends the processed data to the data sink which is denoted $FF_n(d_n)$ through the channel $O_n$ (the corresponding sending action is
  denoted $s_{O_n}(FF_1(d_n))$).
\end{enumerate}

\begin{figure}
    \centering
    \includegraphics{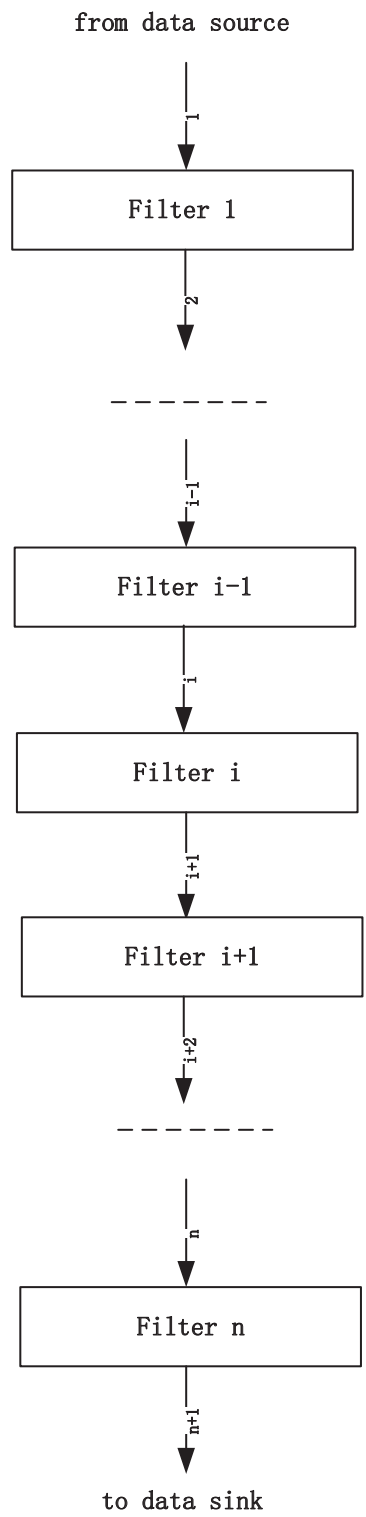}
    \caption{Typical process of Pipes and Filters pattern}
    \label{PS3P}
\end{figure}

In the following, we verify the Pipes and Filters pattern. We assume all data elements $d_{i}$ (for $1\leq i\leq n$) are from a finite set $\Delta$. The state transitions of filter $i$ (for $1\leq i\leq n$)
described by APTC are as follows.

$F_i=\sum_{d_{i},\in\Delta}(r_{I_i}(d_{i})\cdot F_{i_2})$

$F_{i_2}=FF_i\cdot F_{i_3}$

$F_{i_3}=\sum_{d_{i}\in\Delta}(s_{O_i}(FF_i(d_{i}))\cdot F_i)$

The sending action and the reading action of the same data through the same channel can communicate with each other, otherwise, will cause a deadlock $\delta$. We define the following
communication functions for $1\leq i\leq n$. Note that, the channel of $I_{i+1}$ of filter $i+1$ and the channel $O_i$ of filter $i$ are the same one channel. And also the data $d_{i+1}$ of filter $i+1$ and the data $FF_i(d_{i})$ of filter $i$ are the same data.

$$\gamma(r_{I_i}(d_{i}),s_{O_{i-1}}(FF_{i-1}(d_{i-1})))\triangleq c_{I_i}(d_{i})$$
$$\gamma(r_{I_{i+1}}(d_{i+1}),s_{O_{i}}(FF_{i}(d_{i})))\triangleq c_{I_{i+1}}(d_{i+1})$$

Note that, for the filter $n$, there are only one communication functions as follows.

$$\gamma(r_{I_n}(d_{n}),s_{O_{n-1}}(FF_{n-1}(d_{n-1})))\triangleq c_{I_n}(d_{n})$$

And for the filter $1$, there are also only one communication functions as follows.

$$\gamma(r_{I_{2}}(d_{2}),s_{O_{1}}(FF_{1}(d_{1})))\triangleq c_{I_{2}}(d_{2})$$

Let all filters from filter $1$ to filter $n$ be in parallel, then the Pipes and Filters pattern $F_1\cdots F_i\cdots F_n$ can be presented by the following process term.

$$\tau_I(\partial_H(\Theta(F_1\between\cdots\between F_i\between\cdots\between F_n)))=\tau_I(\partial_H(F_1\between\cdots\between F_i\between\cdots\between F_n))$$

where $H=\{s_{O_{1}}(FF_{1}(d_{1})),\cdots, r_{I_i}(d_{i}),s_{O_{i}}(FF_{i}(d_{i})),\cdots,r_{I_n}(d_{n})|d_{1},\cdots,d_{i},\cdots,d_{n}\in\Delta\}$,

$I=\{c_{I_2}(d_{2}),\cdots,c_{I_i}(d_{i}),\cdots,c_{I_n}(d_{n}),FF_1,\cdots,FF_i\cdots,FF_n|d_{1},\cdots,d_{i},\cdots,d_{n}\in\Delta\}$.

Then we get the following conclusion on the Pipes and Filters pattern.

\begin{theorem}[Correctness of the Pipes and Filters pattern]
The Pipes and Filters pattern $\tau_I(\partial_H(F_1\between\cdots\between F_i\between\cdots\between F_n))$ can exhibit desired external behaviors.
\end{theorem}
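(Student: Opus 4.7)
The plan is to mimic the pattern of proof used for the ABP protocol and the Layers pattern: expand the parallel composition using the APTC axioms, apply the encapsulation operator $\partial_H$ to force sends and reads on the same internal channel into communication (and to block mismatched ones as $\delta$), and then hide the internal communications and processing actions with $\tau_I$. First, I would unfold each $F_i$ by RDP to expose its three-step cycle $r_{I_i}(d_i)\cdot FF_i \cdot s_{O_i}(FF_i(d_i))\cdot F_i$. Then I would apply the expansion theorem via $P1$--$P10$ and $C11$--$C18$ to the whole parallel composition $F_1\between\cdots\between F_n$: since every pair of adjacent filters has exactly the matched $s_{O_i}/r_{I_{i+1}}$ complementary actions while all other cross-pair merges yield $\delta$ (and are then killed by $A6,A7$ after encapsulation), only a pipeline-shaped behavior survives.

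Next, I would push $\partial_H$ inwards by $D4$--$D6$ so that each internal matched pair $s_{O_i}(FF_i(d_i))\mid r_{I_{i+1}}(d_{i+1})$ is replaced by the communication action $c_{I_{i+1}}(d_{i+1})$, while the external actions $r_{I_1}(d_1)$ and $s_{O_n}(FF_n(d_n))$ remain untouched because they are not in $H$. At this point the behavior has the shape of a linear recursion specification with one equation per ``global'' pipeline state, capturing the flow of a datum through all $n$ filters, and I would introduce names $\langle X_k \mid E\rangle$ for these states exactly as in the ABP derivation. Then $\tau_I$ renames every $c_{I_i}(d_i)$ and every $FF_i$ into $\tau$, and by the axioms $B1,B2$ (absorption of $\tau$ after an action) and $TI1$--$TI6$ these internal steps collapse, leaving only $r_{I_1}(d_1)$ and $s_{O_n}(FF_n\circ\cdots\circ FF_1(d_1))$ visible per cycle.

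Finally, invoking $RSP$ on the resulting guarded linear recursion specification, I would conclude
\[
\tau_I(\partial_H(F_1\between\cdots\between F_n)) \;=\; \sum_{d_1\in\Delta} r_{I_1}(d_1)\cdot s_{O_n}\bigl(FF_n(\cdots FF_1(d_1)\cdots)\bigr)\cdot \tau_I(\partial_H(F_1\between\cdots\between F_n)),
\]
which is the desired external behavior of a pipeline: consume one datum at the source, emit the fully processed datum at the sink, and repeat.

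The main obstacle, as in the Layers proof, is bookkeeping rather than conceptual depth: the expansion of the $n$-fold $\between$ produces a large sum of terms, and one must verify that every cross-pair communication merge $s_{O_i}(\cdot)\mid s_{O_j}(\cdot)$, $r_{I_i}(\cdot)\mid r_{I_j}(\cdot)$, and non-adjacent $s_{O_i}\mid r_{I_j}$ (with $j\neq i+1$) yields $\delta$ and is absorbed by $A6,A7$ under $\partial_H$, so that only the neighbor-to-neighbor synchronizations $c_{I_{i+1}}(d_{i+1})$ remain. Once this is tamed, the verification reduces to the same $RSP$-on-linear-specification pattern already carried out for the Layers theorem, and so the routine parts are omitted with a reference to section~\ref{app}.
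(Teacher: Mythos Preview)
Your proposal is correct and follows essentially the same approach as the paper: the paper's proof simply states the target equation and defers the details to the ABP derivation in section~\ref{app}, which is precisely the RDP/expansion/$\partial_H$/$\tau_I$/RSP pipeline you spell out. The only cosmetic difference is that the paper writes the resulting specification as $\sum_{d_1,d_n\in\Delta}(r_{I_1}(d_1)\cdot s_{O_n}(FF_n(d_n)))\cdot\tau_I(\partial_H(\cdots))$, summing over input and output data independently rather than tracking the explicit composition $FF_n(\cdots FF_1(d_1)\cdots)$ as you do; this is consistent with the paper's standing convention that $d_{i+1}$ and $FF_i(d_i)$ denote the same datum, so the two forms agree.
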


\begin{proof}
Based on the above state transitions of filter $i$ (for $1\leq i\leq n$), by use of the algebraic laws of APTC, we can prove that

$\tau_I(\partial_H(F_1\between\cdots\between F_i\between\cdots\between F_n))=\sum_{d_{1},d_{n}\in\Delta}(r_{I_1}(d_{1})\cdot s_{O_n}(FF_n(d_{n})))\cdot
\tau_I(\partial_H(F_1\between\cdots\between F_i\between\cdots\between F_n))$,

that is, the Pipes and Filters pattern $\tau_I(\partial_H(F_1\between\cdots\between F_i\between\cdots\between F_n))$ can exhibit desired external behaviors.

For the details of proof, please refer to section \ref{app}, and we omit it.
\end{proof}

\subsubsection{Verification of the Blackboard Pattern}

The Blackboard pattern is used to solve problems with no deterministic solutions. In the Blackboard pattern, there are one Control module, one Blackboard module and several Knowledge
Source modules. When the Control module receives a request, it queries the Blackboard module the involved Knowledge Sources; then the Control module invokes the related Knowledge Sources;
Finally, the related Knowledge Sources update the Blackboard with the invoked results, as illustrated in Figure \ref{BB3}.

\begin{figure}
    \centering
    \includegraphics{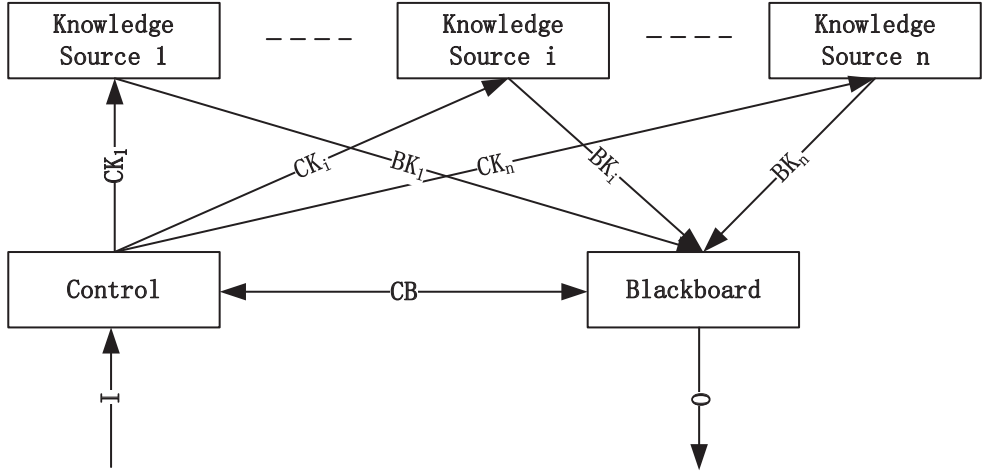}
    \caption{Blackboard pattern}
    \label{BB3}
\end{figure}

The typical process of the Blackboard pattern is illustrated in Figure \ref{BB3P} and as follows.

\begin{enumerate}
  \item The Control module receives the request from outside applications which is denoted $d_I$, through the input channel $I$ (the corresponding reading action is denoted $r_I(d_I)$),
  then processes the request through a processing function $CF_1$, and sends the processed data which is denoted $CF_1(d_I)$ to the Blackboard module through the channel $CB$ (the corresponding
  sending action is denoted $s_{CB}(CF_1(d_I))$);
  \item The Blackboard module receives the request (information of involved Knowledge Sources) from the Control module through the channel $CB$ (the corresponding reading action is
  denoted $r_{CB}(CF_1(d_I))$), then processes the request through a processing function $BF_1$, and generates and sends the response which is denoted $d_B$ to the Control module
  through the channel $CB$ (the corresponding sending action is denoted $s_{CB}(d_B)$);
  \item The Control module receives the data from the Blackboard module through the channel $CB$ (the corresponding reading action is denoted $r_{CB}(d_B)$), then processes the data through
  another processing function $CF_2$, and generates and sends the requests to the related Knowledge Sources which are denoted $d_{C_i}$ through the channels $CK_i$ (the corresponding
  sending action is denoted $s_{CK_i}(d_{C_i})$) with $1\leq i\leq n$;
  \item The Knowledge Source $i$ receives the request from the Control module through the channel $CK_i$ (the corresponding reading action is denoted $r_{CK_i}(d_{C_i})$), then processes
  the request through a processing function $KF_i$, and generates and sends the processed data $d_{K_i}$ to the Blackboard module through the channel $BK_i$ (the corresponding sending
  action is denoted $s_{BK_i}(d_{K_i})$);
  \item The Blackboard module receives the invoked results from Knowledge Source $i$ through the channel $BK_i$ (the corresponding reading action is denoted
  $r_{BK_i}(d_{K_i})$) ($1\leq i\leq n$), then processes the results through another processing function $BF_2$, generates and sends the output $d_O$ through the channel $O$ (the corresponding
  sending action is denoted $s_O(d_O)$).
\end{enumerate}

\begin{figure}
    \centering
    \includegraphics{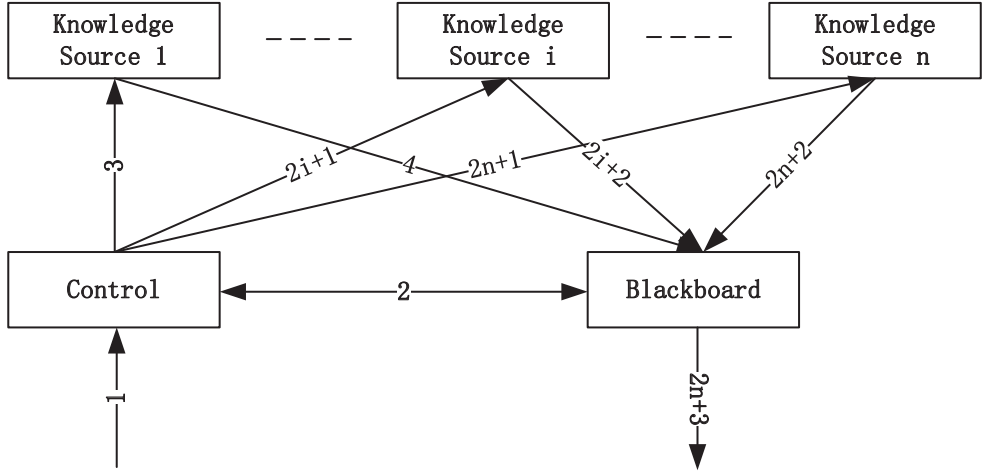}
    \caption{Typical process of Blackboard pattern}
    \label{BB3P}
\end{figure}

In the following, we verify the Blackboard pattern. We assume all data elements $d_{I},d_B,d_{C_i},d_{K_i},d_O$ (for $1\leq i\leq n$) are from a finite set $\Delta$. The state transitions of the Control module
described by APTC are as follows.

$C=\sum_{d_{I}\in\Delta}(r_{I}(d_{I})\cdot C_{2})$

$C_{2}=CF_1\cdot C_{3}$

$C_{3}=\sum_{d_{I}\in\Delta}(s_{CB}(CF_1(d_I))\cdot C_{4})$

$C_{4}=\sum_{d_{B}\in\Delta}(r_{CB}(d_B)\cdot C_{5})$

$C_{5}=CF_2\cdot C_{6}$

$C_{6}=\sum_{d_{C_1},\cdots,d_{C_i},\cdots,d_{C_n}\in\Delta}(s_{CK_1}(d_{C_1})\between \cdots\between s_{CK_i}(d_{C_i})\between\cdots\between s_{CK_n}(d_{C_n})\cdot C)$

The state transitions of the Blackboard module described by APTC are as follows.

$B=\sum_{d_{I}\in\Delta}(r_{CB}(CF_1(d_{I}))\cdot B_{2})$

$B_{2}=BF_1\cdot B_{3}$

$B_{3}=\sum_{d_{B}\in\Delta}(s_{CB}(d_B)\cdot B_{4})$

$B_{4}=\sum_{d_{K_1},\cdots,d_{K_i},\cdots,d_{K_n}\in\Delta}(r_{BK_1}(d_{K_1})\between \cdots\between r_{BK_i}(d_{K_i})\between\cdots\between r_{BK_n}(d_{K_n})\cdot B_5)$

$B_{5}=BF_2\cdot B_{6}$

$B_{6}=\sum_{d_{O}\in\Delta}(s_{O}(d_{O})\cdot B)$

The state transitions of the Knowledge Source $i$ described by APTC are as follows.

$K_{i}=\sum_{d_{C_i}\in\Delta}(r_{CK_i}(d_{C_i})\cdot K_{i_2})$

$K_{i_2}=KF_i\cdot K_{i_3}$

$K_{i_3}=\sum_{d_{K_i}\in\Delta}(s_{BK_i}(d_{K_i})\cdot K_{i})$

The sending action and the reading action of the same data through the same channel can communicate with each other, otherwise, will cause a deadlock $\delta$. We define the following
communication functions for $1\leq i\leq n$.

$$\gamma(r_{CB}(CF_1(d_I)),s_{CB}(CF_1(d_I)))\triangleq c_{CB}(CF_1(d_I))$$
$$\gamma(r_{CB}(d_B),s_{CB}(d_B))\triangleq c_{CB}(d_B)$$
$$\gamma(r_{CK_i}(d_{C_i}),s_{CK_i}(d_{C_i}))\triangleq c_{CK_i}(d_{C_i})$$
$$\gamma(r_{BK_i}(d_{K_i}),s_{BK_i}(d_{K_i}))\triangleq c_{BK_i}(d_{K_i})$$

Let all modules be in parallel, then the Blackboard pattern $C\quad B\quad K_1\cdots K_i\cdots K_n$ can be presented by the following process term.

$$\tau_I(\partial_H(\Theta(C\between B\between K_1\between\cdots\between K_i\between\cdots\between K_n)))=\tau_I(\partial_H(C\between B\between K_1\between\cdots\between K_i\between\cdots\between K_n))$$

where $H=\{r_{CB}(CF_1(d_I)),s_{CB}(CF_1(d_I)),r_{CB}(d_B),s_{CB}(d_B),r_{CK_1}(d_{C_1}),\\
s_{CK_1}(d_{C_1}),\cdots, r_{CK_i}(d_{C_i}),s_{CK_i}(d_{C_i}),\cdots,r_{CK_n}(d_{C_n}),s_{CK_n}(d_{C_n}),\\
r_{BK_1}(d_{K_1}),s_{BK_1}(d_{K_1}),\cdots,r_{BK_i}(d_{K_i}),s_{BK_i}(d_{K_i}),\cdots,r_{BK_n}(d_{K_n}),s_{BK_n}(d_{K_n})\\
|d_I,d_B,d_{C_1},\cdots,d_{C_i},\cdots,d_{C_n},d_{K_1},\cdots,d_{K_i},\cdots,d_{K_n}\in\Delta\}$,

$I=\{c_{CB}(CF_1(d_I)),c_{CB}(d_B),c_{CK_1}(d_{C_1}),\cdots,c_{CK_i}(d_{C_i}),\cdots,c_{CK_n}(d_{C_n}),c_{BK_1}(d_{K_1}),\\
\cdots,c_{BK_i}(d_{K_i}),\cdots,c_{BK_n}(d_{K_n}),CF_1,CF_2,BF_1,BF_2,KF1,\cdots,KF_i,\cdots,KF_n\\
|d_I,d_B,d_{C_1},\cdots,d_{C_i},\cdots,d_{C_n},d_{K_1},\cdots,d_{K_i},\cdots,d_{K_n}\in\Delta\}$.

Then we get the following conclusion on the Blackboard pattern.

\begin{theorem}[Correctness of the Blackboard pattern]
The Blackboard pattern $\tau_I(\partial_H(C\between B\between K_1\between\cdots\between K_i\between\cdots\between K_n))$ can exhibit desired external behaviors.
\end{theorem}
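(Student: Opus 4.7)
The plan is to mirror the proof strategy used for the Layers and Pipes and Filters patterns: set up a guarded linear recursive specification describing the parallel composition, apply the axioms of $APTC$ (in particular P1--P10, C11--C18 for communications, D1--D6 for encapsulation, and TI1--TI6 for abstraction) together with RDP/RSP to rewrite $\partial_H(C\between B\between K_1\between\cdots\between K_n)$ into a form in which every action in $H$ is forced through its communication partner, and then apply $\tau_I$ to rename all internal communications and processing functions to $\tau$.

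First, I would expand $C\between B\between K_1\between\cdots\between K_n$ using RDP, then apply P1 to split each $\between$ into $\parallel$ plus $\mid$, and use P6/C14 to collapse each parallel/communication merge stage by stage. At the first stage, the only non-deadlocked term after $\partial_H$ is $r_I(d_I)$ (since $C$ begins with $r_I$, while $B$ and the $K_i$ begin with actions that can only fire via a communication with $C$, which is not yet available). This gives an initial factor $r_I(d_I)$ and leaves residual terms $C_2\between B\between K_1\between\cdots\between K_n$. I would then iterate: the internal step $CF_1$ is hidden by $\tau_I$, the pair $(s_{CB}(CF_1(d_I)), r_{CB}(CF_1(d_I)))$ is forced into $c_{CB}(CF_1(d_I))$ by $\partial_H$ and then into $\tau$ by $\tau_I$, and similarly for the subsequent response $c_{CB}(d_B)$, the $n$-fold parallel dispatch $c_{CK_1}(d_{C_1})\parallel\cdots\parallel c_{CK_n}(d_{C_n})$, the local $KF_i$, the $n$-fold parallel acknowledgement $c_{BK_1}(d_{K_1})\parallel\cdots\parallel c_{BK_n}(d_{K_n})$, and the final $BF_2$. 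The one remaining visible action before returning to the initial configuration is $s_O(d_O)$.

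Concretely, I would introduce a guarded linear recursion $\langle X_1|E\rangle = \partial_H(C\between B\between K_1\between\cdots\between K_n)$ with auxiliary variables $X_2,\ldots,X_m$ capturing each intermediate state in the typical process described in Figure \ref{BB3P}, and show by repeated application of P6, C14, A6, A7 and D1--D6 that each $X_j$ admits exactly one outgoing transition (or an $n$-ary parallel of such transitions) whose label is either a visible action or an element of $I$. Applying $\tau_I$ together with B1, TI1--TI6 collapses the internal chain into a single $\tau$, and using B2 these $\tau$'s can be absorbed. By RSP, this yields
\begin{equation*}
\tau_I(\partial_H(C\between B\between K_1\between\cdots\between K_n)) = \sum_{d_I,d_O\in\Delta} r_I(d_I)\cdot s_O(d_O)\cdot \tau_I(\partial_H(C\between B\between K_1\between\cdots\between K_n)),
\end{equation*}
which is the desired external behavior of a request/response loop.

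The main obstacle will be the bookkeeping of the $n$-fold parallel fan-out to the Knowledge Sources and the corresponding fan-in of acknowledgements to the Blackboard: at those two stages one needs to apply P6 and C14 repeatedly across $n$ branches and check that all $n!$ possible interleavings of the pairwise communications $c_{CK_i}$ (respectively $c_{BK_i}$) collapse under $\partial_H$ to a single step $\{c_{CK_1}(d_{C_1}),\ldots,c_{CK_n}(d_{C_n})\}$ (respectively $\{c_{BK_1}(d_{K_1}),\ldots,c_{BK_n}(d_{K_n})\}$), since any mismatched pairing is blocked by $H$ and thus becomes $\delta$ and is absorbed by A6. Once this symmetric-fan-out lemma is established, the remainder of the derivation is a straightforward application of the techniques already used for the ABP protocol in Section \ref{app}, so I would reference that proof and omit the routine calculations.
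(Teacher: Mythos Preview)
Your proposal is correct and follows essentially the same approach as the paper: both derive the recursive equation $\tau_I(\partial_H(C\between B\between K_1\between\cdots\between K_n)) = \sum_{d_I,d_O\in\Delta} r_I(d_I)\cdot s_O(d_O)\cdot \tau_I(\partial_H(C\between B\between K_1\between\cdots\between K_n))$ via the algebraic laws of APTC and defer the routine calculations to the ABP protocol verification in Section~\ref{app}. Your write-up is in fact considerably more explicit than the paper's own proof, which merely asserts the identity and cites Section~\ref{app}, whereas you spell out the sequence of axioms (P1, P6, C14, D1--D6, TI1--TI6, RDP/RSP) and identify the $n$-fold fan-out/fan-in as the only nontrivial bookkeeping step.
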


\begin{proof}
Based on the above state transitions of the above modules, by use of the algebraic laws of APTC, we can prove that

$\tau_I(\partial_H(C\between B\between K_1\between\cdots\between K_i\between\cdots\between K_n))=\sum_{d_{I},d_{O}\in\Delta}(r_{I}(d_{I})\cdot s_{O}(d_O))\cdot
\tau_I(\partial_H(C\between B\between K_1\between\cdots\between K_i\between\cdots\between K_n))$,

that is, the Blackboard pattern $\tau_I(\partial_H(C\between B\between K_1\between\cdots\between K_i\between\cdots\between K_n))$ can exhibit desired external behaviors.

For the details of proof, please refer to section \ref{app}, and we omit it.
\end{proof}

\subsection{Distributed Systems}\label{DS3}

In this subsection, we verify the distributed systems oriented patterns, including the Broker pattern, and the Pipes and Filters pattern in subsection \ref{MUD3} and the Microkernel
Pattern in subsection \ref{AS3}.

\subsubsection{Verification of the Broker Pattern}

The Broker pattern decouples the invocation process between the Client and the Server. There are five types of modules in the Broker pattern: the Client, the Client-side Proxy, the Brokers,
the Server-side Proxy and the Server. The Client receives the request from the user and passes it to the Client-side Proxy, then to the first broker and the next one, the last broker passes
it the Server-side Proxy, finally leads to the invocation to the Server; the Server processes the request and generates the response, then the response is returned to the user in a reverse
way, as illustrated in Figure \ref{B3}.

\begin{figure}
    \centering
    \includegraphics{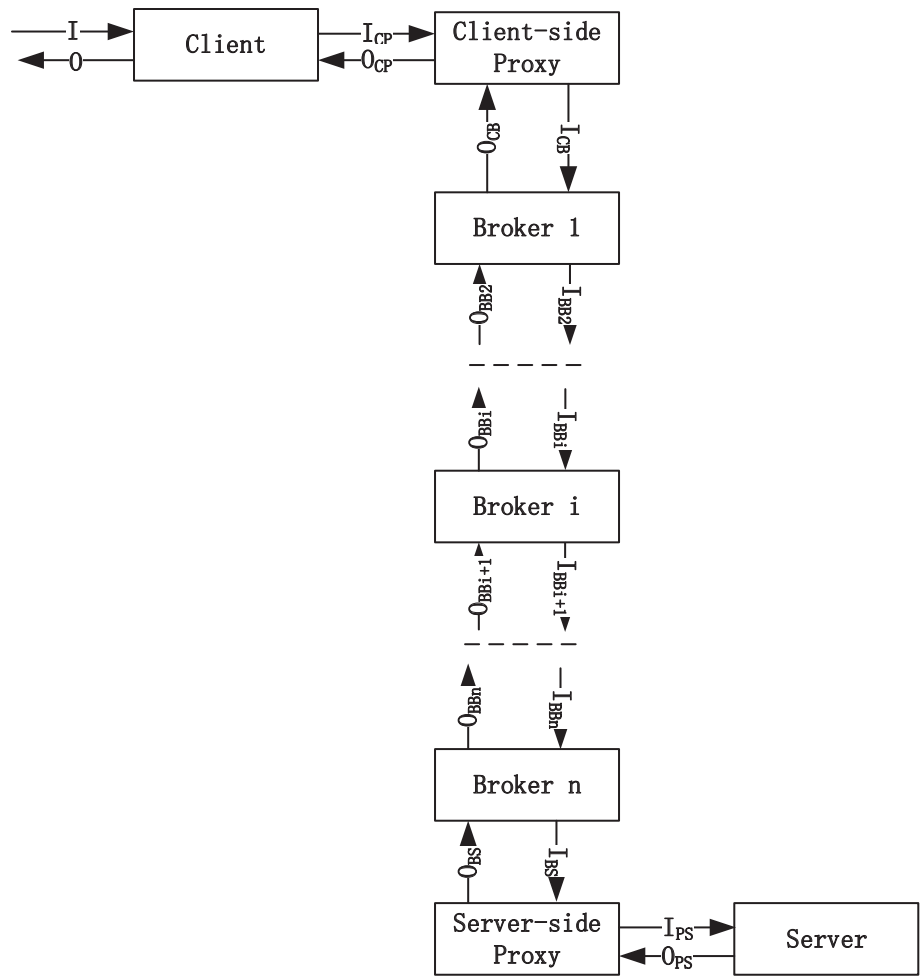}
    \caption{Broker pattern}
    \label{B3}
\end{figure}

The typical process of the Broker pattern is illustrated in Figure \ref{B3P} and as follows.

\begin{enumerate}
  \item The Client receives the request $d_I$ through the channel $I$ (the corresponding reading action is denoted $r_I(d_I)$), then processes the request through a processing function
  which is denoted $CF_1$, then sends the processed request $CF_1(d_I)$ to the Client-side Proxy through the channel $I_{CP}$ (the corresponding sending action is denoted $s_{I_{CP}}(CF_1(d_I))$);
  \item The Client-side Proxy receives the request $d_{I_{CP}}$ from the Client through the channel $I_{CP}$ (the corresponding reading action is denoted $r_{I_{CP}}(d_{I_{CP}})$), then processes
  the request through a processing function $CPF_1$, and then sends the processed request $CPF_1(d_{I_{CP}})$ to the first broker $1$ through the channel $I_{CB}$ (the corresponding sending
  action is denoted $s_{I_{CB}}(CPF_1(d_{I_{CP}}))$);
  \item The broker $i$ (for $1\leq i\leq n$) receives the request $d_{I_{B_i}}$ from the broker $i-1$ through the channel $I_{BB_i}$ (the corresponding reading action is denoted $r_{I_{BB_i}}(d_{I_{B_i}})$),
  then processes the request through a processing function $BF_{i_1}$, and then sends the processes request $BF_{i_1}(d_{I_{B_i}})$ to the broker $i+1$ through the channel $I_{BB_{i+1}}$
  (the corresponding sending action is denoted $s_{I_{BB_{i+1}}}(BF_{i_1}(d_{I_{B_i}}))$);
  \item The Server-side Proxy receives the request $d_{I_{SP}}$ from the last broker $n$ through the channel $I_{BS}$ (the corresponding reading action is denoted $r_{I_{BS}}(d_{I_{SP}})$),
  then processes the request through a processing function $SPF_1$, and then sends the processed request $SPF_1(d_{I_{SP}})$ to the Server through the channel $I_{PS}$ (the corresponding
  sending action is denoted $s_{I_{PS}}(SPF_1(d_{I_{SP}}))$);
  \item The Server receives the request $d_{I_S}$ from the Server-side Proxy through the channel $I_{PS}$ (the corresponding reading action is denoted $r_{I_{PS}}(d_{I_S})$), then processes
  the request and generates the response $d_{O_S}$ through a processing function $SF$, and then sends the response to the Server-side Proxy through the channel $O_{PS}$ (the corresponding
  sending action is denoted $s_{O_{PS}}(d_{O_S})$);
  \item The Server-side Proxy receives the response $d_{O_{SP}}$ from the Server through the channel $O_{PS}$ (the corresponding reading action is denoted $r_{O_{PS}}(d_{O_{SP}})$),
  then processes the response through a processing function $SPF_2$, and sends the processed response $SPF_2(d_{O_{SP}})$ to the last broker $n$ through the channel $O_{BS}$ (the corresponding
  sending action is denoted $s_{O_{BS}}(SPF_2(d_{O_{SP}}))$);
  \item the broker $i$ receives the response $d_{O_{B_i}}$ from the broker $i+1$ through the channel $O_{BB_{i+1}}$ (the corresponding reading action is denoted $r_{O_{BB_{i+1}}}(d_{O_{B_i}})$),
  then processes the response through a processing function $BF_{i_2}$, and then sends the processed response $BF_{i_2}(d_{O_{B_i}})$ to the broker $i-1$ through the channel $O_{BB_i}$
  (the corresponding sending action is denoted $s_{O_{BB_i}}(BF_{i_2}(d_{O_{B_i}}))$);
  \item The Client-side Proxy receives the response $d_{O_{CP}}$ from the first broker $1$ through the channel $O_{CB}$ (the corresponding reading action is denoted $r_{O_{CB}}(d_{O_{CP}})$),
  then processes the response through a processing function $CPF_2$, and sends the processed response $CPF_2(d_{O_{CP}})$ to the Client through the channel $O_{CP}$ (the corresponding
  sending action is denoted $s_{O_{CP}}(CPF_2(d_{O_{CP}}))$);
  \item The Client receives the response $d_{O_{C}}$ from the Client-side Proxy through the channel $O_{CP}$ (the corresponding reading action is denoted $r_{O_{CP}}(d_{O_{C}})$),
  then processes the response through a processing function $CF_2$ and generate the response $d_O$, and then sends the response out through the channel $O$ (the corresponding sending
  action is denoted $s_O(d_O)$).
\end{enumerate}

\begin{figure}
    \centering
    \includegraphics{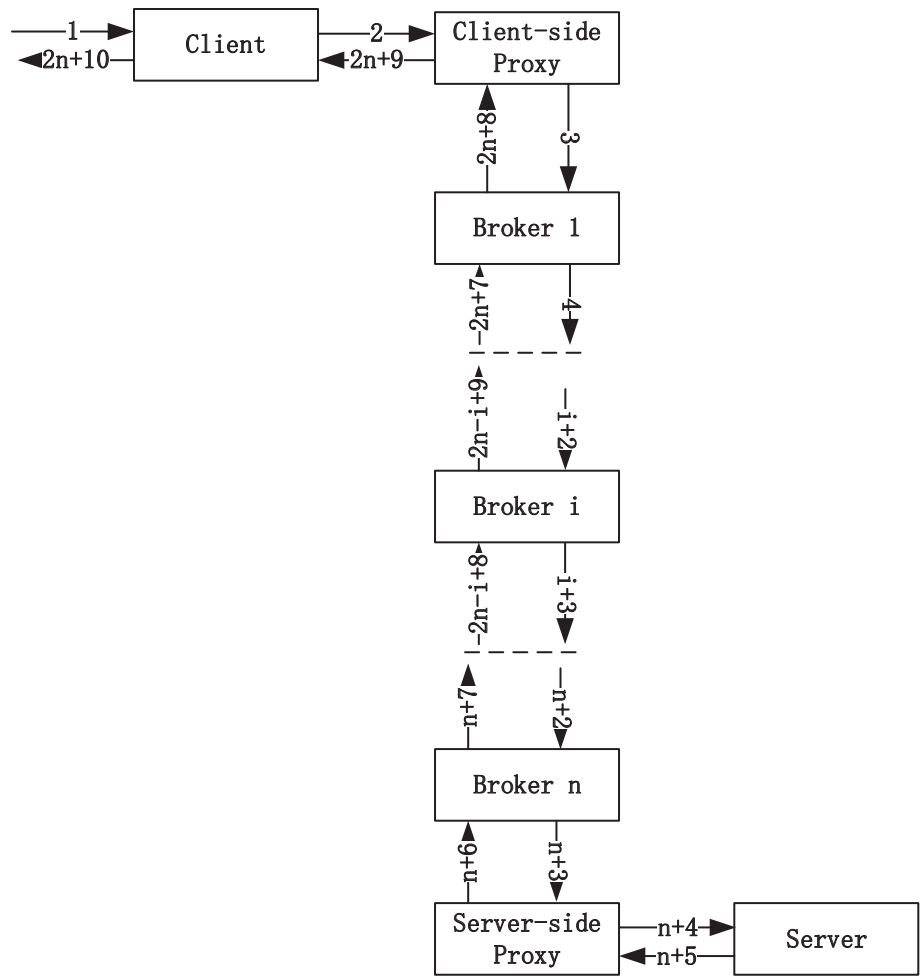}
    \caption{Typical process of Broker pattern}
    \label{B3P}
\end{figure}

In the following, we verify the Broker pattern. We assume all data elements $d_{I}$, $d_{I_{CP}}$, $d_{I_{B_i}}$, $d_{I_{SP}}$, $d_{I_S}$, $d_{O_S}$, $d_{O_{B_i}}$, $d_{O_{CP}}$, $d_{O_C}$, $d_O$
(for $1\leq i\leq n$) are from a finite set $\Delta$. Note that, the channels $I_{BB_1}$ and $I_{CB}$ are the same one channel; the channels $O_{BB_1}$ and $O_{CB}$ are the same one channel;
the channels $I_{BB_{n+1}}$ and $I_{BS}$ are the same one channel; the channels $O_{BB_{n+1}}$ and $O_{BS}$ are the same one channel. And the data $CF_1(d_I)$ and $d_{I_{CP}}$ are the same
data; the data $CPF_1(d_{I_{CP}})$ and $d_{I_{B_1}}$ are the same data; the data $BF_{i_1}(d_{I_{B_i}})$ and $d_{I_{B_{i+1}}}$ are the same data; the data $BF_{n_1}(d_{I_{B_{n}}})$ and
the data $d_{I_{SP}}$ are the same data; the data $SPF_1(d_{I_{SP}})$ and $d_{I_S}$ are the same data; the data $SPF_2(d_{O_{S}})$ and $d_{O_{B_n}}$ are the same data; the data
$BF_{i_2}(d_{O_{B_i}})$ and the data $d_{O_{B_{i-1}}}$ are the same data; the data $BF_{1_2}(d_{O_{B_1}})$ and $d_{O_{CP}}$ are the same data; the data $CPF_2(d_{O_{CP}})$ and $d_{O_{C}}$
are the same data; the data $CF_2(d_{O_{C}})$ and the data $d_O$ are the same data.

The state transitions of the Client module
described by APTC are as follows.

$C=\sum_{d_{I}\in\Delta}(r_{I}(d_{I})\cdot C_{2})$

$C_{2}=CF_1\cdot C_{3}$

$C_{3}=\sum_{d_{I}\in\Delta}(s_{I_{CP}}(CF_1(d_I))\cdot C_{4})$

$C_{4}=\sum_{d_{O_C}\in\Delta}(r_{O_{CP}}(d_{O_{C}})\cdot C_{5})$

$C_{5}=CF_2\cdot C_{6}$

$C_{6}=\sum_{d_{O}\in\Delta}(s_O(d_O)\cdot C)$

The state transitions of the Client-side Proxy module described by APTC are as follows.

$CP=\sum_{d_{I_{CP}}\in\Delta}(r_{I_{CP}}(d_{I_{CP}})\cdot CP_{2})$

$CP_{2}=CPF_1\cdot CP_{3}$

$CP_{3}=\sum_{d_{I_{CP}}\in\Delta}(s_{I_{CB}}(CPF_1(d_{I_{CP}}))\cdot CP_{4})$

$CP_{4}=\sum_{d_{O_{CP}}\in\Delta}(r_{O_{CB}}(d_{O_{CP}})\cdot CP_{5})$

$CP_{5}=CPF_2\cdot CP_{6}$

$CP_{6}=\sum_{d_{O_{CP}}\in\Delta}(s_{O_{CP}}(CPF_2(d_{O_{CP}}))\cdot CP)$

The state transitions of the Broker $i$ described by APTC are as follows.

$B_i=\sum_{d_{I_{B_i}}\in\Delta}(r_{I_{BB_i}}(d_{I_{B_i}})\cdot B_{i_2})$

$B_{i_2}=BF_{i_1}\cdot B_{i_3}$

$B_{i_3}=\sum_{d_{I_{B_i}}\in\Delta}(s_{I_{BB_{i+1}}}(BF_{i_1}(d_{I_{B_i}}))\cdot B_{i_4})$

$B_{i_4}=\sum_{d_{O_{B_i}}\in\Delta}(r_{O_{BB_{i+1}}}(d_{O_{B_i}})\cdot B_{i_5})$

$B_{i_5}=BF_{i_2}\cdot B_{i_6}$

$B_{i_6}=\sum_{d_{O_{B_i}}\in\Delta}(s_{O_{BB_i}}(BF_{i_2}(d_{O_{B_i}}))\cdot B_i)$

The state transitions of the Server-side Proxy described by APTC are as follows.

$SP=\sum_{d_{I_{SP}}\in\Delta}(r_{I_{BS}}(d_{I_{SP}})\cdot SP_{2})$

$SP_{2}=SPF_1\cdot SP_{3}$

$SP_{3}=\sum_{d_{I_{SP}}\in\Delta}(s_{I_{PS}}(SPF_1(d_{I_{SP}}))\cdot SP_{4})$

$SP_{4}=\sum_{d_{O_{SP}}\in\Delta}(r_{O_{PS}}(d_{O_{SP}})\cdot SP_{5})$

$SP_{5}=SPF_2\cdot SP_{6}$

$SP_{6}=\sum_{d_{O_{SP}}\in\Delta}(s_{O_{BS}}(SPF_2(d_{O_{SP}}))\cdot SP)$

The state transitions of the Server described by APTC are as follows.

$S=\sum_{d_{I_S}\in\Delta}(r_{I}(d_{I})\cdot S_{2})$

$S_{2}=SF\cdot S_{3}$

$S_{3}=\sum_{d_{O_S}\in\Delta}(s_{O_{PS}}(d_{O_S})\cdot S)$

The sending action and the reading action of the same data through the same channel can communicate with each other, otherwise, will cause a deadlock $\delta$. We define the following
communication functions of the broker $i$ for $1\leq i\leq n$.

$$\gamma(r_{I_{BB_i}}(d_{I_{B_i}}),s_{I_{BB_{i}}}(BF_{i-1_1}(d_{I_{B_{i-1}}})))\triangleq c_{I_{BB_i}}(d_{I_{B_i}})$$
$$\gamma(r_{O_{BB_{i+1}}}(d_{O_{B_i}}),s_{O_{BB_{i+1}}}(BF_{i+1_2}(d_{O_{B_{i+1}}})))\triangleq c_{O_{BB_{i+1}}}(d_{O_{B_i}})$$

There are two communication functions between the Client and the Client-side Proxy as follows.

$$\gamma(r_{I_{CP}}(d_{I_{CP}}),s_{I_{CP}}(CF_1(d_I)))\triangleq c_{I_{CP}}(d_{I_{CP}})$$
$$\gamma(r_{O_{CP}}(d_{O_{C}}),s_{O_{CP}}(CPF_2(d_{O_{CP}})))\triangleq c_{O_{CP}}(d_{O_{C}})$$

There are two communication functions between the broke $1$ and the Client-side Proxy as follows.

$$\gamma(r_{I_{BB_1}}(d_{I_{B_1}}),s_{I_{CB}}(CPF_1(d_{I_{CP}})))\triangleq c_{I_{BB_1}}(d_{I_{B_1}})$$
$$\gamma(r_{O_{CB}}(d_{O_{CP}}),s_{O_{BB_i}}(BF_{i_2}(d_{O_{B_i}})))\triangleq c_{O_{CB}}(d_{O_{CP}})$$

There are two communication functions between the broker $n$ and the Server-side Proxy as follows.

$$\gamma(r_{I_{BS}}(d_{I_{SP}}),s_{I_{BB_{n+1}}}(BF_{n_1}(d_{I_{B_n}})))\triangleq c_{I_{BS}}(d_{I_{SP}})$$
$$\gamma(r_{O_{BB_{n+1}}}(d_{O_{B_n}}),s_{O_{BS}}(SPF_2(d_{O_{SP}})))\triangleq c_{O_{BB_{n+1}}}(d_{O_{B_n}})$$

There are two communication functions between the Server and the Server-side Proxy as follows.

$$\gamma(r_{I_{PS}}(d_{I_S}),s_{I_{PS}}(SPF_1(d_{I_{SP}})))\triangleq c_{I_{PS}}(d_{I_S})$$
$$\gamma(r_{O_{PS}}(d_{O_{SP}}),s_{O_{PS}}(d_{O_S}))\triangleq c_{O_{PS}}(d_{O_{SP}})$$

Let all modules be in parallel, then the Broker pattern $C\quad CP\quad SP\quad S\quad B_1\cdots B_i\cdots B_n$ can be presented by the following process term.

$\tau_I(\partial_H(\Theta(C\between CP\between SP\between S\between B_1\between\cdots\between B_i\between\cdots\between B_n)))=\tau_I(\partial_H(C\between CP\between SP\between S\between B_1\between\cdots\between B_i\between\cdots\between B_n))$

where $H=\{r_{I_{BB_i}}(d_{I_{B_i}}),s_{I_{BB_{i}}}(BF_{i-1_1}(d_{I_{B_{i-1}}})),r_{O_{BB_{i+1}}}(d_{O_{B_i}}),s_{O_{BB_{i+1}}}(BF_{i+1_2}(d_{O_{B_{i+1}}})),\\
r_{I_{CP}}(d_{I_{CP}}),s_{I_{CP}}(CF_1(d_I)),r_{O_{CP}}(d_{O_{C}}),s_{O_{CP}}(CPF_2(d_{O_{CP}})),r_{I_{BB_1}}(d_{I_{B_1}}),s_{I_{CB}}(CPF_1(d_{I_{CP}})),\\
r_{O_{CB}}(d_{O_{CP}}),s_{O_{BB_i}}(BF_{i_2}(d_{O_{B_i}})),r_{I_{BS}}(d_{I_{SP}}),s_{I_{BB_{n+1}}}(BF_{n_1}(d_{I_{B_n}})),r_{O_{BB_{n+1}}}(d_{O_{B_n}}),s_{O_{BS}}(SPF_2(d_{O_{SP}})),\\
r_{I_{PS}}(d_{I_S}),s_{I_{PS}}(SPF_1(d_{I_{SP}})),r_{O_{PS}}(d_{O_{SP}}),s_{O_{PS}}(d_{O_S})\\
|d_I,d_{I_{CP}},d_{O_{CP}},d_{I_{SP}},d_{O_{SP}},d_{I_S},d_{O_S},d_{O_C},d_{I_{CP}},d_{O_{CP}},d_O,d_{I_{B_1}},\cdots,d_{I_{B_i}},\cdots,d_{I_{B_n}},d_{O_{B_1}},\cdots,d_{O_{B_i}},\cdots,d_{O_{B_n}}\in\Delta\}$,

$I=\{c_{I_{BB_1}}(d_{I_{B_1}}),c_{O_{BB_{2}}}(d_{O_{B_1}}),\cdots,c_{I_{BB_i}}(d_{I_{B_i}}),c_{O_{BB_{i+1}}}(d_{O_{B_i}}),\cdots,c_{I_{BB_n}}(d_{I_{B_n}}),c_{O_{BB_{n+1}}}(d_{O_{B_n}}),\\
c_{I_{CP}}(d_{I_{CP}}),c_{O_{CP}}(d_{O_{C}}),c_{I_{BB_1}}(d_{I_{B_1}}),c_{O_{CB}}(d_{O_{CP}}),c_{I_{BS}}(d_{I_{SP}}),c_{O_{BB_{n+1}}}(d_{O_{B_n}}),c_{I_{PS}}(d_{I_S}),c_{O_{PS}}(d_{O_{SP}}),\\
CF_1,CF_2,CPF_1,CPF_2,BF_{1_1},BF_{1_2},\cdots,BF_{i_1},BF_{i_2},\cdots,BF_{n_1},BF_{n_2},SPF_1,SPF_2,SF\\
|d_I,d_{I_{CP}},d_{O_{CP}},d_{I_{SP}},d_{O_{SP}},d_{I_S},d_{O_S},d_{O_C},d_{I_{CP}},d_{O_{CP}},d_O,d_{I_{B_1}},\cdots,d_{I_{B_i}},\cdots,d_{I_{B_n}},d_{O_{B_1}},\cdots,d_{O_{B_i}},\cdots,d_{O_{B_n}}\in\Delta\}$.

Then we get the following conclusion on the Broker pattern.

\begin{theorem}[Correctness of the Broker pattern]
The Broker pattern $\tau_I(\partial_H(C\between CP\between SP\between S\between B_1\between\cdots\between B_i\between\cdots\between B_n))$ can exhibit desired external behaviors.
\end{theorem}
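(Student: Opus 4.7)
The plan is to follow the same strategy used for the Layers, Pipes and Filters, and Blackboard patterns, namely to expand the parallel composition by the algebraic laws of APTC, force all internal send/receive pairs to synchronize under $\partial_H$, hide the resulting communications under $\tau_I$, and then invoke RSP on a suitable guarded linear recursive specification.

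First I would unfold each of the recursive definitions of $C$, $CP$, $B_1,\dots,B_n$, $SP$, $S$ one step at a time using RDP, and combine them with axioms P1, P6, P7, P8 and C14 together with the encapsulation axioms D1--D6. The key observation is that every send action appearing inside the parallel composition has its matching receive action in a neighbouring module and the corresponding channel name lies in $H$; every such matched pair is renamed by $\gamma$ to its $c_{\cdot}$ counterpart (which lies in $I$), while every unmatched interleaving produces a $\delta$ via axioms P9, P10 (or $C17$, $C18$), which is then absorbed by A6 and A7. Thus $\partial_H(C\between CP\between B_1\between\cdots\between B_n\between SP\between S)$ reduces to a purely sequential stream of reads, internal communications and writes, starting with $r_I(d_I)$ and ending with $s_O(d_O)$.

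Next I would define a guarded linear recursion specification $E$ whose first equation is
\begin{align*}
X_1 &= \sum_{d_I,d_O\in\Delta}(r_I(d_I)\cdot X_2(d_I)),\\
X_2(d_I) &= c_{I_{CP}}(\cdot)\cdot X_3(d_I),\\
&\;\;\vdots
\end{align*}
with intermediate variables $X_k$ for each stage of the request travelling from Client to Server and back, finishing with $X_{\ell}=s_O(d_O)\cdot X_1$. By the expansion above, $\partial_H(C\between CP\between SP\between S\between B_1\between\cdots\between B_n)$ is a solution of $E$ (via the pairing of processes with state variables in the standard way), and so is the closed term $\langle X_1\mid E\rangle$. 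Applying $\tau_I$ to both sides turns every internal $c_\cdot$ and every processing action ($CF_i$, $CPF_i$, $BF_{i_j}$, $SPF_i$, $SF$) into $\tau$, and the axioms B1--B3 together with CFAR collapse the resulting $\tau$-cluster. Invoking RSP then gives
\[
\tau_I(\partial_H(C\between CP\between SP\between S\between B_1\between\cdots\between B_n))=\sum_{d_I,d_O\in\Delta}(r_I(d_I)\cdot s_O(d_O))\cdot\tau_I(\partial_H(C\between CP\between SP\between S\between B_1\between\cdots\between B_n)),
\]
which is the desired external behaviour.

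The main obstacle will be the bookkeeping of the $n+4$ modules under $\partial_H$: one must verify that at every expansion step the only surviving summand corresponds to the ``correct'' next hop along the broker chain (request going right, response going left), with all other interleavings being killed by $\partial_H$ and A7. In particular, one has to be careful that the two broker--broker channels $I_{BB_i}$ and $O_{BB_i}$ never cross-communicate with the wrong neighbour, which is guaranteed by the channel-naming convention laid out before the theorem but must be tracked carefully when writing $E$. Once this bookkeeping is in place, the remaining steps are routine applications of the APTC laws as in section \ref{app}, so I would refer to that section and omit the details, mirroring the style of the previous theorems in this subsection.
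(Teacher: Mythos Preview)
Your proposal is correct and takes essentially the same approach as the paper: expand the composition with the APTC laws, synchronize under $\partial_H$, abstract under $\tau_I$, and obtain the recursive equation $\tau_I(\partial_H(\cdots))=\sum_{d_I,d_O\in\Delta}(r_I(d_I)\cdot s_O(d_O))\cdot\tau_I(\partial_H(\cdots))$, with the detailed bookkeeping deferred to section~\ref{app}. Your write-up is in fact more explicit than the paper's own proof, which simply states the resulting equation and refers to section~\ref{app}; the guarded linear specification $E$ and the use of RSP/CFAR that you spell out are exactly the machinery exhibited in the ABP example there.
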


\begin{proof}
Based on the above state transitions of the above modules, by use of the algebraic laws of APTC, we can prove that

$\tau_I(\partial_H(C\between CP\between SP\between S\between B_1\between\cdots\between B_i\between\cdots\between B_n))=\sum_{d_{I},d_{O}\in\Delta}(r_{I}(d_{I})\cdot s_{O}(d_O))\cdot
\tau_I(\partial_H(C\between CP\between SP\between S\between B_1\between\cdots\between B_i\between\cdots\between B_n))$,

that is, the Broker pattern $\tau_I(\partial_H(C\between CP\between SP\between S\between B_1\between\cdots\between B_i\between\cdots\between B_n))$ can exhibit desired external behaviors.

For the details of proof, please refer to section \ref{app}, and we omit it.
\end{proof}

\subsection{Interactive Systems}\label{IS3}

In this subsection, we verify interactive systems oriented patterns, including the Model-View-Controller (MVC) pattern and the Presentation-Abstraction-Control (PAC) pattern.

\subsubsection{Verification of the MVC Pattern}

The MVC pattern is used to model the interactive systems, which has three components: the Model, the Views and the Controller. The Model is used to contain the data and encapsulate the
core functionalities; the Views is to show the computational results to the user; and the Controller interacts between the system and user, accepts the instructions and controls the Model
and the Views. The Controller receives the instructions from the user through the channel $I$, then it sends the instructions to the Model through the channel $CM$ and to the View $i$
through the channel $CV_i$ for $1\leq i\leq n$; the model receives the instructions from the Controller, updates the data and computes the results, and sends the results to the View $i$
through the channel $MV_i$ for $1\leq i\leq n$; When the View $i$ receives the results from the Model, it generates or updates the view to the user. As illustrates in Figure \ref{MVC3}.

\begin{figure}
    \centering
    \includegraphics{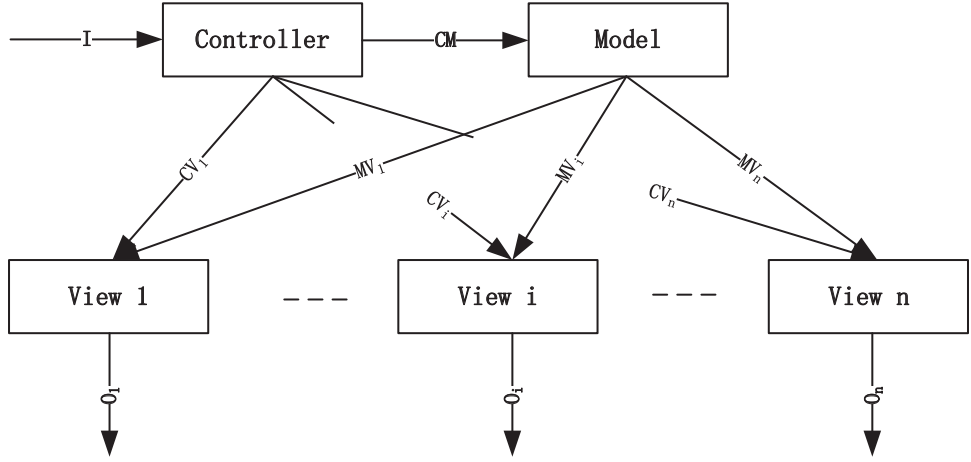}
    \caption{MVC pattern}
    \label{MVC3}
\end{figure}

The typical process of the MVC pattern is shown in Figure \ref{MVC3P} and following.

\begin{enumerate}
  \item The Controller receives the instructions $d_I$ from the user through the channel $I$ (the corresponding reading action is denoted $r_I(D_I)$), processes the instructions through
  a processing function $CF$, and generates the instructions to the Model $d_{I_M}$ and those to the View $i$ $d_{I_{V_i}}$ for $1\leq i\leq n$; it sends $d_{I_M}$ to the Model through the channel $CM$
  (the corresponding sending action is denoted $s_{CM}(d_{I_M})$) and sends $d_{I_{V_i}}$ to the View $i$ through the channel $CV_i$ (the corresponding sending action is denoted
  $s_{CV_i}(d_{I_{V_i}})$);
  \item The Model receives the instructions from the Controller through the channel $CM$ (the corresponding reading action is denoted $r_{CM}(d_{I_M})$), processes the instructions through
  a processing function $MF$, generates the computational results to the View $i$ (for $1\leq i\leq n$) which is denoted $d_{O_{M_i}}$; then sends the results to the View $i$ through the
  channel $MV_i$ (the corresponding sending action is denoted $s_{MV_i}(d_{O_{M_i}})$);
  \item The View $i$ (for $1\leq i\leq n$) receives the instructions from the Controller through the channel $CV_i$ (the corresponding reading action is denoted $r_{CV_i}(d_{I_{V_i}})$),
  processes the instructions through a processing function $VF_{i_1}$ to make ready to receive the computational results from the Model; then it receives the computational results from the Model
  through the channel $MV_i$ (the corresponding reading action is denoted $r_{MV_i}(d_{O_{M_i}})$), processes the results through a processing function $VF_{i_2}$, generates the output
  $d_{O_i}$, then sending the output through the channel $O_i$ (the corresponding sending action is denoted $s_{O_i}(d_{O_i})$).
\end{enumerate}

\begin{figure}
    \centering
    \includegraphics{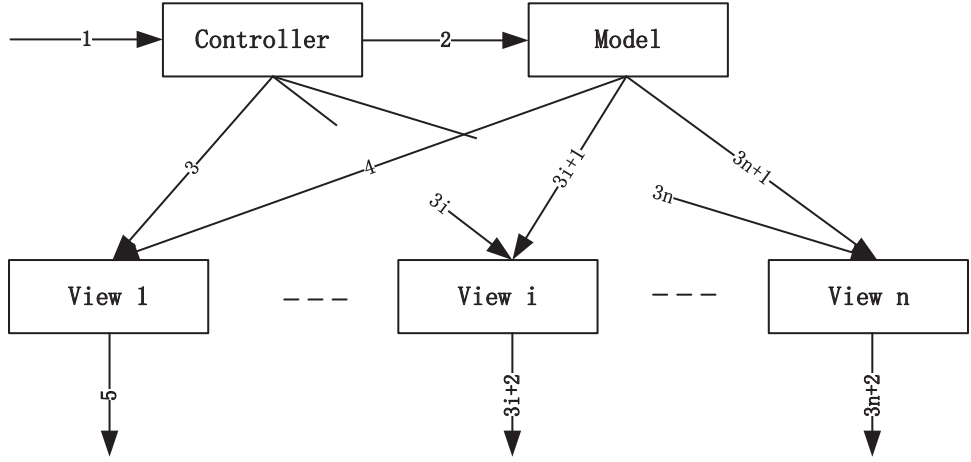}
    \caption{Typical process of MVC pattern}
    \label{MVC3P}
\end{figure}

In the following, we verify the MVC pattern. We assume all data elements $d_{I}$, $d_{I_{M}}$, $d_{I_{V_i}}$, $d_{O_{M_i}}$, $d_{O_{i}}$ (for $1\leq i\leq n$) are from a finite set
$\Delta$.

The state transitions of the Controller module
described by APTC are as follows.

$C=\sum_{d_{I}\in\Delta}(r_{I}(d_{I})\cdot C_{2})$

$C_{2}=CF\cdot C_{3}$

$C_{3}=\sum_{d_{I_M}\in\Delta}(s_{CM}(d_{I_M})\cdot C_{4})$

$C_{4}=\sum_{d_{I_{V_1},\cdots,d_{I_{V_n}}}\in\Delta}(s_{CV_1}(d_{I_{V_1}})\between\cdots\between s_{CV_n}(d_{I_{V_n}})\cdot C)$

The state transitions of the Model described by APTC are as follows.

$M=\sum_{d_{I_{M}}\in\Delta}(r_{CM}(d_{I_{M}})\cdot M_{2})$

$M_{2}=MF\cdot M_{3}$

$M_{3}=\sum_{d_{O_{M_1}},\cdots,d_{O_{M_n}}\in\Delta}(s_{MV_1}(d_{O_{M_1}})\between\cdots\between s_{MV_n}(d_{O_{M_n}})\cdot M)$

The state transitions of the View $i$ described by APTC are as follows.

$V_i=\sum_{d_{I_{V_i}}\in\Delta}(r_{CV_i}(d_{I_{V_i}})\cdot V_{i_2})$

$V_{i_2}=VF_{i_1}\cdot V_{i_3}$

$V_{i_3}=\sum_{d_{O_{M_i}}\in\Delta}(r_{MV_i}(d_{O_{M_i}})\cdot V_{i_4})$

$V_{i_4}=VF_{i_2}\cdot V_{i_5}$

$V_{i_5}=\sum_{d_{O_{i}}\in\Delta}(s_{O_{i}}(d_{O_i})\cdot V_i)$

The sending action and the reading action of the same data through the same channel can communicate with each other, otherwise, will cause a deadlock $\delta$. We define the following
communication functions of the View $i$ for $1\leq i\leq n$.

$$\gamma(r_{CV_i}(d_{I_{V_i}}),s_{CV_i}(d_{I_{V_i}}))\triangleq c_{CV_i}(d_{I_{V_i}})$$
$$\gamma(r_{MV_i}(d_{O_{M_i}}),s_{MV_i}(d_{O_{M_i}}))\triangleq c_{MV_i}(d_{O_{M_i}})$$

There are one communication functions between the Controller and the Model as follows.

$$\gamma(r_{CM}(d_{I_M}),s_{CM}(d_{I_M}))\triangleq c_{CM}(d_{I_M})$$

Let all modules be in parallel, then the MVC pattern $C\quad M\quad V_1\cdots V_i\cdots V_n$ can be presented by the following process term.

$\tau_I(\partial_H(\Theta(C\between M\between V_1\between\cdots\between V_i\between\cdots\between V_n)))=\tau_I(\partial_H(C\between M\between V_1\between\cdots\between V_i\between\cdots\between V_n))$

where $H=\{r_{CV_i}(d_{I_{V_i}}),s_{CV_i}(d_{I_{V_i}}),r_{MV_i}(d_{O_{M_i}}),s_{MV_i}(d_{O_{M_i}}),r_{CM}(d_{I_M}),s_{CM}(d_{I_M})\\
|d_{I}, d_{I_{M}}, d_{I_{V_i}}, d_{O_{M_i}}, d_{O_{i}}\in\Delta\}$ for $1\leq i\leq n$,

$I=\{c_{CV_i}(d_{I_{V_i}}),c_{MV_i}(d_{O_{M_i}}),c_{CM}(d_{I_M}),CF,MF,VF_{1_1},VF_{1_2},\cdots,VF_{n_1},VF_{n_2}\\
|d_{I}, d_{I_{M}}, d_{I_{V_i}}, d_{O_{M_i}}, d_{O_{i}}\in\Delta\}$ for $1\leq i\leq n$.

Then we get the following conclusion on the MVC pattern.

\begin{theorem}[Correctness of the MVC pattern]
The MVC pattern $\tau_I(\partial_H(C\between M\between V_1\between\cdots\between V_i\between\cdots\between V_n))$ can exhibit desired external behaviors.
\end{theorem}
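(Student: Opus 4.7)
The plan is to follow the same schema used for the Blackboard, Layers, Pipes and Filters, and Broker patterns above: expand the parallel composition $C \between M \between V_1 \between \cdots \between V_n$ with the axioms of APTC, apply $\partial_H$ to force synchronization on the internal channels $CM$, $CV_i$, $MV_i$, and then apply $\tau_I$ to hide the resulting communication actions together with the internal processing functions $CF$, $MF$, $VF_{i_1}$, $VF_{i_2}$. Finally RSP (Theorem \ref{CAPTCR}, extended as in Theorem \ref{CCFAR}) lets us recognize the resulting expression as the unique solution of a guarded linear recursive specification whose body describes exactly the external handshake $r_I \cdots s_{O_i}$.

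I would proceed in the following order. First, using P1--P9 and C11--C18, expand $\partial_H(C \between M \between V_1 \between \cdots \between V_n)$; because $H$ contains every bare $r_{CM}, s_{CM}, r_{CV_i}, s_{CV_i}, r_{MV_i}, s_{MV_i}$, the only surviving initial atomic step after $\partial_H$ is $\sum_{d_I \in \Delta} r_I(d_I)$, since the Controller is the unique source of an unencapsulated first action while $M$ and every $V_i$ are waiting for an internal read. Next, after the $r_I(d_I)$ prefix and the $CF$ step, the Controller offers $s_{CM}(d_{I_M})$ in parallel with $s_{CV_1}(d_{I_{V_1}}) \between \cdots \between s_{CV_n}(d_{I_{V_n}})$, and these match exactly the corresponding reads in $M$ and $V_1, \ldots, V_n$, so each pair collapses to the communication action $c_{CM}(d_{I_M})$ and $c_{CV_i}(d_{I_{V_i}})$. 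After $MF$ in $M$, the sends $s_{MV_i}(d_{O_{M_i}})$ synchronize pairwise with the corresponding $r_{MV_i}(d_{O_{M_i}})$ inside each $V_i$ (which is ready after its $VF_{i_1}$ step), yielding the communications $c_{MV_i}(d_{O_{M_i}})$; finally each $V_i$ performs $VF_{i_2}$ and emits $s_{O_i}(d_{O_i})$, which is not in $H$ and hence survives.

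Applying $\tau_I$ then renames every communication $c_{CM}, c_{CV_i}, c_{MV_i}$ and every processing function into $\tau$, and using the branching axioms B1--B3 one collapses the resulting $\tau$-prefixed blocks. Abbreviating the whole expression as $\langle X_1 \mid E\rangle$ for an appropriate guarded linear recursive specification $E$ (one recursion variable per reachable state of the interleaved system), RSP gives
\begin{eqnarray*}
\tau_I(\partial_H(C \between M \between V_1 \between \cdots \between V_n))
&=& \sum_{d_I,d_{O_1},\ldots,d_{O_n}\in\Delta} r_I(d_I) \cdot (s_{O_1}(d_{O_1}) \parallel \cdots \parallel s_{O_n}(d_{O_n})) \\
&& \cdot \tau_I(\partial_H(C \between M \between V_1 \between \cdots \between V_n)),
\end{eqnarray*}
which is the desired external behavior: on each user instruction, the system produces concurrent outputs on every view.

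The main obstacle, as in the Blackboard case, is bookkeeping rather than a conceptual difficulty: one must justify that $\partial_H$ blocks every interleaving in which an internal read fires without its matching send (so no $\delta$ leaks into the alternatives), and that the causal order built into the Views, namely $r_{CV_i} \cdot VF_{i_1} \cdot r_{MV_i} \cdot VF_{i_2} \cdot s_{O_i}$, is preserved after the interleaving expansion. This is handled exactly as in Section \ref{app} and in the Broker proof, by grouping the reachable $\partial_H$-closed configurations into finitely many recursion variables and appealing to RSP; the remaining algebraic manipulations are routine applications of P1--P10, C11--C18, D1--D6 and TI1--TI6, and we omit them.
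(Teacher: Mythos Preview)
Your proposal is correct and follows exactly the same approach as the paper: expand the parallel composition via the APTC axioms, let $\partial_H$ enforce the internal synchronizations, apply $\tau_I$ to hide the resulting communications and processing functions, and recognize the result via RSP as the recursive equation
\[
\tau_I(\partial_H(C\between M\between V_1\between\cdots\between V_n))=\sum_{d_{I},d_{O_1},\ldots,d_{O_n}\in\Delta}\bigl(r_{I}(d_{I})\cdot s_{O_1}(d_{O_1})\parallel\cdots\parallel s_{O_n}(d_{O_n})\bigr)\cdot \tau_I(\partial_H(C\between M\between V_1\between\cdots\between V_n)).
\]
In fact you spell out considerably more of the mechanics than the paper does, which simply asserts this identity and refers back to the ABP example in Section~\ref{app} for the detailed calculation.
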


\begin{proof}
Based on the above state transitions of the above modules, by use of the algebraic laws of APTC, we can prove that

$\tau_I(\partial_H(C\between M\between V_1\between\cdots\between V_i\between\cdots\between V_n))=\sum_{d_{I},d_{O_1},\cdots,d_{O_n}\in\Delta}(r_{I}(d_{I})\cdot s_{O_1}(d_{O_1})\parallel\cdots\parallel s_{O_i}(d_{O_i})\parallel\cdots\parallel s_{O_n}(d_{O_n}))\cdot
\tau_I(\partial_H(C\between M\between V_1\between\cdots\between V_i\between\cdots\between V_n))$,

that is, the MVC pattern $\tau_I(\partial_H(C\between M\between V_1\between\cdots\between V_i\between\cdots\between V_n))$ can exhibit desired external behaviors.

For the details of proof, please refer to section \ref{app}, and we omit it.
\end{proof}

\subsubsection{Verification of the PAC Pattern}\label{PAC3}

The PAC pattern is also used to model the interactive systems, which has three components: the Abstraction, the Presentations and the Control. The Abstraction is used to contain the data and encapsulate the
core functionalities; the Presentations is to show the computational results to the user; and the Control interacts between the system and user, accepts the instructions and controls the Abstraction
and the Presentations, and also other PACs. The Control receives the instructions from the user through the channel $I$, then it sends the instructions to the Abstraction through the channel $CA$ and receives
the results through the same channel. Then the Control sends the results to the Presentation $i$
through the channel $CP_i$ for $1\leq i\leq n$, and also sends the unprocessed instructions to other PACs through the channel $O$;
When the Presentation $i$ receives the results from the Control, it generates or updates the presentation to the user. As illustrates in Figure \ref{PAC3}.

\begin{figure}
    \centering
    \includegraphics{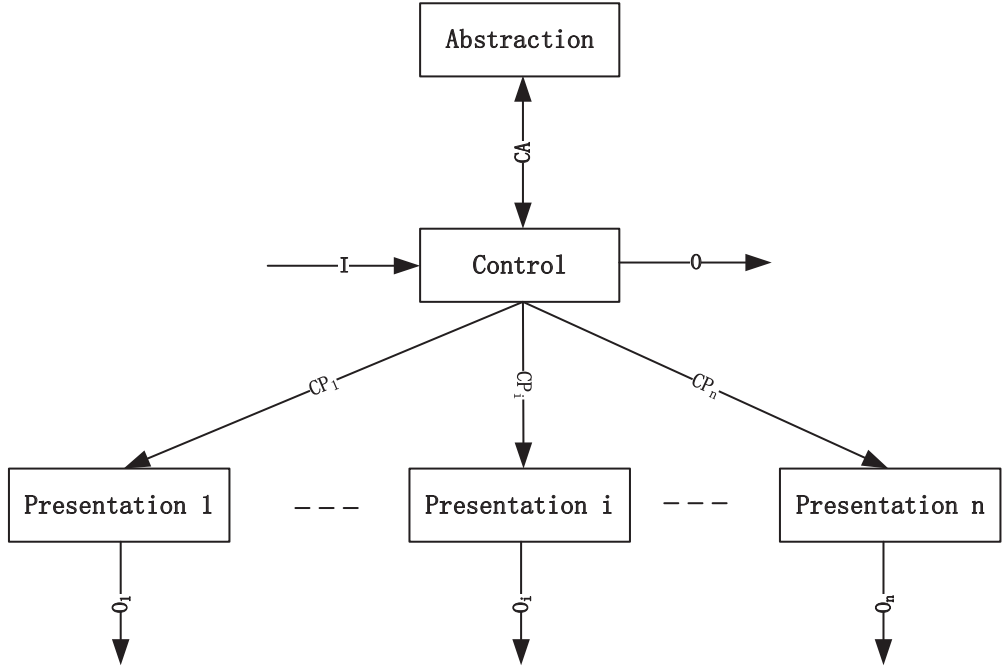}
    \caption{PAC pattern}
    \label{PAC3}
\end{figure}

The typical process of the PAC pattern is shown in Figure \ref{PAC3P} and following.

\begin{enumerate}
  \item The Control receives the instructions $d_I$ from the user through the channel $I$ (the corresponding reading action is denoted $r_I(D_I)$), processes the instructions through
  a processing function $CF_1$, and generates the instructions to the Abstraction $d_{I_A}$ and the remaining instructions $d_O$; it sends $d_{I_A}$ to the Abstraction through the channel $CA$
  (the corresponding sending action is denoted $s_{CA}(d_{I_A})$), sends $d_O$ to the other PAC through the channel $O$ (the corresponding sending action is denoted $s_O(d_O)$);
  \item The Abstraction receives the instructions from the Control through the channel $CA$ (the corresponding reading action is denoted $r_{CA}(d_{I_A})$), processes the instructions through
  a processing function $AF$, generates the computational results to Control which is denoted $d_{O_A}$, and sends the results to the Control through the channel $CA$ (the corresponding
  sending action is denoted $s_{CA}(d_{O_A})$);
  \item The Control receives the computational results from the Abstraction through channel $CA$ (the corresponding reading action is denoted $r_{CA}(d_{O_A})$), processes the results
  through a processing function $CF_2$ to generate the results to the Presentation $i$ (for $1\leq i\leq n$) which is denoted $d_{O_{C_i}}$; then sends the results to the Presentation $i$ through the
  channel $CP_i$ (the corresponding sending action is denoted $s_{CP_i}(d_{O_{C_i}})$);
  \item The Presentation $i$ (for $1\leq i\leq n$) receives the computational results from the Control
  through the channel $CP_i$ (the corresponding reading action is denoted $r_{CP_i}(d_{O_{C_i}})$), processes the results through a processing function $PF_{i}$, generates the output
  $d_{O_i}$, then sending the output through the channel $O_i$ (the corresponding sending action is denoted $s_{O_i}(d_{O_i})$).
\end{enumerate}

\begin{figure}
    \centering
    \includegraphics{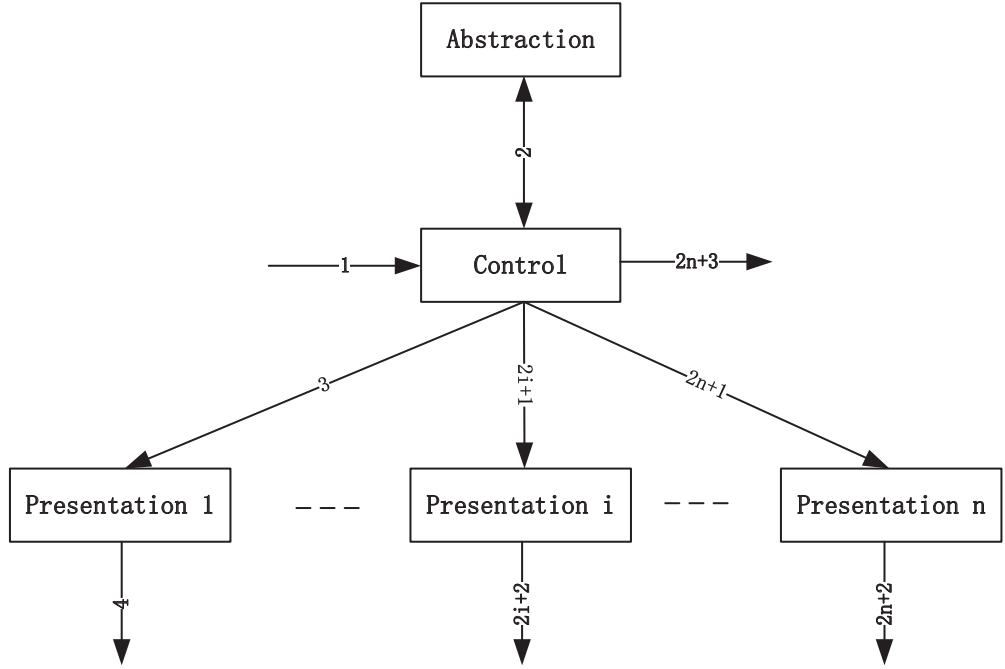}
    \caption{Typical process of PAC pattern}
    \label{PAC3P}
\end{figure}

In the following, we verify the PAC pattern. We assume all data elements $d_{I}$, $d_{I_{A}}$, $d_{O_A}$, $d_O$, $d_{O_{C_i}}$, $d_{O_{i}}$ (for $1\leq i\leq n$) are from a finite set
$\Delta$.

The state transitions of the Control module
described by APTC are as follows.

$C=\sum_{d_{I}\in\Delta}(r_{I}(d_{I})\cdot C_{2})$

$C_{2}=CF_1\cdot C_{3}$

$C_{3}=\sum_{d_{I_A}\in\Delta}(s_{CA}(d_{I_A})\cdot C_{4})$

$C_{4}=\sum_{d_{O}\in\Delta}(s_{O}(d_{O})\cdot C_{5})$

$C_{5}=\sum_{d_{O_A}\in\Delta}(r_{CA}(d_{O_A})\cdot C_{6})$

$C_{6}=CF_2\cdot C_{7}$

$C_{7}=\sum_{d_{O_{C_1}},\cdots,d_{O_{C_n}}\in\Delta}(s_{CP_1}(d_{O_{C_1}})\between\cdots\between s_{CP_n}(d_{O_{C_n}})\cdot C)$

The state transitions of the Abstraction described by APTC are as follows.

$A=\sum_{d_{I_{A}}\in\Delta}(r_{CA}(d_{I_{A}})\cdot A_{2})$

$A_{2}=AF\cdot A_{3}$

$A_{3}=\sum_{d_{O_{A}}\in\Delta}(s_{CA}(d_{O_{A}})\cdot A)$

The state transitions of the Presentation $i$ described by APTC are as follows.

$P_i=\sum_{d_{O_{C_i}}\in\Delta}(r_{CP_i}(d_{O_{C_i}})\cdot P_{i_2})$

$P_{i_2}=PF_i\cdot P_{i_3}$

$P_{i_3}=\sum_{d_{O_{i}}\in\Delta}(s_{O_{i}}(d_{O_i})\cdot P_i)$

The sending action and the reading action of the same data through the same channel can communicate with each other, otherwise, will cause a deadlock $\delta$. We define the following
communication functions of the Presentation $i$ for $1\leq i\leq n$.

$$\gamma(r_{CP_i}(d_{O_{C_i}}),s_{CP_i}(d_{O_{C_i}}))\triangleq c_{CP_i}(d_{O_{C_i}})$$

There are two communication functions between the Control and the Abstraction as follows.

$$\gamma(r_{CA}(d_{I_A}),s_{CA}(d_{I_A}))\triangleq c_{CA}(d_{I_A})$$

$$\gamma(r_{CA}(d_{O_A}),s_{CA}(d_{O_A}))\triangleq c_{CA}(d_{O_A})$$

Let all modules be in parallel, then the PAC pattern $C\quad A \quad P_1\cdots P_i\cdots P_n$ can be presented by the following process term.

$\tau_I(\partial_H(\Theta(C\between A\between P_1\between\cdots\between P_i\between\cdots\between P_n)))=\tau_I(\partial_H(C\between A\between P_1\between\cdots\between P_i\between\cdots\between P_n))$

where $H=\{r_{CP_i}(d_{O_{C_i}}),s_{CP_i}(d_{O_{C_i}}),r_{CA}(d_{I_A}),s_{CA}(d_{I_A}),r_{CA}(d_{O_A}),s_{CA}(d_{O_A})\\
|d_{I}, d_{I_{A}}, d_{O_A}, d_O, d_{O_{C_i}}, d_{O_{i}}\in\Delta\}$ for $1\leq i\leq n$,

$I=\{c_{CA}(d_{I_A}),c_{CA}(d_{O_A}),c_{CP_1}(d_{O_{C_1}}),\cdots,c_{CP_n}(d_{O_{C_n}}),CF_1,CF_2,AF,PF_{1},\cdots,PF_{n}\\
|d_{I}, d_{I_{A}}, d_{O_A}, d_O, d_{O_{C_i}}, d_{O_{i}}\in\Delta\}$ for $1\leq i\leq n$.

Then we get the following conclusion on the PAC pattern.

\begin{theorem}[Correctness of the PAC pattern]
The PAC pattern $\tau_I(\partial_H(C\between A\between P_1\between\cdots\between P_i\between\cdots\between P_n))$ can exhibit desired external behaviors.
\end{theorem}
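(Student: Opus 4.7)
The plan is to mimic the verification strategy used for the MVC, Broker, and Blackboard patterns: expand the whole parallel composition using the APTC axioms, observe that the encapsulation operator $\partial_H$ kills every unmatched send/receive so only the intended communications survive, and then use RDP/RSP together with $\tau_I$ to hide the internal steps. First, I would unfold $C \between A \between P_1 \between \cdots \between P_n$ via $P1$ into the parallel branch and the communication branch. Because $A$ cannot fire until it has received something on channel $CA$, and each $P_i$ cannot fire until it has received on channel $CP_i$, the only non-$\delta$ initial step through $\partial_H$ is $r_I(d_I)$ fired by $C$; all other initial interleavings collapse to $\delta$ by $A7$ and the encapsulation axioms $D1,D2$.

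Next, I would drive the computation through the sequence prescribed by the state equations: after $r_I(d_I)$, the internal action $CF_1$ is performed, then the handshake $c_{CA}(d_{I_A})$ between $C_3$ and $A$ is forced by $\partial_H$, then $s_O(d_O)$ can be emitted externally by $C_4$ in parallel with the internal progress of $A$ (namely $AF$ followed by the return handshake $c_{CA}(d_{O_A})$ with $C_5$). After that, $CF_2$ is performed, and finally the broadcast step $s_{CP_1}(d_{O_{C_1}}) \between \cdots \between s_{CP_n}(d_{O_{C_n}})$ of $C_7$ is forced by $\partial_H$ to synchronise with the $r_{CP_i}$ actions of each $P_i$, yielding the $c_{CP_i}(d_{O_{C_i}})$ events; each $P_i$ then performs $PF_i$ and emits the external $s_{O_i}(d_{O_i})$. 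Applying $\tau_I$ renames every $c_{CA}$, $c_{CP_i}$, $CF_1,CF_2,AF,PF_i$ event to $\tau$, and a $B$-axiom / CFAR argument absorbs the $\tau$s.

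Concretely, I would introduce a guarded linear recursive specification $E$ whose single variable $X_1$ satisfies
\begin{eqnarray*}
X_1 = \sum_{d_I,d_O,d_{O_1},\ldots,d_{O_n} \in \Delta}\bigl(r_I(d_I) \cdot (s_O(d_O) \parallel s_{O_1}(d_{O_1}) \parallel \cdots \parallel s_{O_n}(d_{O_n}))\bigr)\cdot X_1
\end{eqnarray*}
and show that $\tau_I(\partial_H(C \between A \between P_1 \between \cdots \between P_n))$ is a solution of $E$ by expanding the left-hand side step-by-step with the APTC axioms exactly as done for the MVC pattern, then invoking RSP to conclude equality. By Theorems~\ref{SAPTCABS} and~\ref{CCFAR} this suffices, since rooted branching truly concurrent bisimulation is a congruence for all the operators involved.

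The main obstacle, and the only point that genuinely differs from the MVC proof, is the two-way handshake on channel $CA$: the Control alternates between a send $s_{CA}(d_{I_A})$ and a receive $r_{CA}(d_{O_A})$ on the same channel, with a parallel external emission $s_O(d_O)$ interposed. One must check carefully that $\partial_H$ does not accidentally merge the second handshake with residual $A$-behaviour from the first, and that the interleaving of $s_O(d_O)$ with the internal $c_{CA}(d_{I_A})\cdot AF \cdot c_{CA}(d_{O_A})$ chain collapses, under $\tau_I$ and axiom $B3$ (i.e.\ $x \parallel \tau = x$), to a pure $s_O(d_O)$ running in parallel with the remaining computation. Once this bookkeeping is done, the rest is routine algebraic manipulation, and we omit it as in the previous proofs by referring to section~\ref{app}.
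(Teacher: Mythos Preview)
Your overall strategy---unfold with the APTC axioms, let $\partial_H$ prune the unmatched send/receive pairs, abstract with $\tau_I$, and close by RSP---is exactly what the paper does, and your prose walk-through of the synchronisation order is correct. The defect is in the target recursive specification you write down. You set
\[
X_1=\sum_{d_I,d_O,d_{O_1},\ldots,d_{O_n}\in\Delta}r_I(d_I)\cdot\bigl(s_O(d_O)\parallel s_{O_1}(d_{O_1})\parallel\cdots\parallel s_{O_n}(d_{O_n})\bigr)\cdot X_1,
\]
placing $s_O(d_O)$ \emph{concurrently} with the $s_{O_i}(d_{O_i})$. But this contradicts your own narrative and the state equations of $C$: the Control performs $s_O(d_O)$ in state $C_4$, strictly \emph{before} it reaches $C_7$ and broadcasts the $s_{CP_i}$ actions that enable the Presentations to fire $s_{O_i}(d_{O_i})$. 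The only concurrency $s_O(d_O)$ enjoys is with the internal step $AF$ of the Abstraction, which $\tau_I$ turns into $\tau$ and axiom $B3$ absorbs. The correct external shape, as the paper records, is
\[
\sum_{d_I,d_O,d_{O_1},\ldots,d_{O_n}\in\Delta}r_I(d_I)\cdot s_O(d_O)\cdot\bigl(s_{O_1}(d_{O_1})\parallel\cdots\parallel s_{O_n}(d_{O_n})\bigr)\cdot\tau_I(\partial_H(\cdots)).
\]
With that single correction your argument goes through and is essentially identical to the paper's proof.
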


\begin{proof}
Based on the above state transitions of the above modules, by use of the algebraic laws of APTC, we can prove that

$\tau_I(\partial_H(C\between A\between P_1\between\cdots\between P_i\between\cdots\between P_n))=\sum_{d_{I},d_O,d_{O_1},\cdots,d_{O_n}\in\Delta}(r_{I}(d_{I})\cdot s_O(d_O)\cdot s_{O_1}(d_{O_1})\parallel\cdots\parallel s_{O_i}(d_{O_i})\parallel\cdots\parallel s_{O_n}(d_{O_n}))\cdot
\tau_I(\partial_H(C\between A\between P_1\between\cdots\between P_i\between\cdots\between P_n))$,

that is, the PAC pattern $\tau_I(\partial_H(C\between A\between P_1\between\cdots\between P_i\between\cdots\between P_n))$ can exhibit desired external behaviors.

For the details of proof, please refer to section \ref{app}, and we omit it.
\end{proof}

\subsection{Adaptable Systems}\label{AS3}

In this subsection, we verify adaptive systems oriented patterns, including the Microkernel pattern and the Reflection pattern.

\subsubsection{Verification of the Microkernel Pattern}

The Microkernel pattern adapts the changing of the requirements by implementing the unchangeable requirements as a minimal functional kernel and changeable requirements as external
functionalities. There are five modules in the Microkernel pattern: the Microkernel, the Internal Server, the External Server, the Adapter and the Client. The Client interacts with the user through
the channels $I$ and $O$; The Adapter interacts with the Microkernel through the channels $I_{AM}$ and $O_{AM}$, and with the External Server through the channels $I_{AE}$ and $O_{AE}$;
The Microkernel interacts with the Internal Server through the channels $I_{MI}$ and $O_{MI}$, with the External Server through the channels $I_{EM}$ and $O_{EM}$. As illustrates in
Figure \ref{MK3}.

\begin{figure}
    \centering
    \includegraphics{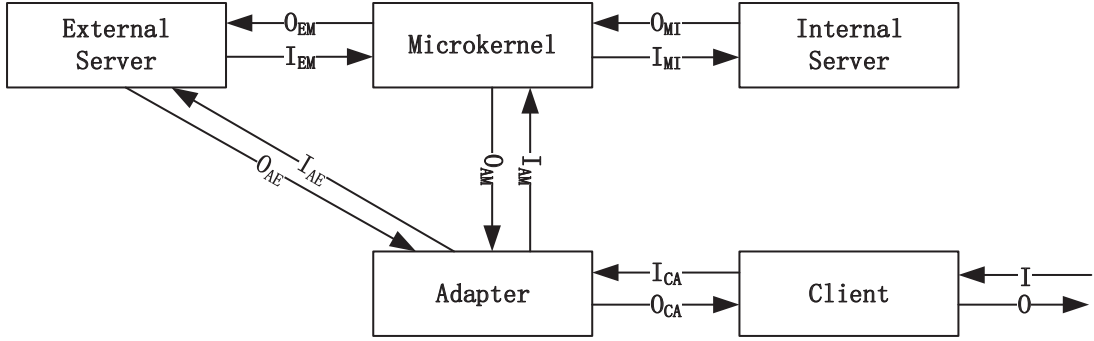}
    \caption{Microkernel pattern}
    \label{MK3}
\end{figure}

The typical process of the Microkernel pattern is shown in Figure \ref{MK3P} and as follows.

\begin{enumerate}
  \item The Client receives the request $d_I$ from the user through the channel $I$ (the corresponding reading action is denoted $r_I(d_I)$), then processes the request $d_I$ through a processing
  function $CF_1$, and sends the processed request $d_{I_C}$ to the Adapter through the channel $I_{CA}$ (the corresponding sending action is denoted $s_{I_{CA}}(d_{I_C})$);
  \item The Adapter receives $d_{I_C}$ from the Client through the channel $I_{CA}$ (the corresponding reading action is denoted $r_{I_{CA}}(d_{I_C})$), then processes the request
  through a processing function $AF_1$, generates and sends the processed request $d_{I_A}$ to the Microkernel through the channel $I_{AM}$ (the corresponding sending action is denoted
  $s_{I_{AM}}(d_{I_A})$);
  \item The Microkernel receives the request $d_{I_A}$ from the Adapter through the channel $I_{AM}$ (the corresponding reading action is denoted $r_{I_{AM}}(d_{I_A})$), then processes
  the request through a processing function $MF_1$, generates and sends the processed request $d_{I_M}$ to the Internal Server through the channel $I_{MI}$ (the corresponding sending action is denoted $s_{I_{MI}}(d_{I_M})$),
  and to the External Server through the channel $I_{EM}$ (the corresponding sending action is denoted $s_{I_{EM}}(d_{I_M})$);
  \item The Internal Server receives the request $d_{I_M}$ from the Microkernel through the channel $I_{MI}$ (the corresponding reading action is denoted $r_{I_{MI}}(d_{I_M})$), then
  processes the request and generates the response $d_{O_I}$ through a processing function $IF$, and sends the response to the Microkernel through the channel $O_{MI}$ (the corresponding sending action is denoted
  $s_{O_{MI}}(d_{O_I})$);
  \item The External Server receives the request $d_{I_M}$ from the Microkernel through the channel $I_{EM}$ (the corresponding reading action is denoted $r_{I_{EM}}(d_{I_M})$), then
  processes the request and generates the response $d_{O_E}$ through a processing function $EF_1$, and sends the response to the Microkernel through the channel $O_{EM}$ (the corresponding sending action is denoted
  $s_{O_{EM}}(d_{O_E})$);
  \item The Microkernel receives the response $d_{O_I}$ from the Internal Server through the channel $O_{MI}$ (the corresponding reading action is denoted $r_{O_{MI}}(d_{O_I})$) and the
  response $d_{O_E}$ from the External Server through the channel $O_{EM}$ (the corresponding reading action is denoted $r_{O_{EM}}(d_{O_E})$), then processes the responses and generate
  the response $d_{O_M}$ through a processing function $MF_2$, and sends $d_{O_M}$ to the Adapter through the channel $O_{AM}$ (the corresponding sending action is denoted $s_{O_{AM}}(d_{O_M})$);
  \item The Adapter receives the response $d_{O_M}$ from the Microkernel through the channel $O_{AM}$ (the corresponding reading action is denoted $r_{O_{AM}}(d_{O_M})$),
  it may send $d_{I_{A'}}$ to the External Server through the channel $I_{AE}$ (the corresponding sending action is denoted $s_{I_{AE}}(d_{I_{A'}})$);
  \item The External Server receives the request $d_{I_{A'}}$ from the Adapter through the channel $I_{AE}$ (the corresponding reading action is denoted $r_{I_{AE}}(d_{I_{A'}})$),
  then processes the request and generate the response $d_{O_{E'}}$ through a processing function $EF_2$, and sends $d_{O_{E'}}$ to the Adapter through the channel $O_{AE}$ (the
  corresponding sending action is denoted $s_{O_{AE}}(d_{O_{E'}})$);
  \item The Adapter receives the response from the External Server through the channel $O_{AE}$ (the corresponding reading action is denoted $r_{O_AE}(d_{O_{E'}})$), then processes
  $d_{O_M}$ and $d_{O_{E'}}$ through a processing function $AF_2$ and generates the response $d_{O_A}$, and sends $d_{O_A}$ to the Client through the channel $O_{CA}$ (the corresponding sending action
  is denoted $s_{O_{CA}}(d_{O_A})$);
  \item The Client receives the response $d_{O_A}$ from the Adapter through the channel $O_{CA}$ (the corresponding reading action is denoted $r_{O_{CA}}(d_{O_A})$), then processes
  $d_{O_A}$ through a processing function $CF_2$ and generate the response $d_O$, and sends $d_O$ to the user through the channel $O$ (the corresponding sending action is denoted
  $s_O(d_O)$).
\end{enumerate}

\begin{figure}
    \centering
    \includegraphics{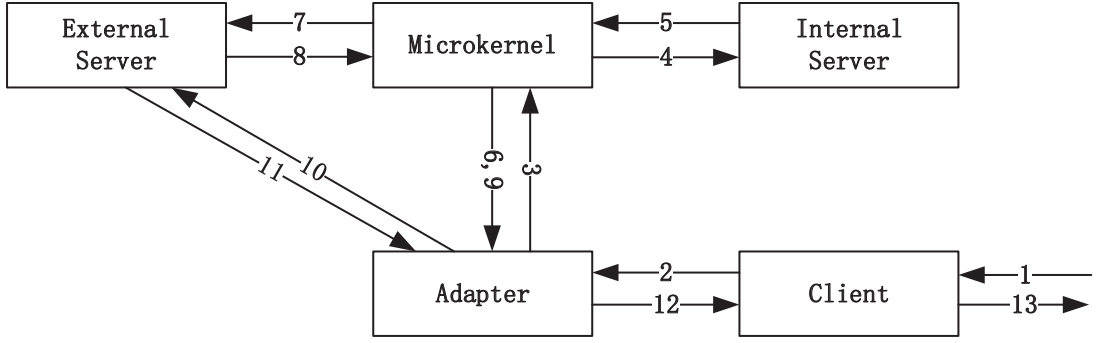}
    \caption{Typical process of Microkernel pattern}
    \label{MK3P}
\end{figure}

In the following, we verify the Microkernel pattern. We assume all data elements $d_{I}$, $d_{I_C}$, $d_{I_{A}}$, $d_{I_{A'}}$, $d_{I_M}$, $d_{O_I}$, $d_{O_E}$, $d_{O_{E'}}$, $d_{O_M}$, $d_O$, $d_{O_{A}}$ (for $1\leq i\leq n$) are from a finite set
$\Delta$.

The state transitions of the Client module
described by APTC are as follows.

$C=\sum_{d_{I}\in\Delta}(r_{I}(d_{I})\cdot C_{2})$

$C_{2}=CF_1\cdot C_{3}$

$C_{3}=\sum_{d_{I_C}\in\Delta}(s_{I_{CA}}(d_{I_C})\cdot C_{4})$

$C_{4}=\sum_{d_{O_A}\in\Delta}(r_{O_{CA}}(d_{O_A})\cdot C_{5})$

$C_{5}=CF_2\cdot C_{6}$

$C_{6}=\sum_{d_{O}\in\Delta}(s_{O}(d_{O})\cdot C)$

The state transitions of the Adapter module
described by APTC are as follows.

$A=\sum_{d_{I_C}\in\Delta}(r_{I_{CA}}(d_{I_C})\cdot A_{2})$

$A_{2}=AF_1\cdot A_{3}$

$A_{3}=\sum_{d_{I_A}\in\Delta}(s_{I_{AM}}(d_{I_A})\cdot A_{4})$

$A_{4}=\sum_{d_{O_M}\in\Delta}(r_{O_{AM}}(d_{O_M})\cdot A_{5})$

$A_{5}=\sum_{d_{I_{A'}}\in\Delta}(s_{I_{AE}}(d_{I_{A'}})\cdot A_{6})$

$A_{6}=\sum_{d_{O_{E'}}\in\Delta}(r_{O_{AE}}(d_{O_{E'}})\cdot A_{7})$

$A_{7}=AF_2\cdot A_{8}$

$A_{8}=\sum_{d_{O_A}\in\Delta}(s_{O_{CA}}(d_{O_A})\cdot A)$

The state transitions of the Microkernel module
described by APTC are as follows.

$M=\sum_{d_{I_A}\in\Delta}(r_{I_{AM}}(d_{I_A})\cdot M_{2})$

$M_{2}=MF_1\cdot M_{3}$

$M_{3}=\sum_{d_{I_M}\in\Delta}(s_{I_{MI}}(d_{I_M})\between s_{I_{EM}}(d_{I_M})\cdot M_{4})$

$M_{4}=\sum_{d_{O_I},d_{O_E}\in\Delta}(r_{O_{MI}}(d_{O_I})\between r_{O_{EM}}(d_{O_E})\cdot M_{5})$

$M_{5}=MF_2\cdot M_{6})$

$M_{6}=\sum_{d_{O_M}\in\Delta}(s_{O_{AM}}(d_{O_M})\cdot M)$

The state transitions of the Internal Server described by APTC are as follows.

$I=\sum_{d_{I_{M}}\in\Delta}(r_{I_{MI}}(d_{I_{M}})\cdot I_{2})$

$I_{2}=IF\cdot I_{3}$

$I_{3}=\sum_{d_{O_{I}}\in\Delta}(s_{O_{MI}}(d_{O_{I}})\cdot I)$

The state transitions of the External Server described by APTC are as follows.

$E=\sum_{d_{I_{M}}\in\Delta}(r_{I_{EM}}(d_{I_{M}})\cdot E_{2})$

$E_{2}=EF_1\cdot E_{3}$

$E_{3}=\sum_{d_{O_{E}}\in\Delta}(s_{O_{EM}}(d_{O_{E}})\cdot E_4)$

$E_4=\sum_{d_{I_{A'}}\in\Delta}(r_{I_{AE}}(d_{I_{A'}})\cdot E_{5})$

$E_{5}=EF_2\cdot E_{6}$

$E_{6}=\sum_{d_{O_{E'}}\in\Delta}(s_{O_{AE}}(d_{O_{E'}})\cdot E)$

The sending action and the reading action of the same data through the same channel can communicate with each other, otherwise, will cause a deadlock $\delta$. We define the following
communication functions between the Client and the Adapter.

$$\gamma(r_{I_{CA}}(d_{I_C}),s_{I_{CA}}(d_{I_C}))\triangleq c_{I_{CA}}(d_{I_C})$$

$$\gamma(r_{O_{CA}}(d_{O_A}),s_{O_{CA}}(d_{O_A}))\triangleq c_{O_{CA}}(d_{O_A})$$

There are two communication functions between the Adapter and the Microkernel as follows.

$$\gamma(r_{I_{AM}}(d_{I_A}),s_{I_{AM}}(d_{I_A}))\triangleq c_{I_{AM}}(d_{I_A})$$

$$\gamma(r_{O_{AM}}(d_{O_M}),s_{O_{AM}}(d_{O_M}))\triangleq c_{O_{AM}}(d_{O_M})$$

There are two communication functions between the Adapter and the External Server as follows.

$$\gamma(r_{I_{AE}}(d_{I_{A'}}),s_{I_{AE}}(d_{I_{A'}}))\triangleq c_{I_{AE}}(d_{I_{A'}})$$

$$\gamma(r_{O_{AE}}(d_{O_{E'}}),s_{O_{AE}}(d_{O_{E'}}))\triangleq c_{O_{AE}}(d_{O_{E'}})$$

There are two communication functions between the Internal Server and the Microkernel as follows.

$$\gamma(r_{I_{MI}}(d_{I_{M}}),s_{I{MI}}(d_{I_{M}}))\triangleq c_{I{MI}}(d_{I_{M}})$$

$$\gamma(r_{O_{MI}}(d_{O_{I}}),s_{O_{MI}}(d_{O_{I}}))\triangleq c_{O_{MI}}(d_{O_{I}})$$

There are two communication functions between the External Server and the Microkernel as follows.

$$\gamma(r_{I_{EM}}(d_{I_{M}}),s_{I_{EM}}(d_{I_{M}}))\triangleq c_{I_{EM}}(d_{I_{M}})$$

$$\gamma(r_{O_{EM}}(d_{O_{E}}),s_{O_{EM}}(d_{O_{E}}))\triangleq c_{O_{EM}}(d_{O_{E}})$$

Let all modules be in parallel, then the Microkernel pattern $C\quad A \quad M\quad I\quad E$ can be presented by the following process term.

$\tau_I(\partial_H(\Theta(C\between A\between M\between I\between E)))=\tau_I(\partial_H(C\between A\between M\between I\between E))$

where $H=\{r_{I_{CA}}(d_{I_C}),s_{I_{CA}}(d_{I_C}),r_{O_{CA}}(d_{O_A}),s_{O_{CA}}(d_{O_A}),r_{I_{AM}}(d_{I_A}),s_{I_{AM}}(d_{I_A}),\\
r_{O_{AM}}(d_{O_M}),s_{O_{AM}}(d_{O_M}),r_{I_{AE}}(d_{I_{A'}}),s_{I_{AE}}(d_{I_{A'}}),r_{O_{AE}}(d_{O_{E'}}),s_{O_{AE}}(d_{O_{E'}}),\\
r_{I_{MI}}(d_{I_{M}}),s_{I{MI}}(d_{I_{M}}),r_{O_{MI}}(d_{O_{I}}),s_{O_{MI}}(d_{O_{I}}),r_{I_{EM}}(d_{I_{M}}),s_{I_{EM}}(d_{I_{M}}),\\
r_{O_{EM}}(d_{O_{E}}),s_{O_{EM}}(d_{O_{E}})
|d_{I}, d_{I_C}, d_{I_{A}}, d_{I_{A'}}, d_{I_M}, d_{O_I}, d_{O_E}, d_{O_{E'}}, d_{O_M}, d_O, d_{O_{A}}\in\Delta\}$,

$I=\{c_{I_{CA}}(d_{I_C}),c_{O_{CA}}(d_{O_A}),c_{I_{AM}}(d_{I_A}),c_{O_{AM}}(d_{O_M}),c_{I_{AE}}(d_{I_{A'}}),c_{O_{AE}}(d_{O_{E'}}),\\
c_{I{MI}}(d_{I_{M}}),c_{O_{MI}}(d_{O_{I}}),c_{I_{EM}}(d_{I_{M}}),c_{O_{EM}}(d_{O_{E}}),CF_1,CF_2,AF_1,AF_2,MF_1,MF_2,IF,EF_1,EF_2\\
|d_{I}, d_{I_C}, d_{I_{A}}, d_{I_{A'}}, d_{I_M}, d_{O_I}, d_{O_E}, d_{O_{E'}}, d_{O_M}, d_O, d_{O_{A}}\in\Delta\}$.

Then we get the following conclusion on the Microkernel pattern.

\begin{theorem}[Correctness of the Microkernel pattern]
The Microkernel pattern $\tau_I(\partial_H(C\between A\between M\between I\between E))$ can exhibit desired external behaviors.
\end{theorem}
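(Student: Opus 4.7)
The plan is to follow the same algebraic expansion strategy that was used for the preceding patterns (Layers, Pipes and Filters, Blackboard, Broker, MVC, PAC), specializing it to the five-module topology $C\between A\between M\between I\between E$. Concretely, I expect to establish the equation
\[
\tau_I(\partial_H(C\between A\between M\between I\between E))
=\sum_{d_I,d_O\in\Delta}\bigl(r_I(d_I)\cdot s_O(d_O)\bigr)\cdot\tau_I(\partial_H(C\between A\between M\between I\between E)),
\]
which is exactly the desired external behavior: consume a user request on channel $I$, emit a user response on channel $O$, and loop.

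First, I would unfold each of $C,A,M,I,E$ one step using the APTC state-transition specifications given just above the theorem, and repeatedly apply axiom $P1$ together with the parallel/communication axioms $P4$--$P10$, $C11$--$C18$ to expand $C\between A\between M\between I\between E$. At each stage, the encapsulation operator $\partial_H$ blocks every send/receive action whose matching partner has not yet fired, collapsing all non-synchronizing summands to $\delta$ via $D2$ and $A6$. This forces the following chain of forced communications to appear in order: $c_{I_{CA}}(d_{I_C})$, then $c_{I_{AM}}(d_{I_A})$, then the step $\{c_{I_{MI}}(d_{I_M}),c_{I_{EM}}(d_{I_M})\}$, then the step $\{c_{O_{MI}}(d_{O_I}),c_{O_{EM}}(d_{O_E})\}$, then $c_{O_{AM}}(d_{O_M})$, then $c_{I_{AE}}(d_{I_{A'}})$, then $c_{O_{AE}}(d_{O_{E'}})$, and finally $c_{O_{CA}}(d_{O_A})$, bracketed by the visible $r_I(d_I)$ at the start and $s_O(d_O)$ at the end and interleaved with the internal processing actions $CF_1,CF_2,AF_1,AF_2,MF_1,MF_2,IF,EF_1,EF_2$.

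Next, I would capture this expansion by a guarded linear recursive specification $E'$ whose single relevant variable $X_1$ satisfies $X_1 = \sum_{d_I} r_I(d_I)\cdot X_2$, with a chain of intermediate variables each of the shape $X_k = \alpha_k\cdot X_{k+1}$ where $\alpha_k$ is either one of the internal communications $c_{\cdot}$, one of the processing actions listed above, or the final $s_O(d_O)$ transition back to $X_1$. By $RSP$, $\partial_H(C\between A\between M\between I\between E)=\langle X_1\mid E'\rangle$. Then applying the abstraction operator $\tau_I$ with $I$ as specified, axioms $TI1$--$TI6$ rewrite every internal action into $\tau$, and $B1$--$B3$ absorb those silent steps; $CFAR$ is invoked to ensure no $\tau$-loop is introduced. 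The result is the desired fixed-point equation displayed above.

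The main obstacle is bookkeeping rather than conceptual: the Microkernel branch spawns two concurrent downstream interactions with the Internal and External servers and then rejoins, while the Adapter performs a second independent round-trip with the External server after the Microkernel returns. One has to verify carefully that $\partial_H$ does not deadlock the parallel step $s_{I_{MI}}(d_{I_M})\between s_{I_{EM}}(d_{I_M})$ (and the matching receive step), that $E_4$ is reached only after $E_3$ has fired so the Adapter--External Server second exchange synchronizes correctly, and that no unintended cross-pairing between $c_{I_{EM}}$ and $c_{I_{AE}}$ is enabled by the communication function. Once these synchronization obligations are checked, the remainder reduces to the same routine algebraic manipulation already demonstrated in the Broker and PAC proofs, which (as the paper does elsewhere) we may cite back to Section~\ref{app} rather than repeat in full.
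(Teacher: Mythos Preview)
Your proposal is correct and follows essentially the same approach as the paper: the paper's own proof simply asserts the target equation $\tau_I(\partial_H(C\between A\between M\between I\between E))=\sum_{d_{I},d_O\in\Delta}(r_{I}(d_{I})\cdot s_O(d_O))\cdot\tau_I(\partial_H(C\between A\between M\between I\between E))$ and refers back to Section~\ref{app} for the algebraic details, which is exactly where you land after spelling out (more explicitly than the paper does) the expansion, encapsulation, $RSP$, and abstraction steps.
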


\begin{proof}
Based on the above state transitions of the above modules, by use of the algebraic laws of APTC, we can prove that

$\tau_I(\partial_H(C\between A\between M\between I\between E))=\sum_{d_{I},d_O\in\Delta}(r_{I}(d_{I})\cdot s_O(d_O))\cdot
\tau_I(\partial_H(C\between A\between M\between I\between E))$,

that is, the Microkernel pattern $\tau_I(\partial_H(C\between A\between M\between I\between E))$ can exhibit desired external behaviors.

For the details of proof, please refer to section \ref{app}, and we omit it.
\end{proof}

\subsubsection{Verification of the Reflection Pattern}

The Reflection pattern makes the system be able to change its structure and behaviors dynamically. There are two levels in the Reflection pattern: one is the meta level to encapsulate
the information of system properties and make the system self-aware; the other is the base level to implement the concrete application logic. The meta level modules include the Metaobject
Protocol and $n$ Metaobject. The Metaobject Protocol is used to configure the Metaobjects, and it interacts with Metaobject $i$ through the channels $I_{MP_i}$ and $O_{MP_i}$, and it exchanges
configuration information with outside through the input channel $I_M$ and $O_M$. The Metaobject encapsulate the system properties, and it interacts with the Metaobject Protocol, and with
the Component through the channels $I_{MC_i}$ and $O_{MC_i}$. The base level modules including concrete Components, which interact with the Metaobject, and with outside through the input
channel $I_C$ and $O_C$. As illustrates in Figure \ref{RE3}.

\begin{figure}
    \centering
    \includegraphics{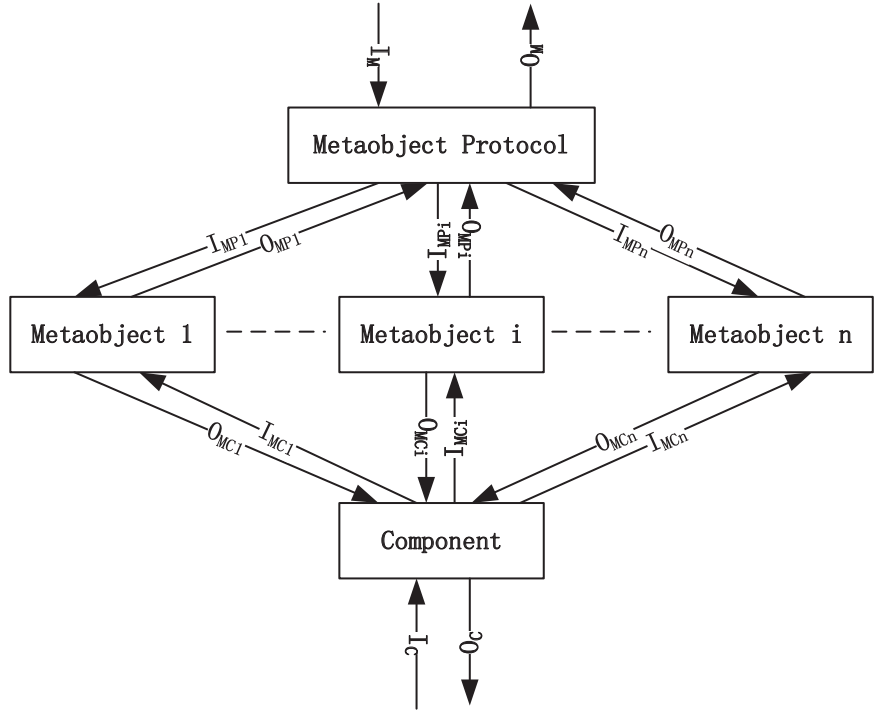}
    \caption{Reflection pattern}
    \label{RE3}
\end{figure}

The typical process of the Reflection pattern is shown in Figure \ref{RE3P} and as follows.

\begin{enumerate}
  \item The Metaobject Protocol receives the configuration information from the user through the channel $I_M$ (the corresponding reading action is denoted $r_{I_M}(d_{I_M})$), then
  processes the information through a processing function $PF_1$ and generates the configuration $d_{I_P}$, and sends $d_{I_P}$ to the Metaobject $i$ (for $1\leq i\leq n$) through the
  channel $I_{MP_i}$ (the corresponding sending action is denoted $s_{I_{MP_i}}(d_{I_P})$);
  \item The Metaobject $i$ receives the configuration $d_{I_P}$ from the Metaobject Protocol through the channel $I_{MP_i}$ (the corresponding reading action is denoted $r_{I_{MP_i}}(d_{I_P})$),
  then configures the properties through a configuration function $MF_{i1}$, and sends the configuration results $d_{O_{M_{i1}}}$ to the Metaobject Protocol through the channel $O_{MP_i}$
  (the corresponding sending action is denoted $s_{O_{MP_i}}(d_{O_{M_{i1}}})$);
  \item The Metaobject Protocol receives the configuration results from the Metaobject $i$ through the channel $O_{MP_i}$ (the corresponding reading action is denoted $r_{O_{MP_i}}(d_{O_{M_{i1}}})$), then
  processes the results through a processing function $PF_2$ and generates the result $d_{O_M}$, and sends $d_{O_M}$ to the outside through the
  channel $O_{M}$ (the corresponding sending action is denoted $s_{O_M}(d_{O_M})$);
  \item The Component receives the invacation from the user through the channel $I_C$ (the corresponding reading action is denoted $r_{I_C}(d_{I_C})$), then
  processes the invocation through a processing function $CF_1$ and generates the configuration $d_{I_C}$, and sends $d_{I_C}$ to the Metaobject $i$ (for $1\leq i\leq n$) through the
  channel $I_{MC_i}$ (the corresponding sending action is denoted $s_{I_{MC_i}}(d_{I_C})$);
  \item The Metaobject $i$ receives the invocation $d_{I_C}$ from the Component through the channel $I_{MC_i}$ (the corresponding reading action is denoted $r_{I_{MC_i}}(d_{I_C})$),
  then computes through a computational function $MF_{i2}$, and sends the computational results $d_{O_{M_{i2}}}$ to the Component through the channel $O_{MC_i}$
  (the corresponding sending action is denoted $s_{O_{MC_i}}(d_{O_{M_{i2}}})$);
  \item The Component receives the computational results from the Metaobject $i$ through the channel $O_{MC_i}$ (the corresponding reading action is denoted $r_{O_{MC_i}}(d_{O_{M_{i2}}})$), then
  processes the results through a processing function $CF_2$ and generates the result $d_{O_C}$, and sends $d_{O_C}$ to the outside through the
  channel $O_{C}$ (the corresponding sending action is denoted $s_{O_C}(d_{O_C})$).
\end{enumerate}

\begin{figure}
    \centering
    \includegraphics{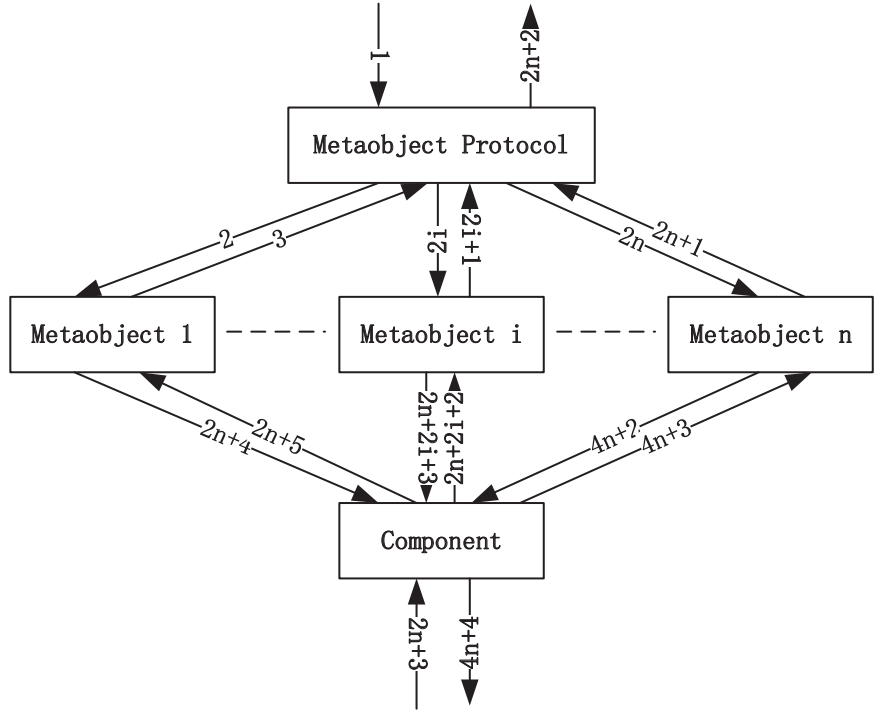}
    \caption{Typical process of Reflection pattern}
    \label{RE3P}
\end{figure}

In the following, we verify the Reflection pattern. We assume all data elements $d_{I_M}$, $d_{I_{C}}$, $d_{I_P}$, $d_{O_{M_{i1}}}$, $d_{O_{M_{i2}}}$, $d_{O_M}$, $d_{O_{C}}$ (for $1\leq i\leq n$) are from a finite set
$\Delta$.

The state transitions of the Metaobject Protocol module
described by APTC are as follows.

$P=\sum_{d_{I_M}\in\Delta}(r_{I_M}(d_{I_M})\cdot P_{2})$

$P_{2}=PF_1\cdot P_{3}$

$P_{3}=\sum_{d_{I_P}\in\Delta}(s_{I_{MP_1}}(d_{I_P})\between\cdots\between s_{I_{MP_n}}(d_{I_P})\cdot P_{4})$

$P_{4}=\sum_{d_{O_{M_{11}}},\cdots,d_{O_{M_{n1}}}\in\Delta}(r_{O_{MP_1}}(d_{O_{M_{11}}})\between\cdots\between r_{O_{MP_n}}(d_{O_{M_{n1}}})\cdot P_{5})$

$P_5=PF_2\cdot P_6$

$P_{6}=\sum_{d_{O_M}\in\Delta}(s_{O_M}(d_{O_M})\cdot P)$

The state transitions of the Component described by APTC are as follows.

$C=\sum_{d_{I_C}\in\Delta}(r_{I_C}(d_{I_C})\cdot C_{2})$

$C_{2}=CF_1\cdot C_{3}$

$C_{3}=\sum_{d_{I_P}\in\Delta}(s_{I_{MC_1}}(d_{I_C})\between\cdots\between s_{I_{MC_n}}(d_{I_C})\cdot C_{4})$

$C_{4}=\sum_{d_{O_{M_{12}}},\cdots,d_{O_{M_{n2}}}\in\Delta}(r_{O_{MC_1}}(d_{O_{M_{12}}})\between\cdots\between r_{O_{MC_n}}(d_{O_{M_{n2}}})\cdot C_{5})$

$C_5=CF_2\cdot C_6$

$C_{6}=\sum_{d_{O_C}\in\Delta}(s_{O_C}(d_{O_C})\cdot C)$

The state transitions of the Metaobject $i$ described by APTC are as follows.

$M_i=\sum_{d_{I_P}\in\Delta}(r_{I_{MP_i}}(d_{I_P})\cdot M_{i_2})$

$M_{i_2}=MF_{i1}\cdot M_{i_3}$

$M_{i_3}=\sum_{d_{O_{M_{i1}}}\in\Delta}(s_{O_{MP_{i}}}(d_{O_{M_{i1}}})\cdot M_{i_4})$

$M_{i_4}=\sum_{d_{I_C}\in\Delta}(r_{I_{MC_i}}(d_{I_C})\cdot M_{i_5})$

$M_{i_5}=MF_{i2}\cdot M_{i_6}$

$M_{i_6}=\sum_{d_{O_{M_{i2}}}\in\Delta}(s_{O_{MC_{i}}}(d_{O_{M_{i2}}})\cdot M_{i})$

The sending action and the reading action of the same data through the same channel can communicate with each other, otherwise, will cause a deadlock $\delta$. We define the following
communication functions of the Metaobject $i$ for $1\leq i\leq n$.

$$\gamma(r_{I_{MP_i}}(d_{I_P}),s_{I_{MP_i}}(d_{I_P}))\triangleq c_{I_{MP_i}}(d_{I_P})$$

$$\gamma(r_{O_{MP_{i}}}(d_{O_{M_{i1}}}),s_{O_{MP_{i}}}(d_{O_{M_{i1}}}))\triangleq c_{O_{MP_{i}}}(d_{O_{M_{i1}}})$$

$$\gamma(r_{I_{MC_i}}(d_{I_C}),s_{I_{MC_i}}(d_{I_C}))\triangleq c_{I_{MC_i}}(d_{I_C})$$

$$\gamma(r_{O_{MC_{i}}}(d_{O_{M_{i2}}}),s_{O_{MC_{i}}}(d_{O_{M_{i2}}}))\triangleq c_{O_{MC_{i}}}(d_{O_{M_{i2}}})$$

Let all modules be in parallel, then the Reflection pattern $C\quad P \quad M_1\cdots M_i\cdots M_n$ can be presented by the following process term.

$\tau_I(\partial_H(\Theta(C\between P\between M_1\between\cdots\between M_i\between\cdots\between M_n)))=\tau_I(\partial_H(C\between P\between M_1\between\cdots\between M_i\between\cdots\between M_n))$

where $H=\{r_{I_{MP_i}}(d_{I_P}),s_{I_{MP_i}}(d_{I_P})),r_{O_{MP_{i}}}(d_{O_{M_{i1}}}),s_{O_{MP_{i}}}(d_{O_{M_{i1}}}),\\
r_{I_{MC_i}}(d_{I_C}),s_{I_{MC_i}}(d_{I_C}),r_{O_{MC_{i}}}(d_{O_{M_{i2}}}),s_{O_{MC_{i}}}(d_{O_{M_{i2}}})
|d_{I_M}, d_{I_{C}}, d_{I_P}, d_{O_{M_{i1}}}, d_{O_{M_{i2}}}, d_{O_M}, d_{O_{C}}\in\Delta\}$ for $1\leq i\leq n$,

$I=\{c_{I_{MP_i}}(d_{I_P}),c_{O_{MP_{i}}}(d_{O_{M_{i1}}}),c_{I_{MC_i}}(d_{I_C}),c_{O_{MC_{i}}}(d_{O_{M_{i2}}}),PF_1,PF_2,CF_1,CF_2,MF_{i1},MF_{i2}\\
|d_{I_M}, d_{I_{C}}, d_{I_P}, d_{O_{M_{i1}}}, d_{O_{M_{i2}}}, d_{O_M}, d_{O_{C}}\in\Delta\}$ for $1\leq i\leq n$.

Then we get the following conclusion on the Reflection pattern.

\begin{theorem}[Correctness of the Reflection pattern]
The Reflection pattern $\tau_I(\partial_H(C\between P\between M_1\between\cdots\between M_i\between\cdots\between M_n))$ can exhibit desired external behaviors.
\end{theorem}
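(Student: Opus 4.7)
The plan is to mirror the structure of the preceding correctness proofs (Microkernel, MVC, PAC, etc.), since the Reflection pattern decomposes into two largely independent interaction flows that share only the Metaobject processes $M_i$. First I would set $\partial_H(C\between P\between M_1\between\cdots\between M_n) = \langle X_1|E\rangle$ for a suitable guarded linear recursive specification $E$, obtained by expanding the parallel composition with axioms $P1$--$P10$ and $C11$--$C18$ and then applying the encapsulation axioms $D1$--$D6$. The role of $\partial_H$ is to block every unmatched send/receive in $H$, so that the only surviving synchronizations are the communication merges $c_{I_{MP_i}}$, $c_{O_{MP_{i}}}$, $c_{I_{MC_i}}$, $c_{O_{MC_{i}}}$; every other combination collapses to $\delta$ by axiom $P9$/$P10$ and $A7$.

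Next, I would carry out the expansion step by step along the control flow of each module. From the state equations of $P$, $C$, and $M_i$, the initial observable actions are the independent reads $r_{I_M}(d_{I_M})$ on $P$ and $r_{I_C}(d_{I_C})$ on $C$; everything the $M_i$ can do first is blocked by $\partial_H$ since it requires a matching send from $P$ or $C$. Hence the first visible step is forced into the truly concurrent composition $r_{I_M}(d_{I_M})\parallel r_{I_C}(d_{I_C})$. After the $PF_1$ and $CF_1$ processing actions (which will be absorbed by $\tau_I$), the $n$ broadcasts $s_{I_{MP_i}}(d_{I_P})$ synchronize with $r_{I_{MP_i}}(d_{I_P})$ producing $c_{I_{MP_i}}(d_{I_P})\in I$, and dually the $s_{I_{MC_i}}(d_{I_C})$ synchronize into $c_{I_{MC_i}}(d_{I_C})\in I$. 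The symmetric return legs synchronize the $O_{MP_i}$ and $O_{MC_i}$ channels, and finally $P$ and $C$ emit $s_{O_M}(d_{O_M})$ and $s_{O_C}(d_{O_C})$ in parallel.

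Once the recursive specification $E$ is written down, I would apply $\tau_I$ and use axioms $TI1$--$TI6$ to erase every communication action and every internal processing action ($PF_1,PF_2,CF_1,CF_2,MF_{i1},MF_{i2}$) that lies in $I$, and then invoke $B1$--$B3$ together with $RSP$ to fold the $\tau$-collapsed specification back into a single guarded equation of the form
\begin{eqnarray*}
\tau_I(\partial_H(C\between P\between M_1\between\cdots\between M_n)) &=& \sum_{d_{I_M},d_{I_C},d_{O_M},d_{O_C}\in\Delta}(r_{I_M}(d_{I_M})\parallel r_{I_C}(d_{I_C}))\cdot\\
&& (s_{O_M}(d_{O_M})\parallel s_{O_C}(d_{O_C}))\cdot \tau_I(\partial_H(C\between P\between M_1\between\cdots\between M_n)),
\end{eqnarray*}
which is the desired external behavior. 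Soundness of the manipulations modulo $\approx_{rbs}$, $\approx_{rbp}$, $\approx_{rbhp}$ is supplied by Theorems~\ref{SAPTCABS} and~\ref{SCFAR}.

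The main obstacle I expect is the bookkeeping in step two: each Metaobject $M_i$ participates in \emph{both} the meta-level interaction with $P$ and the base-level interaction with $C$, and the six-phase state machine of $M_i$ ($r_{I_{MP_i}}\to MF_{i1}\to s_{O_{MP_i}}\to r_{I_{MC_i}}\to MF_{i2}\to s_{O_{MC_i}}$) forces a fixed order between these two halves. One must verify that $\partial_H$ indeed linearizes these halves consistently with the orders imposed by $P$ (which waits for all $n$ acknowledgments $r_{O_{MP_i}}$ before progressing to $PF_2$) and by $C$ (which waits for all $n$ results $r_{O_{MC_i}}$ before $CF_2$); the truly concurrent semantics allows the parallel composition of the two halves across different $i$, but the per-$M_i$ causality must be preserved, and checking that no residual $\delta$ arises from a premature $r_{I_{MC_i}}$ synchronization before the corresponding $s_{O_{MP_i}}$ has fired is the delicate point. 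The remaining details are routine and parallel to Section~\ref{app}, so I would omit them as in the earlier theorems.
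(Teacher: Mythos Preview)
Your proposal is correct and follows essentially the same approach as the paper: expand the parallel composition via the APTC axioms, let $\partial_H$ kill unmatched communications, abstract the internal actions with $\tau_I$, and fold the result by $RSP$ into the single recursive equation $\sum_{d_{I_M},d_{I_C},d_{O_M},d_{O_C}\in\Delta}(r_{I_M}(d_{I_M})\parallel r_{I_C}(d_{I_C}))\cdot(s_{O_M}(d_{O_M})\parallel s_{O_C}(d_{O_C}))\cdot\tau_I(\partial_H(\cdots))$. In fact you have supplied considerably more detail than the paper, which simply asserts this identity and defers the mechanics to the ABP example in Section~\ref{app}; your discussion of the per-$M_i$ causality ordering is a genuine point the paper leaves implicit.
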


\begin{proof}
Based on the above state transitions of the above modules, by use of the algebraic laws of APTC, we can prove that

$\tau_I(\partial_H(C\between P\between M_1\between\cdots\between M_i\between\cdots\between M_n))=\sum_{d_{I_M},d_{I_C},d_{O_M},d_{O_C}\in\Delta}(r_{I_M}(d_{I_M})\parallel r_{I_C}(d_{I_C})\cdot s_{O_M}(d_{O_M})\parallel s_{O_C}(d_{O_C}))\cdot
\tau_I(\partial_H(C\between P\between M_1\between\cdots\between M_i\between\cdots\between M_n))$,

that is, the Reflection pattern $\tau_I(\partial_H(C\between P\between M_1\between\cdots\between M_i\between\cdots\between M_n))$ can exhibit desired external behaviors.

For the details of proof, please refer to section \ref{app}, and we omit it.
\end{proof}

\newpage\section{Verification of Design Patterns}

Design patterns are middle-level patterns, which are lower than the architecture patterns and higher than the programming language-specific idioms. Design patterns describe the 
architecture of the subsystems.

In this chapter, we verify the five categories of design patterns. In section \ref{SD4}, we verify the patterns related to structural decomposition. In section \ref{OW4}, we verify the 
patterns related to organization of work. We verify the patterns related to access control in section \ref{AC4} and verify management oriented patterns in section \ref{M4}. Finally, we 
verify the communication oriented patterns in section \ref{C4}.

\subsection{Structural Decomposition}\label{SD4}

In this subsection, we verify structural decomposition related patterns, including the Whole-Part pattern.

\subsubsection{Verification the Whole-Part Pattern}

The Whole-Part pattern is used to divide application logics into Parts and aggregate the Parts into a Whole. In this pattern, there are a Whole module and $n$ Part modules. The Whole
module interacts with outside through the channels $I$ and $O$, and with Part $i$ (for $1\leq i\leq n$) through the channels $I_{WP_i}$ and $O_{WP_i}$, as illustrated in Figure \ref{WP4}.

\begin{figure}
    \centering
    \includegraphics{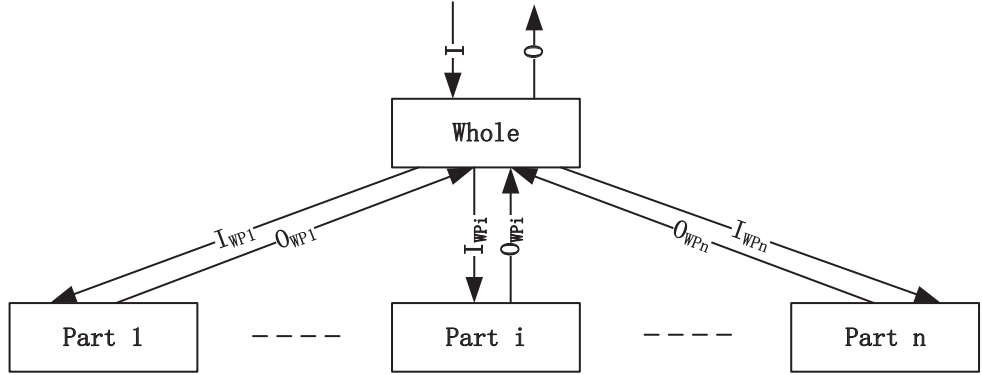}
    \caption{Whole-Part pattern}
    \label{WP4}
\end{figure}

The typical process of the Whole-Part pattern is shown in Figure \ref{WP4P} and as follows.

\begin{enumerate}
  \item The Whole receives the request $d_I$ from outside through the channel $I$ (the corresponding reading action is denoted $r_I(d_I)$), then processes the request through a processing
  function $WF_1$ and generates the request $d_{I_W}$, and sends the $d_{I_W}$ to the Part $i$ through the channel $I_{WP_i}$ (the corresponding sending action is denoted $s_{I_{WP_i}}(d_{I_W})$);
  \item The Part $i$ receives the request $d_{I_W}$ from the Whole through the channel $I_{WP_i}$ (the corresponding reading action is denoted $r_{I_{WP_i}}(d_{I_W})$), then processes the
  request through a processing function $PF_i$ and generates the response $d_{O_{P_i}}$, and sends the response to the Whole through the channel $O_{WP_i}$ (the corresponding sending
  action is denoted $s_{O_{WP_i}}(d_{O_{P_i}})$);
  \item The Whole receives the response $d_{O_{P_i}}$ from the Part $i$ through the channel $O_{WP_i}$ (the corresponding reading action is denoted $r_{O_{WP_i}}(d_{O_{P_i}})$), then processes the request through a processing
  function $WF_2$ and generates the request $d_{O}$, and sends the $d_{O}$ to the outside through the channel $O$ (the corresponding sending action is denoted $s_{O}(d_{O})$).
\end{enumerate}

\begin{figure}
    \centering
    \includegraphics{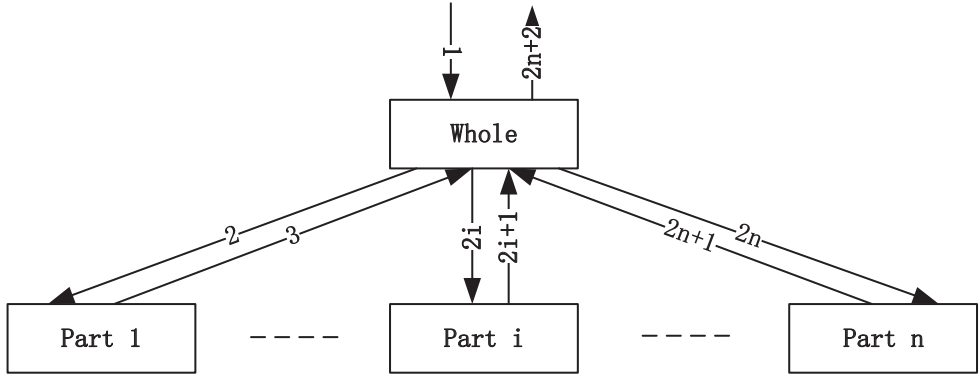}
    \caption{Typical process of Whole-Part pattern}
    \label{WP4P}
\end{figure}

In the following, we verify the Whole-Part pattern. We assume all data elements $d_{I}$, $d_{I_{W}}$, $d_{O_{P_{i}}}$, $d_{O}$ (for $1\leq i\leq n$) are from a finite set
$\Delta$.

The state transitions of the Whole module
described by APTC are as follows.

$W=\sum_{d_{I}\in\Delta}(r_{I}(d_{I})\cdot W_{2})$

$W_{2}=WF_1\cdot W_{3}$

$W_{3}=\sum_{d_{I_W}\in\Delta}(s_{I_{WP_1}}(d_{I_W})\between\cdots\between s_{I_{WP_n}}(d_{I_W})\cdot W_{4})$

$W_{4}=\sum_{d_{O_{P_{1}}},\cdots,d_{O_{P_{n}}}\in\Delta}(r_{O_{WP_1}}(d_{O_{P_{1}}})\between\cdots\between r_{O_{WP_n}}(d_{O_{P_{n}}})\cdot W_{5})$

$W_5=WF_2\cdot W_6$

$W_{6}=\sum_{d_{O}\in\Delta}(s_{O}(d_{O})\cdot W)$

The state transitions of the Part $i$ described by APTC are as follows.

$P_i=\sum_{d_{I_W}\in\Delta}(r_{I_{WP_i}}(d_{I_W})\cdot P_{i_2})$

$P_{i_2}=PF_{i}\cdot P_{i_3}$

$P_{i_3}=\sum_{d_{O_{P_{i}}}\in\Delta}(s_{O_{WP_{i}}}(d_{O_{P_{i}}})\cdot P)$

The sending action and the reading action of the same data through the same channel can communicate with each other, otherwise, will cause a deadlock $\delta$. We define the following
communication functions of the Part $i$ for $1\leq i\leq n$.

$$\gamma(r_{I_{WP_i}}(d_{I_W}),s_{I_{WP_i}}(d_{I_W}))\triangleq c_{I_{WP_i}}(d_{I_W})$$

$$\gamma(r_{O_{WP_{i}}}(d_{O_{P_{i}}}),s_{O_{WP_{i}}}(d_{O_{P_{i}}}))\triangleq c_{O_{WP_{i}}}(d_{O_{P_{i}}})$$

Let all modules be in parallel, then the Whole-Part pattern $Q\quad P_1\cdots P_i\cdots P_n$ can be presented by the following process term.

$\tau_I(\partial_H(\Theta(W\between P_1\between\cdots\between P_i\between\cdots\between P_n)))=\tau_I(\partial_H(W\between P_1\between\cdots\between P_i\between\cdots\between P_n))$

where $H=\{r_{I_{WP_i}}(d_{I_W}),s_{I_{WP_i}}(d_{I_W}),r_{O_{WP_{i}}}(d_{O_{P_{i}}}),s_{O_{WP_{i}}}(d_{O_{P_{i}}})\\
|d_{I}, d_{I_{W}}, d_{O_{P_{i}}}, d_{O}\in\Delta\}$ for $1\leq i\leq n$,

$I=\{c_{I_{WP_i}}(d_{I_W}),c_{O_{WP_{i}}}(d_{O_{P_{i}}}),WF_1,WF_2,PF_{i}\\
|d_{I}, d_{I_{W}}, d_{O_{P_{i}}}, d_{O}\in\Delta\}$ for $1\leq i\leq n$.

Then we get the following conclusion on the Whole-Part pattern.

\begin{theorem}[Correctness of the Whole-Part pattern]
The Whole-Part pattern $\tau_I(\partial_H(W\between P_1\between\cdots\between P_i\between\cdots\between P_n))$ can exhibit desired external behaviors.
\end{theorem}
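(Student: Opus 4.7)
The plan is to follow exactly the schema used in the previous pattern verifications (Layers, Pipes and Filters, Blackboard, Broker, MVC, PAC, Microkernel, Reflection), since the Whole-Part pattern has the same request/response skeleton: a single input on channel $I$, a fan-out to $n$ subordinate components through internal channels, a fan-in of their responses, and a single output on channel $O$. The target identity to establish is
\begin{eqnarray*}
\tau_I(\partial_H(W\between P_1\between\cdots\between P_n))
&=&\sum_{d_I,d_O\in\Delta}(r_I(d_I)\cdot s_O(d_O))\cdot\tau_I(\partial_H(W\between P_1\between\cdots\between P_n)),
\end{eqnarray*}
which expresses the intended external behaviour: read a request, deliver a response, and resume.

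First I would expand $W\between P_1\between\cdots\between P_n$ by repeated use of axiom $P1$ ($x\between y=x\parallel y + x\mid y$) together with $P4$--$P8$ and $C11$--$C18$. At the initial state only the guard $r_I(d_I)$ of $W$ is enabled (each $P_i$ starts with a read on its own private channel $I_{WP_i}$), so all communication-merge summands reduce via $C17, C18$ and the absence of matching actions to $\delta$, and by $A6, A7$ they disappear. After this first $r_I(d_I)$ step the process moves to the fan-out phase $W_3\between P_1\between\cdots\between P_n$. Here the $n$ pairs $s_{I_{WP_i}}(d_{I_W})$ and $r_{I_{WP_i}}(d_{I_W})$ are the only action pairs satisfying $\gamma$, so $\partial_H$ forces them to synchronise into $c_{I_{WP_i}}(d_{I_W})$, which $\tau_I$ then renames to $\tau$. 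The fan-in phase $W_4\between P_{1_3}\between\cdots\between P_{n_3}$ is symmetric: the pairs $s_{O_{WP_i}}(d_{O_{P_i}})$ and $r_{O_{WP_i}}(d_{O_{P_i}})$ synchronise and are abstracted. The two internal processing steps $WF_1,WF_2$ and each $PF_i$ lie in $I$ as well and are also renamed to $\tau$.

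Second I would introduce a linear recursive specification $E$ with a single variable $X$ and the equation
\begin{eqnarray*}
X=\sum_{d_I,d_O\in\Delta}(r_I(d_I)\cdot s_O(d_O))\cdot X,
\end{eqnarray*}
and verify that $\tau_I(\partial_H(W\between P_1\between\cdots\between P_n))$ is a solution: all intermediate $\tau$'s produced by the fan-out, internal computation, and fan-in phases are absorbed by axiom $B1$ ($e\cdot\tau=e$) between the composite read $r_I(d_I)$ and the composite send $s_O(d_O)$. By $RSP$ this solution is unique, giving the desired equality. This is the same bookkeeping as in Section~\ref{app} and in the MVC/Reflection proofs above, so I would defer the axiomatic grind to that reference.

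The only non-routine point is the handling of the parallel fan-out $s_{I_{WP_1}}(d_{I_W})\between\cdots\between s_{I_{WP_n}}(d_{I_W})$ against the corresponding reads, and likewise the fan-in of $n$ simultaneous responses. One has to argue that after $\partial_H$ the only surviving transition from that intermediate state is the step labelled $\{c_{I_{WP_1}}(d_{I_W}),\ldots,c_{I_{WP_n}}(d_{I_W})\}$ (and symmetrically for the responses), because every cross-pair involving an $s_{I_{WP_i}}$ with a non-matching receiver yields $\delta$ via $C17, C18$ and is discarded by $A6$. Once this is seen, $TI6$ distributes $\tau_I$ over the parallel composition of the $c$-actions, each is renamed to $\tau$, and $B3$ ($x\parallel\tau=x$) collapses the whole step to a single $\tau$, after which the argument is indistinguishable from the linear case already treated.
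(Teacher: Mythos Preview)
Your proposal is correct and follows the same schema the paper uses throughout: expand the parallel composition with the APTC axioms, eliminate non-matching communications via $\partial_H$, abstract the internal $c$-actions and processing functions via $\tau_I$, and invoke $RSP$ on the resulting linear recursion; the paper's own proof is the same one-line appeal to ``the algebraic laws of APTC'' with a reference to Section~\ref{app}. One remark: the target identity you wrote, $\sum_{d_I,d_O\in\Delta}(r_I(d_I)\cdot s_O(d_O))\cdot\tau_I(\partial_H(\cdots))$, is in fact the right one for the Whole-Part interface (single external input $I$, single external output $O$), whereas the displayed equation in the paper's proof of this theorem is evidently a copy-paste slip from the Reflection pattern (it mentions $r_{I_M},r_{I_C},s_{O_M},s_{O_C}$, which do not occur in the Whole-Part system).
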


\begin{proof}
Based on the above state transitions of the above modules, by use of the algebraic laws of APTC, we can prove that

$\tau_I(\partial_H(W\between P_1\between\cdots\between P_i\between\cdots\between P_n))=\sum_{d_{I_M},d_{I_C},d_{O_M},d_{O_C}\in\Delta}(r_{I_M}(d_{I_M})\parallel r_{I_C}(d_{I_C})\cdot s_{O_M}(d_{O_M})\parallel s_{O_C}(d_{O_C}))\cdot
\tau_I(\partial_H(W\between P_1\between\cdots\between P_i\between\cdots\between P_n))$,

that is, the Whole-Part pattern $\tau_I(\partial_H(W\between P_1\between\cdots\between P_i\between\cdots\between P_n))$ can exhibit desired external behaviors.

For the details of proof, please refer to section \ref{app}, and we omit it.
\end{proof}

\subsection{Organization of Work}\label{OW4}

\subsubsection{Verification of the Master-Slave Pattern}

The Master-Slave pattern is used to implement large scale computation. In this pattern, there are a Master module and $n$ Slave modules. The Slaves is used to implement concrete
computation and the Master is used to distribute computational tasks and collect the computational results. The Master
module interacts with outside through the channels $I$ and $O$, and with Slave $i$ (for $1\leq i\leq n$) through the channels $I_{MS_i}$ and $O_{MS_i}$, as illustrated in Figure \ref{MS4}.

\begin{figure}
    \centering
    \includegraphics{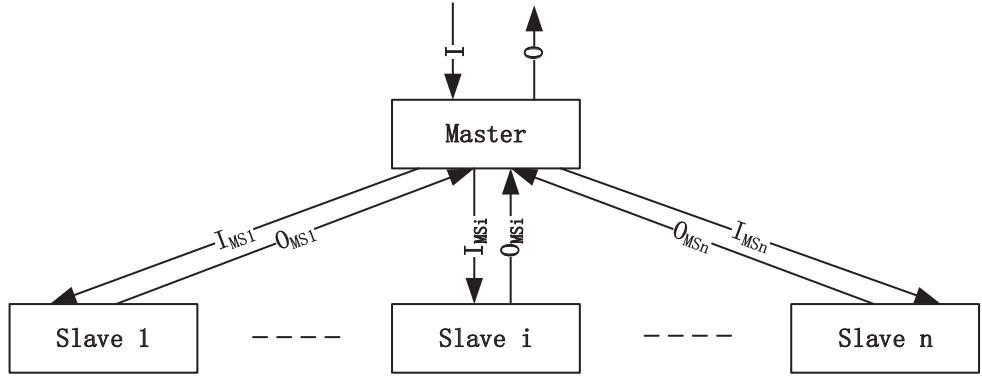}
    \caption{Master-Slave pattern}
    \label{MS4}
\end{figure}

The typical process of the Master-Slave pattern is shown in Figure \ref{MS4P} and as follows.

\begin{enumerate}
  \item The Master receives the request $d_I$ from outside through the channel $I$ (the corresponding reading action is denoted $r_I(d_I)$), then processes the request through a processing
  function $MF_1$ and generates the request $d_{I_M}$, and sends the $d_{I_M}$ to the Slave $i$ through the channel $I_{MS_i}$ (the corresponding sending action is denoted $s_{I_{MS_i}}(d_{I_M})$);
  \item The Slave $i$ receives the request $d_{I_M}$ from the Master through the channel $I_{MS_i}$ (the corresponding reading action is denoted $r_{I_{MS_i}}(d_{I_M})$), then processes the
  request through a processing function $SF_i$ and generates the response $d_{O_{S_i}}$, and sends the response to the Master through the channel $O_{MS_i}$ (the corresponding sending
  action is denoted $s_{O_{MS_i}}(d_{O_{S_i}})$);
  \item The Master receives the response $d_{O_{S_i}}$ from the Slave $i$ through the channel $O_{MS_i}$ (the corresponding reading action is denoted $r_{O_{MS_i}}(d_{O_{S_i}})$), then processes the request through a processing
  function $MF_2$ and generates the request $d_{O}$, and sends the $d_{O}$ to the outside through the channel $O$ (the corresponding sending action is denoted $s_{O}(d_{O})$).
\end{enumerate}

\begin{figure}
    \centering
    \includegraphics{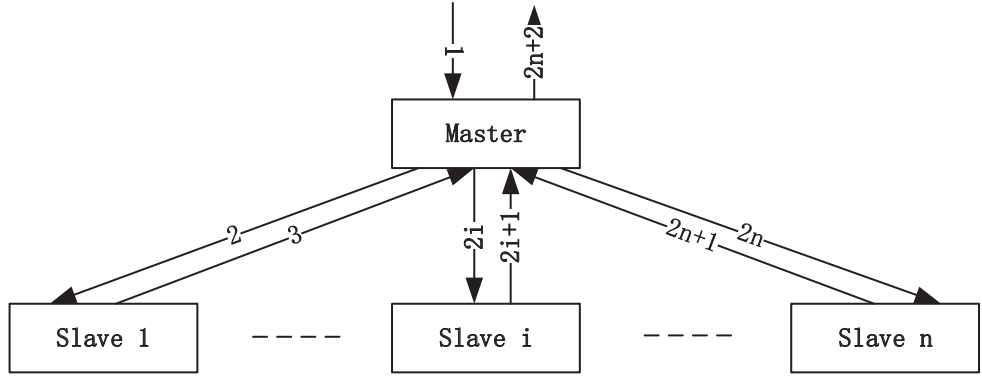}
    \caption{Typical process of Master-Slave pattern}
    \label{MS4P}
\end{figure}

In the following, we verify the Master-Slave pattern. We assume all data elements $d_{I}$, $d_{I_{M}}$, $d_{O_{S_{i}}}$, $d_{O}$ (for $1\leq i\leq n$) are from a finite set
$\Delta$.

The state transitions of the Master module
described by APTC are as follows.

$M=\sum_{d_{I}\in\Delta}(r_{I}(d_{I})\cdot M_{2})$

$M_{2}=MF_1\cdot M_{3}$

$M_{3}=\sum_{d_{I_M}\in\Delta}(s_{I_{MS_1}}(d_{I_M})\between\cdots\between s_{I_{MS_n}}(d_{I_M})\cdot M_{4})$

$M_{4}=\sum_{d_{O_{S_{1}}},\cdots,d_{O_{S_{n}}}\in\Delta}(r_{O_{MS_1}}(d_{O_{S_{1}}})\between\cdots\between r_{O_{MS_n}}(d_{O_{S_{n}}})\cdot M_{5})$

$M_5=MF_2\cdot M_6$

$M_{6}=\sum_{d_{O}\in\Delta}(s_{O}(d_{O})\cdot M)$

The state transitions of the Slave $i$ described by APTC are as follows.

$S_i=\sum_{d_{I_M}\in\Delta}(r_{I_{MS_i}}(d_{I_M})\cdot S_{i_2})$

$S_{i_2}=SF_{i}\cdot S_{i_3}$

$S_{i_3}=\sum_{d_{O_{S_{i}}}\in\Delta}(s_{O_{MS_{i}}}(d_{O_{S_{i}}})\cdot S)$

The sending action and the reading action of the same data through the same channel can communicate with each other, otherwise, will cause a deadlock $\delta$. We define the following
communication functions of the Slave $i$ for $1\leq i\leq n$.

$$\gamma(r_{I_{MS_i}}(d_{I_M}),s_{I_{MS_i}}(d_{I_M}))\triangleq c_{I_{MS_i}}(d_{I_M})$$

$$\gamma(r_{O_{MS_{i}}}(d_{O_{S_{i}}}),s_{O_{MS_{i}}}(d_{O_{S_{i}}}))\triangleq c_{O_{MS_{i}}}(d_{O_{S_{i}}})$$

Let all modules be in parallel, then the Master-Slave pattern $M\quad S_1\cdots S_i\cdots S_n$ can be presented by the following process term.

$\tau_I(\partial_H(\Theta(M\between S_1\between\cdots\between S_i\between\cdots\between S_n)))=\tau_I(\partial_H(M\between S_1\between\cdots\between S_i\between\cdots\between S_n))$

where $H=\{r_{I_{MS_i}}(d_{I_M}),s_{I_{MS_i}}(d_{I_M}),r_{O_{MS_{i}}}(d_{O_{S_{i}}}),s_{O_{MS_{i}}}(d_{O_{S_{i}}})\\
|d_{I}, d_{I_{M}}, d_{O_{S_{i}}}, d_{O}\in\Delta\}$ for $1\leq i\leq n$,

$I=\{c_{I_{MS_i}}(d_{I_M}),c_{O_{MS_{i}}}(d_{O_{S_{i}}}),MF_1,MF_2,SF_{i}\\
|d_{I}, d_{I_{M}}, d_{O_{S_{i}}}, d_{O}\in\Delta\}$ for $1\leq i\leq n$.

Then we get the following conclusion on the Master-Slave pattern.

\begin{theorem}[Correctness of the Master-Slave pattern]
The Master-Slave pattern $\tau_I(\partial_H(M\between S_1\between\cdots\between S_i\between\cdots\between S_n))$ can exhibit desired external behaviors.
\end{theorem}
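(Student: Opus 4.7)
The plan is to follow the same template used for the preceding patterns (Whole-Part, Reflection, Microkernel, etc.): expand the parallel composition $M\between S_1\between\cdots\between S_n$ using the APTC axioms, apply $\partial_H$ to force the sends and receives on the internal channels $I_{MS_i}$, $O_{MS_i}$ to communicate (rather than deadlock), and then use $\tau_I$ to abstract away all the internal communication actions and the internal processing functions $MF_1, MF_2, SF_i$. The desired conclusion is that
\begin{eqnarray*}
\tau_I(\partial_H(M\between S_1\between\cdots\between S_n))
&=&\sum_{d_I,d_O\in\Delta}(r_I(d_I)\cdot s_O(d_O))\\
&&\cdot\,\tau_I(\partial_H(M\between S_1\between\cdots\between S_n)),
\end{eqnarray*}
which exhibits the expected external behavior of a request-response loop between the outside world and the Master.

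First, I would instantiate the state-transition equations of $M$ and each $S_i$ and use axiom $P1$ to rewrite each parallel step as $\parallel$ plus $\mid$. Since the only matching send/receive pairs on $H$ are $(s_{I_{MS_i}}(d_{I_M}),r_{I_{MS_i}}(d_{I_M}))$ and $(s_{O_{MS_i}}(d_{O_{S_i}}),r_{O_{MS_i}}(d_{O_{S_i}}))$, axioms $D1$--$D6$ together with $A6,A7$ will eliminate every $\parallel$-summand where a send or receive on these channels is not matched; only the fully-synchronised summands survive, producing the communication actions $c_{I_{MS_i}}(d_{I_M})$ and $c_{O_{MS_{i}}}(d_{O_{S_i}})$. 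This is exactly the pattern that worked for the ABP verification in Section~\ref{app}, and for the Whole-Part and Master-Slave-like patterns above it.

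Second, I would set $\partial_H(M\between S_1\between\cdots\between S_n)=\langle X_1|E\rangle$ for a suitable guarded linear recursive specification $E$ whose equations roughly have the shape
\begin{eqnarray*}
X_1&=&\sum_{d_I\in\Delta} r_I(d_I)\cdot X_2(d_I),\\
X_2(d_I)&=&MF_1\cdot X_3(d_I),\\
X_3(d_I)&=&(c_{I_{MS_1}}(d_{I_M})\parallel\cdots\parallel c_{I_{MS_n}}(d_{I_M}))\cdot X_4,\\
X_4&=&(SF_1\parallel\cdots\parallel SF_n)\cdot X_5,\\
X_5&=&(c_{O_{MS_1}}(d_{O_{S_1}})\parallel\cdots\parallel c_{O_{MS_n}}(d_{O_{S_n}}))\cdot X_6,\\
X_6&=&MF_2\cdot X_7,\\
X_7&=&\sum_{d_O\in\Delta} s_O(d_O)\cdot X_1.
\end{eqnarray*}
Then applying $\tau_I$ (using $TI1$--$TI6$) turns every $c$-action and every $F$-action into $\tau$, and $B1,B2$ collapse the resulting $\tau$'s. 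Finally $RSP$ identifies the result with the single-equation loop $Y=\sum_{d_I,d_O}r_I(d_I)\cdot s_O(d_O)\cdot Y$, giving the desired external behavior.

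The main obstacle will be justifying the collapse of the $n$-fold synchronisation steps $c_{I_{MS_1}}(d_{I_M})\parallel\cdots\parallel c_{I_{MS_n}}(d_{I_M})$ and the parallel processing step $SF_1\parallel\cdots\parallel SF_n$: one must argue, via iterated applications of $P4$--$P10$ and $C11$--$C18$, that no spurious $\parallel$-summand involving a send on $I_{MS_i}$ matched against a receive on $O_{MS_j}$ (with $i\ne j$) survives $\partial_H$, and that the $n$ different data items $d_{O_{S_i}}$ are correctly threaded back into the Master's $M_4$ state so that the resulting specification really is guarded linear. Once that bookkeeping is in place, soundness of $RSP$ (Theorem~\ref{CCFAR}) closes the proof; detailed calculations are entirely analogous to those in Section~\ref{app} and are therefore omitted.
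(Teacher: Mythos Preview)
Your proposal is correct and follows essentially the same approach as the paper: the paper's own proof simply asserts the target equation
\[
\tau_I(\partial_H(M\between S_1\between\cdots\between S_i\between\cdots\between S_n))=\sum_{d_{I},d_O\in\Delta}(r_{I}(d_{I})\cdot s_{O}(d_{O}))\cdot\tau_I(\partial_H(M\between S_1\between\cdots\between S_i\between\cdots\between S_n))
\]
and defers all details to Section~\ref{app}, which is exactly what you do; if anything, your write-up is more detailed than the paper's, since you spell out the intermediate guarded linear recursive specification and name the specific axioms ($P1$, $D1$--$D6$, $TI1$--$TI6$, $B1$, $B2$, $RSP$) that drive each step.
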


\begin{proof}
Based on the above state transitions of the above modules, by use of the algebraic laws of APTC, we can prove that

$\tau_I(\partial_H(M\between S_1\between\cdots\between S_i\between\cdots\between S_n))=\sum_{d_{I},d_O\in\Delta}(r_{I}(d_{I})\cdot s_{O}(d_{O}))\cdot
\tau_I(\partial_H(M\between S_1\between\cdots\between S_i\between\cdots\between S_n))$,

that is, the Master-Slave pattern $\tau_I(\partial_H(M\between S_1\between\cdots\between S_i\between\cdots\between S_n))$ can exhibit desired external behaviors.

For the details of proof, please refer to section \ref{app}, and we omit it.
\end{proof}

\subsection{Access Control}\label{AC4}

\subsubsection{Verification of the Proxy Pattern}

The Proxy pattern is used to decouple the access of original components through a proxy. In this pattern, there are a Proxy module and $n$ Original modules. The Originals is used to implement concrete
computation and the Proxy is used to decouple the access to the Originals. The Proxy
module interacts with outside through the channels $I$ and $O$, and with Original $i$ (for $1\leq i\leq n$) through the channels $I_{PO_i}$ and $O_{PO_i}$, as illustrated in Figure \ref{Pro4}.

\begin{figure}
    \centering
    \includegraphics{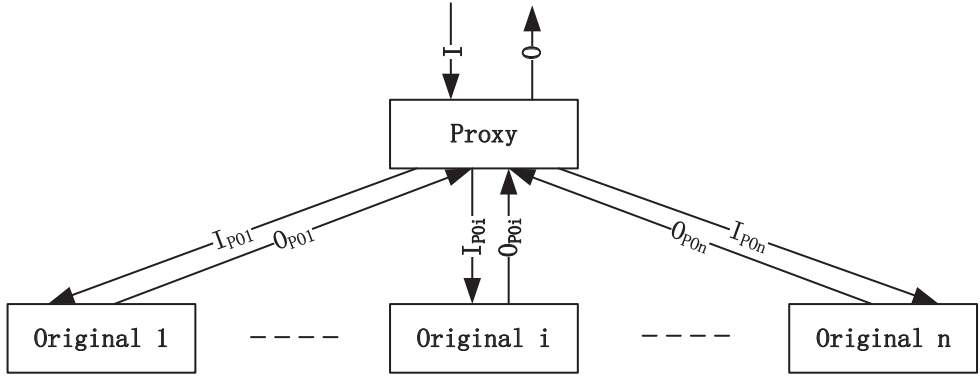}
    \caption{Proxy pattern}
    \label{Pro4}
\end{figure}

The typical process of the Proxy pattern is shown in Figure \ref{Pro4P} and as follows.

\begin{enumerate}
  \item The Proxy receives the request $d_I$ from outside through the channel $I$ (the corresponding reading action is denoted $r_I(d_I)$), then processes the request through a processing
  function $PF_1$ and generates the request $d_{I_P}$, and sends the $d_{I_P}$ to the Original $i$ through the channel $I_{PO_i}$ (the corresponding sending action is denoted $s_{I_{PO_i}}(d_{I_P})$);
  \item The Original $i$ receives the request $d_{I_P}$ from the Proxy through the channel $I_{PO_i}$ (the corresponding reading action is denoted $r_{I_{PO_i}}(d_{I_P})$), then processes the
  request through a processing function $OF_i$ and generates the response $d_{O_{O_i}}$, and sends the response to the Proxy through the channel $O_{PO_i}$ (the corresponding sending
  action is denoted $s_{O_{PO_i}}(d_{O_{O_i}})$);
  \item The Proxy receives the response $d_{O_{O_i}}$ from the Original $i$ through the channel $O_{PO_i}$ (the corresponding reading action is denoted $r_{O_{PO_i}}(d_{O_{O_i}})$), then processes the request through a processing
  function $PF_2$ and generates the request $d_{O}$, and sends the $d_{O}$ to the outside through the channel $O$ (the corresponding sending action is denoted $s_{O}(d_{O})$).
\end{enumerate}

\begin{figure}
    \centering
    \includegraphics{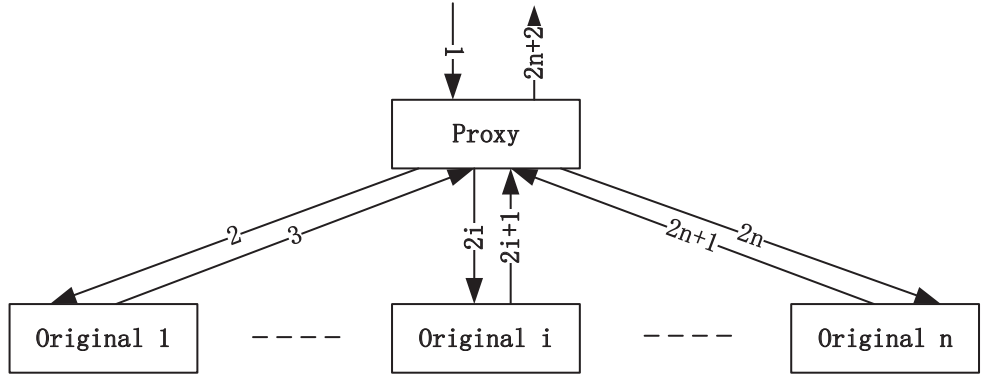}
    \caption{Typical process of Proxy pattern}
    \label{Pro4P}
\end{figure}

In the following, we verify the Proxy pattern. We assume all data elements $d_{I}$, $d_{I_{P}}$, $d_{O_{O_{i}}}$, $d_{O}$ (for $1\leq i\leq n$) are from a finite set
$\Delta$.

The state transitions of the Proxy module
described by APTC are as follows.

$P=\sum_{d_{I}\in\Delta}(r_{I}(d_{I})\cdot P_{2})$

$P_{2}=PF_1\cdot P_{3}$

$P_{3}=\sum_{d_{I_P}\in\Delta}(s_{I_{PO_1}}(d_{I_P})\between\cdots\between s_{I_{PO_n}}(d_{I_P})\cdot P_{4})$

$P_{4}=\sum_{d_{O_{O_{1}}},\cdots,d_{O_{O_{n}}}\in\Delta}(r_{O_{PO_1}}(d_{O_{O_{1}}})\between\cdots\between r_{O_{PO_n}}(d_{O_{O_{n}}})\cdot P_{5})$

$P_5=PF_2\cdot P_6$

$P_{6}=\sum_{d_{O}\in\Delta}(s_{O}(d_{O})\cdot P)$

The state transitions of the Original $i$ described by APTC are as follows.

$O_i=\sum_{d_{I_P}\in\Delta}(r_{I_{PO_i}}(d_{I_P})\cdot O_{i_2})$

$O_{i_2}=OF_{i}\cdot O_{i_3}$

$O_{i_3}=\sum_{d_{O_{O_{i}}}\in\Delta}(s_{O_{PO_{i}}}(d_{O_{O_{i}}})\cdot O_i)$

The sending action and the reading action of the same data through the same channel can communicate with each other, otherwise, will cause a deadlock $\delta$. We define the following
communication functions of the Original $i$ for $1\leq i\leq n$.

$$\gamma(r_{I_{PO_i}}(d_{I_P}),s_{I_{PO_i}}(d_{I_P}))\triangleq c_{I_{PO_i}}(d_{I_P})$$

$$\gamma(r_{O_{PO_{i}}}(d_{O_{O_{i}}}),s_{O_{PO_{i}}}(d_{O_{O_{i}}}))\triangleq c_{O_{PO_{i}}}(d_{O_{O_{i}}})$$

Let all modules be in parallel, then the Proxy pattern $P\quad O_1\cdots O_i\cdots O_n$ can be presented by the following process term.

$\tau_I(\partial_H(\Theta(P\between O_1\between\cdots\between O_i\between\cdots\between O_n)))=\tau_I(\partial_H(P\between O_1\between\cdots\between O_i\between\cdots\between O_n))$

where $H=\{r_{I_{PO_i}}(d_{I_P}),s_{I_{PO_i}}(d_{I_P}),r_{O_{PO_{i}}}(d_{O_{O_{i}}}),s_{O_{PO_{i}}}(d_{O_{O_{i}}})\\
|d_{I}, d_{I_{P}}, d_{O_{O_{i}}}, d_{O}\in\Delta\}$ for $1\leq i\leq n$,

$I=\{c_{I_{PO_i}}(d_{I_P}),c_{O_{PO_{i}}}(d_{O_{O_{i}}}),PF_1,PF_2,OF_{i}\\
|d_{I}, d_{I_{P}}, d_{O_{O_{i}}}, d_{O}\in\Delta\}$ for $1\leq i\leq n$.

Then we get the following conclusion on the Proxy pattern.

\begin{theorem}[Correctness of the Proxy pattern]
The Proxy pattern $\tau_I(\partial_H(P\between O_1\between\cdots\between O_i\between\cdots\between O_n))$ can exhibit desired external behaviors.
\end{theorem}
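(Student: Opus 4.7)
The plan is to mirror the proofs already carried out for the Whole-Part and Master-Slave patterns, since the Proxy pattern has the same communication topology: a central module (here $P$) interacting via request/response channel pairs with $n$ peripheral modules ($O_1,\dots,O_n$), with no direct communication among the peripherals. I will first expand $P\between O_1\between\cdots\between O_n$ using axiom $P1$ (and its iterated form via $P2$, $P3$) to split the whole parallel into the auxiliary parallel $\parallel$ together with the communication merge $\mid$, then apply $RDP$ to unfold the recursive specifications of $P$ and each $O_i$ in terms of the state variables $P,P_2,\dots,P_6$ and $O_i,O_{i_2},O_{i_3}$.

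Next I would push the encapsulation operator $\partial_H$ inside using $D4$--$D6$. By the definition of $H$, the only non-deadlocked synchronizations permitted are the matched pairs $r_{I_{PO_i}}(d_{I_P})\mid s_{I_{PO_i}}(d_{I_P})$ and $r_{O_{PO_i}}(d_{O_{O_i}})\mid s_{O_{PO_i}}(d_{O_{O_i}})$, which by $C11$--$C14$ and the declared communication functions collapse to $c_{I_{PO_i}}(d_{I_P})$ and $c_{O_{PO_i}}(d_{O_{O_i}})$; all other interleavings through $\partial_H$ yield $\delta$ and are absorbed by $A6,A7$. The resulting $\partial_H(P\between O_1\between\cdots\between O_n)$ should, by this reduction, be equal to a recursive term in which after reading $d_I$, the Proxy distributes $d_{I_P}$ in parallel to all $n$ Originals via $n$ synchronized communication events (using $P6,P7,P8$ and the above $c_{I_{PO_i}}$), each Original then computes $OF_i$ and returns $d_{O_{O_i}}$ via $c_{O_{PO_i}}$ synchronized with the Proxy's join point $P_4$, and finally the Proxy emits $d_O$.

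Then I would apply $\tau_I$ using $TI1$--$TI6$ to rename every internal communication $c_{I_{PO_i}}(d_{I_P})$, $c_{O_{PO_i}}(d_{O_{O_i}})$ and every internal processing function $PF_1,PF_2,OF_i$ into $\tau$, and invoke axiom $B1$ to absorb them. Writing $\partial_H(P\between O_1\between\cdots\between O_n)=\langle X_1\mid E\rangle$ for an appropriate guarded linear recursive specification $E$, much as in the ABP verification, the behavior collapses to
\begin{eqnarray}
\tau_I(\partial_H(P\between O_1\between\cdots\between O_n))&=&\sum_{d_I,d_O\in\Delta}(r_I(d_I)\cdot s_O(d_O))\cdot\tau_I(\partial_H(P\between O_1\between\cdots\between O_n))\nonumber
\end{eqnarray}
and $RSP$ applied to the one-variable guarded specification $Y=\sum_{d_I,d_O\in\Delta}r_I(d_I)\cdot s_O(d_O)\cdot Y$ delivers the desired external behavior.

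The main obstacle, as in the neighboring pattern proofs, is the bookkeeping surrounding the fan-out/fan-in at $P_3$ and $P_4$: establishing that $\partial_H$ forces the $n$ send actions $s_{I_{PO_i}}(d_{I_P})$ of the Proxy to synchronize simultaneously with the $n$ reads $r_{I_{PO_i}}(d_{I_P})$ of the Originals (so no $O_i$ can be skipped or deadlocked), and symmetrically that $P_4$ waits on all $n$ responses. This is handled by repeatedly applying $P6$--$P8$ together with $C14$--$C16$ to the $n$-fold parallel composition, and by observing that any mismatched interleaving ends up in $H$ and is killed by $D2$ and $A7$; once the expansion is organized this way, the $\tau_I$ and $RSP$ steps follow exactly the template of section \ref{app} and can be omitted.
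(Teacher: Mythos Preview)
Your proposal is correct and follows exactly the approach the paper uses: the paper's own proof merely asserts that, by the algebraic laws of APTC (following the template of section~\ref{app}), one obtains $\tau_I(\partial_H(P\between O_1\between\cdots\between O_n))=\sum_{d_I,d_O\in\Delta}(r_I(d_I)\cdot s_O(d_O))\cdot\tau_I(\partial_H(P\between O_1\between\cdots\between O_n))$ and omits all details. Your expansion via $P1$--$P8$, $C11$--$C16$, $D4$--$D6$, $TI1$--$TI6$, $B1$, $RDP$ and $RSP$ is precisely the intended instantiation of that template, and is in fact more explicit than what the paper records.
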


\begin{proof}
Based on the above state transitions of the above modules, by use of the algebraic laws of APTC, we can prove that

$\tau_I(\partial_H(P\between O_1\between\cdots\between O_i\between\cdots\between O_n))=\sum_{d_{I},d_O\in\Delta}(r_{I}(d_{I})\cdot s_{O}(d_{O}))\cdot
\tau_I(\partial_H(P\between O_1\between\cdots\between O_i\between\cdots\between O_n))$,

that is, the Proxy pattern $\tau_I(\partial_H(P\between O_1\between\cdots\between O_i\between\cdots\between O_n))$ can exhibit desired external behaviors.

For the details of proof, please refer to section \ref{app}, and we omit it.
\end{proof}

\subsection{Management}\label{M4}

\subsubsection{Verification of the Command Processor Pattern}

The Command Processor pattern is used to decouple the request and execution of a service. In this pattern, there are a Controller module, a Command Processor module,
and $n$ Command modules and $n$ Supplier modules. The Supplier is used to implement concrete
computation, the Command is used to encapsulate a Supplier into a command, and the Command Processor is used to manage Commands. The Controller
module interacts with outside through the channels $I$ and $O$, and with the Command Processor through the channels $I_{CP}$ and $O_{CP}$. The Command Processor interacts with
Command $i$ (for $1\leq i\leq n$) through the channels $I_{PC_i}$ and $O_{PC_i}$, and the Command $i$ interacts with the Supplier $i$ through the channels $I_{CS_i}$ and $O_{CS_i}$,
as illustrated in Figure \ref{CP4}.

\begin{figure}
    \centering
    \includegraphics{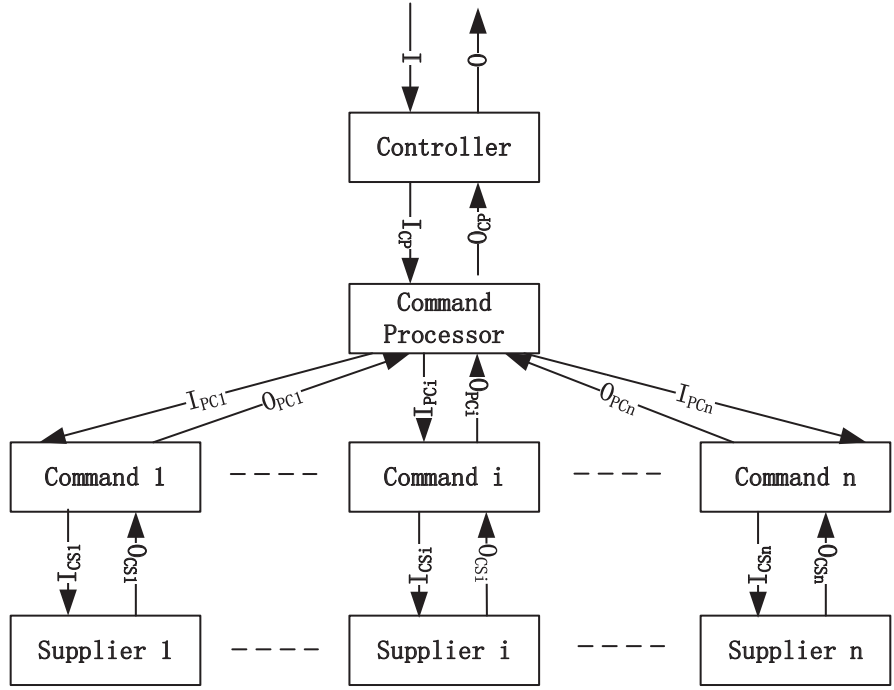}
    \caption{Command Processor pattern}
    \label{CP4}
\end{figure}

The typical process of the Command Processor pattern is shown in Figure \ref{CP4P} and as follows.

\begin{enumerate}
  \item The Controller receives the request $d_I$ from outside through the channel $I$ (the corresponding reading action is denoted $r_I(d_I)$), then processes the request through a processing
  function $CF_1$ and generates the request $d_{I_P}$, and sends the $d_{I_P}$ to the Command Processor through the channel $I_{CP}$ (the corresponding sending action is denoted $s_{I_{CP}}(d_{I_P})$);
  \item The Command Processor receives the request $d_{I_P}$ from the Controller through the channel $I_{CP}$ (the corresponding reading action is denoted $r_{I_{CP}}(d_{I_P})$), then processes the request through a processing
  function $PF_1$ and generates the request $d_{I_{Com_i}}$, and sends the $d_{I_{Com_i}}$ to the Command $i$ through the channel $I_{PC_i}$ (the corresponding sending action is denoted $s_{I_{PC_i}}(d_{I_{Com_i}})$);
  \item The Command $i$ receives the request $d_{I_{Com_i}}$ from the Command Processor through the channel $I_{PC_i}$ (the corresponding reading action is denoted $r_{I_{PC_i}}(d_{I_{Com_i}})$), then processes the
  request through a processing function $ComF_{i1}$ and generates the request $d_{I_{S_i}}$, and sends the request to the Supplier $i$ through the channel $I_{CS_i}$ (the corresponding sending
  action is denoted $s_{I_{CS_i}}(d_{I_{S_i}})$);
  \item The Supplier $i$ receives the request $d_{I_{S_i}}$ from the Command $i$ through the channel $I_{CS_i}$ (the corresponding reading action is denoted $r_{I_{CS_i}}(d_{I_{S_i}})$), then processes the
  request through a processing function $SF_i$ and generates the response $d_{O_{S_i}}$, and sends the response to the Command through the channel $O_{CS_i}$ (the corresponding sending
  action is denoted $s_{O_{CS_i}}(d_{O_{S_i}})$);
  \item The Command $i$ receives the response $d_{O_{S_i}}$ from the Supplier $i$ through the channel $O_{CS_i}$ (the corresponding reading action is denoted $r_{O_{CS_i}}(d_{O_{S_i}})$), then processes the
  request through a processing function $ComF_{i2}$ and generates the response $d_{O_{Com_i}}$, and sends the response to the Command Processor through the channel $O_{PC_i}$ (the corresponding sending
  action is denoted $s_{O_{PC_i}}(d_{O_{Com_i}})$);
  \item The Command Processor receives the response $d_{O_{Com_i}}$ from the Command $i$ through the channel $O_{PC_i}$ (the corresponding reading action is denoted $r_{O_{PC_i}}(d_{O_{Com_i}})$),
  then processes the response and generate the response $d_{O_{P}}$ through a processing function $PF_2$, and sends $d_{O_P}$ to the Controller through the channel $O_{CP}$ (the corresponding
  sending action is denoted $s_{O_{CP}}(d_{O_P})$);
  \item The Controller receives the response $d_{O_{P}}$ from the Command Processor through the channel $O_{CP}$ (the corresponding reading action is denoted $r_{O_{CP}}(d_{O_{P}})$), then processes the request through a processing
  function $CF_2$ and generates the request $d_{O}$, and sends the $d_{O}$ to the outside through the channel $O$ (the corresponding sending action is denoted $s_{O}(d_{O})$).
\end{enumerate}

\begin{figure}
    \centering
    \includegraphics{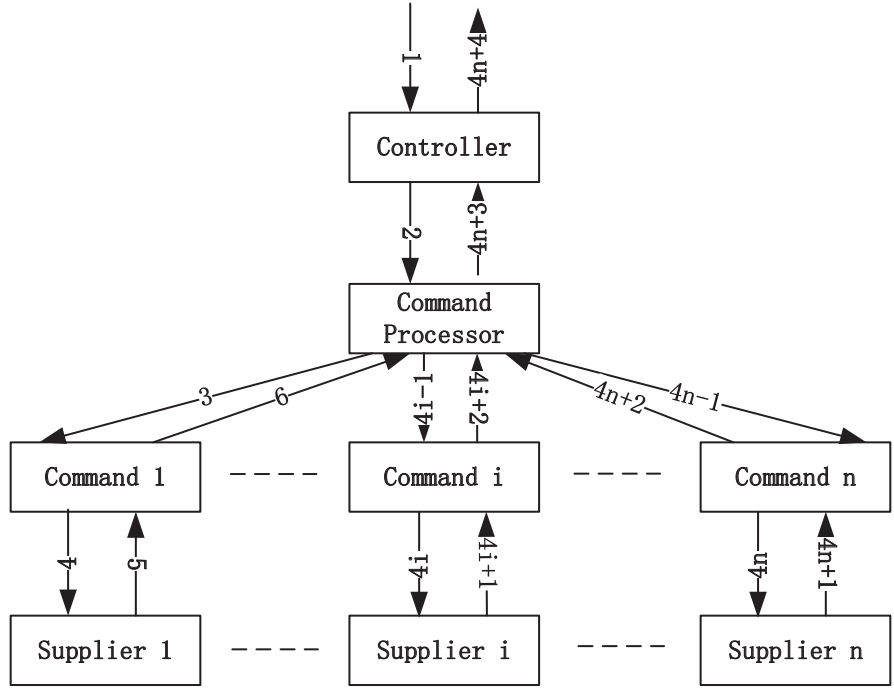}
    \caption{Typical process of Command Processor pattern}
    \label{CP4P}
\end{figure}

In the following, we verify the Command Processor pattern. We assume all data elements $d_{I}$, $d_{I_{P}}$, $d_{I_{Com_i}}$, $d_{I_{S_i}}$, $d_{O_{S_{i}}}$, $d_{O_{Com_i}}$, $d_{O_P}$, $d_{O}$ (for $1\leq i\leq n$) are from a finite set
$\Delta$.

The state transitions of the Controller module
described by APTC are as follows.

$C=\sum_{d_{I}\in\Delta}(r_{I}(d_{I})\cdot C_{2})$

$C_{2}=CF_1\cdot C_{3}$

$C_{3}=\sum_{d_{I_P}\in\Delta}(s_{I_{CP}}(d_{I_P})\cdot C_{4})$

$C_{4}=\sum_{d_{O_{P}}\in\Delta}(r_{O_{CP}}(d_{O_{P}})\cdot C_{5})$

$C_5=CF_2\cdot C_6$

$C_{6}=\sum_{d_{O}\in\Delta}(s_{O}(d_{O})\cdot C)$

The state transitions of the Command Processor module
described by APTC are as follows.

$P=\sum_{d_{I_P}\in\Delta}(r_{I_{CP}}(d_{I_P})\cdot P_{2})$

$P_{2}=PF_1\cdot P_{3}$

$P_{3}=\sum_{d_{I_{Com_1}},\cdots,d_{I_{Com_n}}\in\Delta}(s_{I_{PC_1}}(d_{I_{Com_1}})\between\cdots\between s_{I_{PC_n}}(d_{I_{Com_n}})\cdot P_{4})$

$P_{4}=\sum_{d_{O_{Com_{1}}},\cdots,d_{O_{Com_{n}}}\in\Delta}(r_{O_{PC_1}}(d_{O_{Com_{1}}})\between\cdots\between r_{O_{PC_n}}(d_{O_{Com_{n}}})\cdot P_{5})$

$P_5=PF_2\cdot P_6$

$P_{6}=\sum_{d_{O_P}\in\Delta}(s_{O_{CP}}(d_{O_P})\cdot P)$

The state transitions of the Command $i$ described by APTC are as follows.

$Com_i=\sum_{d_{I_{Com_i}}\in\Delta}(r_{I_{PC_i}}(d_{I_{Com_i}})\cdot Com_{i_2})$

$Com_{i_2}=ComF_{i1}\cdot Com_{i_3}$

$Com_{i_3}=\sum_{d_{I_{S_{i}}}\in\Delta}(s_{I_{CS_{i}}}(d_{I_{S_{i}}})\cdot Com_{i_4})$

$Com_{i_4}=\sum_{d_{O_{S_i}}\in\Delta}(r_{O_{CS_i}}(d_{O_{S_i}})\cdot Com_{i_5})$

$Com_{i_5}=ComF_{i2}\cdot Com_{i_6}$

$Com_{i_6}=\sum_{d_{O_{Com_{i}}}\in\Delta}(s_{O_{PC_{i}}}(d_{O_{Com_{i}}})\cdot Com_i)$

The state transitions of the Supplier $i$ described by APTC are as follows.

$S_i=\sum_{d_{I_{S_i}}\in\Delta}(r_{I_{CS_i}}(d_{I_{S_i}})\cdot S_{i_2})$

$S_{i_2}=SF_{i}\cdot S_{i_3}$

$S_{i_3}=\sum_{d_{O_{S_{i}}}\in\Delta}(s_{O_{CS_{i}}}(d_{O_{S_{i}}})\cdot S_i)$

The sending action and the reading action of the same data through the same channel can communicate with each other, otherwise, will cause a deadlock $\delta$. We define the following
communication functions of between the Controller the Command Processor.

$$\gamma(r_{I_{CP}}(d_{I_P}),s_{I_{CP}}(d_{I_P}))\triangleq c_{I_{CP}}(d_{I_P})$$

$$\gamma(r_{O_{CP}}(d_{O_{P}}),s_{O_{CP}}(d_{O_{P}}))\triangleq c_{O_{CP}}(d_{O_{P}})$$

There are two communication function between the Command Processor and the Command $i$ for $1\leq i\leq n$.

$$\gamma(r_{I_{PC_i}}(d_{I_{Com_i}}),s_{I_{PC_i}}(d_{I_{Com_i}}))\triangleq c_{I_{PC_i}}(d_{I_{Com_i}})$$

$$\gamma(r_{O_{PC_i}}(d_{O_{Com_{i}}}),s_{O_{PC_i}}(d_{O_{Com_{i}}}))\triangleq c_{O_{PC_i}}(d_{O_{Com_{i}}})$$

There are two communication function between the Supplier $i$ and the Command $i$ for $1\leq i\leq n$.

$$\gamma(r_{I_{CS_i}}(d_{I_{S_i}}),s_{I_{CS_i}}(d_{I_{S_i}}))\triangleq c_{I_{CS_i}}(d_{I_{S_i}})$$

$$\gamma(r_{O_{CS_{i}}}(d_{O_{S_{i}}}),s_{O_{CS_{i}}}(d_{O_{S_{i}}}))\triangleq c_{O_{CS_{i}}}(d_{O_{S_{i}}})$$

Let all modules be in parallel, then the Command Processor pattern

$C\quad P \quad Com_1\cdots\quad Com_i\quad\cdots Com_n\quad S_1\cdots S_i\cdots S_n$

can be presented by the following process term.

$\tau_I(\partial_H(\Theta(C\between P\between Com_1\between\cdots\between Com_i\between\cdots\between Com_n\between S_1\between\cdots\between S_i\between\cdots\between S_n)))=\tau_I(\partial_H(C\between P\between Com_1\between\cdots\between Com_i\between\cdots\between Com_n\between S_1\between\cdots\between S_i\between\cdots\between S_n))$

where $H=\{r_{I_{CP}}(d_{I_P}),s_{I_{CP}}(d_{I_P}),r_{O_{CP}}(d_{O_{P}}),s_{O_{CP}}(d_{O_{P}}),r_{I_{PC_i}}(d_{I_{Com_i}}),s_{I_{PC_i}}(d_{I_{Com_i}}),\\
r_{O_{PC_i}}(d_{O_{Com_{i}}}),s_{O_{PC_i}}(d_{O_{Com_{i}}}),r_{I_{CS_i}}(d_{I_{S_i}}),s_{I_{CS_i}}(d_{I_{S_i}}),r_{O_{CS_{i}}}(d_{O_{S_{i}}}),s_{O_{CS_{i}}}(d_{O_{S_{i}}})\\
|d_{I}, d_{I_{P}}, d_{I_{Com_i}}, d_{I_{S_i}}, d_{O_{S_{i}}}, d_{O_{Com_i}}, d_{O_P}, d_{O}\in\Delta\}$ for $1\leq i\leq n$,

$I=\{c_{I_{CP}}(d_{I_P}),c_{O_{CP}}(d_{O_{P}}),c_{I_{PC_i}}(d_{I_{Com_i}}),c_{O_{PC_i}}(d_{O_{Com_{i}}}),c_{I_{CS_i}}(d_{I_{S_i}}),\\
c_{O_{CS_{i}}}(d_{O_{S_{i}}}),CF_1,CF_2,PF_1,PF_2,ComF_{i1},ComF_{i2},SF_{i}\\
|d_{I}, d_{I_{P}}, d_{I_{Com_i}}, d_{I_{S_i}}, d_{O_{S_{i}}}, d_{O_{Com_i}}, d_{O_P}, d_{O}\in\Delta\}$ for $1\leq i\leq n$.

Then we get the following conclusion on the Command Processor pattern.

\begin{theorem}[Correctness of the Command Processor pattern]
The Command Processor pattern $\tau_I(\partial_H(C\between P\between Com_1\between\cdots\between Com_i\between\cdots\between Com_n\between S_1\between\cdots\between S_i\between\cdots\between S_n))$ can exhibit desired external behaviors.
\end{theorem}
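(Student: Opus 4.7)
The plan is to follow the same strategy used for the earlier correctness theorems (Layers, Pipes and Filters, Blackboard, Broker, MVC, PAC, Microkernel, Reflection, Whole-Part, Master-Slave, Proxy) in this chapter. The goal is to show that
\begin{align*}
&\tau_I(\partial_H(C\between P\between Com_1\between\cdots\between Com_n\between S_1\between\cdots\between S_n))\\
&\quad=\sum_{d_I,d_O\in\Delta}(r_I(d_I)\cdot s_O(d_O))\cdot\tau_I(\partial_H(C\between P\between Com_1\between\cdots\between Com_n\between S_1\between\cdots\between S_n)),
\end{align*}
which is exactly the desired external behavior of a service invocation pattern: read a request on $I$, hide all internal traffic, emit a response on $O$, and recurse.

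First, I would expand $\between$ using $P1$ ($x\between y=x\parallel y+x\mid y$) repeatedly to rewrite the whole parallel composition as a sum of parallel-merge terms plus communication-merge terms. Next, I would push $\partial_H$ inward using $D4$–$D6$; by the choice of $H$, every communication-merge term whose participants are not a matching send/receive pair through the shared channels is turned into $\delta$ via $D2$, which is then absorbed by $A6$ and $A7$. The surviving non-$\delta$ summands correspond exactly to the matched handshakes $c_{I_{CP}}(d_{I_P})$, $c_{I_{PC_i}}(d_{I_{Com_i}})$, $c_{I_{CS_i}}(d_{I_{S_i}})$, $c_{O_{CS_i}}(d_{O_{S_i}})$, $c_{O_{PC_i}}(d_{O_{Com_i}})$, and $c_{O_{CP}}(d_{O_P})$ along the causal chain Controller $\to$ Processor $\to$ Command$_i$ $\to$ Supplier$_i$ $\to$ Command$_i$ $\to$ Processor $\to$ Controller.

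Second, following the state transitions of the five module classes, I would set up a guarded linear recursive specification $E$ whose variables $X_k$ track the joint control state of the composite system after each handshake (e.g.\ $X_1$ just read $d_I$; $X_2$ just completed $c_{I_{CP}}$; and so on through the processing functions $CF_1, PF_1, ComF_{i1}, SF_i, ComF_{i2}, PF_2, CF_2$ and the final output $s_O(d_O)$), and verify that $\partial_H(C\between P\between Com_1\between\cdots\between S_n)$ is a solution of $E$. Then applying $\tau_I$ with $TI1$–$TI6$ renames all internal communications and all processing-function actions in $I$ into $\tau$, after which $B1$ ($e\cdot\tau=e$) and $B2$ collapse the $\tau$-steps between $r_I(d_I)$ and $s_O(d_O)$. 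By $RSP$ the resulting equation has a unique solution, yielding the required fixpoint identity.

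The main obstacle is bookkeeping rather than conceptual novelty: the parallel fan-out from the Command Processor to the $n$ Commands and then to the $n$ Suppliers produces combinatorially many interleavings under $P7$, $P8$ and $C15$, $C16$, so care is needed to argue that the encapsulation $\partial_H$ kills every summand that does not follow the intended Controller$\to$Processor$\to$Command$_i\to$Supplier$_i$ round trip, and that the conflict-elimination operator $\Theta$ (which is absorbed since no $\sharp$-pairs are introduced in the handshake chain) does not produce extra $\tau$-branches. Once that reduction is done, applying $RSP$/$RDP$ and $CFAR$ to close out the recursion and eliminate the guarded $\tau$-loops is routine, so the proof is omitted and the reader is referred to Section~\ref{app} in the style of the earlier theorems.
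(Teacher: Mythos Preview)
Your proposal is correct and follows essentially the same approach as the paper. The paper's own proof simply states the target fixpoint identity
\[
\tau_I(\partial_H(C\between P\between Com_1\between\cdots\between Com_n\between S_1\between\cdots\between S_n))=\sum_{d_{I},d_O\in\Delta}(r_{I}(d_{I})\cdot s_{O}(d_{O}))\cdot\tau_I(\partial_H(\cdots))
\]
and defers all details to Section~\ref{app}; your write-up is in fact more explicit than the paper's, spelling out the specific APTC laws ($P1$, $D4$--$D6$, $TI1$--$TI6$, $B1$, $B2$, $RSP$/$RDP$, $CFAR$) that drive the expansion--encapsulation--abstraction pipeline, which is exactly what the ABP example in Section~\ref{app} illustrates.
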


\begin{proof}
Based on the above state transitions of the above modules, by use of the algebraic laws of APTC, we can prove that

$\tau_I(\partial_H(C\between P\between Com_1\between\cdots\between Com_i\between\cdots\between Com_n\between S_1\between\cdots\between S_i\between\cdots\between S_n))=\sum_{d_{I},d_O\in\Delta}(r_{I}(d_{I})\cdot s_{O}(d_{O}))\cdot
\tau_I(\partial_H(C\between P\between Com_1\between\cdots\between Com_i\between\cdots\between Com_n\between S_1\between\cdots\between S_i\between\cdots\between S_n))$,

that is, the Command Processor pattern $\tau_I(\partial_H(C\between P\between Com_1\between\cdots\between Com_i\between\cdots\between Com_n\between S_1\between\cdots\between S_i\between\cdots\between S_n))$ can exhibit desired external behaviors.

For the details of proof, please refer to section \ref{app}, and we omit it.
\end{proof}

\subsubsection{Verification of the View Handler Pattern}

The View Handler pattern is used to manage all views of the system, which has three components: the Supplier, the Views and the ViewHandler. The Supplier is used to contain the data and encapsulate the
core functionalities; the Views is to show the computational results to the user; and the ViewHandler interacts between the system and user, accepts the instructions and controls the Supplier
and the Views. The ViewHandler receives the instructions from the user through the channel $I$, then it sends the instructions to the Supplier through the channel $VS$ and to the View $i$
through the channel $VV_i$ for $1\leq i\leq n$; the model receives the instructions from the ViewHandler, updates the data and computes the results, and sends the results to the View $i$
through the channel $SV_i$ for $1\leq i\leq n$; When the View $i$ receives the results from the Supplier, it generates or updates the view to the user. As illustrates in Figure \ref{VH4}.

\begin{figure}
    \centering
    \includegraphics{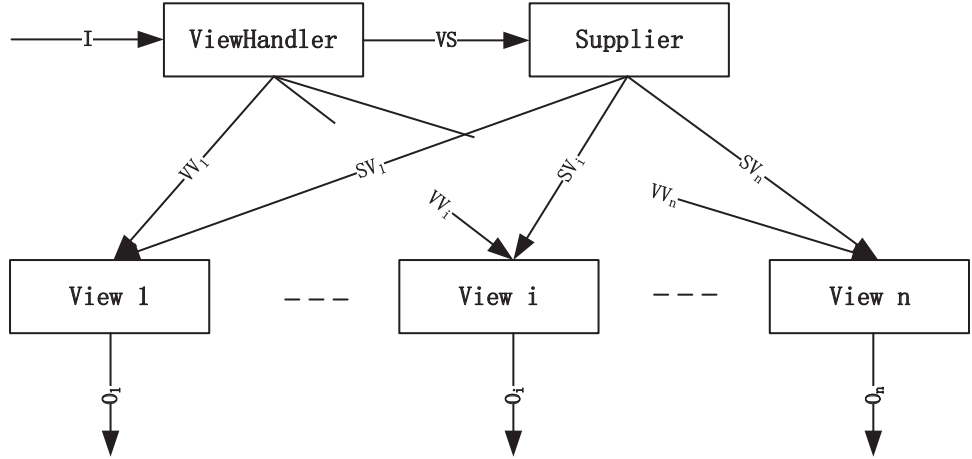}
    \caption{View Handler pattern}
    \label{VH4}
\end{figure}

The typical process of the View Handler pattern is shown in Figure \ref{VH4P} and following.

\begin{enumerate}
  \item The ViewHandler receives the instructions $d_I$ from the user through the channel $I$ (the corresponding reading action is denoted $r_I(D_I)$), processes the instructions through
  a processing function $VHF$, and generates the instructions to the Supplier $d_{I_S}$ and those to the View $i$ $d_{I_{V_i}}$ for $1\leq i\leq n$; it sends $d_{I_S}$ to the Supplier through the channel $CM$
  (the corresponding sending action is denoted $s_{VS}(d_{I_M})$) and sends $d_{I_{V_i}}$ to the View $i$ through the channel $VV_i$ (the corresponding sending action is denoted
  $s_{VV_i}(d_{I_{V_i}})$);
  \item The Supplier receives the instructions from the ViewHandler through the channel $VS$ (the corresponding reading action is denoted $r_{VS}(d_{I_S})$), processes the instructions through
  a processing function $SF$, generates the computational results to the View $i$ (for $1\leq i\leq n$) which is denoted $d_{O_{S_i}}$; then sends the results to the View $i$ through the
  channel $SV_i$ (the corresponding sending action is denoted $s_{MV_i}(d_{O_{M_i}})$);
  \item The View $i$ (for $1\leq i\leq n$) receives the instructions from the ViewHandler through the channel $VV_i$ (the corresponding reading action is denoted $r_{VV_i}(d_{I_{V_i}})$),
  processes the instructions through a processing function $VF_{i_1}$ to make ready to receive the computational results from the Supplier; then it receives the computational results from the Supplier
  through the channel $SV_i$ (the corresponding reading action is denoted $r_{SV_i}(d_{O_{S_i}})$), processes the results through a processing function $VF_{i_2}$, generates the output
  $d_{O_i}$, then sending the output through the channel $O_i$ (the corresponding sending action is denoted $s_{O_i}(d_{O_i})$).
\end{enumerate}

\begin{figure}
    \centering
    \includegraphics{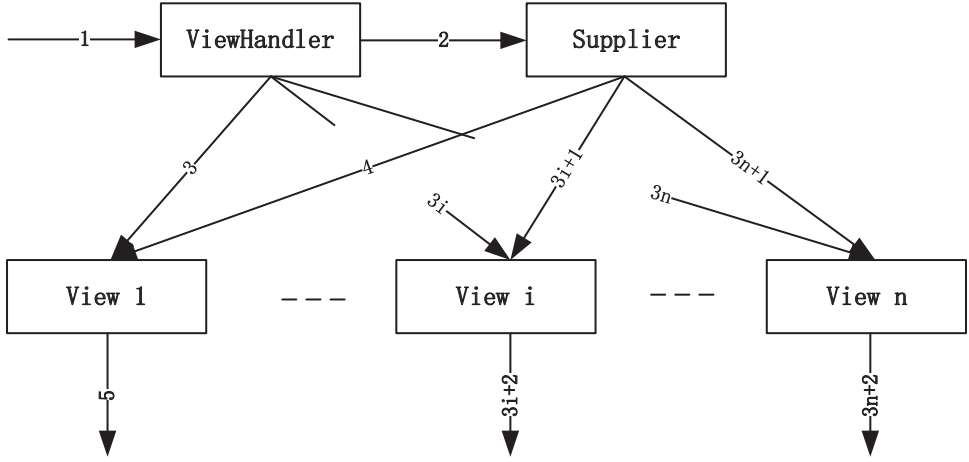}
    \caption{Typical process of View Handler pattern}
    \label{VH4P}
\end{figure}

In the following, we verify the View Handler pattern. We assume all data elements $d_{I}$, $d_{I_{S}}$, $d_{I_{V_i}}$, $d_{O_{S_i}}$, $d_{O_{i}}$ (for $1\leq i\leq n$) are from a finite set
$\Delta$.

The state transitions of the ViewHandler module
described by APTC are as follows.

$VH=\sum_{d_{I}\in\Delta}(r_{I}(d_{I})\cdot VH_{2})$

$VH_{2}=VHF\cdot VH_{3}$

$VH_{3}=\sum_{d_{I_S}\in\Delta}(s_{VS}(d_{I_S})\cdot VH_{4})$

$VH_{4}=\sum_{d_{I_{V_1}},\cdots,d_{I_{V_n}}\in\Delta}(s_{VV_1}(d_{I_{V_1}})\between\cdots\between s_{VV_n}(d_{I_{V_n}})\cdot VH)$

The state transitions of the Supplier described by APTC are as follows.

$S=\sum_{d_{I_{M}}\in\Delta}(r_{VS}(d_{I_{S}})\cdot S_{2})$

$S_{2}=SF\cdot S_{3}$

$S_{3}=\sum_{d_{O_{S_1}},\cdots,d_{O_{S_n}}\in\Delta}(s_{SV_1}(d_{O_{S_1}})\between\cdots\between s_{SV_n}(d_{O_{S_n}})\cdot S)$

The state transitions of the View $i$ described by APTC are as follows.

$V_i=\sum_{d_{I_{V_i}}\in\Delta}(r_{VV_i}(d_{I_{V_i}})\cdot V_{i_2})$

$V_{i_2}=VF_{i_1}\cdot V_{i_3}$

$V_{i_3}=\sum_{d_{O_{S_i}}\in\Delta}(r_{SV_i}(d_{O_{S_i}})\cdot V_{i_4})$

$V_{i_4}=VF_{i_2}\cdot V_{i_5}$

$V_{i_5}=\sum_{d_{O_{i}}\in\Delta}(s_{O_{i}}(d_{O_i})\cdot V_i)$

The sending action and the reading action of the same data through the same channel can communicate with each other, otherwise, will cause a deadlock $\delta$. We define the following
communication functions of the View $i$ for $1\leq i\leq n$.

$$\gamma(r_{VV_i}(d_{I_{V_i}}),s_{VV_i}(d_{I_{V_i}}))\triangleq c_{VV_i}(d_{I_{V_i}})$$
$$\gamma(r_{SV_i}(d_{O_{S_i}}),s_{SV_i}(d_{O_{S_i}}))\triangleq c_{SV_i}(d_{O_{S_i}})$$

There are one communication functions between the ViewHandler and the Supplier as follows.

$$\gamma(r_{VS}(d_{I_S}),s_{VS}(d_{I_S}))\triangleq c_{VS}(d_{I_S})$$

Let all modules be in parallel, then the View Handler pattern $VH\quad S\quad V_1\cdots V_i\cdots V_n$ can be presented by the following process term.

$\tau_I(\partial_H(\Theta(VH\between S\between V_1\between\cdots\between V_i\between\cdots\between V_n)))=\tau_I(\partial_H(VH\between S\between V_1\between\cdots\between V_i\between\cdots\between V_n))$

where $H=\{r_{VV_i}(d_{I_{V_i}}),s_{VV_i}(d_{I_{V_i}}),r_{SV_i}(d_{O_{S_i}}),s_{SV_i}(d_{O_{S_i}}),r_{VS}(d_{I_S}),s_{VS}(d_{I_S})\\
|d_{I}, d_{I_{S}}, d_{I_{V_i}}, d_{O_{S_i}}, d_{O_{i}}\in\Delta\}$ for $1\leq i\leq n$,

$I=\{c_{VV_i}(d_{I_{V_i}}),c_{VV_i}(d_{O_{M_i}}),c_{VS}(d_{I_S}),VHF,SF,VF_{1_1},VF_{1_2},\cdots,VF_{n_1},VF_{n_2}\\
|d_{I}, d_{I_{S}}, d_{I_{V_i}}, d_{O_{S_i}}, d_{O_{i}}\in\Delta\}$ for $1\leq i\leq n$.

Then we get the following conclusion on the View Handler pattern.

\begin{theorem}[Correctness of the View Handler pattern]
The View Handler pattern $\tau_I(\partial_H(VH\between S\between V_1\between\cdots\between V_i\between\cdots\between V_n))$ can exhibit desired external behaviors.
\end{theorem}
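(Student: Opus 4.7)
The plan is to follow the same algebraic strategy used in the proofs of the MVC, PAC, and Command Processor pattern theorems, since the View Handler pattern has an almost identical dataflow shape to MVC, with $VH$, $S$, and the $V_i$ playing the roles of Controller, Model, and Views respectively. First I would take the linear recursive specifications for $VH$, $S$, and $V_i$ given above and repeatedly apply the expansion laws $P1$ through $P10$ together with the communication axioms $C11$ through $C18$ to rewrite $\partial_H(VH\between S\between V_1\between\cdots\between V_n)$ into a single guarded linear recursive specification. The encapsulation operator $\partial_H$ (axioms $D1$-$D6$) will block every raw $s_{VS}$, $r_{VS}$, $s_{VV_i}$, $r_{VV_i}$, $s_{SV_i}$, $r_{SV_i}$ into $\delta$, leaving only the communication actions $c_{VS}$, $c_{VV_i}$, $c_{SV_i}$ as the surviving internal events, together with the external actions $r_I(d_I)$ and $s_{O_i}(d_{O_i})$.

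Next I would apply $\tau_I$ and use axioms $TI1$-$TI6$ to rename every internal action in $I$ (both the communication actions and the processing functions $VHF$, $SF$, $VF_{i_1}$, $VF_{i_2}$) into $\tau$. Axioms $B1$ and $B2$ then allow these $\tau$ steps to be absorbed into the surrounding actions. Collecting the resulting recursive equations, the whole system reduces to a single variable $X$ satisfying
\begin{eqnarray}
X=\sum_{d_I,d_{O_1},\ldots,d_{O_n}\in\Delta} r_I(d_I)\cdot\bigl(s_{O_1}(d_{O_1})\parallel\cdots\parallel s_{O_n}(d_{O_n})\bigr)\cdot X,\nonumber
\end{eqnarray}
and an application of $RSP$ to this guarded linear recursion identifies $\tau_I(\partial_H(VH\between S\between V_1\between\cdots\between V_n))$ with the unique solution $X$, which is the desired external behavior.

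The main obstacle, as in the MVC case, is tracking the two-stage synchronization: once the input $d_I$ is consumed, $VH$ must dispatch in parallel to $S$ (via $VS$) and to every $V_i$ (via $VV_i$), and each $V_i$ must then wait for $S$'s answer on $SV_i$ before emitting on $O_i$. Because the $VF_{i_1}$ step in $V_i$ enforces the causal order ``receive instruction, then receive data, then emit output,'' one has to check that no spurious interleaving produces a state in which some $V_i$ attempts $r_{SV_i}$ before $S$ has reached its $s_{SV_i}$ step; the axiom $CE21$-$CE24$ for $\Theta$ together with $\partial_H$ rules out the deadlocked branches. Once this bookkeeping is done, the remaining calculation is the standard expansion-and-$RSP$ argument reused throughout the chapter, so I would explicitly note the analogy with the MVC proof rather than repeat the routine algebra.
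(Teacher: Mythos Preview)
Your proposal is correct and follows essentially the same approach as the paper: expand the parallel composition with the APTC laws, encapsulate to force communications, abstract the internal actions to $\tau$, and identify the resulting guarded linear recursion via $RSP$ with the desired external behavior $\sum_{d_I,d_{O_1},\ldots,d_{O_n}\in\Delta} r_I(d_I)\cdot (s_{O_1}(d_{O_1})\parallel\cdots\parallel s_{O_n}(d_{O_n}))\cdot(\cdots)$. The paper's own proof in fact omits all of this detail and simply asserts the final equation, referring back to the ABP example for the routine algebra; your write-up is more explicit than the paper's, but the underlying argument is the same.
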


\begin{proof}
Based on the above state transitions of the above modules, by use of the algebraic laws of APTC, we can prove that

$\tau_I(\partial_H(VH\between S\between V_1\between\cdots\between V_i\between\cdots\between V_n))=\sum_{d_{I},d_{O_1},\cdots,d_{O_n}\in\Delta}(r_{I}(d_{I})\cdot s_{O_1}(d_{O_1})\parallel\cdots\parallel s_{O_i}(d_{O_i})\parallel\cdots\parallel s_{O_n}(d_{O_n}))\cdot
\tau_I(\partial_H(VH\between S\between V_1\between\cdots\between V_i\between\cdots\between V_n))$,

that is, the View Handler pattern $\tau_I(\partial_H(VH\between S\between V_1\between\cdots\between V_i\between\cdots\between V_n))$ can exhibit desired external behaviors.

For the details of proof, please refer to section \ref{app}, and we omit it.
\end{proof}

\subsection{Communication}\label{C4}

\subsubsection{Verification of the Forwarder-Receiver Pattern}

The Forwarder-Receiver pattern decouples the communication of two communicating peers. There are six modules in the Forwarder-Receiver pattern: the two Peers, the two Forwarders,
and the two Receivers. The Peers interact with the user through
the channels $I_1$, $I_2$ and $O_1$, $O_2$; with the Forwarder through the channels $PF_1$ and $PF_2$. The Receivers interact with Forwarders the through the channels $FR_1$ and $FR_2$,
and with the Peers through the channels $RP_1$ and $RP_2$. As illustrates in
Figure \ref{FR4}.

\begin{figure}
    \centering
    \includegraphics{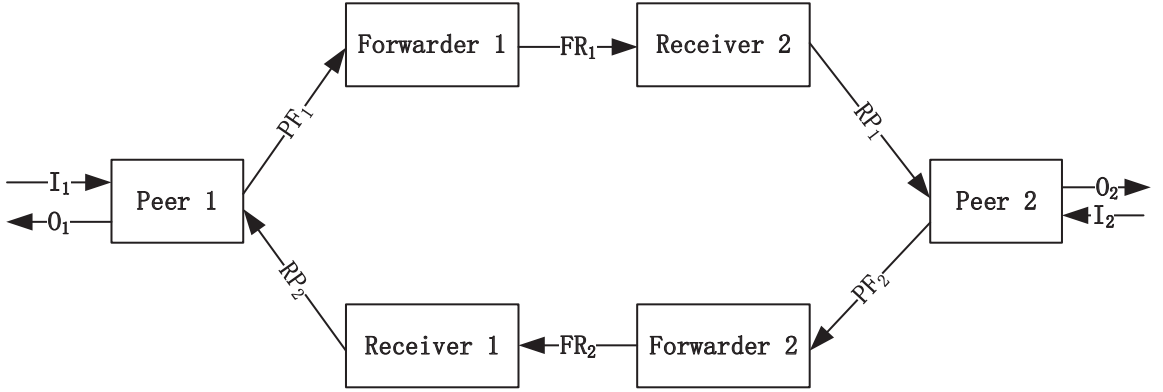}
    \caption{Forwarder-Receiver pattern}
    \label{FR4}
\end{figure}

The typical process of the Forwarder-Receiver pattern is shown in Figure \ref{FR4P} and as follows.

\begin{enumerate}
  \item The Peer 1 receives the request $d_{I_1}$ from the user through the channel $I_1$ (the corresponding reading action is denoted $r_{I_1}(d_{I_1})$), then processes the request $d_{I_1}$ through a processing
  function $P1F_1$, and sends the processed request $d_{I_{F_1}}$ to the Forwarder 1 through the channel $PF_1$ (the corresponding sending action is denoted $s_{PF_1}(d_{I_{F_1}})$);
  \item The Forwarder 1 receives $d_{I_{F_1}}$ from the Peer 1 through the channel $PF_1$ (the corresponding reading action is denoted $r_{PF_1}(d_{I_{F_1}})$), then processes the request
  through a processing function $F1F$, generates and sends the processed request $d_{I_{R_2}}$ to the Receiver 2 through the channel $FR_1$ (the corresponding sending action is denoted
  $s_{FR_1}(d_{I_{R_2}})$);
  \item The Receiver 2 receives the request $d_{I_{R_2}}$ from the Forwarder 1 through the channel $FR_1$ (the corresponding reading action is denoted $r_{FR_1}(d_{I_{R_2}})$), then processes
  the request through a processing function $R2F$, generates and sends the processed request $d_{I_{P_2}}$ to the Peer 2 through the channel $RP_1$ (the corresponding sending action is denoted $s_{RP_1}(d_{I_{P_2}})$);
  \item The Peer 2 receives the request $d_{I_{P_2}}$ from the Receiver 2 through the channel $RP_1$ (the corresponding reading action is denoted $r_{RP_1}(d_{I_{P_2}})$), then
  processes the request and generates the response $d_{O_2}$ through a processing function $P2F_2$, and sends the response to the outside through the channel $O_{2}$ (the corresponding sending action is denoted
  $s_{O_{2}}(d_{O_2})$).
\end{enumerate}

There is another symmetric process from Peer 2 to Peer 1, we omit it.

\begin{figure}
    \centering
    \includegraphics{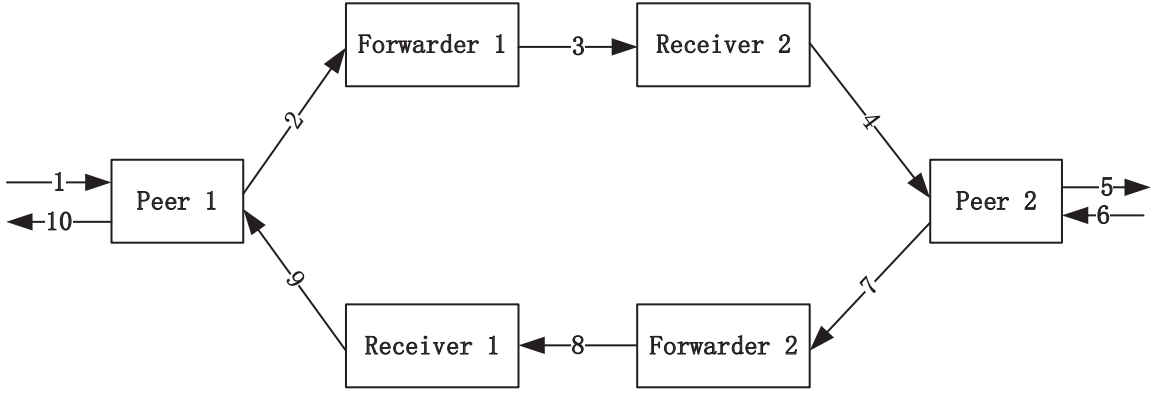}
    \caption{Typical process of Forwarder-Receiver pattern}
    \label{FR4P}
\end{figure}

In the following, we verify the Forwarder-Receiver pattern. We assume all data elements $d_{I_1}$, $d_{I_2}$, $d_{I_{F_1}}$, $d_{I_{F_2}}$, $d_{I_{R_1}}$, $d_{I_{R_2}}$, $d_{I_(P_1)}$, $d_{I_{P_2}}$, $d_{O_1}$, $d_{O_{2}}$ are from a finite set
$\Delta$. We only give the transitions of the first process.

The state transitions of the Peer 1 module
described by APTC are as follows.

$P1=\sum_{d_{I_1}\in\Delta}(r_{I_1}(d_{I_1})\cdot P1_{2})$

$P1_{2}=P1F_1\cdot P1_{3}$

$P1_{3}=\sum_{d_{I_{F_1}}\in\Delta}(s_{PF_1}(d_{I_{F_1}})\cdot P1)$

The state transitions of the Forwarder 1 module
described by APTC are as follows.

$F1=\sum_{d_{I_{F_1}}\in\Delta}(r_{PF_1}(d_{I_{F_1}})\cdot F1_{2})$

$F1_{2}=F1F\cdot F1_{3}$

$F1_{3}=\sum_{d_{I_{R_2}}\in\Delta}(s_{FR_1}(d_{I_{R_2}})\cdot F1)$

The state transitions of the Receiver 2 module
described by APTC are as follows.

$R2=\sum_{d_{I_{R_2}}\in\Delta}(r_{FR_1}(d_{I_{R_2}})\cdot R2_{2})$

$R2_{2}=R2F\cdot R2_{3}$

$R2_{3}=\sum_{d_{I_{P_2}}\in\Delta}(s_{RP_1}(d_{I_{P_2}})\cdot R2)$

The state transitions of the Peer 2 module
described by APTC are as follows.

$P2=\sum_{d_{I_{P2}}\in\Delta}(r_{RP_1}(d_{I_{P_2}})\cdot P2_{2})$

$P2_{2}=P2F_2\cdot P2_{3}$

$P2_{3}=\sum_{d_{O_2}\in\Delta}(s_{O_2}(d_{O_2})\cdot P2)$

The sending action and the reading action of the same data through the same channel can communicate with each other, otherwise, will cause a deadlock $\delta$. We define the following
communication functions between the Peer 1 and the Forwarder 1.

$$\gamma(r_{PF_1}(d_{I_{F_1}}),s_{PF_1}(d_{I_{F_1}}))\triangleq c_{PF_1}(d_{I_{F_1}})$$

There are one communication functions between the Forwarder 1 and the Receiver 2 as follows.

$$\gamma(r_{FR_1}(d_{I_{R_2}}),s_{FR_1}(d_{I_{R_2}}))\triangleq c_{FR_1}(d_{I_{R_2}})$$

There are one communication functions between the Receiver 2 and the Peer 2 as follows.

$$\gamma(r_{RP_1}(d_{I_{P_2}}),s_{RP_1}(d_{I_{P_2}}))\triangleq c_{RP_1}(d_{I_{P_2}})$$

We define the following communication functions between the Peer 2 and the Forwarder 2.

$$\gamma(r_{PF_2}(d_{I_{F_2}}),s_{PF_2}(d_{I_{F_2}}))\triangleq c_{PF_2}(d_{I_{F_2}})$$

There are one communication functions between the Forwarder 2 and the Receiver 1 as follows.

$$\gamma(r_{FR_2}(d_{I_{R_1}}),s_{FR_2}(d_{I_{R_1}}))\triangleq c_{FR_2}(d_{I_{R_1}})$$

There are one communication functions between the Receiver 1 and the Peer 1 as follows.

$$\gamma(r_{RP_2}(d_{I_{P_1}}),s_{RP_2}(d_{I_{P_1}}))\triangleq c_{RP_2}(d_{I_{P_1}})$$

Let all modules be in parallel, then the Forwarder-Receiver pattern $P1\quad F1 \quad R1\quad R2\quad F2\quad P2$ can be presented by the following process term.

$\tau_I(\partial_H(\Theta(P1\between F1\between R1\between R2\between F2\between P2)))=\tau_I(\partial_H(P1\between F1\between R1\between R2\between F2\between P2))$

where $H=\{r_{PF_1}(d_{I_{F_1}}),s_{PF_1}(d_{I_{F_1}}),r_{FR_1}(d_{I_{R_2}}),s_{FR_1}(d_{I_{R_2}}),r_{RP_1}(d_{I_{P_2}}),s_{RP_1}(d_{I_{P_2}}),\\
r_{PF_2}(d_{I_{F_2}}),s_{PF_2}(d_{I_{F_2}}),r_{FR_2}(d_{I_{R_1}}),s_{FR_2}(d_{I_{R_1}}),r_{RP_2}(d_{I_{P_1}}),s_{RP_2}(d_{I_{P_1}})\\
|d_{I_1}, d_{I_2}, d_{I_{F_1}}, d_{I_{F_2}}, d_{I_{R_1}}, d_{I_{R_2}}, d_{I_(P_1)}, d_{I_{P_2}}, d_{O_1}, d_{O_{2}}\in\Delta\}$,

$I=\{c_{PF_1}(d_{I_{F_1}}),c_{FR_1}(d_{I_{R_2}}),c_{RP_1}(d_{I_{P_2}}),c_{PF_2}(d_{I_{F_2}}),c_{FR_2}(d_{I_{R_1}}),c_{RP_2}(d_{I_{P_1}}),\\
P1F_1,P1F_2,P2F_1,P2F_2,F1F,F2F,R1F,R2F
|d_{I_1}, d_{I_2}, d_{I_{F_1}}, d_{I_{F_2}}, d_{I_{R_1}}, d_{I_{R_2}}, d_{I_(P_1)}, d_{I_{P_2}}, d_{O_1}, d_{O_{2}}\in\Delta\}$.

Then we get the following conclusion on the Forwarder-Receiver pattern.

\begin{theorem}[Correctness of the Forwarder-Receiver pattern]
The Forwarder-Receiver pattern $\tau_I(\partial_H(P1\between F1\between R1\between R2\between F2\between P2))$ can exhibit desired external behaviors.
\end{theorem}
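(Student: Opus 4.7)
The plan is to follow the same template used throughout the paper (ABP, Layers, Broker, MVC, PAC, Microkernel, etc.): expand the parallel composition using the axioms of APTC, use the encapsulation operator $\partial_H$ to rule out the incorrect synchronisations so that only the intended communications remain, and then apply the abstraction operator $\tau_I$ to hide all internal communications and internal processing functions. The target identity I aim to derive is
\begin{eqnarray}
&&\tau_I(\partial_H(P1\between F1\between R1\between R2\between F2\between P2))\nonumber\\
&=&\sum_{d_{I_1},d_{I_2},d_{O_1},d_{O_2}\in\Delta}\bigl((r_{I_1}(d_{I_1})\parallel r_{I_2}(d_{I_2}))\cdot(s_{O_1}(d_{O_1})\parallel s_{O_2}(d_{O_2}))\bigr)\nonumber\\
&&\cdot\,\tau_I(\partial_H(P1\between F1\between R1\between R2\between F2\between P2)),\nonumber
\end{eqnarray}
which says that, modulo the internal plumbing of forwarders and receivers, the composite system behaves as two peers that concurrently accept inputs from their respective environments and concurrently produce the matching outputs.

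First, I would unfold each of the six components by RDP, obtaining the six linear recursive specifications of $P1, F1, R1, R2, F2, P2$. Then, using $P1$--$P10$ and $C11$--$C18$ together with axioms $A6, A7$, I expand $\partial_H$ of the merge and kill off all the mismatched synchronisations with $\delta$, keeping only those sends/receives that are paired by the communication functions $c_{PF_1}, c_{FR_1}, c_{RP_1}$ (for the Peer~1 $\to$ Peer~2 direction) and $c_{PF_2}, c_{FR_2}, c_{RP_2}$ (for the symmetric direction). The encapsulation set $H$ is precisely engineered so that the only possible initial actions are the external reads $r_{I_1}(d_{I_1})$ and $r_{I_2}(d_{I_2})$, which by $P2$--$P3$ can be taken either sequentially or in parallel as a step $r_{I_1}(d_{I_1})\parallel r_{I_2}(d_{I_2})$. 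After this step, the two pipelines propagate independently through the $c_{PF_\bullet}, c_{FR_\bullet}, c_{RP_\bullet}$ chains until reaching the sends $s_{O_1}(d_{O_1})$ and $s_{O_2}(d_{O_2})$, which again can be fired as a parallel step.

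Next, I apply $\tau_I$, invoking $TI1$--$TI6$ so that all processing functions $P1F_1, P2F_2, F1F, R2F$ etc. and all internal communications $c_{PF_\bullet}, c_{FR_\bullet}, c_{RP_\bullet}$ collapse to $\tau$; the $B$-axioms $B1$--$B3$ then absorb the silent steps next to the external actions, leaving the clean external skeleton above. To close the recursion, I define a guarded linear specification $E$ whose principal variable $X_1$ is a solution for $\tau_I(\partial_H(P1\between F1\between R1\between R2\between F2\between P2))$ and whose equation reads $X_1=\sum_{d_{I_1},d_{I_2},d_{O_1},d_{O_2}\in\Delta}((r_{I_1}(d_{I_1})\parallel r_{I_2}(d_{I_2}))\cdot(s_{O_1}(d_{O_1})\parallel s_{O_2}(d_{O_2})))\cdot X_1$; by RSP this uniquely characterises the left-hand side, as in the ABP proof of Section~\ref{app}.

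The main obstacle is the interleaving between the two symmetric data flows inside $\partial_H$: a priori a read on channel $I_1$ and a read on channel $I_2$ could be followed by an arbitrary shuffling of the twelve intermediate $c$-actions, so the expansion by $P1$--$P10$ and $C11$--$C18$ produces many summands that have to be shown to all reduce, via $\Theta$ and the $\tau$-laws, to the same expression that appears on the right-hand side. The key technical observation is that each pipeline ($P1\to F1\to R2\to P2$ and $P2\to F2\to R1\to P1$) uses a disjoint set of channels, so after encapsulation the two pipelines commute as parallel branches and their internal communications can be hidden independently; once this is established, the step-by-step derivation mirrors exactly the one in section~\ref{app}, and I would refer to it rather than grinding through the routine calculation.
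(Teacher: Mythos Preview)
Your approach is essentially the same as the paper's, which likewise states the target identity and defers the routine APTC expansion to section~\ref{app}. The only cosmetic difference is that the paper writes the target as $\sum_{d_{I_1},d_{I_2},d_{O_1},d_{O_2}\in\Delta}\bigl((r_{I_1}(d_{I_1})\cdot s_{O_2}(d_{O_2}))\parallel(r_{I_2}(d_{I_2})\cdot s_{O_1}(d_{O_1}))\bigr)\cdot\tau_I(\partial_H(\ldots))$, which makes the cross-causality (input on peer~$1$ surfaces at peer~$2$ and vice versa) explicit; by $P6$, $P1$, $P2$ and $A6$ this is provably equal to your form $(r_{I_1}\parallel r_{I_2})\cdot(s_{O_1}\parallel s_{O_2})$, so the two presentations are interchangeable.
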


\begin{proof}
Based on the above state transitions of the above modules, by use of the algebraic laws of APTC, we can prove that

$\tau_I(\partial_H(P1\between F1\between R1\between R2\between F2\between P2))=\sum_{d_{I_1},d_{I_2},d_{O_1},d_{O_2}\in\Delta}((r_{I_1}(d_{I_1})\cdot s_{O_2}(d_{O_2}))\parallel(r_{I_2}(d_{I_2})\cdot s_{O_1}(d_{O_1})))\cdot
\tau_I(\partial_H(P1\between F1\between R1\between R2\between F2\between P2))$,

that is, the Forwarder-Receiver pattern $\tau_I(\partial_H(P1\between F1\between R1\between R2\between F2\between P2))$ can exhibit desired external behaviors.

For the details of proof, please refer to section \ref{app}, and we omit it.
\end{proof}

\subsubsection{Verification of the Client-Dispatcher-Server Pattern}

The Client-Dispatcher-Server pattern decouples the invocation of the client and the server to introduce an intermediate dispatcher.
There are three modules in the Client-Dispatcher-Server pattern: the Client, the Dispatcher,
and the Server. The Client interacts with the user through
the channels $I$ and $O$; with the Dispatcher through the channels $I_{CD}$ and $O_{CD}$; with the Server through the channels $I_{CS}$ and $O_{CS}$. As illustrates in
Figure \ref{CDS4}.

\begin{figure}
    \centering
    \includegraphics{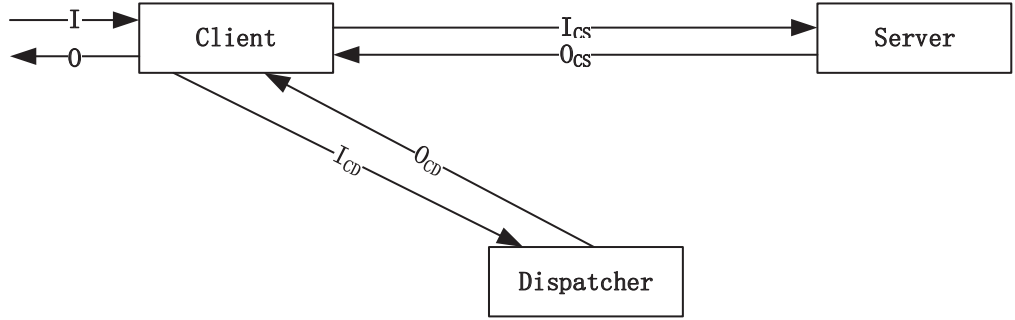}
    \caption{Client-Dispatcher-Server pattern}
    \label{CDS4}
\end{figure}

The typical process of the Client-Dispatcher-Server pattern is shown in Figure \ref{CDS4P} and as follows.

\begin{enumerate}
  \item The Client receives the request $d_{I}$ from the user through the channel $I$ (the corresponding reading action is denoted $r_{I}(d_{I})$), then processes the request $d_{I}$ through a processing
  function $CF_1$, and sends the processed request $d_{I_{D}}$ to the Dispatcher through the channel $I_{CD}$ (the corresponding sending action is denoted $s_{I_CD}(d_{I_{D}})$);
  \item The Dispatcher receives $d_{I_{D}}$ from the Client through the channel $I_{CD}$ (the corresponding reading action is denoted $r_{I_{CD}}(d_{I_{D}})$), then processes the request
  through a processing function $DF$, generates and sends the processed response $d_{O_{D}}$ to the Client through the channel $O_{CD}$ (the corresponding sending action is denoted
  $s_{O_{CD}}(d_{O_{D}})$);
  \item The Client receives the response $d_{O_D}$ from the Dispatcher through the channel $O_{CD}$ (the corresponding reading action is denoted $r_{O_{CD}}(d_{O_D})$), then processes
  the request through a processing function $CF_2$, generates and sends the processed request $d_{I_{S}}$ to the Server through the channel $I_{CS}$ (the corresponding sending action is denoted $s_{I_{CS}}(d_{I_{S}})$);
  \item The Server receives the request $d_{I_{S}}$ from the Client through the channel $I_{CS}$ (the corresponding reading action is denoted $r_{I_{CS}}(d_{I_{S}})$), then
  processes the request and generates the response $d_{O_S}$ through a processing function $SF$, and sends the response to the outside through the channel $O_{CS}$ (the corresponding sending action is denoted
  $s_{O_{CS}}(d_{O_S})$);
  \item The Client receives the response $d_{O_S}$ from the Server through the channel $O_{CS}$ (the corresponding reading action is denoted $r_{O_{CS}}(d_{O_S})$), then processes
  the request through a processing function $CF_3$, generates and sends the processed response $d_{O}$ to the user through the channel $O$ (the corresponding sending action is denoted $s_{O}(d_{O})$).
\end{enumerate}

\begin{figure}
    \centering
    \includegraphics{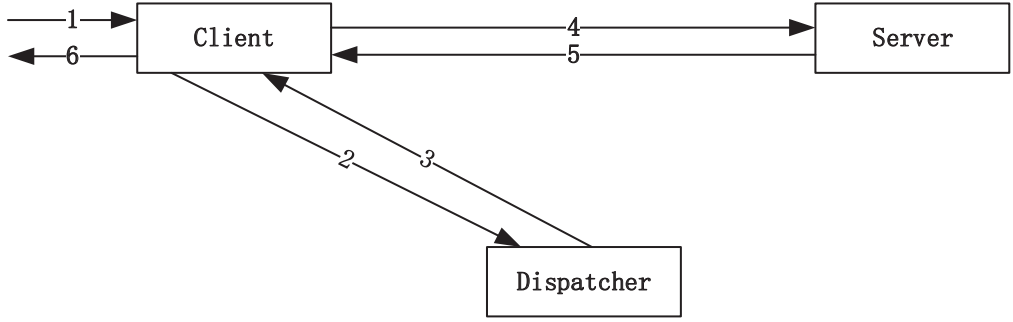}
    \caption{Typical process of Client-Dispatcher-Server pattern}
    \label{CDS4P}
\end{figure}

In the following, we verify the Client-Dispatcher-Server pattern. We assume all data elements $d_{I}$, $d_{I_D}$, $d_{I_{S}}$, $d_{O_{D}}$, $d_{O_{S}}$, $d_{O}$ are from a finite set
$\Delta$.

The state transitions of the Client module
described by APTC are as follows.

$C=\sum_{d_{I}\in\Delta}(r_{I}(d_{I})\cdot C_{2})$

$C_{2}=CF_1\cdot C_{3}$

$C_{3}=\sum_{d_{I_{D}}\in\Delta}(s_{I_{CD}}(d_{I_{D}})\cdot C_4)$

$C_4=\sum_{d_{O_D}\in\Delta}(r_{O_{CD}}(d_{O_D})\cdot C_{5})$

$C_{5}=CF_2\cdot C_{6}$

$C_{6}=\sum_{d_{I_{S}}\in\Delta}(s_{I_{CS}}(d_{I_{S}})\cdot C_7)$

$C_7=\sum_{d_{O_S}\in\Delta}(r_{O_{CS}}(d_{O_S})\cdot C_{8})$

$C_{8}=CF_3\cdot C_{9}$

$C_{9}=\sum_{d_{O}\in\Delta}(s_{O}(d_{O})\cdot C)$

The state transitions of the Dispatcher module
described by APTC are as follows.

$D=\sum_{d_{I_{D}}\in\Delta}(r_{I_{CD}}(d_{I_{D}})\cdot D_{2})$

$D_{2}=DF\cdot D_{3}$

$D_{3}=\sum_{d_{O_D}\in\Delta}(s_{O_{CD}}(d_{O_D})\cdot D)$

The state transitions of the Server module
described by APTC are as follows.

$S=\sum_{d_{I_{S}}\in\Delta}(r_{I_{CS}}(d_{I_S})\cdot S_{2})$

$S_{2}=SF\cdot S_{3}$

$S_{3}=\sum_{d_{O_S}\in\Delta}(s_{O_{CS}}(d_{O_{S}})\cdot S)$

The sending action and the reading action of the same data through the same channel can communicate with each other, otherwise, will cause a deadlock $\delta$. We define the following
communication functions between the Client and the Dispatcher.

$$\gamma(r_{I_{CD}}(d_{I_{D}}),s_{I_{CD}}(d_{I_{D}}))\triangleq c_{I_{CD}}(d_{I_{D}})$$

$$\gamma(r_{O_{CD}}(d_{O_D}),s_{O_{CD}}(d_{O_D}))\triangleq c_{O_{CD}}(d_{O_D})$$

There are two communication functions between the Client and the Server as follows.

$$\gamma(r_{I_{CS}}(d_{I_{S}}),s_{I_{CS}}(d_{I_{S}}))\triangleq c_{I_{CS}}(d_{I_{S}})$$

$$\gamma(r_{O_{CS}}(d_{O_S}),s_{O_{CS}}(d_{O_S}))\triangleq c_{O_{CS}}(d_{O_S})$$

Let all modules be in parallel, then the Client-Dispatcher-Server pattern $C\quad D \quad S$ can be presented by the following process term.

$\tau_I(\partial_H(\Theta(C\between D\between S)))=\tau_I(\partial_H(C\between D\between S))$

where $H=\{r_{I_{CD}}(d_{I_{D}}),s_{I_{CD}}(d_{I_{D}}),r_{O_{CD}}(d_{O_D}),s_{O_{CD}}(d_{O_D}),r_{I_{CS}}(d_{I_{S}}),s_{I_{CS}}(d_{I_{S}}),\\
r_{O_{CS}}(d_{O_S}),s_{O_{CS}}(d_{O_S})
|d_{I}, d_{I_D}, d_{I_{S}}, d_{O_{D}}, d_{O_{S}}, d_{O}\in\Delta\}$,

$I=\{c_{I_{CD}}(d_{I_{D}}),c_{O_{CD}}(d_{O_D}),c_{I_{CS}}(d_{I_{S}}),c_{O_{CS}}(d_{O_S}),CF_1,CF_2,CF_3,DF,SF\\
|d_{I}, d_{I_D}, d_{I_{S}}, d_{O_{D}}, d_{O_{S}}, d_{O}\in\Delta\}$.

Then we get the following conclusion on the Client-Dispatcher-Server pattern.

\begin{theorem}[Correctness of the Client-Dispatcher-Server pattern]
The Client-Dispatcher-Server pattern $\tau_I(\partial_H(C\between D\between S))$ can exhibit desired external behaviors.
\end{theorem}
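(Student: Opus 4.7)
The plan is to follow exactly the same strategy used in the earlier correctness proofs of the Forwarder-Receiver, Proxy, MVC, and Command Processor patterns, since the Client-Dispatcher-Server pattern has the same general form: a collection of guarded linear recursive specifications composed in parallel, synchronized through a set $H$ of matching send/receive actions via the encapsulation operator $\partial_H$, with internal communications subsequently hidden by $\tau_I$. First I would combine the state transitions of $C$, $D$, and $S$ into a single guarded linear recursive specification $E$, writing $\partial_H(C\between D\between S)=\langle X_1|E\rangle$ with auxiliary variables $X_{2},X_{3},\ldots$ corresponding to the intermediate states $C_2,\ldots,C_9,D_2,D_3,S_2,S_3$ paired with the concurrent states of the other components.

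Next I would expand $C\between D\between S$ stepwise using axioms $P1$--$P10$ together with $C11$--$C18$ to resolve the auxiliary parallel and communication merges. At each stage, the only non-deadlocked contribution survives encapsulation by $\partial_H$: the handshakes between $C_3$ and $D$ on channel $I_{CD}$ yield $c_{I_{CD}}(d_{I_D})$, between $D_3$ and $C_4$ on $O_{CD}$ yield $c_{O_{CD}}(d_{O_D})$, between $C_6$ and $S$ on $I_{CS}$ yield $c_{I_{CS}}(d_{I_S})$, and between $S_3$ and $C_7$ on $O_{CS}$ yield $c_{O_{CS}}(d_{O_S})$; every other would-be interaction is blocked to $\delta$ by $D2$ and absorbed by $A6,A7$. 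The result should be a recursion of the form
\begin{eqnarray*}
\partial_H(C\between D\between S)&=&\sum_{d_I\in\Delta}r_I(d_I)\cdot CF_1\cdot c_{I_{CD}}(d_{I_D})\cdot DF\cdot c_{O_{CD}}(d_{O_D})\cdot CF_2\\
&&\cdot\, c_{I_{CS}}(d_{I_S})\cdot SF\cdot c_{O_{CS}}(d_{O_S})\cdot CF_3\cdot\sum_{d_O\in\Delta}s_O(d_O)\cdot\partial_H(C\between D\between S).
\end{eqnarray*}

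Then I would apply $\tau_I$ using $TI1$--$TI6$: every action in $I$ (the internal communications $c_{I_{CD}},c_{O_{CD}},c_{I_{CS}},c_{O_{CS}}$ and all processing functions $CF_1,CF_2,CF_3,DF,SF$) is renamed to $\tau$, and the laws $B1$ ($e\cdot\tau=e$) together with $B2$ allow the consecutive $\tau$'s to be absorbed into the neighbouring visible actions. Invoking $RSP$ on the resulting guarded linear specification gives the closed form
\[
\tau_I(\partial_H(C\between D\between S))=\sum_{d_I,d_O\in\Delta}\bigl(r_I(d_I)\cdot s_O(d_O)\bigr)\cdot\tau_I(\partial_H(C\between D\between S)),
\]
which is the desired external behaviour.

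The main obstacle is the bookkeeping in the first expansion step: the auxiliary parallel operator $\parallel$ generates many spurious interleavings and would-be communications, and one must verify carefully that $\partial_H$ kills all of them except the four intended handshakes, so that the process reduces cleanly to a single sequential thread amenable to $RSP$. Once that reduction is in place, the abstraction and recursion steps are routine, paralleling section \ref{app}.
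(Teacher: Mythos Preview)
Your proposal is correct and follows exactly the same approach as the paper: the paper's proof simply asserts the final equation $\tau_I(\partial_H(C\between D\between S))=\sum_{d_{I},d_{O}\in\Delta}(r_{I}(d_{I})\cdot s_{O}(d_{O}))\cdot\tau_I(\partial_H(C\between D\between S))$ and refers the reader to section~\ref{app} (the ABP verification) for the method, which is precisely the expansion--encapsulation--abstraction--$RSP$ routine you have spelled out. Your write-up is in fact more detailed than the paper's own proof, which omits all intermediate steps.
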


\begin{proof}
Based on the above state transitions of the above modules, by use of the algebraic laws of APTC, we can prove that

$\tau_I(\partial_H(C\between D\between S))=\sum_{d_{I},d_{O}\in\Delta}(r_{I}(d_{I})\cdot s_{O}(d_{O}))\cdot
\tau_I(\partial_H(C\between D\between S))$,

that is, the Client-Dispatcher-Server pattern $\tau_I(\partial_H(C\between D\between S))$ can exhibit desired external behaviors.

For the details of proof, please refer to section \ref{app}, and we omit it.
\end{proof}

\subsubsection{Verification of the Publisher-Subscriber Pattern}

The Publisher-Subscriber pattern decouples the communication of the publisher and subscriber. There are four modules in the Publisher-Subscriber pattern: the Publisher, the Publisher Proxy,
the Subscriber Proxy and the Subscriber. The Publisher interacts with the outside through
the channel $I$; with the Publisher Proxy through the channel $PP$. The Publisher Proxy interacts with the Subscriber Proxy through the channel $PS$. The Subscriber interacts with the
Subscriber Proxy through the channel $SS$, and with the outside through the channels $O$. As illustrates in
Figure \ref{PS4}.

\begin{figure}
    \centering
    \includegraphics{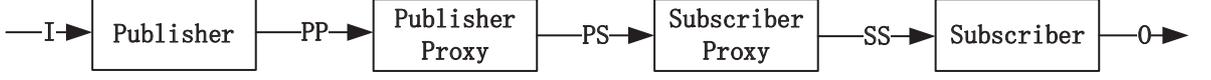}
    \caption{Publisher-Subscriber pattern}
    \label{PS4}
\end{figure}

The typical process of the Publisher-Subscriber pattern is shown in Figure \ref{PS4P} and as follows.

\begin{enumerate}
  \item The Publisher receives the input $d_{I}$ from the outside through the channel $I$ (the corresponding reading action is denoted $r_{I}(d_{I})$), then processes the input $d_{I}$ through a processing
  function $PF$, and sends the processed input $d_{I_{PP}}$ to the Publisher Proxy through the channel $PP$ (the corresponding sending action is denoted $s_{PP}(d_{I_{PP}})$);
  \item The Publisher Proxy receives $d_{I_{PP}}$ from the Publisher through the channel $PP$ (the corresponding reading action is denoted $r_{PP}(d_{I_{PP}})$), then processes the request
  through a processing function $PPF$, generates and sends the processed input $d_{I_{SP}}$ to the Subscriber Proxy through the channel $PS$ (the corresponding sending action is denoted
  $s_{PS}(d_{I_{SP}})$);
  \item The Subscriber Proxy receives the input $d_{I_{SP}}$ from the Publisher Proxy through the channel $PS$ (the corresponding reading action is denoted $r_{PS}(d_{I_{SP}})$), then processes
  the request through a processing function $SPF$, generates and sends the processed input $d_{I_{S}}$ to the Subscriber through the channel $SS$ (the corresponding sending action is denoted $s_{SS}(d_{I_{S}})$);
  \item The Subscriber receives the input $d_{I_{S}}$ from the Subscriber Proxy through the channel $SS$ (the corresponding reading action is denoted $r_{SS}(d_{I_{S}})$), then
  processes the request and generates the response $d_{O}$ through a processing function $SF$, and sends the response to the outside through the channel $O$ (the corresponding sending action is denoted
  $s_{O}(d_{O})$).
\end{enumerate}

\begin{figure}
    \centering
    \includegraphics{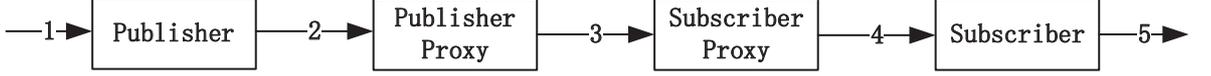}
    \caption{Typical process of Publisher-Subscriber pattern}
    \label{PS4P}
\end{figure}

In the following, we verify the Publisher-Subscriber pattern. We assume all data elements $d_{I}$, $d_{I_{PP}}$, $d_{I_{SP}}$, $d_{I_{S}}$, $d_{O}$ are from a finite set
$\Delta$.

The state transitions of the Publisher module
described by APTC is are follows.

$P=\sum_{d_{I}\in\Delta}(r_{I}(d_{I})\cdot P_{2})$

$P_{2}=PF\cdot P_{3}$

$P_{3}=\sum_{d_{I_{PP}}\in\Delta}(s_{PP}(d_{I_{PP}})\cdot P)$

The state transitions of the Publisher Proxy module
described by APTC are as follows.

$PP=\sum_{d_{I_{PP}}\in\Delta}(r_{PP}(d_{I_{PP}})\cdot PP_{2})$

$PP_{2}=PPF\cdot PP_{3}$

$PP_{3}=\sum_{d_{I_{SP}}\in\Delta}(s_{PS}(d_{I_{SP}})\cdot PP)$

The state transitions of the Subscriber Proxy module
described by APTC are as follows.

$SP=\sum_{d_{I_{SP}}\in\Delta}(r_{PS}(d_{I_{SP}})\cdot SP_{2})$

$SP_{2}=SPF\cdot SP_{3}$

$SP_{3}=\sum_{d_{I_{S}}\in\Delta}(s_{SS}(d_{I_{S}})\cdot SP)$

The state transitions of the Subscriber module
described by APTC are as follows.

$S=\sum_{d_{I_{S}}\in\Delta}(r_{SS}(d_{I_{S}})\cdot S_{2})$

$S_{2}=SF\cdot S_{3}$

$S_{3}=\sum_{d_{O}\in\Delta}(s_{O}(d_{O})\cdot S)$

The sending action and the reading action of the same data through the same channel can communicate with each other, otherwise, will cause a deadlock $\delta$. We define the following
communication functions between the Publisher and the Publisher Proxy.

$$\gamma(r_{PP}(d_{I_{PP}}),s_{PP}(d_{I_{PP}}))\triangleq c_{PP}(d_{I_{PP}})$$

There are one communication functions between the Publisher Proxy and the Subscriber Proxy as follows.

$$\gamma(r_{PS}(d_{I_{SP}}),s_{PS}(d_{I_{SP}}))\triangleq c_{PS}(d_{I_{SP}})$$

There are one communication functions between the Subscriber Proxy and the Subscriber as follows.

$$\gamma(r_{SS}(d_{I_{S}}),s_{SS}(d_{I_{S}}))\triangleq c_{SS}(d_{I_{S}})$$

Let all modules be in parallel, then the Publisher-Subscriber pattern $P\quad PP \quad SP\quad S$ can be presented by the following process term.

$\tau_I(\partial_H(\Theta(P\between PP\between SP\between S)))=\tau_I(\partial_H(P\between PP\between SP\between S))$

where $H=\{r_{PP}(d_{I_{PP}}),s_{PP}(d_{I_{PP}}),r_{PS}(d_{I_{SP}}),s_{PS}(d_{I_{SP}}),r_{SS}(d_{I_{S}}),s_{SS}(d_{I_{S}})\\
|d_{I}, d_{I_{PP}}, d_{I_{SP}}, d_{I_{S}}, d_{O}\in\Delta\}$,

$I=\{c_{PP}(d_{I_{PP}}),c_{PS}(d_{I_{SP}}),c_{SS}(d_{I_{S}}),PF,PPF,SPF,SF
|d_{I}, d_{I_{PP}}, d_{I_{SP}}, d_{I_{S}}, d_{O}\in\Delta\}$.

Then we get the following conclusion on the Publisher-Subscriber pattern.

\begin{theorem}[Correctness of the Publisher-Subscriber pattern]
The Publisher-Subscriber pattern $\tau_I(\partial_H(P\between PP\between SP\between S))$ can exhibit desired external behaviors.
\end{theorem}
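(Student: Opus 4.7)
The plan is to follow exactly the template established for all the preceding pattern-correctness theorems in the paper (Layers, Pipes and Filters, Blackboard, Broker, MVC, PAC, Microkernel, Reflection, Whole-Part, Master-Slave, Proxy, Command Processor, View Handler, Forwarder-Receiver, Client-Dispatcher-Server): namely, expand the parallel composition of the four modules using the algebraic laws of $APTC_{\tau}$ with guarded linear recursion, collapse the internal send/receive pairs into communication actions via the encapsulation operator $\partial_H$, and then rename those communication actions together with the internal processing functions to $\tau$ via the abstraction operator $\tau_I$, so that only $r_I(d_I)$ at the front and $s_O(d_O)$ at the back survive.

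Concretely, I would first use axiom $P1$ to rewrite $P\between PP$ as $P\parallel PP + P\mid PP$, and then apply $P4$--$P6$, $C11$--$C14$, and $A6, A7$ to eliminate the dead $\mid$-summands (since the only communicating actions between $P$ and $PP$ are $s_{PP}(d_{I_{PP}})$ and $r_{PP}(d_{I_{PP}})$, all other $\mid$-combinations reduce to $\delta$ and are absorbed). Applying $\partial_H$ then forces synchronisation at channel $PP$, producing $c_{PP}(d_{I_{PP}})$. Iterating the same expansion against $SP$ and $S$ in turn yields a chain in which each inter-module handoff becomes a single communication action $c_{PP}, c_{PS}, c_{SS}$ sandwiched between the internal processing functions $PF, PPF, SPF, SF$.

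Next, I would encode $\partial_H(P\between PP\between SP\between S)$ as a solution $\langle X_1|E\rangle$ of a guarded linear recursive specification $E$, analogous to the ABP proof in Section~\ref{app} and the broker/MVC/PAC proofs: the equations will have the shape
\begin{eqnarray}
X_1&=&\sum_{d_I\in\Delta}r_I(d_I)\cdot X_2\nonumber\\
X_2&=&PF\cdot X_3,\quad X_3=c_{PP}(d_{I_{PP}})\cdot X_4\nonumber\\
X_4&=&PPF\cdot X_5,\quad X_5=c_{PS}(d_{I_{SP}})\cdot X_6\nonumber\\
X_6&=&SPF\cdot X_7,\quad X_7=c_{SS}(d_{I_S})\cdot X_8\nonumber\\
X_8&=&SF\cdot X_9,\quad X_9=\sum_{d_O\in\Delta}s_O(d_O)\cdot X_1.\nonumber
\end{eqnarray}
Applying $\tau_I$ with $I$ as defined in the statement and using $TI1$--$TI6$ together with axiom $B1$ ($e\cdot\tau=e$) collapses each $c_\bullet$-step and each processing function into $\tau$, which is then absorbed into the surrounding atomic actions. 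By $RSP$ the resulting equation has the unique solution $\tau_I(\partial_H(P\between PP\between SP\between S))=\sum_{d_I,d_O\in\Delta}r_I(d_I)\cdot s_O(d_O)\cdot\tau_I(\partial_H(P\between PP\between SP\between S))$, which is the desired external behaviour.

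The main obstacle will be purely bookkeeping: verifying that all non-matching combinations in the expansion of $\between$ across four components really are killed by $\partial_H$ (so that no spurious deadlock term $\delta$ survives) and checking that the recursion obtained after applying $\tau_I$ is still guarded in the sense of Definition~\ref{GLRSS}, so that $RSP$ may legitimately be invoked. Since each processing function $PF, PPF, SPF, SF$ is a visible atomic action rather than a $\tau$ or $\circledS$, no infinite $\tau$- or $\circledS$-loop is introduced, and the resulting specification is guarded linear; hence the reduction goes through exactly as in the preceding proofs, and the details are referred to section~\ref{app}.
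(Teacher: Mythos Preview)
Your proposal is correct and follows essentially the same approach as the paper: derive the equation $\tau_I(\partial_H(P\between PP\between SP\between S))=\sum_{d_{I},d_{O}\in\Delta}(r_{I}(d_{I})\cdot s_{O}(d_{O}))\cdot\tau_I(\partial_H(P\between PP\between SP\between S))$ by expanding the composition with the APTC laws, applying $\partial_H$ and $\tau_I$, and invoking the guarded-linear-recursion machinery as in the ABP example of section~\ref{app}. The paper's own proof is in fact far terser than yours---it simply asserts the target equation and refers to section~\ref{app} for details---so your explicit sketch of the recursive specification and the use of $RSP$ is more detailed than what the paper records, but entirely in the same spirit.
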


\begin{proof}
Based on the above state transitions of the above modules, by use of the algebraic laws of APTC, we can prove that

$\tau_I(\partial_H(P\between PP\between SP\between S))=\sum_{d_{I},d_{O}\in\Delta}(r_{I}(d_{I})\cdot s_{O}(d_{O}))\cdot
\tau_I(\partial_H(P\between PP\between SP\between S))$,

that is, the Publisher-Subscriber pattern $\tau_I(\partial_H(P\between PP\between SP\between S))$ can exhibit desired external behaviors.

For the details of proof, please refer to section \ref{app}, and we omit it.
\end{proof}

\newpage\section{Verification of Idioms}

Idioms are the lowest-level patterns which are programming language-specific and implement some specific concrete problems.

There are almost numerous language-specific idioms, in this chapter, we only verify two idioms called the Singleton pattern and the Counted Pointer pattern.

\subsection{Verification of the Singleton Pattern}

The Singleton pattern ensures that there only one instance in runtime for an object. In Singleton pattern, there is only one module: The Singleton. The Singleton interacts with the
outside through the input channels $I_i$ and the output channels $O_i$ for $1\leq i\leq n$, as illustrated in Figure \ref{Sin5}.

\begin{figure}
    \centering
    \includegraphics{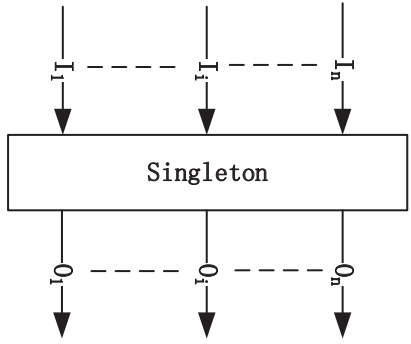}
    \caption{Singleton pattern}
    \label{Sin5}
\end{figure}

The typical process is shown in Figure \ref{Sin5P} and as follows.

\begin{enumerate}
  \item The Singleton receives the input $d_{I_i}$ from the outside through the channel $I_i$ (the corresponding reading action is denoted $r_{I_i}(d_{I_i})$);
  \item Then it processes the input and generates the output $d_{O_i}$ through a processing function $SF_i$;
  \item Then it sends the output to the outside through the channel $O_i$ (the corresponding sending action is denoted $s_{O_i}(d_{O_i})$).
\end{enumerate}

\begin{figure}
    \centering
    \includegraphics{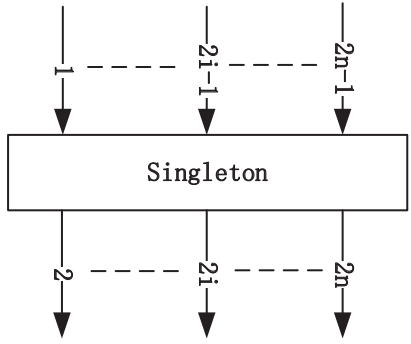}
    \caption{Typical process of Singleton pattern}
    \label{Sin5P}
\end{figure}

In the following, we verify the Singleton pattern. We assume all data elements $d_{I_i}$, $d_{O_i}$ for $1\leq i\leq n$ are from a finite set
$\Delta$.

The state transitions of the Singleton module
described by APTC are as follows.

$S=\sum_{d_{I_1},\cdots,d_{I_n}\in\Delta}(r_{I_1}(d_{I_1})\between\cdots\between r_{I_n}(d_{I_n})\cdot S_{2})$

$S_{2}=SF_1\between\cdots\between SF_n\cdot S_{3}$

$S_{3}=\sum_{d_{O_1},\cdots,d_{O_n}\in\Delta}(s_{O_1}(d_{O_1})\between\cdots\between s_{O_n}(d_{O_n})\cdot S)$

There is no communications in the Singleton pattern.

Let all modules be in parallel, then the Singleton pattern $S$ can be presented by the following process term.

$\tau_I(\partial_H(\Theta(S)))=\tau_I(\partial_H(S))$

where $H=\emptyset$, $I=\{SF_i\}$ for $1\leq i\leq n$.

Then we get the following conclusion on the Singleton pattern.

\begin{theorem}[Correctness of the Singleton pattern]
The Singleton pattern $\tau_I(\partial_H(S))$ can exhibit desired external behaviors.
\end{theorem}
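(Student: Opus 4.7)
The plan is to follow the same template used for every other pattern verification in this chapter, adapted to the especially simple case where there is only one module and no inter-module communication. Because $H=\emptyset$, the encapsulation operator $\partial_H$ acts trivially on every atomic action in $S$, $S_2$, $S_3$, so the proof really only needs to handle the abstraction operator $\tau_I$ applied to the recursive specification. First I would expand $\tau_I(\partial_H(S))$ by substituting the defining equations of $S$, $S_2$, $S_3$, and then push $\partial_H$ and $\tau_I$ inward using axioms $D3$--$D6$ and $TI3$--$TI6$ (which distribute them over $+$, $\cdot$ and $\parallel$), so that the encapsulation disappears and the internal actions $SF_1,\dots,SF_n$ are rewritten to $\tau$ by $TI2$ while the external reads $r_{I_i}$ and sends $s_{O_i}$ are preserved by $TI1$.

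Next I would rewrite the block $\tau_I(SF_1\between\cdots\between SF_n)$ as a parallel composition of $\tau$'s. Using $P1$ together with the fact that $\gamma$ is undefined on the $SF_i$ (so $SF_i\mid SF_j=\delta$ and $A6$--$A7$ clean up the deadlock summands), this reduces to $\tau\parallel\cdots\parallel\tau$, and then repeated application of $B3$ (i.e.\ $x\parallel\tau=x$) collapses the whole expression to $\tau$. The remaining $\tau$-prefix is then absorbed into the surrounding sequential composition by $B1$. After this simplification, I would set $\langle X_1|E\rangle = \tau_I(\partial_H(S))$ and verify that $X_1$ satisfies the guarded linear recursion
\begin{eqnarray*}
X_1 &=& \sum_{d_{I_1},\dots,d_{I_n}\in\Delta}(r_{I_1}(d_{I_1})\parallel\cdots\parallel r_{I_n}(d_{I_n}))\cdot X_2,\\
X_2 &=& \sum_{d_{O_1},\dots,d_{O_n}\in\Delta}(s_{O_1}(d_{O_1})\parallel\cdots\parallel s_{O_n}(d_{O_n}))\cdot X_1,
\end{eqnarray*}
so that by RSP we can conclude
\[
\tau_I(\partial_H(S)) \;=\; \sum_{\vec d_I,\vec d_O\in\Delta}\bigl((r_{I_1}(d_{I_1})\parallel\cdots\parallel r_{I_n}(d_{I_n}))\cdot(s_{O_1}(d_{O_1})\parallel\cdots\parallel s_{O_n}(d_{O_n}))\bigr)\cdot\tau_I(\partial_H(S)),
\]
which is exactly the desired external behavior: in each round the singleton reads $n$ inputs in parallel and emits $n$ outputs in parallel, with all internal computation abstracted away.

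The only mildly delicate step is the collapse of $\tau_I(SF_1\between\cdots\between SF_n)$ to $\tau$: one has to be careful that $\between$ expands via $P1$ into a parallel merge plus a communication merge, and that without any $\gamma$ defined among the $SF_i$ the communication summands are $\delta$ and disappear by $A6$, leaving a pure parallel composition to which $B3$ applies iteratively. Once that is in place, everything else is a routine appeal to the algebraic laws and to $RSP$, parallel to the proofs referenced by the paper in section \ref{app}, and the soundness theorems \ref{SAPTCR}, \ref{SAPTCABS} guarantee that the equational derivation transports to rooted branching truly concurrent bisimilarity with the specification of the external behavior.
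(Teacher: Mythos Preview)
Your proposal is correct and follows essentially the same approach as the paper: expand the process term using the algebraic laws of APTC, abstract away the internal $SF_i$ actions, and arrive at the recursive equation
\[
\tau_I(\partial_H(S))=\sum_{d_{I_1},d_{O_1},\cdots,d_{I_n},d_{O_n}\in\Delta}(r_{I_1}(d_{I_1})\parallel\cdots\parallel r_{I_n}(d_{I_n})\cdot s_{O_1}(d_{O_1})\parallel\cdots\parallel s_{O_n}(d_{O_n}))\cdot\tau_I(\partial_H(S)),
\]
exactly as the paper records. The paper itself omits all intermediate steps and simply refers back to section~\ref{app} for the methodology, whereas you have spelled out the mechanics (triviality of $\partial_H$ since $H=\emptyset$, distribution of $\tau_I$, collapse of $\tau_I(SF_1\between\cdots\between SF_n)$ to $\tau$ via $P1$, $A6$, $TI6$, $B3$, and the final appeal to $RSP$); this is more detail than the paper provides but is precisely the template the reference to section~\ref{app} is meant to invoke.
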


\begin{proof}
Based on the above state transitions of the above modules, by use of the algebraic laws of APTC, we can prove that

$\tau_I(\partial_H(S))=\sum_{d_{I_1},d_{O_1},\cdots,d_{I_n},d_{O_n}\in\Delta}(r_{I_1}(d_{I_1})\parallel\cdots\parallel r_{I_n}(d_{I_n})\cdot s_{O_1}(d_{O_1})\parallel\cdots\parallel s_{O_n}(d_{O_n}))\cdot
\tau_I(\partial_H(S))$,

that is, the Singleton pattern $\tau_I(\partial_H(S))$ can exhibit desired external behaviors.

For the details of proof, please refer to section \ref{app}, and we omit it.
\end{proof}

\subsection{Verification of the Counted Pointer Pattern}

The Counted Pointer pattern makes memory management (implemented as Handle) of shared objects (implemented as Bodys) easier in C++. There are three modules in the Counted Pointer pattern: the Client, the Handle,
and the Body. The Client interacts with the outside through
the channels $I$ and $O$; with the Handle through the channel $I_{CH}$ and $O_{CH}$. The Handle interacts with the Body through the channels $I_{HB}$ and $O_{HB}$. As illustrates in
Figure \ref{CP5}.

\begin{figure}
    \centering
    \includegraphics{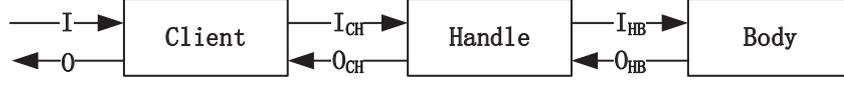}
    \caption{Counted Pointer pattern}
    \label{CP5}
\end{figure}

The typical process of the Counted Pointer pattern is shown in Figure \ref{CP5P} and as follows.

\begin{enumerate}
  \item The Client receives the input $d_{I}$ from the outside through the channel $I$ (the corresponding reading action is denoted $r_{I}(d_{I})$), then processes the input $d_{I}$ through a processing
  function $CF_1$, and sends the processed input $d_{I_{H}}$ to the Handle through the channel $I_{CH}$ (the corresponding sending action is denoted $s_{I_{CH}}(d_{I_{H}})$);
  \item The Handle receives $d_{I_{H}}$ from the Client through the channel $I_{CH}$ (the corresponding reading action is denoted $r_{I_{CH}}(d_{I_{H}})$), then processes the request
  through a processing function $HF_1$, generates and sends the processed input $d_{I_{B}}$ to the Body through the channel $I_{HB}$ (the corresponding sending action is denoted
  $s_{I_{HB}}(d_{I_{B}})$);
  \item The Body receives the input $d_{I_{B}}$ from the Handle through the channel $I_{HB}$ (the corresponding reading action is denoted $r_{I_{HB}}(d_{I_{B}})$), then processes
  the input through a processing function $BF$, generates and sends the response $d_{O_{B}}$ to the Handle through the channel $O_{HB}$ (the corresponding sending action is denoted $s_{O_{HB}}(d_{O_{B}})$);
  \item The Handle receives the response $d_{O_B}$ from the Body through the channel $O_{HB}$ (the corresponding reading action is denoted $r_{O_{HB}}(d_{O_B})$), then processes
  the response through a processing function $HF_2$, generates and sends the response $d_{O_H}$ (the corresponding sending action is denoted $s_{O_{CH}}(d_{O_H})$);
  \item The Client receives the response $d_{O_{H}}$ from the Handle through the channel $O_{CH}$ (the corresponding reading action is denoted $r_{O_{CH}}(d_{O_{H}})$), then
  processes the request and generates the response $d_{O}$ through a processing function $CF_2$, and sends the response to the outside through the channel $O$ (the corresponding sending action is denoted
  $s_{O}(d_{O})$).
\end{enumerate}

\begin{figure}
    \centering
    \includegraphics{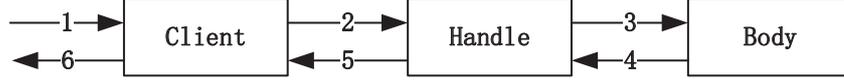}
    \caption{Typical process of Counted Pointer pattern}
    \label{CP5P}
\end{figure}

In the following, we verify the Counted Pointer pattern. We assume all data elements $d_{I}$, $d_{I_{H}}$, $d_{I_{B}}$, $d_{O_{B}}$, $d_{O_H}$, $d_{O}$ are from a finite set
$\Delta$.

The state transitions of the Client module
described by APTC are as follows.

$C=\sum_{d_{I}\in\Delta}(r_{I}(d_{I})\cdot C_{2})$

$C_{2}=CF_1\cdot C_{3}$

$C_{3}=\sum_{d_{I_{H}}\in\Delta}(s_{I_{CH}}(d_{I_{H}})\cdot C_4)$

$C_4=\sum_{d_{O_H}\in\Delta}(r_{O_{CH}}(d_{O_H})\cdot C_{5})$

$C_{5}=CF_2\cdot C_{6}$

$C_{6}=\sum_{d_{O}\in\Delta}(s_{O}(d_{O})\cdot C)$

The state transitions of the Handle module
described by APTC are as follows.

$H=\sum_{d_{I_{H}}\in\Delta}(r_{I_{CH}}(d_{I_{H}})\cdot H_{2})$

$H_{2}=HF_1\cdot H_{3}$

$H_{3}=\sum_{d_{I_{B}}\in\Delta}(s_{I_{HB}}(d_{I_{B}})\cdot H_4)$

$H_4=\sum_{d_{O_{B}}\in\Delta}(r_{O_{HB}}(d_{O_{B}})\cdot H_{5})$

$H_{5}=HF_2\cdot H_{6}$

$H_{6}=\sum_{d_{O_{H}}\in\Delta}(s_{O_{CH}}(d_{O_{H}})\cdot H)$

The state transitions of the Body module
described by APTC are as follows.

$B=\sum_{d_{I_{B}}\in\Delta}(r_{I_{HB}}(d_{I_B})\cdot B_{2})$

$B_{2}=BF\cdot B_{3}$

$B_{3}=\sum_{d_{O_{B}}\in\Delta}(s_{O_{HB}}(d_{O_{B}})\cdot B)$

The sending action and the reading action of the same data through the same channel can communicate with each other, otherwise, will cause a deadlock $\delta$. We define the following
communication functions between the Client and the Handle Proxy.

$$\gamma(r_{I_{CH}}(d_{I_{H}}),s_{I_{CH}}(d_{I_{H}}))\triangleq c_{I_{CH}}(d_{I_{H}})$$

$$\gamma(r_{O_{CH}}(d_{O_H}),s_{O_{CH}}(d_{O_H}))\triangleq c_{O_{CH}}(d_{O_H})$$

There are two communication functions between the Handle and the Body as follows.

$$\gamma(r_{I_{HB}}(d_{I_{B}}),s_{I_{HB}}(d_{I_{B}}))\triangleq c_{I_{HB}}(d_{I_{B}})$$

$$\gamma(r_{O_{HB}}(d_{O_{B}}),s_{O_{HB}}(d_{O_{B}}))\triangleq c_{O_{HB}}(d_{O_{B}})$$

Let all modules be in parallel, then the Counted Pointer pattern $C\quad H \quad B$ can be presented by the following process term.

$\tau_I(\partial_H(\Theta(C\between H\between B)))=\tau_I(\partial_H(C\between H\between B))$

where $H=\{r_{I_{CH}}(d_{I_{H}}),s_{I_{CH}}(d_{I_{H}}),r_{O_{CH}}(d_{O_H}),s_{O_{CH}}(d_{O_H}),r_{I_{HB}}(d_{I_{B}}),s_{I_{HB}}(d_{I_{B}}),\\
r_{O_{HB}}(d_{O_{B}}),s_{O_{HB}}(d_{O_{B}})
|d_{I}, d_{I_{H}}, d_{I_{B}}, d_{O_{B}}, d_{O_H}, d_{O}\in\Delta\}$,

$I=\{c_{I_{CH}}(d_{I_{H}}),c_{O_{CH}}(d_{O_H}),c_{I_{HB}}(d_{I_{B}}),c_{O_{HB}}(d_{O_{B}}),CF_1,CF_2,HF_1,HF_2,BF\\
|d_{I}, d_{I_{H}}, d_{I_{B}}, d_{O_{B}}, d_{O_H}, d_{O}\in\Delta\}$.

Then we get the following conclusion on the Counted Pointer pattern.

\begin{theorem}[Correctness of the Counted Pointer pattern]
The Counted Pointer pattern $\tau_I(\partial_H(C\between H\between B))$ can exhibit desired external behaviors.
\end{theorem}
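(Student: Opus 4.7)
The plan is to follow the same template that has been used for all prior correctness theorems in this chapter (Singleton, and similarly for Proxy, MVC, Broker etc.): expand the whole parallel composition $C\between H\between B$ step by step using the axioms of APTC together with RDP and RSP, apply $\partial_H$ to block all unmatched sends and receives on the internal channels $I_{CH},O_{CH},I_{HB},O_{HB}$, and then apply $\tau_I$ to hide the communication events $c_{I_{CH}},c_{O_{CH}},c_{I_{HB}},c_{O_{HB}}$ and the internal processing functions $CF_1,CF_2,HF_1,HF_2,BF$.

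First, I would unfold $C$, $H$, and $B$ one step at a time from their defining recursive equations, forming the handshakes in the order forced by $\partial_H$. Starting from $C\between H\between B$, the only action of $C$ that survives $\partial_H$ at the outset is $r_I(d_I)$, since every send $s_{I_{CH}}(d_{I_H})$ in $C$ can only be consumed by the matching receive in $H$. Thus the first observable step is $r_I(d_I)$, leading via $P1$--$P10$ and $C11$--$C18$ to $\partial_H(C_3\between H\between B)$, which further reduces by the communication merge to $c_{I_{CH}}(d_{I_H})\cdot\partial_H(C_4\between H_2\between B)$. Continuing this unfolding, the Handle processes and sends to the Body, yielding $c_{I_{HB}}(d_{I_B})$; the Body processes and returns, yielding $c_{O_{HB}}(d_{O_B})$; the Handle then returns, yielding $c_{O_{CH}}(d_{O_H})$; and finally $C$ emits $s_O(d_O)$ and returns to its initial state.

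Once the sequence of actions has been fully laid out, I would introduce a guarded linear recursive specification $\{X=\sum_{d_I,d_O\in\Delta}r_I(d_I)\cdot Y_{d_I,d_O},\, Y_{d_I,d_O}=c_{I_{CH}}(d_{I_H})\cdot Z_{d_I,d_O}, \ldots\}$ capturing the intermediate states of $\partial_H(C\between H\between B)$, and verify by inspection of the transition rules that $\partial_H(C\between H\between B)$ is a solution. Then applying $\tau_I$ together with axioms $TI1$--$TI6$, every internal communication $c_\bullet$ and every processing function becomes $\tau$, and using $B1$--$B3$ together with CFAR to collapse the resulting $\tau$-chains, the specification simplifies to $\tau_I(\partial_H(C\between H\between B))=\sum_{d_I,d_O\in\Delta}r_I(d_I)\cdot s_O(d_O)\cdot\tau_I(\partial_H(C\between H\between B))$. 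By RSP this uniquely identifies the behavior, which is the desired external behavior.

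The main obstacle, as in the previous proofs, is purely bookkeeping: there is no genuine concurrency between the three modules here (the process is essentially a pipeline), so the truly concurrent algebra degenerates to its sequential fragment, and the only real work is matching each intermediate state $C_i\between H_j\between B_k$ with the appropriate step of the linear recursive specification and checking that no undesired interleavings survive $\partial_H$. Since this is routine given the explicit state equations already written down, the proof can be referred, as in all earlier cases, to the application section \ref{app}, so we omit the detailed expansions.
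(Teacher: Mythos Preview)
Your proposal is correct and follows essentially the same approach as the paper: unfold the module definitions, apply $\partial_H$ and $\tau_I$ with the APTC axioms (together with RDP, RSP, and CFAR) to reduce the system to the recursive equation $\tau_I(\partial_H(C\between H\between B))=\sum_{d_I,d_O\in\Delta}(r_I(d_I)\cdot s_O(d_O))\cdot\tau_I(\partial_H(C\between H\between B))$, and refer to section~\ref{app} for the routine expansion. In fact you supply considerably more detail than the paper's own proof, which simply states the resulting equation and defers everything else to section~\ref{app}.
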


\begin{proof}
Based on the above state transitions of the above modules, by use of the algebraic laws of APTC, we can prove that

$\tau_I(\partial_H(C\between H\between B))=\sum_{d_{I},d_{O}\in\Delta}(r_{I}(d_{I})\cdot s_{O}(d_{O}))\cdot
\tau_I(\partial_H(C\between H\between B))$,

that is, the Counted Pointer pattern $\tau_I(\partial_H(C\between H\between B))$ can exhibit desired external behaviors.

For the details of proof, please refer to section \ref{app}, and we omit it.
\end{proof}

\newpage\section{Verification of Patterns for Concurrent and Networked Objects}

Patterns for concurrent and networked objects can be used both in higher-level and lower-level systems and applications. 

In this chapter, we verify patterns for concurrent and networked objects. In section \ref{SAC6}, we verify service access and configuration patterns. In section \ref{EH6}, we verify
patterns related to event handling. We verify synchronization patterns in section \ref{S6} and concurrency patterns in section \ref{C6}.

\subsection{Service Access and Configuration Patterns}\label{SAC6}

In this subsection, we verify patterns for service access and configuration, including the Wrapper Facade pattern, the Component Configurator pattern, the Interceptor pattern, and the
Extension Interface pattern.

\subsubsection{Verification of the Wrapper Facade Pattern}

The Wrapper Facade pattern encapsulates the non-object-oriented APIs into the object-oriented ones. There are two classes of modules in the Wrapper Facade pattern: the Wrapper Facade
 and $n$ API Functions. The Wrapper Facade interacts with API Function $i$ through the channels $I_{WA_i}$ and $O_{WA_i}$, and it exchanges
information with outside through the input channel $I$ and $O$. As illustrates in Figure \ref{WF6}.

\begin{figure}
    \centering
    \includegraphics{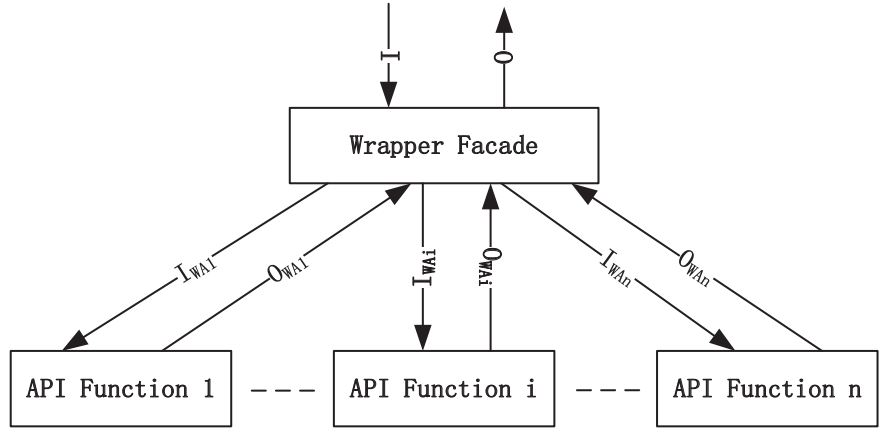}
    \caption{Wrapper Facade pattern}
    \label{WF6}
\end{figure}

The typical process of the Wrapper Facade pattern is shown in Figure \ref{WF6P} and as follows.

\begin{enumerate}
  \item The Wrapper Facade receives the input from the user through the channel $I$ (the corresponding reading action is denoted $r_{I}(d_{I})$), then
  processes the input through a processing function $WF_1$ and generates the input $d_{I_{A_i}}$, and sends $d_{I_{A_i}}$ to the API Function $i$ (for $1\leq i\leq n$) through the
  channel $I_{WA_i}$ (the corresponding sending action is denoted $s_{I_{WA_i}}(d_{I_{A_i}})$);
  \item The API Function $i$ receives the input $d_{I_{A_i}}$ from the Wrapper Facade through the channel $I_{WA_i}$ (the corresponding reading action is denoted $r_{I_{WA_i}}(d_{I_{A_i}})$),
  then processes the input through a processing function $AF_{i}$, and sends the results $d_{O_{A_{i}}}$ to the Wrapper Facade through the channel $O_{WA_i}$
  (the corresponding sending action is denoted $s_{O_{WA_i}}(d_{O_{A_{i}}})$);
  \item The Wrapper Facade receives the computational results from the API Function $i$ through the channel $O_{WA_i}$ (the corresponding reading action is denoted $r_{O_{WA_i}}(d_{O_{A_{i}}})$), then
  processes the results through a processing function $WF_2$ and generates the result $d_{O}$, and sends $d_{O}$ to the outside through the
  channel $O$ (the corresponding sending action is denoted $s_{O}(d_{O})$).
\end{enumerate}

\begin{figure}
    \centering
    \includegraphics{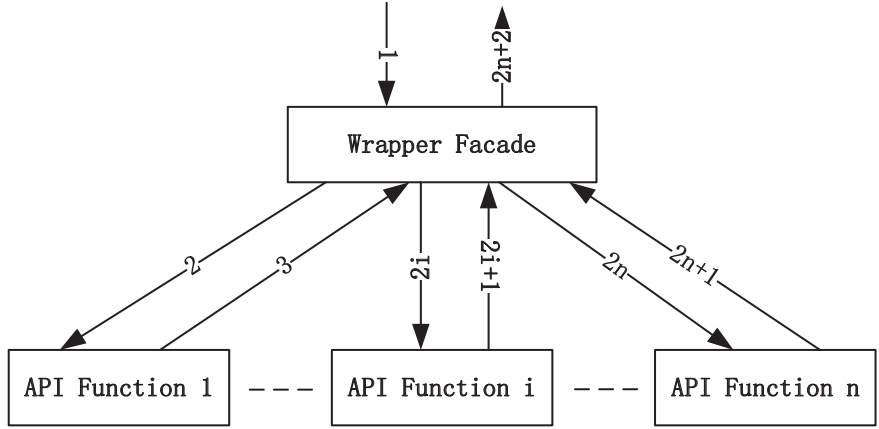}
    \caption{Typical process of Wrapper Facade pattern}
    \label{WF6P}
\end{figure}

In the following, we verify the Wrapper Facade pattern. We assume all data elements $d_{I}$, $d_{I_{A_i}}$, $d_{O_{A_{i}}}$, $d_{O}$ (for $1\leq i\leq n$) are from a finite set
$\Delta$.

The state transitions of the Wrapper Facade module
described by APTC are as follows.

$W=\sum_{d_{I}\in\Delta}(r_{I}(d_{I})\cdot W_{2})$

$W_{2}=WF_1\cdot W_{3}$

$W_{3}=\sum_{d_{I_{A_1}},\cdot,d_{I_{A_n}}\in\Delta}(s_{I_{WA_1}}(d_{I_{A_1}})\between\cdots\between s_{I_{WA_n}}(d_{I_{A_n}})\cdot W_{4})$

$W_{4}=\sum_{d_{O_{A_{1}}},\cdots,d_{O_{A_{n}}}\in\Delta}(r_{O_{WA_1}}(d_{O_{A_{1}}})\between\cdots\between r_{O_{WA_n}}(d_{O_{A_{n}}})\cdot W_{5})$

$W_5=WF_2\cdot W_6$

$W_{6}=\sum_{d_{O}\in\Delta}(s_{O}(d_{O})\cdot W)$

The state transitions of the API Function $i$ described by APTC are as follows.

$A_i=\sum_{d_{I_{A_i}}\in\Delta}(r_{I_{WA_i}}(d_{I_{A_i}})\cdot A_{i_2})$

$A_{i_2}=AF_{i}\cdot A_{i_3}$

$A_{i_3}=\sum_{d_{O_{A_{i}}}\in\Delta}(s_{O_{WA_{i}}}(d_{O_{A_{i}}})\cdot A_{i})$

The sending action and the reading action of the same data through the same channel can communicate with each other, otherwise, will cause a deadlock $\delta$. We define the following
communication functions of the API Function $i$ for $1\leq i\leq n$.

$$\gamma(r_{I_{WA_i}}(d_{I_{A_i}}),s_{I_{WA_i}}(d_{I_{A_i}}))\triangleq c_{I_{WA_i}}(d_{I_{A_i}})$$

$$\gamma(r_{O_{WA_{i}}}(d_{O_{A_{i}}}),s_{O_{WA_{i}}}(d_{O_{A_{i}}}))\triangleq c_{O_{WA_{i}}}(d_{O_{A_{i}}})$$

Let all modules be in parallel, then the Wrapper Facade pattern $W\quad A_1\cdots A_i\cdots A_n$ can be presented by the following process term.

$\tau_I(\partial_H(\Theta(W\between A_1\between\cdots\between A_i\between\cdots\between A_n)))=\tau_I(\partial_H(W\between A_1\between\cdots\between A_i\between\cdots\between A_n))$

where $H=\{r_{I_{WA_i}}(d_{I_{A_i}}),s_{I_{WA_i}}(d_{I_{A_i}}),r_{O_{WA_{i}}}(d_{O_{A_{i}}}),s_{O_{WA_{i}}}(d_{O_{A_{i}}})\\
|d_{I}, d_{I_{A_i}}, d_{O_{A_{i}}}, d_{O}\in\Delta\}$ for $1\leq i\leq n$,

$I=\{c_{I_{WA_i}}(d_{I_{A_i}}),c_{O_{WA_{i}}}(d_{O_{A_{i}}}),WF_1,WF_2,AF_{i}\\
|d_{I}, d_{I_{A_i}}, d_{O_{A_{i}}}, d_{O}\in\Delta\}$ for $1\leq i\leq n$.

Then we get the following conclusion on the Wrapper Facade pattern.

\begin{theorem}[Correctness of the Wrapper Facade pattern]
The Wrapper Facade pattern $\tau_I(\partial_H(W\between A_1\between\cdots\between A_i\between\cdots\between A_n))$ can exhibit desired external behaviors.
\end{theorem}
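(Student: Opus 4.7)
The plan is to follow the same recipe used for every preceding pattern in the chapter (MVC, PAC, Microkernel, Reflection, Whole-Part, Master-Slave, Proxy, Counted Pointer). First I would set up a guarded linear recursive specification $E$ whose variables correspond to the reachable global states of the parallel composition $W \between A_1 \between \cdots \between A_n$, identifying $\partial_H(W \between A_1 \between \cdots \between A_n)$ with $\langle X_1 | E \rangle$. The initial state has $W$ waiting on channel $I$ while each $A_i$ is independently waiting on its channel $I_{WA_i}$, so by axioms P1--P10 and C11--C18 the only action that survives encapsulation by $\partial_H$ is $r_I(d_I)$.

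Next I would expand step by step. After $r_I(d_I)$, the Wrapper Facade advances through $WF_1$ to state $W_3$ which emits the broadcast $s_{I_{WA_1}}(d_{I_{A_1}}) \between \cdots \between s_{I_{WA_n}}(d_{I_{A_n}})$; since each matching receive $r_{I_{WA_i}}(d_{I_{A_i}})$ is exposed by the corresponding $A_i$, axioms P6, C14 and the definition of $\gamma$ force each $s_{I_{WA_i}} \parallel r_{I_{WA_i}}$ to collapse to the communication $c_{I_{WA_i}}(d_{I_{A_i}})$ under $\partial_H$. Analogously, after each $AF_i$ fires locally in its $A_{i_2}$ state, the return leg $s_{O_{WA_i}}(d_{O_{A_i}}) \parallel r_{O_{WA_i}}(d_{O_{A_i}})$ is forced into $c_{O_{WA_i}}(d_{O_{A_i}})$. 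The Wrapper Facade then executes $WF_2$ and finally $s_O(d_O)$, after which every component is back in its initial recursion variable.

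Applying $\tau_I$ with $I$ containing all the $c$-actions plus $WF_1, WF_2, AF_i$ turns each of these internal steps into $\tau$; axioms B1--B3 absorb them so only the boundary actions $r_I(d_I)$ and $s_O(d_O)$ remain visible. By RSP, this computation proves
\[
\tau_I(\partial_H(W \between A_1 \between \cdots \between A_n)) = \sum_{d_I, d_O \in \Delta} r_I(d_I) \cdot s_O(d_O) \cdot \tau_I(\partial_H(W \between A_1 \between \cdots \between A_n)),
\]
which is the desired external behavior.

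The main obstacle, as in the earlier proofs of this chapter, is the bookkeeping for the parallel fan-out and fan-in: I need to verify that under $\partial_H$ every isolated send $s_{I_{WA_i}}$ and every isolated receive $r_{O_{WA_i}}$ deadlocks (by D2, yielding $\delta$ absorbed via A6, A7), so that the only surviving term in the expansion of the broadcast and collection phases is the fully synchronized communication $c_{I_{WA_1}} \parallel \cdots \parallel c_{I_{WA_n}}$ followed by $c_{O_{WA_1}} \parallel \cdots \parallel c_{O_{WA_n}}$. Once this is established, the algebraic manipulation is routine and the invocation of RSP is standard, so the full calculation can be deferred to the worked ABP example of section~\ref{app}.
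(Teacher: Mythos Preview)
Your proposal is correct and follows essentially the same approach as the paper: the paper's own proof simply asserts that by use of the algebraic laws of APTC one can derive the equation $\tau_I(\partial_H(W\between A_1\between\cdots\between A_n))=\sum_{d_{I},d_O\in\Delta}(r_{I}(d_{I})\cdot s_{O}(d_{O}))\cdot\tau_I(\partial_H(W\between A_1\between\cdots\between A_n))$ and defers all details to the ABP example in section~\ref{app}, whereas you have spelled out the intermediate expansion steps (fan-out, fan-in, deadlocking of unmatched actions under $\partial_H$, absorption of $\tau$'s via B1--B3, and the RSP invocation) that the paper leaves implicit.
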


\begin{proof}
Based on the above state transitions of the above modules, by use of the algebraic laws of APTC, we can prove that

$\tau_I(\partial_H(W\between A_1\between\cdots\between A_i\between\cdots\between A_n))=\sum_{d_{I},d_O\in\Delta}(r_{I}(d_{I})\cdot s_{O}(d_{O}))\cdot
\tau_I(\partial_H(W\between A_1\between\cdots\between A_i\between\cdots\between A_n))$,

that is, the Wrapper Facade pattern $\tau_I(\partial_H(W\between A_1\between\cdots\between A_i\between\cdots\between A_n))$ can exhibit desired external behaviors.

For the details of proof, please refer to section \ref{app}, and we omit it.
\end{proof}

\subsubsection{Verification of the Component Configurator Pattern}

The Component Configurator pattern allows to configure the components dynamically. There are three classes of modules in the Component Configurator pattern: the Component Configurator
 and $n$ Components and the Component Repository. The Component Configurator interacts with Component $i$ through the channels $I_{CC_i}$ and $O_{CC_i}$, and it exchanges
information with outside through the input channel $I$ and $O$, and with the Component Repository through the channels $I_{CR}$ and $O_{CR}$. As illustrates in Figure \ref{CC6}.

\begin{figure}
    \centering
    \includegraphics{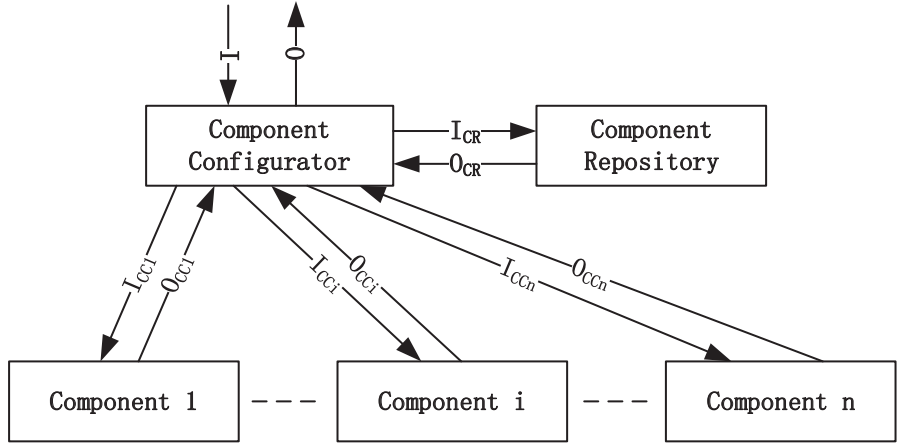}
    \caption{Component Configurator pattern}
    \label{CC6}
\end{figure}

The typical process of the Component Configurator pattern is shown in Figure \ref{CC6P} and as follows.

\begin{enumerate}
  \item The Component Configurator receives the input from the user through the channel $I$ (the corresponding reading action is denoted $r_{I}(d_{I})$), then
  processes the input through a processing function $CCF_1$ and generates the input $d_{I_{C_i}}$, and sends $d_{I_{C_i}}$ to the Component $i$ (for $1\leq i\leq n$) through the
  channel $I_{CC_i}$ (the corresponding sending action is denoted $s_{I_{CC_i}}(d_{I_{C_i}})$);
  \item The Component $i$ receives the input $d_{I_{C_i}}$ from the Component Configurator through the channel $I_{CC_i}$ (the corresponding reading action is denoted $r_{I_{CC_i}}(d_{I_{C_i}})$),
  then processes the input through a processing function $CF_{i}$, and sends the results $d_{O_{C_{i}}}$ to the Component Configurator through the channel $O_{CC_i}$
  (the corresponding sending action is denoted $s_{O_{CC_i}}(d_{O_{C_{i}}})$);
  \item The Component Configurator receives the configurational results from the Component $i$ through the channel $O_{CC_i}$ (the corresponding reading action is denoted $r_{O_{CC_i}}(d_{O_{C_{i}}})$), then
  processes the results through a processing function $CCF_2$ and generates the configurational information $d_{I_{R}}$, and sends $d_{I_R}$ to the Component Repository through the channel
  $I_{CR}$ (the corresponding sending action is denoted $s_{I_{CR}}(d_{I_R})$);
  \item The Component Repository receives the configurational information $d_{I_R}$ through the channel $I_{CR}$ (the corresponding reading action is denoted $r_{I_{CR}}(d_{I_R})$), then
  processes the information and generates the results $d_{O_{R}}$ through a processing function $RF$, and sends the results $d_{O_R}$ to the Component Configurator through the channels
  $O_{CR}$ (the corresponding sending action is denoted $s_{O_{CR}}(d_{O_R})$);
  \item The Component Configurator receives the results $d_{O_R}$ from the Component Repository through the channel $O_{CR}$ (the corresponding reading action is denoted $r_{O_{CR}}(d_{O_R})$),
  the processes the results and generates the results $d_{O}$ through a processing function $CCF_3$, and sends $d_{O}$ to the outside through the
  channel $O$ (the corresponding sending action is denoted $s_{O}(d_{O})$).
\end{enumerate}

\begin{figure}
    \centering
    \includegraphics{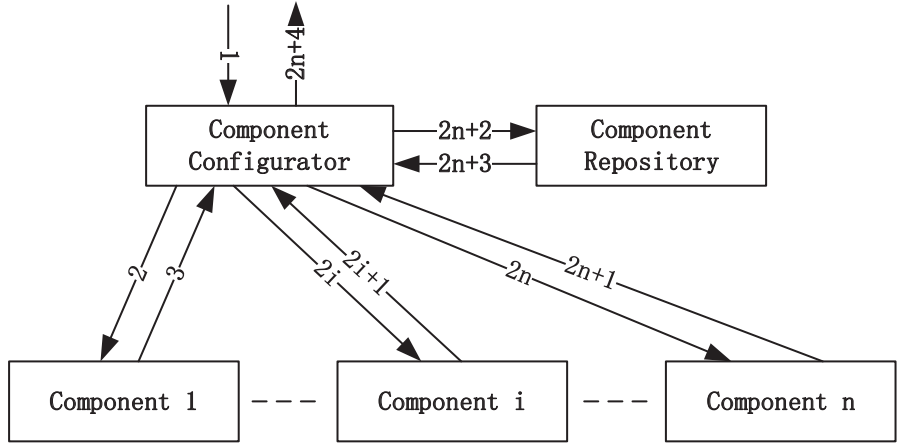}
    \caption{Typical process of Component Configurator pattern}
    \label{CC6P}
\end{figure}

In the following, we verify the Component Configurator pattern. We assume all data elements $d_{I}$, $d_{I_R}$, $d_{I_{C_i}}$, $d_{O_{C_{i}}}$, $d_{O_R}$, $d_{O}$ (for $1\leq i\leq n$) are from a finite set
$\Delta$.

The state transitions of the Component Configurator module
described by APTC are as follows.

$CC=\sum_{d_{I}\in\Delta}(r_{I}(d_{I})\cdot CC_{2})$

$CC_{2}=CCF_1\cdot CC_{3}$

$CC_{3}=\sum_{d_{I_{C_1}},\cdot,d_{I_{C_n}}\in\Delta}(s_{I_{CC_1}}(d_{I_{C_1}})\between\cdots\between s_{I_{CC_n}}(d_{I_{C_n}})\cdot CC_{4})$

$CC_{4}=\sum_{d_{O_{C_{1}}},\cdots,d_{O_{C_{n}}}\in\Delta}(r_{O_{CC_1}}(d_{O_{C_{1}}})\between\cdots\between r_{O_{CC_n}}(d_{O_{C_{n}}})\cdot CC_{5})$

$CC_5=CCF_2\cdot CC_6$

$CC_{6}=\sum_{d_{I_R}\in\Delta}(s_{I_{CR}}(d_{I_R})\cdot CC_7)$

$CC_{7}=\sum_{d_{O_{R}}\in\Delta}(r_{O_{CR}}(d_{O_{R}})\cdot CC_{8})$

$CC_8=CCF_3\cdot CC_9$

$CC_{9}=\sum_{d_{O}\in\Delta}(s_{O}(d_{O})\cdot CC)$

The state transitions of the Component $i$ described by APTC are as follows.

$C_i=\sum_{d_{I_{C_i}}\in\Delta}(r_{I_{CC_i}}(d_{I_{C_i}})\cdot C_{i_2})$

$C_{i_2}=CF_{i}\cdot C_{i_3}$

$C_{i_3}=\sum_{d_{O_{C_{i}}}\in\Delta}(s_{O_{CC_{i}}}(d_{O_{C_{i}}})\cdot C_{i})$

The state transitions of the Component Repository described by APTC are as follows.

$R=\sum_{d_{I_{R}}\in\Delta}(r_{I_{CR}}(d_{I_{R}})\cdot R_{2})$

$R_{2}=RF\cdot R_{3}$

$R_{3}=\sum_{d_{O_{R}}\in\Delta}(s_{O_{CR}}(d_{O_{R}})\cdot R)$

The sending action and the reading action of the same data through the same channel can communicate with each other, otherwise, will cause a deadlock $\delta$. We define the following
communication functions of the Component Configurator for $1\leq i\leq n$.

$$\gamma(r_{I_{CC_i}}(d_{I_{C_i}}),s_{I_{CC_i}}(d_{I_{C_i}}))\triangleq c_{I_{CC_i}}(d_{I_{C_i}})$$

$$\gamma(r_{O_{CC_{i}}}(d_{O_{C_{i}}}),s_{O_{CC_{i}}}(d_{O_{C_{i}}}))\triangleq c_{O_{CC_{i}}}(d_{O_{C_{i}}})$$

$$\gamma(r_{I_{CR}}(d_{I_{R}}),s_{I_{CR}}(d_{I_{R}}))\triangleq c_{I_{CR}}(d_{I_{R}})$$

$$\gamma(r_{O_{CR}}(d_{O_{R}}),s_{O_{CR}}(d_{O_{R}}))\triangleq c_{O_{CR}}(d_{O_{R}})$$

Let all modules be in parallel, then the Component Configurator pattern $CC\quad R\quad C_1\cdots C_i\cdots C_n$ can be presented by the following process term.

$\tau_I(\partial_H(\Theta(CC\between R\between C_1\between\cdots\between C_i\between\cdots\between C_n)))=\tau_I(\partial_H(CC\between R\between C_1\between\cdots\between C_i\between\cdots\between C_n))$

where $H=\{r_{I_{CC_i}}(d_{I_{C_i}}),s_{I_{CC_i}}(d_{I_{C_i}}),r_{O_{CC_{i}}}(d_{O_{C_{i}}}),s_{O_{CC_{i}}}(d_{O_{C_{i}}}),r_{I_{CR}}(d_{I_{R}}),s_{I_{CR}}(d_{I_{R}}),\\
r_{O_{CR}}(d_{O_{R}}),s_{O_{CR}}(d_{O_{R}})
|d_{I}, d_{I_R}, d_{I_{C_i}}, d_{O_{C_{i}}}, d_{O_R}, d_{O}\in\Delta\}$ for $1\leq i\leq n$,

$I=\{c_{I_{CC_i}}(d_{I_{C_i}}),c_{O_{CC_{i}}}(d_{O_{C_{i}}}),c_{I_{CR}}(d_{I_{R}}),c_{O_{CR}}(d_{O_{R}}),CCF_1,CCF_2,CCF_3,CF_{i},RF\\
|d_{I}, d_{I_R}, d_{I_{C_i}}, d_{O_{C_{i}}}, d_{O_R}, d_{O}\in\Delta\}$ for $1\leq i\leq n$.

Then we get the following conclusion on the Component Configurator pattern.

\begin{theorem}[Correctness of the Component Configurator pattern]
The Component Configurator pattern $\tau_I(\partial_H(CC\between R\between C_1\between\cdots\between C_i\between\cdots\between C_n))$ can exhibit desired external behaviors.
\end{theorem}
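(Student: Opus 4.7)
The plan is to follow exactly the same template used for the previous patterns (Wrapper Facade, MVC, Broker, etc.), namely to expand the parallel composition $CC\between R\between C_1\between\cdots\between C_n$ step-by-step using the axioms of $APTC$ with guarded linear recursion, then apply the encapsulation operator $\partial_H$ to block non-matching send/receive pairs (forcing them to synchronize into communications $c_{\ast}$ or become $\delta$), and finally apply the abstraction operator $\tau_I$ to hide the internal communications and internal processing functions. The goal equation will be
\begin{eqnarray}
\tau_I(\partial_H(CC\between R\between C_1\between\cdots\between C_n))
&=&\sum_{d_I,d_O\in\Delta}(r_I(d_I)\cdot s_O(d_O))\cdot\nonumber\\
&&\tau_I(\partial_H(CC\between R\between C_1\between\cdots\between C_n)),\nonumber
\end{eqnarray}
which exhibits the desired external behavior: read a request on $I$, eventually emit a response on $O$, and recurse.

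First I would use $P1$ and the expansion laws $P4$--$P10$, $C11$--$C18$ to unfold the initial whole-parallel operator $\between$ into a sum of parallel and communication terms. Since the only unencapsulated initial action across the whole system is $r_I(d_I)$ at $CC$ (all other initial actions of $R$ and $C_i$ are receives on channels in $H$, which when not paired with matching sends produce $\delta$ via $P9$, $P10$, $C17$, $C18$ and are absorbed by $A6$, $A7$), the front of the expansion collapses to $r_I(d_I)\cdot\partial_H(CC_2\between R\between C_1\between\cdots\between C_n)$. Then I would iterate this expansion through the successive states $CC_2,\ldots,CC_9$, at each step showing that the only non-$\delta$ contribution comes from the intended matched pair: $CCF_1$ alone, then the synchronized broadcast $c_{I_{CC_1}}(d_{I_{C_1}})\parallel\cdots\parallel c_{I_{CC_n}}(d_{I_{C_n}})$ matching the $s_{I_{CC_i}}$ from $CC_3$ against the $r_{I_{CC_i}}$ from each $C_i$, then each $CF_i$, then the synchronized collection $c_{O_{CC_1}}\parallel\cdots\parallel c_{O_{CC_n}}$, then $CCF_2$, then $c_{I_{CR}}(d_{I_R})$, $RF$, $c_{O_{CR}}(d_{O_R})$, $CCF_3$, and finally $s_O(d_O)$, after which the system is back in the state $CC\between R\between C_1\between\cdots\between C_n$.

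Next I would apply $\tau_I$: every $c_{\ast}$ communication action listed above belongs to $I$, as do all of the processing functions $CCF_1,CCF_2,CCF_3,CF_i,RF$, so by $TI2$, $TI4$, $TI5$, $TI6$ together with $B1$ (absorption of $\tau$ after an action) each of these steps is collapsed to $\tau$ and then absorbed into the preceding visible action. What remains of the expanded term is exactly the desired shape $\sum_{d_I,d_O\in\Delta}r_I(d_I)\cdot s_O(d_O)\cdot\tau_I(\partial_H(\cdots))$. Finally, recognizing this as a guarded linear recursion equation in the unknown $X=\tau_I(\partial_H(CC\between R\between C_1\between\cdots\between C_n))$, I would invoke $RSP$ (Recursive Specification Principle) to identify the solution uniquely, exactly as done in the earlier proofs of the Wrapper Facade, Broker, and Command Processor patterns.

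The routine but tedious part is the bookkeeping of the two nested broadcast/collect phases with the $n$ components, which requires repeated use of $P6$ and $C14$ together with $A6$, $A7$ to eliminate the unintended interleavings that produce $\delta$; the main obstacle is ensuring that during the phase where $CC$ is awaiting acknowledgements on all $O_{CC_i}$ while $R$ is idle and the $C_i$'s fire in any order, every schedule yields the same normal form, which is handled by the commutativity/associativity laws $P2$, $P3$ of $\parallel$ and the fact stated in section \ref{app} that the details are analogous to the ABP correctness proof. As the author consistently does, I would refer to section \ref{app} for those details and omit them.
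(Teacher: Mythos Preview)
Your proposal is correct and follows exactly the paper's approach: state the target equation $\tau_I(\partial_H(CC\between R\between C_1\between\cdots\between C_n))=\sum_{d_I,d_O\in\Delta}(r_I(d_I)\cdot s_O(d_O))\cdot\tau_I(\partial_H(CC\between R\between C_1\between\cdots\between C_n))$, appeal to the algebraic laws of APTC for the expansion, and defer the routine details to section~\ref{app}. In fact you have spelled out considerably more of the mechanics (the specific axioms $P1$--$P10$, $C11$--$C18$, $TI$, $B1$, and the invocation of $RSP$) than the paper's own proof, which simply asserts the equation and omits everything else.
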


\begin{proof}
Based on the above state transitions of the above modules, by use of the algebraic laws of APTC, we can prove that

$\tau_I(\partial_H(CC\between R\between C_1\between\cdots\between C_i\between\cdots\between C_n))=\sum_{d_{I},d_O\in\Delta}(r_{I}(d_{I})\cdot s_{O}(d_{O}))\cdot
\tau_I(\partial_H(CC\between R\between C_1\between\cdots\between C_i\between\cdots\between C_n))$,

that is, the Component Configurator pattern $\tau_I(\partial_H(CC\between R\between C_1\between\cdots\between C_i\between\cdots\between C_n))$ can exhibit desired external behaviors.

For the details of proof, please refer to section \ref{app}, and we omit it.
\end{proof}

\subsubsection{Verification of the Interceptor Pattern}

The Interceptor pattern adds functionalities to the concrete framework to introduce an intermediate Dispatcher and an Interceptor.
There are three modules in the Interceptor pattern: the Concrete Framework, the Dispatcher,
and the Interceptor. The Concrete Framework interacts with the user through
the channels $I$ and $O$; with the Dispatcher through the channel $CD$; with the Interceptor through the channels $I_{IC}$ and $O_{IC}$. The Dispatcher interacts with the Interceptor
through the channel $DI$. As illustrates in
Figure \ref{In6}.

\begin{figure}
    \centering
    \includegraphics{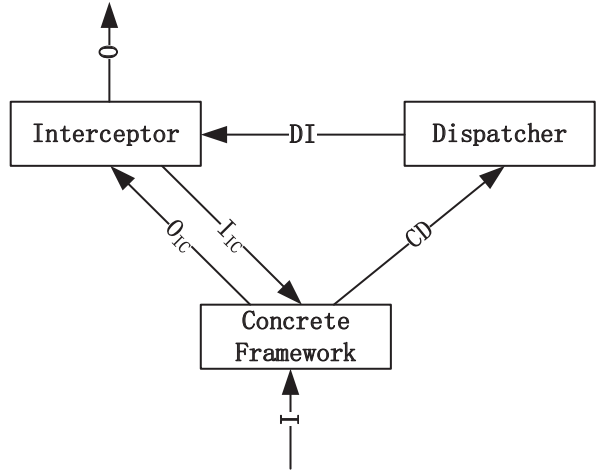}
    \caption{Interceptor pattern}
    \label{In6}
\end{figure}

The typical process of the Interceptor pattern is shown in Figure \ref{In6P} and as follows.

\begin{enumerate}
  \item The Concrete Framework receives the request $d_{I}$ from the user through the channel $I$ (the corresponding reading action is denoted $r_{I}(d_{I})$), then processes the request $d_{I}$ through a processing
  function $CF_1$, and sends the processed request $d_{I_{D}}$ to the Dispatcher through the channel $CD$ (the corresponding sending action is denoted $s_{CD}(d_{I_{D}})$);
  \item The Dispatcher receives $d_{I_{D}}$ from the Concrete Framework through the channel $CD$ (the corresponding reading action is denoted $r_{CD}(d_{I_{D}})$), then processes the request
  through a processing function $DF$, generates and sends the processed request $d_{O_{D}}$ to the Interceptor through the channel $DI$ (the corresponding sending action is denoted
  $s_{DI}(d_{O_{D}})$);
  \item The Interceptor receives the request $d_{O_D}$ from the Dispatcher through the channel $DI$ (the corresponding reading action is denoted $r_{DI}(d_{O_D})$), then
  processes the request and generates the request $d_{I_C}$ through a processing function $IF_1$, and sends the request to the Concrete Framework through the channel $I_{IC}$ (the corresponding sending action is denoted
  $s_{I_{IC}}(d_{I_C})$);
  \item The Concrete Framework receives the request $d_{I_C}$ from the Interceptor through the channel $I_{IC}$ (the corresponding reading action is denoted $r_{I_{IC}}(d_{I_C})$), then processes
  the request through a processing function $CF_2$, generates and sends the response $d_{O_C}$ to the Interceptor through the channel $O_{IC}$ (the corresponding sending action is denoted $s_{O_{IC}}(d_{O_C})$);
  \item The Interceptor receives the response $d_{O_C}$ from the Concrete Framework through the channel $O_{IC}$ (the corresponding reading action is denoted $r_{O_{IC}}(d_{O_C})$), then processes
  the request through a processing function $IF_2$, generates and sends the processed response $d_{O}$ to the user through the channel $O$ (the corresponding sending action is denoted $s_{O}(d_{O})$).
\end{enumerate}

\begin{figure}
    \centering
    \includegraphics{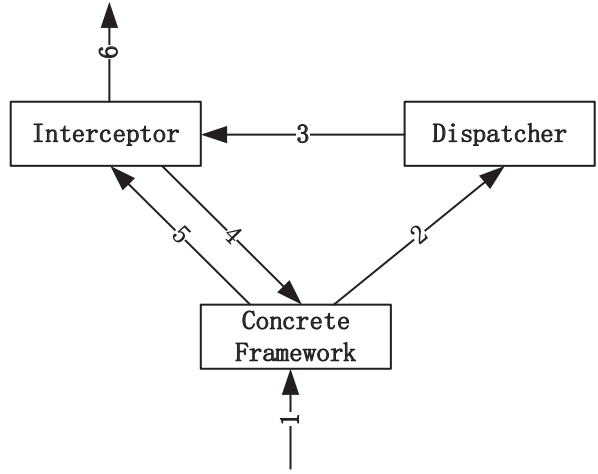}
    \caption{Typical process of Interceptor pattern}
    \label{In6P}
\end{figure}

In the following, we verify the Interceptor pattern. We assume all data elements $d_{I}$, $d_{I_D}$, $d_{I_{C}}$, $d_{O_{D}}$, $d_{O_{C}}$, $d_{O}$ are from a finite set
$\Delta$.

The state transitions of the Concrete Framework module
described by APTC are as follows.

$C=\sum_{d_{I}\in\Delta}(r_{I}(d_{I})\cdot C_{2})$

$C_{2}=CF_1\cdot C_{3}$

$C_{3}=\sum_{d_{I_{D}}\in\Delta}(s_{CD}(d_{I_{D}})\cdot C_4)$

$C_4=\sum_{d_{I_C}\in\Delta}(r_{I_{IC}}(d_{I_C})\cdot C_{5})$

$C_{5}=CF_2\cdot C_{6}$

$C_{6}=\sum_{d_{O_C}\in\Delta}(s_{O_{IC}}(d_{O_C})\cdot C)$

The state transitions of the Dispatcher module
described by APTC are as follows.

$D=\sum_{d_{I_{D}}\in\Delta}(r_{CD}(d_{I_{D}})\cdot D_{2})$

$D_{2}=DF\cdot D_{3}$

$D_{3}=\sum_{d_{O_D}\in\Delta}(s_{DI}(d_{O_D})\cdot D)$

The state transitions of the Interceptor module
described by APTC are as follows.

$I=\sum_{d_{O_D}\in\Delta}(r_{DI}(d_{O_D})\cdot I_{2})$

$I_{2}=IF_1\cdot I_{3}$

$I_{3}=\sum_{d_{I_C}\in\Delta}(s_{I_{IC}}(d_{I_C})\cdot I_4)$

$I_4=\sum_{d_{O_C}\in\Delta}(r_{O_{IC}}(d_{O_C})\cdot I_{5})$

$I_{5}=IF_2\cdot I_{6}$

$I_{6}=\sum_{d_{O}\in\Delta}(s_{O}(d_{O})\cdot I)$

The sending action and the reading action of the same data through the same channel can communicate with each other, otherwise, will cause a deadlock $\delta$. We define the following
communication functions between the Concrete Framework and the Dispatcher.

$$\gamma(r_{CD}(d_{I_{D}}),s_{CD}(d_{I_{D}}))\triangleq c_{CD}(d_{I_{D}})$$

There are two communication functions between the Concrete Framework and the Interceptor as follows.

$$\gamma(r_{I_{IC}}(d_{I_C}),s_{I_{IC}}(d_{I_C}))\triangleq c_{I_{IC}}(d_{I_C})$$

$$\gamma(r_{O_{IC}}(d_{O_C}),s_{O_{IC}}(d_{O_C}))\triangleq c_{O_{IC}}(d_{O_C})$$

There are one communication function between the Dispatcher and the Interceptor as follows.

$$\gamma(r_{DI}(d_{O_D}),s_{DI}(d_{O_D}))\triangleq c_{DI}(d_{O_D})$$

Let all modules be in parallel, then the Interceptor pattern $C\quad D \quad I$ can be presented by the following process term.

$\tau_I(\partial_H(\Theta(C\between D\between I)))=\tau_I(\partial_H(C\between D\between I))$

where $H=\{r_{CD}(d_{I_{D}}),s_{CD}(d_{I_{D}}),r_{I_{IC}}(d_{I_C}),s_{I_{IC}}(d_{I_C}),r_{O_{IC}}(d_{O_C}),s_{O_{IC}}(d_{O_C}),\\
r_{DI}(d_{O_D}),s_{DI}(d_{O_D})
|d_{I}, d_{I_D}, d_{I_{C}}, d_{O_{D}}, d_{O_{C}}, d_{O}\in\Delta\}$,

$I=\{c_{CD}(d_{I_{D}}),c_{I_{IC}}(d_{I_C}),c_{O_{IC}}(d_{O_C}),c_{DI}(d_{O_D}),CF_1,CF_2,DF,IF_1,IF_2\\
|d_{I}, d_{I_D}, d_{I_{C}}, d_{O_{D}}, d_{O_{C}}, d_{O}\in\Delta\}$.

Then we get the following conclusion on the Interceptor pattern.

\begin{theorem}[Correctness of the Interceptor pattern]
The Interceptor pattern $\tau_I(\partial_H(C\between D\between I))$ can exhibit desired external behaviors.
\end{theorem}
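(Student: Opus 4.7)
The plan is to follow the same template used for the Broker, MVC, and Client-Dispatcher-Server patterns earlier in the paper, treating $\tau_I(\partial_H(C\between D\between I))$ as the solution of a guarded linear recursive specification and then applying RSP. First I would use the state transition equations for $C$, $D$, and $I$ together with the expansion laws $P1$--$P10$ and $C11$--$C18$ of APTC to compute $C\between D\between I$; here the only pairs of subterms whose merge escapes $\partial_H$ are those involving a matching send/receive across the channels $CD$, $DI$, $I_{IC}$, and $O_{IC}$, since all other send/receive pairs on these channels sit in $H$ and therefore collapse to $\delta$ by axioms $D1$--$D2$ and $P9$--$P10$. This leaves exactly the communication terms $c_{CD}(d_{I_D})$, $c_{DI}(d_{O_D})$, $c_{I_{IC}}(d_{I_C})$, and $c_{O_{IC}}(d_{O_C})$ inside $\partial_H$.

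Next I would introduce intermediate process variables $X_1,\dots,X_k$ corresponding to the reachable global states of the composed system (one $X_j$ per combination of local states at $C$, $D$, $I$) and derive a guarded linear recursive specification $E$ such that $\partial_H(C\between D\between I)=\langle X_1|E\rangle$. The chain of states runs through: read $r_I(d_I)$, do $CF_1$, communicate $c_{CD}(d_{I_D})$, do $DF$, communicate $c_{DI}(d_{O_D})$, do $IF_1$, communicate $c_{I_{IC}}(d_{I_C})$, do $CF_2$, communicate $c_{O_{IC}}(d_{O_C})$, do $IF_2$, emit $s_O(d_O)$, and return to $X_1$. Applying $\tau_I$ and axioms $TI1$--$TI6$ renames every internal communication $c_*(\cdot)$ and every processing function $CF_1,CF_2,DF,IF_1,IF_2\in I$ to $\tau$, and then $B1$--$B2$ absorb the resulting silent steps into the flanking visible actions.

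After abstraction the equation for the top-level variable collapses to
\begin{eqnarray}
\tau_I(\langle X_1|E\rangle)=\sum_{d_I,d_O\in\Delta}r_I(d_I)\cdot s_O(d_O)\cdot \tau_I(\langle X_1|E\rangle).\nonumber
\end{eqnarray}
By RSP this fixes $\tau_I(\partial_H(C\between D\between I))$ uniquely as the desired external behavior, and soundness of APTC$_\tau$ with guarded linear recursion (Theorems \ref{SAPTCABS} and \ref{SCFAR}) lifts the equality to $\approx_{rbs}$, $\approx_{rbp}$, and $\approx_{rbhp}$.

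The main obstacle, as in the Broker case, is purely bookkeeping: enumerating the intermediate states of $E$ so that every application of $\partial_H$ correctly turns the mismatched send/receive pairs into $\delta$ and absorbs them via $A6$--$A7$, while the matched pairs yield the $c_*$ actions that will later be hidden by $\tau_I$. No conceptually new device is needed beyond what was used for the three-module Client-Dispatcher-Server verification, and for the write-up I would cite section \ref{app} for the detailed algebraic manipulation and omit it here.
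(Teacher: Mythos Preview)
Your proposal is correct and follows essentially the same approach as the paper: expand $C\between D\between I$ via the APTC laws, apply $\partial_H$ to kill mismatched send/receive pairs and retain the communications $c_{CD},c_{DI},c_{I_{IC}},c_{O_{IC}}$, set up a guarded linear recursive specification, then apply $\tau_I$ and RSP to obtain $\tau_I(\partial_H(C\between D\between I))=\sum_{d_I,d_O\in\Delta}(r_I(d_I)\cdot s_O(d_O))\cdot\tau_I(\partial_H(C\between D\between I))$. The paper's own proof is in fact even terser than yours---it simply states this final equation and defers the algebraic manipulation to section~\ref{app}---so your write-up already contains more detail than the original.
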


\begin{proof}
Based on the above state transitions of the above modules, by use of the algebraic laws of APTC, we can prove that

$\tau_I(\partial_H(C\between D\between I))=\sum_{d_{I},d_{O}\in\Delta}(r_{I}(d_{I})\cdot s_{O}(d_{O}))\cdot
\tau_I(\partial_H(C\between D\between I))$,

that is, the Interceptor pattern $\tau_I(\partial_H(C\between D\between I))$ can exhibit desired external behaviors.

For the details of proof, please refer to section \ref{app}, and we omit it.
\end{proof}

\subsubsection{Verification of the Extension Interface Pattern}

The Extension Interface pattern allows to export multiple interface of a component to extend or modify the functionalities of the component.
There are three classes of modules in the Extension Interface pattern: the Component Factory
 and $n$ Extension Interfaces and the Component. The Component Factory interacts with Extension Interface $i$ through the channels $I_{FE_i}$ and $O_{FE_i}$, and it exchanges
information with outside through the input channel $I$ and the output channel $O$, and with the Component through the channels $I_{FC}$ and $O_{FC}$. As illustrates in Figure \ref{EI6}.

\begin{figure}
    \centering
    \includegraphics{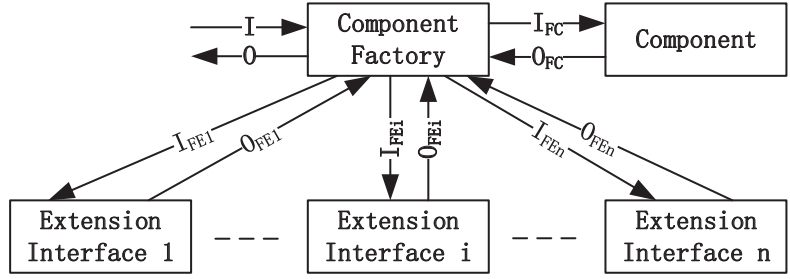}
    \caption{Extension Interface pattern}
    \label{EI6}
\end{figure}

The typical process of the Extension Interface pattern is shown in Figure \ref{EI6P} and as follows.

\begin{enumerate}
  \item The Component Factory receives the input from the user through the channel $I$ (the corresponding reading action is denoted $r_{I}(d_{I})$), then
  processes the input through a processing function $FF_1$ and generates the input $d_{I_{C}}$, and sends $d_{I_C}$ to the Component through the channel
  $I_{FC}$ (the corresponding sending action is denoted $s_{I_{FC}}(d_{I_C})$);
  \item The Component receives the input $d_{I_C}$ through the channel $I_{FC}$ (the corresponding reading action is denoted $r_{I_{FC}}(d_{I_C})$), then
  processes the information and generates the results $d_{O_{C}}$ through a processing function $CF$, and sends the results $d_{O_C}$ to the Component Factory through the channels
  $O_{FC}$ (the corresponding sending action is denoted $s_{O_{FC}}(d_{O_C})$);
  \item The Component Factory receives the results $d_{O_C}$ from the Component through the channel $O_{FC}$ (the corresponding reading action is denoted $r_{O_{FC}}(d_{O_C})$), then
  processes the results through a processing function $FF_2$ and generates the input $d_{I_{E_i}}$, and sends $d_{I_{E_i}}$ to the Extension Interface $i$ (for $1\leq i\leq n$) through the
  channel $I_{FE_i}$ (the corresponding sending action is denoted $s_{I_{FE_i}}(d_{I_{E_i}})$);
  \item The Extension Interface $i$ receives the input $d_{I_{E_i}}$ from the Component Factory through the channel $I_{FE_i}$ (the corresponding reading action is denoted $r_{I_{FE_i}}(d_{I_{E_i}})$),
  then processes the input through a processing function $EF_{i}$, and sends the results $d_{O_{E_{i}}}$ to the Component Factory through the channel $O_{FE_i}$
  (the corresponding sending action is denoted $s_{O_{FE_i}}(d_{O_{E_{i}}})$);
  \item The Component Factory receives the results from the Extension Interface $i$ through the channel $O_{FE_i}$ (the corresponding reading action is denoted $r_{O_{FE_i}}(d_{O_{E_{i}}})$),
  the processes the results and generates the results $d_{O}$ through a processing function $FF_3$, and sends $d_{O}$ to the outside through the
  channel $O$ (the corresponding sending action is denoted $s_{O}(d_{O})$).
\end{enumerate}

\begin{figure}
    \centering
    \includegraphics{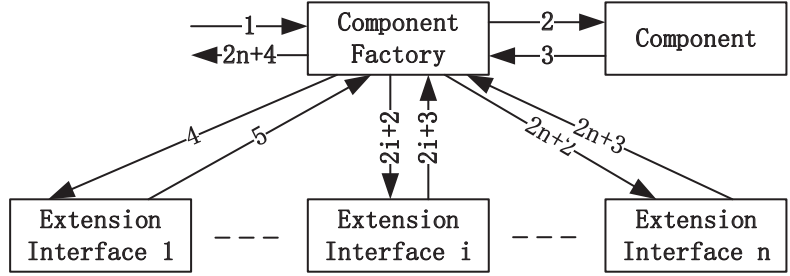}
    \caption{Typical process of Extension Interface pattern}
    \label{EI6P}
\end{figure}

In the following, we verify the Extension Interface pattern. We assume all data elements $d_{I}$, $d_{I_C}$, $d_{I_{E_i}}$, $d_{O_{E_{i}}}$, $d_{O_C}$, $d_{O}$ (for $1\leq i\leq n$) are from a finite set
$\Delta$.

The state transitions of the Component Factory module
described by APTC are as follows.

$F=\sum_{d_{I}\in\Delta}(r_{I}(d_{I})\cdot F_{2})$

$F_{2}=FF_1\cdot F_{3}$

$F_{3}=\sum_{d_{I_C}\in\Delta}(s_{I_{FC}}(d_{I_C})\cdot F_4)$

$F_{4}=\sum_{d_{O_{C}}\in\Delta}(r_{O_{FC}}(d_{O_{C}})\cdot F_{5})$

$F_5=FF_2\cdot F_6$

$F_{6}=\sum_{d_{I_{E_1}},\cdot,d_{I_{E_n}}\in\Delta}(s_{I_{FE_1}}(d_{I_{E_1}})\between\cdots\between s_{I_{FE_n}}(d_{I_{E_n}})\cdot F_{7})$

$F_{7}=\sum_{d_{O_{E_{1}}},\cdots,d_{O_{E_{n}}}\in\Delta}(r_{O_{FE_1}}(d_{O_{E_{1}}})\between\cdots\between r_{O_{FE_n}}(d_{O_{E_{n}}})\cdot F_{8})$

$F_8=FF_3\cdot F_9$

$F_{9}=\sum_{d_{O}\in\Delta}(s_{O}(d_{O})\cdot F)$

The state transitions of the Extension Interface $i$ described by APTC are as follows.

$E_i=\sum_{d_{I_{E_i}}\in\Delta}(r_{I_{FE_i}}(d_{I_{E_i}})\cdot E_{i_2})$

$E_{i_2}=EF_{i}\cdot E_{i_3}$

$E_{i_3}=\sum_{d_{O_{E_{i}}}\in\Delta}(s_{O_{FE_{i}}}(d_{O_{E_{i}}})\cdot E_{i})$

The state transitions of the Component described by APTC are as follows.

$C=\sum_{d_{I_{C}}\in\Delta}(r_{I_{FC}}(d_{I_{C}})\cdot C_{2})$

$C_{2}=CF\cdot C_{3}$

$C_{3}=\sum_{d_{O_{C}}\in\Delta}(s_{O_{FC}}(d_{O_{C}})\cdot C)$

The sending action and the reading action of the same data through the same channel can communicate with each other, otherwise, will cause a deadlock $\delta$. We define the following
communication functions of the Component Factory for $1\leq i\leq n$.

$$\gamma(r_{I_{FE_i}}(d_{I_{E_i}}),s_{I_{FE_i}}(d_{I_{E_i}}))\triangleq c_{I_{FE_i}}(d_{I_{E_i}})$$

$$\gamma(r_{O_{FE_{i}}}(d_{O_{E_{i}}}),s_{O_{FE_{i}}}(d_{O_{E_{i}}}))\triangleq c_{O_{FE_{i}}}(d_{O_{E_{i}}})$$

$$\gamma(r_{I_{FC}}(d_{I_{C}}),s_{I_{FC}}(d_{I_{C}}))\triangleq c_{I_{FC}}(d_{I_{C}})$$

$$\gamma(r_{O_{FC}}(d_{O_{C}}),s_{O_{FC}}(d_{O_{C}}))\triangleq c_{O_{FC}}(d_{O_{C}})$$

Let all modules be in parallel, then the Extension Interface pattern $F\quad C\quad E_1\cdots E_i\cdots E_n$ can be presented by the following process term.

$\tau_I(\partial_H(\Theta(F\between C\between E_1\between\cdots\between E_i\between\cdots\between E_n)))=\tau_I(\partial_H(F\between C\between E_1\between\cdots\between E_i\between\cdots\between E_n))$

where $H=\{r_{I_{FE_i}}(d_{I_{E_i}}),s_{I_{FE_i}}(d_{I_{E_i}}),r_{O_{FE_{i}}}(d_{O_{E_{i}}}),s_{O_{FE_{i}}}(d_{O_{E_{i}}}),r_{I_{FC}}(d_{I_{C}}),s_{I_{FC}}(d_{I_{C}}),\\
r_{O_{FC}}(d_{O_{C}}),s_{O_{FC}}(d_{O_{C}})
|d_{I}, d_{I_C}, d_{I_{E_i}}, d_{O_{E_{i}}}, d_{O_C}, d_{O}\in\Delta\}$ for $1\leq i\leq n$,

$I=\{c_{I_{FE_i}}(d_{I_{E_i}}),c_{O_{FE_{i}}}(d_{O_{E_{i}}}),c_{I_{FC}}(d_{I_{C}}),c_{O_{FC}}(d_{O_{C}}),FF_1,FF_2,FF_3,EF_{i},CF\\
|d_{I}, d_{I_C}, d_{I_{E_i}}, d_{O_{E_{i}}}, d_{O_C}, d_{O}\in\Delta\}$ for $1\leq i\leq n$.

Then we get the following conclusion on the Extension Interface pattern.

\begin{theorem}[Correctness of the Extension Interface pattern]
The Extension Interface pattern $\tau_I(\partial_H(F\between C\between E_1\between\cdots\between E_i\between\cdots\between E_n))$ can exhibit desired external behaviors.
\end{theorem}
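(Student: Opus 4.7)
The plan is to follow the template used for the preceding patterns in this chapter (Wrapper Facade, Component Configurator, Interceptor), since the Extension Interface pattern has the same structural form: a central coordinator (here the Component Factory $F$) that is sequentially composed with interactions to a single service module ($C$) and then with a fan-out/fan-in parallel interaction with $n$ peripheral modules ($E_1,\ldots,E_n$). First I would take the parallel composition $F\between C\between E_1\between\cdots\between E_n$ and expand it using axioms $P1$--$P10$ together with the communication axioms $C11$--$C18$; the encapsulation $\partial_H$ then kills every summand in which a send/receive pair from $H$ fails to synchronize, leaving only the intended communications $c_{I_{FC}}$, $c_{O_{FC}}$, $c_{I_{FE_i}}$, $c_{O_{FE_i}}$ together with the external actions $r_I$, $s_O$ and the internal processing actions $FF_1,FF_2,FF_3,CF,EF_i$.

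Next I would fold the resulting expression into a guarded linear recursive specification with state variables $X_1,\ldots,X_k$ tracking the nine phases of $F$ interleaved with the matching phases of $C$ and the $E_i$. Concretely, $X_1$ accepts $r_I(d_I)$, a chain of intermediate variables carries the deterministic internal traffic (the singleton communication with $C$, followed by the parallel communications with all $E_i$), and the last state emits $s_O(d_O)$ and loops back to $X_1$. Applying $\tau_I$ renames every communication action and every processing function in $I$ to $\tau$, and then $B1$--$B3$ together with the soundness of rooted branching truly concurrent bisimulation (Theorem~\ref{SAPTCABS}) collapse these $\tau$-steps, yielding the target equation
\[
\tau_I(\partial_H(F\between C\between E_1\between\cdots\between E_n))=\sum_{d_I,d_O\in\Delta}\bigl(r_I(d_I)\cdot s_O(d_O)\bigr)\cdot\tau_I(\partial_H(F\between C\between E_1\between\cdots\between E_n)).
\]
An appeal to $RSP$ then identifies both sides as $\langle X_1\mid E\rangle$ for the specification above, establishing the desired external behavior.

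The hard part will be the bookkeeping in the fan-out/fan-in step: when $F$ reaches $F_6$ and $F_7$, the expansion of $s_{I_{FE_1}}(d_{I_{E_1}})\between\cdots\between s_{I_{FE_n}}(d_{I_{E_n}})$ against the $n$ reading prefixes $r_{I_{FE_i}}(d_{I_{E_i}})$ of the $E_i$ produces, a priori, a mixture of interleaved singletons and communication merges, and one has to argue that after $\partial_H$ the only surviving summand is the fully-synchronized one $c_{I_{FE_1}}(d_{I_{E_1}})\parallel\cdots\parallel c_{I_{FE_n}}(d_{I_{E_n}})$, since every unmatched $s_{I_{FE_i}}$ or $r_{I_{FE_i}}$ lies in $H$ and is blocked. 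The same argument applies symmetrically at the $O_{FE_i}$ stage. Once this pointwise blocking is established for both fan-out and fan-in, the remaining calculation reduces to the routine expansions already carried out in Section~\ref{app}, which is why I would defer to that reference rather than redo them here.
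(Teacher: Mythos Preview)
Your proposal is correct and follows essentially the same approach as the paper: both derive the recursive equation $\tau_I(\partial_H(F\between C\between E_1\between\cdots\between E_n))=\sum_{d_I,d_O\in\Delta}(r_I(d_I)\cdot s_O(d_O))\cdot\tau_I(\partial_H(F\between C\between E_1\between\cdots\between E_n))$ via the APTC algebraic laws and defer the routine expansion details to Section~\ref{app}. Your write-up is simply more explicit about the intermediate steps (the specific axioms invoked, the guarded linear recursive specification, and the fan-out/fan-in bookkeeping) than the paper's terse sketch, but the method and target are identical.
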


\begin{proof}
Based on the above state transitions of the above modules, by use of the algebraic laws of APTC, we can prove that

$\tau_I(\partial_H(F\between C\between E_1\between\cdots\between E_i\between\cdots\between E_n))=\sum_{d_{I},d_O\in\Delta}(r_{I}(d_{I})\cdot s_{O}(d_{O}))\cdot
\tau_I(\partial_H(F\between C\between E_1\between\cdots\between E_i\between\cdots\between E_n))$,

that is, the Extension Interface pattern $\tau_I(\partial_H(F\between C\between E_1\between\cdots\between E_i\between\cdots\between E_n))$ can exhibit desired external behaviors.

For the details of proof, please refer to section \ref{app}, and we omit it.
\end{proof}

\subsection{Event Handling Patterns}\label{EH6}

In this subsection, we verify patterns related to event handling, including the Reactor pattern, the Proactor pattern, the Asynchronous Completion Token pattern, and the Acceptor-Connector
pattern.

\subsubsection{Verification of the Reactor Pattern}

The Reactor pattern allows to demultiplex and dispatch the request event to the event-driven applications.
There are three classes of modules in the Reactor pattern: the Handle Set
 and $n$ Event Handlers and the Reactor. The Handle Set interacts with Event Handler $i$ through the channel $EH_i$, and it exchanges
information with outside through the input channel $I$ and the output channel $O$, and with the Reactor through the channel $HR$. The Reactor interacts with the Event Handler $i$
through the channel $RE_i$. As illustrates in Figure \ref{Re6}.

\begin{figure}
    \centering
    \includegraphics{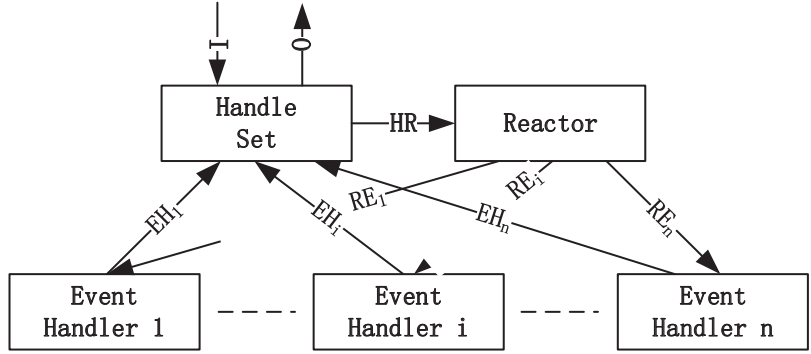}
    \caption{Reactor pattern}
    \label{Re6}
\end{figure}

The typical process of the Reactor pattern is shown in Figure \ref{Re6P} and as follows.

\begin{enumerate}
  \item The Handle Set receives the input from the user through the channel $I$ (the corresponding reading action is denoted $r_{I}(d_{I})$), then
  processes the input through a processing function $HF_1$ and generates the input $d_{I_{R}}$, and sends $d_{I_R}$ to the Reactor through the channel
  $HR$ (the corresponding sending action is denoted $s_{HR}(d_{I_R})$);
  \item The Reactor receives the input $d_{I_R}$ through the channel $HR$ (the corresponding reading action is denoted $r_{HR}(d_{I_R})$), then
  processes the information and generates the results $d_{I_{E_i}}$ through a processing function $RF$, and sends the results $d_{I_{E_i}}$ to the Event Handler $i$ through the channels
  $RE_i$ (the corresponding sending action is denoted $s_{RE_i}(d_{I_{E_i}})$);
  \item The Event Handler $i$ receives the input $d_{I_{E_i}}$ from the Reactor through the channel $RE_i$ (the corresponding reading action is denoted $r_{RE_i}(d_{I_{E_i}})$),
  then processes the input through a processing function $EF_{i}$, and sends the results $d_{O_{E_{i}}}$ to the Handle Set through the channel $EH_i$
  (the corresponding sending action is denoted $s_{EH_i}(d_{O_{E_{i}}})$);
  \item The Handle Set receives the results from the Event Handler $i$ through the channel $EH_i$ (the corresponding reading action is denoted $r_{EH_i}(d_{O_{E_{i}}})$),
  the processes the results and generates the results $d_{O}$ through a processing function $HF_2$, and sends $d_{O}$ to the outside through the
  channel $O$ (the corresponding sending action is denoted $s_{O}(d_{O})$).
\end{enumerate}

\begin{figure}
    \centering
    \includegraphics{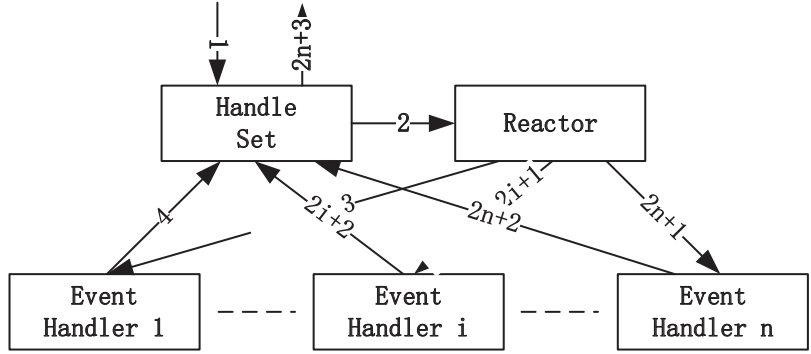}
    \caption{Typical process of Reactor pattern}
    \label{Re6P}
\end{figure}

In the following, we verify the Reactor pattern. We assume all data elements $d_{I}$, $d_{I_R}$, $d_{I_{E_i}}$, $d_{O_{E_{i}}}$, $d_{O}$ (for $1\leq i\leq n$) are from a finite set
$\Delta$.

The state transitions of the Handle Set module
described by APTC are as follows.

$H=\sum_{d_{I}\in\Delta}(r_{I}(d_{I})\cdot F_{2})$

$H_{2}=HF_1\cdot H_{3}$

$H_{3}=\sum_{d_{I_R}\in\Delta}(s_{HR}(d_{I_R})\cdot H_4)$

$H_{4}=\sum_{d_{O_{E_i}}\in\Delta}(r_{EH_i}(d_{O_{E_i}})\cdot H_{5})$

$H_5=HF_2\cdot H_6$

$H_{6}=\sum_{d_{O}\in\Delta}(s_{O}(d_{O})\cdot H)$

The state transitions of the Event Handler $i$ described by APTC are as follows.

$E_i=\sum_{d_{I_{E_i}}\in\Delta}(r_{RE_i}(d_{I_{E_i}})\cdot E_{i_2})$

$E_{i_2}=EF_{i}\cdot E_{i_3}$

$E_{i_3}=\sum_{d_{O_{E_{i}}}\in\Delta}(s_{EH_i}(d_{O_{E_{i}}})\cdot E_{i})$

The state transitions of the Reactor described by APTC are as follows.

$R=\sum_{d_{I_{R}}\in\Delta}(r_{HR}(d_{I_{R}})\cdot R_{2})$

$R_{2}=RF\cdot R_{3}$

$R_{3}=\sum_{d_{I_{E_1}},\cdots,d_{I_{E_n}}\in\Delta}(s_{RE_1}(d_{I_{E_1}})\between\cdots\between s_{RE_n}(d_{I_{E_n}})\cdot R)$

The sending action and the reading action of the same data through the same channel can communicate with each other, otherwise, will cause a deadlock $\delta$. We define the following
communication functions of the Handle Set for $1\leq i\leq n$.

$$\gamma(r_{HR}(d_{I_R}),s_{HR}(d_{I_R}))\triangleq c_{HR}(d_{I_R})$$

$$\gamma(r_{EH_i}(d_{O_{E_i}}),s_{EH_i}(d_{O_{E_i}}))\triangleq c_{EH_i}(d_{O_{E_i}})$$

There is one communication function between the Reactor and the Event Handler $i$.

$$\gamma(r_{RE_i}(d_{I_{E_i}}),s_{RE_i}(d_{I_{E_i}}))\triangleq c_{RE_i}(d_{I_{E_i}})$$

Let all modules be in parallel, then the Reactor pattern $H\quad R\quad E_1\cdots E_i\cdots E_n$ can be presented by the following process term.

$\tau_I(\partial_H(\Theta(H\between R\between E_1\between\cdots\between E_i\between\cdots\between E_n)))=\tau_I(\partial_H(H\between R\between E_1\between\cdots\between E_i\between\cdots\between E_n))$

where $H=\{r_{HR}(d_{I_R}),s_{HR}(d_{I_R}),r_{EH_i}(d_{O_{E_i}}),s_{EH_i}(d_{O_{E_i}}),r_{RE_i}(d_{I_{E_i}}),s_{RE_i}(d_{I_{E_i}})\\
|d_{I}, d_{I_R}, d_{I_{E_i}}, d_{O_{E_{i}}}, d_{O}\in\Delta\}$ for $1\leq i\leq n$,

$I=\{c_{HR}(d_{I_R}),c_{EH_i}(d_{O_{E_i}}),c_{RE_i}(d_{I_{E_i}}),HF_1,HF_2,EF_{i},RF\\
|d_{I}, d_{I_R}, d_{I_{E_i}}, d_{O_{E_{i}}}, d_{O}\in\Delta\}$ for $1\leq i\leq n$.

Then we get the following conclusion on the Reactor pattern.

\begin{theorem}[Correctness of the Reactor pattern]
The Reactor pattern $\tau_I(\partial_H(H\between R\between E_1\between\cdots\between E_i\between\cdots\between E_n))$ can exhibit desired external behaviors.
\end{theorem}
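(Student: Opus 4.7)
The plan is to follow the same algebraic expansion strategy used throughout this chapter for all the preceding patterns. I would start from the state transitions of the three module classes --- the Handle Set $H$, the Reactor $R$, and each Event Handler $E_i$ --- and expand the parallel composition $H\between R\between E_1\between\cdots\between E_n$ using the axioms of $APTC$. Specifically, axiom $P1$ splits $\between$ into $\parallel$ and $\mid$, and then the parallel-expansion axioms $P4$--$P6$ together with the communication axioms $C11$--$C14$ reduce each stage of the protocol to its interleaved normal form in terms of $\cdot$, $+$ and $\parallel$.

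Next I would push the encapsulation operator $\partial_H$ inward using $D1$--$D6$. Since $H$ contains exactly the matched send/receive pairs on the channels $HR$, $RE_i$ and $EH_i$, every unmatched sending or reading attempt collapses to $\delta$ by $D2$ and then vanishes by $A6$ and $A7$, leaving only the communication actions $c_{HR}(d_{I_R})$, $c_{RE_i}(d_{I_{E_i}})$ and $c_{EH_i}(d_{O_{E_i}})$. Applying $\tau_I$ via $TI1$--$TI6$ then abstracts these communications together with the internal processing steps $HF_1$, $HF_2$, $RF$, $EF_i$ into $\tau$, and $B1$--$B3$ absorb the resulting silent steps. The remaining observable skeleton is a linear recursion whose only visible actions are $r_I(d_I)$ followed by $s_O(d_O)$, so by $RSP$ it equals the stated specification $\sum_{d_I,d_O\in\Delta}(r_I(d_I)\cdot s_O(d_O))\cdot\tau_I(\partial_H(H\between R\between E_1\between\cdots\between E_n))$.

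The main obstacle will be reconciling the fan-out at the Reactor with the fan-in at the Handle Set. The Reactor specification $R_3$ sends $d_{I_{E_i}}$ in parallel to all $n$ handlers via $\between$, but the Handle Set's state $H_4$ as written only reads from a single channel $EH_i$ before closing the cycle. To obtain the clean linear form I would interpret the recursion of $H$ as cycling through one handler per round (so the $n$ handlers are serviced across $n$ recursive invocations), or equivalently sharpen $H_4$ into a parallel receive $r_{EH_1}\between\cdots\between r_{EH_n}$ symmetric to the Reactor's dispatch; once this alignment is fixed, the $P7$--$P10$ and $C17$--$C18$ axioms eliminate the stray $\delta$ summands that would otherwise arise, and the rest of the derivation is the routine expansion carried out in detail in section \ref{app}.
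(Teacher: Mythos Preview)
Your proposal is correct and follows essentially the same approach as the paper: expand the parallel composition with the APTC axioms, apply $\partial_H$ and $\tau_I$, and arrive at the recursive equation $\tau_I(\partial_H(H\between R\between E_1\between\cdots\between E_n))=\sum_{d_{I},d_O\in\Delta}(r_{I}(d_{I})\cdot s_{O}(d_{O}))\cdot\tau_I(\partial_H(H\between R\between E_1\between\cdots\between E_n))$. In fact you give considerably more detail than the paper, whose proof merely asserts that the algebraic laws of APTC yield this equation and refers back to section~\ref{app} for the template calculation; your observation about the asymmetry between the Reactor's parallel dispatch in $R_3$ and the single receive in $H_4$ is a genuine wrinkle that the paper simply does not address.
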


\begin{proof}
Based on the above state transitions of the above modules, by use of the algebraic laws of APTC, we can prove that

$\tau_I(\partial_H(H\between R\between E_1\between\cdots\between E_i\between\cdots\between E_n))=\sum_{d_{I},d_O\in\Delta}(r_{I}(d_{I})\cdot s_{O}(d_{O}))\cdot
\tau_I(\partial_H(H\between R\between E_1\between\cdots\between E_i\between\cdots\between E_n))$,

that is, the Reactor pattern $\tau_I(\partial_H(H\between R\between E_1\between\cdots\between E_i\between\cdots\between E_n))$ can exhibit desired external behaviors.

For the details of proof, please refer to section \ref{app}, and we omit it.
\end{proof}

\subsubsection{Verification of the Proactor Pattern}

The Proactor pattern also decouples the delivery the events between the event-driven applications and clients, but the events are triggered by the completion of asynchronous operations,
which has four classes of components: the Asynchronous Operation Processor, the Asynchronous Operation, the Proactor and $n$ Completion Handlers.
The Asynchronous Operation Processor interacts with the outside through the channel $I$; with the Asynchronous Operation through the channels $I_{PO}$ and $O_{PO}$; with the Proactor with
the channel $PP$. The Proactor interacts with the Completion Handler $i$ with the channel $PC_i$. The Completion Handler $i$ interacts with the outside through the channel $O_i$.
As illustrates in Figure \ref{Pro6}.

\begin{figure}
    \centering
    \includegraphics{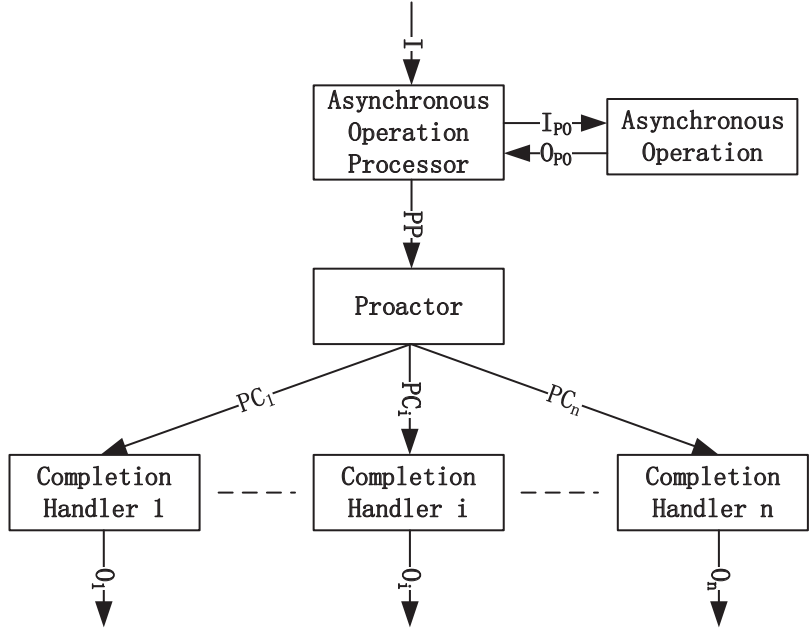}
    \caption{Proactor pattern}
    \label{Pro6}
\end{figure}

The typical process of the Proactor pattern is shown in Figure \ref{Pro6P} and following.

\begin{enumerate}
  \item The Asynchronous Operation Processor receives the input $d_I$ from the user through the channel $I$ (the corresponding reading action is denoted $r_I(D_I)$), processes the input through
  a processing function $AOPF_1$, and generates the input to the Asynchronous Operation $d_{I_{AO}}$ and it sends $d_{I_{AO}}$ to the Asynchronous Operation through the channel $I_{PO}$
  (the corresponding sending action is denoted $s_{I_{PO}}(d_{I_{AO}})$);
  \item The Asynchronous Operation receives the input from the Asynchronous Operation Processor through the channel $I_{PO}$ (the corresponding reading action is denoted $r_{I_{PO}}(d_{I_{AO}})$), processes the input through
  a processing function $AOF$, generates the computational results to the Asynchronous Operation Processor which is denoted $d_{O_{AO}}$; then sends the results to the Asynchronous Operation Processor through the
  channel $O_{PO}$ (the corresponding sending action is denoted $s_{O_{PO}}(d_{O_{AO}})$);
  \item The Asynchronous Operation Processor receives the results from the Asynchronous Operation through the channel $O_{PO}$ (the corresponding reading action is denoted $r_{O_{PO}}(d_{O_{AO}})$),
  then processes the results and generates the events $d_{I_{P}}$ through a processing function $AOPF_2$, and sends it to the Proactor through the channel $PP$ (the corresponding
  sending action is denoted $s_{PP}(d_{I_{P}})$);
  \item The Proactor receives the events $d_{I_P}$ from the Asynchronous Operation Processor through the channel $PP$ (the corresponding reading action is denoted $r_{PP}(d_{I_P})$),
  then processes the events through a processing function $PF$, and sends the processed events to the Completion Handler $i$ $d_{I_{C_i}}$ for $1\leq i\leq n$ through the channel $PC_i$
  (the corresponding sending action is denoted $s_{PC_i}(d_{I_{C_i}})$);
  \item The Completion Handler $i$ (for $1\leq i\leq n$) receives the events from the Proactor through the channel $PC_i$ (the corresponding reading action is denoted $r_{PC_i}(d_{I_{C_i}})$),
  processes the events through a processing function $CF_{i}$, generates the output
  $d_{O_i}$, then sending the output through the channel $O_i$ (the corresponding sending action is denoted $s_{O_i}(d_{O_i})$).
\end{enumerate}

\begin{figure}
    \centering
    \includegraphics{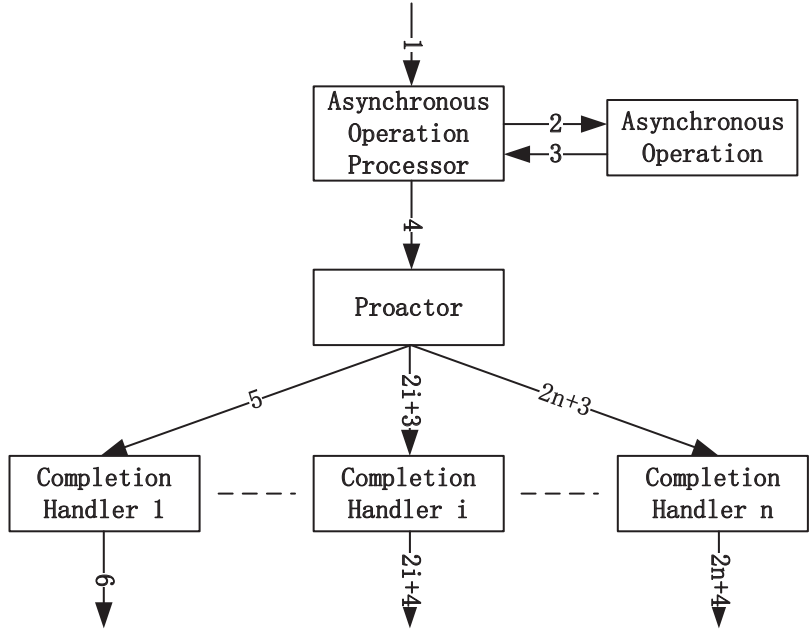}
    \caption{Typical process of Proactor pattern}
    \label{Pro6P}
\end{figure}

In the following, we verify the Proactor pattern. We assume all data elements $d_{I}$, $d_{I_{AO}}$, $d_{I_P}$, $d_{I_{C_i}}$, $d_{O_{AO}}$, $d_{O_{i}}$ (for $1\leq i\leq n$) are from a finite set
$\Delta$.

The state transitions of the Asynchronous Operation Processor module
described by APTC are as follows.

$AOP=\sum_{d_{I}\in\Delta}(r_{I}(d_{I})\cdot AOP_{2})$

$AOP_{2}=AOPF_1\cdot AOP_{3}$

$AOP_{3}=\sum_{d_{I_{AO}}\in\Delta}(s_{I_{PO}}(d_{I_{AO}})\cdot AOP_{4})$

$AOP_4=\sum_{d_{O_{AO}}\in\Delta}(r_{O_{PO}}(d_{O_{AO}})\cdot AOP_{5})$

$AOP_{5}=AOPF_2\cdot AOP_{6}$

$AOP_{6}=\sum_{d_{I_{P}}\in\Delta}(s_{PP}(d_{I_{P}})\cdot AOP)$

The state transitions of the Asynchronous Operation described by APTC are as follows.

$AO=\sum_{d_{I_{AO}}\in\Delta}(r_{I_{PO}}(d_{I_{AO}})\cdot AO_{2})$

$AO_{2}=AOF\cdot AO_{3}$

$AO_{3}=\sum_{d_{O_{AO}}\in\Delta}(s_{O_{PO}}(d_{O_{AO}})\cdot AO)$

The state transitions of the Proactor described by APTC are as follows.

$P=\sum_{d_{I_{P}}\in\Delta}(r_{PP}(d_{I_{P}})\cdot P_{2})$

$P_{2}=PF\cdot P_{3}$

$P_{3}=\sum_{d_{I_{C_1}},\cdots,d_{I_{c_n}}\in\Delta}(s_{PC_1}(d_{I_{C_1}})\between\cdots\between s_{PC_n}(d_{I_{C_n}})\cdot P)$

The state transitions of the Completion Handler $i$ described by APTC are as follows.

$C_i=\sum_{d_{I_{C_i}}\in\Delta}(r_{PC_i}(d_{I_{C_i}})\cdot C_{i_2})$

$C_{i_2}=CF_{i}\cdot C_{i_3}$

$C_{i_3}=\sum_{d_{O_{i}}\in\Delta}(s_{O_{i}}(d_{O_i})\cdot C_i)$

The sending action must occur before the reading action of the same data through the same channel, then they can asynchronously communicate with each other, otherwise, will cause a deadlock $\delta$. We define the following
communication constraint of the Completion Handler $i$ for $1\leq i\leq n$.

$$s_{PC_i}(d_{I_{C_i}})\leq r_{PC_i}(d_{I_{C_i}})$$

Here, $\leq$ is a causality relation.

There are two communication constraints between the Asynchronous Operation Processor and the Asynchronous Operation as follows.

$$s_{I_{PO}}(d_{I_{AO}})\leq r_{PO}(d_{I_{AO}})$$

$$s_{O_{PO}}(d_{O_{AO}})\leq r_{O_{PO}}(d_{O_{AO}})$$

There is one communication constraint between the Asynchronous Operation Processor and the Proactor as follows.

$$s_{PP}(d_{I_{P}})\leq r_{PP}(d_{I_{P}})$$

Let all modules be in parallel, then the Proactor pattern $AOP\quad AO\quad P\quad C_1\cdots C_i\cdots C_n$ can be presented by the following process term.

$\tau_I(\partial_H(\Theta(AOP\between AO\between P\between C_1\between\cdots\between C_i\between\cdots\between C_n)))=\tau_I(\partial_H(AOP\between AO\between P\between C_1\between\cdots\between C_i\between\cdots\between C_n))$

where $H=\{s_{PC_i}(d_{I_{C_i}}), r_{PC_i}(d_{I_{C_i}}),s_{I_{PO}}(d_{I_{AO}}), r_{PO}(d_{I_{AO}}),s_{O_{PO}}(d_{O_{AO}}), r_{O_{PO}}(d_{O_{AO}}),
s_{PP}(d_{I_{P}}), r_{PP}(d_{I_{P}})\\
|s_{PC_i}(d_{I_{C_i}})\nleq r_{PC_i}(d_{I_{C_i}}),
s_{I_{PO}}(d_{I_{AO}})\nleq r_{PO}(d_{I_{AO}}),
s_{O_{PO}}(d_{O_{AO}})\nleq r_{O_{PO}}(d_{O_{AO}}),
s_{PP}(d_{I_{P}})\nleq r_{PP}(d_{I_{P}}),\\
d_{I}, d_{I_{AO}}, d_{I_P}, d_{I_{C_i}}, d_{O_{AO}}, d_{O_{i}}\in\Delta\}$ for $1\leq i\leq n$,

$I=\{s_{PC_i}(d_{I_{C_i}}), r_{PC_i}(d_{I_{C_i}}),s_{I_{PO}}(d_{I_{AO}}), r_{PO}(d_{I_{AO}}),s_{O_{PO}}(d_{O_{AO}}), r_{O_{PO}}(d_{O_{AO}}),\\
s_{PP}(d_{I_{P}}), r_{PP}(d_{I_{P}}), AOPF_1,AOPF_2,AOF,CF_i\\
|s_{PC_i}(d_{I_{C_i}})\leq r_{PC_i}(d_{I_{C_i}}),
s_{I_{PO}}(d_{I_{AO}})\leq r_{PO}(d_{I_{AO}}),
s_{O_{PO}}(d_{O_{AO}})\leq r_{O_{PO}}(d_{O_{AO}}),
s_{PP}(d_{I_{P}})\leq r_{PP}(d_{I_{P}}),\\
d_{I}, d_{I_{AO}}, d_{I_P}, d_{I_{C_i}}, d_{O_{AO}}, d_{O_{i}}\in\Delta\}$ for $1\leq i\leq n$.

Then we get the following conclusion on the Proactor pattern.

\begin{theorem}[Correctness of the Proactor pattern]
The Proactor pattern $\tau_I(\partial_H(AOP\between AO\between P\between C_1\between\cdots\between C_i\between\cdots\between C_n))$ can exhibit desired external behaviors.
\end{theorem}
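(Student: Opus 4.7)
The plan is to mirror the structure of the earlier correctness proofs in this chapter (Reactor, Extension Interface, etc.), but with careful attention to the fact that the Proactor pattern uses \emph{asynchronous} communication rather than synchronous communication. So instead of defining merge functions $\gamma(s,r)$ and collapsing matched send/receive pairs into $c$-actions, I will rely on the causality constraints $s \leq r$ listed above, which the Asynchronous Communication subsection tells us APTC supports by: (i) dropping the communication merge $\mid$ from the expansion, (ii) removing the asynchronous send/receive pairs from $H$ whenever the constraint is satisfied (and keeping them in $H$ when violated, so that any interleaving violating the causal order is blocked to $\delta$), and (iii) abstracting the matched pairs away via $\tau_I$.

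First, I would expand $AOP \between AO \between P \between C_1 \between \cdots \between C_n$ using axiom $P1$ to turn $\between$ into $\parallel + \mid$, and note that the $\mid$-summands vanish (since no synchronous $\gamma$ is defined between any of these components). Next, applying $RDP$ to unfold each of $AOP$, $AO$, $P$, and the $C_i$, and using $P4$--$P6$ together with the axioms $A6,A7$ for $\delta$, the only non-deadlocked interleavings are exactly those that respect the four causal constraints $s_{I_{PO}}(d_{I_{AO}}) \leq r_{I_{PO}}(d_{I_{AO}})$, $s_{O_{PO}}(d_{O_{AO}}) \leq r_{O_{PO}}(d_{O_{AO}})$, $s_{PP}(d_{I_P}) \leq r_{PP}(d_{I_P})$, and $s_{PC_i}(d_{I_{C_i}}) \leq r_{PC_i}(d_{I_{C_i}})$ for each $i$. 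After applying $\partial_H$ to kill the violating interleavings, the surviving behavior along one round-trip is a sequence that begins with $r_I(d_I)$ and terminates with the parallel output $s_{O_1}(d_{O_1}) \parallel \cdots \parallel s_{O_n}(d_{O_n})$, with all intermediate send/receive pairs matched and subsequently renamed to $\tau$ by $\tau_I$.

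Then I would package the whole reduction into a guarded linear recursive specification $\langle X \mid E\rangle$ whose single equation is
\begin{eqnarray}
X = \sum_{d_I,d_{O_1},\cdots,d_{O_n}\in\Delta} r_I(d_I) \cdot (s_{O_1}(d_{O_1}) \parallel \cdots \parallel s_{O_n}(d_{O_n})) \cdot X, \nonumber
\end{eqnarray}
and apply $RSP$ to conclude that $\tau_I(\partial_H(AOP \between AO \between P \between C_1 \between \cdots \between C_n)) = \langle X \mid E\rangle$, which is exactly the desired external behavior. As in the previous theorems, the detailed expansion is delegated to the methodology of section \ref{app}.

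The hard part will be the asynchronous step: unlike the synchronous proofs earlier in the chapter (where the merge $\gamma$ collapses paired actions immediately and $\partial_H$ simply removes the raw send/receive symbols), here I must argue that every interleaving violating any causal constraint indeed reduces to $\delta$ under $\partial_H$, and that no deadlock is introduced along \emph{admissible} interleavings. Concretely, I will have to verify that the asymmetric treatment of sends and receives in $H$ (keeping the pair in $H$ only when the violating order occurs) is soundly captured by the transition rules of APTC, so that after $\tau_I$ the process is rooted branching truly concurrent bisimilar to $\langle X \mid E\rangle$. Once that obstacle is dispatched, the algebraic manipulation proceeds as in the Reactor and Proactor-adjacent patterns above.
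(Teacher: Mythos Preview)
Your proposal is correct and follows essentially the same approach as the paper: the paper's proof merely asserts the same recursive equation $\tau_I(\partial_H(AOP\between AO\between P\between C_1\between\cdots\between C_n))=\sum_{d_{I},d_{O_1},\cdots,d_{O_n}\in\Delta}(r_{I}(d_{I})\cdot s_{O_1}(d_{O_1})\parallel\cdots\parallel s_{O_n}(d_{O_n}))\cdot\tau_I(\partial_H(\cdots))$ and defers the algebraic manipulation to section~\ref{app}, whereas you have spelled out the intermediate steps (expansion via $P1$, vanishing of $\mid$-summands, unfolding via $RDP$, causal-constraint filtering by $\partial_H$, abstraction via $\tau_I$, and closure via $RSP$) that the paper leaves implicit. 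Your explicit attention to the asynchronous-communication mechanism is exactly what distinguishes this pattern from the synchronous ones and is consistent with the constraints the paper lists just before the theorem.
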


\begin{proof}
Based on the above state transitions of the above modules, by use of the algebraic laws of APTC, we can prove that

$\tau_I(\partial_H(AOP\between AO\between P\between C_1\between\cdots\between C_i\between\cdots\between C_n))=\sum_{d_{I},d_{O_1},\cdots,d_{O_n}\in\Delta}(r_{I}(d_{I})\cdot s_{O_1}(d_{O_1})\parallel\cdots\parallel s_{O_i}(d_{O_i})\parallel\cdots\parallel s_{O_n}(d_{O_n}))\cdot
\tau_I(\partial_H(AOP\between AO\between P\between C_1\between\cdots\between C_i\between\cdots\between C_n))$,

that is, the Proactor pattern $\tau_I(\partial_H(AOP\between AO\between P\between C_1\between\cdots\between C_i\between\cdots\between C_n))$ can exhibit desired external behaviors.

For the details of proof, please refer to section \ref{app}, and we omit it.
\end{proof}

\subsubsection{Verification of the Asynchronous Completion Token Pattern}

The Asynchronous Completion Token pattern also decouples the delivery the events between the event-driven applications and clients, but the events are triggered by the completion of asynchronous operations,
which has four classes of components: the Initiator, the Asynchronous Operation, the Service and $n$ Completion Handlers.
The Initiator interacts with the outside through the channel $I$; with the Service through the channels $I_{IS}$ and $O_{IS}$; with the Completion Handler $i$ with the channel $IC_i$.
The Service interacts with the Asynchronous Operation through the channel $I_{SO}$ and $O_{SO}$. The Completion Handler $i$ interacts with the outside through the channel $O_i$.
As illustrates in Figure \ref{ACT6}.

\begin{figure}
    \centering
    \includegraphics{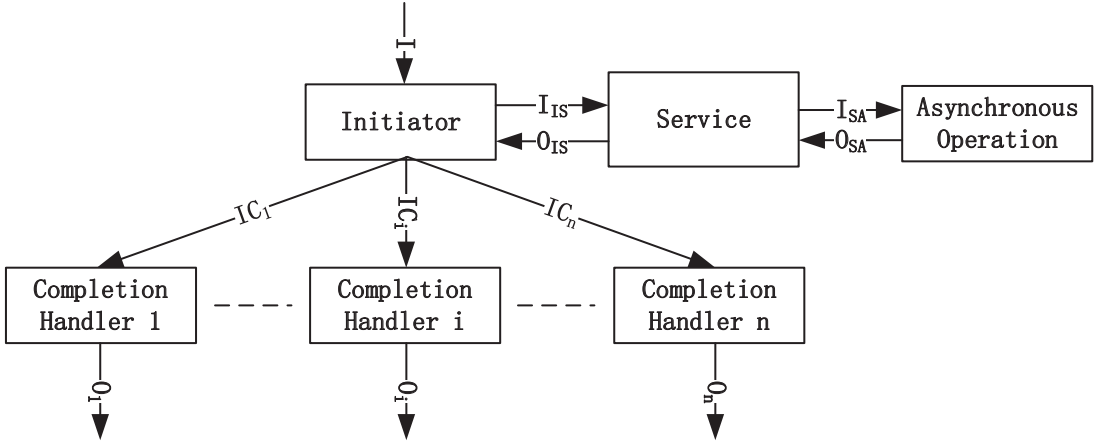}
    \caption{Asynchronous Completion Token pattern}
    \label{ACT6}
\end{figure}

The typical process of the Asynchronous Completion Token pattern is shown in Figure \ref{ACT6P} and following.

\begin{enumerate}
  \item The Initiator receives the input $d_I$ from the user through the channel $I$ (the corresponding reading action is denoted $r_I(D_I)$), processes the input through
  a processing function $IF_1$, and generates the input to the Service $d_{I_{S}}$ and it sends $d_{I_{S}}$ to the Asynchronous Operation through the channel $I_{IS}$
  (the corresponding sending action is denoted $s_{I_{IS}}(d_{I_{S}})$);
  \item The Service receives the input from the Initiator through the channel $I_{IS}$ (the corresponding reading action is denoted $r_{I_{IS}}(d_{I_{S}})$), processes the input through
  a processing function $SF_1$, generates the input to the Asynchronous Operation which is denoted $d_{I_{A}}$; then sends the input to the Asynchronous Operation through the
  channel $I_{SA}$ (the corresponding sending action is denoted $s_{O_{SA}}(d_{O_{A}})$);
  \item The Asynchronous Operation receives the input from the Service through the channel $I_{SA}$ (the corresponding reading action is denoted $r_{I_{SA}}(d_{I_A})$), then processes
  the input and generate the results $d_{O_A}$ through a processing function $AF$, and sends the results to the Service through the channel $O_{SA}$ (the corresponding sending action
  is denoted $s_{O_{SA}}(d_{O_A})$);
  \item the Service receives the results $d_{O_A}$ from the Asynchronous Operation through the channel $O_{SA}$ (the corresponding reading action is denoted $r_{O_{SA}}(d_{O_A})$), then
  processes the results and generates the results $d_{O_S}$ through a processing function $SF_2$, and sends the results to the Initiator through the channel $O_{IS}$ (the corresponding
  sending action is denoted $s_{O_{IS}}(d_{O_S})$);
  \item The Initiator receives the results $d_{O_S}$ from the Service through the channel $O_{IS}$ (the corresponding reading action is denoted $r_{O_{IS}}(d_{O_S})$), then processes
  the results and generates the events $d_{I_{C_i}}$ through a processing function $IF_2$, and sends the processed events to the Completion Handler $i$ $d_{I_{C_i}}$ for $1\leq i\leq n$ through the channel $IC_i$
  (the corresponding sending action is denoted $s_{IC_i}(d_{I_{C_i}})$);
  \item The Completion Handler $i$ (for $1\leq i\leq n$) receives the events from the Initiator through the channel $IC_i$ (the corresponding reading action is denoted $r_{IC_i}(d_{I_{C_i}})$),
  processes the events through a processing function $CF_{i}$, generates the output
  $d_{O_i}$, then sending the output through the channel $O_i$ (the corresponding sending action is denoted $s_{O_i}(d_{O_i})$).
\end{enumerate}

\begin{figure}
    \centering
    \includegraphics{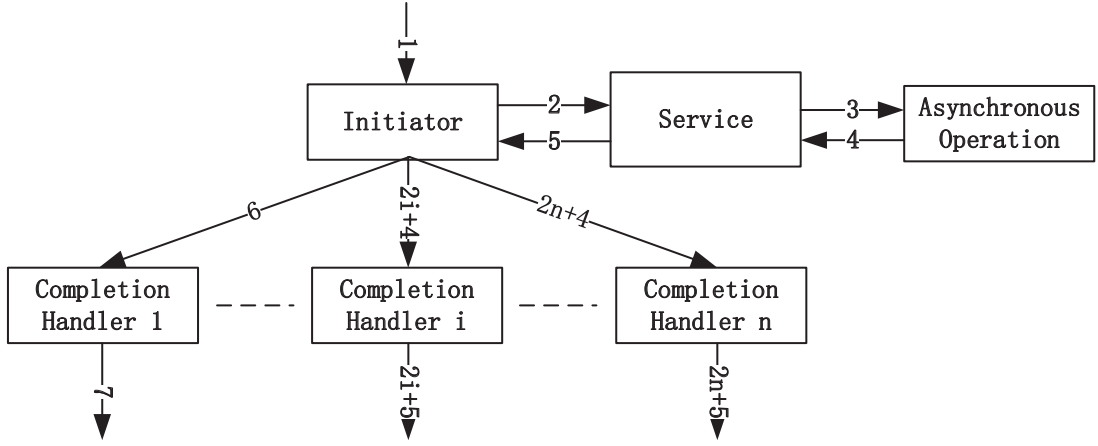}
    \caption{Typical process of Asynchronous Completion Token pattern}
    \label{ACT6P}
\end{figure}

In the following, we verify the Asynchronous Completion Token pattern. We assume all data elements $d_{I}$, $d_{I_S}$, $d_{I_{A}}$, $d_{I_{C_i}}$, $d_{O_{A}}$, $d_{O_S}$, $d_{O_{i}}$ (for $1\leq i\leq n$) are from a finite set
$\Delta$.

The state transitions of the Initiator module
described by APTC are as follows.

$I=\sum_{d_{I}\in\Delta}(r_{I}(d_{I})\cdot I_{2})$

$I_{2}=IF_1\cdot I_{3}$

$I_{3}=\sum_{d_{I_{S}}\in\Delta}(s_{I_{IS}}(d_{I_{IS}})\cdot I_{4})$

$I_4=\sum_{d_{O_{S}}\in\Delta}(r_{O_{IS}}(d_{O_{S}})\cdot I_{5})$

$I_{5}=IF_2\cdot I_{6}$

$I_{6}=\sum_{d_{I_{C_1}},d_{I_{C_n}}\in\Delta}(s_{IC_i}(d_{I_{C_i}})\cdot I)$

The state transitions of the Service described by APTC are as follows.

$S=\sum_{d_{I_{S}}\in\Delta}(r_{I_{IS}}(d_{I_{S}})\cdot S_{2})$

$S_{2}=SF_1\cdot S_{3}$

$S_{3}=\sum_{d_{I_{A}}\in\Delta}(s_{I_{SA}}(d_{I_{A}})\cdot S_4)$

$S_4=\sum_{d_{O_{A}}\in\Delta}(r_{O_{SA}}(d_{O_{A}})\cdot S_{5})$

$S_{5}=SF_2\cdot S_{6}$

$S_{6}=\sum_{d_{O_{S}}\in\Delta}(s_{O_{IS}}(d_{O_{S}})\cdot S)$

The state transitions of the Asynchronous Operation described by APTC are as follows.

$A=\sum_{d_{I_{A}}\in\Delta}(r_{I_{SA}}(d_{I_{A}})\cdot A_{2})$

$A_{2}=AF\cdot A_{3}$

$A_{3}=\sum_{d_{O_{A}}\in\Delta}(s_{O_{SA}}(d_{O_{A}})\cdot A)$

The state transitions of the Completion Handler $i$ described by APTC are as follows.

$C_i=\sum_{d_{I_{C_i}}\in\Delta}(r_{IC_i}(d_{I_{C_i}})\cdot C_{i_2})$

$C_{i_2}=CF_{i}\cdot C_{i_3}$

$C_{i_3}=\sum_{d_{O_{i}}\in\Delta}(s_{O_{i}}(d_{O_i})\cdot C_i)$

The sending action must occur before the reading action of the same data through the same channel, then they can asynchronously communicate with each other, otherwise, will cause a deadlock $\delta$. We define the following
communication constraint of the Completion Handler $i$ for $1\leq i\leq n$.

$$s_{IC_i}(d_{I_{C_i}})\leq r_{IC_i}(d_{I_{C_i}})$$

Here, $\leq$ is a causality relation.

There are two communication constraints between the Initiator and the Service as follows.

$$s_{I_{IS}}(d_{I_{S}})\leq r_{I_{IS}}(d_{I_{S}})$$

$$s_{O_{IS}}(d_{O_{S}})\leq r_{O_{IS}}(d_{O_{S}})$$

There is two communication constraints between the Service and the Asynchronous Operation as follows.

$$s_{I_{SA}}(d_{I_{A}})\leq r_{I_{SA}}(d_{I_{A}})$$

$$s_{O_{SA}}(d_{O_{A}})\leq r_{O_{SA}}(d_{O_{A}})$$

Let all modules be in parallel, then the Asynchronous Completion Token pattern $I\quad S\quad A\quad C_1\cdots C_i\cdots C_n$ can be presented by the following process term.

$\tau_I(\partial_H(\Theta(I\between S\between A\between C_1\between\cdots\between C_i\between\cdots\between C_n)))=\tau_I(\partial_H(I\between S\between A\between C_1\between\cdots\between C_i\between\cdots\between C_n))$

where $H=\{s_{IC_i}(d_{I_{C_i}}), r_{IC_i}(d_{I_{C_i}}),s_{I_{IS}}(d_{I_{S}}), r_{I_{IS}}(d_{I_{S}}),s_{O_{IS}}(d_{O_{S}}), r_{O_{IS}}(d_{O_{S}}),\\
s_{I_{SA}}(d_{I_{A}}), r_{I_{SA}}(d_{I_{A}}),s_{O_{SA}}(d_{O_{A}}), r_{O_{SA}}(d_{O_{A}})\\
|s_{IC_i}(d_{I_{C_i}})\nleq r_{IC_i}(d_{I_{C_i}}),
s_{I_{IS}}(d_{I_{S}})\nleq r_{I_{IS}}(d_{I_{S}}),
s_{O_{IS}}(d_{O_{S}})\nleq r_{O_{IS}}(d_{O_{S}}),
s_{I_{SA}}(d_{I_{A}})\nleq r_{I_{SA}}(d_{I_{A}}),\\
s_{O_{SA}}(d_{O_{A}})\nleq r_{O_{SA}}(d_{O_{A}}),
d_{I}, d_{I_S}, d_{I_{A}}, d_{I_{C_i}}, d_{O_{A}}, d_{O_S}, d_{O_{i}}\in\Delta\}$ for $1\leq i\leq n$,

$I=\{s_{IC_i}(d_{I_{C_i}}), r_{IC_i}(d_{I_{C_i}}),s_{I_{IS}}(d_{I_{S}}), r_{I_{IS}}(d_{I_{S}}),s_{O_{IS}}(d_{O_{S}}), r_{O_{IS}}(d_{O_{S}}),\\
s_{I_{SA}}(d_{I_{A}}), r_{I_{SA}}(d_{I_{A}}),s_{O_{SA}}(d_{O_{A}}), r_{O_{SA}}(d_{O_{A}}), IF_1,IF_2,SF_1,SF_2,AF,CF_i\\
|s_{IC_i}(d_{I_{C_i}})\leq r_{IC_i}(d_{I_{C_i}}),
s_{I_{IS}}(d_{I_{S}})\leq r_{I_{IS}}(d_{I_{S}}),
s_{O_{IS}}(d_{O_{S}})\leq r_{O_{IS}}(d_{O_{S}}),
s_{I_{SA}}(d_{I_{A}})\leq r_{I_{SA}}(d_{I_{A}}),\\
s_{O_{SA}}(d_{O_{A}})\leq r_{O_{SA}}(d_{O_{A}}),
d_{I}, d_{I_S}, d_{I_{A}}, d_{I_{C_i}}, d_{O_{A}}, d_{O_S}, d_{O_{i}}\in\Delta\}$ for $1\leq i\leq n$.

Then we get the following conclusion on the Asynchronous Completion Token pattern.

\begin{theorem}[Correctness of the Asynchronous Completion Token pattern]
The Asynchronous Completion Token pattern $\tau_I(\partial_H(I\between S\between A\between C_1\between\cdots\between C_i\between\cdots\between C_n))$ can exhibit desired external behaviors.
\end{theorem}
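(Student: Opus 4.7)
The plan is to mirror the algebraic calculation used in the preceding pattern-correctness proofs (most directly the Proactor proof, since this pattern also employs asynchronous communication via causality constraints $s \leq r$ rather than synchronous communication functions $\gamma$). First I would write each module as a guarded linear recursion specification using the $R,S,I,A,C_i$ equations already given, then form the parallel composition $I \between S \between A \between C_1 \between \cdots \between C_n$ and expand it stepwise using axioms $P1$--$P10$ for $\between, \parallel$, axioms $C11$--$C18$ for $\mid$, and axioms $A6$--$A7$ for $\delta$.

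Next I would apply $\partial_H$: every action in $H$ corresponds to an asynchronous send/receive pair whose causality constraint is violated, so by $D2$ these are mapped to $\delta$ and then absorbed by $A6$. The remaining actions are exactly the ones that respect the causality constraints $s_{I_{IS}} \leq r_{I_{IS}}$, $s_{I_{SA}} \leq r_{I_{SA}}$, $s_{O_{SA}} \leq r_{O_{SA}}$, $s_{O_{IS}} \leq r_{O_{IS}}$, and $s_{IC_i} \leq r_{IC_i}$, so that the data flows sequentially $I \to S \to A \to S \to I \to C_i$ as depicted in the typical-process figure. I would then use RSP/RDP (as in the Completeness proof of the state operator) to rewrite $\partial_H(I \between S \between A \between C_1 \between \cdots \between C_n)$ as a single recursion variable $\langle X_1 \mid E \rangle$ whose guarded linear specification threads the data element $d_I$ through internal actions $IF_1, SF_1, AF, SF_2, IF_2, CF_i$ and the asynchronous communication events, ending in the parallel outputs $s_{O_1}(d_{O_1}) \parallel \cdots \parallel s_{O_n}(d_{O_n})$.

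Finally, applying $\tau_I$ renames every internal processing action and every causality-satisfied send/receive into $\tau$, and by the axioms $B1$--$B3$ plus $CFAR$ these collapse, leaving the fixpoint equation
\begin{eqnarray*}
&& \tau_I(\partial_H(I \between S \between A \between C_1 \between \cdots \between C_n)) \\
&=& \sum_{d_I, d_{O_1}, \ldots, d_{O_n} \in \Delta} \bigl(r_I(d_I) \cdot (s_{O_1}(d_{O_1}) \parallel \cdots \parallel s_{O_n}(d_{O_n}))\bigr) \\
&& \cdot \tau_I(\partial_H(I \between S \between A \between C_1 \between \cdots \between C_n)),
\end{eqnarray*}
which is the desired external behavior: read one request, and produce $n$ parallel outputs once the asynchronous chain completes.

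The main obstacle will be the asynchronous bookkeeping: unlike the synchronous patterns where $\gamma(r,s)$ immediately fuses matching actions, here each of the five causal pairs must be individually checked as a guard on $\partial_H$, and the intermediate interleavings (e.g.\ the Initiator's $s_{I_{IS}}$ preceding $r_{O_{IS}}$, with the Service and Asynchronous Operation firing in between) must be shown to collapse to $\tau$-sequences handled by $B1$--$B3$. As with the earlier proofs, the detailed expansion is routine but bulky, and I would defer it by citing Section~\ref{app}.
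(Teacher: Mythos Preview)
Your proposal is correct and follows essentially the same approach as the paper: expand the parallel composition with the APTC axioms, apply $\partial_H$ and then $\tau_I$ (using the asynchronous causality constraints rather than $\gamma$-communications, as in the Proactor proof), and collapse to the fixpoint equation $\tau_I(\partial_H(I\between S\between A\between C_1\between\cdots\between C_n))=\sum_{d_{I},d_{O_1},\cdots,d_{O_n}\in\Delta}(r_{I}(d_{I})\cdot s_{O_1}(d_{O_1})\parallel\cdots\parallel s_{O_n}(d_{O_n}))\cdot\tau_I(\partial_H(\cdots))$, deferring the routine expansion to Section~\ref{app}. In fact your write-up is considerably more explicit about the intermediate steps (RSP/RDP, $B1$--$B3$, CFAR, the five causal checks) than the paper's own proof, which simply asserts the final equation and cites Section~\ref{app}.
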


\begin{proof}
Based on the above state transitions of the above modules, by use of the algebraic laws of APTC, we can prove that

$\tau_I(\partial_H(I\between S\between A\between C_1\between\cdots\between C_i\between\cdots\between C_n))=\sum_{d_{I},d_{O_1},\cdots,d_{O_n}\in\Delta}(r_{I}(d_{I})\cdot s_{O_1}(d_{O_1})\parallel\cdots\parallel s_{O_i}(d_{O_i})\parallel\cdots\parallel s_{O_n}(d_{O_n}))\cdot
\tau_I(\partial_H(I\between S\between A\between C_1\between\cdots\between C_i\between\cdots\between C_n))$,

that is, the Asynchronous Completion Token pattern $\tau_I(\partial_H(I\between S\between A\between C_1\between\cdots\between C_i\between\cdots\between C_n))$ can exhibit desired external behaviors.

For the details of proof, please refer to section \ref{app}, and we omit it.
\end{proof}

\subsubsection{Verification of the Acceptor-Connector Pattern}

The Acceptor-Connector pattern decouples the connection and initialization of two cooperating peers. There are six modules in the Acceptor-Connector pattern: the two Service Handlers, the two
Dispatchers, and the two initiator: the Connector and the Acceptor. The Service Handlers interact with the user through
the channels $I_1$, $I_2$ and $O_1$, $O_2$; with the Dispatcher through the channels $DS_1$ and $DS_2$; with each other through the channels $I_{SS_1}$ and $I_{SS_2}$.
The Connector interacts with Dispatcher 1 the through the channels $CD$, and with the outside through the channels $I_C$. The Acceptor interacts with the Dispatcher 2 through the
channel $AD$; with the outside through the channel $I_A$. The Dispatchers interact with the Service Handlers through the channels $DS_1$ and $DS_2$. As illustrates in
Figure \ref{AC6}.

\begin{figure}
    \centering
    \includegraphics{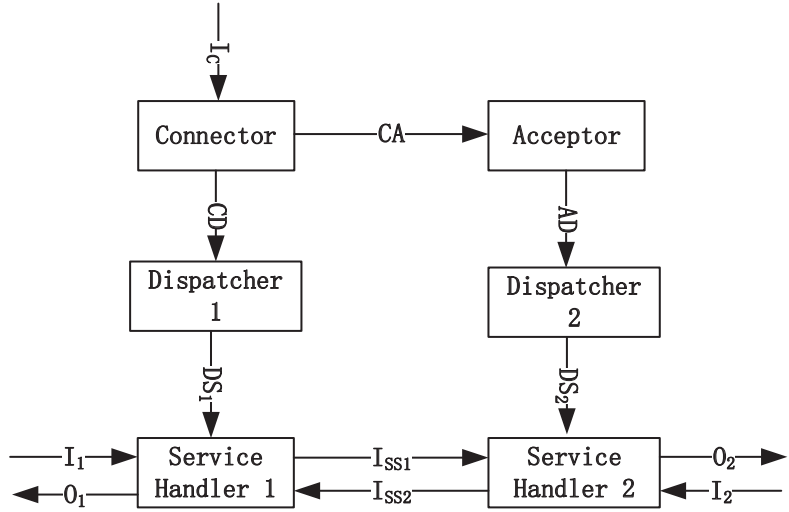}
    \caption{Acceptor-Connector pattern}
    \label{AC6}
\end{figure}

The typical process of the Acceptor-Connector pattern is shown in Figure \ref{AC6P} and as follows.

\begin{enumerate}
  \item The Connector receives the request $d_{I_C}$ from the outside through the channel $I_C$ (the corresponding reading action is denoted $r_{I_C}(d_{I_C})$), then processes the request
  and generates the request $d_{I_{D_1}}$ and $d_{I_A}$ through a processing function $CF$, and sends the request to the Dispatcher 1 through the channel $CD$ (the corresponding sending action is
  denoted $s_{CD}(d_{I_{D_1}})$) and sends the request to the Acceptor through the channel $CA$ (the corresponding sending action is
  denoted $s_{CA}(d_{I_{A}})$);
  \item The Dispatcher 1 receives the request $d_{I_{D_1}}$ from the Connector through the channel $CD$ (the corresponding reading action is denoted $r_{CD}(d_{I_{D_1}})$), the  processes
  the request and generates the request $d_{I_{S_1}}$ through a processing function $D1F$, and sends the request to the Service Handler 1 through the channel $DS_1$ (the corresponding
  sending action is denoted $s_{DS_1}(d_{I_{S_1}})$);
  \item The Service Handler 1 receives the request $d_{I_{S_1}}$ from the Dispatcher 1 through the channel $DS_1$ (the corresponding reading action is denoted $r_{DS_1}(d_{I_{S_1}})$),
  then processes the request through a processing function $S1F_1$ and make ready to accept the request from the outside;
  \item The Acceptor receives the request $d_{I_A}$ from the Connector through the channel $CA$ (the corresponding reading action is denoted $r_{CA}(d_{I_A})$), then processes the request
  and generates the request $d_{I_{D_2}}$ through a processing function $AF$, and sends the request to the Dispatcher 1 through the channel $AD$ (the corresponding sending action is
  denoted $s_{AD}(d_{I_{D_2}})$);
  \item The Dispatcher 2 receives the request $d_{I_{D_2}}$ from the Acceptor through the channel $AD$ (the corresponding reading action is denoted $r_{AD}(d_{I_{D_2}})$), the  processes
  the request and generates the request $d_{I_{S_2}}$ through a processing function $D2F$, and sends the request to the Service Handler 2 through the channel $DS_2$ (the corresponding
  sending action is denoted $s_{DS_2}(d_{I_{S_2}})$);
  \item The Service Handler 2 receives the request $d_{I_{S_2}}$ from the Dispatcher 2 through the channel $DS_2$ (the corresponding reading action is denoted $r_{DS_2}(d_{I_{S_2}})$),
  then processes the request through a processing function $S2F_1$ and make ready to accept the request from the outside;
  \item The Service Handler 1 receives the request $d_{I_1}$ from the user through the channel $I_1$ (the corresponding reading action is denoted $r_{I_1}(d_{I_1})$), then processes the request $d_{I_1}$ through a processing
  function $S1F_2$, and sends the processed request $d_{I_{SS_2}}$ to the Service Handler 2 through the channel $I_{SS_1}$ (the corresponding sending action is denoted $s_{I_{SS_1}}(d_{I_{SS_2}})$);
  \item The Service Handler 2 receives the request $d_{I_{SS_2}}$ from the Service Handler 1 through the channel $I_{SS_1}$ (the corresponding reading action is denoted $r_{I_{SS_1}}(d_{I_{SS_2}})$), then
  processes the request and generates the response $d_{O_2}$ through a processing function $S2F_3$, and sends the response to the outside through the channel $O_{2}$ (the corresponding sending action is denoted
  $s_{O_{2}}(d_{O_2})$);
  \item The Service Handler 2 receives the request $d_{I_2}$ from the user through the channel $I_2$ (the corresponding reading action is denoted $r_{I_2}(d_{I_2})$), then processes the request $d_{I_2}$ through a processing
  function $S2F_2$, and sends the processed request $d_{I_{SS_1}}$ to the Service Handler 1 through the channel $I_{SS_2}$ (the corresponding sending action is denoted $s_{I_{SS_2}}(d_{I_{SS_1}})$);
  \item The Service Handler 1 receives the request $d_{I_{SS_1}}$ from the Service Handler 2 through the channel $I_{SS_2}$ (the corresponding reading action is denoted $r_{I_{SS_2}}(d_{I_{SS_1}})$), then
  processes the request and generates the response $d_{O_1}$ through a processing function $S1F_3$, and sends the response to the outside through the channel $O_{1}$ (the corresponding sending action is denoted
  $s_{O_{1}}(d_{O_1})$).
\end{enumerate}

\begin{figure}
    \centering
    \includegraphics{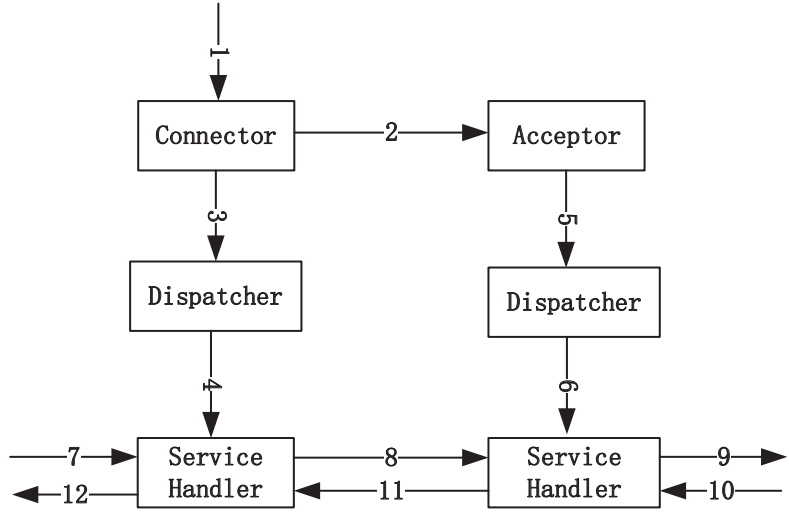}
    \caption{Typical process of Acceptor-Connector pattern}
    \label{AC6P}
\end{figure}

In the following, we verify the Acceptor-Connector pattern. We assume all data elements $d_{I_1}$, $d_{I_2}$, $d_{I_C}$, $d_{I_A}$, $d_{I_{D_1}}$, $d_{I_{D_2}}$, $d_{I_{S_1}}$, $d_{I_{S_2}}$, $d_{I_{SS_1}}$, $d_{I_{SS_2}}$, $d_{O_1}$, $d_{O_{2}}$ are from a finite set
$\Delta$. We only give the transitions of the first process.

The state transitions of the Connector module
described by APTC are as follows.

$C=\sum_{d_{I_{C}}\in\Delta}(r_{I_C}(d_{I_{C}})\cdot C_{2})$

$C_{2}=CF\cdot C_{3}$

$C_{3}=\sum_{d_{I_{A}},d_{I_{D_1}}\in\Delta}(s_{CA}(d_{I_{A}})\between s_{CD}(d_{I_{D_1}})\cdot C)$

The state transitions of the Dispatcher 1 module
described by APTC are as follows.

$D1=\sum_{d_{I_{D_1}}\in\Delta}(r_{CD}(d_{I_{D_1}})\cdot D1_{2})$

$D1_{2}=D1F\cdot D1_{3}$

$D1_{3}=\sum_{d_{I_{S_1}}\in\Delta}(s_{DS_1}(d_{I_{S_1}})\cdot D1)$

The state transitions of the Service Handler 1 module
described by APTC are as follows.

$S1=\sum_{d_{I_1},d_{I_{S_1}},d_{I_{SS_1}}\in\Delta}(r_{I_1}(d_{I_1})\between r_{DS_1}(d_{I_{S_1}})\between r_{I_{SS_2}}(d_{I_{SS_1}})\cdot S1_{2})$

$S1_{2}=S1F_1\between S1F_2\between S1F_3\cdot S1_{3}$

$S1_{3}=\sum_{d_{I_{SS_2}},d_{O_1}\in\Delta}(s_{I_{SS_1}}(d_{I_{SS_2}})\between s_{O_1}(d_{O_1})\cdot S1)$

The state transitions of the Acceptor module
described by APTC are as follows.

$A=\sum_{d_{I_{A}}\in\Delta}(r_{CA}(d_{I_{A}})\cdot A_{2})$

$A_{2}=AF\cdot A_{3}$

$A_{3}=\sum_{d_{I_{D_2}}\in\Delta}(s_{AD}(d_{I_{D_2}})\cdot A)$

The state transitions of the Dispatcher 2 module
described by APTC are as follows.

$D2=\sum_{d_{I_{D_2}}\in\Delta}(r_{AD}(d_{I_{D_2}})\cdot D2_{2})$

$D2_{2}=D2F\cdot D2_{3}$

$D2_{3}=\sum_{d_{I_{S_2}}\in\Delta}(s_{DS_2}(d_{I_{S_2}})\cdot D2)$

The state transitions of the Service Handler 2 module
described by APTC are as follows.

$S2=\sum_{d_{I_2},d_{I_{S_2}},d_{I_{SS_2}}\in\Delta}(r_{I_2}(d_{I_2})\between r_{DS_2}(d_{I_{S_2}})\between r_{I_{SS_1}}(d_{I_{SS_2}})\cdot S2_{2})$

$S2_{2}=S2F_1\between S2F_2\between S2F_3\cdot S2_{3}$

$S2_{3}=\sum_{d_{I_{SS_1}},d_{O_2}\in\Delta}(s_{I_{SS_2}}(d_{I_{SS_1}})\between s_{O_2}(d_{O_2})\cdot S2)$

The sending action and the reading action of the same data through the same channel can communicate with each other, otherwise, will cause a deadlock $\delta$. We define the following
communication functions between the Connector and the Acceptor.

$$\gamma(r_{CA}(d_{I_{A}}),s_{CA}(d_{I_{A}}))\triangleq c_{CA}(d_{I_{A}})$$

There are one communication functions between the Connector and the Dispatcher 1 as follows.

$$\gamma(r_{CD}(d_{I_{D_1}}),s_{CD}(d_{I_{D_1}}))\triangleq c_{CD}(d_{I_{D_1}})$$

There are one communication functions between the Dispatcher 1 and the Service Handler 1 as follows.

$$\gamma(r_{DS_1}(d_{I_{S_1}}),s_{DS_1}(d_{I_{S_1}}))\triangleq c_{DS_1}(d_{I_{S_1}})$$

We define the following communication functions between the Acceptor and the Dispatcher 2.

$$\gamma(r_{AD}(d_{I_{D_2}}),s_{AD}(d_{I_{D_2}}))\triangleq c_{AD}(d_{I_{D_2}})$$

There are one communication functions between the Dispatcher 2 and the Service Handler 2 as follows.

$$\gamma(r_{DS_2}(d_{I_{S_2}}),s_{DS_2}(d_{I_{S_2}}))\triangleq c_{DS_2}(d_{I_{S_2}})$$

There are one communication functions between the Service Handler 1 and the Service Handler 2 as follows.

$$\gamma(r_{I_{SS_1}}(d_{I_{SS_2}}),s_{I_{SS_1}}(d_{I_{SS_2}}))\triangleq c_{I_{SS_1}}(d_{I_{SS_2}})$$

$$\gamma(r_{I_{SS_2}}(d_{I_{SS_1}}),s_{I_{SS_2}}(d_{I_{SS_1}}))\triangleq c_{I_{SS_2}}(d_{I_{SS_1}})$$

Let all modules be in parallel, then the Acceptor-Connector pattern $C\quad D1 \quad S1\quad A\quad D2\quad S2$ can be presented by the following process term.

$\tau_I(\partial_H(\Theta(C\between D1\between S1\between A\between D2\between S2)))=\tau_I(\partial_H(C\between D1\between S1\between A\between D2\between S2))$

where $H=\{r_{CA}(d_{I_{A}}),s_{CA}(d_{I_{A}}),r_{CD}(d_{I_{D_1}}),s_{CD}(d_{I_{D_1}}),r_{DS_1}(d_{I_{S_1}}),s_{DS_1}(d_{I_{S_1}}),r_{AD}(d_{I_{D_2}}),s_{AD}(d_{I_{D_2}}),\\
r_{DS_2}(d_{I_{S_2}}),s_{DS_2}(d_{I_{S_2}}),r_{I_{SS_1}}(d_{I_{SS_2}}),s_{I_{SS_1}}(d_{I_{SS_2}}),r_{I_{SS_2}}(d_{I_{SS_1}}),s_{I_{SS_2}}(d_{I_{SS_1}})\\
|d_{I_1}, d_{I_2}, d_{I_C}, d_{I_A}, d_{I_{D_1}}, d_{I_{D_2}}, d_{I_{S_1}}, d_{I_{S_2}}, d_{I_{SS_1}}, d_{I_{SS_2}}, d_{O_1}, d_{O_{2}}\in\Delta\}$,

$I=\{c_{CA}(d_{I_{A}}),c_{CD}(d_{I_{D_1}}),c_{DS_1}(d_{I_{S_1}}),c_{AD}(d_{I_{D_2}}),c_{DS_2}(d_{I_{S_2}}),c_{I_{SS_1}}(d_{I_{SS_2}}),c_{I_{SS_2}}(d_{I_{SS_1}}),\\
CF,AF,D1F,D2F,S1F_1,S1F_2,S1F_3,S2F_1,S2F_2,S2F_3\\
|d_{I_1}, d_{I_2}, d_{I_C}, d_{I_A}, d_{I_{D_1}}, d_{I_{D_2}}, d_{I_{S_1}}, d_{I_{S_2}}, d_{I_{SS_1}}, d_{I_{SS_2}}, d_{O_1}, d_{O_{2}}\in\Delta\}$.

Then we get the following conclusion on the Acceptor-Connector pattern.

\begin{theorem}[Correctness of the Acceptor-Connector pattern]
The Acceptor-Connector pattern $\tau_I(\partial_H(C\between D1\between S1\between A\between D2\between S2))$ can exhibit desired external behaviors.
\end{theorem}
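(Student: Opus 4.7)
The plan is to follow the same schema used for all preceding pattern-correctness theorems in this chapter: unfold each module's recursive specification, then repeatedly apply the APTC algebraic laws (the expansion laws $P1$--$P10$, communication laws $C11$--$C18$, encapsulation laws $D1$--$D6$, and abstraction laws $TI1$--$TI6$, together with $RDP$ and $RSP$) to collapse the parallel composition $C\between D1\between S1\between A\between D2\between S2$ into a guarded linear recursive specification whose only visible actions are the reads on $I_1,I_2,I_C$ and the writes on $O_1,O_2$. All internal send/receive pairs are forced by $\partial_H$ to synchronize into the communication actions $c_{CA},c_{CD},c_{DS_1},c_{AD},c_{DS_2},c_{I_{SS_1}},c_{I_{SS_2}}$, and these, together with the processing functions $CF,AF,D1F,D2F,S1F_j,S2F_j$, are renamed to $\tau$ by $\tau_I$.

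First I would perform the setup (or ``initialization'') phase: starting from $r_{I_C}(d_{I_C})$, expand $C$ and use $P1$--$P9$ and $C11$--$C18$ to force $s_{CA}(d_{I_A})\parallel s_{CD}(d_{I_{D_1}})$ to synchronize with the matching reads in $A$ and $D1$, yielding the communications $c_{CA}(d_{I_A})$ and $c_{CD}(d_{I_{D_1}})$; these are abstracted to $\tau$ by $TI2$. Propagating through $A\to D2$ and $D1\to S1$ (and $D2\to S2$) brings both Service Handlers into their ``listening'' states $S1_3$-ready and $S2_3$-ready, with only $\tau$ steps in between. Using $B1$ ($e\cdot\tau=e$) and the rooted-branching laws, the whole initialization phase collapses into the silent prefix $\tau$.

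Next I would treat the two symmetric ``conversation'' phases. For the $I_1\to O_2$ direction, after reading $r_{I_1}(d_{I_1})$, the send $s_{I_{SS_1}}(d_{I_{SS_2}})$ synchronizes with $r_{I_{SS_1}}(d_{I_{SS_2}})$ in $S2$, producing $c_{I_{SS_1}}(d_{I_{SS_2}})$ which is again abstracted to $\tau$, and then $s_{O_2}(d_{O_2})$ fires. Symmetrically the $I_2\to O_1$ branch reduces to $r_{I_2}(d_{I_2})\cdot\tau\cdot s_{O_1}(d_{O_1})$. Because these branches are independent and sit in parallel within $S1\between S2$, the expansion theorem gives a parallel composition of the two, so after one more application of $B1$ to swallow the intermediate $\tau$'s I would obtain $\langle X\mid E\rangle$ with the equation
\begin{eqnarray*}
X &=& \sum_{d_{I_1},d_{I_2},d_{O_1},d_{O_2}\in\Delta}\bigl((r_{I_1}(d_{I_1})\cdot s_{O_2}(d_{O_2}))\parallel (r_{I_2}(d_{I_2})\cdot s_{O_1}(d_{O_1}))\bigr)\cdot X.
\end{eqnarray*}
An appeal to $RSP$ then identifies $\tau_I(\partial_H(C\between D1\between S1\between A\between D2\between S2))$ with this $X$, which is the desired external behavior, as for the Forwarder-Receiver pattern.

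The main obstacle I anticipate is bookkeeping rather than conceptual: because six modules interact through seven internal channels, the naive expansion produces a large number of summands, and one must be careful that \emph{every} possible interleaving of internal sends and matching reads is either matched by the communication function $\gamma$ (and hence abstracted) or blocked by $\partial_H$ to $\delta$ and then absorbed by $A6$. In particular, one must verify that no ``cross'' deadlock arises between the initialization phase (Connector/Acceptor/Dispatchers) and the ready listening states of $S1,S2$, so that the single $\tau$-prefix for initialization is really absorbed by $B1$ and does not turn into a visible root action that would break rooted branching bisimilarity. Once this verification is in place, the remainder is the routine application of $RDP/RSP$ as in section \ref{app}.
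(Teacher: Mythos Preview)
Your overall plan---expand, synchronize via $\partial_H$, abstract via $\tau_I$, then invoke $RSP$---is exactly the schema the paper uses, and the bookkeeping concerns you raise are the right ones. However, there is one concrete error in the target you are aiming for.

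The action $r_{I_C}(d_{I_C})$ is an \emph{external} input on the channel $I_C$: it is not in $H$ (so $\partial_H$ does not block it) and not in $I$ (so $\tau_I$ does not hide it). Consequently it cannot be absorbed into the ``silent prefix $\tau$'' of your initialization phase, nor can $B1$ swallow it. Moreover, the Connector recurses ($C_3$ returns to $C$), so $r_{I_C}$ reappears in every cycle. Your final recursive equation therefore misses a visible summand; the paper's claimed external behavior is
\[
\sum_{d_{I_C},d_{I_1},d_{I_2},d_{O_1},d_{O_2}\in\Delta}\Bigl(r_{I_C}(d_{I_C})\parallel\bigl(r_{I_1}(d_{I_1})\cdot s_{O_2}(d_{O_2})\bigr)\parallel\bigl(r_{I_2}(d_{I_2})\cdot s_{O_1}(d_{O_1})\bigr)\Bigr)\cdot\tau_I(\partial_H(\ldots)),
\]
with $r_{I_C}(d_{I_C})$ sitting in parallel with the two conversation threads, not eliminated. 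The analogy with the Forwarder--Receiver pattern breaks precisely here: that pattern has no separate external ``connect'' input, whereas the Acceptor--Connector does. Once you keep $r_{I_C}$ visible and place it in parallel (note that $S1$ and $S2$ each read their external input $r_{I_j}$ in a $\between$ with $r_{DS_j}$, so the external reads are not forced to wait for the initialization chain to complete), the rest of your argument goes through as written.
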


\begin{proof}
Based on the above state transitions of the above modules, by use of the algebraic laws of APTC, we can prove that

$\tau_I(\partial_H(C\between D1\between S1\between A\between D2\between S2))=\sum_{d_{I_C},d_{I_1},d_{I_2},d_{O_1},d_{O_2}\in\Delta}(r_{I_C}(d_{I_C})\parallel(r_{I_1}(d_{I_1})\cdot s_{O_2}(d_{O_2}))\parallel(r_{I_2}(d_{I_2})\cdot s_{O_1}(d_{O_1})))\cdot
\tau_I(\partial_H(C\between D1\between S1\between A\between D2\between S2))$,

that is, the Acceptor-Connector pattern $\tau_I(\partial_H(C\between D1\between S1\between A\between D2\between S2))$ can exhibit desired external behaviors.

For the details of proof, please refer to section \ref{app}, and we omit it.
\end{proof}

\subsection{Synchronization Patterns}\label{S6}

In this subsection, we verify the synchronization patterns, including the Scoped Locking pattern, the Strategized Locking pattern, the Thread-Safe Interface pattern, and the
Double-Checked Locking Optimization pattern.

\subsubsection{Verification of the Scoped Locking Pattern}

The Scoped Locking pattern ensures that a lock is acquired automatically when control enters a scope and released when control leaves the scope.
In Scoped Locking pattern, there are two classes of modules: The $n$ Controls and the Guard. The Control $i$ interacts with the
outside through the input channel $I_i$ and the output channel $O_i$; with the Guard through the channel $CG_i$ for $1\leq i\leq n$, as illustrated in Figure \ref{SL6}.

\begin{figure}
    \centering
    \includegraphics{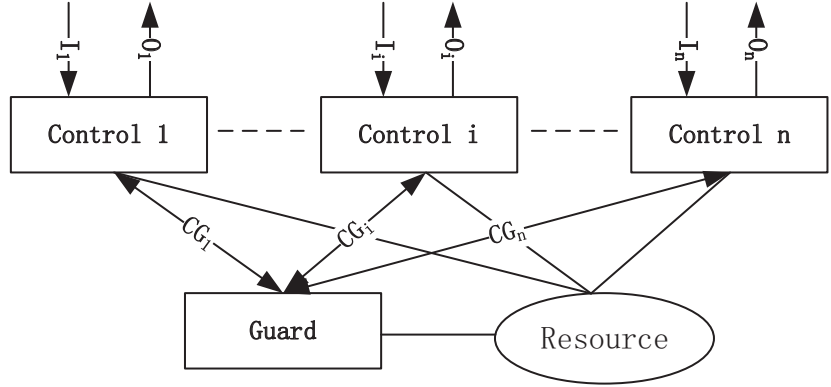}
    \caption{Scoped Locking pattern}
    \label{SL6}
\end{figure}

The typical process is shown in Figure \ref{SL6P} and as follows.

\begin{enumerate}
  \item The Control $i$ receives the input $d_{I_i}$ from the outside through the channel $I_i$ (the corresponding reading action is denoted $r_{I_i}(d_{I_i})$), then it processes the
  input and generates the input $d_{I_{G_i}}$ through a processing function $CF_{i1}$, and it sends the input to the Guard through the channel $CG_i$
  (the corresponding sending action is denoted $s_{CG_i}(d_{I_{G_i}})$);
  \item The Guard receives the input $d_{I_{G_i}}$ from the Control $i$ through the channel $CG_i$ (the corresponding reading action is denoted $r_{CG_i}(d_{I_{G_i}})$) for $1\leq i\leq n$,
  then processes the request and generates the output $d_{O_{G_i}}$ through a processing function $GF_i$, (note that, after the processing, a lock is acquired), and sends the output
  to the Control $i$ through the channel $CG_i$ (the corresponding sending action is denoted $s_{CG_i}(d_{O_{G_i}})$);
  \item The Control $i$ receives the output from the Guard through the channel $CG_i$ (the corresponding reading action is denoted $r_{CG_i}(d_{O_{G_i}})$), then processes the output
  and generate the output $d_{O_i}$ through a processing function $CF_{i2}$ (accessing the resource), and sends the output to the outside through the channel $O_i$ (the corresponding sending action is denoted
  $s_{O_i}(d_{O_i})$).
\end{enumerate}

\begin{figure}
    \centering
    \includegraphics{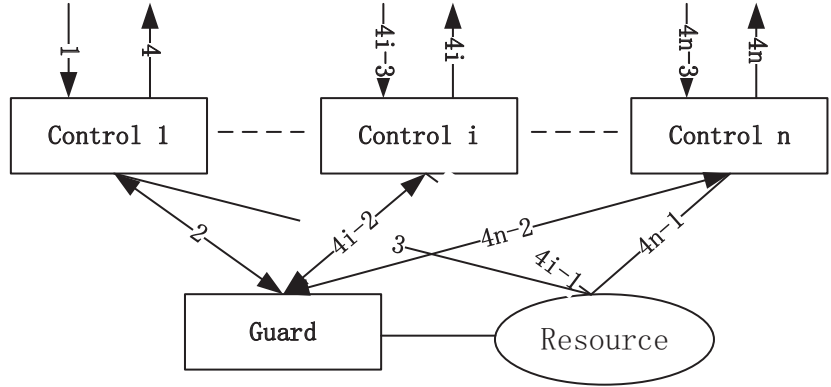}
    \caption{Typical process of Scoped Locking pattern}
    \label{SL6P}
\end{figure}

In the following, we verify the Scoped Locking pattern. We assume all data elements $d_{I_i}$, $d_{O_i}$, $d_{I_{G_i}}$, $d_{O_{G_i}}$ for $1\leq i\leq n$ are from a finite set
$\Delta$.

The state transitions of the Control $i$ module
described by APTC are as follows.

$C_i=\sum_{d_{I_i}\in\Delta}(r_{I_i}(d_{I_i})\cdot C_{i_2})$

$C_{i_2}=CF_{i1}\cdot C_{i_3}$

$C_{i_3}=\sum_{d_{I_{G_i}}\in\Delta}(s_{CG_i}(d_{I_{G_i}})\cdot C_{i_4})$

$C_{i_4}=\sum_{d_{O_{G_i}}\in\Delta}(r_{CG_i}(d_{O_{G_i}})\cdot C_{i_5})$

$C_{i_5}=CF_{i2}\cdot C_{i_6}\quad CF_{12}\%\cdots\%CF_{n2}$

$C_{i_6}=\sum_{d_{O_i}\in\Delta}(s_{O_i}(d_{O_i})\cdot C_{i})$

The state transitions of the Guard module
described by APTC are as follows.

$G=\sum_{d_{I_{G_1}},\cdots,d_{I_{G_n}}\in\Delta}(r_{CG_1}(d_{I_{G_1}})\between\cdots\between r_{CG_n}(d_{I_{G_n}})\cdot G_{2})$

$G_{2}=GF_1\between\cdots\between GF_n\cdot G_{3}\quad (GF_1\%\cdots\% GF_n)$

$G_{3}=\sum_{d_{O_{G_1}},\cdots,d_{O_{G_n}}\in\Delta}(s_{CG_1}(d_{O_{G_1}})\between\cdots\between s_{CG_n}(d_{O_{G_n}})\cdot G)$

The sending action and the reading action of the same data through the same channel can communicate with each other, otherwise, will cause a deadlock $\delta$. We define the following
communication functions between the Control $i$ and the Guard.

$$\gamma(r_{CG_i}(d_{I_{G_i}}),s_{CG_i}(d_{I_{G_i}}))\triangleq c_{CG_i}(d_{I_{G_i}})$$

$$\gamma(r_{CG_i}(d_{O_{G_i}}),s_{CG_i}(d_{O_{G_i}}))\triangleq c_{CG_i}(d_{O_{G_i}})$$

Let all modules be in parallel, then the Scoped Locking pattern $C_1\cdots C_n\quad G$ can be presented by the following process term.

$\tau_I(\partial_H(\Theta(C_1\between \cdots\between C_n\between G)))=\tau_I(\partial_H(C_1\between \cdots\between C_n\between G))$

where $H=\{r_{CG_i}(d_{I_{G_i}}),s_{CG_i}(d_{I_{G_i}}),r_{CG_i}(d_{O_{G_i}}),s_{CG_i}(d_{O_{G_i}})|d_{I_i}, d_{O_i}, d_{I_{G_i}}, d_{O_{G_i}}\in\Delta\}$,

$I=\{c_{CG_i}(d_{I_{G_i}}),c_{CG_i}(d_{O_{G_i}}),CF_{i1},CF_{i2},GF_i|d_{I_i}, d_{O_i}, d_{I_{G_i}}, d_{O_{G_i}}\in\Delta\}$ for $1\leq i\leq n$.

Then we get the following conclusion on the Scoped Locking pattern.

\begin{theorem}[Correctness of the Scoped Locking pattern]
The Scoped Locking pattern $\tau_I(\partial_H(C_1\between \cdots\between C_n\between G))$ can exhibit desired external behaviors.
\end{theorem}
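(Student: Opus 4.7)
The plan is to follow the same template used for the preceding patterns (Wrapper Facade, Component Configurator, Reactor, etc.): expand the parallel composition $C_1\between\cdots\between C_n\between G$ using the APTC axioms, fold the communications $\gamma(r_{CG_i}(\cdot),s_{CG_i}(\cdot))\to c_{CG_i}(\cdot)$ into the encapsulated process under $\partial_H$, and then hide all internal communications and processing functions under $\tau_I$. First I would write down the symbolic equation for $\partial_H(C_1\between\cdots\between C_n\between G)$, then systematically push $\partial_H$ inward by $D4$--$D6$, using $D2$ to kill any unmatched halves of $r_{CG_i}/s_{CG_i}$ and $D1$ to preserve the external actions $r_{I_i}$ and $s_{O_i}$. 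As usual, the resulting process can be presented as a guarded linear recursive specification, and RSP then lets me conclude.

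The key new ingredient relative to earlier patterns is the race relation $\%$ imposed on $\{CF_{12},\ldots,CF_{n2}\}$ and on $\{GF_1,\ldots,GF_n\}$. I would handle this through the state operator $\lambda_s$ of Section~2.6, interpreting the lock as a state $s$ with $action(s,CF_{i2})$ and $effect(s,CF_{i2})$ defined so that only one $CF_{i2}$ is enabled at any instant. Because $CF_{i2}$ and $CF_{j2}$ ($i\neq j$) are in race, the transition rules for $\lambda_s$ in Table~\ref{TRForState} force them to be serialized, so inside $\lambda_s$ the parallel composition of the critical sections collapses to an alternative of sequential executions; the same applies to the $GF_i$'s in the Guard.

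After this serialization step, the remainder of the computation is purely mechanical. Each round of the recursion consists of the Control $i$ reading $d_{I_i}$, handing $d_{I_{G_i}}$ to the Guard via a $c_{CG_i}$ communication, the Guard replying with $d_{O_{G_i}}$ via another $c_{CG_i}$ communication, and finally $i$ emitting $d_{O_i}$; all six events of the handshake are in $I$ or correspond to internal processing functions, so $\tau_I$ renames them to $\tau$ and $B1$--$B3$ absorb them. What survives is
\[
\tau_I(\partial_H(C_1\between\cdots\between C_n\between G)) = \sum_{i=1}^{n}\sum_{d_{I_i},d_{O_i}\in\Delta}\bigl(r_{I_i}(d_{I_i})\cdot s_{O_i}(d_{O_i})\bigr)\cdot \tau_I(\partial_H(C_1\between\cdots\between C_n\between G)),
\]
with the clients served one at a time in accordance with the Scoped Locking semantics. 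Invoking RSP on this guarded linear recursion yields the claimed external behavior.

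The hard part will be justifying the serialization step cleanly. In the earlier patterns the parallel operator was used for genuinely independent actions, so $P6$ and the $c(\cdot)$ merges sufficed; here, however, the $\%$ annotation is essential — without it, two clients could concurrently enter their critical sections, which would falsify the pattern's intent. I would therefore be careful to present the expansion \emph{inside} a $\lambda_s$ context and to cite the congruence and soundness theorems for the state operator (Theorems~\ref{SState} and the congruence theorem for $\lambda_s$) at the point where the race between the $CF_{i2}$'s (and between the $GF_i$'s) is resolved into a sum of serial interleavings. Once that is in place, the remainder is a routine calculation of the kind spelled out in Section~\ref{app}, which is then elided as in the other theorems of this chapter.
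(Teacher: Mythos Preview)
Your derivation diverges from the paper in its final form and, more importantly, in where you locate the effect of the race annotations. The paper's claimed external behaviour is
\[
\tau_I(\partial_H(C_1\between\cdots\between C_n\between G))
=\sum_{d_{I_1},d_{O_1},\ldots,d_{I_n},d_{O_n}\in\Delta}
\bigl(r_{I_1}(d_{I_1})\parallel\cdots\parallel r_{I_n}(d_{I_n})\cdot s_{O_1}(d_{O_1})\parallel\cdots\parallel s_{O_n}(d_{O_n})\bigr)\cdot\tau_I(\partial_H(\cdots)),
\]
i.e.\ all $n$ external reads occur in parallel, then all $n$ external sends occur in parallel, whereas you conclude a one-client-per-round sum $\sum_i r_{I_i}\cdot s_{O_i}$. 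The discrepancy comes from over-applying the race mechanism: the annotations $CF_{12}\%\cdots\% CF_{n2}$ and $GF_1\%\cdots\% GF_n$ attach only to those internal processing actions, not to the external $r_{I_i}$ or $s_{O_i}$. Since every $CF_{i2}$ and every $GF_i$ lies in $I$, after $\tau_I$ they become $\tau$'s and are absorbed by $B1$--$B3$; whatever serialization the race imposes is therefore invisible externally, and the parallel structure of the surviving reads and sends is unchanged.

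A second, structural issue is that you wrap the whole composition in $\lambda_s$ to invoke the race rules of Table~\ref{TRForState}, but the process term in the theorem statement is $\tau_I(\partial_H(C_1\between\cdots\between C_n\between G))$ with no state operator present. The paper treats the $\%$ annotations as side conditions on the internal actions and otherwise proceeds exactly as in the earlier patterns (expand, encapsulate, abstract, cite Section~\ref{app}); it does not introduce $\lambda_s$ into this derivation. So your proposed detour through the state operator is not only unnecessary but changes the object being analysed. If you drop $\lambda_s$ and recognise that the raced actions are all hidden, the routine expansion you outline in your first paragraph already yields the paper's parallel external form.
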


\begin{proof}
Based on the above state transitions of the above modules, by use of the algebraic laws of APTC, we can prove that

$\tau_I(\partial_H(C_1\between \cdots\between C_n\between G))=\sum_{d_{I_1},d_{O_1},\cdots,d_{I_n},d_{O_n}\in\Delta}(r_{I_1}(d_{I_1})\parallel\cdots\parallel r_{I_n}(d_{I_n})\cdot s_{O_1}(d_{O_1})\parallel\cdots\parallel s_{O_n}(d_{O_n}))\cdot
\tau_I(\partial_H(C_1\between \cdots\between C_n\between G))$,

that is, the Scoped Locking pattern $\tau_I(\partial_H(C_1\between \cdots\between C_n\between G))$ can exhibit desired external behaviors.

For the details of proof, please refer to section \ref{app}, and we omit it.
\end{proof}

\subsubsection{Verification of the Strategized Locking Pattern}

The Strategized Locking pattern uses a component (the LockStrategy) to parameterize the synchronization for protecting the concurrent access to the critical section.
In Strategized Locking pattern, there are two classes of modules: The $n$ Components and the $n$ LockStrategies. The Component $i$ interacts with the
outside through the input channel $I_i$ and the output channel $O_i$; with the LockStrategy $i$ through the channel $CL_i$ for $1\leq i\leq n$, as illustrated in Figure \ref{StrL6}.

\begin{figure}
    \centering
    \includegraphics{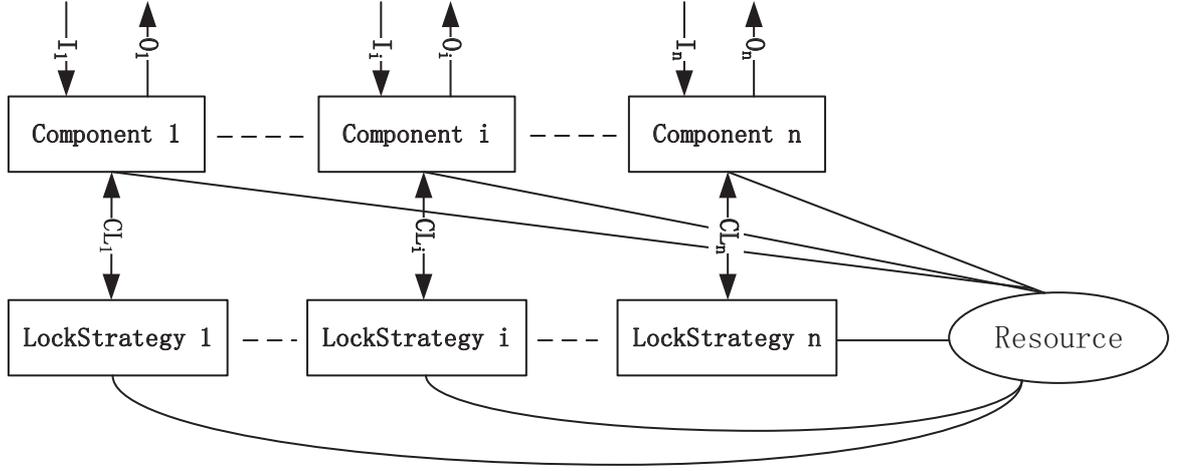}
    \caption{Strategized Locking pattern}
    \label{StrL6}
\end{figure}

The typical process is shown in Figure \ref{StrL6P} and as follows.

\begin{enumerate}
  \item The Component $i$ receives the input $d_{I_i}$ from the outside through the channel $I_i$ (the corresponding reading action is denoted $r_{I_i}(d_{I_i})$), then it processes the
  input and generates the input $d_{I_{L_i}}$ through a processing function $CF_{i1}$, and it sends the input to the LockStrategy through the channel $CL_i$
  (the corresponding sending action is denoted $s_{CL_i}(d_{I_{L_i}})$);
  \item The LockStrategy receives the input $d_{I_{L_i}}$ from the Component $i$ through the channel $CL_i$ (the corresponding reading action is denoted $r_{CL_i}(d_{I_{L_i}})$) for $1\leq i\leq n$,
  then processes the request and generates the output $d_{O_{L_i}}$ through a processing function $LF_i$, (note that, after the processing, a lock is acquired), and sends the output
  to the Component $i$ through the channel $CL_i$ (the corresponding sending action is denoted $s_{CL_i}(d_{O_{L_i}})$);
  \item The Component $i$ receives the output from the LockStrategy through the channel $CL_i$ (the corresponding reading action is denoted $r_{CL_i}(d_{O_{L_i}})$), then processes the output
  and generate the output $d_{O_i}$ through a processing function $CF_{i2}$, and sends the output to the outside through the channel $O_i$ (the corresponding sending action is denoted
  $s_{O_i}(d_{O_i})$).
\end{enumerate}

\begin{figure}
    \centering
    \includegraphics{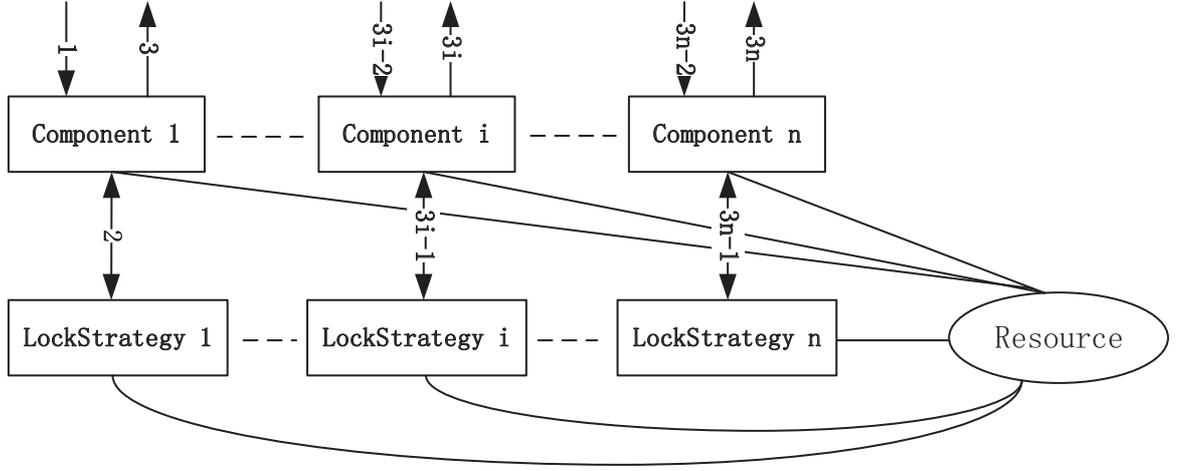}
    \caption{Typical process of Strategized Locking pattern}
    \label{StrL6P}
\end{figure}

In the following, we verify the Strategized Locking pattern. We assume all data elements $d_{I_i}$, $d_{O_i}$, $d_{I_{L_i}}$, $d_{O_{L_i}}$ for $1\leq i\leq n$ are from a finite set
$\Delta$.

The state transitions of the Component $i$ module
described by APTC are as follows.

$C_i=\sum_{d_{I_i}\in\Delta}(r_{I_i}(d_{I_i})\cdot C_{i_2})$

$C_{i_2}=CF_{i1}\cdot C_{i_3}$

$C_{i_3}=\sum_{d_{I_{L_i}}\in\Delta}(s_{CL_i}(d_{I_{L_i}})\cdot C_{i_4})$

$C_{i_4}=\sum_{d_{O_{L_i}}\in\Delta}(r_{CL_i}(d_{O_{L_i}})\cdot C_{i_5})$

$C_{i_5}=CF_{i2}\cdot C_{i_6}\quad (CF_{12}\%\cdots\% CF_{n2})$

$C_{i_6}=\sum_{d_{O_i}\in\Delta}(s_{O_i}(d_{O_i})\cdot C_{i})$

The state transitions of the LockStrategy $i$ module
described by APTC are as follows.

$L_i=\sum_{d_{I_{L_i}}\in\Delta}(r_{CL_i}(d_{I_{L_i}})\cdot L_{i_2})$

$L_{i_2}=LF_i\cdot L_{i_3}\quad (LF_1\%\cdots\% LF_n)$

$L_{i_3}=\sum_{d_{O_{L_i}}\in\Delta}(s_{CL_i}(d_{O_{L_i}})\cdot L_i)$

The sending action and the reading action of the same data through the same channel can communicate with each other, otherwise, will cause a deadlock $\delta$. We define the following
communication functions between the Component $i$ and the LockStrategy $i$.

$$\gamma(r_{CL_i}(d_{I_{L_i}}),s_{CL_i}(d_{I_{L_i}}))\triangleq c_{CL_i}(d_{I_{L_i}})$$

$$\gamma(r_{CL_i}(d_{O_{L_i}}),s_{CL_i}(d_{O_{L_i}}))\triangleq c_{CL_i}(d_{O_{L_i}})$$

Let all modules be in parallel, then the Strategized Locking pattern $C_1\cdots C_n\quad L_1\cdots L_n$ can be presented by the following process term.

$\tau_I(\partial_H(\Theta(C_1\between \cdots\between C_n\between L_1\between\cdots\between L_n)))=\tau_I(\partial_H(C_1\between \cdots\between C_n\between L_1\between\cdots\between L_n))$

where $H=\{r_{CL_i}(d_{I_{L_i}}),s_{CL_i}(d_{I_{L_i}}),r_{CL_i}(d_{O_{L_i}}),s_{CL_i}(d_{O_{L_i}})|d_{I_i}, d_{O_i}, d_{I_{L_i}}, d_{O_{L_i}}\in\Delta\}$,

$I=\{c_{CL_i}(d_{I_{L_i}}),c_{CL_i}(d_{O_{L_i}}),CF_{i1},CF_{i2},LF_i|d_{I_i}, d_{O_i}, d_{I_{L_i}}, d_{O_{L_i}}\in\Delta\}$ for $1\leq i\leq n$.

Then we get the following conclusion on the Strategized Locking pattern.

\begin{theorem}[Correctness of the Strategized Locking pattern]
The Strategized Locking pattern $\tau_I(\partial_H(C_1\between \cdots\between C_n\between L_1\between\cdots\between L_n))$ can exhibit desired external behaviors.
\end{theorem}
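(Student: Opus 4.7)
The plan is to follow the same template used for all prior pattern correctness theorems in this chapter (e.g., the Scoped Locking pattern and the Wrapper Facade pattern). The goal is to show that, modulo the abstraction of internal processing functions and internal communications, the composite term reduces to a guarded recursive specification whose external behavior is exactly the parallel composition of request--response cycles $r_{I_i}(d_{I_i})\cdot s_{O_i}(d_{O_i})$ over $1\le i\le n$.

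First, I would expand $C_i\between L_i$ using axiom $P1$ ($x\between y=x\parallel y + x\mid y$) together with $P6$ and $C14$, together with the communication functions $\gamma(r_{CL_i}(d_{I_{L_i}}),s_{CL_i}(d_{I_{L_i}}))=c_{CL_i}(d_{I_{L_i}})$ and $\gamma(r_{CL_i}(d_{O_{L_i}}),s_{CL_i}(d_{O_{L_i}}))=c_{CL_i}(d_{O_{L_i}})$. Applying $\partial_H$ then kills the pure parallel summands, since each non-communicating occurrence of $r_{CL_i}$ or $s_{CL_i}$ is in $H$ and is absorbed by $A7$ ($\delta\cdot x=\delta$) and $A6$ ($x+\delta=x$). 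What survives is, for each $i$, a sequential trace $r_{I_i}(d_{I_i})\cdot CF_{i1}\cdot c_{CL_i}(d_{I_{L_i}})\cdot LF_i\cdot c_{CL_i}(d_{O_{L_i}})\cdot CF_{i2}\cdot s_{O_i}(d_{O_i})$.

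Next, I would set $\partial_H(C_1\between\cdots\between C_n\between L_1\between\cdots\between L_n)=\langle X_1|E\rangle$ for an appropriate guarded linear recursive specification $E$, using $RDP$ and $RSP$ from Table~\ref{RDPRSP} to justify the fixed-point equations; this is exactly the maneuver already used, for example, in the ABP verification in section~\ref{app}. Finally I would apply $\tau_I$ with $I$ containing every internal processing function $CF_{i1},CF_{i2},LF_i$ and every internal communication $c_{CL_i}(d_{I_{L_i}}),c_{CL_i}(d_{O_{L_i}})$, so that $B1$ ($e\cdot\tau=e$) collapses each internal step into its neighbors, yielding
\begin{eqnarray}
\tau_I(\partial_H(C_1\between\cdots\between C_n\between L_1\between\cdots\between L_n)) &=& \sum_{d_{I_1},d_{O_1},\cdots,d_{I_n},d_{O_n}\in\Delta}(r_{I_1}(d_{I_1})\cdot s_{O_1}(d_{O_1})\parallel\cdots\nonumber\\
&&\parallel r_{I_n}(d_{I_n})\cdot s_{O_n}(d_{O_n}))\cdot\tau_I(\partial_H(C_1\between\cdots\between C_n\between L_1\between\cdots\between L_n)).\nonumber
\end{eqnarray}
Soundness of this equality modulo rooted branching truly concurrent bisimulation is provided by Theorems~\ref{SAPTCR}, \ref{SAPTCABS}, and \ref{SCFAR}.

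The only non-routine step, and the one I expect to be the main obstacle, is the handling of the race-condition annotations $(CF_{12}\%\cdots\%CF_{n2})$ and $(LF_1\%\cdots\%LF_n)$ in the state-transition definitions. These annotations indicate that the critical sections guarded by the lock strategies must be serialized rather than executed as a true-concurrency step, and hence the state-operator transition rules of Table~\ref{TRForState} (the rules for $\lambda_s(x\parallel y)$ under $e_1\%e_2$) should be in force during the expansion. In practice this means that in expanding the parallel composition one must, at each point where two critical sections would otherwise be scheduled concurrently, use the race-condition rule to serialize them into one of the two possible orderings, producing an alternative composition of interleavings rather than a synchronous step; this does not affect the externally observable trace of each individual request--response cycle, but it is the point at which the proof deviates in spirit (not in form) from the fully-concurrent patterns such as the Wrapper Facade. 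Once this serialization is granted, the argument proceeds exactly as in section~\ref{app} and the detailed algebraic manipulation may be omitted by analogy.
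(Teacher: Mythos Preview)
Your proposal is correct and follows the same route as the paper: expand with the APTC axioms, encapsulate to force the $c_{CL_i}$ communications, cast the result as a guarded linear recursion, and then abstract the internal actions. The paper's own proof is in fact even terser---it merely asserts the final equation and refers the reader to section~\ref{app}---so your write-up is more explicit than the original, not less.

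Two small remarks. First, your displayed target term groups the external actions as $(r_{I_1}\cdot s_{O_1})\parallel\cdots\parallel(r_{I_n}\cdot s_{O_n})$, whereas the paper states the result as $(r_{I_1}\parallel\cdots\parallel r_{I_n})\cdot(s_{O_1}\parallel\cdots\parallel s_{O_n})$. These are the same term in APTC: iterate axiom $P6$ and note that $s_{O_i}\between s_{O_j}=s_{O_i}\parallel s_{O_j}$ since no communication is defined between distinct outputs, so $A6$ removes the $\delta$ summand from the communication merge. You may want to normalize to the paper's form for consistency. Second, your treatment of the $\%$ race-condition annotations is more careful than the paper's; the paper simply records the annotations on $CF_{i2}$ and $LF_i$ and then silently passes over them in the proof, whereas you correctly flag that they invoke the state-operator rules of Table~\ref{TRForState} and force an interleaving of the critical sections that is invisible after $\tau_I$.
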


\begin{proof}
Based on the above state transitions of the above modules, by use of the algebraic laws of APTC, we can prove that

$\tau_I(\partial_H(C_1\between \cdots\between C_n\between L_1\between\cdots\between L_n))=\sum_{d_{I_1},d_{O_1},\cdots,d_{I_n},d_{O_n}\in\Delta}(r_{I_1}(d_{I_1})\parallel\cdots\parallel r_{I_n}(d_{I_n})\cdot s_{O_1}(d_{O_1})\parallel\cdots\parallel s_{O_n}(d_{O_n}))\cdot
\tau_I(\partial_H(C_1\between \cdots\between C_n\between L_1\between\cdots\between L_n))$,

that is, the Strategized Locking pattern $\tau_I(\partial_H(C_1\between \cdots\between C_n\between L_1\between\cdots\between L_n))$ can exhibit desired external behaviors.

For the details of proof, please refer to section \ref{app}, and we omit it.
\end{proof}

\subsubsection{Verification of the Double-Checked Locking Optimization Pattern}

The Double-Checked Locking Optimization pattern ensures that a lock is acquired in a thread-safe manner.
In Double-Checked Locking Optimization pattern, there are two classes of modules: The $n$ Threads and the Singleton Lock. The Thread $i$ interacts with the
outside through the input channel $I_i$ and the output channel $O_i$; with the Singleton Lock through the channel $TS_i$ for $1\leq i\leq n$, as illustrated in Figure \ref{DCLO6}.

\begin{figure}
    \centering
    \includegraphics{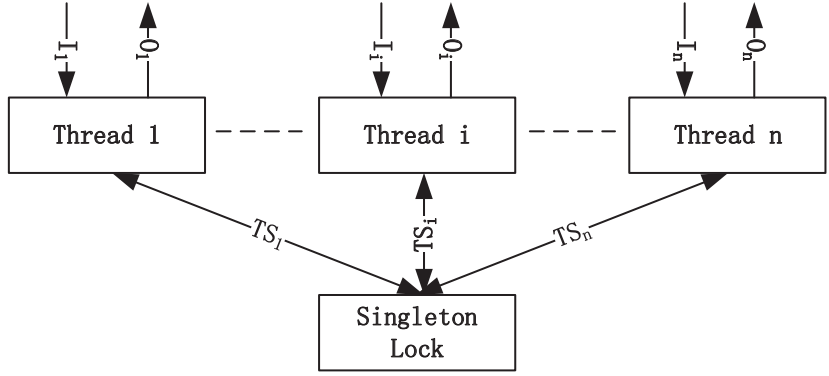}
    \caption{Double-Checked Locking Optimization pattern}
    \label{DCLO6}
\end{figure}

The typical process is shown in Figure \ref{DCLO6P} and as follows.

\begin{enumerate}
  \item The Thread $i$ receives the input $d_{I_i}$ from the outside through the channel $I_i$ (the corresponding reading action is denoted $r_{I_i}(d_{I_i})$), then it processes the
  input and generates the input $d_{I_{S_i}}$ through a processing function $TF_{i1}$, and it sends the input to the Singleton Lock through the channel $TS_i$
  (the corresponding sending action is denoted $s_{TS_i}(d_{I_{S_i}})$);
  \item The Singleton Lock receives the input $d_{I_{S_i}}$ from the Thread $i$ through the channel $TS_i$ (the corresponding reading action is denoted $r_{TS_i}(d_{I_{S_i}})$) for $1\leq i\leq n$,
  then processes the request and generates the output $d_{O_{S_i}}$ through a processing function $SF_i$, (note that, after the processing, a lock is acquired), and sends the output
  to the Thread $i$ through the channel $TS_i$ (the corresponding sending action is denoted $s_{TS_i}(d_{O_{S_i}})$);
  \item The Thread $i$ receives the output from the Singleton Lock through the channel $TS_i$ (the corresponding reading action is denoted $r_{TS_i}(d_{O_{S_i}})$), then processes the output
  and generate the output $d_{O_i}$ through a processing function $TF_{i2}$ (accessing the resource), and sends the output to the outside through the channel $O_i$ (the corresponding sending action is denoted
  $s_{O_i}(d_{O_i})$).
\end{enumerate}

\begin{figure}
    \centering
    \includegraphics{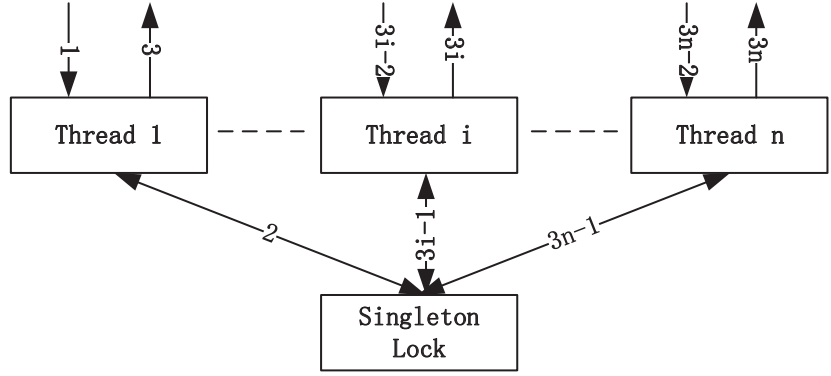}
    \caption{Typical process of Double-Checked Locking Optimization pattern}
    \label{DCLO6P}
\end{figure}

In the following, we verify the Double-Checked Locking Optimization pattern. We assume all data elements $d_{I_i}$, $d_{O_i}$, $d_{I_{S_i}}$, $d_{O_{S_i}}$ for $1\leq i\leq n$ are from a finite set
$\Delta$.

The state transitions of the Thread $i$ module
described by APTC are as follows.

$T_i=\sum_{d_{I_i}\in\Delta}(r_{I_i}(d_{I_i})\cdot T_{i_2})$

$T_{i_2}=TF_{i1}\cdot T_{i_3}$

$T_{i_3}=\sum_{d_{I_{S_i}}\in\Delta}(s_{TS_i}(d_{I_{S_i}})\cdot T_{i_4})$

$T_{i_4}=\sum_{d_{O_{S_i}}\in\Delta}(r_{TS_i}(d_{O_{S_i}})\cdot T_{i_5})$

$T_{i_5}=TF_{i2}\cdot T_{i_6}\quad (TF_{12}\%\cdots\%TF_{n2})$

$T_{i_6}=\sum_{d_{O_i}\in\Delta}(s_{O_i}(d_{O_i})\cdot T_{i})$

The state transitions of the Singleton Lock module
described by APTC are as follows.

$S=\sum_{d_{I_{S_1}},\cdots,d_{I_{S_n}}\in\Delta}(r_{TS_1}(d_{I_{S_1}})\between\cdots\between r_{TS_n}(d_{I_{S_n}})\cdot S_{2})$

$S_{2}=SF_1\between\cdots\between SF_n\cdot S_{3}\quad (SF_1\%\cdots\% SF_n)$

$S_{3}=\sum_{d_{O_{S_1}},\cdots,d_{O_{S_n}}\in\Delta}(s_{TS_1}(d_{O_{S_1}})\between\cdots\between s_{TS_n}(d_{O_{S_n}})\cdot S)$

The sending action and the reading action of the same data through the same channel can communicate with each other, otherwise, will cause a deadlock $\delta$. We define the following
communication functions between the Thread $i$ and the Singleton Lock.

$$\gamma(r_{TS_i}(d_{I_{S_i}}),s_{TS_i}(d_{I_{S_i}}))\triangleq c_{TS_i}(d_{I_{S_i}})$$

$$\gamma(r_{TS_i}(d_{O_{S_i}}),s_{TS_i}(d_{O_{S_i}}))\triangleq c_{TS_i}(d_{O_{S_i}})$$

Let all modules be in parallel, then the Double-Checked Locking Optimization pattern $T_1\cdots T_n\quad S$ can be presented by the following process term.

$\tau_I(\partial_H(\Theta(T_1\between \cdots\between T_n\between S)))=\tau_I(\partial_H(T_1\between \cdots\between T_n\between S))$

where $H=\{r_{TS_i}(d_{I_{S_i}}),s_{TS_i}(d_{I_{S_i}}),r_{TS_i}(d_{O_{S_i}}),s_{TS_i}(d_{O_{S_i}})|d_{I_i}, d_{O_i}, d_{I_{S_i}}, d_{O_{S_i}}\in\Delta\}$,

$I=\{c_{TS_i}(d_{I_{S_i}}),c_{TS_i}(d_{O_{S_i}}),TF_{i1},TF_{i2},SF_i|d_{I_i}, d_{O_i}, d_{I_{S_i}}, d_{O_{S_i}}\in\Delta\}$ for $1\leq i\leq n$.

Then we get the following conclusion on the Double-Checked Locking Optimization pattern.

\begin{theorem}[Correctness of the Double-Checked Locking Optimization pattern]
The Double-Checked Locking Optimization pattern $\tau_I(\partial_H(T_1\between \cdots\between T_n\between S))$ can exhibit desired external behaviors.
\end{theorem}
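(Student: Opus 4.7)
The plan is to follow the same template used throughout this chapter: expand the parallel composition using the APTC laws, encapsulate away the send/receive pairs leaving only the communication merges, and then abstract over the internal actions to expose the desired external interface. Concretely, I would start from $T_1 \between \cdots \between T_n \between S$ and repeatedly apply axioms $P1$--$P10$ together with the communication axioms $C11$--$C18$, noting that $r_{TS_i}(d)\parallel s_{TS_i}(d)$ collapses under $\partial_H$ to the communication $c_{TS_i}(d)$, while any mismatched combination yields $\delta$ and is absorbed by $A6,A7$. This mirrors the expansion steps shown in section~\ref{app} for the ABP protocol.

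Next I would set $\tau_I(\partial_H(T_1\between\cdots\between T_n\between S)) = \langle X_1 | E\rangle$ for a suitable guarded linear recursive specification $E$ whose equations describe the cycle: each $T_i$ reads $d_{I_i}$ on $I_i$, performs its handshake with $S$ through $c_{TS_i}$ twice (once for $d_{I_{S_i}}$ and once for $d_{O_{S_i}}$), then emits $d_{O_i}$ on $O_i$. After applying $\tau_I$, all the internal $c_{TS_i}$ communications and the processing functions $TF_{i1}, TF_{i2}, SF_i$ become $\tau$, and branching laws $B1,B2$ eliminate them so that only the external actions $r_{I_i}$ and $s_{O_i}$ remain. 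Invoking RSP (Theorem~\ref{CCFAR}) with the guarded linear recursion then identifies this with the target expression
\[
\sum_{d_{I_1},d_{O_1},\ldots,d_{I_n},d_{O_n}\in\Delta}\bigl(r_{I_1}(d_{I_1})\parallel\cdots\parallel r_{I_n}(d_{I_n})\cdot s_{O_1}(d_{O_1})\parallel\cdots\parallel s_{O_n}(d_{O_n})\bigr)\cdot \tau_I(\partial_H(T_1\between\cdots\between T_n\between S)).
\]

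The main obstacle is treating the race-condition annotations $SF_1\%\cdots\%SF_n$ and $TF_{12}\%\cdots\%TF_{n2}$ correctly. Under the state-operator transition rules of Table~\ref{TRForState}, the predicate $e_i\%e_j$ forbids $SF_i$ and $SF_j$ from firing as a step, so the superficially concurrent expansion $SF_1\between\cdots\between SF_n$ must be resolved into an interleaving in which exactly one $SF_i$ proceeds at a time, reflecting the lock semantics of the pattern. I would therefore insert an explicit $\lambda_s$ wrapping (or equivalently reason at the level of the race-condition rules) to witness the mutual exclusion, and then observe that since all $SF_i$ and $TF_{i2}$ are in $I$ they are renamed to $\tau$ anyway, so the choice of interleaving is invisible externally and does not disturb the RSP-based identification above.

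Finally, as in all preceding theorems of this chapter, I would defer the routine algebraic manipulations to the reference in section~\ref{app}, and state the conclusion that the pattern exhibits the desired external behavior of receiving $n$ parallel requests on the $I_i$ and emitting $n$ parallel responses on the $O_i$ while internally enforcing serialized access to the shared singleton lock.
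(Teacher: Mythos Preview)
Your proposal is correct and follows essentially the same approach as the paper. In fact, the paper's own proof is a two-line stub: it merely asserts the target identity
\[
\tau_I(\partial_H(T_1\between \cdots\between T_n\between S))=\sum_{d_{I_1},d_{O_1},\ldots,d_{I_n},d_{O_n}\in\Delta}(r_{I_1}(d_{I_1})\parallel\cdots\parallel r_{I_n}(d_{I_n})\cdot s_{O_1}(d_{O_1})\parallel\cdots\parallel s_{O_n}(d_{O_n}))\cdot\tau_I(\partial_H(T_1\between \cdots\between T_n\between S))
\]
and defers all details to section~\ref{app}, so your expand/encapsulate/abstract/RSP outline and your explicit treatment of the $\%$ race-condition annotations are already more detailed than what the paper provides.
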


\begin{proof}
Based on the above state transitions of the above modules, by use of the algebraic laws of APTC, we can prove that

$\tau_I(\partial_H(T_1\between \cdots\between T_n\between S))=\sum_{d_{I_1},d_{O_1},\cdots,d_{I_n},d_{O_n}\in\Delta}(r_{I_1}(d_{I_1})\parallel\cdots\parallel r_{I_n}(d_{I_n})\cdot s_{O_1}(d_{O_1})\parallel\cdots\parallel s_{O_n}(d_{O_n}))\cdot
\tau_I(\partial_H(T_1\between \cdots\between T_n\between S))$,

that is, the Double-Checked Locking Optimization pattern $\tau_I(\partial_H(T_1\between \cdots\between T_n\between S))$ can exhibit desired external behaviors.

For the details of proof, please refer to section \ref{app}, and we omit it.
\end{proof}

\subsection{Concurrency Patterns}\label{C6}

In this subsection, we verify concurrency related patterns, including the Active Object pattern, the Monitor Object pattern, the Half-Sync/Harf-Async pattern, the Leader/Followers pattern,
and the Thread-Specific Storage pattern.

\subsubsection{Verification of the Active Object Pattern}

The Active Object pattern is used to decouple the method request and method execution of an object. In this pattern, there are a Proxy module, a Scheduler module,
and $n$ Method Request modules and $n$ Servant modules. The Servant is used to implement concrete
computation, the Method Request is used to encapsulate a Servant, and the Scheduler is used to manage Method Requests. The Proxy
module interacts with outside through the channels $I$ and $O$, and with the Scheduler through the channels $I_{PS}$ and $O_{PS}$. The Scheduler interacts with
Method Request $i$ (for $1\leq i\leq n$) through the channels $I_{SM_i}$ and $O_{SM_i}$, and the Method Request $i$ interacts with the Servant $i$ through the channels $I_{MS_i}$ and $O_{CS_i}$,
as illustrated in Figure \ref{AO6}.

\begin{figure}
    \centering
    \includegraphics{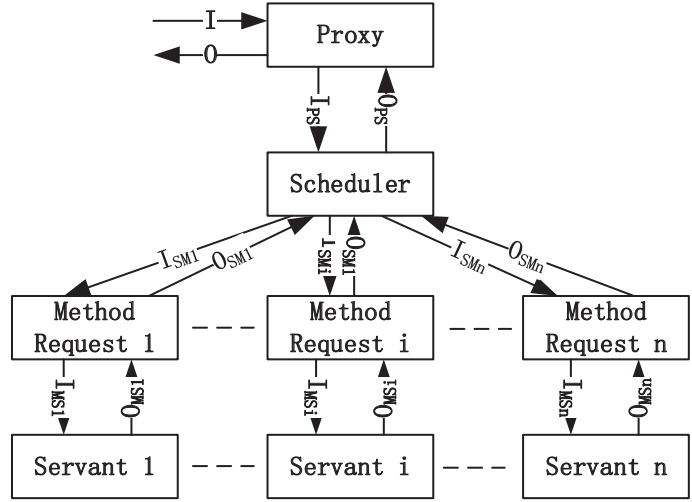}
    \caption{Active Object pattern}
    \label{AO6}
\end{figure}

The typical process of the Active Object pattern is shown in Figure \ref{AO6P} and as follows.

\begin{enumerate}
  \item The Proxy receives the request $d_I$ from outside through the channel $I$ (the corresponding reading action is denoted $r_I(d_I)$), then processes the request through a processing
  function $PF_1$ and generates the request $d_{I_{Sh}}$, and sends the $d_{I_{Sh}}$ to the Scheduler through the channel $I_{PS}$ (the corresponding sending action is denoted $s_{I_{PS}}(d_{I_{Sh}})$);
  \item The Scheduler receives the request $d_{I_{Sh}}$ from the Proxy through the channel $I_{PS}$ (the corresponding reading action is denoted $r_{I_{PS}}(d_{I_{Sh}})$), then processes the request through a processing
  function $ShF_1$ and generates the request $d_{I_{M_i}}$, and sends the $d_{I_{M_i}}$ to the Method Request $i$ through the channel $I_{SM_i}$ (the corresponding sending action is denoted $s_{I_{SM_i}}(d_{I_{M_i}})$);
  \item The Method Request $i$ receives the request $d_{I_{M_i}}$ from the Scheduler through the channel $I_{SM_i}$ (the corresponding reading action is denoted $r_{I_{SM_i}}(d_{I_{M_i}})$), then processes the
  request through a processing function $MF_{i1}$ and generates the request $d_{I_{S_i}}$, and sends the request to the Servant $i$ through the channel $I_{MS_i}$ (the corresponding sending
  action is denoted $s_{I_{MS_i}}(d_{I_{S_i}})$);
  \item The Servant $i$ receives the request $d_{I_{S_i}}$ from the Method Request $i$ through the channel $I_{MS_i}$ (the corresponding reading action is denoted $r_{I_{MS_i}}(d_{I_{S_i}})$), then processes the
  request through a processing function $SF_i$ and generates the response $d_{O_{S_i}}$, and sends the response to the Method Request through the channel $O_{MS_i}$ (the corresponding sending
  action is denoted $s_{O_{MS_i}}(d_{O_{S_i}})$);
  \item The Method Request $i$ receives the response $d_{O_{S_i}}$ from the Servant $i$ through the channel $O_{MS_i}$ (the corresponding reading action is denoted $r_{O_{MS_i}}(d_{O_{S_i}})$), then processes the
  request through a processing function $MF_{i2}$ and generates the response $d_{O_{M_i}}$, and sends the response to the Scheduler through the channel $O_{SM_i}$ (the corresponding sending
  action is denoted $s_{O_{SM_i}}(d_{O_{M_i}})$);
  \item The Scheduler receives the response $d_{O_{M_i}}$ from the Method Request $i$ through the channel $O_{SM_i}$ (the corresponding reading action is denoted $r_{O_{SM_i}}(d_{O_{M_i}})$),
  then processes the response and generate the response $d_{O_{Sh}}$ through a processing function $ShF_2$, and sends $d_{O_{Sh}}$ to the Proxy through the channel $O_{PS}$ (the corresponding
  sending action is denoted $s_{O_{PS}}(d_{O_{Sh}})$);
  \item The Proxy receives the response $d_{O_{Sh}}$ from the Scheduler through the channel $O_{PS}$ (the corresponding reading action is denoted $r_{O_{PS}}(d_{O_{Sh}})$), then processes the request through a processing
  function $PF_2$ and generates the request $d_{O}$, and sends the $d_{O}$ to the outside through the channel $O$ (the corresponding sending action is denoted $s_{O}(d_{O})$).
\end{enumerate}

\begin{figure}
    \centering
    \includegraphics{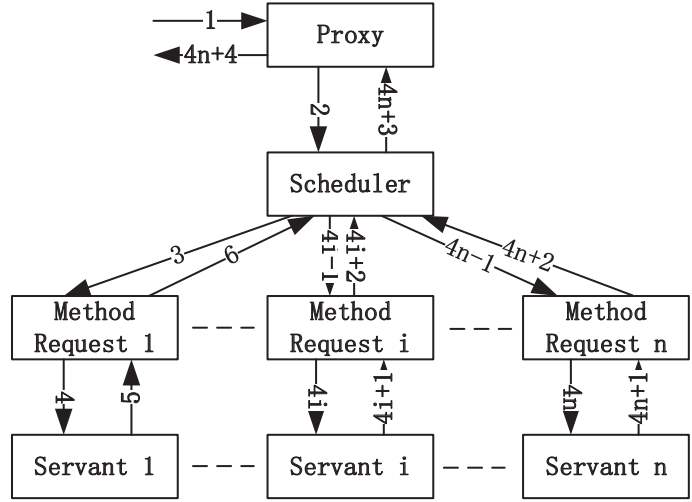}
    \caption{Typical process of Active Object pattern}
    \label{AO6P}
\end{figure}

In the following, we verify the Active Object pattern. We assume all data elements $d_{I}$, $d_{I_{Sh}}$, $d_{I_{M_i}}$, $d_{I_{S_i}}$, $d_{O_{S_{i}}}$, $d_{O_{M_i}}$, $d_{O_{Sh}}$, $d_{O}$ (for $1\leq i\leq n$) are from a finite set
$\Delta$.

The state transitions of the Proxy module
described by APTC are as follows.

$P=\sum_{d_{I}\in\Delta}(r_{I}(d_{I})\cdot P_{2})$

$P_{2}=PF_1\cdot P_{3}$

$P_{3}=\sum_{d_{I_{Sh}}\in\Delta}(s_{I_{PS}}(d_{I_{Sh}})\cdot P_{4})$

$P_{4}=\sum_{d_{O_{Sh}}\in\Delta}(r_{O_{PS}}(d_{O_{Sh}})\cdot P_{5})$

$P_5=PF_2\cdot P_6$

$P_{6}=\sum_{d_{O}\in\Delta}(s_{O}(d_{O})\cdot P)$

The state transitions of the Scheduler module
described by APTC are as follows.

$Sh=\sum_{d_{I_{Sh}}\in\Delta}(r_{I_{PS}}(d_{I_{Sh}})\cdot Sh_{2})$

$Sh_{2}=ShF_1\cdot Sh_{3}$

$Sh_{3}=\sum_{d_{I_{M_1}},\cdots,d_{I_{M_n}}\in\Delta}(s_{I_{SM_1}}(d_{I_{M_1}})\between\cdots\between s_{I_{SM_n}}(d_{I_{M_n}})\cdot Sh_{4})$

$Sh_{4}=\sum_{d_{O_{M_{1}}},\cdots,d_{O_{M_{n}}}\in\Delta}(r_{O_{SM_1}}(d_{O_{M_{1}}})\between\cdots\between r_{O_{SM_n}}(d_{O_{M_{n}}})\cdot Sh_{5})$

$Sh_5=ShF_2\cdot Sh_6$

$Sh_{6}=\sum_{d_{O_{Sh}}\in\Delta}(s_{O_{PS}}(d_{O_{Sh}})\cdot Sh)$

The state transitions of the Method Request $i$ described by APTC are as follows.

$M_i=\sum_{d_{I_{M_i}}\in\Delta}(r_{I_{SM_i}}(d_{I_{M_i}})\cdot M_{i_2})$

$M_{i_2}=MF_{i1}\cdot M_{i_3}$

$M_{i_3}=\sum_{d_{I_{S_{i}}}\in\Delta}(s_{I_{MS_{i}}}(d_{I_{S_{i}}})\cdot M_{i_4})$

$M_{i_4}=\sum_{d_{O_{S_i}}\in\Delta}(r_{O_{MS_i}}(d_{O_{S_i}})\cdot M_{i_5})$

$M_{i_5}=MF_{i2}\cdot M_{i_6}$

$M_{i_6}=\sum_{d_{O_{M_{i}}}\in\Delta}(s_{O_{SM_{i}}}(d_{O_{M_{i}}})\cdot M_i)$

The state transitions of the Servant $i$ described by APTC are as follows.

$S_i=\sum_{d_{I_{S_i}}\in\Delta}(r_{I_{CS_i}}(d_{I_{S_i}})\cdot S_{i_2})$

$S_{i_2}=SF_{i}\cdot S_{i_3}$

$S_{i_3}=\sum_{d_{O_{S_{i}}}\in\Delta}(s_{O_{CS_{i}}}(d_{O_{S_{i}}})\cdot S_i)$

The sending action and the reading action of the same data through the same channel can communicate with each other, otherwise, will cause a deadlock $\delta$. We define the following
communication functions of between the Proxy the Scheduler.

$$\gamma(r_{I_{PS}}(d_{I_{Sh}}),s_{I_{PS}}(d_{I_{Sh}}))\triangleq c_{I_{PS}}(d_{I_{Sh}})$$

$$\gamma(r_{O_{PS}}(d_{O_{Sh}}),s_{O_{PS}}(d_{O_{Sh}}))\triangleq c_{O_{PS}}(d_{O_{Sh}})$$

There are two communication function between the Scheduler  and the Method Request $i$ for $1\leq i\leq n$.

$$\gamma(r_{I_{SM_i}}(d_{I_{M_i}}),s_{I_{SM_i}}(d_{I_{M_i}}))\triangleq c_{I_{SM_i}}(d_{I_{M_i}})$$

$$\gamma(r_{O_{SM_i}}(d_{O_{M_{i}}}),s_{O_{SM_i}}(d_{O_{M_{i}}}))\triangleq c_{O_{SM_i}}(d_{O_{M_{i}}})$$

There are two communication function between the Servant $i$ and the Method Request $i$ for $1\leq i\leq n$.

$$\gamma(r_{I_{MS_i}}(d_{I_{S_i}}),s_{I_{MS_i}}(d_{I_{S_i}}))\triangleq c_{I_{MS_i}}(d_{I_{S_i}})$$

$$\gamma(r_{O_{MS_{i}}}(d_{O_{S_{i}}}),s_{O_{MS_{i}}}(d_{O_{S_{i}}}))\triangleq c_{O_{MS_{i}}}(d_{O_{S_{i}}})$$

Let all modules be in parallel, then the Active Object pattern

$P\quad Sh \quad M_1\cdots\quad M_i\quad\cdots M_n\quad S_1\cdots S_i\cdots S_n$

can be presented by the following process term.

$\tau_I(\partial_H(\Theta(P\between Sh\between M_1\between\cdots\between M_i\between\cdots\between M_n\between S_1\between\cdots\between S_i\between\cdots\between S_n)))=\tau_I(\partial_H(P\between Sh\between M_1\between\cdots\between M_i\between\cdots\between M_n\between S_1\between\cdots\between S_i\between\cdots\between S_n))$

where $H=\{r_{I_{PS}}(d_{I_{Sh}}),s_{I_{PS}}(d_{I_{Sh}}),r_{O_{PS}}(d_{O_{Sh}}),s_{O_{PS}}(d_{O_{Sh}}),r_{I_{SM_i}}(d_{I_{M_i}}),s_{I_{SM_i}}(d_{I_{M_i}}),\\
r_{O_{SM_i}}(d_{O_{M_{i}}}),s_{O_{SM_i}}(d_{O_{M_{i}}}),r_{I_{MS_i}}(d_{I_{S_i}}),s_{I_{MS_i}}(d_{I_{S_i}}),r_{O_{MS_{i}}}(d_{O_{S_{i}}}),s_{O_{MS_{i}}}(d_{O_{S_{i}}})\\
|d_{I}, d_{I_{Sh}}, d_{I_{M_i}}, d_{I_{S_i}}, d_{O_{S_{i}}}, d_{O_{M_i}}, d_{O_{Sh}}, d_{O}\in\Delta\}$ for $1\leq i\leq n$,

$I=\{c_{I_{PS}}(d_{I_{Sh}}),c_{O_{PS}}(d_{O_{Sh}}),c_{I_{SM_i}}(d_{I_{M_i}}),c_{O_{SM_i}}(d_{O_{M_{i}}}),c_{I_{MS_i}}(d_{I_{S_i}}),\\
c_{O_{MS_{i}}}(d_{O_{S_{i}}}),PF_1,PF_2,ShF_1,ShF_2,MF_{i1},MF_{i2},SF_{i}\\
|d_{I}, d_{I_{Sh}}, d_{I_{M_i}}, d_{I_{S_i}}, d_{O_{S_{i}}}, d_{O_{M_i}}, d_{O_{Sh}}, d_{O}\in\Delta\}$ for $1\leq i\leq n$.

Then we get the following conclusion on the Active Object pattern.

\begin{theorem}[Correctness of the Active Object pattern]
The Active Object pattern $\tau_I(\partial_H(P\between Sh\between M_1\between\cdots\between M_i\between\cdots\between M_n\between S_1\between\cdots\between S_i\between\cdots\between S_n))$ can exhibit desired external behaviors.
\end{theorem}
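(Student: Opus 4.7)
The plan is to mirror the pattern of proof used for the preceding design and concurrency patterns (e.g., the Command Processor and Broker proofs), which rely on expanding the parallel composition via the APTC algebraic laws, applying the encapsulation operator $\partial_H$ to force communication along the matched send/receive pairs (and deadlock all unmatched ones), and finally applying $\tau_I$ to abstract away the internal processing functions and internal communication actions. First I would instantiate the given state transitions of $P$, $Sh$, $M_i$, $S_i$ and use axiom $P1$ ($x\between y = x\parallel y + x\mid y$) together with $P4$--$P10$, $C11$--$C18$ to expand the whole parallel composition step by step, starting from the initial action $r_I(d_I)$ on the Proxy side (which is the only enabled external read at the outset, since all other modules begin with a read on an internal channel).

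Next I would trace the unique chain of enforced communications that $\partial_H$ leaves alive: $r_I(d_I)$, then the internal $c_{I_{PS}}(d_{I_{Sh}})$ between Proxy and Scheduler, then the parallel forks $c_{I_{SM_i}}(d_{I_{M_i}})$ into each Method Request $i$, then each $c_{I_{MS_i}}(d_{I_{S_i}})$ into the Servants, followed by the symmetric return chain $c_{O_{MS_i}}(d_{O_{S_i}})$, $c_{O_{SM_i}}(d_{O_{M_i}})$, $c_{O_{PS}}(d_{O_{Sh}})$, and finally $s_O(d_O)$. All mismatched sends and receives are mapped by $\partial_H$ to $\delta$ using $D1,D2$ and absorbed by $A6,A7$. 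Once the internal synchronizations and all processing-function actions $PF_1,PF_2,ShF_1,ShF_2,MF_{i1},MF_{i2},SF_i$ lie in $I$, the abstraction operator $\tau_I$ collapses them via $TI1$--$TI6$ and $B1$, leaving only $r_I(d_I)$ and $s_O(d_O)$ visible.

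After the expansion I would introduce a guarded linear recursive specification $E$ whose single equation has the form $X = \sum_{d_I,d_O\in\Delta}(r_I(d_I)\cdot s_O(d_O))\cdot X$, show that $\tau_I(\partial_H(P\between Sh\between M_1\between\cdots\between M_n\between S_1\between\cdots\between S_n))$ is a solution of this equation, and invoke $RSP$ to conclude that it equals $\langle X|E\rangle$. This is exactly the desired external behavior, and by Theorem~\ref{SAPTCABS} (soundness of $APTC_\tau$ with guarded linear recursion) the equivalence holds modulo $\approx_{rbs}$, $\approx_{rbp}$, and $\approx_{rbhp}$.

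The main obstacle will be bookkeeping rather than conceptual: with $2n+2$ parallel components arranged in a three-level pipeline, the expansion step produces many intermediate terms, and one must verify carefully that $\partial_H$ genuinely leaves only the intended sequential-then-parallel-then-sequential pattern of communications alive, in particular that the parallel forks at the Scheduler and the parallel joins on return correctly interleave without introducing spurious $\delta$-summands or crossed communications between different indices $i\neq j$. Since the body of this argument is identical in structure to the Command Processor proof, I would simply cite section~\ref{app} for the detailed expansion as the preceding theorems do.
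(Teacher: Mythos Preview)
Your proposal is correct and follows essentially the same approach as the paper: expand the parallel composition with the APTC laws, use $\partial_H$ to force the intended communications and kill the rest, abstract the internals with $\tau_I$, and arrive at the recursive equation $\tau_I(\partial_H(\cdots))=\sum_{d_I,d_O\in\Delta}(r_I(d_I)\cdot s_O(d_O))\cdot\tau_I(\partial_H(\cdots))$, deferring the mechanical expansion to section~\ref{app}. Your write-up is in fact more explicit than the paper's own proof, which simply asserts the final equation and refers to section~\ref{app} for details.
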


\begin{proof}
Based on the above state transitions of the above modules, by use of the algebraic laws of APTC, we can prove that

$\tau_I(\partial_H(P\between Sh\between M_1\between\cdots\between M_i\between\cdots\between M_n\between S_1\between\cdots\between S_i\between\cdots\between S_n))=\sum_{d_{I},d_O\in\Delta}(r_{I}(d_{I})\cdot s_{O}(d_{O}))\cdot
\tau_I(\partial_H(P\between Sh\between M_1\between\cdots\between M_i\between\cdots\between M_n\between S_1\between\cdots\between S_i\between\cdots\between S_n))$,

that is, the Active Object pattern $\tau_I(\partial_H(P\between Sh\between M_1\between\cdots\between M_i\between\cdots\between M_n\between S_1\between\cdots\between S_i\between\cdots\between S_n))$ can exhibit desired external behaviors.

For the details of proof, please refer to section \ref{app}, and we omit it.
\end{proof}

\subsubsection{Verification of the Monitor Object Pattern}

The Monitor Object pattern synchronizes concurrent method execution to ensure that only methods runs at a time.
In Monitor Object pattern, there are two classes of modules: The $n$ Client Threads and the Monitor Object. The Client Thread $i$ interacts with the
outside through the input channel $I_i$ and the output channel $O_i$; with the Monitor Object through the channel $CM_i$ for $1\leq i\leq n$, as illustrated in Figure \ref{MO6}.

\begin{figure}
    \centering
    \includegraphics{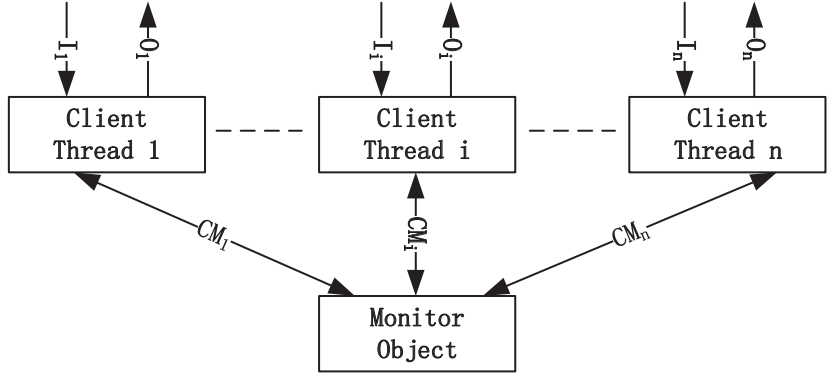}
    \caption{Monitor Object pattern}
    \label{MO6}
\end{figure}

The typical process is shown in Figure \ref{MO6P} and as follows.

\begin{enumerate}
  \item The Client Thread $i$ receives the input $d_{I_i}$ from the outside through the channel $I_i$ (the corresponding reading action is denoted $r_{I_i}(d_{I_i})$), then it processes the
  input and generates the input $d_{I_{M_i}}$ through a processing function $CF_{i1}$, and it sends the input to the Monitor Object through the channel $CM_i$
  (the corresponding sending action is denoted $s_{CM_i}(d_{I_{M_i}})$);
  \item The Monitor Object receives the input $d_{I_{M_i}}$ from the Client Thread $i$ through the channel $CM_i$ (the corresponding reading action is denoted $r_{CM_i}(d_{I_{M_i}})$) for $1\leq i\leq n$,
  then processes the request and generates the output $d_{O_{M_i}}$ through a processing function $MF_i$, and sends the output
  to the Client Thread $i$ through the channel $CM_i$ (the corresponding sending action is denoted $s_{CM_i}(d_{O_{M_i}})$);
  \item The Client Thread $i$ receives the output from the Monitor Object through the channel $CM_i$ (the corresponding reading action is denoted $r_{CM_i}(d_{O_{M_i}})$), then processes the output
  and generate the output $d_{O_i}$ through a processing function $CF_{i2}$ (accessing the resource), and sends the output to the outside through the channel $O_i$ (the corresponding sending action is denoted
  $s_{O_i}(d_{O_i})$).
\end{enumerate}

\begin{figure}
    \centering
    \includegraphics{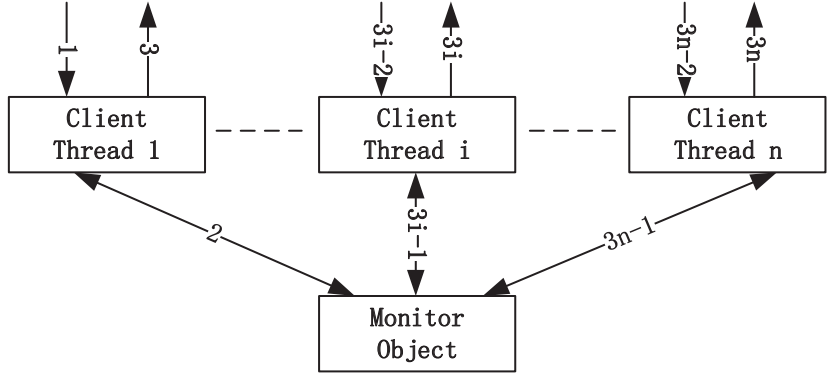}
    \caption{Typical process of Monitor Object pattern}
    \label{MO6P}
\end{figure}

In the following, we verify the Monitor Object pattern. We assume all data elements $d_{I_i}$, $d_{O_i}$, $d_{I_{M_i}}$, $d_{O_{M_i}}$ for $1\leq i\leq n$ are from a finite set
$\Delta$.

The state transitions of the Client Thread $i$ module
described by APTC are as follows.

$C_i=\sum_{d_{I_i}\in\Delta}(r_{I_i}(d_{I_i})\cdot C_{i_2})$

$C_{i_2}=CF_{i1}\cdot C_{i_3}$

$C_{i_3}=\sum_{d_{I_{M_i}}\in\Delta}(s_{CM_i}(d_{I_{M_i}})\cdot C_{i_4})$

$C_{i_4}=\sum_{d_{O_{M_i}}\in\Delta}(r_{CM_i}(d_{O_{M_i}})\cdot C_{i_5})$

$C_{i_5}=CF_{i2}\cdot C_{i_6}\quad (CF_{12}\%\cdots\%CF_{n2})$

$C_{i_6}=\sum_{d_{O_i}\in\Delta}(s_{O_i}(d_{O_i})\cdot C_{i})$

The state transitions of the Monitor Object module
described by APTC are as follows.

$M=\sum_{d_{I_{M_1}},\cdots,d_{I_{M_n}}\in\Delta}(r_{CM_1}(d_{I_{M_1}})\between\cdots\between r_{CM_n}(d_{I_{M_n}})\cdot M_{2})$

$M_{2}=MF_1\between\cdots\between MF_n\cdot M_{3}\quad (MF_1\%\cdots\% MF_n)$

$M_{3}=\sum_{d_{O_{M_1}},\cdots,d_{O_{M_n}}\in\Delta}(s_{CM_1}(d_{O_{M_1}})\between\cdots\between s_{CM_n}(d_{O_{M_n}})\cdot M)$

The sending action and the reading action of the same data through the same channel can communicate with each other, otherwise, will cause a deadlock $\delta$. We define the following
communication functions between the Client Thread $i$ and the Monitor Object.

$$\gamma(r_{CM_i}(d_{I_{M_i}}),s_{CM_i}(d_{I_{M_i}}))\triangleq c_{CM_i}(d_{I_{M_i}})$$

$$\gamma(r_{CM_i}(d_{O_{M_i}}),s_{CM_i}(d_{O_{M_i}}))\triangleq c_{CM_i}(d_{O_{M_i}})$$

Let all modules be in parallel, then the Monitor Object pattern $C_1\cdots C_n\quad M$ can be presented by the following process term.

$\tau_I(\partial_H(\Theta(C_1\between \cdots\between C_n\between M)))=\tau_I(\partial_H(C_1\between \cdots\between C_n\between M))$

where $H=\{r_{CM_i}(d_{I_{M_i}}),s_{CM_i}(d_{I_{M_i}}),r_{CM_i}(d_{O_{M_i}}),s_{CM_i}(d_{O_{M_i}})|d_{I_i}, d_{O_i}, d_{I_{M_i}}, d_{O_{M_i}}\in\Delta\}$,

$I=\{c_{CM_i}(d_{I_{M_i}}),c_{CM_i}(d_{O_{M_i}}),CF_{i1},CF_{i2},MF_i|d_{I_i}, d_{O_i}, d_{I_{M_i}}, d_{O_{M_i}}\in\Delta\}$ for $1\leq i\leq n$.

Then we get the following conclusion on the Monitor Object pattern.

\begin{theorem}[Correctness of the Monitor Object pattern]
The Monitor Object pattern $\tau_I(\partial_H(C_1\between \cdots\between C_n\between M))$ can exhibit desired external behaviors.
\end{theorem}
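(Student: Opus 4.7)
The plan is to follow exactly the template used for all preceding pattern-correctness theorems in the paper, specializing it to the Monitor Object configuration. First, I would expand the whole parallel term $C_1\between\cdots\between C_n\between M$ by repeatedly applying axiom $P1$ ($x\between y = x\parallel y + x\mid y$), together with the parallel axioms $P2$–$P10$ and the communication axioms $C11$–$C18$ of Table \ref{AxiomsForAPTC}. Since $\gamma$ is defined only on the matching send/receive pairs on channel $CM_i$, every communication summand that is not of the form $s_{CM_i}\mid r_{CM_i}$ collapses to $\delta$, which is then absorbed by $\partial_H$ via $D2$, $D3$ and $A6$. After encapsulation only the communication actions $c_{CM_i}(d_{I_{M_i}})$ and $c_{CM_i}(d_{O_{M_i}})$ survive.

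Next, I would apply the abstraction operator $\tau_I$ using the axioms $TI1$–$TI6$, which renames every internal communication $c_{CM_i}(\cdot)$ and every internal processing function $CF_{i1}, CF_{i2}, MF_i$ into $\tau$; the resulting $\tau$-steps are then absorbed into the preceding observable actions by $B1$ and $B2$. Because $C_i$ and $M$ are both guarded linear recursive specifications, I would introduce a fresh guarded linear specification $E$ whose single equation has the body $\sum_{d_{I_1},\ldots,d_{I_n},d_{O_1},\ldots,d_{O_n}\in\Delta}(r_{I_1}(d_{I_1})\parallel\cdots\parallel r_{I_n}(d_{I_n}))\cdot(s_{O_1}(d_{O_1})\parallel\cdots\parallel s_{O_n}(d_{O_n}))\cdot X$, verify that $\tau_I(\partial_H(C_1\between\cdots\between C_n\between M))$ satisfies this equation, and then appeal to $RSP$ (Theorem \ref{CAPTCR}) to conclude equality. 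Soundness with respect to $\approx_{rbs}$, $\approx_{rbp}$, $\approx_{rbhp}$ then follows from Theorems \ref{SAPTCABS} and \ref{SShadow}.

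The main obstacle, and the only substantive difference from simpler patterns such as the Singleton, is the presence of the race-condition annotations $MF_1\%\cdots\%MF_n$ on the Monitor Object and $CF_{12}\%\cdots\%CF_{n2}$ on the Client Threads, which encode the mutual-exclusion property of a monitor. To discharge these cleanly I would wrap the parallel term inside the state operator $\lambda_s$ of Section 2.6, with a state component recording which thread (if any) currently holds the monitor; the transition rules for $\lambda_s(x\parallel y)$ under $e_1\%e_2$ in Table \ref{TRForState} guarantee that at most one $MF_i$ can fire at a time. Because the race-condition only serializes the internal $MF_i$ and $CF_{i2}$ actions — all of which lie in $I$ and are hidden to $\tau$ — the externally observable traces are unaffected, so the final summation over $(r_{I_1}\parallel\cdots\parallel r_{I_n})\cdot(s_{O_1}\parallel\cdots\parallel s_{O_n})$ still captures the desired behavior.

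Finally, I would note that all routine algebraic manipulation (expansion of parallel merges, redundant $\delta$-elimination, and collapsing of $\tau$-prefixes) proceeds exactly as in the ABP verification of Section \ref{app}, so I would omit these calculations and refer the reader to that section, matching the presentation style already adopted for every other pattern in the book.
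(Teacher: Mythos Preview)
Your proposal is correct and follows the same template as the paper: expand the parallel composition via the APTC axioms, encapsulate to kill mismatched communications, abstract the internal $c_{CM_i}$ actions and processing functions to $\tau$, and conclude the recursive equation
\[
\tau_I(\partial_H(C_1\between\cdots\between C_n\between M))=\sum_{d_{I_1},d_{O_1},\ldots,d_{I_n},d_{O_n}\in\Delta}(r_{I_1}(d_{I_1})\parallel\cdots\parallel r_{I_n}(d_{I_n})\cdot s_{O_1}(d_{O_1})\parallel\cdots\parallel s_{O_n}(d_{O_n}))\cdot\tau_I(\partial_H(C_1\between\cdots\between C_n\between M))
\]
by appeal to $RSP$ and the ABP calculation of Section~\ref{app}. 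The paper's own proof is in fact nothing more than this assertion plus the reference to Section~\ref{app}; your write-up is strictly more explicit. In particular, your treatment of the $\%$ race-condition annotations via the state operator $\lambda_s$ is an elaboration the paper does not carry out---it simply records $MF_1\%\cdots\%MF_n$ and $CF_{12}\%\cdots\%CF_{n2}$ in the process definitions and then proceeds as if the serialization were already handled. Your observation that the serialized actions all lie in $I$ and hence do not affect the external equation is exactly the point that makes this harmless, and it is a detail the paper leaves to the reader.
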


\begin{proof}
Based on the above state transitions of the above modules, by use of the algebraic laws of APTC, we can prove that

$\tau_I(\partial_H(C_1\between \cdots\between C_n\between M))=\sum_{d_{I_1},d_{O_1},\cdots,d_{I_n},d_{O_n}\in\Delta}(r_{I_1}(d_{I_1})\parallel\cdots\parallel r_{I_n}(d_{I_n})\cdot s_{O_1}(d_{O_1})\parallel\cdots\parallel s_{O_n}(d_{O_n}))\cdot
\tau_I(\partial_H(C_1\between \cdots\between C_n\between M))$,

that is, the Monitor Object pattern $\tau_I(\partial_H(C_1\between \cdots\between C_n\between M))$ can exhibit desired external behaviors.

For the details of proof, please refer to section \ref{app}, and we omit it.
\end{proof}

\subsubsection{Verification of the Half-Sync/Half-Async Pattern}

The Half-Sync/Half-Async pattern decouples the asynchronous and synchronous processings, which has two classes of components: $n$ Synchronous Services and the Asynchronous Service.
The Asynchronous Service receives the inputs asynchronously from the user through the channel $I$, then the Asynchronous Service sends the results to the Synchronous Service $i$
through the channel $AS_i$ synchronously for $1\leq i\leq n$;
When the Synchronous Service $i$ receives the input from the Asynchronous Service, it generates and sends the results out to the user through the channel $O_i$.
As illustrates in Figure \ref{HSHA6}.

\begin{figure}
    \centering
    \includegraphics{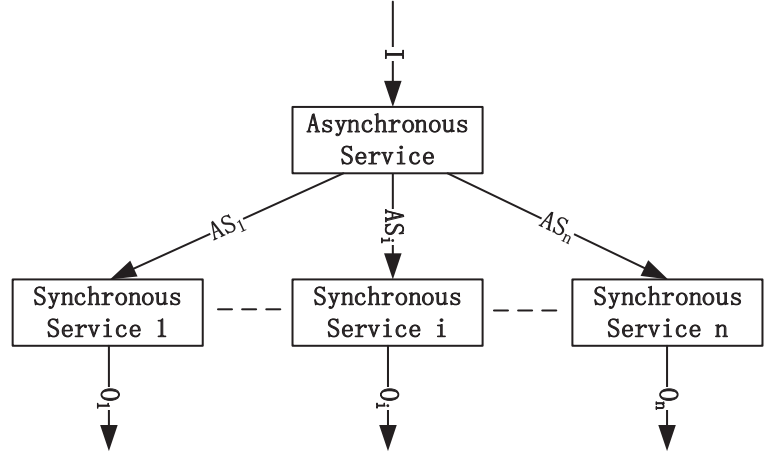}
    \caption{Half-Sync/Harf-Async pattern}
    \label{HSHA6}
\end{figure}

The typical process of the Half-Sync/Half-Async pattern is shown in Figure \ref{HSHA6P} and following.

\begin{enumerate}
  \item The Asynchronous Service receives the input $d_I$ from the user through the channel $I$ (the corresponding reading action is denoted $r_I(D_I)$), processes the input through
  a processing function $AF$, and generate the input to the Synchronous Service $i$ (for $1\leq i\leq n$) which is denoted $d_{I_{S_i}}$; then sends the input to the Synchronous Service $i$ through the
  channel $AS_i$ (the corresponding sending action is denoted $s_{AS_i}(d_{I_{S_i}})$);
  \item The Synchronous Service $i$ (for $1\leq i\leq n$) receives the input from the Asynchronous Service
  through the channel $AS_i$ (the corresponding reading action is denoted $r_{AS_i}(d_{I_{S_i}})$), processes the results through a processing function $SF_{i}$, generates the output
  $d_{O_i}$, then sending the output through the channel $O_i$ (the corresponding sending action is denoted $s_{O_i}(d_{O_i})$).
\end{enumerate}

\begin{figure}
    \centering
    \includegraphics{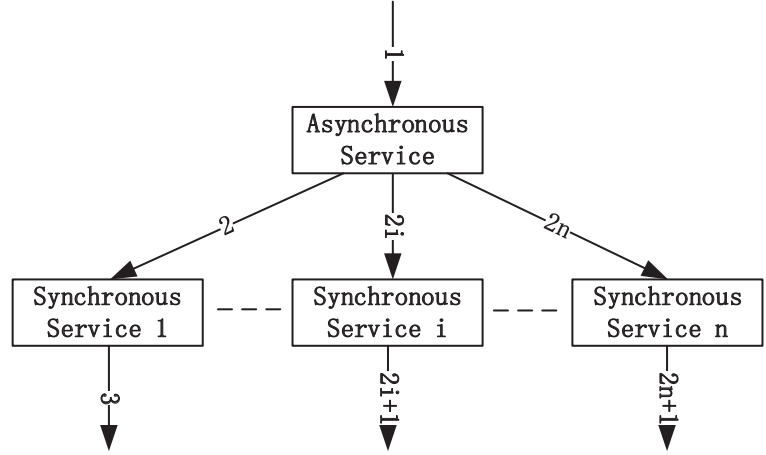}
    \caption{Typical process of Half-Sync/Harf-Async pattern}
    \label{HSHA6P}
\end{figure}

In the following, we verify the Half-Sync/Half-Async pattern. We assume all data elements $d_{I}$, $d_{I_{S_i}}$, $d_{O_{i}}$ (for $1\leq i\leq n$) are from a finite set
$\Delta$.

The state transitions of the Asynchronous Service module
described by APTC are as follows.

$A=\sum_{d_{I}\in\Delta}(r_{I}(d_{I})\cdot A_{2})$

$A_{2}=AF\cdot A_{3}$

$A_{3}=\sum_{d_{I_{S_1}},\cdots,d_{I_{S_n}}\in\Delta}(s_{AS_1}(d_{I_{S_1}})\between\cdots\between s_{AS_n}(d_{I_{S_n}})\cdot A)$

The state transitions of the Synchronous Service $i$ described by APTC are as follows.

$S_i=\sum_{d_{I_{S_i}}\in\Delta}(r_{AS_i}(d_{I_{S_i}})\cdot S_{i_2})$

$S_{i_2}=SF_i\cdot S_{i_3}$

$S_{i_3}=\sum_{d_{O_{i}}\in\Delta}(s_{O_{i}}(d_{O_i})\cdot S_i)$

The sending action and the reading action of the same data through the same channel can communicate with each other, otherwise, will cause a deadlock $\delta$. We define the following
communication functions of the Synchronous Service $i$ for $1\leq i\leq n$.

$$\gamma(r_{AS_i}(d_{I_{S_i}}),s_{AS_i}(d_{I_{S_i}}))\triangleq c_{AS_i}(d_{I_{S_i}})$$

Let all modules be in parallel, then the Half-Sync/Half-Async pattern $A \quad S_1\cdots S_i\cdots S_n$ can be presented by the following process term.

$\tau_I(\partial_H(\Theta(A\between S_1\between\cdots\between S_i\between\cdots\between S_n)))=\tau_I(\partial_H(A\between S_1\between\cdots\between S_i\between\cdots\between S_n))$

where $H=\{r_{AS_i}(d_{O_{S_i}}),s_{AS_i}(d_{O_{S_i}})|d_{I}, d_{I_{S_i}}, d_{O_{i}}\in\Delta\}$ for $1\leq i\leq n$,

$I=\{c_{AS_i}(d_{I_{S_i}}),AF,SF_{i}|d_{I}, d_{I_{S_i}}, d_{O_{i}}\in\Delta\}$ for $1\leq i\leq n$.

Then we get the following conclusion on the Half-Sync/Half-Async pattern.

\begin{theorem}[Correctness of the Half-Sync/Half-Async pattern]
The Half-Sync/Half-Async pattern $\tau_I(\partial_H(A\between S_1\between\cdots\between S_i\between\cdots\between S_n))$ can exhibit desired external behaviors.
\end{theorem}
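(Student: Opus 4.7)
The plan is to follow the same methodology used throughout this chapter: start from the state transitions of $A, S_1, \ldots, S_n$, expand the whole parallel operator $\between$ via the algebraic laws of $APTC$, apply the encapsulation operator $\partial_H$ to block the pure send/receive actions that should only participate in communication, and finally apply the abstraction operator $\tau_I$ to hide the internal communication actions $c_{AS_i}(d_{I_{S_i}})$ together with the processing functions $AF$ and $SF_i$.

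First I would expand $A\between S_1\between\cdots\between S_n$ by repeated use of axiom $P1$ ($x\between y = x\parallel y + x\mid y$), combined with the expansion laws $P4$--$P9$ and the communication laws $C11$--$C18$. Since $A$ begins with a read on channel $I$, and each $S_i$ begins with a read on channel $AS_i$, and these channels are disjoint, the initial actions of $A$ and of the $S_i$ do not communicate with each other; $\partial_H$ then forces the sequence that after $r_I(d_I)$ and the internal processing $AF$, the sends $s_{AS_i}(d_{I_{S_i}})$ must synchronize with the matching $r_{AS_i}(d_{I_{S_i}})$ in $S_i$, producing the communication actions $c_{AS_i}(d_{I_{S_i}}) \in I$.

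Next I would define the natural guarded linear recursive specification whose solution corresponds to $\partial_H(A\between S_1\between\cdots\between S_n)$, with equations of the shape
\begin{eqnarray*}
X_1 &=& \sum_{d_I\in\Delta} r_I(d_I)\cdot X_2,\\
X_2 &=& AF\cdot X_3,\\
X_3 &=& \sum_{d_{I_{S_1}},\ldots,d_{I_{S_n}}\in\Delta}\bigl(c_{AS_1}(d_{I_{S_1}})\parallel\cdots\parallel c_{AS_n}(d_{I_{S_n}})\bigr)\cdot X_4,\\
X_4 &=& \bigl(SF_1\parallel\cdots\parallel SF_n\bigr)\cdot X_5,\\
X_5 &=& \sum_{d_{O_1},\ldots,d_{O_n}\in\Delta}\bigl(s_{O_1}(d_{O_1})\parallel\cdots\parallel s_{O_n}(d_{O_n})\bigr)\cdot X_1.
\end{eqnarray*}
Soundness of $APTC$ with guarded recursion (Theorem \ref{SAPTCR}) then justifies identifying $\partial_H(A\between S_1\between\cdots\between S_n)$ with $\langle X_1|E\rangle$ via $RSP$.

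Then I would apply $\tau_I$ using axioms $TI1$--$TI6$: since $AF, SF_i$ and all $c_{AS_i}(d_{I_{S_i}})$ belong to $I$, they are rewritten to $\tau$, and the resulting $\tau$ steps are absorbed via $B1$--$B3$ together with $CFAR$ to eliminate the intermediate states $X_2, X_3, X_4$. This yields
\begin{eqnarray*}
\tau_I(\partial_H(A\between S_1\between\cdots\between S_n))
&=&\sum_{d_I,d_{O_1},\ldots,d_{O_n}\in\Delta}\bigl(r_I(d_I)\cdot(s_{O_1}(d_{O_1})\parallel\cdots\parallel s_{O_n}(d_{O_n}))\bigr)\\
&&{}\cdot\tau_I(\partial_H(A\between S_1\between\cdots\between S_n)),
\end{eqnarray*}
which is exactly the desired external behavior: an asynchronous input followed by $n$ concurrent synchronous outputs.

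The main obstacle, as in the preceding correctness proofs, will be the bookkeeping in the expansion step: one must check that no spurious interleavings survive after applying $\partial_H$, in particular that every $s_{AS_i}$ is forced to synchronize with the matching $r_{AS_i}$ on the $S_i$ side and that the parallel composition of the processing functions $SF_i$ (assumed to be pairwise non-conflicting here, since the pattern does not impose a shared resource among synchronous services) reduces correctly under $\tau_I$. Once these routine but tedious expansions are handled as in Section \ref{app}, the conclusion follows.
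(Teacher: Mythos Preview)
Your proposal is correct and follows essentially the same approach as the paper: the paper's proof merely asserts the final equation $\tau_I(\partial_H(A\between S_1\between\cdots\between S_n))=\sum_{d_{I},d_{O_1},\cdots,d_{O_n}\in\Delta}(r_{I}(d_{I})\cdot s_{O_1}(d_{O_1})\parallel\cdots\parallel s_{O_n}(d_{O_n}))\cdot\tau_I(\partial_H(A\between S_1\between\cdots\between S_n))$ and defers the details to the ABP calculation in Section~\ref{app}, whereas you have explicitly spelled out the expansion via $P1$, $P4$--$P9$, $C11$--$C18$, the guarded linear recursive specification, $RSP$, and the abstraction via $TI1$--$TI6$ and $CFAR$. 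Your argument is a faithful and more detailed instantiation of exactly the template the paper invokes.
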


\begin{proof}
Based on the above state transitions of the above modules, by use of the algebraic laws of APTC, we can prove that

$\tau_I(\partial_H(A\between S_1\between\cdots\between S_i\between\cdots\between S_n))=\sum_{d_{I},d_{O_1},\cdots,d_{O_n}\in\Delta}(r_{I}(d_{I})\cdot s_{O_1}(d_{O_1})\parallel\cdots\parallel s_{O_i}(d_{O_i})\parallel\cdots\parallel s_{O_n}(d_{O_n}))\cdot
\tau_I(\partial_H(A\between S_1\between\cdots\between S_i\between\cdots\between S_n))$,

that is, the Half-Sync/Half-Async pattern $\tau_I(\partial_H(A\between S_1\between\cdots\between S_i\between\cdots\between S_n))$ can exhibit desired external behaviors.

For the details of proof, please refer to section \ref{app}, and we omit it.
\end{proof}

\subsubsection{Verification of the Leader/Followers Pattern}

The Leader/Followers pattern decouples the event delivery between the event source and event handler. There are four modules in the Leader/Followers pattern: the Handle Set, the Leader,
the Follower and the Event Handler. The Handle Set interacts with the outside through
the channel $I$; with the Leader through the channel $HL$. The Leader interacts with the Follower through the channel $LF$. The Event Handler interacts with the
Follower through the channel $FE$, and with the outside through the channels $O$. As illustrates in
Figure \ref{LF6}.

\begin{figure}
    \centering
    \includegraphics{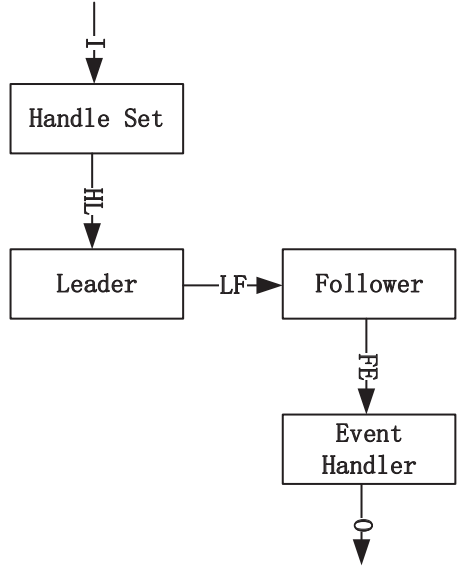}
    \caption{Leader/Followers pattern}
    \label{LF6}
\end{figure}

The typical process of the Leader/Followers pattern is shown in Figure \ref{LF6P} and as follows.

\begin{enumerate}
  \item The Handle Set receives the input $d_{I}$ from the outside through the channel $I$ (the corresponding reading action is denoted $r_{I}(d_{I})$), then processes the input $d_{I}$ through a processing
  function $HF$, and sends the processed input $d_{I_{HL}}$ to the Leader through the channel $HL$ (the corresponding sending action is denoted $s_{PP}(d_{I_{HL}})$);
  \item The Leader receives $d_{I_{HL}}$ from the Handle Set through the channel $HL$ (the corresponding reading action is denoted $r_{HL}(d_{I_{HL}})$), then processes the request
  through a processing function $LF$, generates and sends the processed input $d_{I_{LF}}$ to the Follower through the channel $LF$ (the corresponding sending action is denoted
  $s_{LF}(d_{I_{LF}})$);
  \item The Follower receives the input $d_{I_{LF}}$ from the Leader through the channel $LF$ (the corresponding reading action is denoted $r_{LF}(d_{I_{LF}})$), then processes
  the request through a processing function $FF$, generates and sends the processed input $d_{I_{FE}}$ to the Event Handler through the channel $FE$ (the corresponding sending action is denoted $s_{FE}(d_{I_{FE}})$);
  \item The Event Handler receives the input $d_{I_{FE}}$ from the Follower through the channel $FE$ (the corresponding reading action is denoted $r_{FE}(d_{I_{FE}})$), then
  processes the request and generates the response $d_{O}$ through a processing function $EF$, and sends the response to the outside through the channel $O$ (the corresponding sending action is denoted
  $s_{O}(d_{O})$).
\end{enumerate}

\begin{figure}
    \centering
    \includegraphics{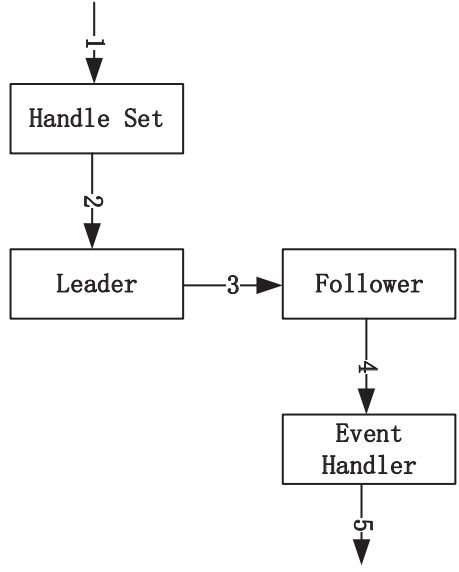}
    \caption{Typical process of Leader/Followers pattern}
    \label{LF6P}
\end{figure}

In the following, we verify the Leader/Followers pattern. We assume all data elements $d_{I}$, $d_{I_{HL}}$, $d_{I_{LF}}$, $d_{I_{FE}}$, $d_{O}$ are from a finite set
$\Delta$.

The state transitions of the Handle Set module
described by APTC are as follows.

$H=\sum_{d_{I}\in\Delta}(r_{I}(d_{I})\cdot H_{2})$

$H_{2}=HF\cdot H_{3}$

$H_{3}=\sum_{d_{I_{HL}}\in\Delta}(s_{HL}(d_{I_{HL}})\cdot H)$

The state transitions of the Leader module
described by APTC are as follows.

$L=\sum_{d_{I_{HL}}\in\Delta}(r_{HL}(d_{I_{HL}})\cdot L_{2})$

$L_{2}=LF\cdot L_{3}$

$L_{3}=\sum_{d_{I_{LF}}\in\Delta}(s_{LF}(d_{I_{LF}})\cdot L)$

The state transitions of the Follower module
described by APTC are as follows.

$F=\sum_{d_{I_{LF}}\in\Delta}(r_{LF}(d_{I_{LF}})\cdot F_{2})$

$F_{2}=FF\cdot F_{3}$

$F_{3}=\sum_{d_{I_{FE}}\in\Delta}(s_{FE}(d_{I_{FE}})\cdot F)$

The state transitions of the Event Handler module
described by APTC are as follows.

$E=\sum_{d_{I_{FE}}\in\Delta}(r_{FE}(d_{I_{FE}})\cdot E_{2})$

$E_{2}=EF\cdot E_{3}$

$E_{3}=\sum_{d_{O}\in\Delta}(s_{O}(d_{O})\cdot E)$

The sending action and the reading action of the same data through the same channel can communicate with each other, otherwise, will cause a deadlock $\delta$. We define the following
communication functions between the Handle Set and the Leader.

$$\gamma(r_{HL}(d_{I_{HL}}),s_{HL}(d_{I_{HL}}))\triangleq c_{HL}(d_{I_{HL}})$$

There are one communication functions between the Leader and the Follower as follows.

$$\gamma(r_{LF}(d_{I_{LF}}),s_{LF}(d_{I_{LF}}))\triangleq c_{LF}(d_{I_{LF}})$$

There are one communication functions between the Follower and the Event Handler as follows.

$$\gamma(r_{FE}(d_{I_{FE}}),s_{FE}(d_{I_{FE}}))\triangleq c_{FE}(d_{I_{FE}})$$

Let all modules be in parallel, then the Leader/Followers pattern $H\quad L \quad F\quad E$ can be presented by the following process term.

$\tau_I(\partial_H(\Theta(H\between L\between F\between E)))=\tau_I(\partial_H(H\between L\between F\between E))$

where $H=\{r_{HL}(d_{I_{HL}}),s_{HL}(d_{I_{HL}}),r_{LF}(d_{I_{LF}}),s_{LF}(d_{I_{LF}}),r_{FE}(d_{I_{FE}}),s_{FE}(d_{I_{FE}})\\
|d_{I}, d_{I_{HL}}, d_{I_{LF}}, d_{I_{FE}}, d_{O}\in\Delta\}$,

$I=\{c_{HL}(d_{I_{HL}}),c_{LF}(d_{I_{LF}}),c_{FE}(d_{I_{FE}}),HF,LF,FF,EF
|d_{I}, d_{I_{HL}}, d_{I_{LF}}, d_{I_{FE}}, d_{O}\in\Delta\}$.

Then we get the following conclusion on the Leader/Followers pattern.

\begin{theorem}[Correctness of the Leader/Followers pattern]
The Leader/Followers pattern $\tau_I(\partial_H(H\between L\between F\between E))$ can exhibit desired external behaviors.
\end{theorem}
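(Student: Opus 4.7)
The plan is to follow exactly the pattern of proof used for the Publisher-Subscriber, Forwarder-Receiver, and Client-Dispatcher-Server theorems earlier in the chapter, since the Leader/Followers pattern is structurally a linear pipeline $H \to L \to F \to E$ with one input channel $I$ and one output channel $O$, and the only communications among components are the point-to-point handshakes on $HL$, $LF$, $FE$.

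First I would expand $\partial_H(H \between L \between F \between E)$ using axioms $P1$--$P10$ and $C11$--$C18$ to rewrite the whole-parallel operator $\between$ as $\parallel + \mid$, then repeatedly apply $C14$ together with the defining communication functions $\gamma(r_{HL}, s_{HL}) = c_{HL}$, $\gamma(r_{LF}, s_{LF}) = c_{LF}$, $\gamma(r_{FE}, s_{FE}) = c_{FE}$. All mismatched send/receive pairs fall into $H$ and are killed by $D1$--$D6$ together with $A6$--$A7$, while the correctly matched pairs become the communications $c_{HL}(d_{I_{HL}})$, $c_{LF}(d_{I_{LF}})$, $c_{FE}(d_{I_{FE}})$. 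This yields a guarded linear recursive specification $E'$ of the form
\begin{align*}
X_1 &= \sum_{d_I \in \Delta} r_I(d_I) \cdot X_2, \\
X_2 &= HF \cdot X_3, \quad X_3 = c_{HL}(d_{I_{HL}}) \cdot X_4, \quad X_4 = LF \cdot X_5, \\
X_5 &= c_{LF}(d_{I_{LF}}) \cdot X_6, \quad X_6 = FF \cdot X_7, \quad X_7 = c_{FE}(d_{I_{FE}}) \cdot X_8, \\
X_8 &= EF \cdot X_9, \quad X_9 = \sum_{d_O \in \Delta} s_O(d_O) \cdot X_1,
\end{align*}
so that $\partial_H(H \between L \between F \between E) = \langle X_1 \mid E'\rangle$ by RDP and RSP (Theorem~\ref{CAPTCR}).

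Next I would apply $\tau_I$ and use the abstraction axioms $TI1$--$TI6$ to rename every element of $I$ into $\tau$. Since $I$ contains all the internal communications $c_{HL}, c_{LF}, c_{FE}$ and all internal processing functions $HF, LF, FF, EF$, every intermediate transition collapses to a $\tau$. After applying $B1$ ($e \cdot \tau = e$) along the chain of silent steps between the initial $r_I(d_I)$ and the final $s_O(d_O)$, I obtain
\begin{equation*}
\tau_I(\partial_H(H \between L \between F \between E)) = \sum_{d_I, d_O \in \Delta} (r_I(d_I) \cdot s_O(d_O)) \cdot \tau_I(\partial_H(H \between L \between F \between E)),
\end{equation*}
which is the desired external behavior; soundness modulo $\approx_{rbs}, \approx_{rbp}, \approx_{rbhp}$ is guaranteed by Theorems~\ref{SAPTCABS} and \ref{SCFAR}.

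The main obstacle, as in the earlier proofs in this chapter, is purely bookkeeping: carefully tracking the expansion of $\partial_H$ over the four-way parallel composition to ensure that only the intended handshakes survive and that every unmatched send or receive in $H$ is correctly absorbed by $\delta$. Once the chain of $c$-transitions is isolated, the collapse via $B1$ is mechanical and the final RSP step is identical to the one already carried out for the Publisher-Subscriber pattern. For this reason the paper defers the computation to section~\ref{app} and I would do the same, merely pointing to the linear pipeline structure as the justification.
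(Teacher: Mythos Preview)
Your proposal is correct and follows essentially the same approach as the paper: the paper's proof simply asserts that by the algebraic laws of APTC one derives $\tau_I(\partial_H(H\between L\between F\between E))=\sum_{d_{I},d_{O}\in\Delta}(r_{I}(d_{I})\cdot s_{O}(d_{O}))\cdot\tau_I(\partial_H(H\between L\between F\between E))$ and defers the computational details to section~\ref{app}. You have merely spelled out the intermediate steps (the expansion via $P1$--$P10$, $C11$--$C18$, $D1$--$D6$, the guarded linear recursive specification, and the $\tau_I$/$B1$ collapse) that the paper leaves implicit, so the two arguments are the same in substance.
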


\begin{proof}
Based on the above state transitions of the above modules, by use of the algebraic laws of APTC, we can prove that

$\tau_I(\partial_H(H\between L\between F\between E))=\sum_{d_{I},d_{O}\in\Delta}(r_{I}(d_{I})\cdot s_{O}(d_{O}))\cdot
\tau_I(\partial_H(H\between L\between F\between E))$,

that is, the Leader/Followers pattern $\tau_I(\partial_H(H\between L\between F\between E))$ can exhibit desired external behaviors.

For the details of proof, please refer to section \ref{app}, and we omit it.
\end{proof}

\subsubsection{Verification of the Thread-Specific Storage Pattern}

The Thread-Specific Storage pattern allows the application threads to get a global access point to a local object.
There are four modules in the Thread-Specific Storage pattern: the Application Thread, the Thread Specific Object Proxy, the Key Factory,
and the Thread Specific Object. The Application Thread interacts with the outside through
the channels $I$ and $O$; with the Thread Specific Object Proxy through the channel $I_{AP}$ and $O_{AP}$.
The Thread Specific Object Proxy interacts with the Thread Specific Object through the channels $I_{PO}$ and $O_{PO}$; with the Key Factory through the channels $I_{PF}$ and $O_{PF}$.
As illustrates in Figure \ref{TSS6}.

\begin{figure}
    \centering
    \includegraphics{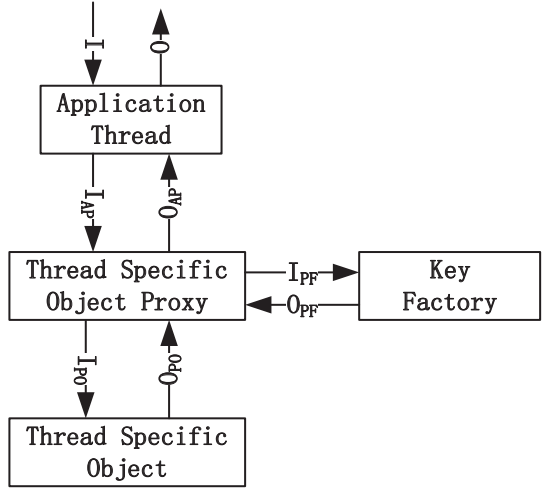}
    \caption{Thread-Specific Storage pattern}
    \label{TSS6}
\end{figure}

The typical process of the Thread-Specific Storage pattern is shown in Figure \ref{TSS6P} and as follows.

\begin{enumerate}
  \item The Application Thread receives the input $d_{I}$ from the outside through the channel $I$ (the corresponding reading action is denoted $r_{I}(d_{I})$), then processes the input $d_{I}$ through a processing
  function $AF_1$, and sends the processed input $d_{I_{P}}$ to the Thread Specific Object Proxy through the channel $I_{AP}$ (the corresponding sending action is denoted $s_{I_{AP}}(d_{I_{P}})$);
  \item The Thread Specific Object Proxy receives $d_{I_{P}}$ from the Application Thread through the channel $I_{AP}$ (the corresponding reading action is denoted $r_{I_{AP}}(d_{I_{P}})$), then processes the request
  through a processing function $PF_1$, generates and sends the processed input $d_{I_{F}}$ to the Key Factory through the channel $I_{PF}$ (the corresponding sending action is denoted
  $s_{I_{PF}}(d_{I_F})$);
  \item The Key Factory receives the input $d_{I_F}$ from the Thread Specific Object Proxy through the channel $I_{PF}$ (the corresponding reading action is denoted $r_{I_{PF}}(d_{I_F})$),
  then processes the request and generates the result $d_{O_F}$ through a processing function $FF$, and sends the result to the Thread Specific Object Proxy through the channel $O_{PF}$
  (the corresponding sending action is denoted $s_{O_{PF}}(d_{O_F})$);
  \item The Thread Specific Object Proxy receives the result from the Key Factory through the channel $O_{PF}$ (the corresponding reading action is denoted $r_{O_{PF}}(d_{O_F})$), then
  processes the results and generate the request $d_{I_O}$ to the Thread Specific Object through a processing function $PF_2$, and sends the request to the Thread Specific Object
   through the channel $I_{PO}$ (the corresponding sending action is denoted $s_{I_{PO}}(d_{I_{O}})$);
  \item The Thread Specific Object receives the input $d_{I_{O}}$ from the Thread Specific Object Proxy through the channel $I_{PO}$ (the corresponding reading action is denoted $r_{I_{PO}}(d_{I_{O}})$), then processes
  the input through a processing function $OF$, generates and sends the response $d_{O_{O}}$ to the Thread Specific Object Proxy through the channel $O_{PO}$ (the corresponding sending action is denoted $s_{O_{PO}}(d_{O_{O}})$);
  \item The Thread Specific Object Proxy receives the response $d_{O_O}$ from the Thread Specific Object through the channel $O_{PO}$ (the corresponding reading action is denoted $r_{O_{PO}}(d_{O_O})$), then processes
  the response through a processing function $PF_3$, generates and sends the response $d_{O_P}$ (the corresponding sending action is denoted $s_{O_{AP}}(d_{O_P})$);
  \item The Application Thread receives the response $d_{O_{P}}$ from the Thread Specific Object Proxy through the channel $O_{AP}$ (the corresponding reading action is denoted $r_{O_{AP}}(d_{O_{P}})$), then
  processes the request and generates the response $d_{O}$ through a processing function $AF_2$, and sends the response to the outside through the channel $O$ (the corresponding sending action is denoted
  $s_{O}(d_{O})$).
\end{enumerate}

\begin{figure}
    \centering
    \includegraphics{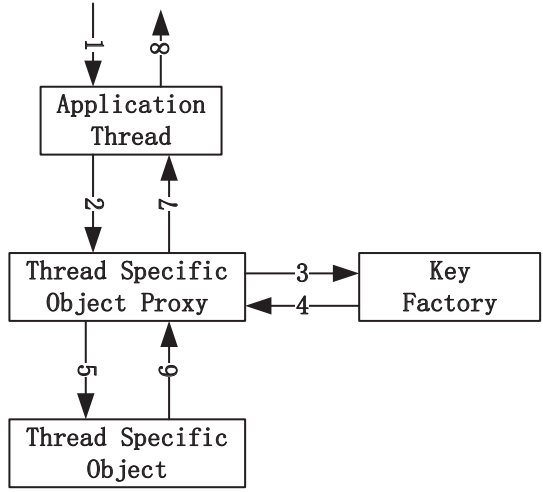}
    \caption{Typical process of Thread-Specific Storage pattern}
    \label{TSS6P}
\end{figure}

In the following, we verify the Thread-Specific Storage pattern. We assume all data elements $d_{I}$, $d_{I_{P}}$, $d_{I_{F}}$, $d_{I_O}$, $d_{O_{P}}$, $d_{O_F}$, $d_{O_O}$, $d_{O}$ are from a finite set
$\Delta$.

The state transitions of the Application Thread module
described by APTC are as follows.

$A=\sum_{d_{I}\in\Delta}(r_{I}(d_{I})\cdot A_{2})$

$A_{2}=AF_1\cdot A_{3}$

$A_{3}=\sum_{d_{I_{P}}\in\Delta}(s_{I_{AP}}(d_{I_{P}})\cdot A_4)$

$A_4=\sum_{d_{O_P}\in\Delta}(r_{O_{AP}}(d_{O_P})\cdot A_{5})$

$A_{5}=AF_2\cdot A_{6}$

$A_{6}=\sum_{d_{O}\in\Delta}(s_{O}(d_{O})\cdot A)$

The state transitions of the Thread Specific Object Proxy module
described by APTC are as follows.

$P=\sum_{d_{I_{P}}\in\Delta}(r_{I_{AP}}(d_{I_{P}})\cdot P_{2})$

$P_{2}=PF_1\cdot P_{3}$

$P_{3}=\sum_{d_{I_{F}}\in\Delta}(s_{I_{PF}}(d_{I_{F}})\cdot P_4)$

$P_4=\sum_{d_{O_{F}}\in\Delta}(r_{O_{PF}}(d_{O_{F}})\cdot P_{5})$

$P_{5}=PF_2\cdot P_{6}$

$P_{6}=\sum_{d_{I_{O}}\in\Delta}(s_{I_{PO}}(d_{I_{O}})\cdot P_7)$

$P_7=\sum_{d_{O_{O}}\in\Delta}(r_{O_{PO}}(d_{O_{O}})\cdot P_{8})$

$P_{8}=PF_3\cdot P_{9}$

$P_{9}=\sum_{d_{O_{P}}\in\Delta}(s_{O_{AP}}(d_{O_{P}})\cdot P)$

The state transitions of the Key Factory module
described by APTC are as follows.

$F=\sum_{d_{I_{F}}\in\Delta}(r_{I_{PF}}(d_{I_F})\cdot F_{2})$

$F_{2}=FF\cdot F_{3}$

$F_{3}=\sum_{d_{O_{F}}\in\Delta}(s_{O_{PF}}(d_{O_{F}})\cdot F)$

The state transitions of the Thread Specific Object module
described by APTC are as follows.

$O=\sum_{d_{I_{O}}\in\Delta}(r_{I_{PO}}(d_{I_O})\cdot O_{2})$

$O_{2}=OF\cdot O_{3}$

$O_{3}=\sum_{d_{O_{O}}\in\Delta}(s_{O_{PO}}(d_{O_{O}})\cdot O)$

The sending action and the reading action of the same data through the same channel can communicate with each other, otherwise, will cause a deadlock $\delta$. We define the following
communication functions between the Application Thread and the Thread Specific Object Proxy Proxy.

$$\gamma(r_{I_{AP}}(d_{I_{P}}),s_{I_{AP}}(d_{I_{P}}))\triangleq c_{I_{AP}}(d_{I_{P}})$$

$$\gamma(r_{O_{AP}}(d_{O_P}),s_{O_{AP}}(d_{O_P}))\triangleq c_{O_{AP}}(d_{O_P})$$

There are two communication functions between the Thread Specific Object Proxy and the Key Factory as follows.

$$\gamma(r_{I_{PF}}(d_{I_{F}}),s_{I_{PF}}(d_{I_{F}}))\triangleq c_{I_{PF}}(d_{I_{F}})$$

$$\gamma(r_{O_{PF}}(d_{O_{F}}),s_{O_{PF}}(d_{O_{F}}))\triangleq c_{O_{PF}}(d_{O_{F}})$$

There are two communication functions between the Thread Specific Object Proxy and the Thread Specific Object as follows.

$$\gamma(r_{I_{PO}}(d_{I_{O}}),s_{I_{PO}}(d_{I_{O}}))\triangleq c_{I_{PO}}(d_{I_{O}})$$

$$\gamma(r_{O_{PO}}(d_{O_{O}}),s_{O_{PO}}(d_{O_{O}}))\triangleq c_{O_{PO}}(d_{O_{O}})$$

Let all modules be in parallel, then the Thread-Specific Storage pattern $A\quad P \quad F\quad O$ can be presented by the following process term.

$\tau_I(\partial_H(\Theta(A\between P\between F\between O)))=\tau_I(\partial_H(A\between P\between F\between O))$

where $H=\{r_{I_{AP}}(d_{I_{P}}),s_{I_{AP}}(d_{I_{P}}),r_{O_{AP}}(d_{O_P}),s_{O_{AP}}(d_{O_P}),r_{I_{PF}}(d_{I_{F}}),s_{I_{PF}}(d_{I_{F}}),\\
r_{O_{PF}}(d_{O_{F}}),s_{O_{PF}}(d_{O_{F}}),r_{I_{PO}}(d_{I_{O}}),s_{I_{PO}}(d_{I_{O}}),r_{O_{PO}}(d_{O_O}),s_{O_{PO}}(d_{O_O})\\
|d_{I}, d_{I_{P}}, d_{I_{F}}, d_{I_O}, d_{O_{P}}, d_{O_F}, d_{O_O}, d_{O}\in\Delta\}$,

$I=\{c_{I_{AP}}(d_{I_{P}}),c_{O_{AP}}(d_{O_P}),c_{I_{PF}}(d_{I_{F}}),c_{O_{PF}}(d_{O_{F}}),c_{I_{PO}}(d_{I_{O}}),c_{O_{PO}}(d_{O_{O}}),\\
AF_1,AF_2,PF_1,PF_2,PF_3,FF,OF
|d_{I}, d_{I_{P}}, d_{I_{F}}, d_{I_O}, d_{O_{P}}, d_{O_F}, d_{O_O}, d_{O}\in\Delta\}$.

Then we get the following conclusion on the Thread-Specific Storage pattern.

\begin{theorem}[Correctness of the Thread-Specific Storage pattern]
The Thread-Specific Storage pattern $\tau_I(\partial_H(A\between P\between F\between O))$ can exhibit desired external behaviors.
\end{theorem}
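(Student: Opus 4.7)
The plan is to follow the same recipe used for all the preceding pattern correctness theorems in this chapter (see for example the proofs for the Interceptor, Acceptor--Connector, and Monitor Object patterns). The goal is to establish
\[
\tau_I(\partial_H(A\between P\between F\between O)) = \sum_{d_{I},d_{O}\in\Delta}\bigl(r_{I}(d_{I})\cdot s_{O}(d_{O})\bigr)\cdot \tau_I(\partial_H(A\between P\between F\between O)).
\]
First I would unfold the four component specifications $A$, $P$, $F$, $O$ and repeatedly apply the axioms $P1$--$P10$ and $C11$--$C18$ of Table \ref{AxiomsForAPTC} to expand $A\between P\between F\between O$. Because $H$ blocks every raw $s_{I_{AP}}$, $r_{I_{AP}}$, $s_{O_{AP}}$, $r_{O_{AP}}$, $s_{I_{PF}}$, $r_{I_{PF}}$, $s_{O_{PF}}$, $r_{O_{PF}}$, $s_{I_{PO}}$, $r_{I_{PO}}$, $s_{O_{PO}}$, $r_{O_{PO}}$ action, axioms $A6, A7, D1$--$D6$ will reduce every interleaving in which a send is not immediately matched by its companion receive to $\delta$, leaving only the synchronized communication actions $c_{I_{AP}}$, $c_{O_{AP}}$, $c_{I_{PF}}$, $c_{O_{PF}}$, $c_{I_{PO}}$, $c_{O_{PO}}$ wrapped around the internal processing functions $AF_1, AF_2, PF_1, PF_2, PF_3, FF, OF$.

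Next I would organize the resulting behavior as a guarded linear recursive specification. Introduce recursion variables $X_1,\dots,X_k$ whose equations mirror the sequential flow dictated by the state transitions of $A$, $P$, $F$, and $O$: $X_1$ begins with $r_I(d_I)$, then proceeds through the chain
\[
AF_1 \to c_{I_{AP}}(d_{I_P}) \to PF_1 \to c_{I_{PF}}(d_{I_F}) \to FF \to c_{O_{PF}}(d_{O_F}) \to PF_2 \to c_{I_{PO}}(d_{I_O}) \to OF \to c_{O_{PO}}(d_{O_O}) \to PF_3 \to c_{O_{AP}}(d_{O_P}) \to AF_2,
\]
terminating with $s_O(d_O)\cdot X_1$. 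By $RDP$ the composition $\partial_H(A\between P\between F\between O)$ satisfies exactly this specification, and by $RSP$ it is uniquely determined by it.

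Then I would apply $\tau_I$ pointwise using axioms $TI1$--$TI6$ and $B1, B2$. Every communication action $c_{I_{AP}}, c_{O_{AP}}, c_{I_{PF}}, c_{O_{PF}}, c_{I_{PO}}, c_{O_{PO}}$ and every processing action $AF_1, AF_2, PF_1, PF_2, PF_3, FF, OF$ lies in $I$, so each is rewritten to $\tau$; the branching law $B1$ absorbs the resulting $\tau$-prefixes, collapsing the intermediate states. What survives on the outside of the recursion is the initial $r_I(d_I)$ and the final $s_O(d_O)$, giving the desired closed form, and $RSP$ plus Theorem~\ref{SAPTCABS} and Theorem~\ref{CCFAR} (soundness and completeness of $APTC_\tau$ with guarded linear recursion and $CFAR$) finish the equality.

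The main obstacle will be bookkeeping rather than mathematical depth: the Thread-Specific Storage pattern has the longest sequential chain among the patterns in this chapter (seven communication steps between $r_I$ and $s_O$ threading through three distinct peer pairs), so the expansion of $\partial_H(A\between P\between F\between O)$ produces many intermediate interleavings that must each be checked to collapse to $\delta$ outside the intended causal path. Once that causal path is identified, however, the abstraction step is routine, and everything else is exactly parallel to the Command Processor and Active Object proofs referenced in Section~\ref{app}.
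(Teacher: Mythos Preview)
Your proposal is correct and follows essentially the same approach as the paper: the paper's own proof is a one-line stub asserting the target identity $\tau_I(\partial_H(A\between P\between F\between O))=\sum_{d_{I},d_{O}\in\Delta}(r_{I}(d_{I})\cdot s_{O}(d_{O}))\cdot\tau_I(\partial_H(A\between P\between F\between O))$ and deferring all details to the ABP verification in Section~\ref{app}. You have simply made explicit the expansion, encapsulation, guarded-linear-recursion, and abstraction steps that the paper leaves implicit in that reference.
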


\begin{proof}
Based on the above state transitions of the above modules, by use of the algebraic laws of APTC, we can prove that

$\tau_I(\partial_H(A\between P\between F\between O))=\sum_{d_{I},d_{O}\in\Delta}(r_{I}(d_{I})\cdot s_{O}(d_{O}))\cdot
\tau_I(\partial_H(A\between P\between F\between O))$,

that is, the Thread-Specific Storage pattern $\tau_I(\partial_H(A\between P\between F\between O))$ can exhibit desired external behaviors.

For the details of proof, please refer to section \ref{app}, and we omit it.
\end{proof}

\newpage\section{Verification of Patterns for Resource Management}\label{RMP}

Patterns for resource management are patterns related to resource management, and can be used in higher-level and lower-level systems and applications.

In this chapter, we verify patterns for resource management. In section \ref{RA7}, we verify patterns related to resource acquisition. We verify patterns for resource lifecycle in section
\ref{RL7} and patterns for resource release in section \ref{RR7}.

\subsection{Resource Acquisition}\label{RA7}

In this subsection, we verify patterns for resource acquisition, including the Lookup pattern, the Lazy Acquisition pattern, the Eager Acquisition pattern, and the Partial Acquisition
pattern.

\subsubsection{Verification of the Lookup Pattern}

The Lookup pattern uses a mediating lookup service to find and access resources.
There are four modules in the Lookup pattern: the Resource User, the Resource Provider, the Lookup Service,
and the Resource. The Resource User interacts with the outside through
the channels $I$ and $O$; with the Resource Provider through the channel $I_{UP}$ and $O_{UP}$; with the Resource through the channels $I_{UR}$ and $O_{UR}$;
with the Lookup Service through the channels $I_{US}$ and $O_{US}$.
As illustrates in Figure \ref{Lo7}.

\begin{figure}
    \centering
    \includegraphics{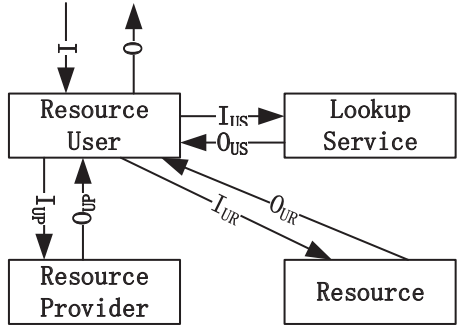}
    \caption{Lookup pattern}
    \label{Lo7}
\end{figure}

The typical process of the Lookup pattern is shown in Figure \ref{Lo7P} and as follows.

\begin{enumerate}
  \item The Resource User receives the input $d_{I}$ from the outside through the channel $I$ (the corresponding reading action is denoted $r_{I}(d_{I})$), then processes the input $d_{I}$ through a processing
  function $UF_1$, and sends the processed input $d_{I_{S}}$ to the Lookup Service through the channel $I_{US}$ (the corresponding sending action is denoted $s_{I_{US}}(d_{I_{S}})$);
  \item The Lookup Service receives $d_{I_{S}}$ from the Resource User through the channel $I_{US}$ (the corresponding reading action is denoted $r_{I_{US}}(d_{I_{S}})$), then processes the request
  through a processing function $SF$, generates and sends the processed output $d_{O_{S}}$ to the Resource User through the channel $O_{US}$ (the corresponding sending action is denoted
  $s_{O_{US}}(d_{O_S})$);
  \item The Resource User receives the output $d_{O_S}$ from the Lookup Service through the channel $O_{US}$ (the corresponding reading action is denoted $r_{O_{US}}(d_{O_S})$),
  then processes the output and generates the input $d_{I_P}$ through a processing function $UF_2$, and sends the input to the Resource Provider through the channel $I_{UP}$
  (the corresponding sending action is denoted $s_{I_{UP}}(d_{I_P})$);
  \item The Resource Provider receives the input from the Resource User through the channel $I_{UP}$ (the corresponding reading action is denoted $r_{I_{UP}}(d_{I_P})$), then
  processes the input and generate the output $d_{O_P}$ to the Resource User through a processing function $PF$, and sends the output to the Resource User
   through the channel $O_{UP}$ (the corresponding sending action is denoted $s_{O_{UP}}(d_{O_{P}})$);
  \item The Resource User receives the output $d_{O_{P}}$ from the Resource Provider through the channel $O_{UP}$ (the corresponding reading action is denoted $r_{O_{UP}}(d_{O_{P}})$), then processes
  the input through a processing function $UF_3$, generates and sends the input $d_{I_{R}}$ to the Resource through the channel $I_{UR}$ (the corresponding sending action is denoted $s_{I_{UR}}(d_{I_{R}})$);
  \item The Resource receives the input $d_{I_R}$ from the Resource User through the channel $I_{UR}$ (the corresponding reading action is denoted $r_{I_{UR}}(d_{I_R})$), then processes
  the input through a processing function $RF$, generates and sends the response $d_{O_R}$ (the corresponding sending action is denoted $s_{O_{UR}}(d_{O_R})$);
  \item The Resource User receives the response $d_{O_{R}}$ from the Resource through the channel $O_{UR}$ (the corresponding reading action is denoted $r_{O_{UR}}(d_{O_{R}})$), then
  processes the response and generates the response $d_{O}$ through a processing function $UF_4$, and sends the response to the outside through the channel $O$ (the corresponding sending action is denoted
  $s_{O}(d_{O})$).
\end{enumerate}

\begin{figure}
    \centering
    \includegraphics{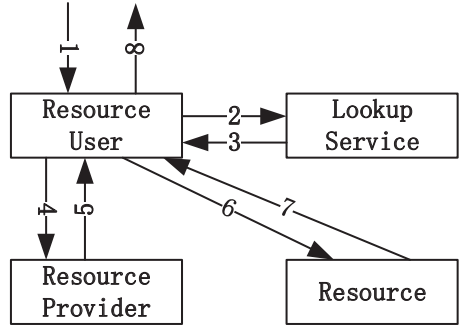}
    \caption{Typical process of Lookup pattern}
    \label{Lo7P}
\end{figure}

In the following, we verify the Lookup pattern. We assume all data elements $d_{I}$, $d_{I_{S}}$, $d_{I_{P}}$, $d_{I_R}$, $d_{O_{S}}$, $d_{O_P}$, $d_{O_R}$, $d_{O}$ are from a finite set
$\Delta$.

The state transitions of the Resource User module
described by APTC are as follows.

$U=\sum_{d_{I}\in\Delta}(r_{I}(d_{I})\cdot U_{2})$

$U_{2}=UF_1\cdot U_{3}$

$U_{3}=\sum_{d_{I_{S}}\in\Delta}(s_{I_{US}}(d_{I_{S}})\cdot U_4)$

$U_4=\sum_{d_{O_S}\in\Delta}(r_{O_{US}}(d_{O_S})\cdot U_{5})$

$U_{5}=UF_2\cdot U_{6}$

$U_{6}=\sum_{d_{I_{P}}\in\Delta}(s_{I_{UP}}(d_{I_{P}})\cdot U_7)$

$U_7=\sum_{d_{O_P}\in\Delta}(r_{O_{UP}}(d_{O_P})\cdot U_{8})$

$U_{8}=UF_3\cdot U_{9}$

$U_{9}=\sum_{d_{I_{R}}\in\Delta}(s_{I_{UR}}(d_{I_{R}})\cdot U_{10})$

$U_{10}=\sum_{d_{O_R}\in\Delta}(r_{O_{UR}}(d_{O_R})\cdot U_{11})$

$U_{11}=UF_4\cdot U_{12}$

$U_{12}=\sum_{d_{O}\in\Delta}(s_{O}(d_{O})\cdot U)$

The state transitions of the Resource Provider module
described by APTC are as follows.

$P=\sum_{d_{I_{P}}\in\Delta}(r_{I_{UP}}(d_{I_P})\cdot P_{2})$

$P_{2}=PF\cdot P_{3}$

$P_{3}=\sum_{d_{O_{P}}\in\Delta}(s_{O_{UP}}(d_{O_{P}})\cdot P)$

The state transitions of the Lookup Service module
described by APTC are as follows.

$S=\sum_{d_{I_{S}}\in\Delta}(r_{I_{US}}(d_{I_S})\cdot S_{2})$

$S_{2}=SF\cdot S_{3}$

$S_{3}=\sum_{d_{O_{S}}\in\Delta}(s_{O_{US}}(d_{O_{S}})\cdot S)$

The state transitions of the Resource module
described by APTC are as follows.

$R=\sum_{d_{I_{R}}\in\Delta}(r_{I_{UR}}(d_{I_R})\cdot R_{2})$

$R_{2}=RF\cdot R_{3}$

$R_{3}=\sum_{d_{O_{R}}\in\Delta}(s_{O_{UR}}(d_{O_{R}})\cdot R)$

The sending action and the reading action of the same data through the same channel can communicate with each other, otherwise, will cause a deadlock $\delta$. We define the following
communication functions between the Resource User and the Resource Provider Proxy.

$$\gamma(r_{I_{UP}}(d_{I_{P}}),s_{I_{UP}}(d_{I_{P}}))\triangleq c_{I_{UP}}(d_{I_{P}})$$

$$\gamma(r_{O_{UP}}(d_{O_P}),s_{O_{UP}}(d_{O_P}))\triangleq c_{O_{UP}}(d_{O_P})$$

There are two communication functions between the Resource User and the Lookup Service as follows.

$$\gamma(r_{I_{US}}(d_{I_{S}}),s_{I_{US}}(d_{I_{S}}))\triangleq c_{I_{US}}(d_{I_{S}})$$

$$\gamma(r_{O_{US}}(d_{O_{S}}),s_{O_{US}}(d_{O_{S}}))\triangleq c_{O_{US}}(d_{O_{S}})$$

There are two communication functions between the Resource User and the Resource as follows.

$$\gamma(r_{I_{UR}}(d_{I_{R}}),s_{I_{UR}}(d_{I_{R}}))\triangleq c_{I_{UR}}(d_{I_{R}})$$

$$\gamma(r_{O_{UR}}(d_{O_{R}}),s_{O_{UR}}(d_{O_{R}}))\triangleq c_{O_{UR}}(d_{O_{R}})$$

Let all modules be in parallel, then the Lookup pattern $U\quad S \quad P\quad R$ can be presented by the following process term.

$\tau_I(\partial_H(\Theta(U\between S\between P\between R)))=\tau_I(\partial_H(U\between S\between P\between R))$

where $H=\{r_{I_{UP}}(d_{I_{P}}),s_{I_{UP}}(d_{I_{P}}),r_{O_{UP}}(d_{O_P}),s_{O_{UP}}(d_{O_P}),r_{I_{US}}(d_{I_{S}}),s_{I_{US}}(d_{I_{S}}),\\
r_{O_{US}}(d_{O_{S}}),s_{O_{US}}(d_{O_{S}}),r_{I_{UR}}(d_{I_{R}}),s_{I_{UR}}(d_{I_{R}}),r_{O_{UR}}(d_{O_R}),s_{O_{UR}}(d_{O_R})\\
|d_{I}, d_{I_{P}}, d_{I_{S}}, d_{I_R}, d_{O_{P}}, d_{O_S}, d_{O_R}, d_{O}\in\Delta\}$,

$I=\{c_{I_{UP}}(d_{I_{P}}),c_{O_{UP}}(d_{O_P}),c_{I_{US}}(d_{I_{S}}),c_{O_{US}}(d_{O_{S}}),c_{I_{UR}}(d_{I_{R}}),c_{O_{UR}}(d_{O_{R}}),\\
UF_1,UF_2,UF_3,UF_4,PF,SF,RF
|d_{I}, d_{I_{P}}, d_{I_{S}}, d_{I_R}, d_{O_{P}}, d_{O_S}, d_{O_R}, d_{O}\in\Delta\}$.

Then we get the following conclusion on the Lookup pattern.

\begin{theorem}[Correctness of the Lookup pattern]
The Lookup pattern $\tau_I(\partial_H(U\between S\between P\between R))$ can exhibit desired external behaviors.
\end{theorem}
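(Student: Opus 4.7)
The plan is to follow the same template used for the preceding resource-management and design patterns: first expand the parallel composition $U\between S\between P\between R$ via the axioms $P1$--$P10$ of APTC, then apply the encapsulation $\partial_H$ to force all matched sending/reading pairs through the communication merge $\mid$, and finally use $\tau_I$ together with the guarded linear recursion defined by the state transitions of $U,S,P,R$ to abstract away all the internal actions and computation functions collected in $I$.

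First I would set up a guarded linear recursive specification $E$ whose principal variable $X_1$ satisfies $X_1 = \partial_H(U\between S\between P\between R)$. Using the state equations for $U$, $S$, $P$, $R$ and axioms $P1$--$P10$, $C11$--$C18$, and $D1$--$D6$, every step where $U$ is in state $U_3$ (ready to send $d_{I_S}$ on $I_{US}$) must synchronize with $S$ in its initial state to produce $c_{I_{US}}(d_{I_S})$, since all other candidate communications in $H$ are blocked by $\partial_H$ and collapse to $\delta$ by $A7$. A similar observation applies to the four subsequent handshakes on $O_{US}$, $I_{UP}$, $O_{UP}$, $I_{UR}$, $O_{UR}$. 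This yields a strictly sequential expansion of $X_1$ in which the externally visible actions $r_I(d_I)$ and $s_O(d_O)$ bracket a deterministic chain of internal communications and computation atoms.

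Next I would apply the axioms $TI1$--$TI6$ to push $\tau_I$ inside the expansion and rename every $c_{I_{US}}(\cdot), c_{O_{US}}(\cdot), c_{I_{UP}}(\cdot), c_{O_{UP}}(\cdot), c_{I_{UR}}(\cdot), c_{O_{UR}}(\cdot)$ and every processing function $UF_k,SF,PF,RF$ to $\tau$. Using axioms $B1$ and $B2$ together with the soundness results for guarded linear recursion (Theorem~\ref{SAPTCABS}), the $\tau$-prefixed chain collapses, and an application of $RSP$ establishes the fixed-point equation
\begin{eqnarray*}
\tau_I(\partial_H(U\between S\between P\between R)) &=& \sum_{d_I,d_O\in\Delta}\bigl(r_I(d_I)\cdot s_O(d_O)\bigr)\cdot\tau_I(\partial_H(U\between S\between P\between R)),
\end{eqnarray*}
which is exactly the desired external behavior.

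The main obstacle, as in the analogous Broker and Client-Dispatcher-Server proofs, will be the bookkeeping: showing carefully that in every intermediate configuration the only non-$\delta$ communication enabled under $\partial_H$ is the intended handshake, so that the six sequential rendezvous really do occur in the order $I_{US}\to O_{US}\to I_{UP}\to O_{UP}\to I_{UR}\to O_{UR}$ rather than in some interleaving that deadlocks. Once this linearization is justified by case analysis on the source-dependent transition rules of $\parallel$ and $\mid$, the $CFAR$-based elimination of the internal $\tau$-cluster and invocation of $RSP$ is routine, and we omit the details in the style of section~\ref{app}.
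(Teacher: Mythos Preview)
Your proposal is correct and follows essentially the same approach as the paper: expand $U\between S\between P\between R$ via the APTC axioms, use $\partial_H$ to force the six handshakes into communications, abstract with $\tau_I$ to obtain the fixed-point equation $\tau_I(\partial_H(U\between S\between P\between R))=\sum_{d_I,d_O\in\Delta}(r_I(d_I)\cdot s_O(d_O))\cdot\tau_I(\partial_H(U\between S\between P\between R))$, and defer the bookkeeping to section~\ref{app}. In fact you spell out considerably more of the mechanics (the linearization argument, the role of $RSP$/$CFAR$) than the paper itself, which simply asserts the final equation and cites the ABP example.
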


\begin{proof}
Based on the above state transitions of the above modules, by use of the algebraic laws of APTC, we can prove that

$\tau_I(\partial_H(U\between S\between P\between R))=\sum_{d_{I},d_{O}\in\Delta}(r_{I}(d_{I})\cdot s_{O}(d_{O}))\cdot
\tau_I(\partial_H(U\between S\between P\between R))$,

that is, the Lookup pattern $\tau_I(\partial_H(U\between S\between P\between R))$ can exhibit desired external behaviors.

For the details of proof, please refer to section \ref{app}, and we omit it.
\end{proof}

\subsubsection{Verification of the Lazy Acquisition Pattern}

The Lazy Acquisition pattern defers the acquisitions of resources to the latest time.
There are four modules in the Lazy Acquisition pattern: the Resource User, the Resource Provider, the Resource Proxy,
and the Resource. The Resource User interacts with the outside through
the channels $I$ and $O$; with the Resource Proxy through the channel $I_{UP}$ and $O_{UP}$. The Resource Proxy interacts with the Resource through the channels $I_{PR}$ and $O_{PR}$;
with the Resource Provider through the channels $I_{PP}$ and $O_{PP}$.
As illustrates in Figure \ref{LA7}.

\begin{figure}
    \centering
    \includegraphics{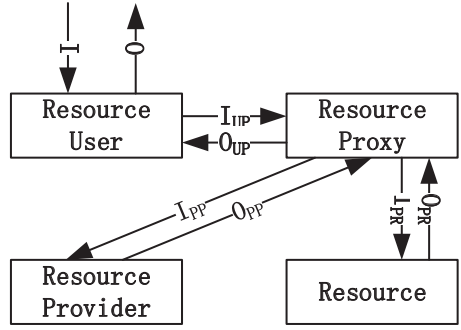}
    \caption{Lazy Acquisition pattern}
    \label{LA7}
\end{figure}

The typical process of the Lazy Acquisition pattern is shown in Figure \ref{LA7P} and as follows.

\begin{enumerate}
  \item The Resource User receives the input $d_{I}$ from the outside through the channel $I$ (the corresponding reading action is denoted $r_{I}(d_{I})$), then processes the input $d_{I}$ through a processing
  function $UF_1$, and sends the processed input $d_{I_{P}}$ to the Resource Proxy through the channel $I_{UP}$ (the corresponding sending action is denoted $s_{I_{UP}}(d_{I_{P}})$);
  \item The Resource Proxy receives $d_{I_{P}}$ from the Resource User through the channel $I_{UP}$ (the corresponding reading action is denoted $r_{I_{UP}}(d_{I_{P}})$), then processes the request
  through a processing function $PF_1$, generates and sends the processed input $d_{I_{RP}}$ to the Resource Provider through the channel $I_{PP}$ (the corresponding sending action is denoted
  $s_{I_{PP}}(d_{I_{RP}})$);
  \item The Resource Provider receives the input $d_{I_{RP}}$ from the Resource Proxy through the channel $I_{PP}$ (the corresponding reading action is denoted $r_{I_{PP}}(d_{I_{RP}})$),
  then processes the input and generates the output $d_{O_{RP}}$ through a processing function $RPF$, and sends the output to the Resource Proxy through the channel $O_{PP}$
  (the corresponding sending action is denoted $s_{O_{PP}}(d_{O_{RP}})$);
  \item The Resource Proxy receives the output from the Resource Provider through the channel $O_{PP}$ (the corresponding reading action is denoted $r_{O_{PP}}(d_{O_{RP}})$), then
  processes the results and generate the input $d_{I_R}$ to the Resource through a processing function $PF_2$, and sends the input to the Resource
  through the channel $I_{PR}$ (the corresponding sending action is denoted $s_{I_{PR}}(d_{I_{R}})$);
  \item The Resource receives the input $d_{I_{R}}$ from the Resource Proxy through the channel $I_{PR}$ (the corresponding reading action is denoted $r_{I_{PR}}(d_{I_{R}})$), then processes
  the input through a processing function $RF$, generates and sends the output $d_{O_{R}}$ to the Resource Proxy through the channel $O_{PR}$ (the corresponding sending action is denoted $s_{O_{PR}}(d_{O_{R}})$);
  \item The Resource Proxy receives the output $d_{O_R}$ from the Resource through the channel $O_{PR}$ (the corresponding reading action is denoted $r_{O_{PR}}(d_{O_R})$), then processes
  the response through a processing function $PF_3$, generates and sends the response $d_{O_P}$ (the corresponding sending action is denoted $s_{O_{UP}}(d_{O_P})$);
  \item The Resource User receives the response $d_{O_{P}}$ from the Resource Proxy through the channel $O_{UP}$ (the corresponding reading action is denoted $r_{O_{UP}}(d_{O_{P}})$), then
  processes the response and generates the response $d_{O}$ through a processing function $UF_2$, and sends the response to the outside through the channel $O$ (the corresponding sending action is denoted
  $s_{O}(d_{O})$).
\end{enumerate}

\begin{figure}
    \centering
    \includegraphics{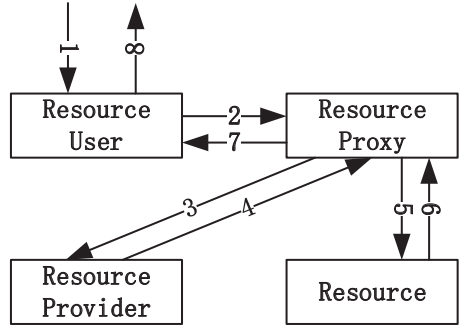}
    \caption{Typical process of Lazy Acquisition pattern}
    \label{LA7P}
\end{figure}

In the following, we verify the Lazy Acquisition pattern. We assume all data elements $d_{I}$, $d_{I_{S}}$, $d_{I_{P}}$, $d_{I_R}$, $d_{O_{S}}$, $d_{O_P}$, $d_{O_R}$, $d_{O}$ are from a finite set
$\Delta$.

The state transitions of the Resource User module
described by APTC are as follows.

$U=\sum_{d_{I}\in\Delta}(r_{I}(d_{I})\cdot U_{2})$

$U_{2}=UF_1\cdot U_{3}$

$U_{3}=\sum_{d_{I_{P}}\in\Delta}(s_{I_{UP}}(d_{I_{P}})\cdot U_4)$

$U_4=\sum_{d_{O_P}\in\Delta}(r_{O_{UP}}(d_{O_P})\cdot U_{5})$

$U_{5}=UF_2\cdot U_{6}$

$U_{6}=\sum_{d_{O}\in\Delta}(s_{O}(d_{O})\cdot U)$

The state transitions of the Resource Proxy module
described by APTC are as follows.

$P=\sum_{d_{I_{P}}\in\Delta}(r_{I_{UP}}(d_{I_P})\cdot P_{2})$

$P_{2}=PF_1\cdot P_{3}$

$P_{3}=\sum_{d_{I_{RP}}\in\Delta}(s_{I_{PP}}(d_{I_{RP}})\cdot P_4)$

$P_4=\sum_{d_{O_{RP}}\in\Delta}(r_{O_{PP}}(d_{O_{RP}})\cdot P_{5})$

$P_{5}=PF_2\cdot P_{6}$

$P_{6}=\sum_{d_{I_{R}}\in\Delta}(s_{I_{PR}}(d_{I_{R}})\cdot P_7)$

$P_7=\sum_{d_{O_{R}}\in\Delta}(r_{O_{PR}}(d_{O_{R}})\cdot P_{8})$

$P_{8}=PF_3\cdot P_{9}$

$P_{9}=\sum_{d_{O_{P}}\in\Delta}(s_{O_{UP}}(d_{O_{P}})\cdot P)$

The state transitions of the Resource Provider module
described by APTC are as follows.

$RP=\sum_{d_{I_{RP}}\in\Delta}(r_{I_{PP}}(d_{I_{RP}})\cdot RP_{2})$

$RP_{2}=RPF\cdot RP_{3}$

$RP_{3}=\sum_{d_{O_{RP}}\in\Delta}(s_{O_{PP}}(d_{O_{RP}})\cdot RP)$

The state transitions of the Resource module
described by APTC are as follows.

$R=\sum_{d_{I_{R}}\in\Delta}(r_{I_{PR}}(d_{I_R})\cdot R_{2})$

$R_{2}=RF\cdot R_{3}$

$R_{3}=\sum_{d_{O_{R}}\in\Delta}(s_{O_{PR}}(d_{O_{R}})\cdot R)$

The sending action and the reading action of the same data through the same channel can communicate with each other, otherwise, will cause a deadlock $\delta$. We define the following
communication functions between the Resource User and the Resource Proxy.

$$\gamma(r_{I_{UP}}(d_{I_{P}}),s_{I_{UP}}(d_{I_{P}}))\triangleq c_{I_{UP}}(d_{I_{P}})$$

$$\gamma(r_{O_{UP}}(d_{O_P}),s_{O_{UP}}(d_{O_P}))\triangleq c_{O_{UP}}(d_{O_P})$$

There are two communication functions between the Resource Provider and the Resource Proxy as follows.

$$\gamma(r_{I_{PP}}(d_{I_{RP}}),s_{I_{PP}}(d_{I_{RP}}))\triangleq c_{I_{PP}}(d_{I_{RP}})$$

$$\gamma(r_{O_{PP}}(d_{O_{RP}}),s_{O_{PP}}(d_{O_{RP}}))\triangleq c_{O_{PP}}(d_{O_{RP}})$$

There are two communication functions between the Resource Proxy and the Resource as follows.

$$\gamma(r_{I_{PR}}(d_{I_{R}}),s_{I_{PR}}(d_{I_{R}}))\triangleq c_{I_{PR}}(d_{I_{R}})$$

$$\gamma(r_{O_{PR}}(d_{O_{R}}),s_{O_{PR}}(d_{O_{R}}))\triangleq c_{O_{PR}}(d_{O_{R}})$$

Let all modules be in parallel, then the Lazy Acquisition pattern $U\quad P \quad RP\quad R$ can be presented by the following process term.

$\tau_I(\partial_H(\Theta(U\between P\between RP\between R)))=\tau_I(\partial_H(U\between P\between RP\between R))$

where $H=\{r_{I_{UP}}(d_{I_{P}}),s_{I_{UP}}(d_{I_{P}}),r_{O_{UP}}(d_{O_P}),s_{O_{UP}}(d_{O_P}),r_{I_{PP}}(d_{I_{RP}}),s_{I_{PP}}(d_{I_{RP}}),\\
r_{O_{PP}}(d_{O_{RP}}),s_{O_{PP}}(d_{O_{RP}}),r_{I_{PR}}(d_{I_{R}}),s_{I_{PR}}(d_{I_{R}}),r_{O_{PR}}(d_{O_R}),s_{O_{PR}}(d_{O_R})\\
|d_{I}, d_{I_{P}}, d_{I_{RP}}, d_{I_R}, d_{O_{P}}, d_{O_{RP}}, d_{O_R}, d_{O}\in\Delta\}$,

$I=\{c_{I_{UP}}(d_{I_{P}}),c_{O_{UP}}(d_{O_P}),c_{I_{PP}}(d_{I_{RP}}),c_{O_{PP}}(d_{O_{RP}}),c_{I_{PR}}(d_{I_{R}}),c_{O_{PR}}(d_{O_{R}}),\\
UF_1,UF_2,PF_1,PF_2,PF_3,RPF,RF
|d_{I}, d_{I_{P}}, d_{I_{RP}}, d_{I_R}, d_{O_{P}}, d_{O_{RP}}, d_{O_R}, d_{O}\in\Delta\}$.

Then we get the following conclusion on the Lazy Acquisition pattern.

\begin{theorem}[Correctness of the Lazy Acquisition pattern]
The Lazy Acquisition pattern $\tau_I(\partial_H(U\between P\between RP\between R))$ can exhibit desired external behaviors.
\end{theorem}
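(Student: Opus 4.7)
The plan is to follow the same template used for the preceding pattern-correctness theorems in this chapter (the Lookup pattern, and the Proxy/Command Processor patterns in earlier chapters). First I would instantiate the state transitions given just above the theorem as a guarded linear recursive specification for each of the four modules $U$, $P$, $RP$, $R$, so that $U\between P\between RP\between R$ becomes a closed $APTC_\tau$ term to which Theorems \ref{SAPTCR}, \ref{CAPTCR}, and \ref{CCFAR} apply.

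Next I would expand the parallel composition step by step along the typical process described before the theorem. Concretely, I would repeatedly use axiom $P1$ ($x\between y = x\parallel y + x\mid y$) together with the expansion laws $P4$--$P8$ and the communication laws $C11$--$C14$ to push $\partial_H$ inward over $\between$, so that at each stage the only non-blocked behaviour is a genuine communication along one of the six channel pairs in $H$. For example, the first block would rewrite $\partial_H(U\between P\between RP\between R)$ into $\sum_{d_I\in\Delta} r_I(d_I)\cdot\partial_H(U_2\between P\between RP\between R)$, using that all $s_{I_{UP}}$ actions are blocked by $H$ except when paired with the matching $r_{I_{UP}}$ in $P$, which produces the communication action $c_{I_{UP}}(d_{I_P})$. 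Iterating this along the six communications (User$\to$Proxy, Proxy$\to$Provider, Provider$\to$Proxy, Proxy$\to$Resource, Resource$\to$Proxy, Proxy$\to$User) produces, after applying $\tau_I$ to absorb the six $c$-actions together with the seven processing functions in $I$, the equation
\begin{equation*}
\tau_I(\partial_H(U\between P\between RP\between R)) = \sum_{d_I,d_O\in\Delta}\bigl(r_I(d_I)\cdot s_O(d_O)\bigr)\cdot \tau_I(\partial_H(U\between P\between RP\between R)).
\end{equation*}
Finally, by $RSP$ this single guarded linear recursive equation characterises the process up to $\approx_{rbs}$, $\approx_{rbp}$, and $\approx_{rbhp}$, so the pattern exhibits the desired external behaviour of reading an input on $I$ and producing an output on $O$.

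The main obstacle will be the bookkeeping in the expansion step: the Resource Proxy $P$ has nine sequential states and participates in three of the six communications, so the intermediate parallel composition terms become large, and one must check at each rewriting step that the only transitions surviving $\partial_H$ are the intended communications (no spurious deadlocks from mismatched $s/r$ pairs along the other channels). Because this is entirely analogous to the Lookup and Proxy patterns already verified, I would package the details by reference to section \ref{app}, and only record the final guarded linear specification obtained after encapsulation-and-abstraction, as is done throughout the chapter.
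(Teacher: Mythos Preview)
Your proposal is correct and follows essentially the same approach as the paper: expand the parallel composition using the APTC algebraic laws, apply $\partial_H$ and $\tau_I$ to obtain the guarded recursive equation $\tau_I(\partial_H(U\between P\between RP\between R))=\sum_{d_{I},d_{O}\in\Delta}(r_{I}(d_{I})\cdot s_{O}(d_{O}))\cdot\tau_I(\partial_H(U\between P\between RP\between R))$, and refer to section~\ref{app} for the detailed bookkeeping. In fact your outline is more explicit about which axioms drive each step than the paper's own proof, which simply states the final equation and defers all details to the ABP verification.
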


\begin{proof}
Based on the above state transitions of the above modules, by use of the algebraic laws of APTC, we can prove that

$\tau_I(\partial_H(U\between P\between RP\between R))=\sum_{d_{I},d_{O}\in\Delta}(r_{I}(d_{I})\cdot s_{O}(d_{O}))\cdot
\tau_I(\partial_H(U\between P\between RP\between R))$,

that is, the Lazy Acquisition pattern $\tau_I(\partial_H(U\between P\between RP\between R))$ can exhibit desired external behaviors.

For the details of proof, please refer to section \ref{app}, and we omit it.
\end{proof}

\subsubsection{Verification of the Eager Acquisition Pattern}

The Eager Acquisition pattern acquires the resources eagerly.
There are four modules in the Eager Acquisition pattern: the Resource User, the Resource Provider, the Resource Proxy,
and the Resource. The Resource User interacts with the outside through
the channels $I$ and $O$; with the Resource Proxy through the channel $I_{UP}$ and $O_{UP}$. The Resource Proxy interacts with the Resource through the channels $I_{PR}$ and $O_{PR}$;
with the Resource Provider through the channels $I_{PP}$ and $O_{PP}$.
As illustrates in Figure \ref{EA7}.

\begin{figure}
    \centering
    \includegraphics{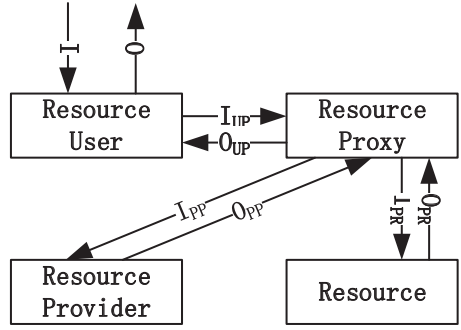}
    \caption{Eager Acquisition pattern}
    \label{EA7}
\end{figure}

The typical process of the Eager Acquisition pattern is shown in Figure \ref{EA7P} and as follows.

\begin{enumerate}
  \item The Resource User receives the input $d_{I}$ from the outside through the channel $I$ (the corresponding reading action is denoted $r_{I}(d_{I})$), then processes the input $d_{I}$ through a processing
  function $UF_1$, and sends the processed input $d_{I_{P}}$ to the Resource Proxy through the channel $I_{UP}$ (the corresponding sending action is denoted $s_{I_{UP}}(d_{I_{P}})$);
  \item The Resource Proxy receives $d_{I_{P}}$ from the Resource User through the channel $I_{UP}$ (the corresponding reading action is denoted $r_{I_{UP}}(d_{I_{P}})$), then processes the request
  through a processing function $PF_1$, generates and sends the processed input $d_{I_{RP}}$ to the Resource Provider through the channel $I_{PP}$ (the corresponding sending action is denoted
  $s_{I_{PP}}(d_{I_{RP}})$);
  \item The Resource Provider receives the input $d_{I_{RP}}$ from the Resource Proxy through the channel $I_{PP}$ (the corresponding reading action is denoted $r_{I_{PP}}(d_{I_{RP}})$),
  then processes the input and generates the output $d_{O_{RP}}$ through a processing function $RPF$, and sends the output to the Resource Proxy through the channel $O_{PP}$
  (the corresponding sending action is denoted $s_{O_{PP}}(d_{O_{RP}})$);
  \item The Resource Proxy receives the output from the Resource Provider through the channel $O_{PP}$ (the corresponding reading action is denoted $r_{O_{PP}}(d_{O_{RP}})$), then
  processes the results and generate the input $d_{I_R}$ to the Resource through a processing function $PF_2$, and sends the input to the Resource
  through the channel $I_{PR}$ (the corresponding sending action is denoted $s_{I_{PR}}(d_{I_{R}})$);
  \item The Resource receives the input $d_{I_{R}}$ from the Resource Proxy through the channel $I_{PR}$ (the corresponding reading action is denoted $r_{I_{PR}}(d_{I_{R}})$), then processes
  the input through a processing function $RF$, generates and sends the output $d_{O_{R}}$ to the Resource Proxy through the channel $O_{PR}$ (the corresponding sending action is denoted $s_{O_{PR}}(d_{O_{R}})$);
  \item The Resource Proxy receives the output $d_{O_R}$ from the Resource through the channel $O_{PR}$ (the corresponding reading action is denoted $r_{O_{PR}}(d_{O_R})$), then processes
  the response through a processing function $PF_3$, generates and sends the response $d_{O_P}$ (the corresponding sending action is denoted $s_{O_{UP}}(d_{O_P})$);
  \item The Resource User receives the response $d_{O_{P}}$ from the Resource Proxy through the channel $O_{UP}$ (the corresponding reading action is denoted $r_{O_{UP}}(d_{O_{P}})$), then
  processes the response and generates the response $d_{O}$ through a processing function $UF_2$, and sends the response to the outside through the channel $O$ (the corresponding sending action is denoted
  $s_{O}(d_{O})$).
\end{enumerate}

\begin{figure}
    \centering
    \includegraphics{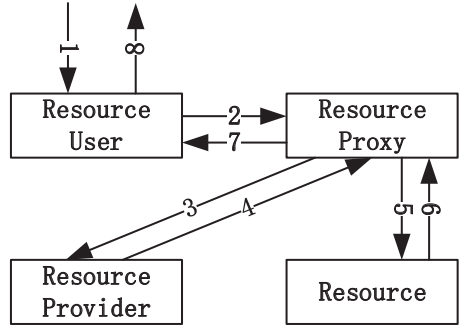}
    \caption{Typical process of Eager Acquisition pattern}
    \label{EA7P}
\end{figure}

In the following, we verify the Eager Acquisition pattern. We assume all data elements $d_{I}$, $d_{I_{S}}$, $d_{I_{P}}$, $d_{I_R}$, $d_{O_{S}}$, $d_{O_P}$, $d_{O_R}$, $d_{O}$ are from a finite set
$\Delta$.

The state transitions of the Resource User module
described by APTC are as follows.

$U=\sum_{d_{I}\in\Delta}(r_{I}(d_{I})\cdot U_{2})$

$U_{2}=UF_1\cdot U_{3}$

$U_{3}=\sum_{d_{I_{P}}\in\Delta}(s_{I_{UP}}(d_{I_{P}})\cdot U_4)$

$U_4=\sum_{d_{O_P}\in\Delta}(r_{O_{UP}}(d_{O_P})\cdot U_{5})$

$U_{5}=UF_2\cdot U_{6}$

$U_{6}=\sum_{d_{O}\in\Delta}(s_{O}(d_{O})\cdot U)$

The state transitions of the Resource Proxy module
described by APTC are as follows.

$P=\sum_{d_{I_{P}}\in\Delta}(r_{I_{UP}}(d_{I_P})\cdot P_{2})$

$P_{2}=PF_1\cdot P_{3}$

$P_{3}=\sum_{d_{I_{RP}}\in\Delta}(s_{I_{PP}}(d_{I_{RP}})\cdot P_4)$

$P_4=\sum_{d_{O_{RP}}\in\Delta}(r_{O_{PP}}(d_{O_{RP}})\cdot P_{5})$

$P_{5}=PF_2\cdot P_{6}$

$P_{6}=\sum_{d_{I_{R}}\in\Delta}(s_{I_{PR}}(d_{I_{R}})\cdot P_7)$

$P_7=\sum_{d_{O_{R}}\in\Delta}(r_{O_{PR}}(d_{O_{R}})\cdot P_{8})$

$P_{8}=PF_3\cdot P_{9}$

$P_{9}=\sum_{d_{O_{P}}\in\Delta}(s_{O_{UP}}(d_{O_{P}})\cdot P)$

The state transitions of the Resource Provider module
described by APTC are as follows.

$RP=\sum_{d_{I_{RP}}\in\Delta}(r_{I_{PP}}(d_{I_{RP}})\cdot RP_{2})$

$RP_{2}=RPF\cdot RP_{3}$

$RP_{3}=\sum_{d_{O_{RP}}\in\Delta}(s_{O_{PP}}(d_{O_{RP}})\cdot RP)$

The state transitions of the Resource module
described by APTC are as follows.

$R=\sum_{d_{I_{R}}\in\Delta}(r_{I_{PR}}(d_{I_R})\cdot R_{2})$

$R_{2}=RF\cdot R_{3}$

$R_{3}=\sum_{d_{O_{R}}\in\Delta}(s_{O_{PR}}(d_{O_{R}})\cdot R)$

The sending action and the reading action of the same data through the same channel can communicate with each other, otherwise, will cause a deadlock $\delta$. We define the following
communication functions between the Resource User and the Resource Proxy.

$$\gamma(r_{I_{UP}}(d_{I_{P}}),s_{I_{UP}}(d_{I_{P}}))\triangleq c_{I_{UP}}(d_{I_{P}})$$

$$\gamma(r_{O_{UP}}(d_{O_P}),s_{O_{UP}}(d_{O_P}))\triangleq c_{O_{UP}}(d_{O_P})$$

There are two communication functions between the Resource Provider and the Resource Proxy as follows.

$$\gamma(r_{I_{PP}}(d_{I_{RP}}),s_{I_{PP}}(d_{I_{RP}}))\triangleq c_{I_{PP}}(d_{I_{RP}})$$

$$\gamma(r_{O_{PP}}(d_{O_{RP}}),s_{O_{PP}}(d_{O_{RP}}))\triangleq c_{O_{PP}}(d_{O_{RP}})$$

There are two communication functions between the Resource Proxy and the Resource as follows.

$$\gamma(r_{I_{PR}}(d_{I_{R}}),s_{I_{PR}}(d_{I_{R}}))\triangleq c_{I_{PR}}(d_{I_{R}})$$

$$\gamma(r_{O_{PR}}(d_{O_{R}}),s_{O_{PR}}(d_{O_{R}}))\triangleq c_{O_{PR}}(d_{O_{R}})$$

Let all modules be in parallel, then the Eager Acquisition pattern $U\quad P \quad RP\quad R$ can be presented by the following process term.

$\tau_I(\partial_H(\Theta(U\between P\between RP\between R)))=\tau_I(\partial_H(U\between P\between RP\between R))$

where $H=\{r_{I_{UP}}(d_{I_{P}}),s_{I_{UP}}(d_{I_{P}}),r_{O_{UP}}(d_{O_P}),s_{O_{UP}}(d_{O_P}),r_{I_{PP}}(d_{I_{RP}}),s_{I_{PP}}(d_{I_{RP}}),\\
r_{O_{PP}}(d_{O_{RP}}),s_{O_{PP}}(d_{O_{RP}}),r_{I_{PR}}(d_{I_{R}}),s_{I_{PR}}(d_{I_{R}}),r_{O_{PR}}(d_{O_R}),s_{O_{PR}}(d_{O_R})\\
|d_{I}, d_{I_{P}}, d_{I_{RP}}, d_{I_R}, d_{O_{P}}, d_{O_{RP}}, d_{O_R}, d_{O}\in\Delta\}$,

$I=\{c_{I_{UP}}(d_{I_{P}}),c_{O_{UP}}(d_{O_P}),c_{I_{PP}}(d_{I_{RP}}),c_{O_{PP}}(d_{O_{RP}}),c_{I_{PR}}(d_{I_{R}}),c_{O_{PR}}(d_{O_{R}}),\\
UF_1,UF_2,PF_1,PF_2,PF_3,RPF,RF
|d_{I}, d_{I_{P}}, d_{I_{RP}}, d_{I_R}, d_{O_{P}}, d_{O_{RP}}, d_{O_R}, d_{O}\in\Delta\}$.

Then we get the following conclusion on the Eager Acquisition pattern.

\begin{theorem}[Correctness of the Eager Acquisition pattern]
The Eager Acquisition pattern $\tau_I(\partial_H(U\between P\between RP\between R))$ can exhibit desired external behaviors.
\end{theorem}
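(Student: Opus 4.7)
The plan is to mirror the proof template used for all preceding pattern-correctness theorems in this chapter (Lookup, Lazy Acquisition, etc.), adapting only the specific channels and processing functions. Concretely, starting from the closed APTC terms for $U$, $P$, $RP$ and $R$ defined above, I will expand $U\between P\between RP\between R$ via axiom $P1$ into a sum of parallel and communication-merge terms, and then repeatedly apply $P4$--$P10$, $C11$--$C18$, and the encapsulation axioms $D1$--$D6$ to eliminate all send/receive actions whose matching partner is absent. Because every action in $H$ has its dual partner in a neighboring module and every unmatched send or receive is blocked by $\partial_H$ to $\delta$ (and then absorbed by $A6$, $A7$), the only surviving trajectories are those in which the read $r_I(d_I)$ fires first, then the successive communications $c_{I_{UP}}$, $c_{I_{PP}}$, $c_{O_{PP}}$, $c_{I_{PR}}$, $c_{O_{PR}}$, $c_{O_{UP}}$ fire in the order forced by the sequential composition inside each module, and finally $s_O(d_O)$ fires.

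Next, I would encode the resulting expansion as a guarded linear recursive specification $E$ with recursion variables $X_1,\ldots,X_k$ tracking the joint control location of the four modules, and appeal to $RDP$ (Table~\ref{RDPRSP}) to identify $\partial_H(U\between P\between RP\between R)$ with $\langle X_1|E\rangle$. Applying the abstraction operator $\tau_I$ and axioms $TI1$--$TI6$ renames each $c_\bullet$ and each internal processing function $UF_j$, $PF_j$, $RPF$, $RF$ into $\tau$; rooted branching bisimilarity together with axiom $B1$ ($e\cdot\tau=e$) and $B2$ collapses the intermediate $\tau$-steps, leaving a single guarded equation of the form
\[
\tau_I(\langle X_1|E\rangle)=\sum_{d_I,d_O\in\Delta}\bigl(r_I(d_I)\cdot s_O(d_O)\bigr)\cdot\tau_I(\langle X_1|E\rangle).
\]
Invoking $RSP$ then yields the displayed equation for $\tau_I(\partial_H(U\between P\between RP\between R))$, from which the desired external behavior follows, exactly as in the proof of the Lazy Acquisition pattern (whose module topology is isomorphic to the one here).

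The main obstacle, as in all the analogous theorems in this chapter, is not any deep algebraic difficulty but the bookkeeping: one has to verify that the expansion of the four-way parallel composition under $\partial_H$ contains no residual deadlock branch, i.e., that every $s_\bullet$ in the body of a module is in fact synchronized with exactly one matching $r_\bullet$ in another module listed in $H$, and that the causal order imposed by the $\cdot$-operators in the six state-machines is compatible (so that no two modules are waiting on each other in a cycle). Because the communication graph $U\leftrightarrow P\leftrightarrow RP$, $P\leftrightarrow R$ is acyclic and each of $P$'s six sequential phases alternates strictly between a single producer and a single consumer, this check is routine. Following the author's convention, the detailed algebraic manipulation is omitted and the reader is referred to section~\ref{app}.

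\begin{proof}
Based on the above state transitions of the above modules, by use of the algebraic laws of APTC, we can prove that
\[
\tau_I(\partial_H(U\between P\between RP\between R))=\sum_{d_{I},d_{O}\in\Delta}\bigl(r_{I}(d_{I})\cdot s_{O}(d_{O})\bigr)\cdot\tau_I(\partial_H(U\between P\between RP\between R)),
\]
that is, the Eager Acquisition pattern $\tau_I(\partial_H(U\between P\between RP\between R))$ can exhibit desired external behaviors. For the details of proof, please refer to section \ref{app}, and we omit it.
\end{proof}
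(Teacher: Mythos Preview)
Your proposal is correct and matches the paper's proof essentially verbatim: the displayed equation and the concluding sentence in your \texttt{proof} environment are identical to the paper's own proof, which likewise defers all algebraic detail to section~\ref{app}. Your preamble spelling out the intended use of $P1$, $P4$--$P10$, $C11$--$C18$, $D1$--$D6$, $RDP$/$RSP$, and $TI1$--$TI6$/$B1$--$B2$ is more explicit than anything the paper writes for this theorem, but it is fully consistent with the template the paper follows (and actually demonstrates in the ABP verification of section~\ref{app}).
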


\begin{proof}
Based on the above state transitions of the above modules, by use of the algebraic laws of APTC, we can prove that

$\tau_I(\partial_H(U\between P\between RP\between R))=\sum_{d_{I},d_{O}\in\Delta}(r_{I}(d_{I})\cdot s_{O}(d_{O}))\cdot
\tau_I(\partial_H(U\between P\between RP\between R))$,

that is, the Eager Acquisition pattern $\tau_I(\partial_H(U\between P\between RP\between R))$ can exhibit desired external behaviors.

For the details of proof, please refer to section \ref{app}, and we omit it.
\end{proof}

\subsubsection{Verification of the Partial Acquisition Pattern}

The Partial Acquisition pattern partially acquires the resources into multi stages to optimize resource management.
There are three modules in the Partial Acquisition pattern: the Resource User, the Resource Provider,
and the Resource. The Resource User interacts with the outside through
the channels $I$ and $O$; with the Resource Provider through the channels $I_{UP}$ and $O_{UP}$; with the Resource through the channels $I_{UR}$ and $O_{UR}$.
As illustrates in Figure \ref{PA7}.

\begin{figure}
    \centering
    \includegraphics{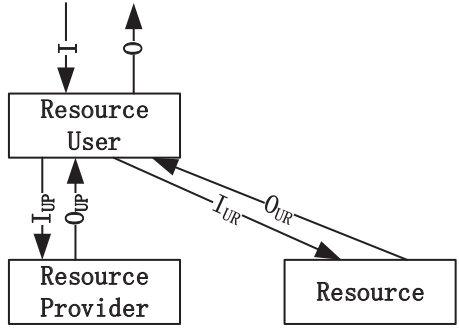}
    \caption{Partial Acquisition pattern}
    \label{PA7}
\end{figure}

The typical process of the Partial Acquisition pattern is shown in Figure \ref{PA7P} and as follows.

\begin{enumerate}
  \item The Resource User receives the input $d_{I}$ from the outside through the channel $I$ (the corresponding reading action is denoted $r_{I}(d_{I})$), then processes the input $d_{I}$ through a processing
  function $UF_1$, and generates the input $d_{I_P}$, and sends the input to the Resource Provider through the channel $I_{UP}$
  (the corresponding sending action is denoted $s_{I_{UP}}(d_{I_P})$);
  \item The Resource Provider receives the input from the Resource User through the channel $I_{UP}$ (the corresponding reading action is denoted $r_{I_{UP}}(d_{I_P})$), then
  processes the input and generate the output $d_{O_P}$ to the Resource User through a processing function $PF$, and sends the output to the Resource User
   through the channel $O_{UP}$ (the corresponding sending action is denoted $s_{O_{UP}}(d_{O_{P}})$);
  \item The Resource User receives the output $d_{O_{P}}$ from the Resource Provider through the channel $O_{UP}$ (the corresponding reading action is denoted $r_{O_{UP}}(d_{O_{P}})$), then processes
  the output through a processing function $UF_2$, generates and sends the input $d_{I_{R}}$ to the Resource through the channel $I_{UR}$ (the corresponding sending action is denoted $s_{I_{UR}}(d_{I_{R}})$);
  \item The Resource receives the input $d_{I_R}$ from the Resource User through the channel $I_{UR}$ (the corresponding reading action is denoted $r_{I_{UR}}(d_{I_R})$), then processes
  the input through a processing function $RF$, generates and sends the response $d_{O_R}$ (the corresponding sending action is denoted $s_{O_{UR}}(d_{O_R})$);
  \item The Resource User receives the response $d_{O_{R}}$ from the Resource through the channel $O_{UR}$ (the corresponding reading action is denoted $r_{O_{UR}}(d_{O_{R}})$), then
  processes the response and generates the response $d_{O}$ through a processing function $UF_3$, and sends the response to the outside through the channel $O$ (the corresponding sending action is denoted
  $s_{O}(d_{O})$).
\end{enumerate}

\begin{figure}
    \centering
    \includegraphics{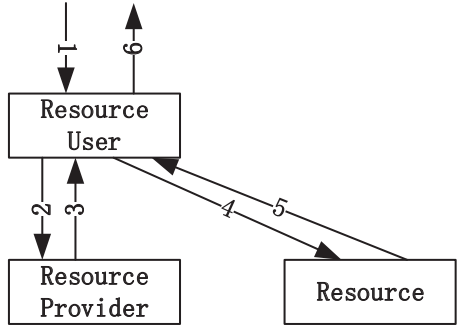}
    \caption{Typical process of Partial Acquisition pattern}
    \label{PA7P}
\end{figure}

In the following, we verify the Partial Acquisition pattern. We assume all data elements $d_{I}$, $d_{I_{P}}$, $d_{I_R}$, $d_{O_P}$, $d_{O_R}$, $d_{O}$ are from a finite set
$\Delta$.

The state transitions of the Resource User module
described by APTC are as follows.

$U=\sum_{d_{I}\in\Delta}(r_{I}(d_{I})\cdot U_{2})$

$U_{2}=UF_1\cdot U_{3}$

$U_{3}=\sum_{d_{I_{P}}\in\Delta}(s_{I_{UP}}(d_{I_{P}})\cdot U_4)$

$U_4=\sum_{d_{O_P}\in\Delta}(r_{O_{UP}}(d_{O_P})\cdot U_{5})$

$U_{5}=UF_2\cdot U_{6}$

$U_{6}=\sum_{d_{I_{R}}\in\Delta}(s_{I_{UR}}(d_{I_{R}})\cdot U_{7})$

$U_{7}=\sum_{d_{O_R}\in\Delta}(r_{O_{UR}}(d_{O_R})\cdot U_{8})$

$U_{8}=UF_3\cdot U_{9}$

$U_{9}=\sum_{d_{O}\in\Delta}(s_{O}(d_{O})\cdot U)$

The state transitions of the Resource Provider module
described by APTC are as follows.

$P=\sum_{d_{I_{P}}\in\Delta}(r_{I_{UP}}(d_{I_P})\cdot P_{2})$

$P_{2}=PF\cdot P_{3}$

$P_{3}=\sum_{d_{O_{P}}\in\Delta}(s_{O_{UP}}(d_{O_{P}})\cdot P)$

The state transitions of the Resource module
described by APTC are as follows.

$R=\sum_{d_{I_{R}}\in\Delta}(r_{I_{UR}}(d_{I_R})\cdot R_{2})$

$R_{2}=RF\cdot R_{3}$

$R_{3}=\sum_{d_{O_{R}}\in\Delta}(s_{O_{UR}}(d_{O_{R}})\cdot R)$

The sending action and the reading action of the same data through the same channel can communicate with each other, otherwise, will cause a deadlock $\delta$. We define the following
communication functions between the Resource User and the Resource Provider.

$$\gamma(r_{I_{UP}}(d_{I_{P}}),s_{I_{UP}}(d_{I_{P}}))\triangleq c_{I_{UP}}(d_{I_{P}})$$

$$\gamma(r_{O_{UP}}(d_{O_P}),s_{O_{UP}}(d_{O_P}))\triangleq c_{O_{UP}}(d_{O_P})$$

There are two communication functions between the Resource User and the Resource as follows.

$$\gamma(r_{I_{UR}}(d_{I_{R}}),s_{I_{UR}}(d_{I_{R}}))\triangleq c_{I_{UR}}(d_{I_{R}})$$

$$\gamma(r_{O_{UR}}(d_{O_{R}}),s_{O_{UR}}(d_{O_{R}}))\triangleq c_{O_{UR}}(d_{O_{R}})$$

Let all modules be in parallel, then the Partial Acquisition pattern $U\quad P\quad R$ can be presented by the following process term.

$\tau_I(\partial_H(\Theta(U\between P\between R)))=\tau_I(\partial_H(U\between P\between R))$

where $H=\{r_{I_{UP}}(d_{I_{P}}),s_{I_{UP}}(d_{I_{P}}),r_{O_{UP}}(d_{O_P}),s_{O_{UP}}(d_{O_P}),r_{I_{UR}}(d_{I_{R}}),s_{I_{UR}}(d_{I_{R}}),\\
r_{O_{UR}}(d_{O_R}),s_{O_{UR}}(d_{O_R})
|d_{I}, d_{I_{P}}, d_{I_R}, d_{O_{P}}, d_{O_R}, d_{O}\in\Delta\}$,

$I=\{c_{I_{UP}}(d_{I_{P}}),c_{O_{UP}}(d_{O_P}),c_{I_{UR}}(d_{I_{R}}),c_{O_{UR}}(d_{O_{R}}),\\
UF_1,UF_2,UF_3,PF,RF
|d_{I}, d_{I_{P}}, d_{I_R}, d_{O_{P}}, d_{O_R}, d_{O}\in\Delta\}$.

Then we get the following conclusion on the Partial Acquisition pattern.

\begin{theorem}[Correctness of the Partial Acquisition pattern]
The Partial Acquisition pattern $\tau_I(\partial_H(U\between P\between R))$ can exhibit desired external behaviors.
\end{theorem}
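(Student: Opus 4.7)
The plan is to follow the same template that has been used for every earlier pattern in this chapter and the preceding chapters. First I would expand the whole parallel composition $U \between P \between R$ using axiom $P1$ ($x \between y = x \parallel y + x \mid y$) together with the associativity and commutativity of $\between$ and $\parallel$ ($P2, P3$), so that the initial actions of the three modules can be arranged into interleavings and synchronizations. Since the only initial action enabled outside the encapsulation set is $r_I(d_I)$ of the Resource User (neither $P$ nor $R$ can fire before receiving from the User), applying $\partial_H$ will kill all un-matched send/receive attempts on the internal channels $I_{UP}, O_{UP}, I_{UR}, O_{UR}$ via axioms $D1$--$D6$ together with $P9, P10, C17, C18$, leaving only the expected sequential handshake pattern.

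Next I would proceed by the guarded linear recursion machinery of subsection on Recursion: introduce names $X_1, X_2, \ldots$ for $\partial_H(U \between P \between R)$ and for each of the reachable derivative states (after reading $d_I$; after the $U \to P$ handshake; after the $P \to U$ handshake; after the $U \to R$ handshake; and after the $R \to U$ handshake). Applying $D1$--$D6$ and the communication axioms $C11$--$C14$, each of these $X_k$ satisfies a linear equation involving a single communication action $c_{I_{UP}}(d_{I_P})$, $c_{O_{UP}}(d_{O_P})$, $c_{I_{UR}}(d_{I_R})$, or $c_{O_{UR}}(d_{O_R})$, or the external actions $r_I(d_I)$ and $s_O(d_O)$, together with the processing actions $UF_1, UF_2, UF_3, PF, RF$. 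This gives a guarded linear recursive specification $E$ with $\partial_H(U \between P \between R) = \langle X_1 \mid E \rangle$ by $RDP$ and $RSP$ (Theorems \ref{SAPTCR}, \ref{CAPTCR}).

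Then I would apply $\tau_I$ to $\langle X_1 \mid E \rangle$. Each of the internal communication actions and processing functions lies in the set $I$, so by $TI1$--$TI6$ they rename to $\tau$, and by $B1$--$B3$ these $\tau$'s collapse against their adjacent actions. The resulting equation for $\tau_I(\langle X_1 \mid E \rangle)$ reduces to
\[
\tau_I(\partial_H(U \between P \between R)) = \sum_{d_I, d_O \in \Delta}\bigl(r_I(d_I) \cdot s_O(d_O)\bigr) \cdot \tau_I(\partial_H(U \between P \between R)),
\]
which is exactly the desired external behavior, and $CFAR$ (Theorem \ref{SCFAR}) guarantees that no spurious $\tau$-loops remain.

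The main obstacle, as in the earlier proofs of this chapter, is purely the bookkeeping: there are three modules, six internal channels, and five sequential handshake phases, so tracking which derivatives are reachable and checking that $\partial_H$ rules out every ill-formed interleaving is tedious but entirely mechanical; the underlying algebra is identical to that used in the ABP verification of section \ref{app}, which is why I would delegate the full expansion to a reference to that section, as all preceding pattern correctness proofs in the paper have done.
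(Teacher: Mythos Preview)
Your proposal is correct and follows essentially the same approach as the paper: both argue that the algebraic laws of APTC applied to the module specifications yield the recursive equation $\tau_I(\partial_H(U\between P\between R))=\sum_{d_{I},d_{O}\in\Delta}(r_{I}(d_{I})\cdot s_{O}(d_{O}))\cdot\tau_I(\partial_H(U\between P\between R))$, with the detailed expansion delegated to the ABP template in section~\ref{app}. Your write-up is in fact considerably more explicit about which axioms are invoked at each phase (the $P$-laws for expansion, $D1$--$D6$ for encapsulation, $RDP$/$RSP$ for the linear specification, $TI$- and $B$-laws for abstraction) than the paper's own two-sentence proof, which simply asserts the result and refers to section~\ref{app}.
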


\begin{proof}
Based on the above state transitions of the above modules, by use of the algebraic laws of APTC, we can prove that

$\tau_I(\partial_H(U\between P\between R))=\sum_{d_{I},d_{O}\in\Delta}(r_{I}(d_{I})\cdot s_{O}(d_{O}))\cdot
\tau_I(\partial_H(U\between P\between R))$,

that is, the Partial Acquisition pattern $\tau_I(\partial_H(U\between P\between R))$ can exhibit desired external behaviors.

For the details of proof, please refer to section \ref{app}, and we omit it.
\end{proof}

\subsection{Resource Lifecycle}\label{RL7}

In this subsection, we verify patterns related to resource liftcycle, including the Caching pattern, the Pooling pattern, the Coordinator pattern, and the Resource Lifecycle Manager
pattern.

\subsubsection{Verification of the Caching Pattern}

The Caching pattern allows to cache the resources to avoid re-acquisitions of the resources.
There are four modules in the Caching pattern: the Resource User, the Resource Provider, the Resource Cache,
and the Resource. The Resource User interacts with the outside through
the channels $I$ and $O$; with the Resource Provider through the channel $I_{UP}$ and $O_{UP}$; with the Resource through the channels $I_{UR}$ and $O_{UR}$;
with the Resource Cache through the channels $I_{UC}$ and $O_{UC}$.
As illustrates in Figure \ref{Ca7}.

\begin{figure}
    \centering
    \includegraphics{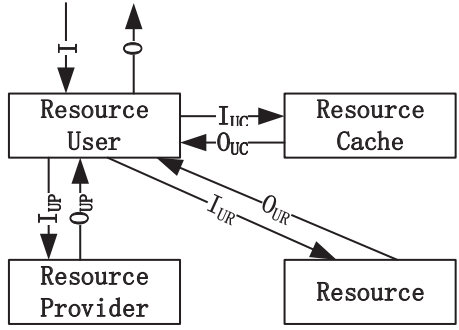}
    \caption{Caching pattern}
    \label{Ca7}
\end{figure}

The typical process of the Caching pattern is shown in Figure \ref{Ca7P} and as follows.

\begin{enumerate}
  \item The Resource User receives the input $d_{I}$ from the outside through the channel $I$ (the corresponding reading action is denoted $r_{I}(d_{I})$), then processes the input
   and generates the input $d_{I_P}$ through a processing function $UF_1$, and sends the input to the Resource Provider through the channel $I_{UP}$
  (the corresponding sending action is denoted $s_{I_{UP}}(d_{I_P})$);
  \item The Resource Provider receives the input from the Resource User through the channel $I_{UP}$ (the corresponding reading action is denoted $r_{I_{UP}}(d_{I_P})$), then
  processes the input and generate the output $d_{O_P}$ to the Resource User through a processing function $PF$, and sends the output to the Resource User
   through the channel $O_{UP}$ (the corresponding sending action is denoted $s_{O_{UP}}(d_{O_{P}})$);
  \item The Resource User receives the output $d_{O_{P}}$ from the Resource Provider through the channel $O_{UP}$ (the corresponding reading action is denoted $r_{O_{UP}}(d_{O_{P}})$), then processes
  the input through a processing function $UF_2$, generates and sends the input $d_{I_{R}}$ to the Resource through the channel $I_{UR}$ (the corresponding sending action is denoted $s_{I_{UR}}(d_{I_{R}})$);
  \item The Resource receives the input $d_{I_R}$ from the Resource User through the channel $I_{UR}$ (the corresponding reading action is denoted $r_{I_{UR}}(d_{I_R})$), then processes
  the input through a processing function $RF$, generates and sends the response $d_{O_R}$ (the corresponding sending action is denoted $s_{O_{UR}}(d_{O_R})$);
  \item The Resource User receives the response $d_{O_{R}}$ from the Resource through the channel $O_{UR}$ (the corresponding reading action is denoted $r_{O_{UR}}(d_{O_{R}})$),
  then processes the input $d_{O_R}$ through a processing function $UF_3$, and sends the processed input $d_{I_{C}}$ to the Resource Cache through the channel $I_{UC}$ (the corresponding sending action is denoted $s_{I_{UC}}(d_{I_{C}})$);
  \item The Resource Cache receives $d_{I_{C}}$ from the Resource User through the channel $I_{UC}$ (the corresponding reading action is denoted $r_{I_{UC}}(d_{I_{C}})$), then processes the request
  through a processing function $CF$, generates and sends the processed output $d_{O_{C}}$ to the Resource User through the channel $O_{UC}$ (the corresponding sending action is denoted
  $s_{O_{UC}}(d_{O_C})$);
  \item The Resource User receives the output $d_{O_C}$ from the Resource Cache through the channel $O_{UC}$ (the corresponding reading action is denoted $r_{O_{UC}}(d_{O_C})$),
   then processes the response and generates the response $d_{O}$ through a processing function $UF_4$, and sends the response to the outside through the channel $O$ (the corresponding sending action is denoted
  $s_{O}(d_{O})$).
\end{enumerate}

\begin{figure}
    \centering
    \includegraphics{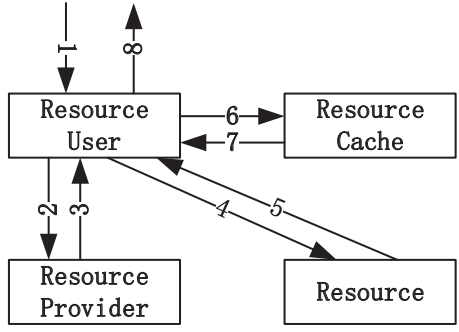}
    \caption{Typical process of Caching pattern}
    \label{Ca7P}
\end{figure}

In the following, we verify the Caching pattern. We assume all data elements $d_{I}$, $d_{I_{C}}$, $d_{I_{P}}$, $d_{I_R}$, $d_{O_{C}}$, $d_{O_P}$, $d_{O_R}$, $d_{O}$ are from a finite set
$\Delta$.

The state transitions of the Resource User module
described by APTC are as follows.

$U=\sum_{d_{I}\in\Delta}(r_{I}(d_{I})\cdot U_{2})$

$U_{2}=UF_1\cdot U_{3}$

$U_{3}=\sum_{d_{I_{P}}\in\Delta}(s_{I_{UP}}(d_{I_{P}})\cdot U_4)$

$U_4=\sum_{d_{O_P}\in\Delta}(r_{O_{UP}}(d_{O_P})\cdot U_{5})$

$U_{5}=UF_2\cdot U_{6}$

$U_{6}=\sum_{d_{I_{R}}\in\Delta}(s_{I_{UR}}(d_{I_{R}})\cdot U_{7})$

$U_{7}=\sum_{d_{O_R}\in\Delta}(r_{O_{UR}}(d_{O_R})\cdot U_{8})$

$U_{8}=UF_3\cdot U_{9}$

$U_{9}=\sum_{d_{I_{C}}\in\Delta}(s_{I_{UC}}(d_{I_{C}})\cdot U_{10})$

$U_{10}=\sum_{d_{O_C}\in\Delta}(r_{O_{UC}}(d_{O_C})\cdot U_{11})$

$U_{11}=UF_4\cdot U_{12}$

$U_{12}=\sum_{d_{O}\in\Delta}(s_{O}(d_{O})\cdot U)$

The state transitions of the Resource Provider module
described by APTC are as follows.

$P=\sum_{d_{I_{P}}\in\Delta}(r_{I_{UP}}(d_{I_P})\cdot P_{2})$

$P_{2}=PF\cdot P_{3}$

$P_{3}=\sum_{d_{O_{P}}\in\Delta}(s_{O_{UP}}(d_{O_{P}})\cdot P)$

The state transitions of the Resource Cache module
described by APTC are as follows.

$C=\sum_{d_{I_{C}}\in\Delta}(r_{I_{UC}}(d_{I_C})\cdot C_{2})$

$C_{2}=CF\cdot C_{3}$

$C_{3}=\sum_{d_{O_{C}}\in\Delta}(s_{O_{UC}}(d_{O_{C}})\cdot C)$

The state transitions of the Resource module
described by APTC are as follows.

$R=\sum_{d_{I_{R}}\in\Delta}(r_{I_{UR}}(d_{I_R})\cdot R_{2})$

$R_{2}=RF\cdot R_{3}$

$R_{3}=\sum_{d_{O_{R}}\in\Delta}(s_{O_{UR}}(d_{O_{R}})\cdot R)$

The sending action and the reading action of the same data through the same channel can communicate with each other, otherwise, will cause a deadlock $\delta$. We define the following
communication functions between the Resource User and the Resource Provider.

$$\gamma(r_{I_{UP}}(d_{I_{P}}),s_{I_{UP}}(d_{I_{P}}))\triangleq c_{I_{UP}}(d_{I_{P}})$$

$$\gamma(r_{O_{UP}}(d_{O_P}),s_{O_{UP}}(d_{O_P}))\triangleq c_{O_{UP}}(d_{O_P})$$

There are two communication functions between the Resource User and the Resource Cache as follows.

$$\gamma(r_{I_{UC}}(d_{I_{C}}),s_{I_{UC}}(d_{I_{C}}))\triangleq c_{I_{UC}}(d_{I_{C}})$$

$$\gamma(r_{O_{UC}}(d_{O_{C}}),s_{O_{UC}}(d_{O_{C}}))\triangleq c_{O_{UC}}(d_{O_{C}})$$

There are two communication functions between the Resource User and the Resource as follows.

$$\gamma(r_{I_{UR}}(d_{I_{R}}),s_{I_{UR}}(d_{I_{R}}))\triangleq c_{I_{UR}}(d_{I_{R}})$$

$$\gamma(r_{O_{UR}}(d_{O_{R}}),s_{O_{UR}}(d_{O_{R}}))\triangleq c_{O_{UR}}(d_{O_{R}})$$

Let all modules be in parallel, then the Caching pattern $U\quad C \quad P\quad R$ can be presented by the following process term.

$\tau_I(\partial_H(\Theta(U\between C\between P\between R)))=\tau_I(\partial_H(U\between C\between P\between R))$

where $H=\{r_{I_{UP}}(d_{I_{P}}),s_{I_{UP}}(d_{I_{P}}),r_{O_{UP}}(d_{O_P}),s_{O_{UP}}(d_{O_P}),r_{I_{UC}}(d_{I_{C}}),s_{I_{UC}}(d_{I_{C}}),\\
r_{O_{UC}}(d_{O_{C}}),s_{O_{UC}}(d_{O_{C}}),r_{I_{UR}}(d_{I_{R}}),s_{I_{UR}}(d_{I_{R}}),r_{O_{UR}}(d_{O_R}),s_{O_{UR}}(d_{O_R})\\
|d_{I}, d_{I_{P}}, d_{I_{C}}, d_{I_R}, d_{O_{P}}, d_{O_C}, d_{O_R}, d_{O}\in\Delta\}$,

$I=\{c_{I_{UP}}(d_{I_{P}}),c_{O_{UP}}(d_{O_P}),c_{I_{UC}}(d_{I_{C}}),c_{O_{UC}}(d_{O_{C}}),c_{I_{UR}}(d_{I_{R}}),c_{O_{UR}}(d_{O_{R}}),\\
UF_1,UF_2,UF_3,UF_4,PF,CF,RF
|d_{I}, d_{I_{P}}, d_{I_{C}}, d_{I_R}, d_{O_{P}}, d_{O_C}, d_{O_R}, d_{O}\in\Delta\}$.

Then we get the following conclusion on the Caching pattern.

\begin{theorem}[Correctness of the Caching pattern]
The Caching pattern $\tau_I(\partial_H(U\between C\between P\between R))$ can exhibit desired external behaviors.
\end{theorem}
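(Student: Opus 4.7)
The plan is to follow the same algebraic strategy used in all preceding correctness theorems of this chapter (e.g., the Lookup, Lazy Acquisition, Eager Acquisition, and Partial Acquisition patterns). The target equation to establish is
\[
\tau_I(\partial_H(U\between C\between P\between R))=\sum_{d_I,d_O\in\Delta}(r_I(d_I)\cdot s_O(d_O))\cdot\tau_I(\partial_H(U\between C\between P\between R)),
\]
from which the statement that the pattern exhibits the desired external behavior follows immediately, since all intermediate communications and processing functions have been abstracted into $\tau$.

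First I would expand $U\between C\between P\between R$ using axiom $P1$ together with the expansion laws $P4$--$P10$ and the communication laws $C11$--$C18$ of APTC, then push the encapsulation operator $\partial_H$ inward via $D1$--$D6$. Because $H$ contains every raw send and receive on the internal channels $I_{UP},O_{UP},I_{UC},O_{UC},I_{UR},O_{UR}$, all mismatched send/receive pairs collapse to $\delta$ and are absorbed by $A6,A7$, while only the matched pairs survive as the communication actions $c_{I_{UP}},c_{O_{UP}},c_{I_{UC}},c_{O_{UC}},c_{I_{UR}},c_{O_{UR}}$. Next I would introduce a guarded linear recursion specification $E$ with variables $X_1,\ldots,X_{12}$ corresponding to the twelve sequential states of the Resource User interleaved with the matching states of $P$, $R$, and $C$, showing by structural induction on the transitions in Tables of $U,P,C,R$ that $\partial_H(U\between C\between P\between R)=\langle X_1\mid E\rangle$. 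Each $X_k$ has the form $a_k\cdot X_{k+1}$ where $a_k\in\{r_I(d_I),c_{I_{UP}}(d_{I_P}),PF,c_{O_{UP}}(d_{O_P}),UF_2,c_{I_{UR}}(d_{I_R}),RF,c_{O_{UR}}(d_{O_R}),UF_3,c_{I_{UC}}(d_{I_C}),CF,c_{O_{UC}}(d_{O_C}),UF_4,s_O(d_O)\}$, and all of these intermediate actions except $r_I(d_I)$ and $s_O(d_O)$ lie in $I$.

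Then I would apply $\tau_I$ using $TI1$--$TI6$ to rewrite every atomic action in $I$ into $\tau$, collapse consecutive $\tau$-prefixes via $B1,B2$ together with the rooted branching soundness of these axioms, and thereby obtain
\[
\tau_I(\langle X_1\mid E\rangle)=\sum_{d_I,d_O\in\Delta}(r_I(d_I)\cdot s_O(d_O))\cdot\tau_I(\langle X_1\mid E\rangle).
\]
A final appeal to $RSP$ identifies this with the guarded linear recursive specification whose unique solution is the desired external behavior, giving the theorem.

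The main obstacle will be bookkeeping rather than conceptual difficulty: the Resource User in the Caching pattern has a longer sequential spine (twelve sub-states) than in the earlier patterns of this subsection, and it must correctly synchronize in turn with three different partners $P$, $R$, and $C$. I will need to verify that at each stage of the expansion the only non-$\delta$ summand in $\partial_H(\cdots)$ is the one corresponding to the intended communication partner, which amounts to checking that the send and receive actions of the other two idle modules are indeed blocked by $H$. Once this case analysis is carried out per state $X_k$, the reduction to the fixed-point equation, and hence the invocation of $RSP$, proceeds exactly as in the preceding proofs, so the remaining details are routine and can be omitted with a reference to the general methodology in section \ref{app}.
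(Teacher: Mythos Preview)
Your proposal is correct and takes essentially the same approach as the paper: the paper's own proof simply states the target fixed-point equation $\tau_I(\partial_H(U\between C\between P\between R))=\sum_{d_I,d_O\in\Delta}(r_I(d_I)\cdot s_O(d_O))\cdot\tau_I(\partial_H(U\between C\between P\between R))$ and defers all details to the ABP methodology of section~\ref{app}, which is exactly the expansion--encapsulation--guarded-recursion--abstraction--$RSP$ pipeline you describe. Your write-up is in fact more explicit than the paper's (minor bookkeeping slip: $UF_1$ is missing from your list of intermediate actions, and the linear specification will have more than twelve variables once the partner states for $PF$, $RF$, $CF$ are interleaved), but the strategy is identical.
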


\begin{proof}
Based on the above state transitions of the above modules, by use of the algebraic laws of APTC, we can prove that

$\tau_I(\partial_H(U\between C\between P\between R))=\sum_{d_{I},d_{O}\in\Delta}(r_{I}(d_{I})\cdot s_{O}(d_{O}))\cdot
\tau_I(\partial_H(U\between C\between P\between R))$,

that is, the Caching pattern $\tau_I(\partial_H(U\between C\between P\between R))$ can exhibit desired external behaviors.

For the details of proof, please refer to section \ref{app}, and we omit it.
\end{proof}

\subsubsection{Verification of the Pooling Pattern}

The Pooling pattern allows to recycle the resources to avoid re-acquisitions of the resources.
There are four modules in the Pooling pattern: the Resource User, the Resource Provider, the Resource Pool,
and the Resource. The Resource User interacts with the outside through
the channels $I$ and $O$; with the Resource Provider through the channel $I_{UP}$ and $O_{UP}$; with the Resource through the channels $I_{UR}$ and $O_{UR}$.
The Resource Pool interacts with the Resource Provider through the channels $I_{PP}$ and $O_{PP}$.
As illustrates in Figure \ref{Po7}.

\begin{figure}
    \centering
    \includegraphics{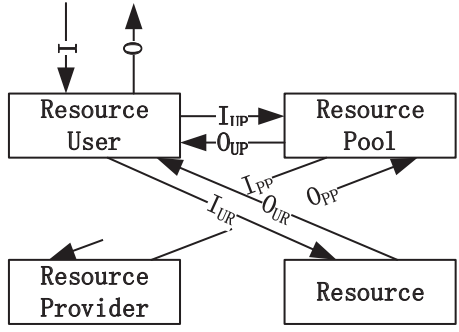}
    \caption{Pooling pattern}
    \label{Po7}
\end{figure}

The typical process of the Pooling pattern is shown in Figure \ref{Po7P} and as follows.

\begin{enumerate}
  \item The Resource User receives the input $d_{I}$ from the outside through the channel $I$ (the corresponding reading action is denoted $r_{I}(d_{I})$), then processes the input
   and generates the input $d_{I_P}$ through a processing function $UF_1$, and sends the input to the Resource Pool through the channel $I_{UP}$
  (the corresponding sending action is denoted $s_{I_{UP}}(d_{I_P})$);
  \item The Resource Pool receives the input from the Resource User through the channel $I_{UP}$ (the corresponding reading action is denoted $r_{I_{UP}}(d_{I_P})$), then
  processes the input and generate the input $d_{I_{RP}}$ to the Resource Provider through a processing function $PF_1$, and sends the input to the Resource Provider
   through the channel $I_{PP}$ (the corresponding sending action is denoted $s_{I_{PP}}(d_{I_{RP}})$);
  \item The Resource Provider receives the output $d_{I_{RP}}$ from the Resource Pool through the channel $I_{PP}$ (the corresponding reading action is denoted $r_{I_{PP}}(d_{I_{RP}})$), then processes
  the input through a processing function $RPF$, generates and sends the output $d_{O_{RP}}$ to the Resource Pool through the channel $O_{PP}$ (the corresponding sending action is denoted $s_{O_{PP}}(d_{O_{RP}})$);
  \item The Resource Pool receives the input $d_{O_{RP}}$ from the Resource Provider through the channel $O_{PP}$ (the corresponding reading action is denoted $r_{O_{PP}}(d_{O_{RP}})$), then processes
  the output through a processing function $PF_2$, generates and sends the response $d_{O_P}$ (the corresponding sending action is denoted $s_{O_{UP}}(d_{O_P})$);
  \item The Resource User receives the response $d_{O_{P}}$ from the Resource Pool through the channel $O_{UP}$ (the corresponding reading action is denoted $r_{O_{UP}}(d_{O_{P}})$),
  then processes the output $d_{O_P}$ through a processing function $UF_2$, and sends the processed input $d_{I_{R}}$ to the Resource through the channel $I_{UR}$ (the corresponding sending action is denoted $s_{I_{UR}}(d_{I_{R}})$);
  \item The Resource receives $d_{I_{R}}$ from the Resource User through the channel $I_{UR}$ (the corresponding reading action is denoted $r_{I_{UR}}(d_{I_{R}})$), then processes the request
  through a processing function $RF$, generates and sends the processed output $d_{O_{R}}$ to the Resource User through the channel $O_{UR}$ (the corresponding sending action is denoted
  $s_{O_{UR}}(d_{O_R})$);
  \item The Resource User receives the output $d_{O_R}$ from the Resource through the channel $O_{UR}$ (the corresponding reading action is denoted $r_{O_{UR}}(d_{O_R})$),
   then processes the response and generates the response $d_{O}$ through a processing function $UF_3$, and sends the response to the outside through the channel $O$ (the corresponding sending action is denoted
  $s_{O}(d_{O})$).
\end{enumerate}

\begin{figure}
    \centering
    \includegraphics{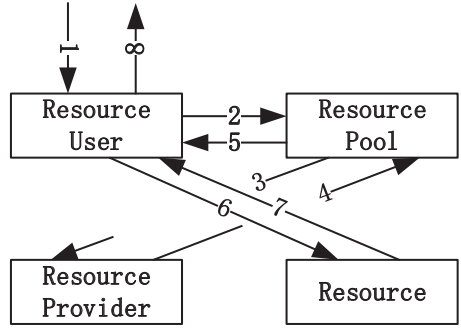}
    \caption{Typical process of Pooling pattern}
    \label{Po7P}
\end{figure}

In the following, we verify the Pooling pattern. We assume all data elements $d_{I}$, $d_{I_{P}}$, $d_{I_{RP}}$, $d_{I_R}$, $d_{O_{RP}}$, $d_{O_P}$, $d_{O_R}$, $d_{O}$ are from a finite set
$\Delta$.

The state transitions of the Resource User module
described by APTC are as follows.

$U=\sum_{d_{I}\in\Delta}(r_{I}(d_{I})\cdot U_{2})$

$U_{2}=UF_1\cdot U_{3}$

$U_{3}=\sum_{d_{I_{P}}\in\Delta}(s_{I_{UP}}(d_{I_{P}})\cdot U_4)$

$U_4=\sum_{d_{O_P}\in\Delta}(r_{O_{UP}}(d_{O_P})\cdot U_{5})$

$U_{5}=UF_2\cdot U_{6}$

$U_{6}=\sum_{d_{I_{R}}\in\Delta}(s_{I_{UR}}(d_{I_{R}})\cdot U_{7})$

$U_{7}=\sum_{d_{O_R}\in\Delta}(r_{O_{UR}}(d_{O_R})\cdot U_{8})$

$U_{8}=UF_3\cdot U_{9}$

$U_{9}=\sum_{d_{O}\in\Delta}(s_{O}(d_{O})\cdot U)$

The state transitions of the Resource Provider module
described by APTC are as follows.

$RP=\sum_{d_{I_{RP}}\in\Delta}(r_{I_{PP}}(d_{I_{RP}})\cdot RP_{2})$

$RP_{2}=RPF\cdot RP_{3}$

$RP_{3}=\sum_{d_{O_{RP}}\in\Delta}(s_{O_{PP}}(d_{O_{RP}})\cdot RP)$

The state transitions of the Resource Pool module
described by APTC are as follows.

$P=\sum_{d_{I_{P}}\in\Delta}(r_{I_{UP}}(d_{I_P})\cdot P_{2})$

$P_{2}=PF_1\cdot P_{3}$

$P_{3}=\sum_{d_{I_{RP}}\in\Delta}(s_{I_{PP}}(d_{I_{RP}})\cdot P_4)$

$P_4=\sum_{d_{O_{RP}}\in\Delta}(r_{O_{PP}}(d_{O_{RP}})\cdot P_{5})$

$P_{5}=PF_2\cdot P_{6}$

$P_{6}=\sum_{d_{O_{P}}\in\Delta}(s_{O_{UP}}(d_{O_{P}})\cdot P)$

The state transitions of the Resource module
described by APTC are as follows.

$R=\sum_{d_{I_{R}}\in\Delta}(r_{I_{UR}}(d_{I_R})\cdot R_{2})$

$R_{2}=RF\cdot R_{3}$

$R_{3}=\sum_{d_{O_{R}}\in\Delta}(s_{O_{UR}}(d_{O_{R}})\cdot R)$

The sending action and the reading action of the same data through the same channel can communicate with each other, otherwise, will cause a deadlock $\delta$. We define the following
communication functions between the Resource User and the Resource Pool.

$$\gamma(r_{I_{UP}}(d_{I_{P}}),s_{I_{UP}}(d_{I_{P}}))\triangleq c_{I_{UP}}(d_{I_{P}})$$

$$\gamma(r_{O_{UP}}(d_{O_P}),s_{O_{UP}}(d_{O_P}))\triangleq c_{O_{UP}}(d_{O_P})$$

There are two communication functions between the Resource Provider and the Resource Pool as follows.

$$\gamma(r_{I_{PP}}(d_{I_{RP}}),s_{I_{PP}}(d_{I_{RP}}))\triangleq c_{I_{PP}}(d_{I_{RP}})$$

$$\gamma(r_{O_{PP}}(d_{O_{RP}}),s_{O_{PP}}(d_{O_{RP}}))\triangleq c_{O_{PP}}(d_{O_{RP}})$$

There are two communication functions between the Resource User and the Resource as follows.

$$\gamma(r_{I_{UR}}(d_{I_{R}}),s_{I_{UR}}(d_{I_{R}}))\triangleq c_{I_{UR}}(d_{I_{R}})$$

$$\gamma(r_{O_{UR}}(d_{O_{R}}),s_{O_{UR}}(d_{O_{R}}))\triangleq c_{O_{UR}}(d_{O_{R}})$$

Let all modules be in parallel, then the Pooling pattern $U\quad RP \quad P\quad R$ can be presented by the following process term.

$\tau_I(\partial_H(\Theta(U\between RP\between P\between R)))=\tau_I(\partial_H(U\between RP\between P\between R))$

where $H=\{r_{I_{UP}}(d_{I_{P}}),s_{I_{UP}}(d_{I_{P}}),r_{O_{UP}}(d_{O_P}),s_{O_{UP}}(d_{O_P}),r_{I_{PP}}(d_{I_{RP}}),s_{I_{PP}}(d_{I_{RP}}),\\
r_{O_{PP}}(d_{O_{RP}}),s_{O_{PP}}(d_{O_{RP}}),r_{I_{UR}}(d_{I_{R}}),s_{I_{UR}}(d_{I_{R}}),r_{O_{UR}}(d_{O_R}),s_{O_{UR}}(d_{O_R})\\
|d_{I}, d_{I_{P}}, d_{I_{RP}}, d_{I_R}, d_{O_{P}}, d_{O_{RP}}, d_{O_R}, d_{O}\in\Delta\}$,

$I=\{c_{I_{UP}}(d_{I_{P}}),c_{O_{UP}}(d_{O_P}),c_{I_{PP}}(d_{I_{RP}}),c_{O_{PP}}(d_{O_{RP}}),c_{I_{UR}}(d_{I_{R}}),c_{O_{UR}}(d_{O_{R}}),\\
UF_1,UF_2,UF_3,PF_1,PF_2,RPF,RF
|d_{I}, d_{I_{P}}, d_{I_{RP}}, d_{I_R}, d_{O_{P}}, d_{O_{RP}}, d_{O_R}, d_{O}\in\Delta\}$.

Then we get the following conclusion on the Pooling pattern.

\begin{theorem}[Correctness of the Pooling pattern]
The Pooling pattern $\tau_I(\partial_H(U\between RP\between P\between R))$ can exhibit desired external behaviors.
\end{theorem}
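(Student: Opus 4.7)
The plan is to follow the same template used for every earlier pattern-correctness theorem in this book: translate the linear recursive equations that describe $U$, $P$, $RP$, $R$ into a single guarded linear recursive specification for the composite term $\partial_H(U\between RP\between P\between R)$, then apply $\tau_I$ to obtain the desired external behaviour. The soundness and completeness results of APTC with guarded linear recursion (Theorems \ref{SAPTCR}, \ref{CAPTCR}) together with the CFAR-based abstraction theorems (Theorems \ref{SAPTCABS}, \ref{CCFAR}) guarantee that equational reasoning at the syntactic level is sufficient.

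First I would use $RDP$ to unfold each of $U$, $P$, $RP$, $R$ one step, and then apply the expansion law $P1$ together with the parallel and communication axioms $P2$--$P10$, $C11$--$C18$ to rewrite $U\between RP\between P\between R$. Since only the matching sends/reads in $H$ are defined as communications via the $\gamma$ functions listed above, the encapsulation operator $\partial_H$ using $D1$--$D6$ together with $A7$ will wipe out every non-communicating alternative as $\delta$; exactly as in the ABP example of section \ref{app}, the only surviving summand at the top is $r_I(d_I)\cdot\partial_H(U_2\between RP\between P\between R)$. Iterating this unfolding along the typical process yields a sequence of intermediate states $\partial_H(U_k\between\cdots)$ that form a guarded linear recursive specification $E$ over fresh variables $X_1,X_2,\ldots$, with exactly one process term $X_1$ bisimilar to $\partial_H(U\between RP\between P\between R)$.

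Next I would apply $\tau_I$ to $\langle X_1|E\rangle$. By $TI1$--$TI6$ every internal communication $c_{I_{UP}}(d_{I_P})$, $c_{O_{UP}}(d_{O_P})$, $c_{I_{PP}}(d_{I_{RP}})$, $c_{O_{PP}}(d_{O_{RP}})$, $c_{I_{UR}}(d_{I_R})$, $c_{O_{UR}}(d_{O_R})$ and every processing function $UF_i$, $PF_j$, $RPF$, $RF$ is renamed to $\tau$. Then $B1$ absorbs each $\tau$ that directly precedes a subsequent observable action along the linear chain, so all intermediate $X_i$ collapse and the specification reduces to the single equation
\begin{eqnarray}
\tau_I(\partial_H(U\between RP\between P\between R))&=&\sum_{d_I,d_O\in\Delta}r_I(d_I)\cdot s_O(d_O)\cdot \tau_I(\partial_H(U\between RP\between P\between R)).\nonumber
\end{eqnarray}
This is the claimed external behaviour, so applying $RSP$ (and $CFAR$ to discard any potential $\tau$-loops that would arise if processing functions were themselves guarded recursive) concludes the proof.

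The main obstacle will be bookkeeping: the Pooling pattern has a longer serial dependency chain than most earlier patterns (user $\to$ pool $\to$ provider $\to$ pool $\to$ user $\to$ resource $\to$ user), so tracking which summand survives $\partial_H$ at each stage and matching it to the correct $\gamma$ rule requires some care. Conceptually, however, nothing new is needed beyond the techniques already deployed for the Caching pattern and the ABP example of section \ref{app}, which is why I simply refer the reader to that section for the routine calculations and omit them here.
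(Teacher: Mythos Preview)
Your proposal is correct and follows essentially the same approach as the paper: unfold the modules, expand the parallel composition, use $\partial_H$ to prune non-communicating summands, form a guarded linear recursive specification, and then apply $\tau_I$ with $B1$ and $RSP$/$CFAR$ to collapse the internal steps into the claimed equation $\tau_I(\partial_H(U\between RP\between P\between R))=\sum_{d_I,d_O\in\Delta}(r_I(d_I)\cdot s_O(d_O))\cdot\tau_I(\partial_H(U\between RP\between P\between R))$. In fact you are more explicit about the individual axioms invoked than the paper itself, which simply asserts the final equation and refers the reader to section~\ref{app} for the routine calculations.
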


\begin{proof}
Based on the above state transitions of the above modules, by use of the algebraic laws of APTC, we can prove that

$\tau_I(\partial_H(U\between RP\between P\between R))=\sum_{d_{I},d_{O}\in\Delta}(r_{I}(d_{I})\cdot s_{O}(d_{O}))\cdot
\tau_I(\partial_H(U\between RP\between P\between R))$,

that is, the Pooling pattern $\tau_I(\partial_H(U\between RP\between P\between R))$ can exhibit desired external behaviors.

For the details of proof, please refer to section \ref{app}, and we omit it.
\end{proof}

\subsubsection{Verification of the Coordinator Pattern}

The Coordinator pattern gives a solution to maintain the consistency by coordinating the completion of tasks involving multi participants,
which has two classes of components: $n$ Synchronous Services and the Coordinator.
The Coordinator receives the inputs from the user through the channel $I$, then the Coordinator sends the results to the Participant $i$
through the channel $CP_i$ for $1\leq i\leq n$;
When the Participant $i$ receives the input from the Coordinator, it generates and sends the results out to the user through the channel $O_i$.
As illustrates in Figure \ref{Co7}.

\begin{figure}
    \centering
    \includegraphics{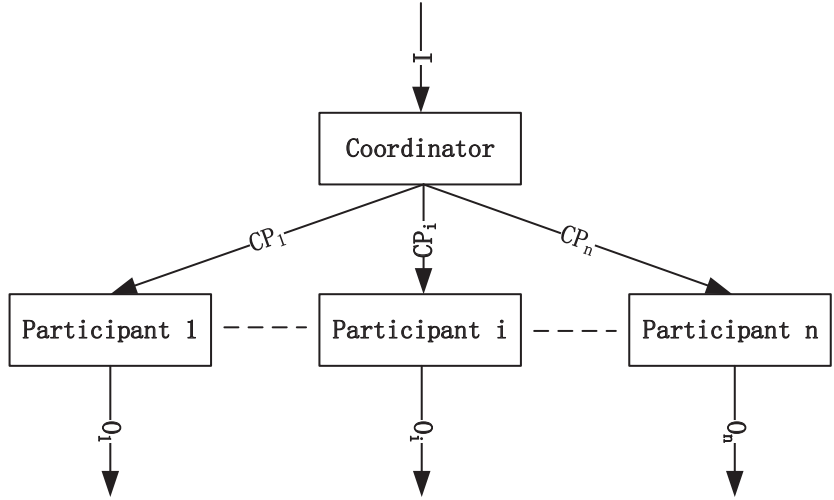}
    \caption{Coordinator pattern}
    \label{Co7}
\end{figure}

The typical process of the Coordinator pattern is shown in Figure \ref{Co7P} and following.

\begin{enumerate}
  \item The Coordinator receives the input $d_I$ from the user through the channel $I$ (the corresponding reading action is denoted $r_I(D_I)$), processes the input through
  a processing function $CF$, and generate the input to the Participant $i$ (for $1\leq i\leq n$) which is denoted $d_{I_{P_i}}$; then sends the input to the Participant $i$ through the
  channel $CP_i$ (the corresponding sending action is denoted $s_{CP_i}(d_{I_{P_i}})$);
  \item The Participant $i$ (for $1\leq i\leq n$) receives the input from the Coordinator
  through the channel $CP_i$ (the corresponding reading action is denoted $r_{CP_i}(d_{I_{P_i}})$), processes the results through a processing function $PF_{i}$, generates the output
  $d_{O_i}$, then sending the output through the channel $O_i$ (the corresponding sending action is denoted $s_{O_i}(d_{O_i})$).
\end{enumerate}

\begin{figure}
    \centering
    \includegraphics{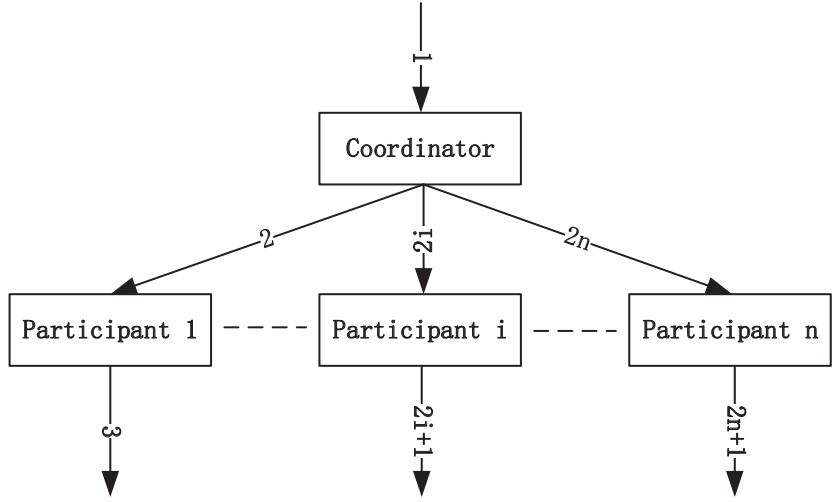}
    \caption{Typical process of Coordinator pattern}
    \label{Co7P}
\end{figure}

In the following, we verify the Coordinator pattern. We assume all data elements $d_{I}$, $d_{I_{P_i}}$, $d_{O_{i}}$ (for $1\leq i\leq n$) are from a finite set
$\Delta$.

The state transitions of the Coordinator module
described by APTC are as follows.

$C=\sum_{d_{I}\in\Delta}(r_{I}(d_{I})\cdot C_{2})$

$C_{2}=CF\cdot C_{3}$

$C_{3}=\sum_{d_{I_{P_1}},\cdots,d_{I_{P_n}}\in\Delta}(s_{CP_1}(d_{I_{P_1}})\between\cdots\between s_{CP_n}(d_{I_{P_n}})\cdot C)$

The state transitions of the Participant $i$ described by APTC are as follows.

$P_i=\sum_{d_{I_{P_i}}\in\Delta}(r_{CP_i}(d_{I_{P_i}})\cdot P_{i_2})$

$P_{i_2}=PF_i\cdot P_{i_3}$

$P_{i_3}=\sum_{d_{O_{i}}\in\Delta}(s_{O_{i}}(d_{O_i})\cdot P_i)$

The sending action and the reading action of the same data through the same channel can communicate with each other, otherwise, will cause a deadlock $\delta$. We define the following
communication functions of the Participant $i$ for $1\leq i\leq n$.

$$\gamma(r_{CP_i}(d_{I_{P_i}}),s_{CP_i}(d_{I_{P_i}}))\triangleq c_{CP_i}(d_{I_{P_i}})$$

Let all modules be in parallel, then the Coordinator pattern $C \quad P_1\cdots P_i\cdots P_n$ can be presented by the following process term.

$\tau_I(\partial_H(\Theta(C\between P_1\between\cdots\between P_i\between\cdots\between P_n)))=\tau_I(\partial_H(C\between P_1\between\cdots\between P_i\between\cdots\between P_n))$

where $H=\{r_{CP_i}(d_{O_{P_i}}),s_{CP_i}(d_{O_{P_i}})|d_{I}, d_{I_{P_i}}, d_{O_{i}}\in\Delta\}$ for $1\leq i\leq n$,

$I=\{c_{CP_i}(d_{I_{P_i}}),CF,PF_{i}|d_{I}, d_{I_{P_i}}, d_{O_{i}}\in\Delta\}$ for $1\leq i\leq n$.

Then we get the following conclusion on the Coordinator pattern.

\begin{theorem}[Correctness of the Coordinator pattern]
The Coordinator pattern $\tau_I(\partial_H(C\between P_1\between\cdots\between P_i\between\cdots\between P_n))$ can exhibit desired external behaviors.
\end{theorem}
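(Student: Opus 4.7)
The plan is to imitate the uniform APTC derivation used for the Half-Sync/Half-Async pattern, since the Coordinator pattern has exactly the same topology (one central dispatcher feeding $n$ independent output legs). First I will unfold the parallel composition $C \between P_1 \between \cdots \between P_n$ one step at a time using the expansion theorem $x \between y = x \parallel y + x \mid y$ together with axioms $P4$--$P6$ and $C11$--$C14$. Because the only send/receive pairs across components are $s_{CP_i}$ and $r_{CP_i}$, and since $C_3$ emits these $n$ sends in parallel while each $P_i$ is blocked waiting on the matching receive at $P_{i_1}$, the only non-$\delta$ contributions after $\partial_H$ come from the communications $c_{CP_i}(d_{I_{P_i}})$.

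Next, I would encode the combined behavior as a guarded linear recursive specification $E$ whose sole recursion variable $X_1$ satisfies
\begin{eqnarray}
X_1 &=& \sum_{d_I \in \Delta} r_I(d_I) \cdot CF \cdot \bigl(\between_{i=1}^{n} c_{CP_i}(d_{I_{P_i}})\bigr) \cdot \bigl(\between_{i=1}^{n} PF_i\bigr) \cdot \bigl(\between_{i=1}^{n} s_{O_i}(d_{O_i})\bigr) \cdot X_1. \nonumber
\end{eqnarray}
Using $RDP$ I can identify $\partial_H(C \between P_1 \between \cdots \between P_n)$ with $\langle X_1 \mid E \rangle$, exactly as done in the ABP proof of Section~\ref{app}. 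Then applying $\tau_I$ with $I$ containing all $c_{CP_i}$ actions together with the internal processing functions $CF, PF_i$, each such action is renamed to $\tau$, and by axiom $B1$ ($e \cdot \tau = e$) the $\tau$-prefixes collapse into the adjacent visible actions.

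The final step is to apply $RSP$ modulo $\approx_{rbs}$, $\approx_{rbp}$ and $\approx_{rbhp}$ (soundness is guaranteed by Theorem~\ref{SAPTCABS}) to conclude
\begin{eqnarray}
\tau_I(\partial_H(C \between P_1 \between \cdots \between P_n)) &=& \sum_{d_I, d_{O_1}, \ldots, d_{O_n} \in \Delta} \bigl(r_I(d_I) \cdot (s_{O_1}(d_{O_1}) \parallel \cdots \parallel s_{O_n}(d_{O_n}))\bigr) \cdot \nonumber \\
&& \tau_I(\partial_H(C \between P_1 \between \cdots \between P_n)), \nonumber
\end{eqnarray}
which is the desired external behavior: read one request on $I$, emit $n$ parallel outputs on $O_1, \ldots, O_n$, and recurse.

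The main obstacle will be the bookkeeping of the $n$-ary parallel expansion: expanding $\between_{i=1}^n P_i$ with the dispatcher generates $2^n$-many mixed summands (each $P_i$ either communicates or idles), and one must verify that every summand except the fully synchronized one is blocked by $\partial_H$ because each unmatched $s_{CP_i}$ or $r_{CP_i}$ lies in $H$. This is routine but tedious; as in the earlier proofs in Section~\ref{app}, I would handle it by induction on $n$ using axioms $P7$--$P10$ and $D1$--$D6$ rather than writing out the full expansion, and then defer to the omitted details in the ABP derivation.
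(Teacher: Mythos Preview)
Your proposal is correct and follows essentially the same approach as the paper: the paper's proof merely asserts the final equation $\tau_I(\partial_H(C\between P_1\between\cdots\between P_n))=\sum_{d_{I},d_{O_1},\cdots,d_{O_n}\in\Delta}(r_{I}(d_{I})\cdot s_{O_1}(d_{O_1})\parallel\cdots\parallel s_{O_n}(d_{O_n}))\cdot\tau_I(\partial_H(C\between P_1\between\cdots\between P_n))$ and defers all details to the ABP derivation in Section~\ref{app}, which is exactly the expansion/$RDP$/$\tau_I$/$RSP$ template you have spelled out. Your writeup is in fact more explicit than the paper's own proof, but the method and the target equation coincide.
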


\begin{proof}
Based on the above state transitions of the above modules, by use of the algebraic laws of APTC, we can prove that

$\tau_I(\partial_H(C\between P_1\between\cdots\between P_i\between\cdots\between P_n))=\sum_{d_{I},d_{O_1},\cdots,d_{O_n}\in\Delta}(r_{I}(d_{I})\cdot s_{O_1}(d_{O_1})\parallel\cdots\parallel s_{O_i}(d_{O_i})\parallel\cdots\parallel s_{O_n}(d_{O_n}))\cdot
\tau_I(\partial_H(C\between P_1\between\cdots\between P_i\between\cdots\between P_n))$,

that is, the Coordinator pattern $\tau_I(\partial_H(A\between S_1\between\cdots\between S_i\between\cdots\between S_n))$ can exhibit desired external behaviors.

For the details of proof, please refer to section \ref{app}, and we omit it.
\end{proof}

\subsubsection{Verification of the Resource Lifecycle Manager Pattern}

The Resource Lifecycle Manager pattern decouples the lifecyle management by introduce a Resource Lifecycle Manager.
There are four modules in the Resource Lifecycle Manager pattern: the Resource User, the Resource Provider, the Resource Lifecycle Manager,
and the Resource. The Resource User interacts with the outside through
the channels $I$ and $O$; with the Resource Provider through the channel $I_{UM}$ and $O_{UM}$; with the Resource through the channels $I_{UR}$ and $O_{UR}$.
The Resource Lifecycle Manager interacts with the Resource Provider through the channels $I_{MP}$ and $O_{MP}$.
As illustrates in Figure \ref{RLM7}.

\begin{figure}
    \centering
    \includegraphics{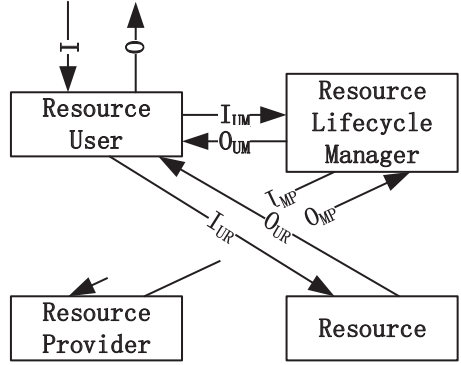}
    \caption{Resource Lifecycle Manager pattern}
    \label{RLM7}
\end{figure}

The typical process of the Resource Lifecycle Manager pattern is shown in Figure \ref{RLM7P} and as follows.

\begin{enumerate}
  \item The Resource User receives the input $d_{I}$ from the outside through the channel $I$ (the corresponding reading action is denoted $r_{I}(d_{I})$), then processes the input
   and generates the input $d_{I_M}$ through a processing function $UF_1$, and sends the input to the Resource Lifecycle Manager through the channel $I_{UM}$
  (the corresponding sending action is denoted $s_{I_{UM}}(d_{I_M})$);
  \item The Resource Lifecycle Manager receives the input from the Resource User through the channel $I_{UM}$ (the corresponding reading action is denoted $r_{I_{UM}}(d_{I_M})$), then
  processes the input and generate the input $d_{I_{P}}$ to the Resource Provider through a processing function $MF_1$, and sends the input to the Resource Provider
   through the channel $I_{MP}$ (the corresponding sending action is denoted $s_{I_{MP}}(d_{I_{P}})$);
  \item The Resource Provider receives the output $d_{I_{P}}$ from the Resource Lifecycle Manager through the channel $I_{MP}$ (the corresponding reading action is denoted $r_{I_{MP}}(d_{I_{P}})$), then processes
  the input through a processing function $PF$, generates and sends the output $d_{O_{P}}$ to the Resource Lifecycle Manager through the channel $O_{MP}$ (the corresponding sending action is denoted $s_{O_{MP}}(d_{O_{P}})$);
  \item The Resource Lifecycle Manager receives the input $d_{O_{P}}$ from the Resource Provider through the channel $O_{MP}$ (the corresponding reading action is denoted $r_{O_{MP}}(d_{O_{P}})$), then processes
  the output through a processing function $MF_2$, generates and sends the response $d_{O_M}$ (the corresponding sending action is denoted $s_{O_{UM}}(d_{O_M})$);
  \item The Resource User receives the response $d_{O_{M}}$ from the Resource Lifecycle Manager through the channel $O_{UM}$ (the corresponding reading action is denoted $r_{O_{UM}}(d_{O_{M}})$),
  then processes the output $d_{O_M}$ through a processing function $UF_2$, and sends the processed input $d_{I_{R}}$ to the Resource through the channel $I_{UR}$ (the corresponding sending action is denoted $s_{I_{UR}}(d_{I_{R}})$);
  \item The Resource receives $d_{I_{R}}$ from the Resource User through the channel $I_{UR}$ (the corresponding reading action is denoted $r_{I_{UR}}(d_{I_{R}})$), then processes the request
  through a processing function $RF$, generates and sends the processed output $d_{O_{R}}$ to the Resource User through the channel $O_{UR}$ (the corresponding sending action is denoted
  $s_{O_{UR}}(d_{O_R})$);
  \item The Resource User receives the output $d_{O_R}$ from the Resource through the channel $O_{UR}$ (the corresponding reading action is denoted $r_{O_{UR}}(d_{O_R})$),
   then processes the response and generates the response $d_{O}$ through a processing function $UF_3$, and sends the response to the outside through the channel $O$ (the corresponding sending action is denoted
  $s_{O}(d_{O})$).
\end{enumerate}

\begin{figure}
    \centering
    \includegraphics{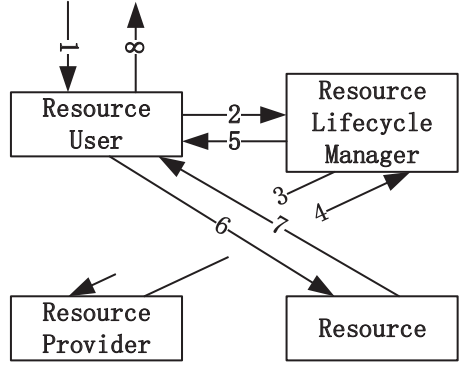}
    \caption{Typical process of Resource Lifecycle Manager pattern}
    \label{RLM7P}
\end{figure}

In the following, we verify the Resource Lifecycle Manager pattern. We assume all data elements $d_{I}$, $d_{I_{P}}$, $d_{I_{M}}$, $d_{I_R}$, $d_{O_{M}}$, $d_{O_P}$, $d_{O_R}$, $d_{O}$ are from a finite set
$\Delta$.

The state transitions of the Resource User module
described by APTC are as follows.

$U=\sum_{d_{I}\in\Delta}(r_{I}(d_{I})\cdot U_{2})$

$U_{2}=UF_1\cdot U_{3}$

$U_{3}=\sum_{d_{I_{M}}\in\Delta}(s_{I_{UM}}(d_{I_{M}})\cdot U_4)$

$U_4=\sum_{d_{O_M}\in\Delta}(r_{O_{UM}}(d_{O_M})\cdot U_{5})$

$U_{5}=UF_2\cdot U_{6}$

$U_{6}=\sum_{d_{I_{R}}\in\Delta}(s_{I_{UR}}(d_{I_{R}})\cdot U_{7})$

$U_{7}=\sum_{d_{O_R}\in\Delta}(r_{O_{UR}}(d_{O_R})\cdot U_{8})$

$U_{8}=UF_3\cdot U_{9}$

$U_{9}=\sum_{d_{O}\in\Delta}(s_{O}(d_{O})\cdot U)$

The state transitions of the Resource Provider module
described by APTC are as follows.

$P=\sum_{d_{I_{P}}\in\Delta}(r_{I_{MP}}(d_{I_{P}})\cdot P_{2})$

$P_{2}=PF\cdot P_{3}$

$P_{3}=\sum_{d_{O_{P}}\in\Delta}(s_{O_{MP}}(d_{O_{P}})\cdot P)$

The state transitions of the Resource Lifecycle Manager module
described by APTC are as follows.

$M=\sum_{d_{I_{M}}\in\Delta}(r_{I_{UM}}(d_{I_M})\cdot M_{2})$

$M_{2}=MF_1\cdot M_{3}$

$M_{3}=\sum_{d_{I_{P}}\in\Delta}(s_{I_{MP}}(d_{I_{P}})\cdot M_4)$

$M_4=\sum_{d_{O_{P}}\in\Delta}(r_{O_{MP}}(d_{O_{P}})\cdot M_{5})$

$M_{5}=MF_2\cdot M_{6}$

$M_{6}=\sum_{d_{O_{M}}\in\Delta}(s_{O_{UM}}(d_{O_{M}})\cdot M)$

The state transitions of the Resource module
described by APTC are as follows.

$R=\sum_{d_{I_{R}}\in\Delta}(r_{I_{UR}}(d_{I_R})\cdot R_{2})$

$R_{2}=RF\cdot R_{3}$

$R_{3}=\sum_{d_{O_{R}}\in\Delta}(s_{O_{UR}}(d_{O_{R}})\cdot R)$

The sending action and the reading action of the same data through the same channel can communicate with each other, otherwise, will cause a deadlock $\delta$. We define the following
communication functions between the Resource User and the Resource Lifecycle Manager.

$$\gamma(r_{I_{UM}}(d_{I_{M}}),s_{I_{UM}}(d_{I_{M}}))\triangleq c_{I_{UM}}(d_{I_{M}})$$

$$\gamma(r_{O_{UM}}(d_{O_M}),s_{O_{UM}}(d_{O_M}))\triangleq c_{O_{UM}}(d_{O_M})$$

There are two communication functions between the Resource Provider and the Resource Lifecycle Manager as follows.

$$\gamma(r_{I_{MP}}(d_{I_{P}}),s_{I_{MP}}(d_{I_{P}}))\triangleq c_{I_{MP}}(d_{I_{P}})$$

$$\gamma(r_{O_{MP}}(d_{O_{P}}),s_{O_{MP}}(d_{O_{P}}))\triangleq c_{O_{MP}}(d_{O_{P}})$$

There are two communication functions between the Resource User and the Resource as follows.

$$\gamma(r_{I_{UR}}(d_{I_{R}}),s_{I_{UR}}(d_{I_{R}}))\triangleq c_{I_{UR}}(d_{I_{R}})$$

$$\gamma(r_{O_{UR}}(d_{O_{R}}),s_{O_{UR}}(d_{O_{R}}))\triangleq c_{O_{UR}}(d_{O_{R}})$$

Let all modules be in parallel, then the Resource Lifecycle Manager pattern $U\quad M \quad P\quad R$ can be presented by the following process term.

$\tau_I(\partial_H(\Theta(U\between M\between P\between R)))=\tau_I(\partial_H(U\between M\between P\between R))$

where $H=\{r_{I_{UM}}(d_{I_{M}}),s_{I_{UM}}(d_{I_{M}}),r_{O_{UM}}(d_{O_M}),s_{O_{UM}}(d_{O_M}),r_{I_{MP}}(d_{I_{P}}),s_{I_{MP}}(d_{I_{P}}),\\
r_{O_{MP}}(d_{O_{P}}),s_{O_{MP}}(d_{O_{P}}),r_{I_{UR}}(d_{I_{R}}),s_{I_{UR}}(d_{I_{R}}),r_{O_{UR}}(d_{O_R}),s_{O_{UR}}(d_{O_R})\\
|d_{I}, d_{I_{P}}, d_{I_{M}}, d_{I_R}, d_{O_{P}}, d_{O_M}, d_{O_R}, d_{O}\in\Delta\}$,

$I=\{c_{I_{UM}}(d_{I_{M}}),c_{O_{UM}}(d_{O_M}),c_{I_{MP}}(d_{I_{P}}),c_{O_{MP}}(d_{O_{P}}),c_{I_{UR}}(d_{I_{R}}),c_{O_{UR}}(d_{O_{R}}),\\
UF_1,UF_2,UF_3,MF_1,MF_2,PF,RF
|d_{I}, d_{I_{P}}, d_{I_{M}}, d_{I_R}, d_{O_{P}}, d_{O_M}, d_{O_R}, d_{O}\in\Delta\}$.

Then we get the following conclusion on the Resource Lifecycle Manager pattern.

\begin{theorem}[Correctness of the Resource Lifecycle Manager pattern]
The Resource Lifecycle Manager pattern $\tau_I(\partial_H(U\between M\between P\between R))$ can exhibit desired external behaviors.
\end{theorem}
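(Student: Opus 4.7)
The plan is to follow the same algebraic recipe used for all earlier pattern-correctness results in this chapter: expand the parallel composition $U\between M\between P\between R$ via the axiom $P1$ ($x\between y = x\parallel y + x\mid y$), rewrite every occurring communication merge using $C11$--$C18$ together with the communication functions defined above, eliminate any resulting deadlocks via $A6$ and $A7$, and then apply $\partial_H$ and $\tau_I$ to encapsulate the matched sends/receives and abstract the internal processing actions (the $UF_i$, $MF_j$, $PF$, $RF$ and the $c_\bullet$ communication actions collected in $I$). The goal is to derive, step by step, the guarded linear recursive specification in which the only visible actions are $r_I(d_I)$ at entry and $s_O(d_O)$ at exit.

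First I would expand $\partial_H(U\between M\between P\between R)$ and show that, because every action in $H$ appears as a matched send/receive pair between exactly two modules (User--Manager, Manager--Provider, User--Resource), the only non-$\delta$ summand at each stage is the sequential synchronization along the typical process path: $r_I(d_I)\cdot UF_1 \cdot c_{I_{UM}}(d_{I_M})\cdot MF_1\cdot c_{I_{MP}}(d_{I_P})\cdot PF\cdot c_{O_{MP}}(d_{O_P})\cdot MF_2\cdot c_{O_{UM}}(d_{O_M})\cdot UF_2\cdot c_{I_{UR}}(d_{I_R})\cdot RF\cdot c_{O_{UR}}(d_{O_R})\cdot UF_3\cdot s_O(d_O)$, followed by a recursive call to the whole system. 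Next I would invoke Theorem~\ref{SAPTCABS} together with $TI1$--$TI6$ so that $\tau_I$ renames every action in $I$ into $\tau$, and then apply $B1$, $B2$ to collapse the resulting $\tau$-prefixed blocks.

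At that point I would package the entire derivation as a guarded linear recursive specification $\langle X_1|E\rangle$ whose equations mirror the state-transition system written out in the text, and invoke $RSP$ (together with Theorem~\ref{CCFAR}) to conclude
\begin{eqnarray*}
\tau_I(\partial_H(U\between M\between P\between R)) &=& \sum_{d_I,d_O\in\Delta}\bigl(r_I(d_I)\cdot s_O(d_O)\bigr)\cdot\\
&&\tau_I(\partial_H(U\between M\between P\between R)),
\end{eqnarray*}
which is exactly the desired external behavior modulo $\approx_{rbs}$, $\approx_{rbp}$ and $\approx_{rbhp}$.

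The routine part is algebraic: applying $P1$--$P10$, $C11$--$C18$ and $D1$--$D6$ is mechanical once the communication functions are in place, and the pattern matches the Lookup/Lazy Acquisition/Eager Acquisition/Caching proofs almost verbatim. The main obstacle, as in those earlier proofs, is the bookkeeping: one must check that every handshake in $H$ really does pair a unique send with a unique receive so that no spurious $\delta$ survives $\partial_H$, and that the guarding property of the recursive specification is preserved after abstraction so that $CFAR$ applies and no infinite $\tau$-loop obscures the external behavior. Because the control flow here is strictly sequential and each channel is used by exactly one send/receive pair, this bookkeeping goes through, so the full proof reduces to the same citation of section~\ref{app} that the preceding theorems use, and I would conclude by writing ``For the details of proof, please refer to section \ref{app}, and we omit it.''
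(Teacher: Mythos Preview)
Your proposal is correct and follows essentially the same approach as the paper. The paper's own proof is a terse stub that simply states the target equation $\tau_I(\partial_H(U\between M\between P\between R))=\sum_{d_{I},d_{O}\in\Delta}(r_{I}(d_{I})\cdot s_{O}(d_{O}))\cdot\tau_I(\partial_H(U\between M\between P\between R))$ and defers all details to section~\ref{app}; you have spelled out the intermediate algebraic mechanics (the $P1$/$C11$--$C18$/$D1$--$D6$ expansion, the sequential handshake trace, the $\tau_I$-collapse via $B1$, $B2$, and the $RSP$ step) that the paper leaves implicit, arriving at the identical conclusion and the identical closing reference.
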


\begin{proof}
Based on the above state transitions of the above modules, by use of the algebraic laws of APTC, we can prove that

$\tau_I(\partial_H(U\between M\between P\between R))=\sum_{d_{I},d_{O}\in\Delta}(r_{I}(d_{I})\cdot s_{O}(d_{O}))\cdot
\tau_I(\partial_H(U\between M\between P\between R))$,

that is, the Resource Lifecycle Manager pattern $\tau_I(\partial_H(U\between RP\between P\between R))$ can exhibit desired external behaviors.

For the details of proof, please refer to section \ref{app}, and we omit it.
\end{proof}

\subsection{Resource Release}\label{RR7}

In this subsection, we verify patterns for resource release, including the Leasing pattern, and the Evictor pattern.

\subsubsection{Verification of the Leasing Pattern}

The Leasing pattern uses a mediating lookup service to find and access resources.
There are four modules in the Leasing pattern: the Resource User, the Resource Provider, the Lease,
and the Resource. The Resource User interacts with the outside through
the channels $I$ and $O$; with the Resource Provider through the channel $I_{UP}$ and $O_{UP}$; with the Resource through the channels $I_{UR}$ and $O_{UR}$;
with the Lease through the channels $I_{UL}$ and $O_{UL}$.
As illustrates in Figure \ref{Le7}.

\begin{figure}
    \centering
    \includegraphics{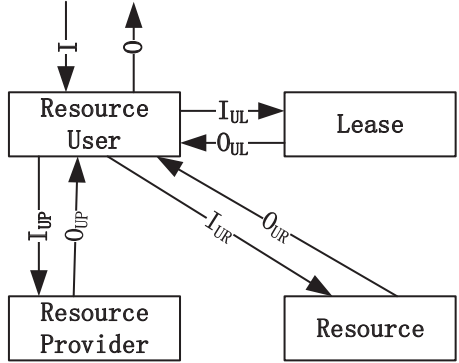}
    \caption{Leasing pattern}
    \label{Le7}
\end{figure}

The typical process of the Leasing pattern is shown in Figure \ref{Le7P} and as follows.

\begin{enumerate}
  \item The Resource User receives the input $d_{I}$ from the outside through the channel $I$ (the corresponding reading action is denoted $r_{I}(d_{I})$), then processes the input $d_{I}$ through a processing
  function $UF_1$,  and sends the input $d_{I_P}$ to the Resource Provider through the channel $I_{UP}$
  (the corresponding sending action is denoted $s_{I_{UP}}(d_{I_P})$);
  \item The Resource Provider receives the input from the Resource User through the channel $I_{UP}$ (the corresponding reading action is denoted $r_{I_{UP}}(d_{I_P})$), then
  processes the input and generate the output $d_{O_P}$ to the Resource User through a processing function $PF$, and sends the output to the Resource User
   through the channel $O_{UP}$ (the corresponding sending action is denoted $s_{O_{UP}}(d_{O_{P}})$);
  \item The Resource User receives the output $d_{O_{P}}$ from the Resource Provider through the channel $O_{UP}$ (the corresponding reading action is denoted $r_{O_{UP}}(d_{O_{P}})$), then processes
  the output through a processing function $UF_2$, generates and sends the input $d_{I_{R}}$ to the Resource through the channel $I_{UR}$ (the corresponding sending action is denoted $s_{I_{UR}}(d_{I_{R}})$);
  \item The Resource receives the input $d_{I_R}$ from the Resource User through the channel $I_{UR}$ (the corresponding reading action is denoted $r_{I_{UR}}(d_{I_R})$), then processes
  the input through a processing function $RF$, generates and sends the response $d_{O_R}$ (the corresponding sending action is denoted $s_{O_{UR}}(d_{O_R})$);
  \item The Resource User receives the response $d_{O_{R}}$ from the Resource through the channel $O_{UR}$ (the corresponding reading action is denoted $r_{O_{UR}}(d_{O_{R}})$), then
  processes the response and generates the response $d_{O}$ through a processing function $UF_3$, and sends the processed input $d_{I_{L}}$ to the Lease through the channel $I_{UL}$ (the corresponding sending action is denoted $s_{I_{UL}}(d_{I_{L}})$);
  \item The Lease receives $d_{I_{L}}$ from the Resource User through the channel $I_{UL}$ (the corresponding reading action is denoted $r_{I_{UL}}(d_{I_{L}})$), then processes the request
  through a processing function $LF$, generates and sends the processed output $d_{O_{L}}$ to the Resource User through the channel $O_{UL}$ (the corresponding sending action is denoted
  $s_{O_{UL}}(d_{O_L})$);
  \item The Resource User receives the output $d_{O_L}$ from the Lease through the channel $O_{UL}$ (the corresponding reading action is denoted $r_{O_{UL}}(d_{O_L})$),
  then processes the output and generates $d_{O}$ through a processing function $UF_4$, and sends the response to the outside through the channel $O$ (the corresponding sending action is denoted
  $s_{O}(d_{O})$).
\end{enumerate}

\begin{figure}
    \centering
    \includegraphics{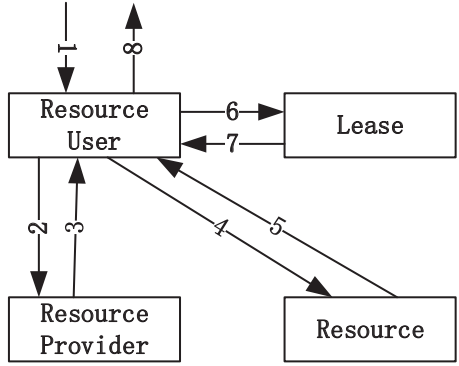}
    \caption{Typical process of Leasing pattern}
    \label{Le7P}
\end{figure}

In the following, we verify the Leasing pattern. We assume all data elements $d_{I}$, $d_{I_{L}}$, $d_{I_{P}}$, $d_{I_R}$, $d_{O_{L}}$, $d_{O_P}$, $d_{O_R}$, $d_{O}$ are from a finite set
$\Delta$.

The state transitions of the Resource User module
described by APTC are as follows.

$U=\sum_{d_{I}\in\Delta}(r_{I}(d_{I})\cdot U_{2})$

$U_{2}=UF_1\cdot U_{3}$

$U_{3}=\sum_{d_{I_{P}}\in\Delta}(s_{I_{UP}}(d_{I_{P}})\cdot U_4)$

$U_4=\sum_{d_{O_P}\in\Delta}(r_{O_{UP}}(d_{O_P})\cdot U_{5})$

$U_{5}=UF_2\cdot U_{6}$

$U_{6}=\sum_{d_{I_{R}}\in\Delta}(s_{I_{UR}}(d_{I_{R}})\cdot U_{7})$

$U_{7}=\sum_{d_{O_R}\in\Delta}(r_{O_{UR}}(d_{O_R})\cdot U_{8})$

$U_{8}=UF_3\cdot U_{9}$

$U_{9}=\sum_{d_{I_{L}}\in\Delta}(s_{I_{UL}}(d_{I_{L}})\cdot U_{10})$

$U_{10}=\sum_{d_{O_L}\in\Delta}(r_{O_{UL}}(d_{O_L})\cdot U_{11})$

$U_{11}=UF_4\cdot U_{12}$

$U_{12}=\sum_{d_{O}\in\Delta}(s_{O}(d_{O})\cdot U)$

The state transitions of the Resource Provider module
described by APTC are as follows.

$P=\sum_{d_{I_{P}}\in\Delta}(r_{I_{UP}}(d_{I_P})\cdot P_{2})$

$P_{2}=PF\cdot P_{3}$

$P_{3}=\sum_{d_{O_{P}}\in\Delta}(s_{O_{UP}}(d_{O_{P}})\cdot P)$

The state transitions of the Lease module
described by APTC are as follows.

$L=\sum_{d_{I_{L}}\in\Delta}(r_{I_{UL}}(d_{I_L})\cdot L_{2})$

$L_{2}=LF\cdot L_{3}$

$L_{3}=\sum_{d_{O_{L}}\in\Delta}(s_{O_{UL}}(d_{O_{L}})\cdot L)$

The state transitions of the Resource module
described by APTC are as follows.

$R=\sum_{d_{I_{R}}\in\Delta}(r_{I_{UR}}(d_{I_R})\cdot R_{2})$

$R_{2}=RF\cdot R_{3}$

$R_{3}=\sum_{d_{O_{R}}\in\Delta}(s_{O_{UR}}(d_{O_{R}})\cdot R)$

The sending action and the reading action of the same data through the same channel can communicate with each other, otherwise, will cause a deadlock $\delta$. We define the following
communication functions between the Resource User and the Resource Provider Proxy.

$$\gamma(r_{I_{UP}}(d_{I_{P}}),s_{I_{UP}}(d_{I_{P}}))\triangleq c_{I_{UP}}(d_{I_{P}})$$

$$\gamma(r_{O_{UP}}(d_{O_P}),s_{O_{UP}}(d_{O_P}))\triangleq c_{O_{UP}}(d_{O_P})$$

There are two communication functions between the Resource User and the Lease as follows.

$$\gamma(r_{I_{UL}}(d_{I_{L}}),s_{I_{UL}}(d_{I_{L}}))\triangleq c_{I_{UL}}(d_{I_{L}})$$

$$\gamma(r_{O_{UL}}(d_{O_{L}}),s_{O_{UL}}(d_{O_{L}}))\triangleq c_{O_{UL}}(d_{O_{L}})$$

There are two communication functions between the Resource User and the Resource as follows.

$$\gamma(r_{I_{UR}}(d_{I_{R}}),s_{I_{UR}}(d_{I_{R}}))\triangleq c_{I_{UR}}(d_{I_{R}})$$

$$\gamma(r_{O_{UR}}(d_{O_{R}}),s_{O_{UR}}(d_{O_{R}}))\triangleq c_{O_{UR}}(d_{O_{R}})$$

Let all modules be in parallel, then the Leasing pattern $U\quad L \quad P\quad R$ can be presented by the following process term.

$\tau_I(\partial_H(\Theta(U\between L\between P\between R)))=\tau_I(\partial_H(U\between L\between P\between R))$

where $H=\{r_{I_{UP}}(d_{I_{P}}),s_{I_{UP}}(d_{I_{P}}),r_{O_{UP}}(d_{O_P}),s_{O_{UP}}(d_{O_P}),r_{I_{UL}}(d_{I_{L}}),s_{I_{UL}}(d_{I_{L}}),\\
r_{O_{UL}}(d_{O_{L}}),s_{O_{UL}}(d_{O_{L}}),r_{I_{UR}}(d_{I_{R}}),s_{I_{UR}}(d_{I_{R}}),r_{O_{UR}}(d_{O_R}),s_{O_{UR}}(d_{O_R})\\
|d_{I}, d_{I_{P}}, d_{I_{L}}, d_{I_R}, d_{O_{P}}, d_{O_L}, d_{O_R}, d_{O}\in\Delta\}$,

$I=\{c_{I_{UP}}(d_{I_{P}}),c_{O_{UP}}(d_{O_P}),c_{I_{UL}}(d_{I_{L}}),c_{O_{UL}}(d_{O_{L}}),c_{I_{UR}}(d_{I_{R}}),c_{O_{UR}}(d_{O_{R}}),\\
UF_1,UF_2,UF_3,UF_4,PF,LF,RF
|d_{I}, d_{I_{P}}, d_{I_{L}}, d_{I_R}, d_{O_{P}}, d_{O_L}, d_{O_R}, d_{O}\in\Delta\}$.

Then we get the following conclusion on the Leasing pattern.

\begin{theorem}[Correctness of the Leasing pattern]
The Leasing pattern $\tau_I(\partial_H(U\between L\between P\between R))$ can exhibit desired external behaviors.
\end{theorem}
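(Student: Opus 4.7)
The plan is to follow the standard recipe used throughout this chapter for verifying correctness of these compositional patterns. First, I would take the four recursive specifications $U$, $L$, $P$, $R$ given above and put the system in the form $\tau_I(\partial_H(U\between L\between P\between R))$. Using axioms $P1$ through $P10$ of APTC together with $C11$ through $C18$, I would expand the parallel composition $U\between L\between P\between R$ into an alternative composition of parallel merges and communication merges, then use axioms $D1$ through $D6$ to push $\partial_H$ inward. By the definition of $H$, every handshake action inside $H$ must either synchronize with its dual via the communication function $\gamma$ (producing one of the $c_{\bullet}$ actions) or become $\delta$ by $D2$; all uncommunicated send/receive attempts are thereby killed by $A7$ and $A6$, so only the properly synchronized thread of computation survives.

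Next, I would introduce a single guarded linear recursion variable $\langle X_1|E\rangle$ representing the intended sequential behavior of the composite, with defining equations mirroring the seven communication steps of the typical process (receive $d_I$, then internal handshakes $c_{I_{UP}}$, $c_{O_{UP}}$, $c_{I_{UR}}$, $c_{O_{UR}}$, $c_{I_{UL}}$, $c_{O_{UL}}$, interleaved with the internal processing functions $UF_1,\dots,UF_4,PF,RF,LF$, and finally send $d_O$, looping back to $X_1$). Showing that $\partial_H(U\between L\between P\between R)$ solves this specification reduces, by $RSP$ of Theorem \ref{CAPTCR}, to proving that the unfolded expansion matches the right-hand sides of $E$; this is where the bookkeeping is heaviest but entirely mechanical, exactly as in the Lookup, Caching, and Pooling cases already handled.

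Finally, I would apply the abstraction operator $\tau_I$ using axioms $TI1$ through $TI6$, which turns every communication action $c_{\bullet}\in I$ and every internal processing function into $\tau$, and then use $B1$ and $B2$ together with $CFAR$ (invoked via Theorem \ref{SCFAR}) to collapse the resulting $\tau$-chain into the single observable step. The conclusion
\[
\tau_I(\partial_H(U\between L\between P\between R))=\sum_{d_I,d_O\in\Delta}\bigl(r_I(d_I)\cdot s_O(d_O)\bigr)\cdot \tau_I(\partial_H(U\between L\between P\between R))
\]
then follows, which by definition is the desired external behavior.

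The main obstacle, as in the companion proofs, is not any single algebraic step but the combinatorial size of the expansion of the four-way parallel composition: one must systematically discard every branch in which a send fails to meet its matching receive, and check that the surviving branches assemble into exactly the linear specification $E$. Since the communication graph here is acyclic (the User is the hub and $L$, $P$, $R$ do not talk to each other), this pruning is in fact simpler than in patterns like the Acceptor-Connector, so appealing to \ref{app} and omitting the line-by-line expansion, as the surrounding proofs do, is justified.
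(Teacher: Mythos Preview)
Your proposal is correct and follows exactly the approach the paper uses: the paper's own proof simply asserts the equation $\tau_I(\partial_H(U\between L\between P\between R))=\sum_{d_{I},d_{O}\in\Delta}(r_{I}(d_{I})\cdot s_{O}(d_{O}))\cdot\tau_I(\partial_H(U\between L\between P\between R))$, appeals to the algebraic laws of APTC, and defers the details to section~\ref{app} (the ABP verification), which is precisely the expansion--encapsulation--RSP--abstraction--CFAR pipeline you spelled out. Your write-up is in fact more explicit than the paper's, but the method is identical.
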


\begin{proof}
Based on the above state transitions of the above modules, by use of the algebraic laws of APTC, we can prove that

$\tau_I(\partial_H(U\between L\between P\between R))=\sum_{d_{I},d_{O}\in\Delta}(r_{I}(d_{I})\cdot s_{O}(d_{O}))\cdot
\tau_I(\partial_H(U\between L\between P\between R))$,

that is, the Leasing pattern $\tau_I(\partial_H(U\between L\between P\between R))$ can exhibit desired external behaviors.

For the details of proof, please refer to section \ref{app}, and we omit it.
\end{proof}

\subsubsection{Verification of the Evictor Pattern}

The Evictor pattern allows different strategies to release the resources.
There are three modules in the Evictor pattern: the Resource User, the Evictor,
and the Resource. The Resource User interacts with the outside through
the channels $I$ and $O$; with the Evictor through the channels $I_{UE}$ and $O_{UE}$; with the Resource through the channels $I_{UR}$ and $O_{UR}$. The Evictor interacts with the
Resource through the channels $I_{ER}$ and $O_{ER}$.
As illustrates in Figure \ref{Ev7}.

\begin{figure}
    \centering
    \includegraphics{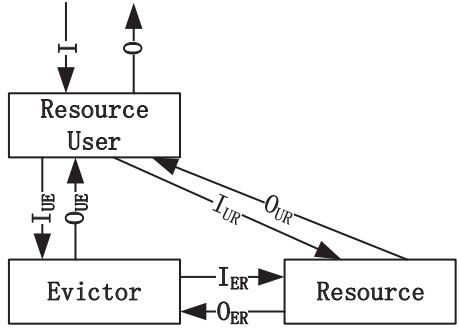}
    \caption{Evictor pattern}
    \label{Ev7}
\end{figure}

The typical process of the Evictor pattern is shown in Figure \ref{Ev7P} and as follows.

\begin{enumerate}
  \item The Resource User receives the input $d_{I}$ from the outside through the channel $I$ (the corresponding reading action is denoted $r_{I}(d_{I})$), then processes the input $d_{I}$ through a processing
  function $UF_1$, and generates the input $d_{I_R}$, and sends the input to the Resource through the channel $I_{UR}$
  (the corresponding sending action is denoted $s_{I_{UR}}(d_{I_R})$);
  \item The Resource receives the input from the Resource User through the channel $I_{UR}$ (the corresponding reading action is denoted $r_{I_{UR}}(d_{I_R})$), then
  processes the input and generate the output $d_{O_R}$ to the Resource User through a processing function $RF_1$, and sends the output to the Resource User
   through the channel $O_{UR}$ (the corresponding sending action is denoted $s_{O_{UR}}(d_{O_{R}})$);
  \item The Resource User receives the output $d_{O_{R}}$ from the Resource through the channel $O_{UR}$ (the corresponding reading action is denoted $r_{O_{UR}}(d_{O_{R}})$), then processes
  the output through a processing function $UF_2$, generates and sends the input $d_{I_{E}}$ to the Evictor through the channel $I_{UE}$ (the corresponding sending action is denoted $s_{I_{UE}}(d_{I_{E}})$);
  \item The Evictor receives the input $d_{I_E}$ from the Resource User through the channel $I_{UE}$ (the corresponding reading action is denoted $r_{I_{UE}}(d_{I_E})$), then processes
  the input through a processing function $EF_1$, generates and sends the input $d_{I_{R'}}$ (the corresponding sending action is denoted $s_{I_{ER}}(d_{I_{R'}})$);
  \item The Resource receives the input from the Evictor through the channel $I_{ER}$ (the corresponding reading action is denoted $r_{I_{ER}}(d_{I_{R'}})$), then
  processes the input and generate the output $d_{O_{R'}}$ to the Evictor through a processing function $RF_2$, and sends the output to the Evictor
   through the channel $O_{ER}$ (the corresponding sending action is denoted $s_{O_{ER}}(d_{O_{R'}})$);
  \item The Evictor receives $d_{O_{R'}}$ from the Resource through the channel $O_{ER}$ (the corresponding reading action is denoted $r_{O_{ER}}(d_{O_{R'}})$), then processes
  the input through a processing function $EF_2$, generates and sends the output $d_{O_{E}}$ (the corresponding sending action is denoted $s_{O_{UE}}(d_{O_{E}})$);
  \item The Resource User receives the response $d_{O_{E}}$ from the Evictor through the channel $O_{UE}$ (the corresponding reading action is denoted $r_{O_{UE}}(d_{O_{E}})$), then
  processes the response and generates the response $d_{O}$ through a processing function $UF_3$, and sends the response to the outside through the channel $O$ (the corresponding sending action is denoted
  $s_{O}(d_{O})$).
\end{enumerate}

\begin{figure}
    \centering
    \includegraphics{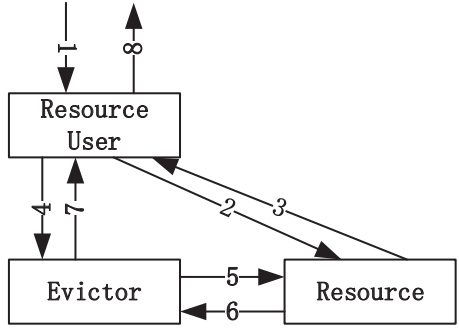}
    \caption{Typical process of Evictor pattern}
    \label{Ev7P}
\end{figure}

In the following, we verify the Evictor pattern. We assume all data elements $d_{I}$, $d_{I_{E}}$, $d_{I_R}$, $d_{I_{R'}}$, $d_{O_E}$, $d_{O_R}$, $d_{O_{R'}}$, $d_{O}$ are from a finite set
$\Delta$.

The state transitions of the Resource User module
described by APTC are as follows.

$U=\sum_{d_{I}\in\Delta}(r_{I}(d_{I})\cdot U_{2})$

$U_{2}=UF_1\cdot U_{3}$

$U_{3}=\sum_{d_{I_{R}}\in\Delta}(s_{I_{UR}}(d_{I_{R}})\cdot U_4)$

$U_4=\sum_{d_{O_R}\in\Delta}(r_{O_{UR}}(d_{O_R})\cdot U_{5})$

$U_{5}=UF_2\cdot U_{6}$

$U_{6}=\sum_{d_{I_{E}}\in\Delta}(s_{I_{UE}}(d_{I_{E}})\cdot U_{7})$

$U_{7}=\sum_{d_{O_E}\in\Delta}(r_{O_{UE}}(d_{O_E})\cdot U_{8})$

$U_{8}=UF_3\cdot U_{9}$

$U_{9}=\sum_{d_{O}\in\Delta}(s_{O}(d_{O})\cdot U)$

The state transitions of the Evictor module
described by APTC are as follows.

$E=\sum_{d_{I_{E}}\in\Delta}(r_{I_{UE}}(d_{I_E})\cdot E_{2})$

$E_{2}=EF_1\cdot E_{3}$

$E_{3}=\sum_{d_{I_{R'}}\in\Delta}(s_{I_{ER}}(d_{I_{R'}})\cdot E_4)$

$E_4=\sum_{d_{O_{R'}}\in\Delta}(r_{O_{ER}}(d_{O_{R'}})\cdot E_{5})$

$E_{5}=EF_2\cdot E_{6}$

$E_{6}=\sum_{d_{O_{E}}\in\Delta}(s_{O_{UE}}(d_{O_{E}})\cdot E)$

The state transitions of the Resource module
described by APTC are as follows.

$R=\sum_{d_{I_{R}}\in\Delta}(r_{I_{UR}}(d_{I_R})\cdot R_{2})$

$R_{2}=RF\cdot R_{3}$

$R_{3}=\sum_{d_{O_{R}}\in\Delta}(s_{O_{UR}}(d_{O_{R}})\cdot R)$

The sending action and the reading action of the same data through the same channel can communicate with each other, otherwise, will cause a deadlock $\delta$. We define the following
communication functions between the Resource User and the Evictor.

$$\gamma(r_{I_{UE}}(d_{I_{E}}),s_{I_{UE}}(d_{I_{E}}))\triangleq c_{I_{UE}}(d_{I_{E}})$$

$$\gamma(r_{O_{UE}}(d_{O_E}),s_{O_{UE}}(d_{O_E}))\triangleq c_{O_{UE}}(d_{O_E})$$

There are two communication functions between the Resource User and the Resource as follows.

$$\gamma(r_{I_{UR}}(d_{I_{R}}),s_{I_{UR}}(d_{I_{R}}))\triangleq c_{I_{UR}}(d_{I_{R}})$$

$$\gamma(r_{O_{UR}}(d_{O_{R}}),s_{O_{UR}}(d_{O_{R}}))\triangleq c_{O_{UR}}(d_{O_{R}})$$

There are two communication functions between the Evictor and the Resource as follows.

$$\gamma(r_{I_{ER}}(d_{I_{R'}}),s_{I_{ER}}(d_{I_{R'}}))\triangleq c_{I_{ER}}(d_{I_{R'}})$$

$$\gamma(r_{O_{ER}}(d_{O_{R'}}),s_{O_{ER}}(d_{O_{R'}}))\triangleq c_{O_{ER}}(d_{O_{R'}})$$

Let all modules be in parallel, then the Evictor pattern $U\quad E\quad R$ can be presented by the following process term.

$\tau_I(\partial_H(\Theta(U\between E\between R)))=\tau_I(\partial_H(U\between E\between R))$

where $H=\{r_{I_{UE}}(d_{I_{E}}),s_{I_{UE}}(d_{I_{E}}),r_{O_{UE}}(d_{O_E}),s_{O_{UE}}(d_{O_E}),r_{I_{UR}}(d_{I_{R}}),s_{I_{UR}}(d_{I_{R}}),\\
r_{O_{UR}}(d_{O_R}),s_{O_{UR}}(d_{O_R}),r_{I_{ER}}(d_{I_{R'}}),s_{I_{ER}}(d_{I_{R'}}),r_{O_{ER}}(d_{O_{R'}}),s_{O_{ER}}(d_{O_{R'}})\\
|d_{I}, d_{I_{E}}, d_{I_R}, d_{I_{R'}}, d_{O_{R'}}, d_{O_{E}}, d_{O_R}, d_{O}\in\Delta\}$,

$I=\{c_{I_{UE}}(d_{I_{E}}),c_{O_{UE}}(d_{O_E}),c_{I_{UR}}(d_{I_{R}}),c_{O_{UR}}(d_{O_{R}}),c_{I_{ER}}(d_{I_{R'}}),c_{O_{ER}}(d_{O_{R'}}),\\
UF_1,UF_2,UF_3,EF_1,EF_2,RF
|d_{I}, d_{I_{E}}, d_{I_R}, d_{I_{R'}}, d_{O_{R'}}, d_{O_{E}}, d_{O_R}, d_{O}\in\Delta\}$.

Then we get the following conclusion on the Evictor pattern.

\begin{theorem}[Correctness of the Evictor pattern]
The Evictor pattern $\tau_I(\partial_H(U\between E\between R))$ can exhibit desired external behaviors.
\end{theorem}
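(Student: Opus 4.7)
The plan is to follow exactly the algebraic template used in every preceding pattern-correctness proof in this chapter (Leasing, Resource Lifecycle Manager, Caching, etc.). First, I expand $U\between E\between R$ by repeated application of $P1$ ($x\between y = x\parallel y + x\mid y$) together with the parallel/communication axioms $P2$--$P10$ and $C11$--$C18$, driving the merge and communication operators inward until every summand is a prefix of single events. Second, I apply $\partial_H$ with the specified set $H$: every send or receive that appears in $H$ but has no matching partner is mapped to $\delta$ and absorbed by $A6$ and $A7$, while every matched send/receive pair collapses to the communication constant $c_\star$ via the communication functions declared above. Third, $\tau_I$ renames each $c_\star$ and each internal processing function $UF_j, EF_j, RF$ to $\tau$, so that only the boundary events $r_I(d_I)$ and $s_O(d_O)$ survive on the outside.

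Concretely, I set $\langle X_1\mid E'\rangle \triangleq \tau_I(\partial_H(U\between E\between R))$ and write down a guarded linear recursion specification $E'$ whose variables $X_1,\dots,X_k$ correspond to the successive joint states obtained after steps 1--7 of the typical process. Because every handshake in the pattern is point-to-point (User--Resource, User--Evictor, Evictor--Resource) and the specifications of $U$, $E$, $R$ are sequential, the encapsulation removes all genuinely parallel alternatives, leaving a linear chain of $\tau$-steps between the initial $r_I(d_I)$ and the final $s_O(d_O)$. Using $B1$ ($e\cdot\tau=e$) and $B2$ together with the laws for rooted branching truly concurrent bisimulation, the intermediate $\tau$'s collapse, yielding the target equation
\begin{eqnarray*}
\tau_I(\partial_H(U\between E\between R)) &=& \sum_{d_I,d_O\in\Delta}\bigl(r_I(d_I)\cdot s_O(d_O)\bigr)\cdot \tau_I(\partial_H(U\between E\between R)).
\end{eqnarray*}
An application of $RSP$ (Theorem \ref{CCFAR}) then identifies this fixed point with the desired external behavior.

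The main obstacle, as in several of the earlier proofs, is the bookkeeping of the Resource module. The state transitions written for $R$ describe only its interaction with the Resource User via $I_{UR}/O_{UR}$, yet the communication functions also declare a handshake between the Evictor and the Resource on $I_{ER}/O_{ER}$. To carry through the expansion I must read $R$ as a reusable handler that, between successive User cycles, also synchronizes once with the Evictor (alternatively, split $R$ into the two linear segments demanded by the typical process and verify that $\partial_H$ forces exactly the intended interleaving). Once this is pinned down, the Evictor's $s_{I_{ER}}$ and $s_{O_{ER}}$ pair correctly with the Resource, communication merges into $c_{I_{ER}}$ and $c_{O_{ER}}$, and abstraction hides them. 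After that step, the remainder of the derivation is mechanical and parallels the Leasing proof line for line, so I would not reproduce the routine calculation, referring instead to Section \ref{app} as the other theorems in this chapter do.
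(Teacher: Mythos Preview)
Your proposal is correct and takes essentially the same approach as the paper: the paper's own proof is merely a two-line sketch asserting the target equation $\tau_I(\partial_H(U\between E\between R))=\sum_{d_{I},d_{O}\in\Delta}(r_{I}(d_{I})\cdot s_{O}(d_{O}))\cdot\tau_I(\partial_H(U\between E\between R))$ and deferring all details to Section~\ref{app}, so your expansion via $P1$--$P10$, $C11$--$C18$, encapsulation, abstraction, and $RSP$ is exactly the intended route (and in fact more explicit than what the paper provides). Your observation about the missing $I_{ER}/O_{ER}$ transitions in the specification of $R$ is a genuine gap in the paper's modeling that the paper simply glosses over.
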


\begin{proof}
Based on the above state transitions of the above modules, by use of the algebraic laws of APTC, we can prove that

$\tau_I(\partial_H(U\between E\between R))=\sum_{d_{I},d_{O}\in\Delta}(r_{I}(d_{I})\cdot s_{O}(d_{O}))\cdot
\tau_I(\partial_H(U\between E\between R))$,

that is, the Evictor pattern $\tau_I(\partial_H(U\between E\between R))$ can exhibit desired external behaviors.

For the details of proof, please refer to section \ref{app}, and we omit it.
\end{proof}

\newpage\section{Composition of Patterns}

Patterns can be composed to satisfy the actual requirements freely, once the syntax and semantics of the output of one pattern just can be plugged into the syntax and semantics of the input of
another pattern.

In this chapter, we show the composition of patterns. In section \ref{CL8}, we verify the composition of the Layers patterns.
In section \ref{CPAC8}, we show the composition of Presentation-Abstraction-Control (PAC) patterns. We compose patterns for resource
management in section \ref{CRM8}.

\subsection{Composition of the Layers Patterns}\label{CL8}

In this subsection, we show the composition of the Layers patterns, and verify the correctness of the composition. We have already verified the correctness of the Layers pattern and its
composition in section \ref{Layers3}, here we verify the correctness of the composition of the Layers patterns based on the correctness result of the Layers pattern.

The composition of two layers peers is illustrated in Figure \ref{Layers8}. Each layers peer is abstracted as a module, and the composition of two layers peers is also abstracted as a
new module, as the dotted rectangles illustrate in Figure \ref{Layers8}.
\begin{figure}
    \centering
    \includegraphics{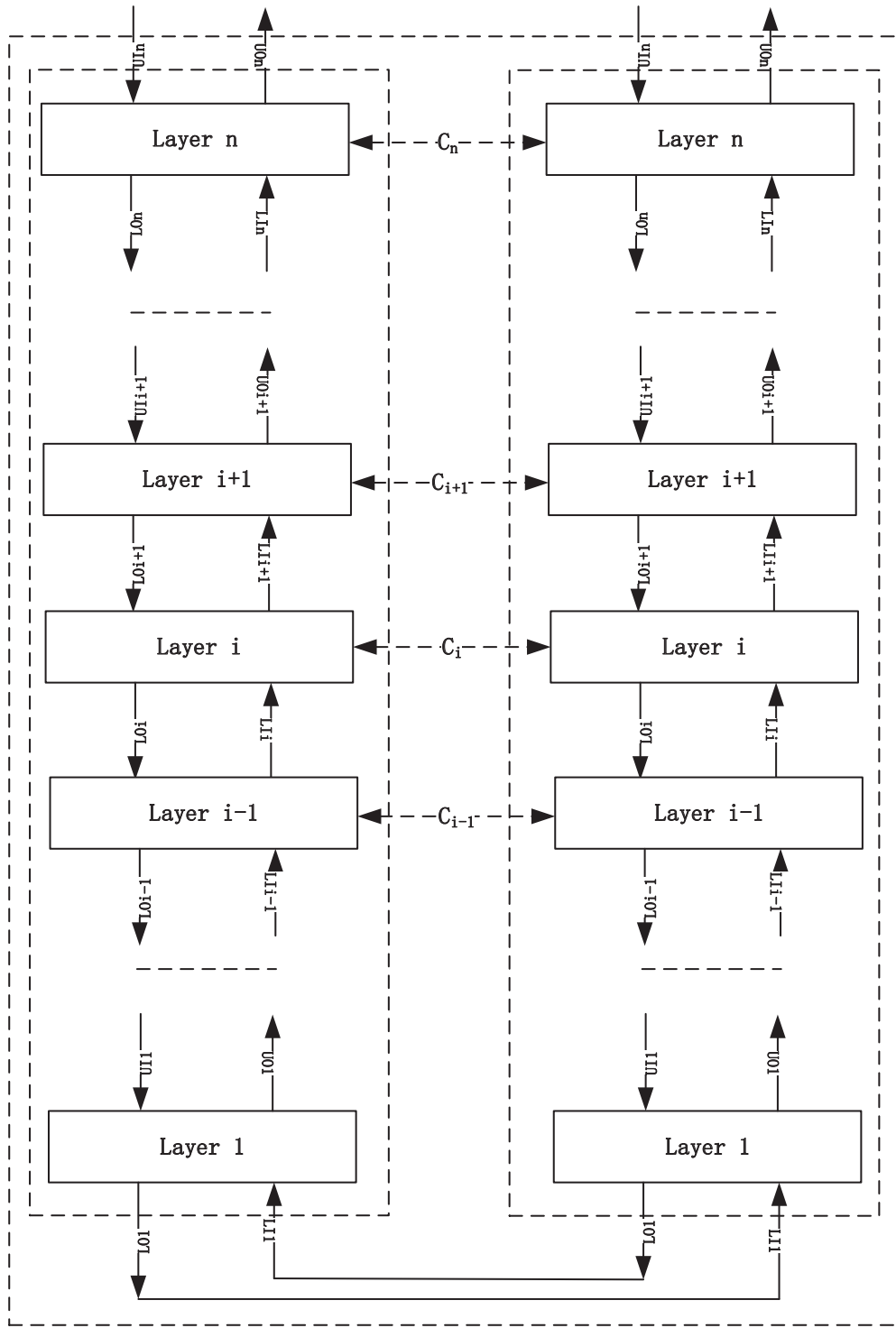}
    \caption{Composition of the Layers patterns}
    \label{Layers8}
\end{figure}

There are two typical processes in the composition of two layers peers: one is the direction from peer $P$ to peer $P'$, the other is the direction from $P'$ two $P$. We omit them, please
refer to section \ref{Layers3} for details.

In the following, we verify the correctness of the plugging of two layers peers. We assume all data elements $d_{L_1}$, $d_{L_{1'}}$, $d_{U_n}$, $d_{U_{n'}}$ are from a finite set
$\Delta$. Note that, the channel $LO_1$ and the channel $LI_{1'}$ are the same one channel; the channel $LO_{1'}$ and the channel $LI_1$ are the same one channel. And the data $d_{L_{1'}}$
and the data $PUF(d_{U_n})$ are the same data; the data $d_{L_1}$ and the data $P'UF(d_{U_{n'}})$ are the same data.

The state transitions of the $P$
described by APTC are as follows.

$P=\sum_{d_{U_n},d_{L_1}\in\Delta}(r_{UI_n}(d_{U_n})\cdot P_2\between r_{LI_1}(d_{L_1})\cdot P_3)$

$P_2=PUF\cdot P_4$

$P_3=PLF\cdot P_5$

$P_4=\sum_{d_{U_n}\in\Delta}(s_{LO_1}(PUF(d_{U_n}))\cdot P)$

$P_5=\sum_{d_{L_1}\in\Delta}(s_{UO_n}(PLF(d_{L_1}))\cdot P)$

The state transitions of the $P'$
described by APTC are as follows.

$P'=\sum_{d_{U_{n'}},d_{L_{1'}}\in\Delta}(r_{UI_{n'}}(d_{U_{n'}})\cdot P'_2\between r_{LI_{1'}}(d_{L_{1'}})\cdot P'_3)$

$P'_2=P'UF\cdot P'_4$

$P'_3=P'LF\cdot P'_5$

$P'_4=\sum_{d_{U_{n'}}\in\Delta}(s_{LO_{1'}}(PUF(d_{U_{n'}}))\cdot P')$

$P'_5=\sum_{d_{L_{1'}}\in\Delta}(s_{UO_{n'}}(PLF(d_{L_{1'}}))\cdot P')$

The sending action and the reading action of the same data through the same channel can communicate with each other, otherwise, will cause a deadlock $\delta$. We define the following
communication functions.

$$\gamma(r_{LI_1}(d_{L_1}),s_{LO_{1'}}(P'UF(d_{U_{n'}})))\triangleq c_{LI_1}(d_{L_1})$$

$$\gamma(r_{LI_{1'}}(d_{L_{1'}}),s_{LO_{1}}(PUF(d_{U_{n}})))\triangleq c_{LI_{1'}}(d_{L_{1'}})$$

Let all modules be in parallel, then the two layers peers $P\quad P'$ can be presented by the following process term.

$\tau_I(\partial_H(\Theta(\tau_{I_1}(\partial_{H_1}(P))\between \tau_{I_2}(\partial_{H_2}(P')))))=\tau_I(\partial_H(\tau_{I_1}(\partial_{H_1}(P))\between \tau_{I_2}(\partial_{H_2}(P'))))$

where $H=\{r_{LI_1}(d_{L_1}),s_{LO_{1'}}(P'UF(d_{U_{n'}})),r_{LI_{1'}}(d_{L_{1'}}),s_{LO_{1}}(PUF(d_{U_{n}}))\\
|d_{L_1}, d_{L_{1'}}, d_{U_n}, d_{U_{n'}}\in\Delta\}$,

$I=\{c_{LI_1}(d_{L_1}),c_{LI_{1'}}(d_{L_{1'}}),PUF,PLF,P'UF,P'LF
|d_{L_1}, d_{L_{1'}}, d_{U_n}, d_{U_{n'}}\in\Delta\}$.

And about the definitions of $H_1$ and $I_1$, $H_2$ and $I_2$, please see in section \ref{Layers3}.

Then we get the following conclusion on the plugging of two layers peers.

\begin{theorem}[Correctness of the plugging of two layers peers]
The plugging of two layers peers

$\tau_I(\partial_H(\tau_{I_1}(\partial_{H_1}(P))\between \tau_{I_2}(\partial_{H_2}(P'))))$

can exhibit desired external behaviors.
\end{theorem}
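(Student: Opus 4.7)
The plan is to leverage the modular structure of the proof, treating each layers peer $P$ and $P'$ as an abstract module whose external behavior has already been characterized by the earlier Correctness of the Layers pattern theorem. That is, I will first invoke the established equation
\begin{equation*}
\tau_{I_1}(\partial_{H_1}(P)) = \sum_{d_{U_n},d_{L_1}\in\Delta}\bigl((r_{UI_n}(d_{U_n})\parallel r_{LI_1}(d_{L_1}))\cdot(s_{UO_n}(PLF(d_{L_1}))\parallel s_{LO_1}(PUF(d_{U_n})))\bigr)\cdot\tau_{I_1}(\partial_{H_1}(P))
\end{equation*}
and the analogous equation for $\tau_{I_2}(\partial_{H_2}(P'))$, so that each peer is viewed as a recursive specification whose atomic step is a paired read-then-write on its top and bottom channels.

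Next I would expand the parallel composition $\tau_{I_1}(\partial_{H_1}(P))\between\tau_{I_2}(\partial_{H_2}(P'))$ using the APTC axioms $P1$--$P10$ and the communication axioms $C11$--$C18$, then apply $\partial_H$ with $H=\{r_{LI_1}(d_{L_1}),s_{LO_{1'}}(P'UF(d_{U_{n'}})),r_{LI_{1'}}(d_{L_{1'}}),s_{LO_1}(PUF(d_{U_n}))\}$ to block all non-communicating send/receive pairs on the linked bottom channels. The key algebraic fact is that the only surviving interactions on the $LI_1/LO_{1'}$ and $LI_{1'}/LO_1$ pairs are the communications $c_{LI_1}(d_{L_1})$ and $c_{LI_{1'}}(d_{L_{1'}})$ (given the identifications $d_{L_1}=P'UF(d_{U_{n'}})$ and $d_{L_{1'}}=PUF(d_{U_n})$). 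Finally, applying $\tau_I$ with these communications in $I$ abstracts them into $\tau$, so that the only externally visible actions are the reads $r_{UI_n}$, $r_{UI_{n'}}$ and the writes $s_{UO_n}$, $s_{UO_{n'}}$ at the tops of the two peers.

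Using $RSP$ and the guarded linear recursion available for $\textrm{APTC}_\tau$, I would then conclude that
\begin{equation*}
\tau_I(\partial_H(\tau_{I_1}(\partial_{H_1}(P))\between\tau_{I_2}(\partial_{H_2}(P')))) = \sum_{d_{U_n},d_{U_{n'}}\in\Delta}\bigl((r_{UI_n}(d_{U_n})\parallel r_{UI_{n'}}(d_{U_{n'}}))\cdot(s_{UO_n}(PLF(P'UF(d_{U_{n'}})))\parallel s_{UO_{n'}}(P'LF(PUF(d_{U_n}))))\bigr)\cdot X,
\end{equation*}
where $X$ denotes the same term again, exhibiting the desired external behavior of two coupled peers exchanging processed data via their bottom layers.

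The main obstacle will be justifying the step where the bottom-layer parallel reads and writes of one peer synchronize with those of the other after $\partial_H$, because the peer-level behaviors obtained from the Layers theorem already hide the internal inter-layer $c_{LI_i}$ and $c_{UI_i}$ communications inside $\tau_{I_1},\tau_{I_2}$. I will need to argue that the conservative extension results for $\tau$ and $\partial_H$ (together with the congruence of $\approx_{rbs}$, $\approx_{rbp}$, $\approx_{rbhp}$) permit pushing $\partial_H$ past the inner $\tau_{I_j}$'s in a controlled way, or equivalently, to re-expand each peer to its unabstracted form, perform the cross-peer synchronization on the exposed $s_{LO_1}$/$r_{LI_{1'}}$ actions, and then re-abstract. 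Once this bookkeeping is in place, the remainder is a routine application of $RSP$ as in Section~\ref{app}, which I would omit by reference.
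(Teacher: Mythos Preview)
Your proposal is correct and follows essentially the same route as the paper: abstract each peer $P,P'$ to a module with the four-action cycle (top read, bottom read, bottom write, top write), expand the $\between$-composition with the APTC axioms, let $\partial_H$ force the cross-peer communications $c_{LI_1},c_{LI_{1'}}$, hide them with $\tau_I$, and close up with $RSP$ to obtain exactly the recursive equation the paper states. The paper's own proof is in fact terser than yours---it simply asserts the final equation and defers all manipulation to Section~\ref{app}---so your added detail (naming $P1$--$P10$, $C11$--$C18$, and the data identifications) is compatible rather than divergent. The ``obstacle'' you flag about commuting the outer $\partial_H$ with the inner $\tau_{I_j}$ is real but is handled in the paper not by any commutation law: Section~8.1 re-specifies $P$ and $P'$ directly by fresh state-transition equations that already expose $r_{LI_1},s_{LO_1}$ (resp.\ $r_{LI_{1'}},s_{LO_{1'}}$) as external actions of the abstracted module, so the outer $\partial_H$ acts on those visible actions without ever needing to reach inside $\tau_{I_1},\tau_{I_2}$.
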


\begin{proof}
Based on the above state transitions of the above modules, by use of the algebraic laws of APTC, we can prove that

$\tau_I(\partial_H(\tau_{I_1}(\partial_{H_1}(P))\between \tau_{I_2}(\partial_{H_2}(P'))))=
\sum_{d_{U_n},d_{U_{n'}}\in\Delta}((r_{UI_n}(d_{U_n})\parallel r_{UI_{n'}}(d_{U_{n'}}))\\
\cdot(s_{UO_n}(PLF(P'UF(d_{U_{n'}})))\parallel s_{UO_{n'}}(P'LF(PUF(d_{U_n})))))\cdot
\tau_I(\partial_H(\tau_{I_1}(\partial_{H_1}(P))\between \tau_{I_2}(\partial_{H_2}(P'))))$,

that is, the plugging of two layers peers $\tau_I(\partial_H(\tau_{I_1}(\partial_{H_1}(P))\between \tau_{I_2}(\partial_{H_2}(P'))))$ can exhibit desired external behaviors.

For the details of proof, please refer to section \ref{app}, and we omit it.
\end{proof}

\subsection{Composition of the PAC Patterns}\label{CPAC8}

In this subsection, we show the composition of Presentation-Abstraction-Control (PAC) patterns (we already verified its correctness in section \ref{PAC3}) and verify the correctness of the
composition.

The PAC patterns can be composed into levels of PACs, as illustrated in Figure \ref{PACs8}.

\begin{figure}
    \centering
    \includegraphics{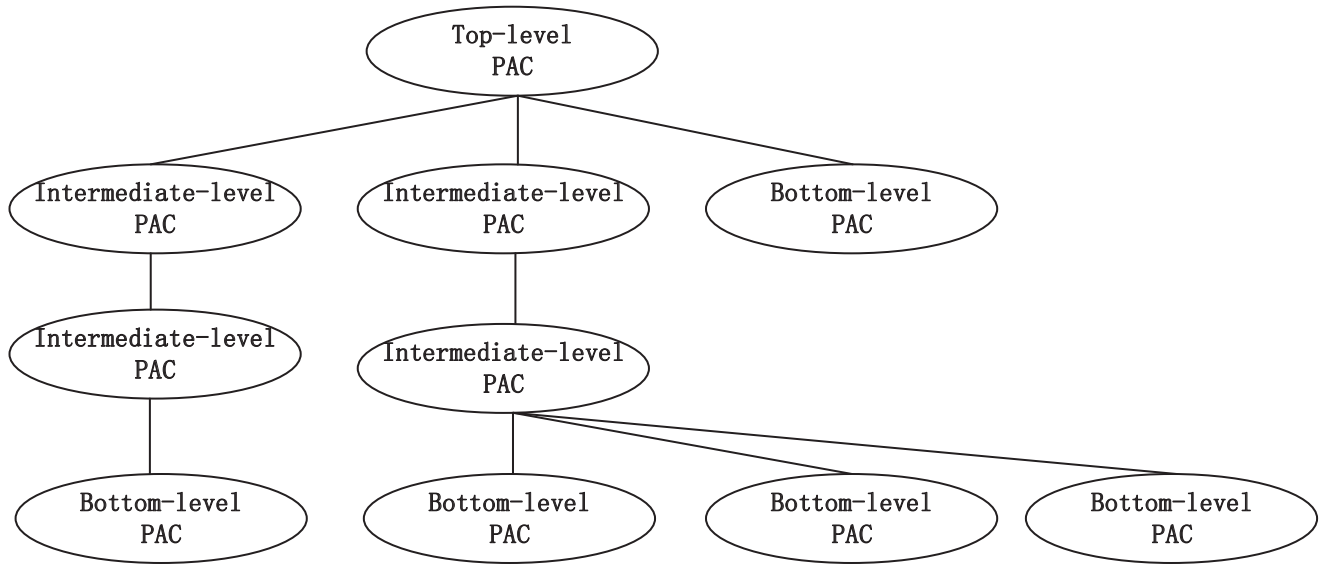}
    \caption{Levels of the PAC pattern}
    \label{PACs8}
\end{figure}

If the syntax and semantics of the output of one PAC match the syntax and semantics of the input of another PAC, then they can be composed. For the simplicity and without loss of generality,
we show the plugging of only two PACs, as illustrated in Figure \ref{PACs28}. Each PAC is abstracted as a module, and the composition of two PACs is also abstracted as a
new module, as the dotted rectangles illustrate in Figure \ref{PACs28}.

\begin{figure}
    \centering
    \includegraphics{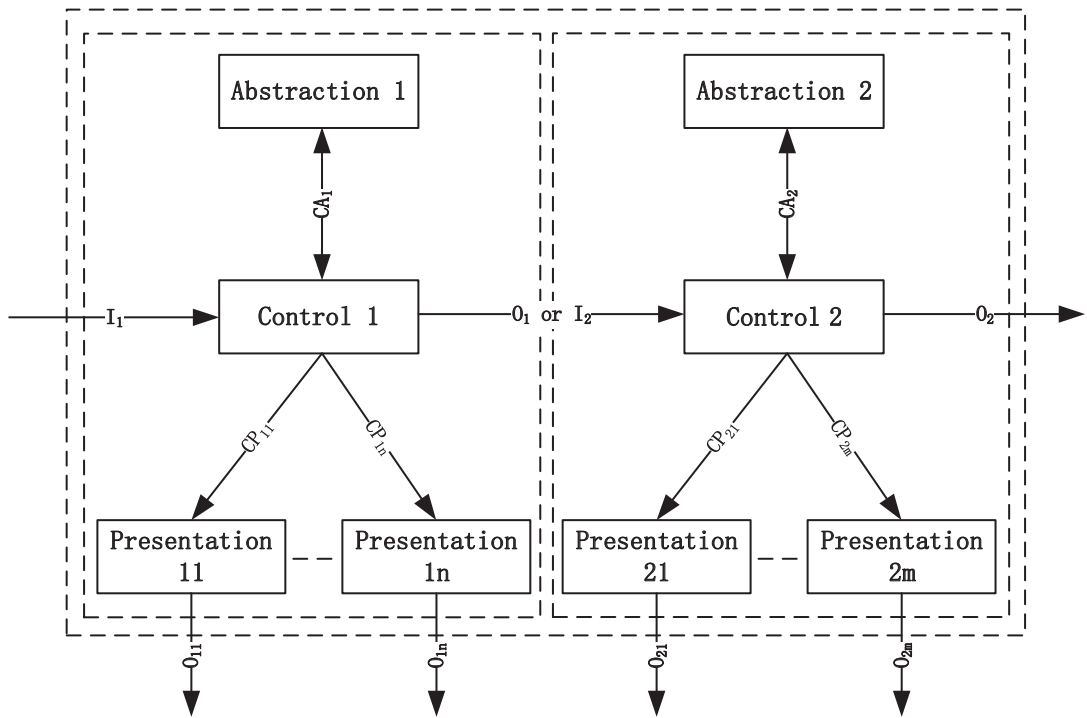}
    \caption{Plugging of two PACs}
    \label{PACs28}
\end{figure}

The typical process of plugging of two PACs is composed by the process of one PAC (the typical process is described in section \ref{PAC3}) follows the process of the other PAC, and we
omit it.

In the following, we verify the correctness of the plugging of two PACs. We assume all data elements $d_{I_1}$, $d_{O_1}$, $d_{I_2}$, $d_{O_2}$, $d_I$, $d_O$, $d_{O_{1i}}$ (for $1\leq i\leq n$), $d_{O_{2j}}$ (for $1\leq j\leq m$) are from a finite set
$\Delta$. Note that, the channel $I$ and the channel $I_1$ are the same one channel; the channel $O_1$ and the channel $I_2$ are the same one channel; and the channel $O_2$ and the channel
$O$ are the same channel. And the data $d_{I_1}$ and the data $d_I$ are the same data; the data $d_{O_1}$ and the data $d_{I_2}$ are the same data; and the data $d_{O_2}$ and the data
$d_O$ are the same data.

The state transitions of the $PAC_1$
described by APTC are as follows.

$PAC_{1}=\sum_{d_{I_1}\in\Delta}(r_{I_1}(d_{I_1})\cdot PAC_{1_2})$

$PAC_{1_2}=PAC_1F\cdot PAC_{1_3}$

$PAC_{1_3}=\sum_{d_{O_1}\in\Delta}(s_{O_1}(d_{O_1})\cdot PAC_{1_4})$

$PAC_{1_4}=\sum_{d_{O_{1_1},\cdots,d_{O_{1_n}}}\in\Delta}(s_{O_{1_1}}(d_{O_{1_1}})\between\cdots\between s_{O_{1_n}}(d_{O_{1_n}})\cdot PAC_{1})$

The state transitions of the $PAC_2$
described by APTC are as follows.

$PAC_{2}=\sum_{d_{I_2}\in\Delta}(r_{I_2}(d_{I_2})\cdot PAC_{2_2})$

$PAC_{2_2}=PAC_2F\cdot PAC_{2_3}$

$PAC_{2_3}=\sum_{d_{O_2}\in\Delta}(s_{O_2}(d_{O_2})\cdot PAC_{2_4})$

$PAC_{2_4}=\sum_{d_{O_{2_1},\cdots,d_{O_{2_m}}}\in\Delta}(s_{O_{2_1}}(d_{O_{2_1}})\between\cdots\between s_{O_{2_m}}(d_{O_{2_m}})\cdot PAC_{2})$

The sending action and the reading action of the same data through the same channel can communicate with each other, otherwise, will cause a deadlock $\delta$. We define the following
communication functions.

$$\gamma(r_{I_2}(d_{I_{2}}),s_{O_1}(d_{O_{1}}))\triangleq c_{I_2}(d_{I_{2}})$$

Let all modules be in parallel, then the two PACs $PAC_1PAC_2$ can be presented by the following process term.

$\tau_I(\partial_H(\Theta(\tau_{I_1}(\partial_{H_1}(PAC_1))\between \tau_{I_2}(\partial_{H_2}((PAC_2))))))=\tau_I(\partial_H(\tau_{I_1}(\partial_{H_1}(PAC_1))\between \tau_{I_2}(\partial_{H_2}((PAC_2)))))$

where $H=\{r_{I_2}(d_{I_{2}}),s_{O_1}(d_{O_{1}})
|d_{I_1}, d_{O_1}, d_{I_2}, d_{O_2}, d_I, d_O, d_{O_{1i}}, d_{O_{2j}}\in\Delta\}$ for $1\leq i\leq n$ and $1\leq j\leq m$,

$I=\{c_{I_2}(d_{I_{2}}),PAC_1F,PAC_2F
|d_{I_1}, d_{O_1}, d_{I_2}, d_{O_2}, d_I, d_O, d_{O_{1i}}, d_{O_{2j}}\in\Delta\}$ for $1\leq i\leq n$ and $1\leq j\leq m$.

And about the definitions of $H_1$ and $I_1$, $H_2$ and $I_2$, please see in section \ref{PAC3}.

Then we get the following conclusion on the plugging of two PACs.

\begin{theorem}[Correctness of the plugging of two PACs]
The plugging of two PACs

$\tau_I(\partial_H(\tau_{I_1}(\partial_{H_1}(PAC_1))\between \tau_{I_2}(\partial_{H_2}(PAC_2))))$

can exhibit desired external behaviors.
\end{theorem}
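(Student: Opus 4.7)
The plan is to leverage the already established correctness of the individual PAC pattern (Theorem in Section \ref{PAC3}) together with the algebraic laws of APTC and the recursion principles RDP/RSP, following the same template used in the proof of the plugging of two layers peers. First I would instantiate the PAC correctness result for each $PAC_k$ separately: after applying $\tau_{I_k}$ and $\partial_{H_k}$, the process $\tau_{I_k}(\partial_{H_k}(PAC_k))$ is equivalent, modulo rooted branching truly concurrent bisimulation, to a guarded linear recursive specification whose behavior is to read an input on $I_k$ and, after some silent steps, produce the corresponding outputs on the output channels that survive $H_k$ (namely $O_k$ together with the lateral outputs $s_{O_{k_j}}(d_{O_{k_j}})$). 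Thus each plugged component may be treated as a black box exhibiting only its desired external behavior.

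Next I would expand the parallel composition $\tau_{I_1}(\partial_{H_1}(PAC_1))\between\tau_{I_2}(\partial_{H_2}(PAC_2))$ using axioms $P1$--$P10$ and $C11$--$C18$ of APTC. Since $H$ contains precisely the send/receive pair on the shared channel between the two patterns, the operator $\partial_H$ forces $s_{O_1}(d_{O_1})$ and $r_{I_2}(d_{I_2})$ to synchronize via $\gamma$ rather than occur independently, producing the communication action $c_{I_2}(d_{I_2})$. The outer abstraction $\tau_I$ then renames this communication (together with the internal processing actions $PAC_1F$ and $PAC_2F$) into $\tau$, turning the handoff between the two PACs into an internal silent step. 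Because the conservativity and congruence results carry over from Section \ref{tcpa}, this rewriting is sound modulo $\approx_{rbp}$, $\approx_{rbs}$ and $\approx_{rbhp}$.

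The main obstacle will be the careful bookkeeping for the lateral outputs $s_{O_{1_i}}(d_{O_{1_i}})$ of $PAC_1$, which are not consumed by $PAC_2$ and should therefore remain visible externally, while simultaneously ensuring that only the specific $s_{O_1}/r_{I_2}$ pair is absorbed into $\tau$. One must also check that no spurious deadlocks arise when the outer $\partial_H$ is pushed past $\tau_{I_k}$ and $\partial_{H_k}$, which relies on the fact that the atomic actions blocked by $H$ are disjoint from those already blocked or abstracted inside each component. Once this is verified, applying RSP to the resulting guarded linear recursion yields an equation of the form
\begin{equation}
\tau_I(\partial_H(\tau_{I_1}(\partial_{H_1}(PAC_1))\between\tau_{I_2}(\partial_{H_2}(PAC_2)))) = \sum_{d_I\in\Delta}\bigl(r_I(d_I)\cdot F(d_I)\bigr)\cdot X
\end{equation}
where $X$ denotes the same process on the left-hand side and $F(d_I)$ is the parallel composition of the appropriate sends on the surviving output channels of the composite, which exactly witnesses the desired external behavior. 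The symbolic expansions are routine and parallel those in Section \ref{app}, and we would omit them in the final write-up.
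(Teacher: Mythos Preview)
Your proposal is correct and follows essentially the same approach as the paper: both argue by expanding the composition with the APTC axioms, letting $\partial_H$ force the $s_{O_1}/r_{I_2}$ synchronization and $\tau_I$ hide it, and then invoking the recursion principles to obtain a self-referential equation whose visible prefix is $r_I(d_I)$ followed by the surviving output sends (including the lateral $s_{O_{1_i}}$ and $s_{O_{2_j}}$), with details deferred to Section~\ref{app}. Your write-up is in fact more explicit about the axioms and the RSP step than the paper's own proof, which simply states the resulting equation and omits the expansion.
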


\begin{proof}
Based on the above state transitions of the above modules, by use of the algebraic laws of APTC, we can prove that

$\tau_I(\partial_H(\tau_{I_1}(\partial_{H_1}(PAC_1))\between \tau_{I_2}(\partial_{H_2}(PAC_2))))=\sum_{d_{I},d_O,d_{O_{1_1}},\cdots,d_{O_{1_n}},d_{O_{2_1}},\cdots,d_{O_{2_m}}\in\Delta}(r_{I}(d_{I})\cdot s_O(d_O)\cdot s_{O_{1_1}}(d_{O_{1_1}})\parallel\cdots\parallel s_{O_{1_i}}(d_{O_{1_i}})\parallel\cdots\parallel s_{O_{1_n}}(d_{O_{1_n}})\cdot
s_{O_{2_1}}(d_{O_{2_1}})\parallel\cdots\parallel s_{O_{2_j}}(d_{O_{2_j}})\parallel\cdots\parallel s_{O_{2_m}}(d_{O_{2_m}}))\cdot
\tau_I(\partial_H(\tau_{I_1}(\partial_{H_1}(PAC_1))\between \tau_{I_2}(\partial_{H_2}(PAC_2))))$,

that is, the plugging of two PACs $\tau_I(\partial_H(\tau_{I_1}(\partial_{H_1}(PAC_1))\between \tau_{I_2}(\partial_{H_2}(PAC_2))))$ can exhibit desired external behaviors.

For the details of proof, please refer to section \ref{app}, and we omit it.
\end{proof}

\subsection{Composition of Resource Management Patterns}\label{CRM8}

In this subsection, we show the composition of resource management patterns (we have already verified the correctness of patterns for resource management in section \ref{RMP}),
and verify the correctness of the composition.

The whole process of resource management involves resource acquisition firstly, resource utilization and lifecycle management secondly, and resource release lastly,
as Figure \ref{CRM81} illustrates. For resource acquisition, we take an example of the Lookup pattern, and the Lifecycle Manager pattern for resource lifecycle management, and the Leasing
pattern for resource release. The whole process of resource management is composed of the typical processes of the Lookup pattern, the Lifecycle Manager pattern and the Leasing pattern,
we do not repeat any more, please refer to the details of these three patterns in section \ref{RMP}. And we can verify the correctness of the whole resource management system shown in
Figure \ref{CRM81}, just like the work we doing many times for concrete patterns in the above sections.

\begin{figure}
    \centering
    \includegraphics{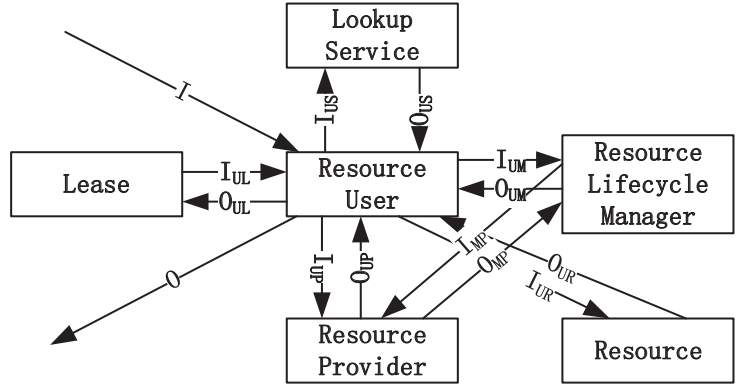}
    \caption{The whole resource management process}
    \label{CRM81}
\end{figure}

But, we do not verify the correctness of the whole resource management system like the previous work. The whole resource management system in Figure \ref{CRM81} contains the full functions
of the Lookup pattern, the Lifecycle manager pattern and the Leasing pattern, and actually can be implemented by the composition of these three patterns, as Figure \ref{CRM82} illustrates.
For the whole process of resource management, firstly the Lookup pattern works, then the Lifecycle Manager pattern, and lastly the Leasing pattern. That is, the output of the Lookup pattern
is plugged into the input of the Lifecycle Manager pattern, and the output of the Lifecycle Manager pattern is plugged into the Leasing pattern. Each pattern is abstracted as a module,
and the composition of these three patterns is also abstracted as a new module, as the dotted rectangles illustrate in Figure \ref{CRM82}.

\begin{figure}
    \centering
    \includegraphics{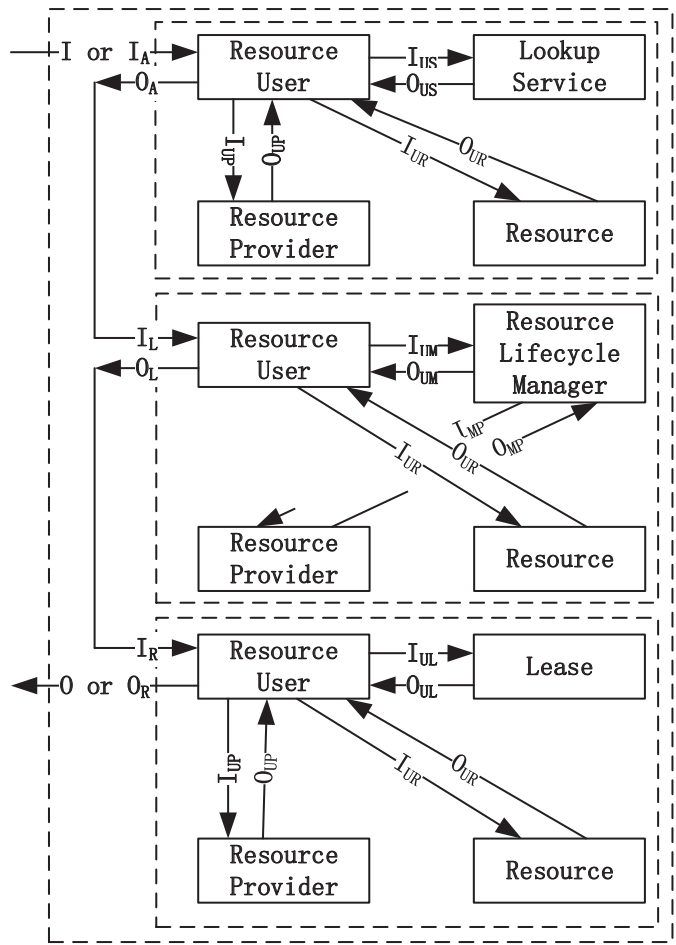}
    \caption{Plugging of resource management patterns}
    \label{CRM82}
\end{figure}

In the following, we verify the correctness of the plugging of resource management patterns. We assume all data elements $d_I$, $d_O$, $d_{I_A}$, $d_{O_A}$, $d_{I_L}$, $d_{O_L}$, $d_{I_R}$, $d_{O_R}$ are from a finite set
$\Delta$. Note that, the channel $I$ and the channel $I_A$ are the same one channel; the channel $O_A$ and the channel $I_L$ are the same one channel; and the channel $O_L$ and the channel
$I_R$ are the same channel; the channel $O_R$ and the channel $O$ are the same channel. And the data $d_{I_A}$ and the data $d_I$ are the same data; the data $d_{O_A}$ and the data $d_{I_L}$ are the same data; and the data $d_{O_L}$ and the data
$d_{I_R}$ are the same data; the data $d_{O_R}$ and the data $d_O$ are the same data.

The state transitions of the Lookup pattern $A$
described by APTC are as follows.

$A=\sum_{d_{I_A}\in\Delta}(r_{I_A}(d_{I_A})\cdot A_2)$

$A_2=AF\cdot A_3$

$A_3=\sum_{d_{O_A}\in\Delta}(s_{O_A}(d_{O_A})\cdot A)$

The state transitions of the Lifecycle Manager pattern $L$
described by APTC are as follows.

$L=\sum_{d_{I_L}\in\Delta}(r_{I_L}(d_{I_L})\cdot L_2)$

$L_2=LF\cdot L_3$

$L_3=\sum_{d_{O_L}\in\Delta}(s_{O_L}(d_{O_L})\cdot L)$

The state transitions of the Leasing pattern $R$
described by APTC are as follows.

$R=\sum_{d_{I_R}\in\Delta}(r_{I_R}(d_{I_R})\cdot R_2)$

$R_2=RF\cdot R_3$

$R_3=\sum_{d_{O_R}\in\Delta}(s_{O_R}(d_{O_R})\cdot R)$

The sending action and the reading action of the same data through the same channel can communicate with each other, otherwise, will cause a deadlock $\delta$. We define the following
communication functions between the Lookup pattern and the Lifecyle Manager pattern.

$$\gamma(r_{I_L}(d_{I_{L}}),s_{O_A}(d_{O_{A}}))\triangleq c_{I_L}(d_{I_{L}})$$

We define the following
communication functions between the Lifecyle Manager pattern and the Leasing pattern.

$$\gamma(r_{I_R}(d_{I_{R}}),s_{O_L}(d_{O_{L}}))\triangleq c_{I_R}(d_{I_{R}})$$

Let all modules be in parallel, then the resource management patterns $A\quad L\quad R$ can be presented by the following process term.

$\tau_I(\partial_H(\Theta(\tau_{I_1}(\partial_{H_1}(A))\between \tau_{I_2}(\partial_{H_2}(L))\between \tau_{I_3}(\partial_{H_3}(R)))))=\tau_I(\partial_H(\tau_{I_1}(\partial_{H_1}(A))\between \tau_{I_2}(\partial_{H_2}(L))\between \tau_{I_3}(\partial_{H_3}(R))))$

where $H=\{r_{I_L}(d_{I_{L}}),s_{O_A}(d_{O_{A}}),r_{I_R}(d_{I_{R}}),s_{O_L}(d_{O_{L}})\\
|d_I, d_O, d_{I_A}, d_{O_A}, d_{I_L}, d_{O_L}, d_{I_R}, d_{O_R}\in\Delta\}$,

$I=\{c_{I_L}(d_{I_{L}}),c_{I_R}(d_{I_{R}}),AF,LF,RF
|d_I, d_O, d_{I_A}, d_{O_A}, d_{I_L}, d_{O_L}, d_{I_R}, d_{O_R}\in\Delta\}$.

And about the definitions of $H_1$ and $I_1$, $H_2$ and $I_2$, $H_3$ and $I_3$, please see in section \ref{RMP}.

Then we get the following conclusion on the plugging of resource management patterns.

\begin{theorem}[Correctness of the plugging of resource management patterns]
The plugging of resource management patterns

$\tau_I(\partial_H(\tau_{I_1}(\partial_{H_1}(A))\between \tau_{I_2}(\partial_{H_2}(L))\between \tau_{I_3}(\partial_{H_3}(R))))$

can exhibit desired external behaviors.
\end{theorem}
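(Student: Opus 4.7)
The plan is to reduce the proof of this composition theorem to the three already-established correctness results for the Lookup pattern, the Lifecycle Manager pattern, and the Leasing pattern, and then handle only the new ingredient, namely the two plugging communications at channels $O_A/I_L$ and $O_L/I_R$. Since the inner encapsulations $\partial_{H_1}, \partial_{H_2}, \partial_{H_3}$ and abstractions $\tau_{I_1}, \tau_{I_2}, \tau_{I_3}$ have already been shown to collapse each pattern down to its external input/output behavior (the earlier correctness theorems in Section \ref{RMP}), I would first rewrite $\tau_{I_k}(\partial_{H_k}(\cdot))$ for $k=1,2,3$ using those theorems, so that each pattern appears as a simple guarded recursive process of the form $r_{I_k}(d_{I_k})\cdot s_{O_k}(d_{O_k})\cdot X_k$.

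Next I would expand the outer $\between$ using axiom $P1$ and the expansion laws $P4$--$P10$, $C11$--$C18$, to enumerate all interleavings and communications. Since the only send/receive pairs that are not already abstracted away are those on $(O_A,I_L)$ and $(O_L,I_R)$, axioms $A6, A7$ (absorbing $\delta$) and the communication axioms allow every non-matching interleaving to be eliminated, leaving only the causally ordered chain: input on $I$ feeds $A$, the handshake $c_{I_L}$ glues $A$ to $L$, the handshake $c_{I_R}$ glues $L$ to $R$, and $R$ produces the output on $O$. Then applying $\partial_H$ with $H$ as specified eliminates all residual raw sends and receives on the internal channels, and $\tau_I$ hides the internal communications and processing actions $AF, LF, RF$, yielding the intended external form
\begin{eqnarray*}
&& \tau_I(\partial_H(\tau_{I_1}(\partial_{H_1}(A))\between \tau_{I_2}(\partial_{H_2}(L))\between \tau_{I_3}(\partial_{H_3}(R)))) \\
&=& \sum_{d_I,d_O\in\Delta}(r_I(d_I)\cdot s_O(d_O))\cdot \tau_I(\partial_H(\tau_{I_1}(\partial_{H_1}(A))\between \tau_{I_2}(\partial_{H_2}(L))\between \tau_{I_3}(\partial_{H_3}(R)))).
\end{eqnarray*}
Finally, by $RSP$ on the obvious linear recursive specification whose unique solution is $r_I(d_I)\cdot s_O(d_O)\cdot X$, I obtain the claim.

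The hard part will be justifying that the nested abstraction/encapsulation operators commute appropriately with the outer $\partial_H$ and $\tau_I$, so that the "inside-out" reduction is sound. Concretely, one needs that actions already hidden by an inner $\tau_{I_k}$ are not re-exposed by the outer operators, and that the plugging handshakes $c_{I_L}(d_{I_L})$ and $c_{I_R}(d_{I_R})$ are well-formed on the data identifications $d_{O_A}=d_{I_L}$ and $d_{O_L}=d_{I_R}$ declared in the setup. This is essentially a conservativity argument: since $H$ and $I$ are disjoint from $H_1\cup H_2\cup H_3$ and $I_1\cup I_2\cup I_3$ (the inner sets act on strictly internal channels of each pattern), the inner operators can be pushed through the outer merge without interaction, which is what the transition rules of APTC together with the conservativity results already established for $\partial_H$ and $\tau_I$ guarantee. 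Once this commutation is in place, the expansion above goes through mechanically, so as in the preceding theorems the detailed algebraic manipulations are deferred to Section \ref{app}.
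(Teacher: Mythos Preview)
Your proposal is correct and follows essentially the same approach as the paper: derive the recursive equation
\[
\tau_I(\partial_H(\tau_{I_1}(\partial_{H_1}(A))\between \tau_{I_2}(\partial_{H_2}(L))\between \tau_{I_3}(\partial_{H_3}(R))))=\sum_{d_{I},d_O\in\Delta}(r_{I}(d_{I})\cdot s_O(d_O))\cdot\tau_I(\partial_H(\cdots))
\]
via the APTC algebraic laws, with the detailed manipulations deferred to Section~\ref{app}. The paper's own proof is in fact terser than yours---it simply asserts the target equation ``by use of the algebraic laws of APTC'' on the stated state transitions of $A$, $L$, $R$ and refers to Section~\ref{app}---so your explicit unpacking (invoking the prior correctness results for the three component patterns, expanding $\between$ via $P1$--$P10$, $C11$--$C18$, then applying $\partial_H$, $\tau_I$, and $RSP$) is a more detailed rendering of the same argument rather than a different route.
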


\begin{proof}
Based on the above state transitions of the above modules, by use of the algebraic laws of APTC, we can prove that

$\tau_I(\partial_H(\tau_{I_1}(\partial_{H_1}(A))\between \tau_{I_2}(\partial_{H_2}(L))\between \tau_{I_3}(\partial_{H_3}(R))))=\sum_{d_{I},d_O\in\Delta}(r_{I}(d_{I})\cdot s_O(d_O))\cdot\\
\tau_I(\partial_H(\tau_{I_1}(\partial_{H_1}(A))\between \tau_{I_2}(\partial_{H_2}(L))\between \tau_{I_3}(\partial_{H_3}(R))))$,

that is, the plugging of resource management patterns $\tau_I(\partial_H(\tau_{I_1}(\partial_{H_1}(A))\between \tau_{I_2}(\partial_{H_2}(L))\between \tau_{I_3}(\partial_{H_3}(R))))$ can exhibit desired external behaviors.

For the details of proof, please refer to section \ref{app}, and we omit it.
\end{proof}

\newpage

\end{document}